\documentclass[copyright]{eptcs}

\usepackage{breakurl}             
\usepackage{amsfonts,amsmath,amssymb}
\usepackage{algorithm,algorithmic}
\makeatletter
\def\theHALC@line{\thealgorithm-\theALC@line}
\def\theHALC@rem{\thealgorithm-\theALC@rem}
\makeatother
\floatname{algorithm}{Process}
\renewcommand{\listalgorithmname}{List of Processes}
\usepackage{color}
\usepackage{aodv}
\usepackage{graphicx}
\usepackage{enumerate}
\usepackage{xspace}
\usepackage{xifthen}
\usepackage{array}
\usepackage{longtable}
\LTcapwidth=\textwidth
\usepackage{colortbl}
\usepackage{wrapfig}
\usepackage{makeidx}
\usepackage{stmaryrd} 

\makeindex

\newcommand{\nocontentsline}[3]{}
\newcommand{\tocless}[2]{\bgroup\let\addcontentsline=\nocontentsline#1{#2}\egroup}
\makeatletter
    \def\@evenhead{\thepage\hfil\slshape\leftmark}%
    \def\@oddhead{{\slshape\rightmark}\hfil\thepage}%
                                                            
\makeatother

\definecolor{purple}{rgb}{.5,.0,.5}
\newcommand{\highlight}[1]{\textcolor{purple}{#1}} 

\newtheorem{common}{Common}[section]{\bfseries}{\itshape}
\newcommand{\numberedthing}[2]{%
     \newtheorem{#1}[common]{#2}{\bfseries}{\itshape}}
\numberedthing{theorem}{Theorem}
\numberedthing{cor}{Corollary}
\numberedthing{lem}{Lemma}
\numberedthing{prop}{Proposition}
\numberedthing{remark}{Remark}
\numberedthing{definition}{Definition}
\newtheorem{exam}[common]{Example}

\newenvironment{proposition}{\begin{prop}\rm}{\end{prop}}
\newenvironment{example}{\begin{exam}\rm}{\end{exam}}

\def\squareforqed{\hbox{\rlap{$\sqcap$}$\sqcup$}}
\def\endbox{\ifmmode\squareforqed\else{\unskip\nobreak\hfil
\penalty50\hskip1em\null\nobreak\hfil\squareforqed
\parfillskip=0pt\finalhyphendemerits=0\endgraf}\fi}
\newenvironment{proof}{\begin{trivlist} \item[\hspace{\labelsep}\it Proof.\/]}{\endbox\end{trivlist}}
\newenvironment{proofNobox}{\begin{trivlist} \item[\hspace{\labelsep}\it Proof.\/]}{\end{trivlist}}

\newcommand{%
  \algsetup{linenodelimiter=.,linenosize=\tiny}
  \begin{algorithm}[H]
    {\footnotesize
      \caption{}
      \label{pro:}
      \begin{algorithmic}[1]
        \input{processes/.tex}
	\end{algorithmic}
    }
  \end{algorithm}
}[2][]{%
  \algsetup{linenodelimiter=.,linenosize=\tiny}
  \begin{algorithm}[H]
    {\footnotesize
      \caption{#1}
      \label{pro:#2}
      \begin{algorithmic}[1]
        \input{processes/#2.tex}
	\end{algorithmic}
    }
  \end{algorithm}
}

\newenvironment{simpleProcess}{%

  \algsetup{indent=0.7em}
  \begin{algorithmic}%
  }{
  \end{algorithmic}
  }

\newcommand{\awn}{AWN\xspace}
\newcommand{\bis}{\raisebox{.3ex}{$\underline{\makebox[.7em]{$\leftrightarrow$}}$}}
\newcommand{\NN}{
    \ensuremath{%
        \mathop{\rm I\mkern-2.5mu N}%
        \nolimits%
    }%
}
\newcommand{\plat}[1]{\raisebox{0pt}[0pt][0pt]{#1}} 
\newcommand{\spaces}[1]{\ #1\ }
\newcommand{\ans}{\spaces{\wedge}}
\newcommand{\ors}{\spaces{\vee}}
\newcommand{\ims}{\spaces{\Rightarrow}}
\newcommand{\iffs}{\spaces{\Leftrightarrow}}
\newcommand{\rtord}[1][\dval{dip}]{\ensuremath{\sqsubseteq_{#1}}}
\newcommand{\rtequiv}[1][\dval{dip}]{\ensuremath{\approx_{#1}}}
\newcommand{\rtsord}[1][\dval{dip}]{\ensuremath{\sqsubset_{#1}}}
\newcommand{\decremented}{\mathbin{\stackrel{\bullet}{\raisebox{0pt}[2pt]{$-$}}}1}
\newcommand{\rte}{routing table entry\xspace}
\newcommand{\rtes}{routing table entries\xspace}
\newcommand{\phrase}[1]{\index{#1}\emph{#1}}
\makeatletter
\def\comesfrom{\@transition\leftarrowfill}
\def\goesto{\@transition\rightarrowfill}
\def\ngoesto{\@transition\nrightarrowfill}
\def\Goesto{\@transition\Rightarrowfill}
\def\nGoesto{\@transition\nRightarrowfill}
\def\xmapsto{\@transition\mapstofill}
\def\nxmapsto{\@transition\nmapstofill}
\def\@transition#1{\@@transition{#1}}
\newbox\@transbox
\newbox\@arrowbox
\newbox\@downbox
\def\@@transition#1#2%
   {\setbox\@transbox\hbox
      {\vrule height 1.5ex depth .9ex width 0ex\hskip0.25em$\scriptstyle#2$\hskip0.25em}
   \ifdim\wd\@transbox<1.5em
      \setbox\@transbox\hbox to 1.5em{\hfil\box\@transbox\hfil}\fi
   \setbox\@arrowbox\hbox to \wd\@transbox{#1}
   \ht\@arrowbox\z@\dp\@arrowbox\z@
   \setbox\@transbox\hbox{$\mathop{\box\@arrowbox}\limits^{\box\@transbox}$}
   \dp\@transbox\z@\ht\@transbox 10pt
   \mathrel{\box\@transbox}}
\def\nrightarrowfill{$\m@th\mathord-\mkern-6mu%
  \cleaders\hbox{$\mkern-2mu\mathord-\mkern-2mu$}\hfill
  \mkern-6mu\mathord\not\mkern-2mu\mathord\rightarrow$}
\def\Rightarrowfill{$\m@th\mathord=\mkern-6mu%
  \cleaders\hbox{$\mkern-2mu\mathord=\mkern-2mu$}\hfill
  \mkern-6mu\mathord\Rightarrow$}
\def\nRightarrowfill{$\m@th\mathord=\mkern-6mu%
  \cleaders\hbox{$\mkern-2mu\mathord=\mkern-2mu$}\hfill
  \mkern-6mu\mathord\not\mathord\Rightarrow$}
\def\mapstofill{$\m@th\mathord\mapstochar\mathord-\mkern-6mu%
  \cleaders\hbox{$\mkern-2mu\mathord-\mkern-2mu$}\hfill
  \mkern-6mu\mathord\rightarrow$}
\def\nmapstofill{$\m@th\mathord\mapstochar\mathord-\mkern-6mu%
  \cleaders\hbox{$\mkern-2mu\mathord-\mkern-2mu$}\hfill
  \mkern-6mu\mathord\not\mkern-2mu\mathord\rightarrow$}
\makeatother 
\newcommand{\ar}[1]{\mathrel{\goesto{#1}}}            
\newcommand{\nar}[1]{\mathrel{\ngoesto{#1\;}}}        

\newcommand{\queue}[2]{
{\tiny 
 \begin{tabular}[b]{@{}|@{}c@{}|@{}}
 \mbox{}#2\\
 \hline\parbox{3.8em}{\centering{$#1$}}\\
 \hline
 \end{tabular}}
 \hspace{-0.5em} 
}
\newcommand{\store}[1]{
  \textcolor{white}{(\alph{figexmp}) }%
  \raisebox{0.2em}{\scriptsize Queues:}%
  #1%
}
\newcommand{\figurecaption}{Figure \thefigure}
\makeatletter\def\fnum@figure{\figurecaption}\makeatother
\newcounter{figexmp}
\newcounter{figexmpA}
\newboolean{multicaption}
\setboolean{multicaption}{false}
\newcommand\myCaption{}
\newcommand\myCaptionTOC{}
\newcommand\myLabel{}
\newenvironment{exampleFig}[3][]%
  {\renewcommand\myCaption{#2}%
   \ifthenelse{\isempty{#1}}{\renewcommand{\myCaptionTOC}{#2}}{\renewcommand{\myCaptionTOC}{#1}}%
   \renewcommand\myLabel{#3} 
   \setcounter{figexmp}{1}
   \begin{figure}[H]\small\begin{center}
\begin{tabular}{|@{\,}p{0.48\textwidth}@{}|@{\,}p{0.48\textwidth}|@{}}\hline}%
  {\end{tabular}\end{center}%
    	\ifmulticaption{%
		\addtocounter{figure}{-1}\renewcommand{\figurecaption}{Figure \thefigure\ (cont'd)}\tocless\caption{\myCaptionTOC}}%
	\else{%
		\caption[\myCaptionTOC]{\myCaption}\label{\myLabel}\setboolean{multicaption}{true}}%
	\fi\end{figure}}

\newlength{\shortFigShift}
\setlength{\shortFigShift}{0pt}
\newlength{\shortmedFigShift}
\setlength{\shortmedFigShift}{21pt}
\newlength{\medFigShift}
\setlength{\medFigShift}{38.1pt}
\newlength{\longFigShift}
\setlength{\longFigShift}{54.3pt}
\newlength{\leftSpecial}
\setlength{\leftSpecial}{0pt}
\newlength{\rightSpecial}
\setlength{\rightSpecial}{0pt}

\newlength{\LineLeft}
\newlength{\LineRight}
\newcommand{\FigLine}[7][0]{%
  \setcounter{figexmpA}{\value{figexmp}}
  \stepcounter{figexmp} 
  (\alph{figexmpA})\ \parbox[t]{0.45\textwidth}{#2\\[-1.25ex]\mbox{}}&
  (\alph{figexmp})\ \parbox[t]{0.45\textwidth}{#5\\[-1.25ex]\mbox{}}\\
  \ifthenelse{\equal{#1}{0}}{\setlength{\LineLeft}{\shortFigShift}}{}
  \ifthenelse{\equal{#1}{speciallr}}{\setlength{\LineLeft}{\leftSpecial}}{}
  \ifthenelse{\equal{#1}{l}}{\setlength{\LineLeft}{\longFigShift}}{}
  \ifthenelse{\equal{#1}{sl}}{\setlength{\LineLeft}{\medFigShift}}{}
  \ifthenelse{\equal{#1}{xsl}}{\setlength{\LineLeft}{\shortmedFigShift}}{}
  \ifthenelse{\equal{#1}{r}}{\setlength{\LineLeft}{\shortFigShift}}{}
  \ifthenelse{\equal{#1}{sr}}{\setlength{\LineLeft}{\shortFigShift}}{}
  \ifthenelse{\equal{#1}{xsr}}{\setlength{\LineLeft}{\shortFigShift}}{}
  \ifthenelse{\equal{#1}{lr}}{\setlength{\LineLeft}{\longFigShift}}{}
  \ifthenelse{\equal{#1}{lsr}}{\setlength{\LineLeft}{\longFigShift}}{}
  \ifthenelse{\equal{#1}{lxsr}}{\setlength{\LineLeft}{\longFigShift}}{}
  \ifthenelse{\equal{#1}{slr}}{\setlength{\LineLeft}{\medFigShift}}{}
  \ifthenelse{\equal{#1}{slsr}}{\setlength{\LineLeft}{\medFigShift}}{}
  \ifthenelse{\equal{#1}{slxsr}}{\setlength{\LineLeft}{\medFigShift}}{}
  \ifthenelse{\equal{#1}{xslr}}{\setlength{\LineLeft}{\shortmedFigShift}}{}
  \ifthenelse{\equal{#1}{xslsr}}{\setlength{\LineLeft}{\shortmedFigShift}}{}
  \ifthenelse{\equal{#1}{xslxsr}}{\setlength{\LineLeft}{\shortmedFigShift}}{}
  \ifthenelse{\equal{#3}{}}{}{\parbox[c]{0.47\textwidth}{\hspace{\LineLeft}\includegraphics[]{#3}}}&%
  \ifthenelse{\equal{#1}{0}}{\setlength{\LineRight}{\shortFigShift}}{}
  \ifthenelse{\equal{#1}{speciallr}}{\setlength{\LineLeft}{\rightSpecial}}{}
  \ifthenelse{\equal{#1}{l}}{\setlength{\LineRight}{\shortFigShift}}{}
  \ifthenelse{\equal{#1}{sl}}{\setlength{\LineRight}{\shortFigShift}}{}
  \ifthenelse{\equal{#1}{xsl}}{\setlength{\LineRight}{\shortFigShift}}{}
  \ifthenelse{\equal{#1}{r}}{\setlength{\LineRight}{\longFigShift}}{}
  \ifthenelse{\equal{#1}{sr}}{\setlength{\LineRight}{\medFigShift}}{}
  \ifthenelse{\equal{#1}{xsr}}{\setlength{\LineRight}{\shortmedFigShift}}{}
  \ifthenelse{\equal{#1}{lr}}{\setlength{\LineRight}{\longFigShift}}{}
  \ifthenelse{\equal{#1}{lsr}}{\setlength{\LineRight}{\medFigShift}}{}
  \ifthenelse{\equal{#1}{lxsr}}{\setlength{\LineRight}{\shortmedFigShift}}{}
  \ifthenelse{\equal{#1}{slr}}{\setlength{\LineRight}{\longFigShift}}{}
  \ifthenelse{\equal{#1}{slsr}}{\setlength{\LineRight}{\medFigShift}}{}  
  \ifthenelse{\equal{#1}{slxsr}}{\setlength{\LineRight}{\shortmedFigShift}}{}  
  \ifthenelse{\equal{#1}{xslr}}{\setlength{\LineRight}{\longFigShift}}{}
  \ifthenelse{\equal{#1}{xslsr}}{\setlength{\LineRight}{\medFigShift}}{}  
  \ifthenelse{\equal{#1}{xslxsr}}{\setlength{\LineRight}{\shortmedFigShift}}{}  
  \ifthenelse{\equal{#6}{}}{}{\parbox[c]{0.47\textwidth}{\hspace{\LineRight}\includegraphics[]{#6}}}\stepcounter{figexmp}
  %
  \ifthenelse{\isempty{#4}\AND\isempty{#7}}%
    {}
    {%
    \ifthenelse{\isempty{#4}}{\\&~\\[-2ex]\mbox{}&}{\\&~\\[-2ex]\mbox{}\store{#4}&}
    \ifthenelse{\isempty{#7}}{}{\store{#7}}
    }
  \\
  \hline
}
\newcommand{\FigLineHalf}[4][0]{%
  \setcounter{figexmpA}{\value{figexmp}}
  \stepcounter{figexmp}
  (\alph{figexmpA})\ \parbox[t]{0.45\textwidth}{#2\\[-1.25ex]\mbox{}}\\
  \ifthenelse{\equal{#1}{0}}{\setlength{\LineLeft}{\shortFigShift}}{}
  \ifthenelse{\equal{#1}{l}}{\setlength{\LineLeft}{\longFigShift}}{}
  \ifthenelse{\equal{#1}{sl}}{\setlength{\LineLeft}{\medFigShift}}{}
  \ifthenelse{\equal{#1}{xsl}}{\setlength{\LineLeft}{\shortmedFigShift}}{}
   \parbox[c]{0.47\textwidth}{\hspace{\LineLeft}\includegraphics[]{#3}}\stepcounter{figexmp}
   %
    \ifthenelse{\isempty{#4}}%
    {}
    {\\~\\[-2ex]\mbox{}\store{#4}}
   \\\cline{1-1}
}
\newcommand{\FigNewline}{%
  \end{tabular}\end{center}%
  	\ifmulticaption{\addtocounter{figure}{-1}\renewcommand{\figurecaption}{Figure \thefigure\ (cont'd)}\tocless\caption{\myCaptionTOC}}%
	\else{\caption[\myCaptionTOC]{\myCaption}\label{\myLabel} }%
	\fi%
  \end{figure}\newpage\setboolean{multicaption}{true}%
  \begin{figure}[H]\small\begin{center}\begin{tabular}{|@{\,}p{0.48\textwidth}@{}|@{\,}p{0.48\textwidth}|@{}}\hline
}

\newcounter{Hequation}

\makeatletter
\g@addto@macro\equation{\stepcounter{Hequation}}
\makeatother

\title{A Process Algebra for Wireless Mesh Networks\\
{\large used for}\\
Modelling, Verifying  and Analysing AODV}
\author{Ansgar Fehnker
\institute{NICTA\thanks{NICTA is funded by the Australian Government through the Department of Communications and the Australian\newline Research Council through the ICT Centre of Excellence Program.}
\\ Sydney, Australia}
\institute{Computer Science and Engineering\\
University of New South Wales\\
Sydney, Australia}
\and
Rob van Glabbeek
\institute{NICTA$^*$\\ Sydney, Australia}
\institute{Computer Science and Engineering\\
University of New South Wales\\
Sydney, Australia}
\and
Peter H\"ofner
\institute{NICTA$^*$\\ Sydney, Australia}
\institute{Computer Science and Engineering\\
University of New South Wales\\
Sydney, Australia}
\and
Annabelle McIver
\institute{Department of Computing\\ Macquarie University\\
Sydney, Australia}
\institute{NICTA$^*$\\ Sydney, Australia}
\and
Marius Portmann
\institute{NICTA$^*$\\ Brisbane, Australia}
\institute{Information Technology and\\
Electrical Engineering\\
University of Queensland\\
Brisbane, Australia}
\and
Wee Lum Tan\institute{NICTA$^*$\\ Brisbane, Australia}
\institute{Information Technology and\\
Electrical Engineering\\
University of Queensland\\
Brisbane, Australia}
}
\pagenumbering{roman}
\begin{document}

\maketitle
\vspace{-0.35ex}
\begin{abstract}
Route finding and maintenance are critical for the performance of
networked systems, particularly when mobility can lead to highly
dynamic and unpredictable environments; such operating contexts are
typical in wireless mesh networks.  Hence correctness and good
performance are strong requirements of routing protocols.

In this paper we propose \awn (Algebra for
Wireless Networks), a process algebra
 tailored to the modelling of Mobile Ad hoc Network (MANET) and Wireless Mesh Network (WMN)
protocols. It combines novel
treatments of local broadcast, conditional unicast and data
structures.

In this framework, we 
present a rigorous analysis of the Ad hoc On-Demand
Distance Vector (AODV) protocol, a popular routing protocol designed for
MANETs and WMNs, and one of the four protocols currently defined as an RFC (request for comments)  by the
IETF MANET working group. 

We give a complete and unambiguous specification of
this protocol, thereby formalising the RFC of AODV, the de facto standard
specification, given in English prose. In doing so, we had 
to make non-evident assumptions to resolve ambiguities occurring in that specification.
Our formalisation models the exact details of the core
functionality of AODV, such as route maintenance and
error handling, and only omits timing aspects.

The process algebra
allows us to formalise and (dis)prove crucial properties of mesh network
routing protocols such as loop freedom and packet delivery. 
We are the first to provide a detailed proof of loop
freedom of AODV\@. In contrast to evaluations using simulation or other formal methods such as model checking, 
our proof is generic and holds for any possible network scenario in terms of network topology, node mobility, traffic pattern, etc.
Due to ambiguities and contradictions the RFC specification allows several readings. For this reason, we analyse multiple interpretations. 
In fact we show for more than $5000$ interpretations whether they are loop free or not.
Thereby we demonstrate how the reasoning and proofs can relatively easily be adapted to protocol variants.

Using our  formal and unambiguous specification, we find some
shortcomings of AODV that can easily affect\linebreak[1] performance.
Examples are non-optimal routes established
by AODV and the fact that some routes are not found at all.
These problems are analysed and improvements are suggested.
As the improvements are formalised in the same process algebra,
carrying over the proofs is again relatively easy.
\end{abstract}
\advance\textheight 13.6pt
\advance\textheight 13.6pt

\newpage
\tableofcontents

\advance\textheight -25.5pt 
\newpage
\setcounter{page}{1}
\pagenumbering{arabic}
\makeatletter
    \def\@oddhead{\thepage\hfil\slshape\leftmark}%
    \def\@evenhead{{\slshape\rightmark}\hfil\thepage}%
\makeatother

\section{Introduction}\label{sec:introduction}

Wireless Mesh Networks (WMNs) have gained considerable popularity and
are increasingly deployed in a wide range of application scenarios, including emergency
response communication, intelligent  transportation systems, mining, video surveillance, etc.
They are self-organising wireless multi-hop networks that can provide broadband communication
without relying on a wired backhaul infrastructure, a benefit for rapid and low-cost network deployment.
WMNs can be considered a superset of Mobile Ad hoc Networks (MANETs), where a network consists exclusively of mobile end user devices
such as laptops or smartphones. In contrast to MANETs, WMNs typically also contain stationary
infrastructure devices called mesh routers. 
\index{Wireless Mesh Network}%
\index{WMN|see{Wireless Mesh Network}}%
\index{MANET}%

An important characteristic of WMNs is that they operate in unpredictable environments with
highly dynamic network topologies---due to node mobility and the variable nature of wireless links.
Because of this, route finding and maintenance are critical for the performance of WMNs.
\index{routing protocol}%
\index{AODV}%
\index{RFC}%
\index{node}%
Usually, a routing protocol is used to establish and maintain network connectivity through paths
between source and destination node pairs. As a consequence, the routing protocol is one of the key
factors determining the performance and reliability of WMNs. One of the most popular
routing protocols that is widely used in WMNs is the Ad hoc On-Demand Distance
Vector (AODV) routing protocol~\cite{rfc3561}. It is one of the four protocols currently
standardised by the IETF MANET working group, and it also forms the basis of new WMN
routing protocols, including HWMP in the  IEEE 802.11s wireless mesh network
standard~\cite{IEEE80211s}. The details of the AODV protocol are laid out in the request-for-comments-document (RFC 3561~\cite{rfc3561}), a de facto standard.
However, due to the use of English prose,
this specification contains ambiguities and contradictions. This can lead to significantly different
implementations of the AODV routing protocol, depending on the developer's understanding and reading
of the AODV RFC\@. In the worst case scenario, an AODV implementation may contain serious flaws,
such as routing loops.

Traditional approaches to the analysis of AODV and many other AODV-based protocols~\cite{AODVv2,IEEE80211s,AODV-ST,SBM06,PPI08}
are simulation and test-bed experiments. While these are important and valid methods for protocol
evaluation, in particular for quantitative performance evaluation, they have limitations in regards
to the evaluation of basic protocol correctness properties. Experimental evaluation is resource
intensive and time-consuming, and, even after a very long time of evaluation, only a finite set of
network scenarios can be considered---no general guarantee can be given about correct protocol
behaviour for a wide range of unpredictable deployment scenarios~\cite{Verisim}.  This problem is
illustrated by recent discoveries of limitations in AODV-like protocols that have been under intense
scrutiny over many years \cite{MK10}.

We believe that formal methods can help in this regard; they complement simulation and test-bed
experiments as methods for protocol evaluation and verification, and provide stronger and more
general assurances about protocol properties and behaviour.
The overall goal is to reduce
the ``time-to-market'' for better (new or modified) WMN protocols, and
to increase the reliability and performance of the corresponding
networks.

\index{AWN}%
\index{conditional unicast}%
\index{local broadcast}%
The {\em first contribution} of this paper is \awn (Algebra of Wireless Networks), a process algebra that provides a step
towards this goal.  It combines novel treatments of data structures, conditional unicast and local
broadcast, and allows formalisation of all important aspects of a routing protocol.
All these features are necessary to model 
``real life'' WMNs. Data structures are used to store and maintain information such as routing 
tables. The conditional unicast construct allows us to
model that a node in a network sends a message to a particular neighbour,
\index{neighbour}%
and if this fails---for example because the receiver has moved out of transmission range---error handling
is initiated. 
Finally, the local broadcast primitive, which allows a node to send messages to all 
its immediate neighbours, models
the wireless broadcast mechanism implemented by the physical and 
data link layer of wireless standards relevant for WMNs. The formalisation assumes that any broadcast message \emph{is}
received by all nodes within transmission range.%
\index{lossy broadcast}%
\footnote{\label{guaranteed receipt}In reality, communication is only
half-duplex: a single-interface network node cannot receive messages while sending and hence
messages can be lost.
However, the CSMA protocol used at the link layer---not 
modelled by {\awn}---keeps the probability of packet loss due to two nodes (within
range) sending at the same time rather low.
Since we are examining imperfect protocols, we first of all want to
establish how they behave under optimal conditions. For this reason we
abstract from probabilistic reasoning by assuming no message loss at
all, rather than working with a lossy broadcast formalism that offers
no guarantees that any message will ever arrive.}
\index{deliver}%
This abstraction enables us to interpret a failure of route discovery (see our eighth contribution below) as an imperfection in the protocol,
rather than as a result of a chosen formalism not ensuring guaranteed receipt.

\advance\textheight 25.5pt 
As a {\em second contribution}, we give a complete and accurate formal specification of the core functionality of the AODV routing protocol 
using \awn. Our model covers all core components of AODV, but none of the optional
features, and abstracts from timing issues. The algebra provides the right level of abstraction to model  key features such as 
unicast and broadcast, while abstracting from implementation-related
details. As its semantics is completely unambiguous, specifying a
protocol in such a framework enforces total precision and the removal
of any ambiguity.

The {\em third contribution} is to demonstrate how {\awn} can be used to
support reasoning about protocol behaviour and to provide rigorous proofs of key protocol
properties, using the examples of route correctness and loop freedom.
\index{route correctness}%
\index{loop freedom}%
In contrast to what can be achieved by model checking or test-bed
experiments, our proofs apply to all conceivable dynamic network topologies.
Route correctness is a minimal sanity requirement for a routing protocol; it is the property
that the routing table entries stored at a node are entirely based on information on routes to other
nodes that either is currently valid or was valid at some point in the past.
Loop freedom is a critical property for any routing protocol, but it is particularly relevant and challenging for WMNs.
Descriptions as in~\cite{Garcia-Luna-Aceves89} capture the common understanding of loop freedom:
``{A routing-table loop is a path specified in the nodes' routing tables at a particular point in time that visits the same node more than once before reaching the intended destination.}"
Packets caught in a routing loop,  until they are discarded by the IP Time-To-Live (TTL) mechanism, can quickly saturate the links and have a detrimental impact on network performance. It is therefore critical to ensure that protocols prevent routing loops.
We show that loop freedom can be guaranteed only if sequence numbers are used in a careful way, considering further rules and assumptions on the behaviour of the protocol.
The problem is, as shown in the case of AODV, that these additional rules and assumptions are not explicitly stated in the RFC, 
and that the RFC has significant ambiguities in regards to this.
To the best of our knowledge we are the first to give a complete and detailed proof of loop freedom.%
\footnote{Loop freedom of AODV has been ``proven'' at least twice~\cite{AODV99,ZYZW09}, but
 the proof in \cite{AODV99} is not correct, and the one in \cite{ZYZW09} is based on a simple
   subset of AODV only, not including the ``intermediate route reply'' feature---a most likely
   source of loops.}
This is our {\em fourth contribution}.

As  a {\em fifth contribution}, we show
details of several ambiguities and contradictions found in the AODV RFC,
and discuss which interpretations (plausible and consistent readings of the RFC) will lead to
routing loops, and which are loop free.
\index{interpretation}%
In fact we analyse more than $5000$ interpretations.
Hereby we demonstrate how our reasoning and proofs can relatively easily
be adapted to protocol variants. In particular, our {\em sixth contribution}, we demonstrate
that routing loops can be created---while fully complying with the RFC, and
making reasonable assumptions when the RFC allows different interpretations.
As our next contribution, we also analyse five 
key implementations of the AODV protocol and show that three of them can produce routing loops.

As an {\em eighth contribution}, we apply linear-time temporal logic (LTL) to formulate temporal
\index{route discovery property}%
properties of routing protocols, such as \emph{route discovery}: ``if a route discovery process is
initiated in a state where the source node is connected to the destination and during this process no (relevant)
link breaks, then the source will eventually discover a route to the destination'' and
\index{packet delivery property}%
\index{deliver}%
\emph{packet delivery}, saying that under certain circumstances a packet will surely be
delivered to its destination. We moreover show that AODV does not satisfy these properties.

In order for the last result to be meaningful, we first develop a general method to augment
a protocol specification with a \emph{fairness component} that requires that certain
 fairness properties are met, and apply this method to our
specification of AODV\@.  We also adapt the semantics of LTL in order to make a protocol
specification satisfy natural progress and justness properties. Without ensuring these
properties, temporal properties like route discovery and packet delivery would trivially
fail to hold. The same would apply if we had not assumed guaranteed receipt of broadcast
messages by nodes within transmission range (cf.\ Footnote~\ref{guaranteed receipt}).
 
Last but not least, we discuss several limitations of the AODV protocol and propose solutions to
them. We show how our formal specification can be used to analyse the proposed modifications and
show that the resulting AODV variants are loop free.

The rigorous protocol analysis discussed in this paper has the
potential to save a significant amount of time in the development and
evaluation of new network protocols, can provide increased levels of
assurance of protocol correctness, and complements simulation and
other experimental protocol evaluation approaches.

This paper is organised as follows:
\Sect{aodv} gives an informal introduction to AODV\@.
\Sect{abstractions} describes which features of the AODV protocol are
modelled in this paper, and which are not.
In \Sect{process_algebra} we introduce the process algebra AWN\@.\footnote{Major parts of this
  section have been published in ``A Process Algebra for Wireless Mesh Networks''~\cite{ESOP12}.%
  $^\textrm{\ref{crossreferrata}}$} 
 \Sect{modelling_AODV} provides a detailed formal specification of AODV\@ in AWN\@.%
\footnote{Parts of
  the specification are published in \cite{ESOP12}, in ``Automated Analysis of AODV using
  UPPAAL''~\cite{TACAS12} and in ``A Rigorous Analysis of AODV and its
  Variants''~\cite{MSWIM12}.\footnotemark}%
\footnotetext{The references in \cite{ESOP12,MSWIM12} to Prop 7.10(b), Sect.~8 and Sect.~9.1 of this paper, are now to
  Prop.~\ref{prop:msgsending}(b), Sect.~\ref{sec:properties} and Sect.~\ref{ssec:decreasingSQN}.\label{crossreferrata}}
To achieve this, we present the basic data structure needed in \Sect{types}.
In \Sect{invariants} we formally prove some properties of AODV that can be expressed as invariants,
in  particular loop freedom and route correctness.\footnote{A sketch of the loop freedom proof is given in \cite{ESOP12} and in~\cite{MSWIM12}.}

In \Sect{interpretation} we discuss and formalise many ambiguities, contradictions and cases of
unspecified behaviour in the RFC\@, and present an inventory of their plausible resolutions.
Combining the resolutions of the various ambiguities leads to 5184 possible interpretations of the RFC\@.
We show which of these interpretations lead to routing loops or other unacceptable behaviour.
For the remaining interpretations we show loop freedom and route correctness, through small
adaptations in the proofs given in \Sect{invariants}. We also analyse five implementations of AODV\@.%
\footnote{A summary of this section appeared in
  ``Sequence Numbers Do Not Guarantee Loop Freedom---{AODV} Can Yield Routing Loops'' \cite{AODVloop}.}

In \Sect{properties} we propose a general framework to ensure progress, fairness and
justness properties, and apply the proposal to augment our AODV specification with a fairness component.
Subsequently, we formulate two temporal properties (route discovery and packet delivery) that AODV-like protocols should satisfy,
and demonstrate that AODV does not enjoy these properties.
\Sect{analysingAODV} discusses several shortcomings of AODV and proposes five ways in which the
protocol can be improved. All improvements are formalised in \awn, and we show that they enjoy loop
freedom and route correctness.\footnote{Two of the improvements
  from this section are presented in \cite{MSWIM12}.}
\Sect{related work} describes related work, and in \Sect{conclusion} we summarise our findings and
point at work that is yet to be done.

\newpage

\section{Ad hoc On-Demand Distance Vector Routing Protocol}\label{sec:aodv}
AODV~\cite{rfc3561} is a widely-used routing protocol designed for
MANETs, and is one of the four protocols currently standardised by the
IETF MANET working group\footnote{\url{http://datatracker.ietf.org/wg/manet/charter/}}.
It also forms the basis of new WMN routing protocols, including the
upcoming IEEE 802.11s wireless mesh network standard~\cite{IEEE80211s}.

\subsection{Basic Protocol}

\index{route}%
\index{node}%
\index{source}%
\index{destination}%
\index{topology}%
AODV is a reactive protocol: routes are established only on demand. A
route from a source node $s$ to a destination node $d$ is a sequence
of nodes $[s,n_1,\dots,n_k,d]$, where $n_1$, $\dots$, $n_k$ are
intermediate nodes located on the path from $s$ to $d$.  Its basic
operation can best be explained using a simple example topology shown
in \Fig{topology}(a), where edges connect nodes within
transmission range. We assume node~$s$ wants to send a data packet to
node~$d$, but $s$ does not have a valid routing table entry for $d$.
\index{data packet}%
\index{routing table entry}%
\index{route discovery process}%
\index{route request (RREQ)}%
\index{RREQ message}%
\index{broadcast}%
Node $s$ initiates a route discovery mechanism by broadcasting a
route request (RREQ) message, which is received by $s$'s immediate
\index{neighbour}%
\index{destination}%
neighbours $a$ and $b$. We assume that neither $a$ nor $b$ knows a
route to the destination node $d$.\footnote{In case an intermediate
  node knows a route to $d$, it directly sends a route reply back.}
Therefore, they simply re-broadcast the message, as shown in
\Fig{topology}(b). Each RREQ message has a unique identifier
which allows nodes to ignore duplicate RREQ messages that they have
handled before.

\index{routing table}%
\index{reverse route}%
\index{forwarding}%
When forwarding the RREQ message, each intermediate node updates its
routing table and adds a ``reverse route'' entry to $s$, indicating
via which next hop the node $s$ can be reached, and the distance in
number of hops. Once the first RREQ message is received by the
\index{destination}%
destination node $d$ (we assume via $a$), $d$ also adds a reverse
route entry in its routing table, saying that node $s$ can be reached
via node $a$, at a distance of $2$ hops.

\index{route reply (RREP)}%
\index{RREP message}%
\index{unicast}%
Node $d$ then responds by sending a route reply (RREP) message back to
node $s$, as shown in \Fig{topology}(c). In contrast to the
RREQ message, the RREP is unicast, i.e., it is sent to an individual
next-hop node only. The RREP is sent from $d$ to $a$, and then to $s$,
using the reverse routing table entries created during the forwarding
\index{forwarding}%
of the RREQ message. When processing the RREP message, a node creates
\index{forward route}%
a ``forward route'' entry into its routing table. For example, upon
receiving the RREP via $a$, node $s$ creates an entry saying that $d$
can be reached via $a$, at a distance of $2$ hops. At the completion
\index{route discovery process}%
of the route discovery process, a route has been established from $s$
to $d$, and data packets can start to flow.
\index{data packet}%
\begin{figure}[t]
 \begin{center}
   \begin{tabular}[b]{r@{}l@{\hspace{13mm}}r@{}l@{\hspace{13mm}}r@{}l}
   (a)&
   \includegraphics[scale=1]{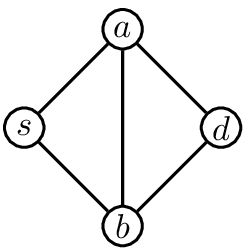}&
   (b)&
   \includegraphics[scale=1]{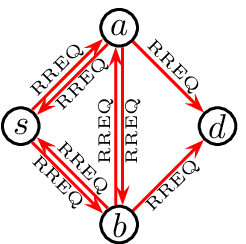}&
   (c)&
   \includegraphics[scale=1]{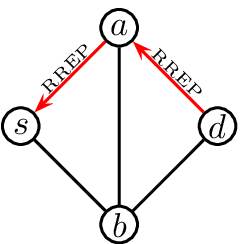}
   \end{tabular}
   \caption{Example network topology}
   \label{fig:topology}
 \end{center}
 \vspace*{-2.1ex}%
\end{figure}

\index{route error (RERR)}%
\index{RERR message}%
\index{sequence number}%
In the event of link and route breaks, AODV uses route error (RERR)
messages to inform affected nodes. Sequence numbers are another
important aspect of AODV, and are used to indicate the freshness of
routing table entries for the purpose of preventing routing loops.

\subsection{Detailed Examples}\label{ssec:detExample}

\index{sequence number}%
\index{routing table}%
\index{routing table entry}%
Each node \dval{ip} stores and maintains its own sequence number and 
its own routing table, which consists of exactly
one entry for each known destination \dval{dip}.
\index{destination}%
\index{next hop}%
\index{route}%
In this paper we represent a routing table entry
as a tuple $(\dval{dip}\comma\dval{dsn}\comma\dval{dsk}\comma\dval{flag}\comma\dval{hops}\comma
\dval{nhip}\comma\dval{pre})$, indicating that \dval{nhip} is the next hop
on a route to \dval{dip} of length \dval{hops};
\dval{dsn} is a sequence number measuring the freshness of this information.
\index{destination sequence number}%
\index{sequence number}%
\index{sequence-number-status flag}%
\index{sequence number!known}%
\index{sequence number!unknown}%
The flag \dval{dsk} indicates if the sequence number is known (\kno) or 
unknown (\unkno). In the former case the sequence number
\dval{dsn} can be used to measure the freshness; in the latter
the value of \dval{dsn} cannot be used since one cannot ``trust'' the value.
The flag \dval{flag} indicates if the route is
\index{route!valid}%
\index{route!invalid}%
\emph{valid} (\val)---it can be used to forward packets---or if it is outdated (\inval).
\index{neighbour!interested|see{precursors}}%
Finally, $\dval{pre}$ is the set of neighbours who are
``interested'' in the route to $\dval{dip}$---they are expected to use
\dval{ip} as the next hop in their own routes to \dval{dip}.
\index{next hop}%

We illustrate the AODV routing protocol in the
example of \Fig{example1}, where AODV is used to establish
a route between nodes $a$ and $c$. The small numbers inside 
the nodes denote the nodes' sequence numbers. Initially all these 
numbers are set to $1$.
\index{sequence number}%
For simplicity, we leave out the last component \dval{pre} of routing
table entries; hence each entry is a $6$-tuple here.

\index{data packet}%
\index{destination}%
\index{routing table}%
\index{routing table entry}%
\Fig{example1}(a) shows the initial state. We assume that
node $a$ wants to send a data packet to node $c$. First, $a$ checks
its routing table and finds that it does not have a (valid) routing
table entry for the destination node $c$. In fact its routing table 
is empty. Therefore it initiates a
\index{route discovery process}%
\index{RREQ message}%
route discovery process by generating a RREQ message.  For ease of
explanation, we represent the generated RREQ message as
$\rreq{\dval{hops}}{\dval{rreqid}}{\dval{dip}}{\dval{dsn}}{\dval{dsk}}{\dval{oip}}{\dval{osn}}{\dval{sip}}$,
indicating that the route request originates from node $\dval{oip}$
with sequence number $\dval{osn}$, searching for a route to
destination $\dval{dip}$ with sequence number at least $\dval{dsn}$.
\index{destination sequence number}%
\index{sequence number}%
This sequence number is taken from the entry for \dval{dip} in the
routing table maintained by node $a$.  If no entry for $\dval{dip}$ is
available, \dval{dsn} is set to $0$. If
there is no entry for $\dval{dip}$ or the sequence number is marked as unknown in the 
routing table, $\dval{dsk}$ is set to {\unkno} (``unknown''); otherwise it 
is set to {\kno} (``known'').
In addition, \dval{hops} is the number of hops the message has already
travelled from $\dval{oip}$, $\dval{rreqid}$ is the unique identifier
of the route request, and \dval{sip} denotes the sender of the
\index{sender}%
message.\footnote{Following the RFC specification of AODV, the sender
address $\dval{sip}$ is not part of the message itself; however a node
that receives a message is able to obtain it from the source IP
address field in the IP header of the message.}

\index{RREQ message}%
\index{RREQ message!originator}%
\index{sequence number}%
When generating a new RREQ message, the originator node must increment
its own sequence number before copying it into the RREQ
message. Therefore, the RREQ message from node $a$ is
$\rreq{0}{\dval{rreqid}}{c}{0}{\unkno}{a}{2}{a}$. This RREQ message is broadcast to
\index{neighbour}%
all its neighbours (\Fig{example1}(b)).\linebreak[3]\mbox{}\vspace{-13.6pt}
\begin{exampleFig}{Simple example}{fig:example1}
\FigLine[lr]%
  {$a$ wants to send a packet to $c$.}{fig/ex_std1_1}{}
  {$a$ broadcasts a new RREQ message;\\nodes $b,d$ receive the RREQ and update their RTs.}
  {fig/ex_std1_2}{}
\FigLine%
  {$d$ forwards the RREQ; node $a$ receives it;\\$b$ forwards the RREQ; nodes $a,c$ receive it.}{fig/ex_std1_3}{}
  {$c$ unicasts a RREP message to $b$.}{fig/ex_std1_4}{}
\FigNewline
\FigLineHalf%
  {$b$ unicasts the RREP to $a$.}{fig/ex_std1_5}{}
\end{exampleFig}

\noindent
Nodes $b$ and $d$ receive the
request and update their routing tables to insert an entry for node
$a$. Since nodes $b$ and $d$ do not know a route to node $c$ (they
\index{destination}%
\index{routing table}%
\index{routing table entry}%
have no routing table entry with destination $c$), they both
re-broadcast the RREQ message, as shown in \Fig{example1}(c). Before
\index{forwarding}%
forwarding the message, nodes $b$ and $d$ increment the $\dval{hops}$
information in the RREQ message from $0$ to $1$, meaning that the
distance to $a$ is now $1$.

\index{RREQ message}%
The forwarded RREQ messages from nodes $b$ and $d$ are then received
by node $a$. Through these messages node $a$ knows that nodes $b$ and
$d$ are $1$-hop neighbours, but node $a$ does not know 
\index{neighbour}%
\index{sequence number!unknown}%
\index{routing table entry}%
\index{sequence-number-status flag}%
\index{RREQ message!originator}%
their sequence numbers, hence they are set to ``unknown''. Therefore, node $a$ creates routing table
entries for its neighbours, but with unknown sequence number $0$, and sequence-number-status flag set to \unkno.
Apart from this, since node $a$ is the originator of the RREQ,
it will ignore these messages.

The same RREQ message forwarded by node $b$ is also received by node
$c$. Node $c$ reacts by creating routing table entries for both its
previous-hop neighbour (node $b$) and the originator of the RREQ
 (node $a$). It then responds by generating a RREP
\index{neighbour}%
\index{RREP message}%
\index{sender}%
message. Again, for ease of explanation, we represent the RREP message
as $\rrep{\dval{hops}}{\dval{dip}}{\dval{dsn}}{\dval{oip}}{\dval{sip}}$,
where $\dval{hops}$ now indicates the distance to $\dval{dip}$.  As
before, $\dval{sip}$ is the sender of the message. Since the
\index{destination}%
\index{destination sequence number}%
\index{sequence number}%
destination node's sequence number specified in the received RREQ
message is unknown ($\dval{dsn} = 0$ and $\dval{dsk} = \unkno$), node $c$ copies its own
sequence number into the RREP message.  Hence the RREP message from
node $c$ is $\rrep{0}{c}{1}{a}{c}$.

\index{unicast}%
\index{RREQ message!originator}%
\index{routing table}%
\index{routing table entry}%
\index{route discovery process}%
\index{data packet}%
\index{forwarding}%
From node $c$, the RREP message is unicast back to its previous-hop node $b$, on the path back towards the originator node $a$ (\Fig{example1}(d)). Node $b$ processes the RREP message and updates its routing table to insert an entry for node $c$. It also increments the $\dval{hops}$ information in the RREP message from $0$ to $1$ before forwarding it to node $a$ (\Fig{example1}(e)). When node $a$ receives the RREP message, this completes the route discovery process and a route is now established from node $a$ to node $c$. Data packets from node $a$ can now be sent to node $c$.

\index{topology}%
We next describe a more interesting example of how AODV operates in a changing network topology. In this example, we will show that due to the changing network topology and subsequent updates to the routing table, a route reply message is not necessarily sent back to the node which had forwarded the route request previously.

\index{data packet}%
\index{RREQ message}%
\Fig{example2}(a) shows the initial network topology, and the initial state of the nodes in the topology. We assume that node $s$ wants to send a data packet to node $d$; hence it generates and broadcasts a route request message RREQ${}_1$ ($\rreq{0}{\dval{rreqid}}{d}{0}{\unkno}{s}{2}{s}$), as shown in \Fig{example2}(b).

\index{topology}%
Next, the network topology changes whereby node $s$ is now within
transmission range of node~$d$. This change in the network
\index{node!mobility}%
topology can be due to node mobility (i.e., node $s$ moves into 
transmission range of node $d$), or due to the improved quality
of the wireless link between nodes~$s$ and~$d$. \Fig{example2}(d)
\index{data packet}%
shows a situation where node $s$ wants to send a data packet to node
$a$, thereby generating and broadcasting a new route request message
RREQ${}_2$ ($\rreq{0}{\dval{rreqid}}{a}{0}{\unkno}{s}{3}{s}$)
\index{RREQ message}%
\index{destination}%
\index{routing table}%
\index{routing table entry}%
\index{destination sequence number}%
\index{sequence number}%
destined to node $a$. Note that RREQ${}_2$ is received by node $d$,
which results in the insertion of an entry for node $s$ with sequence
number $3$ in node $d$'s routing table. At the same time, the previous
route request RREQ${}_1$ is forwarded to node $b$ on its path towards
node $d$.
\begin{exampleFig}{An example with changing network topology}{fig:example2}
\FigLine[lr]%
  {The initial state.}{fig/ex_unexpect1_1}{}
  {$s$  broadcasts a new RREQ message destined to $d$.}{fig/ex_unexpect1_2}{}
  \multicolumn{2}{|@{\,}l@{\,}|}{
    (\alph{figexmp}) Network topology changes; $s$ moves into the transmission range of $d$}\stepcounter{figexmp}{}\\
    \hline
\FigLine[lsr]%
  {$s$  broadcasts a new RREQ message destined to $a$;\\RREQ${}_1$ is forwarded.}{fig/ex_unexpect1_3}{}
  {$d$ forwards RREQ${}_{2}$; nodes $a,b,s$ receive it;\\$b$ updates its routing table entry to $s$.}{fig/ex_unexpect1_4}{}
\multicolumn{2}{|@{\,}l@{\,}|}{
     (\alph{figexmp}) All steps that would follow as a reaction to RREQ${}_2$
     are skipped, because they are not important here.\stepcounter{figexmp}}\\
    \hline
\FigLine[slsr]%
  {$b$ forwards RREQ${}_1$; it is received by node $d$.}{fig/ex_unexpect1_5}{}
  {$d$ generates a reply to RREQ${}_1$;\\this reply is {\em not sent} back to $b$; it is sent to $s$.}{fig/ex_unexpect1_6}{}
\end{exampleFig}

\index{forwarding}%
\Fig{example2}(e) shows that node $d$ forwards RREQ${}_2$, which is received by nodes $a$, $b$, and $s$. The subsequent steps in response to RREQ${}_2$, i.e. the generation of a RREP message by node $a$, and its unicast to node $d$ and subsequent forwarding to the originator node $s$, are not shown in \Fig{example2} as they do not contribute towards the objective of this example.

\index{destination}%
\index{next hop}%
\index{routing table}%
\index{routing table entry}%
\index{RREP message}%
\index{destination sequence number}%
\Fig{example2}(g) shows that RREQ${}_1$ is forwarded by node $b$ and finally received by the destination node $d$. Since the destination sequence number for node $s$ in RREQ${}_1$ ($\dval{dsn}=2$) is older than the corresponding destination sequence number information in node $d$'s routing table entry for node $s$ ($\dval{dsn}=3$),
 the routing table entry for node $s$ is not updated. Node $d$ then generates a RREP message in response to RREQ${}_1$.  The destination node $d$ searches in its routing table for a reverse route entry for node $s$, and finds that the next hop \dval{nhip} for the route towards node $s$ is node $s$ itself. Therefore, the RREP message is not sent back to node $b$ (from which the RREQ${}_1$ message is received), but instead is sent back directly to node $s$ (\Fig{example2}(h)).

\newpage

\section{Abstractions Chosen}\label{sec:abstractions}
\index{RFC}%
Our formalisation of AODV tries to accurately model the protocol as
defined in the IETF RFC 3561 specification \cite{rfc3561}.  The model
focusses on layer $3$ of the protocol stack, i.e., the routing and
forwarding of messages and packets, and abstracts from lower layer
network protocols and mechanisms such as the Carrier Sense Multiple
Access (CSMA) protocol.  The presented formalisation includes all core
components of the protocol, but, at the moment, abstracts from timing
issues and optional protocol features. This keeps our specification
manageable. Our plan is to extend our model step by step.  Even though
our model currently does not cover all aspects, it allows us to point
to shortcomings in AODV and to discuss some possible improvements.
The model also allows us to reason about protocol behaviour and to
prove critical protocol characteristics.

In this section, we list all items that are not yet part of our formal model.

\subsection{Timing}
\index{timing}%
We abstract from \emph{all timing issues}. Surely, this is a big
decision and there are good reasons to add time as a next
step. However, this abstraction makes the verification of properties
much easier:

No entry of a routing table or route reply message has the field
\phrase{lifetime} that maintains the expiration or deletion time of the
route in AODV\@.  Informally this means that no valid route is set to
invalid due to timing, that no invalid route disappears from the
routing table (except when it is overwritten), and that we never
delete elements of the set $\rreqs$ of already seen requests
(described in \SSect{rreqs}).  In terms of the RFC that
means that \verb|ACTIVE_ROUTE_TIMEOUT|, \verb|DELETE_PERIOD| and
\verb|PATH_DISCOVERY_TIME| are set to infinity.

\subsection{Optional Protocol Features}
\index{local repair}%
A route may be \emph{locally repaired} if a link break in a valid
route occurs.  In that case, the node upstream of that break may
choose to initiate a local repair if the destination was no farther
than \verb|MAX_REPAIR_TTL| hops away.  Local repair is optional;
therefore we do not model this feature here.

To avoid unnecessary network-wide dissemination of RREQs, the
originating node should use an \emph{expanding ring search} technique.
This is again an optional feature, which is not modelled here; we can
say that the \verb|RING_TRAVERSAL_TIME| is set to infinity.

A route request may be sent multiple times.  This happens if a node,
after broadcasting a RREQ, does not receive the corresponding RREP
within a given amount of time. In that case the node may broadcast
another RREQ, up to a maximum of \verb|RREQ_RETRIES|.  Since the
default value for \verb|RREQ_RETRIES| is only two, and moreover this
whole procedure is optional, we have not modelled this resending of
RREQ messages.

If a route discovery has been attempted \verb|RREQ_RETRIES| times
without receiving any RREP, a \emph{destination unreachable message}
should be delivered to the client (application) hooked up at the
originator.  This interaction between different layers of the protocol
stack has not been modelled here since it is not a core part of the
protocol itself.

When a node wants to increment its sequence number, but the largest
possible number ($2^{32}-1$) has already been assigned to it, a
\emph{sequence number rollover} has to be accomplished.  This rollover
violates the property that sequence numbers are monotonically
increased over time; therefore it would be possible to create routing
loops.  It appears that loops as a consequence of rollover are rare in
practice and therefore we decided to model sequence numbers by the
unbounded set of natural numbers.

\emph{Interfaces}, as part of routing table entries, store information
concerning the network link, e.g., that the node is connected via
Ethernet.  This is because AODV should operate smoothly over wired as
well as wireless networks.  Here we assume that nodes have only one
type of network interface and consequently leave out this field.

Another phenomenon which may yield complications and possibly routing
loops, are node crashes.  For now, we have neither modelled crashes
nor \emph{actions after reboot}.

\index{bidirectional links}%
By default, our process algebra establishes only bidirectional
links\footnote{A bidirectional link means that if a node $b$ is in
transmission range of $a$ ($a$ can send messages to $b$), then $a$ is
also in range of $b$. A bidirectional link does \emph{not} mean that
if $a$ knows a route to $b$, then $b$ knows a route to $a$.}.  We will
point out how by a trivial change it can model unidirectional links
(\Sect{process_algebra}). We have decided not to make this our default
here, since, by doing so, fundamental properties such as route
correctness would not hold for AODV any longer (see \SSect{route correctness}).
\index{unidirectional links}%
Unidirectional links come along with ``blacklist'' sets, which we also do not model.

We further do not model the optional support for aggregate networks
and the use of AODV with other networks, as loosely discussed in
Sections 7 and 8 of the AODV RFC\@~\cite{rfc3561}.

\index{neighbour}
Finally, \emph{hello messages} can be used as an optional feature to
offer connectivity information to a node's neighbours. Since in our
model all optional parts are skipped, we do not model hello messages
either; information about $1$-hop neighbours is established by
receiving AODV control messages.

\subsection{Flags}
\index{flag}%
\index{control message}%
\index{routing table entry}%
Following the RFC~\cite{rfc3561}, AODV control messages and routing
table entries have to maintain a series of state and routing flags
such as the repair flag, the unknown sequence number flag, and the
\index{gratuitous RREP flag}%
gratuitous RREP flag.  For most of these flags there is no compulsion
to ever set them.  An exception is the \phrase{unknown sequence number
\protect\mbox{(`U')} flag}.
In some implementations, such as AODV-UU~\cite{AODVUU}, this flag
is omitted in favour of a special element denoting the unknown sequence number.
In our model, we follow
the RFC and model the sequence 
number as well as the `U' flag. We speak of a
\phrase{sequence-number-status flag}, with values ``known'' and ``unknown''.
\index{sequence number!known}%
\index{sequence number!unknown}%

Besides the `U' flag, each route request has the \emph{join} (`J'),
the \emph{repair} (`R'), the \emph{gratuitous RREP} (`G') and the
\emph{destination only} (`D') \emph{flag}.  The `J' and `R' flag are
reserved for \emph{multicast}, an optional feature not fully specified
in the RFC.  We do not model the multicast feature, and hence ignore these two flags.
The `G' flag indicates whether a gratuitous RREP should be unicast, by
an intermediate node answering the RREQ message, to the destination
node of the original message; the `D' flag indicates that only the
destination may respond to this RREQ\@. Both flags may be set when a
request is initiated. Since this is also optional, we have decided to
skip these features for the moment. However, their inclusion should be
straightforward.

A route reply carries two flags: the \emph{repair} (`R') flag, used
for the multicast feature, and the \emph{acknowledgment} (`A') \emph{flag}, which
indicates that a route reply acknowledgment message must be sent in
response to a RREP message. We do not model these flags: the former
since we do not model multicast at all; the latter since this flag is
optional. Consequently, we have no need to model the
\phrase{route reply acknowledgment (RREP-ACK)} \emph{message}, which---next to
RREQ, RREP and RERR---constitutes a fourth kind of AODV control message.

Finally, an error message only maintains the \emph{no delete} (`N')
\emph{flag}.  It is set if a node has performed a local repair. Since
we do not model local repair, we are able to abstract from that flag.

Flags pertaining to local repair, but stored in the routing tables,
are the \emph{repairable} and the \emph{being repaired flags}.  For
the same reasons, we skip these flags as well.

\newpage

\section{A Process Algebra for Wireless Mesh Routing Protocols}
\label{sec:process_algebra}
\index{AWN}%
\index{process algebra}%
In this section we propose \awn (Algebra of Wireless Networks), a process algebra for the
specification of WMN routing protocols, such as AODV\@.
It is a variant of
standard process algebras \cite{Mi89,Ho85,BK86,BB87}, adapted to the
problem at hand.  For example, it allows us to embed
\index{data structure}%
data structures.  In \awn, a WMN is modelled as an encapsulated
\index{encapsulation}%
\index{parallel composition}%
\index{node}%
parallel composition of network nodes.  On each node several
\index{sequential processes}%
sequential processes may be running in parallel.  Network nodes
\index{neighbour}%
communicate with their direct neighbours---those nodes that are in
\index{broadcast}%
\index{unicast}%
\index{transmission range}%
transmission range---using either broadcast or unicast.  Our
formalism maintains for each node the set of nodes that are currently
\index{node!mobility}%
in transmission range.  Due to mobility of nodes and variability of
wireless links, nodes can move in or out of transmission range.  The
encapsulation of the entire network inhibits communications between
network nodes and the outside world, with the exception of the receipt
\index{deliver}%
\index{data packet}%
\index{client}%
\index{application layer}%
and delivery of data packets from or to clients\footnote{The
application layer that initiates packet sending and awaits receipt of
a packet.}  of the modelled protocol that may be hooked up to various
nodes.

\subsection{A Language for Sequential Processes}

\index{variables}%
\index{data variables}%
\index{data structure}%
\index{data expressions}%
\index{data types}%
\index{data formulas}%
\index{predicate logic}%
The internal state of a process is determined, in part, by the values
of certain data variables that are maintained by that process.  To
this end, we assume a data structure with several types, variables
ranging over these types, operators and predicates. First order
predicate logic yields terms (or \emph{data expressions}) and formulas
to denote data values and statements about them.\label{pg:undefvalues}\footnote{As
    \label{fn:undefvalues}%
    operators we also allow \emph{partial} functions with the
    convention that any atomic formula containing an undefined subterm
    evaluates to {\tt false}.} Our data structure
always contains the types \tDATA, \tMSG, {\tIP} and $\pow(\tIP)$ of
\phrase{application layer data}, \phrase{messages}, \phrase{IP addresses}---or any
other node identifiers---and \emph{sets of IP addresses}. We further assume that 
there is a function $\newpktID:\tDATA \times \tIP \rightarrow \tMSG$
that generates a message with new application layer data for a particular destination. 
\index{newpkt@$\newpktID$}%
The purpose of this function is to inject data to the protocol; details will be given later.

In addition, we assume a type \keyw{SPROC} of \phrase{sequential processes},
and a collection of \phrase{process names}, each being an operator of type
$\keyw{TYPE}_1 \times \cdots \times \keyw{TYPE}_n \rightarrow \keyw{SPROC}$
for certain data types $\keyw{TYPE}_i$.
Each process name $X$ comes with a \phrase{defining equation}
\[X(\keyw{var}_1\comma\ldots\comma\keyw{var}_n) \stackrel{{\it def}}{=} p\ ,\]
in which, for each $i=1,\ldots,n$, $\keyw{var}_i$ is a variable of type
$\keyw{TYPE}_i$ and $p$ a \emph{sequential process expression}
defined by the grammar below. $p$ may contain the variables
$\keyw{var}_i$ as well as $X$; however, all occurrences of data
variables in $\p$ have to be \phrase{bound}.\footnote{An occurrence of a
  data variable in $p$ is \emph{bound} if it is one of the variables
  $\keyw{var}_i$, a variable {\msg} occurring in a subexpression
  $\receive{\msg}.\q$, a variable \keyw{var} occurring in a subexpression
  $\assignment{\keyw{var}:=\dexp{exp}}\q$, or an occurrence in a
  subexpression $\cond{\varphi}\q$ of a variable occurring free in $\varphi$.
  Here $\q$ is an arbitrary sequential process expression.}
The choice of the underlying data structure and the process names with
their defining equations can be tailored to any particular application
of our language; our decisions made for modelling AODV are presented
in Sections~\ref{sec:types} and~\ref{sec:modelling_AODV}.  The process
names are used to denote the processes that feature in this
application, with their arguments $\keyw{var}_i$ binding the current
values of the data variables maintained by these processes.

The \phrase{sequential process expressions} are given by the following grammar:
$$\begin{array}[t]{@{}l@{}}
\SP ::= X(\dexp{exp}_1\comma\ldots\comma\dexp{exp}_n) ~\mid~ \cond{\varphi}\SP ~\mid~ \assignment{\keyw{var}:=\dexp{exp}} \SP
  ~\mid~ \SP+\SP ~\mid ~ \alpha.\SP ~\mid~ \unicast{\dexp{dest}}{\dexp{ms}}.\SP \prio \SP \\
\alpha ::=
  \broadcastP{\dexp{ms}} ~\mid~ \groupcastP{\dexp{dests}}{\dexp{ms}}
  ~\mid~ \send{\dexp{ms}} ~\mid~ \deliver{\dexp{data}} ~\mid~ \receive{\msg}
\end{array}$$
\index{assignment}%
Here $X$ is a process name, $\dexp{exp}_i$ a data expression of the
same type as $\keyw{var}_i$, $\varphi$ a data formula,
$\keyw{var}\mathop{:=}\dexp{exp}$ an assignment of a data expression
\dexp{exp} to a variable \keyw{var} of the same type, \dexp{dest},
\dexp{dests}, \dexp{data} and \dexp{ms} data expressions of types
{\tIP}, $\pow(\tIP)$, {\tDATA} and {\tMSG}, respectively, and $\msg$ a
data variable of type \tMSG.

\index{data values}%
Given a valuation of the data variables by concrete data values, the
sequential process $\cond{\varphi}\p$ acts as $\p$ if $\varphi$
evaluates to {\tt true}, and deadlocks if $\varphi$ evaluates to
{\tt false}. In case $\varphi$ contains free variables that are not
yet interpreted as data values, values are assigned to these variables
in any way that satisfies $\varphi$, if possible.
The sequential process $\assignment{\keyw{var}:=\dexp{exp}}\p$
acts as $\p$, but under an updated valuation of the data variable \keyw{var}.
The sequential process $\p+\q$ may act either as $\p$ or as
$\q$, depending on which of the two processes is able to act at all.  In a
context where both are able to act, it is not specified how the choice
is made. The sequential process $\alpha.\p$ first performs the action
$\alpha$ and subsequently acts as $\p$.  The action
\index{broadcast}%
$\broadcastP{\dexp{ms}}$ broadcasts (the data value bound to the
expression) $\dexp{ms}$ to the other network nodes within transmission range,
\index{unicast}%
\index{conditional unicast}%
whereas $\unicast{\dexp{dest}}{\dexp{ms}}.\p \prio \q$ is a sequential process
that tries to unicast the message $\dexp{ms}$ to the destination
\index{destination}%
\dexp{dest}; if successful it continues to act as $\p$ and otherwise
as $\q$. In other words, $\unicast{\dexp{dest}}{\dexp{ms}}.\p$ is
prioritised over $\q$; only if the action $\unicast{\dexp{dest}}{\dexp{ms}}$
is not possible, the alternative $q$ will happen. It models an abstraction
\index{acknowledgment}%
of an acknowledgment-of-receipt mechanism that is typical for unicast
communication but absent in broadcast communication, as implemented by
the link layer of relevant wireless standards such as IEEE 802.11.
\index{groupcast}%
The process $\groupcastP{\dexp{dests}}{\dexp{ms}}.\p$ tries
to transmit \dexp{ms} to all destinations $\dexp{dests}$, and proceeds
as $\p$ regardless of whether any of the transmissions is successful.
Unlike {\bf unicast} and  {\bf broadcast}, the expression {\bf groupcast} 
does not have a unique counterpart in networking.
Depending on the protocol and the implementation it can 
be an iteratively unicast, a broadcast, or a multicast;
thus  {\bf groupcast} abstracts from implementation details. The
\index{send@\textbf{send}}%
action $\send{\dexp{ms}}$ synchronously transmits a message to another
process running on the same network node; this action can occur only
when this other sequential process is able to receive the message.
\index{receive@\textbf{receive}}%
The sequential process $\receive{\msg}.\p$ receives any message $m$
(a data value of type \tMSG) either from another node, from another
sequential process running on the same node or from the client hooked
up to the local node.  It then proceeds as $\p$, but with the data
variable $\msg$ bound to the value $m$. 
The submission of data from a client
is modelled by the receipt of a message $\newpkt{\dval{d}}{\dval{dip}}$,
where the function $\newpktID$ generates a message containing the
data $\dval{d}$ and the intended destination $\dval{dip}$. 
Data is delivered to the client by \deliver{\dexp{data}}.%
\index{deliver}%
\begin{table}[t]
$$\begin{array}{@{}r@{~}l@{\qquad}l@{}}
  \xi,\broadcastP{\dexp{ms}}.\p &\ar{\broadcastP{\xi(\dexp{ms})}} \xi,\p
\\[8pt]
  \xi,\groupcastP{\dexp{dests}}{\dexp{ms}}.\p &\ar{\groupcastP{\xi(\dexp{dests})}{\xi(\dexp{ms})}} \xi,\p
\\[8pt]
  \xi,\unicast{\dexp{dest}}{\dexp{ms}}.\p \prio \q &\ar{\unicast{\xi(\dexp{dest})}{\xi(\dexp{ms})}} \xi,\p
\\[8pt]
  \xi,\unicast{\dexp{dest}}{\dexp{ms}}.\p \prio \q &\ar{\neg\unicast{\xi(\dexp{dest})}{\xi(\dexp{ms})}} \xi,\q
\\[8pt]
  \xi,\send{\dexp{ms}}.\p &\ar{\send{\xi(\dexp{ms})}} \xi,\p
\\[8pt]
  \xi,\deliver{\dexp{data}}.\p &\ar{\deliver{\xi(\dexp{data})}} \xi,\p
\\[8pt]
  \xi,\receive{\keyw{msg}}.\p &\ar{\receive{m}} \xi[\keyw{msg}:=m],\p
  & \mbox{\small($\forall m\in \tMSG$)}
\\[8pt]
  \xi,\assignment{\keyw{var}:=\dexp{exp}}\p &\ar{\tau} \xi[\keyw{var}:=\xi(\dexp{exp})],\p
\\[8pt]\multicolumn{2}{c}{\displaystyle
  \frac{\emptyset[\keyw{var}_i:=\xi(\dexp{exp}_i)]_{i=1}^n,\p \ar{a} \xii,\p'}
  {\xi,X(\dexp{exp}_1,\ldots,\dexp{exp}_n) \ar{a} \xii,\p'}
  ~\mbox{(\small$X(\keyw{var}_1,\ldots,\keyw{var}_n) \stackrel{{\it def}}{=} \p$)}}
  & \mbox{(\small$\forall a\in \act$)}
\\[15pt]\displaystyle
  \frac{\xi,\p \ar{a} \xii,\p'}{\xi,\p+\q \ar{a} \xii,\p'} \qquad
  \frac{\xi,\q \ar{a} \xii,\q'}{\xi,\p+\q \ar{a} \xii,\q'} \qquad\mbox{}
  &\displaystyle
  \frac{ \xi \stackrel{\varphi}{\rightarrow}\xii}
       {\xi,\cond{\varphi}\p \ar{\tau} \xii,\p}
  & \mbox{(\small$\forall a\in \act$)}
\vspace{-1.2ex}
\end{array}$$
\caption[Structural operational semantics for sequential process expressions]
    {\em Structural operational semantics for sequential process expressions}
\label{tab:sos sequential}
\vspace{-1.5ex}
\end{table}

The internal state of a sequential process described by an expression
$\p$ in this language is determined by $\p$, together with a
\phrase{valuation} $\xi$ associating data values $\xi(\keyw{var})$ to
the data variables \keyw{var} maintained by this process.
\index{closed}%
Valuations naturally extend to \emph{$\xi$-closed} data
expressions---those in which all variables are either bound or in the
domain of $\xi$. The structural operational semantics of
\index{structural operational semantics}%
Table~\ref{tab:sos sequential} is in the style of Plotkin \cite{Pl04}
and describes how one internal state can evolve into another by
performing an \phrase{action}.\footnote{Eight of the transition rules
  feature statements of the form $\xi(\dexp{exp})$ where \dexp{exp} is
  a data expression. Here the application of the rule depends on
  $\xi(\dexp{exp})$ being defined. In case $\xi(\dexp{exp})$ is\label{partial}
  undefined---either because \dexp{exp} contains a variable that is
  not in the domain of $\xi$ or because \dexp{exp} contains a partial
  function that is given an argument for which it is not defined---the
  transition cannot be taken, possibly leading to a deadlock of the represented process.}
The set $\act$ of actions consists of
\index{broadcast}%
\index{groupcast}%
\index{unicast}%
\index{receive@\textbf{receive}}%
\index{send@\textbf{send}}%
\index{deliver}%
$\broadcastP{m}$,
$\groupcastP{D}{m}$,
$\unicast{\dval{dip}}{m}$,
$\neg\unicast{\dval{dip}}{\dval{m}}$,
$\send{m}$,
$\deliver{\dval{d}}$,
$\receive{m}$
\index{internal actions}%
and internal actions~$\tau$, for each choice of $m \mathop{\in}\tMSG$,
$\dval{dip}\mathop{\in}\tIP$, $D\mathop{\in}\pow(\tIP)$ and
$\dval{d}\mathop{\in}\tDATA$.  Here, $\neg\unicast{\dval{dip}}{\dval{m}}$
denotes a failed unicast. Moreover $\xi[\keyw{var}:= v]$ denotes the
valuation that assigns the value $v$ to the variable \keyw{var}, and
agrees with $\xi$ on all other variables. The empty valuation
$\emptyset$ assigns values to no variables. Hence
$\emptyset[\keyw{var}_i:=v_i]_{i=1}^n$ is the valuation that
\emph{only} assigns the values $v_i$ to the variables $\keyw{var}_i$
for $i=1,\ldots,n$. The rule for process names in
\index{transition}%
Table~\ref{tab:sos sequential} (Line~$9$) says that a process, named
$X$, has the same  transitions as the body $p$ of its defining equation.
In CCS \cite{Mi89}, such a rule is $\displaystyle\frac{p \ar{a} p'}{X \ar{a} p'}$\,.
\vspace{-1ex}
Adding data variables as arguments of process names would yield
$\displaystyle\frac{\xi,\p \ar{a} \xii,\p'}{\xi,X(\keyw{var}_1\comma\ldots\comma\keyw{var}_n) \ar{a} \xii,\p'}$\,.
\vspace{3pt}
However, a sequential process expression may call a process name with
data expressions filled in for these variables. This necessitates a
translation from a given valuation $\xi$ of the variables that may
occur in these data expressions to a new valuation $\xi^\#$ of the
variables $\keyw{var}_i$ that occur in the defining equation of $X$:
\[\displaystyle\frac{\xi^\#,X(\keyw{var}_1\comma\ldots\comma\keyw{var}_n) \ar{a} \xii,\p'}
  {\xi,X(\dexp{exp}_1\comma\ldots\comma\dexp{exp}_n) \ar{a} \xii,\p'}\ .\]
Here $\xi^\#(\keyw{var}_i) = \xi(\dexp{exp}_i)$. Moreover, in defining
$\xi^\#$ we drop all bindings of variables other than the $\keyw{var}_i$.

\begin{example}\label{ex:recursion}
Given the defining equation\vspace{-3pt}
\[X(\keyw{numa}) ~\stackrel{{\it def}}{=}~\send{\keyw{numa}+1}\,.\,
  \receive{\keyw{numb}}\,.\,X(\keyw{numa}+\keyw{numb})\]
and the valuation given by $\xi(\keyw{numa})=3$ and
$\xi(\keyw{numb})=4$, with \keyw{numa} and \keyw{numb} data variables
of type $\NN$, we have\vspace{-1ex}
\[\xi, X(\keyw{numa}+\keyw{numb}) \ar{\send{8}}
\xii, \receive{\keyw{numb}}\,.\,X(\keyw{numa}+\keyw{numb})\ ,\]
where $\xii(\keyw{numa})=7$ and $\xii(\keyw{numb})$ is undefined.
\end{example}

An alternative and more traditional rule for process names would be
$\displaystyle \frac{\xi,\p[\dexp{exp}_i/\keyw{var}_i]_{i=1}^n \ar{a} \xii,\p'}
{\xi,X(\dexp{exp}_1\comma\ldots\comma\dexp{exp}_n) \ar{a} \xii,\p'}$
where $\p[\dexp{exp}_i/\keyw{var}_i]_{i=1}^n$ denotes the expression $\p$
in which each variable $\keyw{var}_i$ is replaced
by the expression $\dexp{exp}_i$, for $i=1,\ldots,n$.
This would modify the derivation of \Ex{recursion} into
\[\xi, X(\keyw{numa}+\keyw{numb}) \ar{\send{8}}
\xi, \receive{\keyw{numc}}\,.\,X(\keyw{numa+numb}+\keyw{numc})\ ,\]
\index{alpha@$\alpha$-conversion}%
in which one applies \emph{$\alpha$-conversion} when renaming the
argument \keyw{numb} of \textbf{receive} into \keyw{numc} to avoid a
name clash.  In this paper we avoid casual application of
$\alpha$-conversion, since in our invariant proofs in \Sect{invariants}
we track the value of variables that are identified by name only.
With this in mind we formulated our rule for process names.

\index{choice operator}%
The rules defining the choice operator (Table~\ref{tab:sos sequential}, Line~$10$) are standard and imply immediately that 
$+$ is associative.

Finally, \plat{$\xi\stackrel{\varphi}{\rightarrow}\xii$} says that
$\xii$ is an extension of $\xi$, i.e., a valuation that agrees with
$\xi$ on all variables on which $\xi$ is defined, and valuates the
other variables occurring free in $\varphi$, such that the formula
$\varphi$ holds under $\xii$. All variables not free in $\varphi$ and
not evaluated by $\xi$ are also not evaluated by $\xii$.

\begin{example}
Let $\xi(\keyw{numa})=7$ and $\xi(\keyw{numb})$, $\xi(\keyw{numc})$
be undefined. Then the sequential process given by the pair
\index{transition}%
$\xi,[\keyw{numa} = \keyw{numb}+\keyw{numc}]p$ admits several transitions of the form
$$\xi,[\keyw{numa} = \keyw{numb}+\keyw{numc}]p \ar{\tau} \xii,p$$
such as the one with $\xii(\keyw{numb})=2$ and $\xii(\keyw{numc})=5$.
On the other hand, $\xi,[\keyw{numa} = \keyw{numb}+8]p$ admits no transitions, since $\keyw{numb}\in\NN$.
\end{example}

\subsection{A Language for Parallel Processes}

\index{parallel process expressions}%
\emph{Parallel process expressions} are given by the grammar
$$PP ~::=~ \xi,\SP ~\mid~ PP \parl PP$$
where $\SP$ is a sequential process expression and $\xi$ a valuation.
An expression $\xi,\p$ denotes a sequential process expression
equipped with a valuation of the variables it maintains.
\index{parallel composition}%
The process $P\parl Q$ is a parallel composition of $P$ and $Q$,
running on the same network node. As formalised in
Table~\ref{tab:sos}, an action $\receive{\dval{m}}$ of $P$
\index{receive@\textbf{receive}}%
\index{send@\textbf{send}}%
\index{internal actions}%
synchronises with an action $\send{\dval{m}}$ of $Q$ into an internal
action $\tau$. These receive actions of $P$ and send actions of $Q$
cannot happen separately. All other actions of $P$ and $Q$, including
receive actions of $Q$ and send actions of $P$, occur interleaved in
$P\parl Q$. Thus, in an expression $(P\parl Q)\parl R$, for example,
the send and receive actions of $Q$ can communicate only with $P$ and
$R$, respectively, but the receive actions of $R$, as well as the send
actions of $P$, remain available for communication with the
environment.  Therefore, a parallel process expression denotes a
parallel composition of sequential processes $\xi,P$ with information
flowing from right to left. The variables of different sequential
processes running on the same node are maintained separately, and thus
\index{shared variables}%
cannot be shared.

\begin{table}[t]
$$\begin{array}{@{}r@{~}l@{}}
\displaystyle
  \frac{P \ar{a} P'}{P\parl Q \ar{a} P'\parl Q}
  \quad\mbox{\small($\forall a\neq \receive{m}$)}
  \qquad\mbox{}
  &\displaystyle
  \frac{Q \ar{a} Q'}{P\parl Q \ar{a} P\parl Q'}
  \quad\mbox{\small($\forall a\neq \send{m}$)}
\\[16pt]\multicolumn{2}{c}{\displaystyle
  \frac{P \ar{\receive{m}} P'\qquad Q \ar{\send{m}} Q'}
       {P\parl Q \ar{\tau} P'\parl Q'}
    \quad\mbox{\small($\forall m\in\tMSG$)}}
\vspace{-1.2ex}
\end{array}$$
\caption[Structural operational semantics for parallel process expressions]
    {\em Structural operational semantics for parallel process expressions}
\label{tab:sos}
\vspace{-1ex}
\end{table}

Instead of introducing the novel operator $\parl$, we could have used
the partially synchronous parallel composition operator $\|$ of ACP \cite{BK86},
$|$ of CCS \cite{Mi89} or $\|_A$ of CSP \cite{OH86}. However, those
operators are normally used in conjunction with restriction and/or
concealment operators, which are not needed when using~$\parl$.
\index{restriction operator}%
\index{encapsulation operator}%
In ACP a \emph{restriction} or \emph{encapsulation} operator is used to
prevent read and send actions of the components of a parallel
composition to occur by themselves, without synchronising with an
action from another other component. Furthermore, a \emph{concealment} or
\index{concealment operator}%
\index{abstraction operator}%
\emph{abstraction} operator is used to convert the results of
successful synchronisation into internal actions, thereby making sure
that they will not take part in further synchronisations with the
environment. In CCS, the concealment operator is not needed, as the
parallel composition directly produces internal actions as the results
of synchronisation; however, the restriction operator is indispensable.
In CSP, on the other hand, the restriction operator is made redundant
by incorporating its function within the parallel composition. In this
framework matching read and send actions have the same name,
which is also the name of the result of their synchronisation. This
makes the concealment operator indispensable. It appears to be
impossible to combine the ideas of CCS and CSP directly to make both the
restriction and the concealment operator redundant, while maintaining
associativity of the parallel composition.  Our operator $\parl$ is
the first that does not need such auxiliary operators, sacrificing
commutativity, but not associativity, to make this possible.

Though $\parl$ only allows information flow in one direction, it reflects reality of 
WMNs. Usually two sequential processes run on the same node:
$
P \parl Q
$.
The main process $P$ deals with all protocol details  of the node, e.g., message handling 
and maintaining the data such as routing tables.
The process $Q$ manages the queueing of messages as they arrive; it is always able to
receive a message even if $P$ is busy. 
\index{message queueing}%
The use of message queueing in combination with $\parl$  is crucial, since
otherwise incoming messages would be lost when the process is busy dealing with other
messages\footnote{assuming that one employs the optional augmentation of Section~\ref{ssec:non-blocking}},
which would not be an accurate model of what happens in real implementations.

\subsection{A Language for Networks}\label{ssec:networks}

\index{node}%
We model network nodes in the context of a wireless mesh network by
\phrase{node expressions} of the form $\dval{ip}:PP:R$. Here $\dval{ip}
\in \tIP$ is the \emph{address} of the node, $PP$ is a parallel process
\index{IP addresses}%
expression, and $R\in\pow(\tIP)$ is the \emph{range} of the node---the
\index{transmission range}%
set of nodes that are currently within transmission range of $\dval{ip}$.

\begin{table}[hbt]
$$\begin{array}{@{}c@{\qquad}c@{}}
\displaystyle
  \frac{P \ar{\broadcastP{m}} P'}
  {\rule[13pt]{0pt}{1pt}
   \dval{ip}:P:R \ar{\colonact{R}{\starcastP{m}}} \dval{ip}:P':R}
&\displaystyle
  \frac{P \ar{\groupcastP{D}{m}} P'}
  {\rule[13pt]{0pt}{1pt}
   \dval{ip}:P:R \ar{\colonact{R\cap D}{\starcastP{m}}} \dval{ip}:P':R}
\\[22pt]\displaystyle
  \frac{P \ar{\unicast{\dval{dip}}{m}} P'\qquad \dval{dip}\in R}
  {\rule[13pt]{0pt}{1pt}
   \dval{ip}:P:R \ar{\colonact{\{\dval{dip}\}}{\starcastP{m}}} \dval{ip}:P':R}
&\displaystyle
  \frac{P \ar{\neg\unicast{\dval{dip}}{\dval{m}}} P'\qquad \dval{dip}\not\in R}
  {\rule[13pt]{0pt}{1pt}
   \dval{ip}:P:R \ar{\tau} \dval{ip}:P':R}
\\[22pt]\displaystyle
  \frac{P \ar{\deliver{\dval{d}}} P'}
  {\rule[13pt]{0pt}{1pt}
   \dval{ip}:P:R \ar{\colonact{\dval{ip}}{\deliver{\dval{d}}}} \dval{ip}:P':R}
&\displaystyle
  \frac{P \ar{\receive{m}} P'}
  {\rule[13pt]{0pt}{1pt}
   \dval{ip}:P:R \ar{\colonact{\{\dval{ip}\}\neg\emptyset}{\listen{m}}} \dval{ip}:P':R}
\\[20pt]\displaystyle
  \frac{P \ar{\tau} P'}
  {\dval{ip}:P:R \ar{\tau} \dval{ip}:P':R}
&
  \dval{ip}:P:R \ar{\colonact{\emptyset\neg\{\dval{ip}\}}{\listen{m}}} \dval{ip}:P:R
\\[20pt]\displaystyle
  \dval{ip}:P:R \ar{\textbf{connect}(\dval{ip},\dval{ip}')} \dval{ip}:P:R\cup\{\dval{ip}'\}
&
  \dval{ip}:P:R \ar{\textbf{disconnect}(\dval{ip},\dval{ip}')} \dval{ip}:P:R-\{\dval{ip}'\}
\\[8pt]\displaystyle
  \dval{ip}:P:R \ar{\textbf{connect}(\dval{ip}',\dval{ip})} \dval{ip}:P:R\cup\{\dval{ip}'\}
&
  \dval{ip}:P:R \ar{\textbf{disconnect}(\dval{ip}',\dval{ip})} \dval{ip}:P:R-\{\dval{ip}'\}
\\[8pt]\displaystyle
  \frac{\dval{ip} \not\in \{\dval{ip}'\!,\dval{ip}''\}}
  {\rule[13pt]{0pt}{1pt}
   \dval{ip}:P:R \ar{\textbf{connect}(\dval{ip}'\!,\dval{ip}'')} \dval{ip}:P:R}
&\displaystyle
  \frac{\dval{ip} \not\in \{\dval{ip}'\!,\dval{ip}''\}}
  {\rule[13pt]{0pt}{1pt}
   \dval{ip}:P:R \ar{\textbf{disconnect}(\dval{ip}'\!,\dval{ip}'')} \dval{ip}:P:R}
\end{array}$$
\vspace{-1ex}
\caption[Structural operational semantics for node expressions]
    {\em Structural operational semantics for node expressions}
\label{tab:sos node}
\vspace{2ex}
\end{table}

\index{network}%
A \phrase{partial network} is then modelled by a \phrase{parallel
  composition} $\|$ of node expressions, one for every node in the
network, and a \emph{complete network} is a partial network within an
\phrase{encapsulation operator} $[\_]$ that limits the communication of
network nodes and the outside world to the receipt and the delivery
of data packets to and from the application layer attached to the
modelled protocol in the network nodes.
\index{network expressions}%
This yields the following grammar for network expressions:\vspace{-3pt}
\[N ::= [M] \qquad\qquad M::= ~~ \dval{ip}:PP:R ~~\mid~~ M \| M\ .\vspace{-3pt}\]

\begin{table}[t]\vspace{-2ex}
$$\begin{array}{@{}c@{\hspace{2.5mm}}c@{}}
\displaystyle
  \frac{M \ar{\colonact{R}{\starcastP{m}}} M' \quad N \ar{\colonact{H\neg K}{\listen{m}}} N'}
  {\rule[13pt]{0pt}{1pt}
   M \| N \ar{\colonact{R}{\starcastP{m}}} M' \| N'}
  \mbox{\footnotesize
  $\left(\begin{array}{@{}c@{}}H\subseteq R\\K \cap R = \emptyset\end{array}\right)$}
  &{\displaystyle
  \frac{M \ar{\colonact{H\neg K}{\listen{m}}} M' \quad N \ar{\colonact{R}{\starcastP{m}}} N'}
  {\rule[13pt]{0pt}{1pt}
   M \| N \ar{\colonact{R}{\starcastP{m}}} M' \| N'}
  \mbox{\footnotesize
  $\left(\begin{array}{@{}c@{}}H\subseteq R\\K \cap R = \emptyset\end{array}\right)$}}\\[22pt]
\multicolumn{2}{c}{\displaystyle
  \frac{M \ar{\colonact{H\neg K}{\listen{m}}} M' \quad N \ar{\colonact{H'\neg K'}{\listen{m}}} N'}
  {\rule[13pt]{0pt}{1pt}
   M \| N \ar{\colonact{(H\cup H')\neg(K\cup K')}{\listen{m}}} M' \| N'}}
\\[22pt]\displaystyle
  \frac{M \ar{\colonact{\dval{ip}}{\deliver{\dval{d}}}} M'}{\rule[13pt]{0pt}{1pt}
   M \| N \ar{\colonact{\dval{ip}}{\deliver{\dval{d}}}} M' \| N}
  \qquad
  \frac{N \ar{\colonact{\dval{ip}}{\deliver{\dval{d}}}} N'}{\rule[13pt]{0pt}{1pt}
   M \| N \ar{\colonact{\dval{ip}}{\deliver{\dval{d}}}} M \| N'}
&\displaystyle
  \frac{M \ar{\tau} M'}{M \| N \ar{\tau} M' \| N}
  \qquad
  \frac{N \ar{\tau} N'}{M \| N \ar{\tau} M \| N'}
\\[22pt]\displaystyle
  \frac{M \ar{\textbf{connect}(\dval{ip},\dval{ip}')} M' \quad
        N \ar{\textbf{connect}(\dval{ip},\dval{ip}')} N'}
  {\rule[13pt]{0pt}{1pt}
   M \| N \ar{\textbf{connect}(\dval{ip},\dval{ip}')} M' \| N'}
&\displaystyle
  \frac{M \ar{\textbf{disconnect}(\dval{ip},\dval{ip}')} M' \quad
        N \ar{\textbf{disconnect}(\dval{ip},\dval{ip}')} N'}
  {\rule[13pt]{0pt}{1pt}
   M \| N \ar{\textbf{disconnect}(\dval{ip},\dval{ip}')} M' \| N'}
\\[22pt]\displaystyle
  \frac{M \ar{\textbf{connect}(\dval{ip},\dval{ip}')} M'}{\rule[13pt]{0pt}{1pt}
   [M] \ar{\textbf{connect}(\dval{ip},\dval{ip}')} [M']}
\qquad
  \frac{M \ar{\textbf{disconnect}(\dval{ip},\dval{ip}')} M'}{\rule[13pt]{0pt}{1pt}
   [M] \ar{\textbf{disconnect}(\dval{ip},\dval{ip}')} [M']}
&\displaystyle
  \frac{M \ar{\colonact{R}{\starcastP{m}}} M'}{[M] \ar{\tau} [M']}
\qquad
  \frac{M \ar{\tau} M'}{[M] \ar{\tau} [M']}
\\[22pt]\displaystyle
  \frac{M \ar{\colonact{\dval{ip}}{\deliver{\dval{d}}}} M'}{\rule[13pt]{0pt}{1pt}
   [M] \ar{\colonact{\dval{ip}}{\deliver{\dval{d}}}} [M']}
&\displaystyle
  \frac{M \ar{\colonact{\{\dval{ip}\}\neg K}{\listen{\newpkt{\dval{d}}{\dval{dip}}}}} M'}
  {\rule[13pt]{0pt}{1pt}
   [M] \ar{\colonact{\dval{ip}}{\textbf{newpkt}(\dval{d},\dval{dip})}} [M']}
\end{array}$$
\vspace{-3ex}
\caption[Structural operational semantics for network expressions]
    {\em Structural operational semantics for network expressions}
\label{tab:sos network}
\vspace{-3pt}
\end{table}

The operational semantics of node and network expressions of
Tables~\ref{tab:sos node} and~\ref{tab:sos network} uses transition labels
\index{starcast@\textbf{*cast}}%
\index{arrive@\textbf{arrive}}%
\index{connect@\textbf{connect}}%
\index{disconnect@\textbf{disconnect}}%
\index{newpkt@$\newpktID$}%
\index{deliver}%
$\colonact{R}{\starcastP{m}}$,
$\colonact{H\neg K}{\listen{m}}$,
$\textbf{connect}(\dval{ip},\dval{ip}')$,
$\textbf{disconnect}(\dval{ip},\dval{ip}')$,
$\colonact{\dval{ip}}{\textbf{newpkt}(\dval{d},\dval{dip})}$,
$\colonact{\dval{ip}}{\deliver{\dval{d}}}$\linebreak[4]
and $\tau$.
As before, $m\in\tMSG$, $d\in\tDATA$, $R\in\pow(\tIP)$, and $\dval{ip},\dval{ip}'\in\tIP$.
\index{IP addresses}%
Moreover, $H,K\in\pow(\tIP)$ are sets of IP addresses.
\index{starcast@\textbf{*cast}}%
The action $\colonact{R}{\starcastP{m}}$ casts a message $m$  that can
be received by the set $R$ of network nodes.  We do not distinguish
whether this message has been broadcast, groupcast or unicast---the
differences show up merely in the value of $R$. Recall that $D\in\pow(\tIP)$
denotes a set of intended destinations, and $\dval{dip}\in\tIP$ a single
destination. A failed unicast attempt on the part of its process is
modelled as an internal action $\tau$ on the part of a node expression.
\index{send@\textbf{send}}%
The action $\send{m}$ of a process does not give rise to any action of
the corresponding node---this action of a sequential process cannot
occur without communicating with a receive action of another sequential
process running on the same node.

\index{arrive@\textbf{arrive}}%
The action $\colonact{H\neg K}{\listen{m}}$ states that the message $m$
simultaneously arrives at all addresses $\dval{ip}\mathbin\in H$, and fails to
arrive at all addresses $\dval{ip}\mathbin\in K$.
The rules of Table~\ref{tab:sos network} let a $\colonact{R}{\starcastP{m}}$-action of one node
synchronise with an $\listen{m}$ of all other nodes, where this
$\listen{m}$ amalgamates the arrival of message $m$ at the nodes in
the transmission range $R$ of the $\starcastP{m}$, and the non-arrival at the
other nodes. The rules for $\listen{m}$ in Table~\ref{tab:sos node}
state that arrival of a message at a node happens if and only if the
node receives it, whereas non-arrival can happen at any time.
This embodies our assumption that, at any time, any message that is
transmitted to a node within range of the sender is actually received by that node.
(The eighth rule in Table~\ref{tab:sos node}, having no
  premises, may appear to say that any node \dval{ip} has the option to
  disregard any message at any time. However, the encapsulation
  operator (below) prunes away all such disregard-transitions that do
  not synchronise with a cast action for which \dval{ip} is out of range.)

\index{deliver}%
Internal actions $\tau$ and the action $\colonact{\dval{ip}}{\deliver{\dval{d}}}$ are simply
inherited by node expressions from the processes that run on these
nodes, and are interleaved in the parallel composition of nodes that
makes up a network. Finally, we allow actions $\textbf{connect}(\dval{ip},\dval{ip}')$
\index{connect@\textbf{connect}}%
\index{disconnect@\textbf{disconnect}}%
and $\textbf{disconnect}(\dval{ip},\dval{ip}')$ for $\dval{ip},\dval{ip}'\in \tIP$
modelling a change in network topology. Each node needs to synchronise
with such an action. These actions can be thought of as occurring
nondeterministically, or as actions instigated by the environment of
the modelled network protocol. In this formalisation node $\dval{ip}'$ is in
the range of node $\dval{ip}$, meaning that $\dval{ip}'$ can receive
messages sent by $\dval{ip}$, if and only if $\dval{ip}$ is in the
range of $\dval{ip}'$. To break this symmetry, one just skips the last
four rules of Table~\ref{tab:sos node} and replaces the synchronisation
rules for \textbf{connect} and \textbf{disconnect} in
Table~\ref{tab:sos network} by interleaving rules (like the ones for
\textbf{deliver} and $\tau$).
\label{pg:sym}

The main purpose of the encapsulation operator is to ensure that no
messages will be received that have never been sent. In a parallel
composition of network nodes, any action $\receive{\dval{m}}$ of one
of the nodes \dval{ip} manifests itself as an action $\colonact{H\neg
K}{\listen{\dval{m}}}$ of the parallel composition, with $\dval{ip}\in
H$. Such actions can happen (even) if within the parallel composition
they do not communicate with an action $\starcastP{\dval{m}}$ of
another component, because they might communicate with a
$\starcastP{\dval{m}}$ of a node that is yet to be added to the
parallel composition. However, once all nodes of the network are
accounted for, we need to inhibit unmatched arrive actions,
as otherwise our formalism would allow any node at any time to receive
any message. One exception however are those arrive actions
\index{newpkt@$\newpktID$}%
that stem from an action $\receive{\newpkt{\dval{d}}{\dval{dip}}}$ of a
sequential process running on a node, as those actions represent
communication with the environment. Here, we use the function
$\newpktID$, which we assumed to exist.\footnote{%
\newcommand{\npkt}{\textbf{newpkt}\xspace}
To avoid the function $\newpktID$ we could have introduced a new
primitive \npkt, which is dual to \textbf{deliver}.}
\index{injection}%
It models the injection of new data $\dval{d}$ for destination $\dval{\dip}$.

\index{starcast@\textbf{*cast}}%
\index{deliver}%
The encapsulation operator passes through internal actions, as well as
delivery of data to destination nodes, this being an interaction
with the outside world. $\starcastP{m}$-actions are declared internal
\index{internal actions}%
actions at this level; they cannot be steered by the outside world.
\index{connect@\textbf{connect}}%
\index{disconnect@\textbf{disconnect}}%
The connect and disconnect actions are passed through in
Table~\ref{tab:sos network}, thereby placing them under control of the
environment; to make them nondeterministic, their rules should have a
$\tau$-label in the conclusion, or alternatively
$\textbf{connect}(\dval{ip},\dval{ip}')$ and $\textbf{disconnect}(\dval{ip},\dval{ip}')$
should be thought of as internal actions. Finally, actions
\index{arrive@\textbf{arrive}}%
$\listen{m}$ are simply blocked by the encapsulation---they
cannot occur without synchronising with a $\starcastP{m}$---except for
\index{newpkt@$\newpktID$}%
$\colonact{\{\dval{ip}\}\neg K}{\listen{\newpkt{\dval{d}}{\dval{dip}}}}$
with $\dval{d}\in\tDATA$ and $\dval{dip}\in \tIP$. This action
represents new data \dval{d} that is submitted by a
client of the modelled protocol to node $\dval{ip}$, for delivery at
destination \dval{dip}.

\subsection{Results on the Process Algebra}

\index{data structure}%
\index{variables}%
\index{data values}%
Our process algebra admits translation into one without data
structures (although we cannot \emph{describe} the target algebra without
using data structures).  The idea is to replace any variable by all
possible values it can take. Formally, processes $\xi,p$ are replaced
by $\mathcal{T}_\xi(p)$, where $\mathcal{T}_\xi$ is defined inductively by

\begin{tabular}{@{}l}
$\mathcal{T}_\xi(\broadcastP{\dexp{ms}}\,.\,p)= \broadcastP{\xi(\dexp{ms})}\,.\,\mathcal{T}_\xi(p)$\ ,\\[0.5mm]
$\mathcal{T}_\xi(\groupcastP{\dexp{dests}}{\dexp{ms}}\,.\,p)= \groupcastP{\xi(\dexp{dests})}{\xi(\dexp{ms})}\,.\,\mathcal{T}_\xi(p)$\ ,\\[0.5mm]
$\mathcal{T}_\xi(\unicast{\dexp{dest}}{\dexp{ms}}\,.\,p \prio q)= \unicast{\xi(\dexp{dest})}{\xi(\dexp{ms})}\,.\,\mathcal{T}_\xi(p) \prio \mathcal{T}_\xi(q)$\ ,\\[0.5mm]
$\mathcal{T}_\xi(\send{\dexp{ms}}\,.\,p)= \send{\xi(\dexp{ms})}\,.\,\mathcal{T}_\xi(p)$\ ,\\[0.5mm]
$\mathcal{T}_\xi(\deliver{\dexp{data}}\,.\,p)= \deliver{\xi(\dexp{data})}\,.\,\mathcal{T}_\xi(p)$\ ,\\[0.5mm]
$\mathcal{T}_\xi(\receive{\msg}\,.\,p)= \sum_{m\in\tMSG}\receive{m}\,.\,\mathcal{T}_{\xi[\msg:=m]}(p)$\ ,\\[0.5mm]
$\mathcal{T}_\xi(\assignment{\keyw{var}:=\dexp{exp}}p)= \tau\,.\,\mathcal{T}_{\xi[\keyw{var}:=\xi(\dexp{exp})]}(p)$\ ,\\[0.5mm]
$\mathcal{T}_\xi([\varphi]p)=\sum_{ \{\zeta\mid\xi \stackrel{\varphi}{\rightarrow}\xii\}} \tau\,.\,\mathcal{T}_{\xii}(p)$\ ,\\[0.5mm]
$\mathcal{T}_\xi(p+q)=\mathcal{T}_\xi(p) + \mathcal{T}_\xi(q)$\ ,\\[0.5mm]
$\mathcal{T}_\xi(X(\dexp{exp}_1,\ldots,\dexp{exp}_n))= X_{\xi(\dexp{exp}_1),\ldots,\xi(\dexp{exp}_n)}$\ .\\[1mm]
\end{tabular}

\noindent
The last equation requires the introduction of a process name $X_{\vec{v}}$
for every substitution instance $\vec{v}$ of the arguments of $X$.
The resulting process algebra has a structural operational semantics in the
\emph{de Simone} format\index{de Simone format}, generating the same transition system---up to strong
\index{strong bisimilarity}%
bisimilarity, $\bis$ ---as the original. Only the rules for sequential process
expressions are different; these are displayed in \Tab{sos sequential modified}.
It follows that $\bis\,$, and many other
semantic equivalences, are congruences on our language.

\begin{table}[t]
$$\begin{array}{@{}r@{~}l@{\qquad}r@{~}l@{}}
  \broadcastP{\dval{m}}.\p &\ar{\broadcastP{\dval{m}}} \p
&
  \send{\dval{m}}.\p &\ar{\send{\dval{m}}} \p
\\[8pt]
  \groupcastP{D}{\dval{m}}.\p &\ar{\groupcastP{D}{\dval{m}}} \p
&
  \deliver{d}.\p &\ar{\deliver{d}} \p
\\[8pt]
  \unicast{\dval{dip}}{\dval{m}}.\p \prio \q &\ar{\unicast{\dval{dip}}{\dval{m}}} \p
&
  \receive{m}.\p &\ar{\receive{m}} \p
\\[8pt]
  \unicast{\dval{dip}}{\dval{m}}.\p \prio \q &\ar{\neg\unicast{\dval{dip}}{\dval{m}}} \q
&
  \tau.\p &\ar{\tau} \p
\end{array}$$
$$\frac{\p \ar{a} \p'}
  {X \ar{a} \p'}
  ~~~\mbox{(\small$X \stackrel{{\it def}}{=} \p$)}
\qquad
  \frac{\p \ar{a} \p'}{\p+\q \ar{a} \p'} \qquad
  \frac{\q \ar{a} \q'}{\p+\q \ar{a} \q'} \qquad
  \frac{ \p_i \ar{a} \p'}{\sum_{i\in I}\p_i \ar{a} \p'}
\qquad
  \mbox{(\small$\forall a\in \act$)}
$$
\vspace{-1.2ex}
\caption[Operational semantics for sequential processes
  after elimination of data structures]
    {\em Structural operational semantics for sequential processes
  after elimination of data structures}
\label{tab:sos sequential modified}
\vspace{-1.5ex}
\end{table}

\begin{theorem}
Strong bisimilarity is a congruence for all operators of our language.
\end{theorem}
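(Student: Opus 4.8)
The plan is to reduce the statement to the standard meta-theorem that strong bisimilarity is a congruence for every operational semantics presented in the \emph{de Simone format}, exploiting the data-elimination translation $\mathcal{T}$ set up above. First I would record the two facts this translation supplies: (i) for every closed term $t$ of our language, $t \bis \mathcal{T}(t)$, since $\mathcal{T}$ generates the same transition system up to $\bis$; and (ii) $\mathcal{T}$ is compositional, sending each operator of our language to a fixed \emph{context} built from the operators of the data-free algebra applied to the translations of the arguments. For the parallel, node and network levels this is literally homomorphic ($\mathcal{T}(P\parl Q)=\mathcal{T}(P)\parl\mathcal{T}(Q)$, $\mathcal{T}(\dval{ip}:P:R)=\dval{ip}:\mathcal{T}(P):R$, and so on); for the sequential operators it is read off directly from the inductive clauses defining $\mathcal{T}_\xi$. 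The data-free algebra has its sequential rules in \Tab{sos sequential modified} and retains the rules of \Tab{sos}, \Tab{sos node} and \Tab{sos network}; since this whole system lies in the de Simone format, the meta-theorem yields that $\bis$ is a congruence for all its operators, hence for every context built from them.

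Next I would verify the format hypothesis in detail, which is routine but unavoidable bookkeeping. The points to check are that all premises are positive and that every rule has the de Simone shape: each argument is tested at most once, target variables are fresh, and the right-hand side is a single operator applied to the appropriate variables. Crucially, the priority mechanism of the conditional unicast does \emph{not} introduce a negative premise: the failed unicast $\neg\unicast{\dval{dip}}{\dval{m}}$ is an ordinary element of $\act$, so the two rules for $\unicast{\dval{dip}}{\dval{m}}.p\prio q$ in \Tab{sos sequential modified} are both positive. The side conditions appearing in \Tab{sos node} and \Tab{sos network} (such as $\dval{dip}\in R$, $\dval{dip}\not\in R$, $H\subseteq R$ and $K\cap R=\emptyset$) only relate the parameters of the parametrised operators $\dval{ip}:\_:R$ to the labels and merely select which rule instances exist; the relabellings performed by the node rules and by the encapsulation $[\_]$ are of the form permitted by the format; and the recursion constants $X_{\vec v}$ with their unfolding rules are handled in the standard way.

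Finally I would transfer congruence back to the original, data-carrying language, and this is where I expect the main difficulty. One must first fix the notion of bisimilarity for an open sequential expression $p$ as ``$(\xi,p)\bis(\xi,q)$ for every valuation $\xi$ extending the free data variables'', which is exactly the notion needed for substitution into contexts. Given bisimilar arguments, fact (i) replaces each by its translation, fact (ii) rewrites the compound term as a de Simone context of these translations, and congruence for de Simone contexts closes the argument, after which (i) transports the conclusion back. The delicate cases are the genuinely data-dependent operators, the guard $\cond{\varphi}p$ and $\receive{\msg}.p$, because $\mathcal{T}_\xi$ maps them to \emph{sums} in which the single argument $p$ is duplicated across summands under different valuations $\zeta$; here one cannot appeal to congruence of one fixed operator but needs the hypothesis ``$(\zeta,p)\bis(\zeta,q)$ for all relevant $\zeta$'' together with congruence of $\sum$ and of the relevant prefix operators. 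Making precise that every translated context is built solely from de Simone operators, and that the quantification over extending valuations matches the hypotheses of the congruence claim, is the crux; the de Simone meta-theorem then does the rest uniformly, and the same reduction simultaneously delivers the congruence of the other equivalences mentioned.
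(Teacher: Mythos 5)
Your proposal is correct and takes essentially the same route as the paper: the paper's own (two-sentence) proof obtains congruence directly from the existing meta-theory of the de Simone format, applied to the data-free algebra produced by the translation $\mathcal{T}_\xi$, the language having been deliberately designed so that this translated semantics fits the format. The details you supply --- the format check (including that the failed unicast is a positive label, not a negative premise) and the transfer back through the translation, with the guard and receive cases handled via congruence of summation --- are exactly what the paper leaves implicit.
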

This is a deep result that usually takes many pages to establish (e.g.,~\cite{SRS10}).
Here we get it directly from the existing theory on structural
operational semantics, as a result of carefully designing our
language within the disciplined framework described by de Simone~\cite{dS85}.
\endbox
\begin{theorem}
\index{associativity}%
$\parl$ is associative, and $\|$ is associative and commutative, up to $\bis\,$.
\end{theorem}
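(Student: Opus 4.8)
The plan is to prove all three equations directly, by exhibiting explicit strong bisimulations; I would not route this through a syntactic rule format for associativity or commutativity, since the operators are few and their rules short enough that concrete relations are cleaner. For each claim the witnessing relation is simply the set of all pairs of re-associated (resp.\ commuted) terms: $\mathcal{R}^{c}_{\|}=\{(M\|N,\ N\|M)\}$ for commutativity of $\|$, $\mathcal{R}^{a}_{\|}=\{((M\|N)\|O,\ M\|(N\|O))\}$ for its associativity, and $\mathcal{R}_{\parl}=\{((P\parl Q)\parl R,\ P\parl(Q\parl R))\}$ for associativity of $\parl$, where $M,N,O$ range over (partial) network expressions and $P,Q,R$ over parallel process expressions. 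Because in each operator the components make their moves independently and the residual of a transition is again a term of the same shape, each such set is already closed under matching transitions, so it is a bisimulation without having to be padded with the identity relation. Note that the congruence result of the previous theorem is not used here: we are establishing equations, not their preservation under contexts.

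Commutativity of $\|$ is immediate: every rule of Table~\ref{tab:sos network} occurs in a form that is symmetric under swapping the two operands. The two cast/arrive rules (first row) are mirror images; the arrive/arrive synchronisation produces the label built from $H\cup H'$ and $K\cup K'$, which are symmetric in the two sides; \textbf{deliver} and $\tau$ interleave by paired left/right rules; and \textbf{connect}/\textbf{disconnect} synchronise on a common label. Hence for every move of $M\|N$ the mirrored rule gives a move of $N\|M$ with the same label to a related state, so $\mathcal{R}^{c}_{\|}$ is a bisimulation.

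For associativity of $\|$ I would do a case analysis on the last rule deriving a transition of $(M\|N)\|O$ and match it by regrouping. The routine cases (\textbf{deliver} and $\tau$ interleaving) are straightforward; the substance is in the communication rules. A $\starcastP{m}$ originating in one component synchronises with an arrive of the other two; whichever component is the source, the label $\colonact{R}{\starcastP{m}}$ carries the \emph{sender's} range $R$ and is unchanged by regrouping, while the two listeners contribute arrive-sets that combine by $\cup$. The bookkeeping to verify is that the side conditions survive regrouping, which reduces to the elementary facts $H_1\cup H_2\subseteq R \iff H_1\subseteq R \wedge H_2\subseteq R$ and $(K_1\cup K_2)\cap R=\emptyset \iff K_1\cap R=\emptyset \wedge K_2\cap R=\emptyset$, together with associativity and commutativity of $\cup$. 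The pure arrive/arrive/arrive case (where the whole network only listens, awaiting a cast from a component still to be added) is handled by the same union identities, and \textbf{connect}/\textbf{disconnect} by the observation that both groupings force all three operands to synchronise on the common label.

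The delicate case, and the one I expect to be the main obstacle, is associativity of the left-biased operator $\parl$, precisely because it is \emph{not} commutative: information flows only from right to left, so I must check that the directional side conditions $a\neq\receive{m}$ and $a\neq\send{m}$ of Table~\ref{tab:sos} make the two groupings exhibit exactly the same transitions. The key structural observations are that in $P\parl Q$ a $\receive{m}$ can arise only from $Q$ and a $\send{m}$ only from $P$, whereas a $\send{m}$ of $Q$ and a $\receive{m}$ of $P$ remain free. Using these, each synchronisation lines up across the re-association: a $P$-receive with a $Q$-send (a $\tau$ inside the left pair on the left-hand grouping) is matched by the outer synchronisation of $P$ with the free $Q$-send supplied by $Q\parl R$ on the right; and a $Q$-receive with an $R$-send (an outer synchronisation on the left-hand grouping) is matched by the inner $\tau$ of $Q\parl R$, inherited through the right component on the right-hand grouping. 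One must also confirm that no spurious transitions appear: free receives of the whole composition come only from $R$ and free sends only from $P$ in either grouping, so the correspondence is a bijection on transitions. Verifying that these directional constraints leave no mismatched case is the crux; once it is done, $\mathcal{R}_{\parl}$ is seen to be a strong bisimulation and the theorem follows.
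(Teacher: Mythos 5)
Your proof is correct, but it takes a genuinely different route from the paper's. The paper constructs no bisimulations at all: it observes that the rules for $\parl$ and $\|$ fit the ASSOC-de Simone rule format of \cite{CMR08} and invokes that meta-theorem, which reduces associativity to a purely syntactic check---classifying the three rules for $\parl$ in Table~\ref{tab:sos} as types $1$, $2$ and $7$, verifying the condition $7_{(a,b)} \Rightarrow (1_a \Leftrightarrow 2_b) \wedge (2_a \Leftrightarrow 2_{\gamma(a,b)}) \wedge (1_b \Leftrightarrow 1_{\gamma(a,b)})$ for the communication function $\gamma(\receive{m},\send{m})=\tau$, and, for $\|$, checking associativity of the $\gamma$ that combines $\starcastP{m}$ with $\listen{m}$ actions; commutativity of $\|$ then follows by symmetry of the rules. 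Your explicit relations $\mathcal{R}_{\parl}$, $\mathcal{R}^{a}_{\|}$, $\mathcal{R}^{c}_{\|}$ and the exhaustive case analysis prove the same facts from first principles. What the paper's route buys: brevity, no transition case analysis to get wrong, and the result falls out of the same design discipline (de Simone format) already exploited for the congruence theorem. What your route buys: it is self-contained, needing no appeal to the rule-format meta-theory, and it exposes exactly \emph{why} the asymmetric side conditions of $\parl$ still yield associativity---your identification of the two admissible synchronisation patterns ($P$-receive with $Q$-send, $Q$-receive with $R$-send), the impossibility of a $P$-receive/$R$-send synchronisation in either grouping, and the union identities for the $H$- and $K$-sets in the $\|$ case are precisely the content that the format's verification condition hides. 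Your observation that the relations need no padding with the identity is also right, since the residuals of related states are again re-associated (resp.\ commuted) pairs.

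One sentence needs repair: you write that in $P\parl Q$ ``a $\send{m}$ of $Q$ and a $\receive{m}$ of $P$ remain free.'' These are precisely the two kinds of actions that are \emph{blocked}---they can occur only in synchronisation---so this clause contradicts the (correct) first half of the same sentence; the free actions are the sends of $P$ and the receives of $Q$. Since the remainder of your analysis, including the matching of the synchronisation cases and the claim that free receives of the whole composition come only from $R$ and free sends only from $P$, uses the correct facts, this is a slip rather than a gap, but it should be fixed.
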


\begin{proof}
The operational rules for these operators fit a format presented in \cite{CMR08},
guaranteeing associativity up to~$\bis$.
The \emph{ASSOC-de Simone format} of \cite{CMR08} applies to all 
transition system specifications (TSSs) in de
Simone format, and allows $7$ different types of rules (named $1$--$7$) for the operators in question.
Our TSS is in De Simone format; the three rules for $\parl$ of
Table~\ref{tab:sos} are of types $1$, $2$ and $7$, respectively.
To be precise, it has rules $1_a$ for $a \in \act -                      
\{\receive{m}\mid m\mathop\in\tMSG\}$, rules $2_a$ for $a \in \act -                      
\{\send{m}\mid m\mathop\in\tMSG\}$, and rules $7_{(a,b)}$ for
$(a,b)\in\{(\receive{m},\send{m})\mid m\mathop\in\tMSG\}$.
Moreover, the partial \phrase{communication function}
$\gamma:\act\times\act\rightharpoonup \act$ is given by
$\gamma(\receive{m},\send{m})=\tau$.
The main result of \cite{CMR08} is that an operator is guaranteed to
be associative, provided that $\gamma$ is associative and six
conditions are fulfilled. In the absence of rules of types 3, 4, 5
and 6, five of these conditions are trivially fulfilled, and the
remaining one reduces to\vspace{-1ex}
$$7_{(a,b)} \ims (1_a \Leftrightarrow 2_b)
           \ans (2_a \Leftrightarrow 2_{\gamma(a,b)})
           \ans (1_b \Leftrightarrow 1_{\gamma(a,b)})\ .$$
Here $1_a$ says that rule $1_a$ is present, etc.
This condition is met for $\parl$ because the antecedent holds only
when taking $(a,b)=(\receive{m},\send{m})$ for some $m\mathop\in\tMSG$.
In that case $1_a$ is false, $2_b$ is false, and $2_a$, $2_\tau$,
$1_b$ and $1_\tau$ are true. Moreover, $\gamma(\gamma(a,b),c)$ and
$\gamma(a,\gamma(b,c))$ are never defined, thus making $\gamma$
trivially associative.
The argument for $\|$ being associative proceeds likewise.
Here the only non-trivial condition is the associativity of $\gamma$,
given by
$$\gamma(\colonact{R}{\starcastP{m}},\colonact{H\neg K}{\listen{m}})=
 \gamma(\colonact{H\neg K}{\listen{m}},\colonact{R}{\starcastP{m}})
 = \colonact{R}{\starcastP{m}}\ ,$$ provided $H\subseteq R$ and $K\cap R =\emptyset$, and
$$\gamma(\colonact{H\neg K}{\listen{m}},\colonact{H'\neg K'}{\listen{m}})
 =\colonact{(H\cup H')\neg(K\cup K')}{\listen{m}}\ .$$
Commutativity of $\|$ follows by symmetry.
\end{proof}

\subsection{Optional Augmentation to Ensure Non-Blocking Broadcast}
\label{ssec:non-blocking}

Our process algebra, as presented above, is intended for networks in which
each node is \phrase{input enabled} \cite{LT89}, meaning that it is
always ready to receive any message, i.e., able to engage in the
transition $\receive{m}$ for any $m\in \tMSG$. In our model of AODV
(Section~\ref{sec:modelling_AODV}) we will ensure this by equipping each node
with a message queue that is always able to accept messages for later
handling---even when the main sequential process is currently busy.
This makes our model \phrase{non-blocking}, meaning that no sender can
be delayed in transmitting a message simply because one of the
potential recipients is not ready to receive it.

However, the operational semantics does allow blocking if one would
(mis)use the process algebra to model nodes that are not input enabled.
This is a logical consequence of insisting that any broadcast
message \emph{is} received by all nodes within transmission range.

Since the possibility of blocking can regarded as a bad property of
broadcast formalisms, one may wish to take away the expressiveness of
the language that allows modelling a blocking broadcast. This is the
purpose of the following optional augmentations of our operational
semantics.

The first possibility is the addition of the rule
$$\frac{P \nar{\receive{m}}}  {\rule[13pt]{0pt}{1pt}
  \dval{ip}:P:R \ar{\colonact{\{\dval{ip}\}\neg\emptyset}{\listen{m}}} \dval{ip}:P:R}\;.$$
It states that a message may arrive at a node \dval{ip} regardless whether
the node is ready to receive it; if it is not ready, the message is simply
ignored, and the process running on the node remains in the same state.

\newcommand{\discard}{\textbf{ignore}(m)} A variation on the same idea
stems from the \emph{Calculus of Broadcasting Systems} (CBS) \cite{CBS}.  It
consists in eliminating the negative premise in the above rule in
favour of actions $\discard\in\act$---in \cite{CBS} called \emph{discard}
\index{discard actions}%
actions $w\!:$ ---which can be performed by a process exactly when
it is not ready to do a $\receive{m}$. The rule above then becomes
$$\frac{P \ar{\discard} P'}  {\rule[13pt]{0pt}{1pt}
  \dval{ip}:P:R \ar{\colonact{\{\dval{ip}\}\neg\emptyset}{\listen{m}}} \dval{ip}:P':R}$$
and we need the extra rules:
$$\begin{array}{@{}r@{~}l@{}}
  \xi,\broadcastP{\dexp{ms}}.\p &\ar{\discard} \xi,\broadcastP{\dexp{ms}}.\p
\\[8pt]
  \xi,\groupcastP{\dexp{dests}}{\dexp{ms}}.\p &\ar{\discard}
  \xi,\groupcastP{\dexp{dests}}{\dexp{ms}}.\p
\\[8pt]
  \xi,\unicast{\dexp{dest}}{\dexp{ms}}.\p \prio \q &\ar{\discard}
  \xi,\unicast{\dexp{dest}}{\dexp{ms}}.\p \prio \q
\\[8pt]
  \xi,\send{\dexp{ms}}.\p &\ar{\discard}
  \xi,\send{\dexp{ms}}.\p
\\[8pt]
  \xi,\deliver{\dexp{data}}.\p &\ar{\discard}
  \xi,\deliver{\dexp{data}}.\p
\\[8pt]
  \xi,\assignment{\keyw{var}:=\dexp{exp}}\p &\ar{\discard}
  \xi,\assignment{\keyw{var}:=\dexp{exp}}\p
\\[8pt]
  \xi,\cond{\varphi}\p &\ar{\discard} \xi,\cond{\varphi}\p
\\[8pt]\multicolumn{2}{c}{\displaystyle
  \frac{\xi,\p \ar{\discard} \xi,\p' \quad \xi,\q \ar{\discard} \xi,\q'}
   {\rule[13pt]{0pt}{1pt} \xi,\p+\q \ar{\discard} \xi,\p'+\q'}}
\vspace{-1ex}
\end{array}$$
for all $m\in \tMSG$. Furthermore, the first rule for $\parl$
from Table~\ref{tab:sos node} is replaced by
$$\frac{P \ar{a} P'}{P\parl Q \ar{a} P'\parl Q}
  \quad\mbox{\small($\forall a\neq \receive{m},~\discard$)}.$$
These rules ensure that for all $P$ and $m$ we always have
~$P \ar{\discard} Q ~\Leftrightarrow~ (Q=P \wedge P\nar{\receive{m}})$.
After elimination of the data structures as described in
\SSect{non-blocking}, this operational semantics is again in the
de Simone format.

Either of these two optional augmentations of our semantics gives rise
to the same transition system. Moreover, when modelling networks in
which all nodes are input enabled---as we do in this paper---the added
rule for node expressions will never be used, and the resulting
transition system is the same whether we use augmentation or not.

\subsection{Illustrative Example}
\renewcommand{\a}{a}
\renewcommand{\b}{b}
\newcommand{\mymsg}[2]{\keyw{mg}(#1,#2)}
To illustrate the use of our process algebra \awn, we consider a network
of two nodes $\a$ and $\b$ ($\a,\b\in\tIP$)
on which the same process is running, although starting in different states.
The process describes a simply (toy-)protocol: whenever a new data packet 
for destination \dval{dip} ``appears'',\footnote{In
this small example, we assume that new data packets just
  appear ``magically''; of course one could use the message
$\newpkt{\data}{\dip}$ instead.}
the data is broadcast through the network until it finally reaches \dval{dip}. 
A node alternates between broadcasting, and receiving and handling a message.
The \dval{data} stemming from a message received by node \dval{ip} will be delivered to
the application layer if the message is destined for \dval{ip} itself. Otherwise the node
forwards the message. 
Every message travelling through the network and handled by the protocol 
has the form $\mymsg{\dval{data}}{\dval{dip}}$, where $\dval{data}\in\tDATA$ is the data to be sent 
and $\dval{dip}\in\tIP$ is its destination.
The behaviour of each node can be modelled by:%
\newcommand{\XP}{\keyw{X}}%
\newcommand{\YP}{\keyw{Y}}%
\begin{simpleProcess}
	\item[$\XP(\ip\,\comma\,\data\comma\dip)$]\hspace{-\labelsep}\ $\stackrel{{\it def}}{=}
 	\textbf{broadcast}(\mymsg\data\dip).\YP(\ip)$
	\item[$\YP(\ip)$]\hspace{-\labelsep}
		$\stackrel{{\it def}}{=} \textbf{receive}(\keyw{m}).%
		([\keyw{m} \mathord= \mymsg\data\dip\wedge\dip\mathord=\ip] \textbf{deliver}(\data).\YP(\ip)$\\ 
		\hspace{7.25em}
		$+ [\keyw{m} \mathord= \mymsg\data\dip\wedge\dip\mathord{\not=}\ip] \XP(\ip\,\comma\,\data\comma\dip))$\ .
\end{simpleProcess}%
\vspace{2pt}
If a node is in a state $\XP(\dval{ip}\,\comma\,\dval{data}\comma\dval{dip})$, where $\dval{ip}\in\tIP$
is the node's stored value of its ow IP address,
it will broadcast $\mymsg{\dval{data}}{\dval{dip}}$ and continue in state $\YP(\dval{ip})$, 
meaning that all information about the message is dropped.
If a node in state $\YP(\dval{ip})$ receives a message $m$---a value that will be assigned to the variable
$\keyw{m}$---it has two ways to continue: process [$\keyw{m} \mathord= \mymsg\data\dip\wedge\dip\mathord=\ip$] \textbf{deliver}(\data).\YP(\ip) is enabled if 
the incoming message has the form $\mymsg{\dval{data}}{\dval{dip}}$
and the node itself is the destination of the
message ($\dip\mathord=\ip$). In
that case the data distilled from $m$ will be delivered to the application layer, and the process returns
to $\YP(\dval{ip})$. Alternatively, if [$\keyw{m} \mathord= \mymsg\data\dip\wedge\dip\mathord{\not=}\ip$], the process continues as
$\XP(\dval{ip}\,\comma\,\dval{data}\comma\dval{dip})$, which will then broadcast another
message with contents $\dval{data}$ and $\dval{dip}$.
Note that calls to processes use expressions as parameters.

Let us have a look at three scenarios.
First, assume that the nodes $\a$ and $\b$ are within transmission range of each other; node $\a$ in state
$\XP(\a\,\comma\,d\comma\a)$, and node $\b$ in $\YP(\b)$. This is formally expressed as
$[\colonact{\a}{\XP(\a\,\comma\,d\comma\a)}\mathop{:}\{\b\}\|\,\colonact{\b}{\YP(\b)}\mathop{:}\{\a\}]$,
although for compactness of presentation, we just write
$[\XP(\a\,\comma\,d\comma\a)\,\|\,\YP(\b)]$ below.
In this case, node $a$ broadcasts the message $\mymsg{d}{\a}$ and
continues as $\YP(\a)$. Node $\b$ receives the message, and continues (after evaluation of the message) as $\XP(\b\,\comma\,d\comma\a)$. Next $\b$
broadcasts (forwards) the message, and continues as $\YP(\b)$, while node $\a$ receives $\mymsg{d}{\b}$, and, due to evaluation, \textbf{deliver}s $d$ and continues as $\YP(\a)$.  Formally,
\index{transition}%
we get transitions from one state to the other:
\newcommand{\sm}[1]{\mbox{$\scriptstyle #1$}}
\[
[\XP(\a\,\comma\,d\comma\a)\,\|\,\YP(\b)]
	\ar{{\sm{\a}:\textbf{*cast}\sm{(\mymsg{{d}}{\a})}}}%
	\ar{\tau}
 [\YP(\a)\,\|\,\XP(\b\,\comma\,{d}\comma\a)]
	\ar{{\sm{\b}:\textbf{*cast}\sm{(\mymsg{{d}}{\a})}}}
	\ar{\tau}
	\ar{{\sm{\a}:\textbf{deliver}\sm{({d})}}} 
[\YP(\a)\,\|\,\YP(\b)].
\]
Here, the $\tau$-transitions are the actions of evaluating one of the two guards of a process $\YP$,
and we left out three intermediate expressions.

Second, assume that the nodes are not within transmission range,
with the initial process of $\a$ and $\b$ the same as above; formally
[$\colonact{\a}{\XP(\a\,\comma\,d\comma\a)}\mathop{:}\emptyset\,\|\,\colonact{\b}{\YP(\b)}\mathop{:}\emptyset$].
As before, node $a$ broadcasts $\mymsg{d}{\a}$ and continues in $\YP(\a)$; but this
time the message is not received by any node; hence
no message is forwarded or delivered and both nodes end up running process~$\YP$.

For the last scenario, we assume that $a$ and $b$ are
within transmission range and that
they have the initial states $\XP(\a\,\comma\,d\comma\b)$ and $\XP(\b\,\comma\,e\comma\a)$.
Without the augmentation of \SSect{non-blocking},
the network expression $[\XP(\a\,\comma\,d\comma\b)\,\|\,\XP(\b\,\comma\,e\comma\a)]$ admits no transitions at all;
neither node can broadcast its message, because the other node is not listening.
With the optional augmentation,
assuming that node $a$ sends first:
\[
[\XP(\a\,\comma\,d\comma\b)\,\|\,\XP(\b\,\comma\,e\comma\a)] 
	\ar{{\sm{\a}:\textbf{*cast}\sm{((\mymsg{d}{\b})}}} 
[\YP(a)\,\|\,\XP(\b\,\comma\,e\comma\a)]
	\ar{{\sm{\b}:\textbf{*cast}\sm{(\mymsg{e}{\a})}}}
	\ar{\tau}
\ar{{\sm{\a}:\textbf{deliver}\sm{(e)}}} [\YP(\a)\,\|\,\YP(\b)].
\]
Unfortunately, node $\b$ is initially in a state where it cannot receive a message,
so $\a$'s message ``remains unheard'' and $\b$ will never deliver that message.
To avoid this behaviour, and ensure that both messages get delivered,
as happens in real WMNs, a message queue can be
introduced (see \SSect{message_queue}). Using a message queue, 
the optional augmentation is not needed, since any node is always in a state where it can receive a message.

\section{Data Structure for AODV}\label{sec:types}
\index{data structure}%
\index{data types}%
In this section we set out the basic data structure needed for the
detailed formal specification of AODV.  As well as describing
\emph{types} for the information handled at the nodes during the
execution of the protocol we also define functions which will be used
to describe the precise intention---and overall effect---of the
various update mechanisms in an AODV implementation. The definitions
are grouped roughly according to the various ``aspects" of AODV and
the host network.

\subsection{Mandatory Types}\label{ssec:ip}
\index{application layer data}%
\index{messages}%
\index{IP addresses}%
As stated in the previous section, the data structure always consists
of application layer data, messages, IP addresses and sets of IP addresses.

\begin{enumerate}[(a)]
\item\label{tDATA} The ultimate purpose of AODV is to deliver
  \emph{application layer data}.  The type $\tDATA$ describes a
  set of application layer data items. An item of data is thus a particular element of that
  set, denoted by the variable $\data\in\tDATA$.
\item \emph{Messages} are used to send information via the network. In
  our specification we use the variable $\msg$ of the type $\tMSG$.
  \index{control message}%
  \index{route request (RREQ)}%
  \index{route reply (RREP)}%
  \index{route error (RERR)}%
  \index{data packet}%
  We distinguish AODV control messages (route request, route
  reply, and route error) as well as \emph{data packets}: messages for sending
  application layer data (see \SSect{messages}).
\item
  The type $\tIP$ describes a set of IP addresses or, more generally, a
  \index{node identifiers}%
  \emph{set of node identifiers}.  In the RFC 3561~\cite{rfc3561},
  $\tIP$ is defined as the set of all IP addresses.  
  We assume that
  each node has a unique identifier $\dval{ip}\in\tIP$.
  Moreover, in our  model, each node \dval{ip} maintains a variable {\ip}
  which always has the value \dval{ip}.
  In any AODV control message, the variable {\sip} holds the IP
  address of the sender, and if the message is part of the \phrase{route
  discovery process}---a route request or route reply message---we use
  {\oip} and {\dip} for the origin and destination of the route
  sought.  Furthermore, {\rip} denotes an unreachable destination and
  {\nhip} the next hop on some route.
\end{enumerate}

\subsection{Sequence Numbers}\label{ssec:sequence numbers}

As explained in \Sect{aodv}, any node maintains its own \phrase{sequence number}---
the value of the variable {\sn}---and
\index{routing table}%
a routing table whose entries describe routes to other nodes. The value of {\sn} increases over time.
AODV equips each routing table entry  with a
sequence number to constitute a measure approximating the
\index{fresh}%
relative freshness of the information held---a smaller number denotes
older information.  All sequence numbers of routes to
$\dval{dip}\in \tIP$ stored in routing tables are ultimately derived from
\dval{dip}'s own sequence number at the time such a route was discovered.

We denote the set of sequence numbers by $\tSQN$ and assume it to be
totally ordered.  By default we take $\tSQN$ to be $\NN$, and use
standard functions such as $\max$.  The initial sequence number of any
node is $1$.  We reserve a special element $0\in\tSQN$ to be used 
for the sequence number of a route, whose semantics is that no
sequence number for that route is known.  Sequence numbers are
incremented by the function\vspace{-1ex}
\index{inc@\keyw{inc}}%
\hypertarget{inc}{
\[\begin{array}{r@{\hspace{0.5em}}c@{\hspace{0.5em}}l}
\fninc:\tSQN&\to&\tSQN\\
\inc{\dval{sn}}&=&
\left\{\begin{array}{ll}
  \dval{sn}+1&\mbox{if } \dval{sn}\not=0\\
  \dval{sn}&\mbox{otherwise}\ .
\end{array}\right.
\end{array}\]
}
The variables $\osn$, $\dsn$ and $\rsn$ of type $\tSQN$ are used to denote the
sequence numbers of routes leading to the nodes $\oip$, $\dip$ and
$\rip$.

\index{sequence number!known}%
\index{sequence number!unknown}%
\index{sequence-number-status flag}%
AODV tags sequence numbers of routes as ``known'' or ``unknown''. This
indicates whether the value of the sequence number can be trusted. The
sequence-number-status flag is set to unknown (\unkno) when a routing table entry is
updated with information that is not equipped with a sequence number
itself.  In such a case the old sequence number of the entry is
maintained; hence the value {\unkno} does not indicate
that no sequence number for the entry is known.
Here we use the set $\tSQNK=\{\kno,\unkno\}$ for the possible values
of the sequence-number-status flag; we use the variable $\keyw{dsk}$
to range over type $\tSQNK$.

\subsection{Modelling Routes}

\index{connected}%
\index{transmission range}%
In a network, pairs $(\dval{ip}_0, \dval{ip}_k) \in \tIP \times \tIP$
of nodes are considered to be ``connected" if $\dval{ip}_0$ can send
to $\dval{ip}_k$ directly, i.e., $\dval{ip}_0$ is in transmission range of
$\dval{ip}_k$ and vice versa. We say that such nodes are connected by
a single \phrase{hop}. When $\dval{ip}_0$ is not connected to $\dval{ip}_k$
then messages from $\dval{ip}_0$ directed to $\dval{ip}_k$ need to be
``routed" through intermediate nodes. We say that a \phrase{route}
(from $\dval{ip}_0$ to $\dval{ip}_k$) is made up of a sequence
$[\dval{ip}_0,\dval{ip}_1,\dval{ip}_2,\dots,\dval{ip}_{k-1},\dval{ip}_k]$,
where $(\dval{ip}_{i}, \dval{ip}_{i+1})$, $i=0,\dots, k\mathord-1$, are
connected pairs; the \emph{length} or \phrase{hop count} of the route is
the number of single hops, and any node $\dval{ip}_i$ needs only to
\index{next hop}%
\index{destination}%
know the ``next hop" address $\dval{ip}_{i+1}$ in order to be able to
route messages intended for the final destination $\dval{ip}_{k}$.

In operation, routes to a particular destination are requested and,
when finally established, need to be re-evaluated in regard to their
\index{validity status}%
\index{route!invalid}%
``validity". Routes may become \emph{invalid} if one of the pairs
$(\dval{ip}_i, \dval{ip}_{i+1})$ in the hop-to-hop sequence gets
disconnected. Then AODV may be reinvoked, as the need arises, to
discover alternative routes.  Meanwhile, an invalid route remains
invalid until fresh information is received which establishes a valid
replacement route.

In addition to the next hop and hop count, AODV also ``tags" a route
with its validity, sequence number and sequence-number status.
For every route, a node moreover stores a list of \phrase{precursors},
modelled as a set of $\tIP$ addresses. This set collects all nodes
which are currently potential users of the route, and are located one
hop further ``upstream''.  When the interest of other
nodes emerges, these nodes are added to the precursor
list\footnote{The RFC does not mention a situation where nodes are
dropped from the list, which seems curious.};  the main purpose of
recording this information is to inform those nodes when the route
becomes invalid.

In summary, following the RFC, a routing table entry (or entry for short) is given by $7$ components:

\begin{enumerate}[(a)]
\item The destination IP address, which is an element of $\tIP$;
\index{destination}%
\item The destination sequence number---an element of $\tSQN$;
\index{destination sequence number}%
\item The sequence-number-status flag---an element of the set $\tSQNK=\{\kno,\unkno\}$;
\index{sequence-number-status flag}%
\item A flag tagging the route as being valid or invalid---an element
  of the set $\tFLAG= \{\val,\inval\}$. We use the variable $\flag$
  to range over type $\tFLAG$;
\index{validity status}%
\index{route!valid}%
\index{route!invalid}%
\item The hop count, which is an element of $\NN$.  The variable
  $\hops$ ranges over the type $\NN$ and we make use of the standard
  function $+1$;
\index{hop count}%
\item The next hop, which is again an element of $\tIP$; and
\index{next hop}%
\item A precursor list, which is modelled as an element of $\pow(\tIP)$.%
\index{precursor list}%
\footnote{The word ``precursor list'' is used in the RFC, but no properties of lists are used.}
  We use the variable $\pre$ to range over $\pow(\tIP)$.
\end{enumerate}
We denote the type of routing table entries by $\tROUTE$, use
the variable \keyw{r}, and define a generation function
\[\begin{array}{r@{\hspace{0.5em}}c@{\hspace{0.5em}}l}	
 (\_\comma\_\comma\_\comma\_\comma\_\comma\_\comma\_\,): \tIP \times \tSQN \times\tSQNK \times \tFLAG
\times \NN \times \tIP \times \pow(\tIP) &\rightarrow& \tROUTE\ .
 \end{array}\]
A tuple $(\dval{dip}\comma\dval{dsn}\comma\dval{dsk}\comma\dval{flag}\comma\dval{hops}\comma\dval{nhip}\comma\dval{pre})$
\index{route}%
describes a route to $\dval{dip}$ of length $\dval{hops}$ and validity
$\dval{flag}$; the very next node on this route is $\dval{nhip}$; the
last time the entry was updated the destination sequence number was
$\dval{dsn}$; \dval{dsk} denotes whether the sequence number
is ``outdated'' or can be used to reason about freshness of the route.
Finally, $\dval{pre}$ is a set of all neighbours who are
``interested'' in the route to $\dval{dip}$.  A node being
``interested" in the route is somewhat sketchily defined as one which
has previously used the current node to route messages to
$\dval{dip}$. Interested nodes are recorded in case the route to
$\dval{dip}$ should ever become invalid, so that they may subsequently
be informed.  We use projections $\pi_{1},\dots\pi_{7}$ to select the
corresponding component from the $7$-tuple: For example,
$\pi_6:\tROUTE\to\tIP$ determines the next hop.

\subsection{Routing Tables}\label{ssec:rt}

{
\renewcommand{\ip}{\dval{ip}}
\renewcommand{\dip}{\dval{dip}}
\renewcommand{\oip}{\dval{oip}}
\renewcommand{\sip}{\dval{sip}}
\renewcommand{\rip}{\dval{rip}}
\renewcommand{\rt}{\dval{rt}}
  \newcommand{\nrt}{\dval{nrt}}
\renewcommand{\route}{\dval{r}}
  \newcommand{\s}{\dval{s}}
  \newcommand{\nr}{\dval{nr}}
  \newcommand{\ns}{\dval{ns}}
\renewcommand{\osn}{\dval{osn}}
\renewcommand{\dsn}{\dval{dsn}}
\renewcommand{\rsn}{\dval{rsn}}
\renewcommand{\flag}{\dval{flag}}
\renewcommand{\hops}{\dval{hops}}
\renewcommand{\nhip}{\dval{nhip}}
\renewcommand{\pre}{\dval{pre}}
  \newcommand{\npre}{\dval{npre}}
\renewcommand{\dests}{\dval{dests}}
\renewcommand{\rreqid}{\dval{rreqid}}
\renewcommand{\rreqs}{\dval{rreqs}}

\index{routing table}%
\index{routing table entry}%
Nodes store all their information about routes in their \emph{routing
tables}; a node \dval{ip}'s routing table consists of a set of routing
table entries, exactly one for each known destination.  Thus, a
routing table is defined as a set of entries, with the restriction
that each has a different destination $\dip$, i.e., the first
component of each entry in a routing table is unique.\footnote{As an
alternative to restricting the set, we could have defined routing
tables as partial functions from $\tIP$ to $\tROUTE$, in which case it
makes more sense to define an entry as a $6$-tuple, not including the
the destination IP as the first component.}  Formally, we define the
type $\tRT$ of routing tables by
\[\begin{array}{r@{\hspace{0.5em}}c@{\hspace{0.5em}}l}	
\tRT &:=& \{\rt\mid \rt\in\pow(\tROUTE)\ans\forall \route,\s\in\rt:\route\not=\s\Rightarrow\pi_{1}(\route)\not=\pi_{1}(\s)\}\ .
\end{array}\]
In the specification and implementation of AODV during route finding
nodes choose between alternative routes if necessary to ensure
that only one route per destination ends up in their routing table.
In our model, each node \dval{ip} maintains a variable \keyw{rt},
whose value is the current routing table of the node. 

In the formal model (and indeed in any AODV implementation) we need to
extract the components of the entry for any given destination from a
routing table. To this end, we define the following partial
functions---they are partial because the routing table need not have
an entry for the given destination.
We begin by selecting the entry in a routing table corresponding to a given destination $\dip$:
\hypertarget{selroute}{
\[\begin{array}{@{}r@{\hspace{0.5em}}c@{\hspace{0.5em}}l@{}}
		\fnselroute:\tRT\times\tIP&\rightharpoonup&\tROUTE\\
		\selr{\rt}{\dip}&:=&
		\left\{
		  \begin{array}{ll}
		   \route&\mbox{if } \route\in\rt\ans \pi_{1}(\route)=\dip\\
		   \mbox{undefined}&\mbox{otherwise}\ .
		\end{array}\right.\!\!\!\!
\end{array}\]
}
Through
the projections $\pi_{1},\dots,\pi_{7}$, defined above, we can now
select the components of a selected entry:
\begin{enumerate}[(a)]	
\item The \phrase{destination sequence number} relative to the destination $\dip$:
\index{sqn@\keyw{sqn}}%
\hypertarget{sqn}{
\[\begin{array}{r@{\hspace{0.5em}}c@{\hspace{0.5em}}l}
		    	  \fnsqn : \tRT\times\tIP&\to& \tSQN\\
	    \sqn{\rt}{\dip}&:=&
	    \left\{
	    \begin{array}{ll}
	        \pi_{2}(\selr{\rt}{\dip}) &\mbox{if }\selr{\rt}{\dip}\mbox{ is defined}\\
	        0&\mbox{otherwise}
	    \end{array}\right.
	     \end{array}\]
}
\item The \emph{``known'' status} of the sequence number of a route:
\index{sequence-number-status flag}%
\index{sqnf@\keyw{sqnf}}%
\hypertarget{sqnf}{
\[\begin{array}{r@{\hspace{0.5em}}c@{\hspace{0.5em}}l}
		    	  \fnsqnf : \tRT\times\tIP&\to& \tSQNK\\
	    \sqnf{\rt}{\dip}&:=&
	    \left\{
	    \begin{array}{ll}
	        \pi_{3}(\selr{\rt}{\dip}) &\mbox{if }\selr{\rt}{\dip}\mbox{ is defined}\\
	        \unkno&\mbox{otherwise}
	    \end{array}\right.
	     \end{array}\]
}
\item The \phrase{validity status} of a recorded route:
\index{flag@\keyw{flag}}%
\hypertarget{status}{
\[\begin{array}{r@{\hspace{0.5em}}c@{\hspace{0.5em}}l}
              \fnstatus :\ \tRT\times\tIP&\rightharpoonup& \tFLAG\\
	     \status{\rt}{\dip}&:=& \pi_{4}(\selr{\rt}{\dip})
	\end{array}
\]	
}
\item The \phrase{hop count} of the route from the current node (hosting $\rt$) to $\dip$:
\index{dhops@\keyw{dhops}}%
\hypertarget{dhops}{
\[\begin{array}{r@{\hspace{0.5em}}c@{\hspace{0.5em}}l}
	       \fndhops :\tRT\times\tIP&\rightharpoonup& \NN\\
	     \dhops{\rt}{\dip}&:=&  \pi_{5}(\selr{\rt}{\dip})\\[-2ex]
  \end{array}\]
}\pagebreak[4]
\item The \emph{identity of the next node on the route to} $\dip$ (if such a route is known):
\index{next hop}%
\index{nhop@\keyw{nhop}}%
\hypertarget{nhop}{
\[\begin{array}{r@{\hspace{0.5em}}c@{\hspace{0.5em}}l}
 	\fnnhop :\tRT\times\tIP&\rightharpoonup& \tIP\\
         \nhop{\rt}{\dip}&:=& \pi_{6}(\selr{\rt}{\dip})
\end{array}\]
}
\item The set of \phrase{precursors} or neighbours interested in
  using the route from $\ip$ to $\dip$:
\index{precs@\keyw{precs}}%
\hypertarget{precs}{
\[\begin{array}{r@{\hspace{0.5em}}c@{\hspace{0.5em}}l}
	    \fnprecs : \tRT\times\tIP&\rightharpoonup&\pow(\tIP)\\
	     \precs{\rt}{\dip}&:=& \pi_{7}(\selr{\rt}{\dip})\\
  \end{array}\]
}
\end{enumerate}
The domain of these partial functions changes during the operation of
AODV as more routes are discovered and recorded in the routing table
$\rt$.  The first two functions are extended to be total functions:
whenever there is no route to $\dip$ inside the routing table under
consideration, the sequence number is set to ``unknown''  $(0)$ and 
\index{sequence number!unknown}%
\index{sequence-number-status flag}%
the sequence-number-status flag is set to ``unknown'' $(\unkno)$, respectively. In the same
style each partial function could be turned into a total one. However,
in the specification we use these functions only when they are defined.

We are not only interested in information about a single route, but
also in general information on a routing table:
\begin{enumerate}[(a)]	
\item The set of destination IP addresses for \emph{valid} routes in $\rt$ is given by
\index{route!valid}%
\index{vD@\keyw{vD}}%
\[\begin{array}{r@{\hspace{0.5em}}c@{\hspace{0.5em}}l}
	      \fnakD :\tRT&\to& \pow(\tIP)\\
	    \akD{\rt}&:=& \{\dip\mid (\dip\comma*\comma*\comma\val\comma*\comma*\comma*)\in\rt\}
\end{array} \]
	
\item The set of destination IP addresses for \emph{invalid} routes in $\rt$ is
\index{route!invalid}%
\index{iD@\keyw{iD}}%
\[\begin{array}{r@{\hspace{0.5em}}c@{\hspace{0.5em}}l}    	
	      \fnikD :\tRT&\to& \pow(\tIP)\\
	      \ikD{\rt}&:=& \{\dip\mid (\dip\comma*\comma*\comma\inval\comma*\comma*\comma*)\in\rt\}
  \end{array} \]

\item Last, we define the set of destination IP addresses for \emph{known} routes by
\index{route!known}%
\index{kD@\keyw{kD}}%
\[\begin{array}{r@{\hspace{0.5em}}c@{\hspace{0.5em}}l}    	
	      \fnkD :\tRT&\to& \pow(\tIP)\\
	      \kD{\rt}&:=&\akD{\rt}\cup\ikD{\rt} = \{\dip\mid (\dip\comma*\comma*\comma*\comma*\comma*\comma*)\in\rt\}
\end{array}\]
\end{enumerate}
Obviously, the partial functions $\fnselroute$, {\fnstatus},
{\fndhops}, {\fnnhop} and {\fnprecs} are defined for {\rt} and {\dip}
exactly when $\dip \in \kD{\rt}$.

\subsection{Updating Routing Tables}

Routing tables can be updated for three principal reasons. The first
is when a node needs to adjust its list of precursors relative to a
given destination; the second is when a received request or response
carries information about network connectivity; and the last when
information is received to the effect that a previously valid route
should now be considered invalid. We define an update function for
each case.

\subsubsection{Updating Precursor Lists}

\index{precursors}%
\index{addpre@\keyw{addpre}}%
Recall that the precursors of a given node $\ip$ relative to a
particular destination $\dip$ are the nodes that are ``interested" in a route to {\dip}
via {\ip}.  The function $\fnaddprec$ takes a routing table entry and a set of IP addresses
$\npre$ and updates the entry by adding $\npre$ to the list of precursors
already present:
\[\begin{array}{@{}r@{\hspace{0.5em}}c@{\hspace{0.5em}}l@{}}
\fnaddprec : \tROUTE\times \pow(\tIP) &\to& \tROUTE\\
\addprec{(\dip\comma\dsn\comma\dval{dsk}\comma\flag\comma\hops\comma\nhip\comma\pre)}{\npre} &:=& (\dip\comma\dsn\comma\dval{dsk}\comma\flag\comma\hops\comma\nhip\comma\pre\cup\npre)\ .
\end{array}\]

Often it is necessary to add  precursors to an entry of a given
routing table. For that, we define the function $\fnaddprecrt$, which takes a routing table \rt, a destination {\dip} and a set of IP addresses
$\npre$ and updates the entry with destination $\dip$ by adding $\npre$ to the list of precursors
already present. It is only 
defined if an entry for destination $\dip$ exists.
\index{addpreRT@\keyw{addpreRT}}%
\hypertarget{addprert}{
\[\begin{array}{@{}r@{\hspace{0.5em}}c@{\hspace{0.5em}}l@{}}
\fnaddprecrt : \tRT\times \tIP\times \pow(\tIP) &\rightharpoonup& \tRT\\
\addprecrt{\rt}{\dip}{\npre} &:=& (\rt - \{\selr{\rt}{\dip}\})\cup\{\addprec{\selr{\rt}{\dip}}{\npre}\}\ .
\end{array}
\]
}
Formally, we remove the entry with destination $\dip$ from the routing table and 
insert a new entry for that destination. This new entry is the same as before---only the precursors have been added.

\subsubsection{Inserting New Information in Routing Tables}\label{sssec:update}

\index{update@\keyw{update}}%
If a node gathers new information about a route to a
destination \dip, then it updates its routing table depending on
its existing information on a route to \dip.
If no route to {\dip} was known at all, it inserts a new entry
in its routing table  recording the information received.
If it already has some (partial) information then it may update
this information, depending on whether the new route is fresher or
shorter than the one it has already.
We define an update function $\upd{\rt}{\route}$ of a routing table 
$\rt$ with an entry $\route$ only when
 $\route$ is valid, i.e., $\pi_{4}(\route)=\val$,
$\pi_{2}(\route)=0\Leftrightarrow\pi_{3}(\route)=\unkno$,
and $\pi_{3}(\dval{r})=\unkno\ims\pi_{5}(\dval{r})=1$.\footnote{
After we have introduced our specification for AODV in Section~\ref{sec:modelling_AODV},
we will justify that this definition is sufficient.}\vspace{-1ex}
\hypertarget{update}{
\[\begin{array}{@{}l@{\hspace{0.5em}}c@{\hspace{0.5em}}l@{}}
\multicolumn{3}{l}{\fnupd : \tRT\times\tROUTE\ \ \rightharpoonup\ \ \tRT}\label{df:update}\\
\upd{\rt}{\hspace{-1pt}\route}&:=& \left\{
\begin{array}{@{\,}ll@{}}
\rt\cup\{\route\} & \mbox{if }  \pi_{1}(\route)\not\in\kD{\rt}\\[1mm]
\nrt\cup\{\nr\}&\mbox{if }  \pi_{1}(\route)\in\kD{\rt} \wedge  \sqn{\rt}{\pi_{1}(\route)}<\pi_{2}(\route)\\[1mm]
\nrt\cup\{\nr\}&\mbox{if }  \pi_{1}(\route)\in\kD{\rt} \wedge \sqn{\rt}{\pi_{1}(\route)}=\pi_{2}(\route) \wedge \dhops{\rt}{\pi_{1}(\route)}>\pi_{5}(\route)\\[1mm]
\nrt\cup\{\nr\}&\mbox{if }  \pi_{1}(\route)\in\kD{\rt} \wedge \sqn{\rt}{\pi_{1}(\route)}=\pi_{2}(\route) \wedge \status{\rt}{\pi_{1}(\route)}=\inval\\[1mm]
\nrt\cup\{\nr'\}&\mbox{if } \pi_{1}(\route)\in\kD{\rt} \wedge  \pi_3(\route)=\unkno\\[1mm]
\nrt\cup\{\ns\}&\mbox{otherwise\ ,}
\end{array}
\right.\vspace{-1ex}
\end{array}\]
}
where $\s:=\selr{\rt}{\pi_{1}(\route)}$ is the current entry in the
routing table for the destination of $\route$ (if it exists), and
$\nrt := \rt -\{\s\}$ is the routing table without that entry.
The entry $\nr:=\addprec{\route}{\pi_{7}(\s)}$ is identical to~$\route$ except
that the precursors from $\s$ are added and $\ns:=\addprec{\s}{\pi_{7}(\route)}$
is generated from $\s$ by adding the precursors from $\route$. 
Lastly, 
$\nr'$ is identical to $\nr$ except that the sequence number is replaced by the one from 
the route $s$. More precisely,
$\nr':=(\dip_{\nr}\comma\pi_{2}(\s)\comma\dval{dsk}_{\nr}\comma\flag_{\nr}\comma\hops_{\nr}\comma\nhip_{\nr}\comma\pre_{\nr})$ if
$\nr=(\dip_{\nr}\comma*\comma\dval{dsk}_{\nr}\comma\flag_{\nr}\comma\hops_{\nr}\comma\nhip_{\nr}\comma\pre_{\nr})$.
In the situation where $\sqn{\rt}{\pi_{1}(\route)}=\pi_{2}(\route)$ both routes $\nr$ and $\nr'$ are equal.
Therefore, though the cases of the above definition are not
mutually exclusive, the function is well defined.

The first case describes the situation where the routing table does not contain any
information on a route to $\dip$. 
The second case models the situation where the new route has a greater
sequence number. As a consequence all the information from the incoming information 
is copied into the routing table. In the third and fourth case the sequence numbers 
are the same and cannot be used to identify better information. 
Hence other measures are used. The route inside the routing table is only replaced if 
either the new hop count is strictly smaller---a shorter route has been found---or if the route inside 
the routing table is marked as invalid.  The fifth case deals with the situation where a
new route to a known destination has been found without any
  information on its sequence number ($\pi_{2}(\route)=0\wedge\pi_{3}(\route)=\unkno$).
In that case the routing table entry to that destination is always
updated, but the existing sequence number is maintained, and marked as ``unknown''.

Note that we do not update if we receive a new entry where the sequence number and
the hop count are identical to the current entry in the routing table. Following the
RFC, the time period (till the valid route becomes invalid) should be reset; however
at the moment we do not model timing aspects.

\subsubsection{Invalidating Routes}\label{sssec:invalidate}

\index{invalidate@\keyw{invalidate}}%
Invalidating routes is a main feature of AODV; if a route is not valid
any longer its validity flag has to be set to invalid. By doing
this, the stored information about the route, such as the sequence number or the hop count,
remains accessible. The process of invalidating a routing table entry follows four rules:
(a) any sequence number is incremented by $1$, except
(b) the truly unknown sequence number ($\dval{sqn}=0$, which will only
occur if $\dval{dsk}=\unkno$) is not incremented,
(c) the validity flag of the entry is set to \inval, and
(d) an invalid entry cannot be invalidated again.
However, in exception to (a) and (b), when the invalidation is in
response to an error message, this message also contains a new (and
already incremented) sequence number for each destination to be invalidated.

The function for invalidating routing table entries takes as arguments a routing
table and a set of destinations $\dests\in\pow(\tIP\times\tSQN)$.
Elements of this set are $(\rip,\rsn)$-pairs that not only identify an
unreachable destination $\rip$, but also a sequence number that
describes the freshness of the faulty route.  As for routing tables,
we restrict ourselves to sets that have at most one entry for each
destination; this time we formally define {\dests} as a \emph{partial
function} from $\tIP$ to $\tSQN$, i.e.\ a subset of $\tIP\times\tSQN$
satisfying 
\[(\rip,\rsn),(\rip,\rsn')\in \dests \Rightarrow \rsn=\rsn'\ .
\]
We use the variable \keyw{dests} to range over such sets.
When invoking {\fninv} we either distil {\dests} from an error
message, or determine {\dests} as a set of pairs $(\rip,\inc{\sqn{\rt}{\rip}}$,
where the operator {\fninc} (from \SSect{sequence numbers}) takes care of (a) and (b).
Moreover, we will distil or construct {\dests} in such a way that it only lists
destinations for which there is a valid entry in the routing table---this takes care
of (d).
\hypertarget{invalidate}{
\[\begin{array}{@{}r@{\ \ }c@{\ \ }l@{}}
\multicolumn{3}{l}{\fninv : \tRT\times(\tIP\rightharpoonup\tSQN) \to\tRT}\\
\inv{\rt}{\dests}&:=& \{\route\,|\,\route\in\rt\ans (\pi_{1}(\route),*)\not\in\dests\}\\
&\cup&\{(\pi_{1}(\route),\highlight\rsn,\pi_{3}(\route),\inval,\pi_{5}(\route),\pi_{6}(\route),\pi_{7}(\route))\mid\route\in\rt\ans(\pi_{1}(r),\rsn)\in\dests\}
\end{array}\]
}

\noindent
All entries in the routing table for a destination $\rip$ in $\dests$
are modified.
The modification replaces the value {\val} by {\inval} and
the sequence number in the entry by the corresponding sequence number from $\dests$.

Copying the sequence number from $\dests$ leaves the possibility
that the destination sequence number of an entry is decreased, which violates one
of the fundamental assumption of AODV and may yield unexpected behaviour (cf.~\SSect{decreasingSQN}).
To guarantee an increase of the sequence number, \highlight{\rsn} in Line $3$ of the above
definition could be replaced by taking the maximum of  the sequence number
that was already in the routing table $+1$, and the sequence number from
$\dests$, i.e., $\highlight{\max(\inc{\pi_2(r)},\rsn)}$. 

\subsection{Route Requests}\label{ssec:rreqs}

\index{route request (RREQ)}%
A route request---RREQ---for a destination $\dip$ is initiated by a
node (with routing table $\rt$) if this node wants to transmit a data
packet to $\dip$ but there is no valid entry for $\dip$ in the routing
table, i.e.\ $\dip \mathbin{\not\in} \akD{\rt}$. When a new route request is sent
out it contains the identity of the originating node $\oip$, and a
\phrase{route request identifier} (RREQ ID); the type of all such identifiers
 is denoted by $\tRREQID$, and the variable \keyw{rreqid}
ranges over this type. This information does not change, even when
the request is re-broadcast by any receiving node that does not
already know a route to the requested destination. In this way any
request still circulating through the network can be uniquely
identified by the pair $(\oip, \rreqid)\in\tIP\times\tRREQID$. For our
specification we set $\tRREQID=\NN$. In our model, each node
maintains a variable \keyw{rreqs} of type
\[
		\pow(\tIP\times\tRREQID)
\]
of sets of such pairs to store the sets of route requests seen
by the node so far. Within this set, the node records the requests
it has previously initiated itself.
\index{nrreqid@\keyw{nrreqid}}%
\hypertarget{nrreqid}{To ensure a fresh {\rreqid} for each new RREQ it generates,
the node {\ip} applies the following function:
\[\begin{array}{r@{\hspace{0.5em}}c@{\hspace{0.5em}}l}    	
	     	  \fnnrreqid:\pow(\tIP\times\tRREQID)\times\tIP&\to& \tRREQID\\
		\nrreqid{\rreqs}{\ip}&:=&\max\{n\mid(\ip,n)\in\rreqs\}+1\ ,
\end{array}\]
where we take the maximum of the empty set to be $0$.}

\subsection{Queued Packets}\label{ssec:store}
Strictly speaking the task of sending data packets is not regarded as
part of the AODV protocol---however, failure to send a packet because
either a route to the destination is unknown, or a previously known
route has become invalid, prompts AODV to be activated. In our
modelling we describe this interaction between packet sending and
AODV, providing the minimal infrastructure for our specification.

If a new packet is submitted by a client of AODV to a node, it may
need to be stored until a route to the packet's destination has been
found and the node is not busy carrying out other AODV tasks. We use a
queue-style data structure for modelling the store of packets at a
node, noting that at each node there may be many data queues, one for
each destination. In general, we denote queues of type $\tTYPE$ by
$[\tTYPE]$, denote the empty queue by $[\,]$, and make use of the
standard (partial) functions
\index{head@\keyw{head}}%
\index{tail@\keyw{tail}}%
\index{append@\keyw{append}}%
$\hypertarget{head}{\fnhead}:[\tTYPE]\rightharpoonup\tTYPE$,
$\fntail:[\tTYPE]\rightharpoonup[\tTYPE]$ and
$\fnappend:\tTYPE\times[\tTYPE]\rightarrow[\tTYPE]$ that return the
``oldest'' element in the queue, remove the ``oldest'' element, and
add a packet to the queue, respectively. 

The data type\vspace{-1ex}
\[\begin{array}{r@{\hspace{0.5em}}c@{\hspace{0.5em}}l}	
\tQUEUES &:=& \left\{\dval{store}\,\left|
\begin{array}{@{~}l@{}}\dval{store}\in\pow(\tIP\times\tPendingRREQ\times[\tDATA])\ans\mbox{}\\
\big((\dip,p,q),(\dip,p',q')\in\dval{store} \Rightarrow p=p' \wedge q=q'\big)
\end{array}\right.\right\}
\end{array}\]
\index{queued data}%
describes stores of enqueued data
packets for various destinations, where $\tPendingRREQ:=\{\pen,\nonpen\}$. An element $(\dip, p, q) \in \tIP\times\tPendingRREQ\times[\tDATA]$
\index{request-required flag}%
denotes the queue $q$ of packets destined for $\dip$; the \penFlag $p$ is 
{\nonpen} if a new route discovery process for $\dip$ still needs to be initiated, i.e., a route request message needs to be sent.
The value {\pen} indicates that such a RREQ message has
been sent already, and either the reply is still pending or a route to $\dip$ has been established.
The flag is set to {\nonpen} when a routing table entry is invalidated. 

As for routing tables, we require that there is at most one entry for
every IP address.
In our model, each node maintains a variable {\queues} of type
{\tQUEUES} to record its current store of data packets.

\renewcommand{\queues}{\dval{store}} 
\renewcommand{\data}{\dval{d}}

We define some functions for inspecting a store:
\begin{enumerate}[(a)]
\item Similar to $\fnselroute$, we need a function that is able to extract the queue for a given destination.
\[\begin{array}{r@{\hspace{0.5em}}c@{\hspace{0.5em}}l}
\fnselqueue:\tQUEUES\times\tIP &\to&  [\tDATA]\\
\selq{\queues}{\dip}&:=&\left\{
\begin{array}{ll}
  q&\mbox{if } (\dip,*,q)\in\queues\\
  {[\,]}&\mbox{otherwise}
\end{array}\right.
\end{array}\]
\item We define a function $\fnqD$ to extract the destinations for which there are unsent packets:
\index{qD@\keyw{qD}}%
\[\begin{array}{r@{\hspace{0.5em}}c@{\hspace{0.5em}}l}
\fnqD:\tQUEUES&\to&\pow(\tIP)\\
	\qD{\queues}&:=&
	\{\dip\mid(\dip,*,*)\in\queues\}\ .
\end{array}\]
\end{enumerate}
Next, we define operations for adding and removing data packets from a store.
\begin{enumerate}[(a)]
\setcounter{enumi}{2}
\item \hypertarget{add}{
Adding a data packet for a particular destination to a store is defined by:
\index{add@\keyw{add}}%
\[\begin{array}{r@{\hspace{0.5em}}c@{\hspace{0.5em}}l}
\multicolumn{3}{l}{\fnadd:\tDATA\times\tIP\times\tQUEUES \to\tQUEUES}\\
\add{\data}{\dip}{\queues}&:=&\left\{
\begin{array}{ll}
  \queues\cup\{(\dip\comma\nonpen\comma\append{\data}{[\,]})\}&\mbox{if } (\dval{dip},*,*)\notin\queues\\
  \queues-\{(\dip,p,q)\}\\
  \phantom{\queues}\cup\{(\dip\comma p\comma\append{\data}{q})\}&\mbox{if } (\dval{dip},p,q)\in\queues\ .
\end{array}\right.
\end{array}\]}
Informally, the process selects 
the entry $(\dip\comma p\comma q)\in\queues\in\tQUEUES$,
where $\dip$ is the destination of the application layer data
$\data$,
and appends {\data} to the queue $q$ of {\dip} in that triple;
the \penFlag $p$ remains unchanged.
In case there is no entry for {\dip} in \queues,
the process creates a new queue $[\data]$ of stored packets that only contains the
data packet under consideration and inserts it---together with $\dip$---into the store;
the \penFlag is set to \nonpen, since a route request needs to be sent.

\item \hypertarget{drop}{
To delete the oldest packet for a particular destination from a store , we define:
\hypertarget{drop}{
\index{drop@\keyw{drop}}%
\[\begin{array}{r@{\hspace{0.5em}}c@{\hspace{0.5em}}l}
\fndrop:\tIP\times\tQUEUES &\rightharpoonup&\tQUEUES\\
\drop{\dip}{\queues}&:=&\left\{
\begin{array}{ll}
  \queues-\{(\dip,*,q)\}&\mbox{if } \tail{q}=[\,]\\
  \queues-\{(\dip,p,q)\}\\
  \phantom{\queues}\;\!\cup\{(\dip,p,\tail{q})\}&\mbox{otherwise\ ,}
\end{array}\right.
\end{array}\]
}
\noindent where $q=\selq{\queues}{\dip}$ is the selected queue for destination $\dip$.
If $\dip\not\in\qD{\queues}$ then $q={[\,]}$. Therefore $\tail{q}$ and hence also $\drop{\dip}{\queues}$ is undefined.
Note that if $\data$ is the last queued packet for a specific
destination, the whole entry for the destination is removed from $\queues$.}
\end{enumerate}
\noindent In our model of AODV we use only {\fnadd} and {\fndrop} to update a store. 
This ensures that the store will never contain a triple $(\dip,*,[\,])$ with an empty data queue, i.e.,
\begin{equation}\label{non-empty}
\dip\in \qD{\queues} \Rightarrow \selq{\queues}{\dip}\not={[\,]}\ .
\end{equation}
Finally, we define operations for reading and manipulating the \penFlag of a queue.
\begin{enumerate}[(a)]
\setcounter{enumi}{4}
\item We define a partial function $\fnfD$ to extract the flag for a destination for which there are unsent packets:
\hypertarget{qflag}{
\[\begin{array}{r@{\hspace{0.5em}}c@{\hspace{0.5em}}l}
\fnfD:\tQUEUES\times\tIP&\rightharpoonup&\tPendingRREQ\\
\fD{\queues}{\dip}&:=&\left\{
\begin{array}{ll}
  p&\mbox{if } (\dip,p,*)\in\queues\\
  \mbox{undefined}&\mbox{otherwise}\ .
\end{array}\right.
\end{array}\]}

\item We define functions $\fnsetrrf$ and $\fnunsetrrf$ to change the \penFlag.
After a route request for destination $\dip$ has been initiated, the \penFlag for $\dip$ has to be set to $\pen$.
\hypertarget{setrrf}{
\index{unsetRRF@\keyw{unsetRRF}}%
\[\begin{array}{@{}r@{\hspace{0.5em}}c@{\hspace{0.5em}}l@{}}
\fnunsetrrf:\tQUEUES\times\tIP&\to&\tQUEUES\\
\unsetrrf{\queues}{\dip}&:=&\left\{
\begin{array}{ll}
  \queues-\{(\dip,*,q)\}\cup\{(\dip,\pen,q)\}&\mbox{if } \{(\dip,*,q)\}\in\queues\\
  \queues&\mbox{otherwise\ .}
\end{array}\right.
\end{array}\]}
In case that 
there is no queued data for destination {\dip}, the {\queues} remains unchanged.

Whenever a route is invalidated the corresponding \penFlag has to be set to \nonpen;
this indicates that the protocol might need to initiate a new route
discovery process.
Since the function \hyperlink{invalidate}{$\fninv$} invalidates sets of routing table entries, we 
define a function with a set of destinations
$\dests\in\pow(\tIP\times\tSQN)$ as one of its arguments (annotated with sequence numbers, which are not used here).
\hypertarget{setrrf}{
\index{setRRF@\keyw{setRRF}}%
\[\begin{array}{r@{\hspace{0.5em}}c@{\hspace{0.5em}}l}
\fnsetrrf:\tQUEUES\times(\tIP\rightharpoonup\tSQN)&\to&\tQUEUES\\
\setrrf{\queues}{\dests}&:=&
  \{(\dip,p,q) \mid (\dip,p,q) \in \queues \wedge (\dip,*) \notin\dests\}\\
  &\cup&\{(\dip,\nonpen,q) \mid (\dip,p,q) \in \queues \wedge (\dip,*) \in\dests\}\;.
\end{array}\]}

\end{enumerate}

\subsection{Messages and Message Queues}\label{ssec:messages}
\index{messages}%
Messages are the main ingredient of any routing protocol.
The message types used in the AODV protocol are route request,
route reply, and route error. To generate theses
messages, we use functions
\vspace{-0.6ex}
\[\begin{array}{l}
\rreqID:\NN \times \tRREQID \times \tIP \times \tSQN \times \tSQNK\times \tIP \times \tSQN \times \tIP \rightarrow \tMSG\\
\rrepID:\NN \times \tIP \times \tSQN \times \tIP \times \tIP \rightarrow \tMSG\\
\rerrID:(\tIP\rightharpoonup\tSQN) \times \tIP \rightarrow \tMSG\ .\footnotemark
\end{array}\vspace{-0.6ex}\]
\index{RREQ message}%
\index{rreq@\keyw{rreq}}%
\index{RREP message}%
\index{rrep@\keyw{rrep}}%
\index{RERR message}%
\index{rerr@\keyw{rerr}}%
\footnotetext{The ordering of the arguments follows the RFC.}%
The function $\rreq{\hops}{\rreqid}{\dip}{\dsn}{\dval{dsk}}{\oip}{\osn}{\sip}$
generates a route request. Here, $\hops$ indicates the hop count from
the originator $\oip$---that, at the time of sending, had the sequence
number $\osn$---to the sender of the message $\sip$; $\rreqid$
uniquely identifies the route request;  $\dsn$ is the least level
of freshness of a route to dip that is acceptable to \oip---it has been
obtained by incrementing the latest sequence number received in the
past by {\oip} for a route towards $\dip$; and \dval{dsk}
  indicates whether we can trust that number. In case no 
sequence number is known, $\dsn$ is set to $0$ and $\dval{dsk}$ to $\unkno$.
By $\rrep{\hops}{\dip}{\dsn}{\oip}{\sip}$ a route reply message is obtained.
Originally, it was generated by
 $\dip$---where $\dsn$ denotes the
sequence number of $\dip$ at the time of sending---and  is destined
for $\oip$; the last sender of the message was the node with IP
address $\sip$ and the distance between $\dip$ and $\sip$ is given by $\hops$.
The error message is generated by $\rerr{\dests}{\sip}$, where
$\dests:\tIP\rightharpoonup\tSQN$ is the list of unreachable
destinations and $\sip$ denotes the sender.
Every unreachable destination $\rip$ comes together with the incremented last-known sequence number $\rsn$.

Next to these AODV control messages, we use for our specification
also data packets: messages that carry application layer data.
\index{newpkt@$\newpktID$}%
\index{pkt@\keyw{pkt}}%
\[\begin{array}{l}
\newpktID:\tDATA \times \tIP \rightarrow \tMSG\\
\pktID:\tDATA \times \tIP \times \tIP \rightarrow \tMSG
\end{array}\]
Although these messages are not part of the protocol itself, they are
necessary to initiate error messages, and to trigger the route discovery process.
$\newpkt{\data}{\dip}$ generates a message containing new application layer data
$\data$ destined for a particular destination $\dip$. Such a
message is submitted to a node by a client of the AODV protocol
hooked up to that node. The function $\pkt{\data}{\dip}{\sip}$
generates a message containing application layer data {\data},
that is sent by the sender $\sip$
to the next hop on the route towards $\dip$.

\index{message queueing}%
All messages received by a particular node are first stored in a queue
(see Section~\ref{ssec:message_queue} for a detailed description). To
model this behaviour we use a message queue, denoted by the variable
$\msgs$ of type $[\tMSG]$.  As for every other queue, we will freely
use the functions $\fnhead$, $\fntail$ and $\fnappend$.
}

\subsection{Summary}
The following table describes the entire data structure we use.
\vspace{-\abovedisplayskip}
\begin{center}
\setlength{\tabcolsep}{4.0pt}
{\small
\begin{longtable}{@{}|l|l|l|@{}}
\hline
\textbf{Basic Type} & \textbf{Variables} & \textbf{Description}\\
\hline
 \tIP			&\ip, \dip, \oip, \rip, \sip, \nhip	&node identifiers\\
 \tSQN		&\dsn, \osn, \rsn, \sn			&sequence numbers\\
 \tSQNK		&\keyw{dsk}        			&sequence-number-status flag\\
 \tFLAG		&\flag					&route validity\\
 \NN			&\hops					&hop counts\\
 \tROUTE  	& \keyw{r} 						&routing table entries\\
 \tRT			&\rt						&routing tables\\		
 \tRREQID		&\rreqid					&request identifiers\\
 \tPendingRREQ&		                                &\penFlag\\
 \tDATA 		&\data               				&application layer data\\
 \tQUEUES       	&\queues					&store of queued data packets\\
 \tMSG		&\msg					&messages\\
\hline
\newpage
\multicolumn{3}{l}{\mbox{}}\\[-2ex]
\hline
\textbf{Complex Type} & \textbf{Variables} & \textbf{Description}\\
\hline
${[\tTYPE]}$						&			&queues with elements of type \tTYPE\\
\quad[\tMSG]						&\msgs		&message queues\\
$\pow(\tTYPE)$			        		&			&sets consisting of elements of type \tTYPE\\
\quad$\pow(\tIP)$					&$\pre$     	&sets of identifiers (precursors, destinations, \dots)\\
\quad$\pow(\tIP\times\tRREQID)$	        	&\rreqs		&sets of request identifiers  with originator IP\\
$\tTYPE_1\rightharpoonup\tTYPE_2$       &              	 	&partial functions from $\tTYPE_1$ to $\tTYPE_2$\\
\quad$\tIP\rightharpoonup\tSQN$         	&\dests         	&sets of destinations with sequence numbers\\
\hline
\hline
\multicolumn{2}{|l|}{\textbf{Constant/Predicate}}& \textbf{Description}\\
\hline
\multicolumn{2}{|l|}{$0:\tSQN,~1:\tSQN$}&
	unknown, smallest sequence number\\
\multicolumn{2}{|l|}{$\mathord{<} \subseteq \tSQN\times\tSQN$}&
	strict order on sequence numbers\\
\multicolumn{2}{|l|}{$\kno,\unkno:\tSQNK$}&
	constants to distinguish  known and unknown sqns\\
\multicolumn{2}{|l|}{$\val,\inval:\tFLAG$}&
	constants to distinguish  valid and invalid routes\\
\multicolumn{2}{|l|}{$\pen,\nonpen:\tPendingRREQ$}&
	constants indicating whether a RREQ is required\\
\multicolumn{2}{|l|}{$0:\NN,~1:\NN,~\mathord{<} \subseteq \NN\times\NN$}&
	standard constants/predicates of natural numbers\\
\multicolumn{2}{|l|}{${[\,]}:{[\tTYPE]},~\emptyset:\pow(\tTYPE)$}&
	empty queue, empty set\\
\multicolumn{2}{|l|}{$\mathord{\in}\subseteq\tTYPE\times\pow(\tTYPE)$}&
	membership, standard set theory\\
\hline
\hline
\multicolumn{2}{|l|}{\textbf{Function}} & \textbf{Description}\\
\hline
\multicolumn{2}{|l|}{$\fnhead:[\tTYPE]\rightharpoonup\tTYPE$}&
	returns the ``oldest'' element in the queue\\
\multicolumn{2}{|l|}{$\fntail:[\tTYPE]\rightharpoonup[\tTYPE]$}&
	removes the ``oldest'' element in the queue\\
\multicolumn{2}{|l|}{$\fnappend:\tTYPE\times[\tTYPE]\rightarrow[\tTYPE]$}&
	inserts a new element into the queue\\
\multicolumn{2}{|l|}{$\fndrop:\tIP\times\tQUEUES \rightharpoonup\tQUEUES$}&
	deletes a packet from the queued data packets\\
\multicolumn{2}{|l|}{$\fnadd:\tDATA\times\tIP\times\tQUEUES \to\tQUEUES$}&
	adds a packet to the queued data packets\\
\multicolumn{2}{|l|}{$\fnunsetrrf:\tQUEUES\times\tIP\to\tQUEUES$}&
	set the \penFlag to \pen\\
\multicolumn{2}{|l|}{$\fnsetrrf:\tQUEUES\times(\tIP\rightharpoonup\tSQN)\to\tQUEUES$}&
	set the \penFlag to \nonpen\\
\multicolumn{2}{|l|}{$\fnselqueue:\tQUEUES\times\tIP \rightarrow [\tDATA]$}&
	selects the data queue for a particular destination\\
\multicolumn{2}{|l|}{$\fnfD:\tQUEUES\times\tIP\rightharpoonup\tPendingRREQ$}&
	selects the flag for a destination from the store\\
\multicolumn{2}{|l|}{$\fnselroute:\tRT\times\tIP \rightharpoonup \tROUTE$}&
	selects the route for a particular destination\\
\multicolumn{2}{|l|}{$(\_\comma\_\comma\_\comma\_\comma\_\comma\_\comma\_\,)\!: \tIP
        \mathord\times \tSQN \mathord\times\tSQNK \mathord\times\tFLAG \mathord\times \NN
        \mathord\times \tIP \mathord\times \pow(\tIP) \mathop\rightarrow \tROUTE$}&
	generates a routing table entry\\
\multicolumn{2}{|l|}{$\fninc:\tSQN \rightarrow \tSQN$}&
	increments the sequence number\\
\multicolumn{2}{|l|}{$\max:\tSQN\times\tSQN \to\tSQN$}&
	returns the larger sequence number\\	
\multicolumn{2}{|l|}{$\fnsqn:\tRT \times \tIP \to \tSQN$}&
	returns the sequence number of a particular route\\	
\multicolumn{2}{|l|}{$\fnsqnf:\tRT \times \tIP \to \tSQNK$}&
	determines whether the sequence number is known\\
\multicolumn{2}{|l|}{$\fnstatus:\tRT\times\tIP\rightharpoonup\tFLAG$}&
	returns the validity of a particular route\\
\multicolumn{2}{|l|}{$+1:\NN \rightarrow \NN$}&
	increments the hop count\\
\multicolumn{2}{|l|}{$\fndhops:\tRT \times \tIP \rightharpoonup \NN$}&
	returns the hop count of a particular route\\
\multicolumn{2}{|l|}{$\fnnhop:\tRT \times \tIP \rightharpoonup \tIP$}&
	returns the next hop of a particular route\\
\multicolumn{2}{|l|}{$\fnprecs:\tRT \times \tIP \rightharpoonup \pow(\tIP)$}&
	returns the set of precursors of a particular route\\
\multicolumn{2}{|l|}{$\fnakD, \fnikD, \fnkD:\tRT \rightarrow\pow(\tIP)$}&
	returns the set of valid, invalid, known destinations\\
\multicolumn{2}{|l|}{$\fnqD:\tQUEUES \rightarrow \pow(\tIP)$}&
	returns the set of destinations with unsent packets\\
\multicolumn{2}{|l|}{$\cap,~\cup,~\bigcup\{\ldots\},~\dots$}&
	standard set-theoretic functions\\
\multicolumn{2}{|l|}{$\fnaddprec : \tROUTE\times \pow(\tIP) \to \tROUTE$}&
	adds a set of precursors to a routing table entry\\	
\multicolumn{2}{|l|}{$\fnaddprecrt : \tRT\times\tIP\times \pow(\tIP) \rightharpoonup \tRT$}&
	adds a set of precursors to an entry inside a table\\	
\multicolumn{2}{|l|}{$\fnupd:\tRT \times \tROUTE \rightharpoonup \tRT$}&
	updates a routing table with a route (if fresh enough)\\
\multicolumn{2}{|l|}{$\fninv:\tRT \times (\tIP\rightharpoonup\tSQN) \rightarrow \tRT$}&
	invalidates a set of routes within a routing table\\
\multicolumn{2}{|l|}{$\fnnrreqid: \pow(\tIP\times\tRREQID) \times \tIP \rightarrow \tRREQID$}&
	generates a new route request identifier\\
\multicolumn{2}{|l|}{$\newpktID:\tDATA \times \tIP \rightarrow \tMSG$}&
	generates a message with new application layer data\\
\multicolumn{2}{|l|}{$\pktID:\tDATA \times \tIP \times \tIP \rightarrow \tMSG$}&
	generates a message containing application layer data\\
\multicolumn{2}{|l|}{$\rreqID:\NN \mathord\times \tRREQID \mathord\times \tIP \mathord\times \tSQN\mathord\times\tSQNK \mathord\times \tIP \mathord\times \tSQN \mathord\times \tIP \rightarrow \tMSG$}&
	generates a route request\\
\multicolumn{2}{|l|}{$\rrepID:\NN \times \tIP \times \tSQN \times \tIP \times \tIP \rightarrow \tMSG$}&
	generates a route reply\\
\multicolumn{2}{|l|}{$\rerrID:(\tIP\rightharpoonup\tSQN) \times \tIP \rightarrow \tMSG$}&
	generates a route error message\\
\hline
\multicolumn{3}{l}{\mbox{}}\\[-1ex]
\caption[Data structure of AODV]{\em Data structure of AODV}
\vspace{-11ex}
\end{longtable}
}
\end{center}


\newpage
\section{Modelling AODV}\label{sec:modelling_AODV}
In this section, we present a specification of the AODV protocol
using process algebra.
The model includes a mechanism to describe the delivery of data
packets; though this is not part of the protocol itself it is
necessary to trigger any AODV activity.  Our model consists of $7$
processes, named $\AODV$, $\NEWPKT$, $\PKT$, $\RREQ$, $\RREP$, $\RERR$ and
$\QMSG$:

\begin{itemize}
	\item The {basic process} $\AODV$ reads a message from the message queue and,
		depending on the type of the message, calls other
		processes. When there is no message handling going on,
		the process initiates the transmission of queued data packets or generates
		a new route request (if packets are stored for a destination, no route 
		to this destination is known and no route request for this destination is pending).
        \item The processes $\NEWPKT$ and $ \PKT$ describe all actions
                performed by a node when a data packet is
                received.  The former process handles a newly injected
		packet. The latter describes all actions performed
		when a node receives data from another node via the protocol. This
		includes accepting the packet (if the
                node is the destination), forwarding the packet (if
                the node is not the destination) and sending an error
                message (if forwarding fails).
	\item The process $\RREQ$ models all events that might occur
                after a route request has been received. This includes
                updating the node's routing table, forwarding the
                route request as well as the initiation of a route
                reply if a route to the destination is known.
	\item Similarly, the $\RREP$ process describes the reaction
		of the protocol to an incoming route reply.
	\item The process $\RERR$ models the part of AODV which handles error messages.
                In particular, it describes the modification and
		forwarding of the AODV error message.
	\item The last process $\QMSG$ concerns message handling. Whenever a message is received,
          \index{message queueing}%
	  it is first stored in a message queue. If the corresponding node is able to handle a message
	  it pops the oldest message from the queue and handles it. An example where a node is not ready
	  to process an incoming message immediately is when it is already handling a message.
\end{itemize}

In the remainder of the section, we provide a formal specification for
each of these processes and explain them step by step. Our
specification can be split into three parts: the brown lines describe
updates to be performed on the node's data, e.g., its routing table;
the black lines are other process algebra constructs
(cf.\ \Sect{process_algebra}); and the blue lines are
ordinary comments.

\subsection{The Basic Routine}\label{ssec:proc_aodv}

The {basic process} $\AODV$ either reads a message from the
corresponding queue, sends a queued data packet if a route to the destination has been established, 
or initiates a new route discovery process in case of queued
data packets with invalid or unknown routes.
This process maintains five data variables, {\ip}, {\sn}, {\rt}, {\rreqs} and
{\queues}, in which it stores its own identity, its own sequence number, its current routing
table, the list of route requests seen so far, and its current store
of queued data packets that await transmission (cf.\ \Sect{types}).

The message handling is described in Lines~\ref{aodv:line2}--\ref{aodv:line20}.
First, the message has to be read from the queue of stored messages
(\receive{\msg}). After that, the process $\AODV$ checks the type of
the message and calls a process that can handle the message: in case
\index{injection}%
\index{data packet}%
\index{control message}%
of a newly injected data packet, the process $\NEWPKT$ is called; 
in case of  an incoming data packet, the process $\PKT$ is
called; in case that the incoming message is an AODV control message
(route request, route reply or route error), the node updates its routing table.
More precisely, if there is no entry to the message's sender $\sip$, the
receiver-node creates an entry with the unknown sequence number $0$
and hop count $1$; in case  there is already a 
routing table entry $(\sip,\dsn,*,*,*,*,\pre)$, then this entry is updated to
$(\sip,\dsn,\unkno,\val,1,\sip,\pre)$
(cf. Lines~\ref{aodv:line10}, \ref{aodv:line14} and~\ref{aodv:line18}). 
Afterwards, the processes $\RREQ$, $\RREP$ and $\RERR$ are called, respectively.

  \algsetup{linenodelimiter=.,linenosize=\tiny}
  \begin{algorithm}[H]
    {\footnotesize
      \caption{The basic routine}
      \label{pro:aodv}
      \begin{algorithmic}[1]
\DEFPROCESS{\AODV}{\ip\comma\sn\comma\rt\comma\rreqs\comma\queues}
	\IFempty
		\receiveL{\msg}\ .																															\label{aodv:line2}
		\COMLINE{depending on the message, the node calls different processes}	
		\PAR						\label{aodv:line3}
		\IF[new DATA packet]{$\msg = \newpkt{\data}{\dip}$}																	\label{aodv:line4}
			\newpktP{\data}{\dip}{\ip}{\sn}{\rt}{\rreqs}{\queues}																	\label{aodv:line5}
		\ELSIF[incoming DATA packet]{$\msg = \pkt{\data}{\dip}{\oip}$}  \label{aodv:line6}
			\pktP{\data}{\dip}{\oip}{\ip}{\sn}{\rt}{\rreqs}{\queues}																	\label{aodv:line7}
		\ELSIF[RREQ]{$\msg = \rreq{\hops}{\rreqid}{\dip}{\dsn}{\dsk}{\oip}{\osn}{\sip}$}								\label{aodv:line8}
			\COMLINE{update the route to \sip\ in \rt}																					\label{aodv:line9}
			\UPD{\rt:=\upd{\rt}{(\sip,0,\unkno,\val,1,\sip,\emptyset)}}																\label{aodv:line10}
			\COMMENT{$0$ is used since no sequence number is known}%
			\rreqP{\hops}{\rreqid}{\dip}{\dsn}{\dsk}{\oip}{\osn}{\sip}{\ip}{\sn}{\rt}{\rreqs}{\queues}					\label{aodv:line11}
		\ELSIF[RREP]{$\msg = \rrep{\hops}{\dip}{\dsn}{\oip}{\sip}$}															\label{aodv:line12}
			\COMLINE{update the route to \sip\ in \rt}																					\label{aodv:line13}
			\UPD{\rt:=\upd{\rt}{(\sip,0,\unkno,\val,1,\sip,\emptyset)}}																\label{aodv:line14}
			\rrepP{\hops}{\dip}{\dsn}{\oip}{\sip}{\ip}{\sn}{\rt}{\rreqs}{\queues}												\label{aodv:line15}
		\ELSIF[RERR]{$\msg = \rerr{\dests}{\sip}$}																					\label{aodv:line16}
			\COMLINE{update the route to \sip\ in \rt}																					\label{aodv:line17}
			\UPD{\rt:=\upd{\rt}{(\sip,0,\unkno,\val,1,\sip,\emptyset)}}																\label{aodv:line18}
			\rerrP{\dests}{\sip}{\ip}{\sn}{\rt}{\rreqs}{\queues}																			\label{aodv:line19}
		\ENDIFii
		\ENDPAR																																		\label{aodv:line20}
		\ELSIF[send a queued data packet if a valid route is known]{$\mbox{Let } \dip\in\qD{\queues}\cap\akD{\rt}$}				\label{aodv:line22}
			\UPD{\data:=\head{\selq{\queues}{\dip}}}													\label{aodv:line23}
			\STARTPRIO
			 	\unicast{\nhop{\rt}{\dip}}{\pkt{\data}{\dip}{\ip}}\ . 											\label{aodv:line24}
				\UPD{\queues:=\drop{\dip}{\queues}}\COMMENT{drop {\data} from the {\queues} for {\dip} if the transmission was successful}													\label{aodv:line26}
				\aodvL{\ip}{\sn}{\rt}{\rreqs}{\queues}			\label{aodv:line27}
			 \PRIO
				\COMspec{an error is produced and the routing table is updated}							\label{aodv:line29}
				\UPD{\dests:=\{(\rip,\inc{\sqn{\rt}{\rip}})\,|\,\rip\in\akD{\rt}\ans \nhop{\rt}{\rip}=\nhop{\rt}{\dip}\}}		\label{aodv:line30}
				\UPD{\rt:=\inv{\rt}{\dests}}															\label{aodv:line32}
				\UPD{\queues:=\setrrf{\queues}{\dests}}\label{aodv:line32a}
				\UPD{\pre:=\bigcup\{\precs{\rt}{\rip}\,|\,(\rip,*)\in\dests\}}									\label{aodv:line31}
				\UPD{\dests:=\{(\rip,\rsn)\,|\,(\rip,\rsn)\in\dests\ans \precs{\rt}{\rip}\not=\emptyset\}}				\label{aodv:line31a}
				\groupcast{\pre}{\rerr{\dests}{\ip}}\ .
				\aodv{\ip}{\sn}{\rt}{\rreqs}{\queues}													\label{aodv:line33}
 		 	\ENDPRIO
		\ELSIF[a route discovery process is initiated]{$\mbox{Let } \dip\in\qD{\queues}-\akD{\rt}\ans\fD{\queues}{\dip}=\nonpen$}	\label{aodv:line34}		
			\UPD{\queues:=\unsetrrf{\queues}{\dip}}\COMMENT{set \penFlag to \pen}				\label{aodv:line35}		
			\UPD{\sn:=\inc{\sn}}\COMMENT{increment own sequence number}								\label{aodv:line36}
			\COMLINE{update \rreqs\ by adding $(\ip,\nrreqid{\rreqs}{\ip})$}								\label{aodv:line37}
			\UPD{\rreqid:=\nrreqid{\rreqs}{\ip}}							\label{aodv:line38a}
			\UPD{\rreqs := \rreqs\cup\{(\ip,\rreqid)\}}							\label{aodv:line38b}
			\broadcast{\rreq{$0$}{\rreqid}{\dip}{\sqn{\rt}{\dip}}{\sqnf{\rt}{\dip}}{\ip}{\sn}{\ip}}\ .					\label{aodv:line39}
			\aodv{\ip}{\sn}{\rt}{\rreqs}{\queues}														\label{aodv:line40}			
	\ENDIFii

	\end{algorithmic}
    }
  \end{algorithm}

\index{queued data}%
The second part of $\AODV$ (Lines~\ref{aodv:line22}--\ref{aodv:line33}) initiates the sending of a data packet.
For that, it has to be checked if there is a queued data packet for a destination that has a known and valid route
in the routing table ($\qD{\queues}\cap\akD{\rt}\not=\emptyset$). In case that there is more than one destination with
stored data and a known route, an arbitrary destination is chosen and denoted by $\dip$
(Line~\ref{aodv:line22}).\footnote{Although the word ``let'' is not part of the syntax, we add it to stress
the nondeterminism happening here.}\linebreak
Moreover $\data$ is set to the first
queued data packet from the application layer that should be sent
($\data:=\fnhead(\selq{\queues}{\dip})$).\footnote{Following the RFC,
data packets waiting for a route should be buffered ``first-in, first-out'' (FIFO).} 
This data packet is unicast to the next hop on the route to $\dip$.
If the unicast is successful, the data packet $\data$ is removed from $\queues$ (Line~\ref{aodv:line26}).
Finally, the process calls itself---stating that the node is ready for
handling a new message, initiating the sending of another packet towards a destination, etc.
In case the unicast is not successful, the data packet has not been transmitted.
Therefore $\data$ is not removed from $\queues$. Moreover, the node knows that the link
to the next hop on the route to $\dip$ is faulty and, most probably, broken.
\index{error handling}%
An error message is initiated. Generally, route error and link breakage processing requires the
following steps: (a) invalidating existing routing table entries, (b) listing affected destinations,
(c) determining which neighbours may be affected (if any), and
(d) delivering an appropriate AODV error message to such neighbours \cite{rfc3561}.
Therefore, the process determines all valid destinations $\dests$ that have
this unreachable node as next hop (Line~\ref{aodv:line30}) and marks
the routing table entries for these destinations
as invalid (Line~\ref{aodv:line32}), while incrementing their sequence numbers (Line~\ref{aodv:line30}).
In Line~\ref{aodv:line32a}, we set, for all invalidated routing table entries, the \penFlag to \nonpen,
thereby indicating that a new route discovery process may need to be initiated.
In Line~\ref{aodv:line31} the recipients of the error message
\index{precursors}%
are determined. These are the precursors of the invalidated
destinations, i.e., the neighbouring nodes listed as having a route to one of the
affected destinations passing through the broken link.
\index{error message|see{RERR message}}%
Finally, an error message is sent to them (Line~\ref{aodv:line33}),
listing only those invalidated destinations with a non-empty set of
precursors (Line~\ref{aodv:line31a}).

\index{route discovery process}%
The third and final part of $\AODV$ (Lines~\ref{aodv:line34}--\ref{aodv:line40}) initiates a route
discovery process. 
This is done when there is  at least one queued data
packet for a destination without a valid
routing table entry, that is not 
waiting for a reply in response to a route request process initiated before. 
Following the RFC, the process generates a new route request.
This is achieved in four steps:
First, the \penFlag is set to {\pen} (Line~\ref{aodv:line35}), meaning that no further
route discovery processes for this destination need to be initiated.%
\footnote{The RFC does not describe packet handling in detail; hence
the \penFlag is not part of the RFC's RREQ generation process.}
Second, the node's own sequence number is increased by~$1$
(Line~\ref{aodv:line36}). Third, by determining
$\nrreqid{\rreqs}{\ip}$, a new route request identifier is created and
stored---together with the node's $\ip$---in the set $\rreqs$ of route
requests already seen (Line~\ref{aodv:line38b}). Fourth, the message
itself is sent (Line~\ref{aodv:line39}) using broadcast. In contrast to
\textbf{unicast}, transmissions via \textbf{broadcast} are not checked
on success.  The information inside the message follows strictly the
RFC. In particular, the hop count is set to $0$, the route request identifier previously created is used, etc.
This ends the initiation of the route discovery process.

\subsection{Data Packet Handling}\label{ssec:proc_pkt}

The processes $\NEWPKT$ and $\PKT$ describe all actions performed by a
node when a data packet is injected by a client hooked up to the
local node or received via the protocol, respectively.  For
the process $\PKT$, this includes the acceptance
(if the node is the destination), the forwarding (if the node is not
the destination), as well as the sending of an error message in case
something went wrong.
The process $\NEWPKT$ does not include the initiation of a new route request; this is part of the process $\AODV$.
Although packet handling itself
is not part of AODV, it is necessary to include it in our
formalisation, since a failure to transmit a data packet triggers AODV
activity.

The process $\NEWPKT$ first checks whether the node is the intended addressee
of the data packet. If this is the
case, it delivers the data and returns to the basic routine $\AODV$.
If the node is not the intended destination ($\dip\not=\ip$, Line~\ref{newpkt:line4}),
the $\data$ is added to the data queue for {\dip} (Line~\ref{newpkt:line5}),\footnote{If no
data for destination $\dip$ was already queued, the function \hyperlink{add}{$\fnadd$} creates a
fresh queue for $\dip$, and set the request-required flag to $\nonpen$; otherwise, the
request-required flag keeps the value it had already.}
which finishes the handling of a newly injected data packet.
The further handling of queued data (forwarding it to the next hop on the way to the destination in
case a valid route to the destination is known,
and otherwise initiating a new route request if still required)
is the responsibility of the main process {\AODV}.

  \algsetup{linenodelimiter=.,linenosize=\tiny}
  \begin{algorithm}[H]
    {\footnotesize
      \caption{Routine for handling a newly injected data packet}
      \label{pro:newpkt}
      \begin{algorithmic}[1]
\DEFPROCESS{\NEWPKT}{\data\comma\dip\,\comma\,\ip\comma\sn\comma\rt\comma\rreqs\comma\queues}
	\IF[the DATA packet is intended for this node]{$\dip=\ip$}															\label{newpkt:line2}
		\deliverL{\data}\ .
		\aodv{\ip}{\sn}{\rt}{\rreqs}{\queues}																						\label{newpkt:line3}
	\ELSIF[the DATA packet is not intended for this node]{$\dip\not=\ip$}										\label{newpkt:line4}
		\UPD{\queues:=\add{\data}{\dip}{\queues}}	\ .
		\aodv{\ip}{\sn}{\rt}{\rreqs}{\queues}																						\label{newpkt:line5}	
	\ENDIFii

	\end{algorithmic}
    }
  \end{algorithm}

Similar to $\NEWPKT$, the process $\PKT$ first checks whether it is the intended addressee
of the data packet. If this is the
case, it delivers the data and returns to the basic routine $\AODV$.
If the node is not the intended destination
($\dip\not=\ip$, Line~\ref{pkt2:line4}) more activity is needed.

In case that the node has a valid route to the $\data$'s destination $\dip$
($\dip\in\akD{\rt}$), it forwards the packet using a unicast to the
next hop $\nhop{\rt}{\dip}$ on the way to $\dip$.
Similar to the unicast of the process $\AODV$, it has to be checked
whether the transmission is successful: no further action is necessary
if the transmission succeeds, and the node returns to the basic
routine $\AODV$. If the transmission fails, the link to the next hop
$\nhop{\rt}{\dip}$ is assumed to be broken. As before, all
destinations $\dests$ that are reached via that broken link are
determined (Line~\ref{pkt2:line9}) and all precursors interested in at
least one of these destinations are informed via an error message
(Line~\ref{pkt2:line14}).  Moreover, all the routing table entries
using the broken link have to be invalidated in the node's routing
table $\rt$ (Line~\ref{pkt2:line10}), and all corresponding \penFlag{}s
are set to \nonpen\ (Line~\ref{pkt2:line11}).

In case that the node has no valid route to the destination $\dip$ ($\dip\not\in\akD{\rt}$),
the data packet is lost and possibly an error message is sent. If 
there is an (invalid) route to the \data's destination {\dip} in
the routing table (Line~\ref{pkt2:line18}), the possibly affected
neighbours can be determined and the error message is sent to these
precursors (Line \ref{pkt2:line20}). If there is no information about a
route towards $\dip$ nothing happens (and the basic process {\AODV} is called again).

  \algsetup{linenodelimiter=.,linenosize=\tiny}
  \begin{algorithm}[H]
    {\footnotesize
      \caption{Routine for handling a received data packet}
      \label{pro:pkt}
      \begin{algorithmic}[1]
\DEFPROCESS{\PKT}{\data\comma\dip\comma\oip\,\comma\,\ip\comma\sn\comma\rt\comma\rreqs\comma\queues}
	\IF[the DATA packet is intended for this node]{$\dip=\ip$}																				\label{pkt2:line2}
		\deliverL{\data}\ .
		\aodv{\ip}{\sn}{\rt}{\rreqs}{\queues}																											\label{pkt2:line3}
	\ELSIF[the DATA packet is not intended for this node]{$\dip\not=\ip$}															\label{pkt2:line4}
	 	\PAR
			\IF[valid route to \dip]{$\dip\in\akD{\rt}$}																									\label{pkt2:line5}
				\COMLINE{forward packet}
				\STARTPRIO
					\unicast{\nhop{\rt}{\dip}}{\pkt{\data}{\dip}{\oip}}\ . \aodv{\ip}{\sn}{\rt}{\rreqs}{\queues}						\label{pkt2:line7}
				\PRIO
					\COMspec{If the packet transmission is unsuccessful, a RERR message is generated}		
					\UPD{\dests:=\{(\rip,\inc{\sqn{\rt}{\rip}})\,|\,\rip\in\akD{\rt}\ans \nhop{\rt}{\rip}=\nhop{\rt}{\dip}\}}			\label{pkt2:line9}
					\UPD{\rt:=\inv{\rt}{\dests}}																												\label{pkt2:line10}
					\UPD{\queues:=\setrrf{\queues}{\dests}}																						\label{pkt2:line11}
					\UPD{\pre:=\bigcup\{\precs{\rt}{\rip}\,|\,(\rip,*)\in\dests\}}																	\label{pkt2:line12}
					\UPD{\dests:=\{(\rip,\rsn)\,|\,(\rip,\rsn)\in\dests\ans \precs{\rt}{\rip}\not=\emptyset\}}							\label{pkt2:line13}
					\groupcast{\pre}{\rerr{\dests}{\ip}}\ . \aodv{\ip}{\sn}{\rt}{\rreqs}{\queues}											\label{pkt2:line14}
				\ENDPRIO
			\ELSIF[no valid route to \dip]{$\dip\not\in\akD{\rt}$}																				\label{pkt2:line15}
				\COMLINE{no local repair occurs; data is lost}																					\label{pkt2:line16}
				\PAR	
					\IF[invalid route to \dip]{$\dip\in\ikD{\rt}$}																						\label{pkt2:line18}
						\COMLINE{if the route is invalid, a RERR is sent to the precursors}
						\groupcast{\precs{\rt}{\dip}}{\rerr{\{(\dip,\sqn{\rt}{\dip})\}}{\ip}}\ .											
						\aodv{\ip}{\sn}{\rt}{\rreqs}{\queues}																							\label{pkt2:line20}
					\ELSIF[route not in \rt]{$\dip\not\in\ikD{\rt}$}																					\label{pkt2:line21}
						\aodvL{\ip}{\sn}{\rt}{\rreqs}{\queues}																							\label{pkt2:line22}
					\ENDIFii
				\ENDPAR																																			\label{pkt2:line23}									
			\ENDIFii
		\ENDPAR																																					\label{pkt2:line24}
	\ENDIFii

	\end{algorithmic}
    }
  \end{algorithm}

\subsection{Receiving Route Requests}\label{ssec:proc_rreq}
The process $\RREQ$ models all events that may occur after a route
request has been received.

The process first reads the unique identifier $(\oip,\rreqid)$ of the route request received.
If this pair is already stored in the node's data $\rreqs$, the route request has been handled
before and the message can silently be ignored (Lines~\ref{rreq:line2}--\ref{rreq:line3}).

  \algsetup{linenodelimiter=.,linenosize=\tiny}
  \begin{algorithm}[H]
    {\footnotesize
      \caption{RREQ handling}
      \label{pro:rreq}
      \begin{algorithmic}[1]
\DEFPROCESS{\RREQ}{\hops\comma\rreqid\comma\dip\comma\dsn\comma\dsk\comma\oip\comma\osn\comma\sip\,\comma\,\ip\comma\sn\comma\rt\comma\rreqs\comma\queues}
	\IF[the RREQ has been received previously]{$(\oip\,\comma\,\rreqid)\in\rreqs$}																							\label{rreq:line2}
		\aodvL{\ip}{\sn}{\rt}{\rreqs}{\queues} \COM{silently ignore RREQ, i.e. do nothing}																			\label{rreq:line3}
	\ELSIF[the RREQ is new to this node]{$(\oip\,\comma\,\rreqid)\not\in\rreqs$}																								\label{rreq:line4}
		\UPD{\rt:=\upd{\rt}{(\oip,\osn,\kno,\val,\hops+1,\sip,\emptyset)}}	\COMMENT{update the route to \oip\ in \rt}									\label{rreq:line6}
		\UPD{\rreqs:=\rreqs\cup\{(\oip,\rreqid)\}}		\COMMENT{update \rreqs\ by adding $(\oip\,\comma\,\rreqid)$}											\label{rreq:line8}
		\PAR																																																\label{rreq:line9}
		\IF[this node is the destination node]{$\dip=\ip$}																															\label{rreq:line10}
			\UPD{\sn:=\max(\sn,\dsn)}	\COMMENT{update the sqn of \ip}																									\label{rreq:line12}
			\COMLINE{unicast a RREP towards \oip\ of the RREQ}																											\label{rreq:line13}
			\STARTPRIO
					\unicast{\nhop{\rt}{\oip}}{{\rrep{$0$}{\dip}{\sn}{\oip}{\ip}}}\ . 																								\label{rreq:line14a}								
					\aodv{\ip}{\sn}{\rt}{\rreqs}{\queues}																																	\label{rreq:line14}
				\PRIO
					\COMspec{If the transmission is unsuccessful, a RERR message is generated}																\label{rreq:line15}
					\UPD{\dests:=\{(\rip,\inc{\sqn{\rt}{\rip}})\,|\,\rip\in\akD{\rt}\ans \nhop{\rt}{\rip}=\nhop{\rt}{\oip}\}}												\label{rreq:line16}
					\UPD{\rt:=\inv{\rt}{\dests}}																																					\label{rreq:line18}			
					\UPD{\queues:=\setrrf{\queues}{\dests}}																															\label{rreq:line18a}
					\UPD{\pre:=\bigcup\{\precs{\rt}{\rip}\,|\,(\rip,*)\in\dests\}}																										\label{rreq:line17}
					\UPD{\dests:=\{(\rip,\rsn)\,|\,(\rip,\rsn)\in\dests\ans \precs{\rt}{\rip}\not=\emptyset\}}																\label{rreq:line17a}
					\groupcast{\pre}{\rerr{\dests}{\ip}}\ .																																	
					\aodv{\ip}{\sn}{\rt}{\rreqs}{\queues}																																	\label{rreq:line19}
				\ENDPRIO	
		\ELSIF[this node is not the destination node]{$\dip\not=\ip$}																											\label{rreq:line20}
			\PAR																																															\label{rreq:line21}
			\IF[$\!$valid route to \dip\ that is fresh enough$\!$]{$\!\dip\mathbin\in\akD{\rt} \wedge \dsn \mathbin\leq  \sqn{\rt}{\!\dip} \wedge\sqnf{\rt}{\!\dip}\mathbin=\kno\!$}		\label{rreq:line22}
					\COMLINE{update \rt\ by adding precursors}																														\label{rreq:line23}
					\UPD{\rt := \addprecrt{\rt}{\dip}{\{\sip\}}}																																\label{rreq:line24}
					\UPD{\rt := \addprecrt{\rt}{\oip}{\{\nhop{\rt}{\dip}\}}}																												\label{rreq:line25}
				\COMLINE{unicast a RREP towards the \oip\ of the RREQ}
				\STARTPRIO
					\unicast{\nhop{\rt}{\oip}}{\rrep{\dhops{\rt}{\dip}}{\dip}{\sqn{\rt}{\dip}}{\oip}{\ip}}\ .\\																	\label{rreq:line26}
					\aodvL{\ip}{\sn}{\rt}{\rreqs}{\queues}																																	\label{rreq:line26a}
				\PRIO
					\COMspec{If the transmission is unsuccessful, a RERR message is generated}
					\UPD{\dests:=\{(\rip,\inc{\sqn{\rt}{\rip}})\,|\,\rip\in\akD{\rt}\ans \nhop{\rt}{\rip}=\nhop{\rt}{\oip}\}}												\label{rreq:line28}
					\UPD{\rt:=\inv{\rt}{\dests}}																																					\label{rreq:line30}			
					\UPD{\queues:=\setrrf{\queues}{\dests}}																															\label{rreq:line30a}		
					\UPD{\pre:=\bigcup\{\precs{\rt}{\rip}\,|\,(\rip,*)\in\dests\}}																										\label{rreq:line29}
					\UPD{\dests:=\{(\rip,\rsn)\,|\,(\rip,\rsn)\in\dests\ans \precs{\rt}{\rip}\not=\emptyset\}}																\label{rreq:line29a}
					\groupcast{\pre}{\rerr{\dests}{\ip}}\ . 																																	\label{rreq:line31}
					\aodv{\ip}{\sn}{\rt}{\rreqs}{\queues}
				\ENDPRIO
			\ELSIF[$\!$no valid route that is fresh enough$\!$]{$\dip\mathbin{\not\in}\akD{\rt} \vee \sqn{\rt}{\!\dip} <  \dsn \vee\sqnf{\rt}{\!\dip}\mathbin=\unkno$}					\label{rreq:line32}
				\COMLINE{no further update of \rt}
				\broadcast{\rreq{$\hops+1$}{\rreqid}{\dip}{\max(\sqn{\rt}{\dip}\comma\dsn)}{\dsk}{\oip}{\osn}{\ip}}\ .										\label{rreq:line34}
				\aodvL{\ip}{\sn}{\rt}{\rreqs}{\queues}				\label{rreq:line35}
			\ENDIFii
			\ENDPAR																																													\label{rreq:line36}
		\ENDIFii
		\ENDPAR																																														\label{rreq:line37}
	\ENDIFii

	\end{algorithmic}
    }
  \end{algorithm}

\vspace{-2mm}
If the received message is new to this node
($(\oip,\rreqid)\not\in\rreqs$, Line~\ref{rreq:line4}), the node
establishes a route of length $\hops\mathord+1$ back to the originator $\oip$
of the message. If this route is ``better'' than the route to $\oip$
in the current routing table, the routing table is updated by this
route (Line~\ref{rreq:line6}).  Moreover the unique identifier has to
be added to the set $\rreqs$ of already seen (and handled) route
requests (Line~\ref{rreq:line8}).

After these updates the process checks if the node is the intended
destination ($\dip=\ip$, Line~\ref{rreq:line10}). In that case, a
route reply must be initiated: first, the node's sequence number
is---according to the RFC---set to the maximum of the current sequence
number and the destination sequence number in the RREQ packet
(Line~\ref{rreq:line12}).%
\interfootnotelinepenalty=10000%
\footnote{\label{Chakares-increment-when-issuing-RREP}%
According to I.~Chakeres on the IETF MANET mailing list
(\url{http://www.ietf.org/mail-archive/web/manet/current/msg02589.html})
Line~\ref{rreq:line12} ought to be replaced by $\assignment{\sn := \max(\sn,\inc{\dsn})}$.}
Then the reply is unicast to the next hop on
the route back to the originator {\oip} of the route request. The
content of the new route reply is as follows: the hop count is set to
$0$, the destination and originator are copied from the route request
received and the destination's sequence number is the node's own
sequence number \sn; of course the sender's IP of this message has to be set
to the node's $\ip$. As before (cf.\ Sections~\ref{ssec:proc_aodv} and
\ref{ssec:proc_pkt}), the process invalidates the corresponding routing table entries, sets \penFlag{}s and sends an error message to all
relevant precursors if the unicast transmission fails (Lines~\ref{rreq:line16}--\ref{rreq:line19}).

If the node is not the destination $\dip$ of the message but an intermediate hop
along the path from the originator to the destination, it is allowed to generate
a route reply only if the information in its own routing table is fresh enough. This means
that
(a) the node has a valid route to the destination,
(b) the destination sequence number in the node's existing routing table entry
for the destination ($\sqn{\rt}{\dip}$) is greater than or equal to
the requested destination sequence number $\dsn$ of the message and
(c) the sequence number $\sqn{\rt}{\dip}$ is known, i.e., $\sqnf{\rt}{\dip}=\kno$.
If these three conditions are satisfied---the check is done in Line~\ref{rreq:line22}---the
node generates a new route reply and sends it to the next hop on the
way back to the originator {\oip} of the received route
request.\footnote{This next hop will often, 
  but not always, be $\sip$; see \Fig{example2} in \Sect{aodv}.}. To this end, it copies the sequence number for the
destination $\dip$ from the routing table $\rt$ into the destination sequence number field of the RREP message and
it places its distance in hops from the destination ($\dhops{\rt}{\dip}$) in the corresponding field of the new reply
(Line~\ref{rreq:line26}). The unicast might fail, which
causes the usual error handling (Lines~\ref{rreq:line28}--\ref{rreq:line31}).
Just before transmitting the unicast, the intermediate node updates the forward route entry to
$\dip$ by placing the last hop node ($\sip$)\footnote{This is a mistake in the RFC; it should have
  been $\nhop{\rt}{\oip}$.} into the precursor list for the forward route entry (Line~\ref{rreq:line24}).
Likewise, it updates the reverse route entry to {\oip} by placing the first hop $\nhop{\rt}{\dip}$
towards $\dip$ in the precursor list for that entry
(Line~\ref{rreq:line25}).\footnote{Unless the \emph{gratuitous RREP flag}
is set, which we do not model in this paper, this update is rather useless,
as the precursor $\nhop{\rt}{\dip}$ in general is not aware that it
has a route to $\oip$.}

If the node is not the destination and there is either no route to the
destination $\dip$ inside the routing table or the route is not fresh enough,
the route request received has to be forwarded. This happens in Line~\ref{rreq:line34}.
The information inside the forwarded request is mostly copied from the request received.
Only the hop count is increased by $1$ and the destination sequence number is set
to the maximum of the destination sequence number in the RREQ packet
and the current sequence number for $\dip$ in the routing table.
In case $\dip$ is an unknown destination, $\sqn{\rt}{\dip}$ returns the
unknown sequence number $0$.

\subsection{Receiving Route Replies}\label{ssec:rrep}
The process $\RREP$ describes the reaction of the protocol to an incoming route reply.
Our model first checks if a forward routing table entry is going to be
created or updated (Line~\ref{rrep:line3}). This is the case if (a) the
node has no known route to the destination, or 
(b) the destination sequence number in the node's existing routing table entry
for the destination ($\sqn{\rt}{\dip}$) is smaller than the destination sequence number $\dsn$ in the RREP message, or
(c) the two destination sequence numbers are equal and, in addition,
either the incremented hop count of the RREP received is strictly smaller than the
one in the routing table, or the entry for $\dip$ in the routing table
is invalid. 
Hence Line~\ref{rrep:line3} could be replaced~by
\begin{algorithmic}[3]%
  {\footnotesize
	\STATE\hspace{-1em}\algorithmicif$\dip\not\in\kD{\rt} \vee \sqn{\rt}{\dip}\mathbin<\dsn \vee (\sqn{\rt}{\dip}\mathbin=\dsn\wedge(\dhops{\rt}{\dip}\mathbin>\hops\mathord+1 \vee\status{\rt}{\dip}\mathbin=\inval))$
\algorithmicthen\ .\,\footnote{
In case $\dip\not\in\kD{\rt}$, the terms
  $\dhops{\rt}{\dip}$ and $\status{\rt}{\dip}$ are not defined.
  In such a case, according to the convention of Footnote~\ref{fn:undefvalues} in
  \Sect{process_algebra}, the atomic formulas
  $\dhops{\rt}{\dip}\mathbin>\hops\mathord+1$ and $\status{\rt}{\dip}\mathbin=\inval$
  evaluate to {\tt false}. However, in case one would use lazy evaluation of the outermost disjunction,
  the evaluation of the expression would be  independent of the choice
  of a convention for interpreting undefined terms appearing in formulas.\label{fn:lazy}
  }}
 \end{algorithmic}\pagebreak[3]

  \algsetup{linenodelimiter=.,linenosize=\tiny}
  \begin{algorithm}[H]
    {\footnotesize
      \caption{RREP handling}
      \label{pro:rrep}
      \begin{algorithmic}[1]
\DEFPROCESS{\RREP}{\hops\comma\dip\comma\dsn\comma\oip\comma\sip\,\comma\,\ip\comma\sn\comma\rt\comma\rreqs\comma\queues}
	\IF[the routing table has to be updated]{$\rt\not=\upd{\rt}{(\dip\comma\dsn\comma\kno\comma\val\comma\hops+1\comma\sip\comma\emptyset)}$}						\label{rrep:line3}
		\UPD{\rt:=\upd{\rt}{(\dip\comma\dsn\comma\kno\comma\val\comma\hops+1\comma\sip\comma\emptyset)}}			\label{rrep:line5}
		\PAR\label{rrep:line5a}
		\IF[this node is the originator of the corresponding RREQ]{$\oip = \ip$}																	\label{rrep:line6}
			\COMLINE{a packet may now be sent; this is done in the process \AODV}
			\aodvL{\ip}{\sn}{\rt}{\rreqs}{\queues}																													\label{rrep:line8}
		\ELSIF[this node is not the originator; forward RREP]{$\oip \not= \ip$}																	\label{rrep:line9}
			\PAR
				\IF[valid route to \oip]{$\oip\in\akD{\rt}$}																											\label{rrep:line11}
					\COMLINE{add next hop towards $\oip$ as precursor and forward the route reply}									\label{rrep:line12}									
					\UPD{\rt := \addprecrt{\rt}{\dip}{\{\nhop{\rt}{\oip}\}}}																						\label{rrep:line12a}									
					\UPD{\rt := \addprecrt{\rt}{\nhop{\rt}{\dip}}{\{\nhop{\rt}{\oip}\}}}																		\label{rrep:line12b}
					\STARTPRIO
						\unicast{\nhop{\rt}{\oip}}{\rrep{$\hops+1$}{\dip}{\dsn}{\oip}{\ip}}\ .															\label{rrep:line13}
						\aodvL{\ip}{\sn}{\rt}{\rreqs}{\queues}																										\label{rrep:line14}
					\PRIO
						\COMspec{If the transmission is unsuccessful, a RERR message is generated}
						\UPD{\dests:=\{(\rip,\inc{\sqn{\rt}{\rip}})\,|\,\rip\in\akD{\rt}\ans \nhop{\rt}{\rip}=\nhop{\rt}{\oip}\}}					\label{rrep:line16}
						\UPD{\rt:=\inv{\rt}{\dests}}																														\label{rrep:line18}							
												\UPD{\queues:=\setrrf{\queues}{\dests}}																		\label{rrep:line16a}
						\UPD{\pre:=\bigcup\{\precs{\rt}{\rip}\,|\,(\rip,*)\in\dests\}}																			\label{rrep:line17}
						\UPD{\dests:=\{(\rip,\rsn)\,|\,(\rip,\rsn)\in\dests\ans \precs{\rt}{\rip}\not=\emptyset\}}									\label{rrep:line17a}
						\groupcast{\pre}{\rerr{\dests}{\ip}}\ .\ \aodv{\ip}{\sn}{\rt}{\rreqs}{\queues} 												\label{rrep:line20}
					\ENDPRIO
				\ELSIF[no valid route to \oip]{$\oip\not\in\akD{\rt}$}																						\label{rrep:line21}
					\aodvL{\ip}{\sn}{\rt}{\rreqs}{\queues}																											\label{rrep:line21a}
				\ENDIFii																																						\label{rrep:line22}
			\ENDPAR																																							\label{rrep:line23}
		\ENDIFii	
		\ENDPAR																																								\label{rrep:line23a}
	\ELSIF[the routing table is not updated]{$\rt=\upd{\rt}{(\dip\comma\dsn\comma\kno\comma\val\comma\hops+1\comma\sip\comma\emptyset)}$}							\label{rrep:line25}
		\aodvL{\ip}{\sn}{\rt}{\rreqs}{\queues}																														\label{rrep:line26}
	\ENDIFii
	\end{algorithmic}
    }
  \end{algorithm}

\vspace{-2mm}

In case that one of these conditions is true, the routing table is
updated in Line~\ref{rrep:line5}.
If the node is the intended addressee of the route reply ($\oip=\ip$) 
the protocol returns to its basic process $\AODV$.
Otherwise ($\oip\not=\ip$) the message should be forwarded.
Following the RFC~\cite{rfc3561},
``If the current node is not the node indicated by the Originator IP Address
in the RREP message AND a forward route has been created or
updated {[\dots]}, the node consults its route table entry
for the originating node to determine the next hop for the RREP
packet, and then forwards the RREP towards the originator using the
information in that route table entry.''
This action needs a valid route to the originator $\oip$ of the route
request to which the current message is a reply ($\oip\in\akD{\rt}$, Line~\ref{rrep:line11}).
The content of the RREP message to be sent is mostly copied from the RREP received;
only the sender has to be changed (it is now the node's $\ip$) and the
hop count is incremented.
Prior to the unicast, the node $\nhop{\rt}{\oip}$, to which the message is
sent, is added to the list of precursors for the routes to $\dip$
(Line~\ref{rrep:line12a}) and to the next hop on the route to $\dip$ (Line~\ref{rrep:line12b}).
Although not specified in the RFC, it would make sense to also add a precursor to the reverse
route by 
$\assignment{\rt := \addprecrt{\rt}{\oip}{\{\nhop{\rt}{\dip}\}}}$\label{pg:small_precursor_improvement}.
As usual, if the unicast fails, the affected routing table entries are invalidated and the precursors of all
routes using the broken link are determined and an error message is sent (Lines~\ref{rrep:line16}--\ref{rrep:line20}).
In the unlikely situation that a reply should be forwarded but no valid route is known by the node,
nothing happens.
Following the RFC, no precursor has to be notified and no error message
has to be sent---even if there is an invalid route.

If a forward routing table entry is not created nor updated, 
the reply is silently ignored and the basic process is called (Lines~\ref{rrep:line25}--\ref{rrep:line26}).

\subsection{Receiving Route Errors}

The process
$\RERR$ models the part of AODV which handles error messages.
An error message consists of a set $\dests$ of pairs of
an unreachable destination IP address $\rip$ and
the corresponding unreachable destination sequence number $\rsn$.

  \algsetup{linenodelimiter=.,linenosize=\tiny}
  \begin{algorithm}[H]
    {\footnotesize
      \caption{RERR handling}
      \label{pro:rerr}
      \begin{algorithmic}[1]
\DEFPROCESS{\RERR}{\dests\comma\sip\,\comma\,\ip\comma\sn\comma\rt\comma\rreqs\comma\queues}
		\COMLINE{invalidate broken routes}												\label{rerr:line4}
		\UPD{\dests:=\{(\rip,\rsn)\,|\,(\rip,\rsn)\in\dests\ans\rip\in\akD{\rt}\ans \nhop{\rt}{\rip}=\sip\ans \sqn{\rt}{\rip}<\rsn\}}			\label{rerr:line2}
		\UPD{\rt:=\inv{\rt}{\dests}}															\label{rerr:line5}
		\UPD{\queues:=\setrrf{\queues}{\dests}}\label{rerr:line5a}
		\COMLINE{forward the RERR to all precursors for \rt\ entries for broken connections}			\label{rerr:line1}
		\UPD{\pre:=\bigcup\{\precs{\rt}{\rip}\,|\,(\rip,*)\in\dests\}}									\label{rerr:line3}
		\UPD{\dests:=\{(\rip,\rsn)\,|\,(\rip,\rsn)\in\dests\ans \precs{\rt}{\rip}\not=\emptyset\}}				\label{rerr:line3a}
		\groupcast{\pre}{\rerr{\dests}{\ip}}\ . \aodv{\ip}{\sn}{\rt}{\rreqs}{\queues}						\label{rerr:line6}

	\end{algorithmic}
    }
  \end{algorithm}

\vspace{-2mm}

If a node receives an AODV error message from a neighbour for one or
more valid routes, it has---under some conditions---to invalidate the entries for those routes
in its own routing table and forward the error message.  The node
compares the set $\dests$ of unavailable destinations from the
incoming error message with its own entries in the routing table.  If
the routing table lists a valid route with a $(\rip,\rsn)$-combination
from $\dests$ and if the next hop on this route is the sender $\sip$
of the error message, this entry may be affected by the error message. 
In our formalisation, we have added the requirement
\highlight{$\sqn{\rt}{\rip}<\rsn$}, saying that
the entry is affected by the error message only if the ``incoming'' sequence number
is larger than the one stored in the routing table, meaning
that it is based on fresher information.\footnote{This additional
    requirement is in the spirit of Section 6.2 of the RFC~\cite{rfc3561} on
    updating routing table entries, but in contradiction with Section
    6.11 of the RFC on handling $\RERR$ messages. In
    \Sect{interpretation} we will show that the reading of Section
    6.11 of the RFC gives rise to routing loops.}
In this case, the entry has to be invalidated and all precursors of this particular route have
to be informed. This has to be done for all affected routes.

In fact, the process first determines all $(\rip,\rsn)$-pairs that have effects on its
own routing table and that may have to be forwarded as content of a new error message
(Line~\ref{rerr:line2}). After that, all entries to unavailable destinations are invalidated
(Line~\ref{rerr:line5}), and 
as usual when routing table entries are invalidated, the \penFlag{}s are set to {\nonpen} (Line~\ref{rerr:line5a}).
In Line~\ref{rerr:line3} the set of all precursors (affected
neighbours) of the unavailable destinations are summarised in the set $\pre$.
Then, the set {\dests} is ``thinned out'' to only those destinations that have at least one precursor---
only these destinations are transmitted in the forwarded error message
(Line~\ref{rerr:line3a}).
Finally, the message is sent (Line~\ref{rerr:line6}).

\subsection{The Message Queue and Synchronisation}\label{ssec:message_queue}

We assume that any message sent by a node \dval{sip} to a node
\dval{ip} that happens to be within transmission range of \dval{sip}
is actually received by \dval{ip}.  For this reason, \dval{ip} should
always be able to perform a receive action, regardless of which state
it is in.
However, the main process {\AODV} that runs on the node \dval{ip} can
reach a state, such as {\PKT}, {\RREQ}, {\RREP} or {\RERR}, in which
it is not ready to perform a receive action. For this reason we
introduce a process $\QMSG$, modelling a message queue,
\begin{table}[htb]\vspace{-2ex}
  \algsetup{linenodelimiter=.,linenosize=\tiny}
  \begin{algorithm}[H]
    {\footnotesize
      \caption{Message queue}
      \label{pro:queues}
      \begin{algorithmic}[1]
\DEFPROCESS{\QMSG}{\msgs}
	\IFempty
		\COMLINE{store incoming message at the end of \msgs}			\label{queues:line1}
		\receiveL{\msg}\ . 			
		\Qmsg{\append{\msg}{\msgs}}								\label{queues:line2}
	\ELSIF[the queue is not empty]{$\msgs\not=[\,]$}					\label{queues:line3}
		\PAR
		\COMLINE{pop top message and send it to another sequential process}								\label{queues:line4}
		\sendL{\head{\msgs}}\ .\ \Qmsg{\tail{\msgs}}					\label{queues:line5}
		\COMLINE{or receive and store an incoming message}				\label{queues:line7}
		\STATE $+$\,  \receive{\msg}\ . \Qmsg{\append{\msg}{\msgs}}                     \label{queues:line8}
		\ENDPAR
	\ENDIFii

	\end{algorithmic}
    }
  \end{algorithm}

\vspace{-2ex}\end{table}
that runs in parallel with {\AODV} or any other process that might be called.
Every incoming message is first stored in this queue, and piped from
there to the process {\AODV}, whenever {\AODV} is ready to handle a
new message. The process {\QMSG} is always ready to receive a new
message, even when {\AODV} is not.  The whole parallel process running
on a node is then given by an expression of the form
\[
(\xi,\aodv{\ip}{\sn}{\rt}{\rreqs}{\queues})\ \parl\ (\xii,\Qmsg{\msgs})
\ .\]

\subsection{Initial State}\label{ssec:initial}

\index{initial state}%
To finish our specification, we have to define an initial state.  The
initial network expression is an encapsulated parallel composition of
node expressions $\dval{ip}:P:R$, where the (finite) number of nodes
and the range $R$ of each node expression is left unspecified (can be
anything). However, each node in the parallel composition is required
to have a unique IP address \dval{ip}.  The initial process $P$ of
\dval{ip} is given by the expression
$(\xi,\aodv{\ip}{\sn}{\rt}{\rreqs}{\queues})\ \parl\ (\xii,\Qmsg{\msgs})$,
with
\begin{equation}\label{eq:initialstate_rt}
\xi(\ip)=\dval{ip}
\ans
\xi(\sn)=1
\ans
\xi(\rt)=\emptyset
\ans
\xi(\rreqs)=\emptyset
\ans
\xi(\queues)=\emptyset
\ans
\xii(\msgs)=[\,]\ .
\end{equation}
This says that initially each node is correctly informed about its own
identity; its own sequence number is initialised with $1$ and its
routing table, the list of RREQs seen, the store of queued data
packets as well as the message queue are empty.

\section{Invariants}\label{sec:invariants}

\newcommand{\hopsc}{\dval{hops}_c}
\newcommand{\dipc}{\dval{dip}_{\hspace{-1pt}c}}
\newcommand{\ripc}{\dval{rip}_{\hspace{-1pt}c}}
\newcommand{\oipc}{\dval{oip}_{\hspace{-1pt}c}}
\newcommand{\rreqidc}{\dval{rreqid}_{c}}
\newcommand{\dsnc}{\dval{dsn}_c}
\newcommand{\rsnc}{\dval{rsn}_c}
\newcommand{\destsc}{\dval{dests}_c}
\newcommand{\osnc}{\dval{osn}_c}
\newcommand{\ipc}{\dval{ip}_{\hspace{-1pt}c}}
\newcommand{\xiN}[2][N]{\xi_{#1}^{#2}}
\newcommand{\zetaN}[2][N]{\zeta_{#1}^{#2}}
\newcommand{\RN}[2][N]{R_{#1}^{#2}}
\newcommand{\PN}[2][N]{P_{#1}^{#2}}

Using our process algebra for wireless mesh networks and the proposed
model of AODV we can now formalise and prove crucial properties of
AODV\@.  In this section we verify properties that can be expressed as
\index{invariants}%
invariants, i.e., statements that hold all the time when the protocol
is executed.

The most important invariant we establish is \emph{loop freedom}; most
prior results can be regarded as stepping stones towards this goal.
Next to that we also formalise and discuss \emph{route correctness}.

\subsection{State and Transition Invariants}\label{ssec:transition invariants}

A \emph{(state) invariant} is a statement that holds for all reachable
\index{state}%
states of our model.  Here states are network expressions, as defined
in \SSect{networks}.  An invariant is usually verified by showing that it holds
\index{transition}%
for all possible initial states, and that, for any transition
$N\ar{\ell}N'$ between (encapsulated) network expressions derived by our operational
semantics, if it holds for state $N$ then it also holds for state $N'$.

Besides (state) invariants, we also establish statements we call
\emph{transition invariants}.  A transition invariant is a statement
that holds for each reachable transition $N\ar{\ell}N'$ between (encapsulated)
network expressions derived by the operational semantics (Table~\ref{tab:sos network}).
In establishing a transition invariant for a
particular transition, we usually assume it has already been obtained
for all \emph{prior transitions}, those that occurred beforehand.
Since the transition system generated by our operational semantics may
have cycles, we need to give a well-founded definition of
``beforehand''.  To this end we treat a statement about a transition
as one about a \emph{transition occurrence}, defined as a path in our
transition system, stating in an initial state, and ending with the
transition under consideration. This way the induction is performed on
the length of such a path. We speak of \hypertarget{induction-on-reachability}
{\phrase{induction on reachability}}.

To facilitate formalising transition invariants, we present a taxonomy of
the transitions that can be generated by our operational semantics,
along with some notation: the label $\ell$ of a transition $N\ar{\ell}N'$ can be either
$\textbf{connect}(\dval{ip},\dval{ip}')$,
$\textbf{disconnect}(\dval{ip},\dval{ip}')$,
$\colonact{\dval{ip}}{\textbf{newpkt}(\dval{d},\dval{dip})}$,
$\colonact{\dval{ip}}{\deliver{\dval{d}}}$ or $\tau$.
We are most interested in the last case.
A transition $N\ar{\tau}N'$ either arises from a transition
$\colonact{R}{\starcastP{m}}$ performed by a network node \dval{ip},
synchronising with receive actions of all nodes $\dval{dip}\in R$
in transmission range, or stems from a $\tau$-transition of a network node \dval{ip}.

In the former case, we write $N\ar{R:\starcastP{\dval{m}}}_\dval{ip}N'$.
This means that $N=[M]$ and $N'=[M']$ are network expressions such
that $M\ar{R:\starcastP{m}}M'$, and the cast action is performed by
node \dval{ip}.  This transition originates from an action
$\broadcastP{\dexp{ms}}$, $\groupcastP{\dexp{dests}}{\dexp{ms}}$, or
$\unicast{\dexp{dest}}{\dexp{ms}}$ (cf.\ \Sect{process_algebra}). Each
such action can be identified by a line number in one of the processes
of \Sect{modelling_AODV}.

In the latter case, a $\tau$-transition of a node \dval{ip} stems
either from a failed unicast, an evaluation $[\varphi]$, an assignment
$\assignment{\keyw{var}:=\dexp{exp}}$, or a synchronisation of two
actions $\send{\dexp{ms}}$ and $\receive{\msg}$ performed by
sequential processes running on that node.  In our model these
processes are \AODV\ and \QMSG, and these actions can also be
identified by line numbers in the processes of \Sect{modelling_AODV}.

The following observations are crucial in establishing many of our invariants.
\begin{prop}\label{prop:preliminaries}\rm~
\begin{enumerate}[(a)]
\item\label{it:preliminariesi} With the exception of new packets that
  are submitted to a node by a client of AODV, every message received and handled by the
  main routine of AODV has to be sent by some node before.\label{before}
  More formally, we consider an arbitrary
  path $N_0\ar{\ell_1}N_1\ar{\ell_2} \ldots \ar{\ell_k} N_k$ with
  $N_0$ an initial state in our model of AODV\@. If the transition
  $N_{k-1}\ar{\ell_k} N_k$ results from a synchronisation involving the
  action $\receive{\msg}$ from Line~\ref{aodv:line2} of
  Pro.~\ref{pro:aodv}---performed by the node \dval{ip}---, where the
  variable {\msg} is assigned the value $m$, then either
  $m=\newpkt{\dval{d}}{\dval{dip}}$ or one of the $\ell_i$ with $i<k$
  stems from an action $\starcastP{m}$ of a node $\dval{ip}'$ of the network.
\item\label{it:preliminariesii} No node can receive a message directly from itself.
  Using the formalisation above, we must have $\dval{ip}\neq\dval{ip}'$.
\end{enumerate}
\end{prop}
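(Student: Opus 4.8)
The plan is to prove both parts by \emph{induction on reachability} (as introduced above), inspecting the operational rules of Tables~\ref{tab:sos}, \ref{tab:sos node} and~\ref{tab:sos network}. The crucial preliminary observation concerns how a message reaches the main routine: by the rules for $\parl$ (Table~\ref{tab:sos}), the action $\receive{\msg}$ of Line~\ref{aodv:line2} of $\AODV$ cannot occur on its own, but only by synchronising with a matching $\send{m}$ of the $\QMSG$ process running in parallel on the same node. Hence the value $m$ assigned to $\msg$ is precisely the message that $\QMSG$ hands over, which---inspecting $\QMSG$---is the head $\fnhead(\msgs)$ of its message queue. In particular, $m$ is an element of $\msgs$ in the state $N_{k-1}$.

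For part~(\ref{it:preliminariesi}), I would first establish an auxiliary invariant on the message queue: in every reachable state, each message stored in the variable $\msgs$ of any node is either of the form $\newpkt{\dval{d}}{\dval{dip}}$, or was cast by some node in a preceding transition (i.e.\ some earlier $\ell_i$ stems from an action $\starcastP{m}$). This is proved by induction on reachability. The base case is immediate since initially $\msgs=[\,]$ (\SSect{initial}). For the inductive step, messages leave $\msgs$ only via $\fnhead$/$\fntail$, which trivially preserves the property, and enter $\msgs$ only via an $\fnappend$ triggered by a $\receive{m}$ action of $\QMSG$. At the level of node expressions such a receive becomes an $\listen{m}$ action (Table~\ref{tab:sos node}), and by the encapsulation rules of Table~\ref{tab:sos network} an $\listen{m}$ can occur only either by synchronising with a $\starcastP{m}$ of another node---supplying the required prior $\starcastP{m}$---or, in the sole unmatched case admitted by $[\_]$, as the arrival of a $\newpkt{\dval{d}}{\dval{dip}}$ injected by a client. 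This exhausts the ways a message enters the queue, so the invariant holds, and part~(\ref{it:preliminariesi}) follows by combining it with the preliminary observation.

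For part~(\ref{it:preliminariesii}), I would sharpen the analysis of the transition in which $m$ entered node $\dval{ip}$'s queue. In the non-$\newpkt$ case this was an $\listen{m}$ at $\dval{ip}$ synchronising, via the rules of Table~\ref{tab:sos network}, with a $\starcastP{m}$ originating from node $\dval{ip}'$. At the level of node expressions (Table~\ref{tab:sos node}), $\dval{ip}'$ performs a $\starcastP{m}$ transition, whereas $\dval{ip}$ performs an $\listen{m}$ transition in which the message genuinely arrives, i.e.\ with $\dval{ip}$ lying in the recipient set $H$. Since in any single synchronised network transition each node performs exactly one node-level action, and the unique node originating the cast performs a $\starcastP{}$ rather than an arrive, the casting node $\dval{ip}'$ cannot itself be among the recipients; as all nodes carry distinct addresses, $\dval{ip}\neq\dval{ip}'$.

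The main obstacle I anticipate is not any single rule application but the bookkeeping that separates \emph{when} the message was cast from \emph{when} it is handled: because a message may sit in $\QMSG$'s queue across many transitions, a one-step argument does not suffice, and the queue invariant of part~(\ref{it:preliminariesi}) is exactly what bridges this gap. The one delicate point is to make precise that the message is carried \emph{unchanged} along the chain $\starcastP{m}$/$\listen{m}$, $\fnappend$, $\fnhead$, $\send{m}$, $\receive{m}$, so that the $m$ eventually handled in Line~\ref{aodv:line2} is literally the $m$ that was cast; this rests on the fact that synchronisation matches identical messages and that $\fnappend$ and $\fnhead$ neither alter nor reorder-into-corruption the stored values.
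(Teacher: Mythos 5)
Your proposal is correct and follows essentially the same route as the paper's proof: Line~\ref{aodv:line2} can only fire by synchronising with $\send{m}$ of \QMSG, the message was popped from a queue that started out empty, so \QMSG\ must earlier have performed $\receive{m}$, which the encapsulation operator blocks except for $\newpkt{\dval{d}}{\dval{dip}}$ or a synchronisation with a $\starcastP{m}$ of another node (the latter also yielding $\dval{ip}\neq\dval{ip}'$). Your explicit queue invariant and induction on reachability merely spell out in detail what the paper's terser argument leaves implicit.
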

\begin{proof}
The only way Line~\ref{aodv:line2} of Pro.~\ref{pro:aodv} can be
executed, is through a synchronisation of the main process \AODV\ with
the message queue \QMSG\ (Pro.~\ref{pro:rerr}) running on the same
node. This involves the action $\send{m}$ of \QMSG\@.  Here $m$ is
popped from the message queue {\msgs}, which started out empty. So at
some point \QMSG\ must have performed the action $\receive{m}$. However,
this action is blocked by the encapsulation operator $[\_]$ of
Table~\ref{tab:sos network}, except when $m$ has the form
$\newpkt{\dval{d}}{\dval{dip}}$ or when it synchronises with an action
$\starcastP{m}$ of another node.
\end{proof}
At first glance Part\eqref{it:preliminariesii} does not seem to reflect reality. Of course, an application running on a local node has to be able to send 
data packets to another application running on the same node. However,
in any practical implementation, when a node sends a message to itself, the message 
will be delivered to the corresponding application on the local node without ever being ``seen'' by AODV 
or any other routing protocol. Therefore,
from AODV's perspective, no node can receive a message (directly) from itself.

\subsection{Notions and Notations}
Before formalising and proving invariants, we introduce some useful notions and notations.

All processes except $\QMSG$ maintain the five data variables {\ip}, {\sn},
{\rt}, {\rreqs} and {\queues}. Next to that $\QMSG$ maintains the variable $\msgs$.
Hence, these $6$ variables can be evaluated at any time.
Moreover, every node expression in the transition system looks like
\[
\dval{ip}:\left(\xi,P\ \parl\ \xii,\Qmsg{\msgs}\right):R
\ ,\]
where $P$ is a state in one of the following sequential processes:

\begin{tabular}{@{}l}
$\aodv{\ip}{\sn}{\rt}{\rreqs}{\queues}$\ ,\\[0.5mm]
$\newpktPL{\data}{\dip}{\ip}{\sn}{\rt}{\rreqs}{\queues}$\ ,\\[0.5mm]
$\pktPL{\data}{\dip}{\oip}{\ip}{\sn}{\rt}{\rreqs}{\queues}$\ ,\\[0.5mm]
$\rreqPL{\hops}{\rreqid}{\dip}{\dsn}{\dsk}{\oip}{\osn}{\sip}{\ip}{\sn}{\rt}{\rreqs}{\queues}$\ ,\\[0.5mm]
$\rrepPL{\hops}{\dip}{\dsn}{\oip}{\sip}{\ip}{\sn}{\rt}{\rreqs}{\queues}$ or\\[0.5mm]
$\rerrPL{\dests}{\sip}{\ip}{\sn}{\rt}{\rreqs}{\queues}$\ .\\[0.5mm]
\end{tabular}

\noindent Hence the state of the transition system for a node $\dval{ip}$
is determined by
the process $P$,
the range $R$, and
the two valuations $\xi$ and $\xii$.
If a network consists of a (finite) set $\IP\subseteq\tIP$ of nodes, a
reachable network expression $N$ is an encapsulated parallel composition
of node expressions---one for each $\dval{ip}\in\IP$.
In this section, we assume $N$ and $N'$ to be reachable
network expressions in our model of AODV\@.
To distill current information about a node from $N$,
we define the following projections:

\begin{tabular}{@{}l@{\,$:=$\,}l@{\ where\ \,}l@{\,:\,}c@{\,:\,}l@{\,\ is a node expression of $N$}l}
$\PN{\dval{ip}}$       &$ P$,        & $\dval{ip}$ & $(*,P\parl *,*)$       & $*$   &\ ,\\[0.5mm]
$\RN{\dval{ip}}$       &$R$,         & $\dval{ip}$ & $(*,*\parl *,*)$        & $R$  &\ ,\\[0.5mm]
$\xiN{\dval{ip}}$       &$ \xi$,      & $\dval{ip}$  & $(\xi,*\parl *,*)$     & $*$   &\ ,\\[0.5mm]
$\zetaN{\dval{ip}}$   &$\zeta$,   & $\dval{ip}$   & $(*,*\parl \zeta,*)$ & $*$   &\ .
\end{tabular}

\noindent
For example,
$\PN{\dval{ip}}$ determines the sequential process the node is currently working in,
$\RN{\dval{ip}}$ denotes the set of all nodes currently within transmission range of $\dval{ip}$, and 
$\xiN{\dval{ip}}(\rt)$ evaluates the current routing table maintained by node \dval{ip} in the network expression $N$.
In the forthcoming proofs, when discussing the effects of an action,
identified by a line number in one of the processes of our model, $\xi$ denotes the current valuation
\plat{$\xiN{\dval{ip}}$}, where \dval{ip} is the address of
the local node, executing the action under consideration, and
$N$ is the network expression obtained right before this action occurs, corresponding
with the line number under consideration. When consider the effects of
several actions, corresponding to several line numbers, $\xi$ is
always interpreted most locally. For instance, in the proof of \Prop{msgsending}\eqref{prop:msgsendingRREQ},
case \hyperlink{forinstance}{\textbf{Pro.~\ref*{pro:rreq}, Line~\ref*{rreq:line34}}}, we write
\begin{quote}
  Hence \ldots $\ipc:=\xi(\ip)=\dval{ip}$ and $\xiN{\ipc}=\xi$ (by \Eq{uniqueidwithxi}).
  At Line~\ref{rreq:line6} we update the routing table using
  $\dval{r}:=\xi(\oip,\osn,\kno,\val,\hops\mathord+1,\sip,\emptyset)$
  as new entry.  The routing table does not change between
  Lines~\ref{rreq:line6} and~\ref{rreq:line34}; nor do the values of
  the variables {\hops}, {\oip} and {\osn}.
\end{quote}
Writing $N_k$ for a network expression in which the local node \dval{ip} is about to
execute Line~$k$, this passage can be reworded as
\begin{quote}
  Hence \ldots $\ipc:=\xiN[N_{\ref*{rreq:line34}}]{\dval{ip}}(\ip)=\dval{ip}$ and
  $\xiN[N_{\ref*{rreq:line34}}]{\ipc}=\xiN[N_{\ref*{rreq:line34}}]{\dval{ip}}$
  (by \Eq{uniqueidwithxi}).\\
  $\xiN[N_{\ref*{rreq:line8}}]{\dval{ip}}(\rt)\begin{array}[t]{@{~:=~}l@{}}
   \xiN[N_{\ref*{rreq:line6}}]{\dval{ip}}(\upd{\rt}{(\oip,\osn,\kno,\val,\hops\mathord+1,\sip,\emptyset)})\\
   \upd{\xiN[N_{\ref*{rreq:line6}}]{\dval{ip}}(\rt)}{(\xiN[N_{\ref*{rreq:line6}}]{\dval{ip}}(\oip),\xiN[N_{\ref*{rreq:line6}}]{\dval{ip}}(\osn),\ldots)}.
   \end{array}$\\
  \makebox[10pt][l]{$\xiN[N_{\ref*{rreq:line8}}]{\dval{ip}}(\rt)=\xiN[N_{\ref*{rreq:line34}}]{\dval{ip}}(\rt) \wedge
   \xiN[N_{\ref*{rreq:line6}}]{\dval{ip}}(\hops)=\xiN[N_{\ref*{rreq:line34}}]{\dval{ip}}(\hops) \wedge
   \xiN[N_{\ref*{rreq:line6}}]{\dval{ip}}(\oip)  =\xiN[N_{\ref*{rreq:line34}}]{\dval{ip}}(\oip) \wedge
   \xiN[N_{\ref*{rreq:line6}}]{\dval{ip}}(\osn)  =\xiN[N_{\ref*{rreq:line34}}]{\dval{ip}}(\osn)$.}
\end{quote}
In all of case \hyperlink{forinstance}{\textbf{Pro.~\ref*{pro:rreq}, Line~\ref*{rreq:line34}}}, through the
statement of the proposition, $N$ is bound to $N_{\ref*{rreq:line34}}$,
so that $\xiN{\dval{ip}}=\xiN[N_{\ref*{rreq:line34}}]{\dval{ip}}$.

In \SSect{rt} we have defined functions that work on evaluated
routing tables $\xiN{\dval{ip}}(\rt)$, such as $\fnnhop$.
To ease readability, we abbreviate
\plat{$\nhop{\xiN{\dval{ip}}(\rt)}{\dval{dip}}$} by \plat{$\nhp{\dval{ip}}$}.
Similarly, we use 
\plat{$\sq{\dval{ip}}$}, \plat{$\dhp{\dval{ip}}$}, \plat{$\sta{\dval{ip}}$}, \plat{$\sr{\dval{ip}}$}, 
\plat{$\kd{\dval{ip}}$}, \plat{$\akd{\dval{ip}}$}
and \plat{$\ikd{\dval{ip}}$} for
\plat{$\sqn{\xiN{\dval{ip}}(\rt)}{\dval{dip}}$}, \plat{$\dhops{\xiN{\dval{ip}}(\rt)}{\dval{dip}}$},
\plat{$\status{\xiN{\dval{ip}}(\rt)}{\dval{dip}}$,}
 \plat{$\selr{\xiN{\dval{ip}}(\rt)}{\dval{ip}}$},
\plat{$\kD{\xiN{\dval{ip}}(\rt)}$}, \plat{$\akD{\xiN{\dval{ip}}(\rt)}$} and \plat{$\ikD{\xiN{\dval{ip}}(\rt)}$}, respectively.

\subsection{Basic Properties}
In this section we show some of the most fundamental invariants for AODV. 
The first one is already stated in the RFC~\cite[Sect. 3]{rfc3561}.
\begin{prop}\rm
\label{prop:invarianti_itemiii}
Each sequence number of any given node $\dval{ip}$ increases monotonically, i.e., never decreases, 
and is never unknown.
  That is, for $\dval{ip}\mathbin\in\IP$, if $N \ar{\ell} N'$
  then $1\leq\xiN{\dval{ip}}(\sn)\leq\xiN[N']{\dval{ip}}(\sn)$.
\end{prop}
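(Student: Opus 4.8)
The plan is to establish this as a combined state-and-transition invariant, proved by induction on reachability in the sense of \SSect{transition invariants}. The statement bundles two claims about each node $\dval{ip}\in\IP$: a \emph{state} invariant $\xiN{\dval{ip}}(\sn)\geq 1$ (equivalently, the sequence number is never the reserved ``unknown'' value $0$), and a \emph{transition} invariant $\xiN{\dval{ip}}(\sn)\leq\xiN[N']{\dval{ip}}(\sn)$ asserting that $\sn$ never decreases along a step $N\ar{\ell}N'$. These two must be proved together, since the non-decrease argument for the increment case will rely on already knowing $\sn\neq 0$ in the source state. For the base case, by \eqref{eq:initialstate_rt} every node starts with $\xi(\sn)=1$, so $\sn\geq 1$ holds in every initial state.

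The key structural observation for the inductive step is that $\sn$ is a variable maintained \emph{locally} by node $\dval{ip}$, stored in its own valuation $\xiN{\dval{ip}}$; as the valuations of distinct nodes (and of the parallel components on a single node) are kept separate and cannot be shared, the value $\xiN{\dval{ip}}(\sn)$ can only be altered by a $\tau$-transition of $\dval{ip}$ itself that executes an assignment $\assignment{\sn:=\dexp{exp}}$. Consequently every other transition---all actions of other nodes, all $\textbf{connect}$/$\textbf{disconnect}$ steps, packet injection and delivery, and the remaining $\tau$-steps of $\dval{ip}$---leaves $\xiN{\dval{ip}}(\sn)$ unchanged, and for such transitions both conjuncts follow immediately from the induction hypothesis.

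It then remains to inspect every assignment to $\sn$ in the processes of \Sect{modelling_AODV}. Scanning the six processes that maintain $\sn$ ($\AODV$, $\NEWPKT$, $\PKT$, $\RREQ$, $\RREP$, $\RERR$) shows that $\sn$ is written in exactly two places: Line~\ref{aodv:line36} of $\AODV$, performing $\assignment{\sn:=\inc{\sn}}$, and Line~\ref{rreq:line12} of $\RREQ$, performing $\assignment{\sn:=\max(\sn,\dsn)}$. Invoking the induction hypothesis $\sn\geq 1$, so in particular $\sn\neq 0$, the definition of $\fninc$ gives $\inc{\sn}=\sn+1>\sn\geq 1$ in the first case, while $\max(\sn,\dsn)\geq\sn\geq 1$ in the second. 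In both cases the updated value is $\geq$ the old value and $\geq 1$, which is exactly what the two conjuncts require.

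The only genuine obstacle is the exhaustiveness of this scan: the argument is only as strong as the claim that Lines~\ref{aodv:line36} and~\ref{rreq:line12} are the \emph{sole} assignments to $\sn$. Care is needed because $\sn$ is threaded through the specification as a call parameter, so I would verify that every process invocation passes $\sn$ on unchanged except at those two lines---in particular that the various error-handling, forwarding and reply branches (which manipulate $\rt$, $\rreqs$ and $\queues$) never rewrite $\sn$. Once that bookkeeping is confirmed, the induction closes.
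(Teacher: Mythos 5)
Your proposal is correct and follows essentially the same route as the paper's proof: check the initial states (where $\sn=1$ by \eqref{eq:initialstate_rt}), observe that $\sn$ is modified only at Pro.~\ref{pro:aodv}, Line~\ref{aodv:line36} (via $\fninc$) and Pro.~\ref{pro:rreq}, Line~\ref{rreq:line12} (via $\max$), and conclude monotonicity and $\sn\geq 1$ from these two cases. The only minor quibble is your claim that the two conjuncts \emph{must} be proved simultaneously because non-decrease needs $\sn\neq 0$: by the definition of $\fninc$ one has $\inc{x}\geq x$ even for $x=0$, so the paper establishes monotonicity unconditionally and then derives $1\leq\xiN{\dval{ip}}(\sn)$ from the initial value---but your coupled induction is equally valid.
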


\begin{proof}~
In all initial states the invariant is satisfied, as 
all sequence numbers of all nodes are set to $1$
(see~\eqref{eq:initialstate_rt} in Section~\ref{ssec:initial}).
The Processes \ref{pro:aodv}--\ref{pro:queues} of
\Sect{modelling_AODV} change a node's sequence number
only through the functions
{\fninc} and $\max$.
This occurs at two places only:
\begin{description}
  \item[Pro.~\ref{pro:aodv}, Line~\ref{aodv:line36}:]
    Here $\xiN{\dval{ip}}(\sn)\leq
    \inc{\xiN{\dval{ip}}(\sn)} = \xiN[N']{\dval{ip}}(\sn)$.
  \item[Pro.~\ref{pro:rreq}, Line~\ref{rreq:line12}:]
    Here $\xiN{\dval{ip}}(\sn)\leq
    \max({\xiN{\dval{ip}}(\sn)},*) = \xiN[N']{\dval{ip}}(\sn)$.
\end{description}
From this and the fact that all sequence numbers are initialised with
$1$ we get $1\leq\xiN{\dval{ip}}(\sn)$.
\end{proof}

The proof strategy used above can be generalised.

\begin{remark}\label{rem:remark}
Most of the forthcoming proofs can be done by showing the statement
for each initial state and then checking all locations in the
processes where the validity of the invariant is possibly changed.
Note that routing table entries are only changed by the functions
\hyperlink{update}{$\fnupd$}, \hyperlink{invalidate}{$\fninv$}
or \hyperlink{addprert}{$\fnaddprecrt$}.
Thus we have to show that an invariant dealing with routing tables is
satisfied after the execution of these functions if it was valid
before.  In our proofs, we go through all occurrences of these functions.
In case the invariant does not make statements about precursors, 
the function $\fnaddprecrt$ need not be considered.
\end{remark}

\begin{prop}\label{prop:destinations maintained}\rm
  The set of known destinations of a node increases monotonically.
  That is, for $\dval{ip}\mathbin\in\IP$,\linebreak[2] if $N \ar{\ell} N'$
  then $\kd{\dval{ip}}\subseteq\kd[N']{\dval{ip}}$.
\end{prop}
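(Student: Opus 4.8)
The plan is to regard the claim as a transition invariant and to verify it directly for each transition $N\ar{\ell}N'$, with no appeal to an induction hypothesis, following the strategy of Remark~\ref{rem:remark}: since $\kd{\dval{ip}}$ abbreviates $\kD{\xiN{\dval{ip}}(\rt)}$ and hence depends only on the routing table of $\dval{ip}$, I would locate every point in Processes~\ref{pro:aodv}--\ref{pro:queues} at which a routing table is written and confirm that the set of known destinations cannot shrink there.

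First I would dispose of the trivial cases. A node's routing table is altered only by that node's own sequential process, so for every node $\dval{ip}$ whose table is untouched by the transition we have $\xiN{\dval{ip}}(\rt)=\xiN[N']{\dval{ip}}(\rt)$ and hence $\kd{\dval{ip}}=\kd[N']{\dval{ip}}$. This already settles all nodes other than the one performing the step, together with all steps that do not write to $\rt$ at all---message receipt into the queue $\msgs$, the various cast actions, and $\textbf{connect}$/$\textbf{disconnect}$. It thus remains to inspect the unique node that rewrites its table, and by Remark~\ref{rem:remark} the only functions that can do so are $\fnupd$, $\fninv$ and $\fnaddprecrt$.

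Next I would treat these three functions. As the invariant says nothing about precursors, $\fnaddprecrt$ may be ignored by Remark~\ref{rem:remark}; in any case it merely rewrites the entry for a single destination into one with the same first component, so $\kD$ is preserved. For $\inv{\rt}{\dots}$ I would read off its definition that every entry $r\in\rt$ is either kept verbatim or replaced by an entry whose first component is again $\pi_{1}(r)$ (only the flag is forced to $\inval$ and the sequence number overwritten); hence the set of first components, i.e.\ $\kD{\rt}$, is preserved exactly. For $\upd{\rt}{r}$ I would run through the defining cases: the first case applies precisely when $\pi_{1}(r)\notin\kD{\rt}$ and inserts a fresh entry for $\pi_{1}(r)$, so $\kD$ strictly grows; in each of the remaining cases the former entry $\selr{\rt}{\pi_{1}(r)}$ is discarded and replaced by an entry again having first component $\pi_{1}(r)$, so $\kD$ is unchanged. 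In every case $\kd{\dval{ip}}\subseteq\kd[N']{\dval{ip}}$, which is the assertion.

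I expect no genuine obstacle: once the reduction of Remark~\ref{rem:remark} is invoked, the argument is a short case split on the definitions of $\fnupd$ and $\fninv$. The only point that warrants a moment's care is checking that neither function ever removes a destination from the table---that every surviving or rewritten entry retains its first (destination) component---which is immediate from their definitions.
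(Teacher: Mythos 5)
Your proposal is correct and follows the same idea as the paper, whose entire proof is the one-line observation that none of the functions used to change routing tables removes an entry altogether. Your case analysis of $\fnupd$, $\fninv$ and $\fnaddprecrt$ simply spells out that observation in detail.
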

\begin{proof}
None of the functions used to change routing tables removes an entry altogether.
\end{proof}

\begin{prop}\label{prop:rreqs increase}\rm
The set of already seen route requests of a node
increases monotonically.
  That is, for $\dval{ip}\mathbin\in\IP$, if $N \ar{\ell} N'$
  then $\xiN{\dval{ip}}(\rreqs)\subseteq\xiN[N']{\dval{ip}}(\rreqs)$.
\end{prop}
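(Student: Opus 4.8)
The plan is to follow the recipe of Remark~\ref{rem:remark}: the claim is a property of an arbitrary transition $N\ar{\ell}N'$, so it suffices to inspect every point in the processes of \Sect{modelling_AODV} at which the variable $\rreqs$ of a node $\dval{ip}$ can change, and to check that each such change can only enlarge the set. Since data variables are only ever altered by an assignment $\assignment{\keyw{var}:=\dexp{exp}}$, the whole task reduces to locating all assignments to $\rreqs$.

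First I would note that scanning the seven processes reveals that $\rreqs$ is written to in exactly two places. In Pro.~\ref{pro:aodv}, Line~\ref{aodv:line38b}, a freshly generated request is recorded via an update of the form $\rreqs := \rreqs \cup \{(\ip,\nrreqid{\rreqs}{\ip})\}$; and in Pro.~\ref{pro:rreq}, Line~\ref{rreq:line8}, an incoming (and now handled) request is recorded via $\rreqs := \rreqs \cup \{(\oip,\rreqid)\}$. In both cases the update has the shape $\rreqs := \rreqs \cup S$ for some set $S$, so the new value is a superset of the old one, giving $\xiN{\dval{ip}}(\rreqs)\subseteq\xiN[N']{\dval{ip}}(\rreqs)$. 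Every other transition $N\ar{\ell}N'$---whether it is a step of a different node, a $\textbf{connect}$/$\textbf{disconnect}$, a $\deliver{}$ or $\textbf{newpkt}$ action, or a step of $\dval{ip}$ that does not touch $\rreqs$---leaves $\xiN{\dval{ip}}(\rreqs)$ unchanged, so the inclusion holds with equality. (For completeness one may also remark that in any initial state $\xiN{\dval{ip}}(\rreqs)=\emptyset$ by~\eqref{eq:initialstate_rt}, although this is not needed for a single-step statement.)

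There is essentially no obstacle here: the entire content of the argument is the bookkeeping of confirming that these two lines are the only assignments to $\rreqs$ and that both are unions. This mirrors the proof of Prop.~\ref{prop:destinations maintained}, specialised to $\rreqs$ in place of the routing table, and the proof can be stated in a single sentence once the two update sites are identified.
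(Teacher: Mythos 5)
Your proof is correct and takes essentially the same approach as the paper, which simply observes that none of the functions used in the specification ever removes an entry from $\rreqs$; your version just makes the bookkeeping explicit by naming the two assignment sites (Pro.~\ref{pro:aodv}, Line~\ref{aodv:line38b} and Pro.~\ref{pro:rreq}, Line~\ref{rreq:line8}) and noting both are unions.
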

\begin{proof}
None of the functions used in the specification ever removes an entry 
from \rreqs.
\end{proof}

\begin{prop}\label{prop:dsn increase}\rm
In each node's routing table, the
  sequence number for any given destination increases monotonically, i.e., never decreases.
  That is, for $\dval{ip},\dval{dip}\mathbin\in\IP$, if $N \ar{\ell} N'$
  then $\sq{\dval{ip}}\leq\fnsqn_{N'}^\dval{ip}(\dval{dip})$.
\end{prop}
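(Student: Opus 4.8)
The plan is to apply the strategy of Remark~\ref{rem:remark}. For a fixed pair $\dval{ip},\dval{dip}\in\IP$, the value $\sq{\dval{ip}}=\sqn{\xiN{\dval{ip}}(\rt)}{\dval{dip}}$ can only change through a transition $N\ar{\ell}N'$ that modifies the routing table $\xiN{\dval{ip}}(\rt)$; for every other transition both sides of the asserted inequality coincide and the claim is immediate. Routing tables are altered only by $\fnupd$, $\fninv$ and $\fnaddprecrt$, and since $\fnaddprecrt$ touches only the precursor component $\pi_7$ of an entry, leaving $\pi_2$ (hence $\fnsqn$) unchanged, it can be ignored. It therefore remains to bound the effect of $\fnupd$ and $\fninv$ on the second component of the entry for $\dval{dip}$; entries for other destinations are left intact, so only the entry whose destination is actually being rewritten matters.

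First I would run through the six cases of $\upd{\rt}{\dval{r}}$, where $\dval{r}$ is the inserted route. If $\pi_1(\dval{r})\neq\dval{dip}$ nothing changes, so assume $\pi_1(\dval{r})=\dval{dip}$. In the first case there was no prior entry, so the old value $\sqn{\rt}{\dval{dip}}$ equals $0\leq\pi_2(\dval{r})$. In the second, third and fourth cases the guards imply $\sqn{\rt}{\dval{dip}}\leq\pi_2(\dval{r})$, and the resulting entry carries exactly the sequence number $\pi_2(\dval{r})$. In the fifth case ($\pi_3(\dval{r})=\unkno$) the new entry by construction keeps the old sequence number $\pi_2(\selr{\rt}{\dval{dip}})=\sqn{\rt}{\dval{dip}}$, and in the sixth case the retained entry is the old one with extra precursors, again of unchanged sequence number. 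In every case the new value is $\geq$ the old one, so $\fnupd$ preserves the invariant; this uniformly covers the routing-table updates at Lines~\ref{aodv:line10}, \ref{aodv:line14}, \ref{aodv:line18} and everywhere else $\fnupd$ is used.

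The main obstacle is $\fninv$, because $\inv{\rt}{\dests}$ overwrites the sequence number of each affected entry by the value $\dval{rsn}$ drawn from $\dests$, and nothing in the definition of $\fninv$ alone prevents $\dval{rsn}$ from being smaller than the stored number---exactly the danger already flagged in Section~\ref{sssec:invalidate}. The remedy is to examine how $\dests$ is built at every call site. Every invalidation triggered by a failed unicast (Pro.~\ref{pro:aodv} Line~\ref{aodv:line32}, Pro.~\ref{pro:pkt} Line~\ref{pkt2:line10}, and the analogous error-handling blocks of Pro.~\ref{pro:rreq} and Pro.~\ref{pro:rrep}) uses a set $\dests$ of pairs $(\dval{rip},\inc{\sqn{\rt}{\dval{rip}}})$, and $\inc{\sqn{\rt}{\dval{rip}}}\geq\sqn{\rt}{\dval{rip}}$ holds by the definition of $\fninc$. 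The single remaining call, in the $\RERR$ process (Pro.~\ref{pro:rerr}, Line~\ref{rerr:line5}), takes $\dests$ from the incoming error message, but restricted in Line~\ref{rerr:line2} by the added requirement $\sqn{\rt}{\dval{rip}}<\dval{rsn}$, so the overwritten value again strictly exceeds the stored one. Hence $\fninv$ never decreases $\fnsqn$ either, and the invariant follows. I would close by noting that it is precisely this filtering condition on $\RERR$ messages that secures monotonicity; dropping it (the literal Section~6.11 reading of the RFC) breaks the invariant and, as shown in \Sect{interpretation}, can introduce routing loops.
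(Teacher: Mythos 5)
Your proposal is correct and follows essentially the same route as the paper's proof: the paper asserts that only \hyperlink{invalidate}{$\fninv$} can decrease a sequence number and then checks its call sites, namely the error-handling invalidations built with $\fninc$ and the \Pro{rerr} case guarded by $\sqn{\rt}{\rip}<\rsn$ in Line~\ref{rerr:line2}. The only difference is that you spell out explicitly the case analysis on the clauses of \hyperlink{update}{$\fnupd$} (which the paper leaves implicit in its opening claim), so your write-up is a slightly more detailed version of the same argument.
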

\begin{proofNobox}
The only function that can decrease a sequence number is 
\hyperlink{invalidate}{$\fninv$}. 
When invalidating routing table entries using the function $\inv{\rt}{\dests}$, sequence numbers are copied from {\dests} to the corresponding entry in \rt. 
It is sufficient to show that for all \plat{$(\dval{rip},\dval{rsn})\in\xiN{\dval{ip}}(\dests)$}
$\sq[\dval{rip}]{\dval{ip}}\leq\dval{rsn}$, as all other sequence numbers in routing table
entries remain unchanged.
\begin{description}
  \item[Pro.~\ref{pro:aodv}, Line~\ref{aodv:line32}; Pro.~\ref{pro:pkt}, Line~\ref{pkt2:line10}; Pro.~\ref{pro:rreq}, Lines~\ref{rreq:line18}, \ref{rreq:line30}; Pro.~\ref{pro:rrep}, Line~\ref{rrep:line18}:]~\\
  The set {\dests} is constructed immediately before the invalidation procedure. For $(\dval{rip},\dval{rsn})\in\xiN{\dval{ip}}(\dests)$, we have
$
   \sq[\dval{rip}]{\dval{ip}} \leq \inc{\sq[\dval{rip}]{\dval{ip}}} = \dval{rsn}.
$
\item[Pro.~\ref{pro:rerr}, Line~\ref{rerr:line5}:] When constructing {\dests} in Line~\ref{rerr:line2}, the side condition $\xiN[N_{\ref*{rerr:line2}}]{\dval{ip}}(\sqn{\rt}{\rip})<\xiN[N_{\ref*{rerr:line2}}]{\dval{ip}}(\rsn)$ is taken into account, which immediately yields the claim for $(\dval{rip},\dval{rsn})\in\xiN{\dval{ip}}(\dests)$.\endbox
\end{description}
\end{proofNobox}

Our next invariant tells that each node is correctly informed about
its own identity.
\begin{prop}\rm\label{prop:self-identification}
For each $\dval{ip}\in\IP$ and each reachable state $N$ we have
$\xiN{\dval{ip}}(\ip)=\dval{ip}$.
\end{prop}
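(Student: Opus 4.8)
The plan is to establish this as a state invariant by \hyperlink{induction-on-reachability}{induction on reachability}, following the recipe of Remark~\ref{rem:remark}. For the base case, every initial state satisfies $\xiN{\dval{ip}}(\ip)=\dval{ip}$ directly by \eqref{eq:initialstate_rt}. Moreover, the address prefix $\dval{ip}$ of a node expression $\dval{ip}:P:R$ is never altered by any rule of Tables~\ref{tab:sos node} and~\ref{tab:sos network}: the \textbf{connect}/\textbf{disconnect} rules modify only $R$, and all remaining rules carry the address through unchanged. Hence it suffices to show that no transition alters the value $\xiN{\dval{ip}}(\ip)$ of the \emph{data variable} $\ip$ maintained by the process running on that node.

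For the inductive step I would argue that the variable $\ip$ is simply never modified. The valuation $\xi$ of a node can change only in four ways: through an assignment $\assignment{\keyw{var}:=\dexp{exp}}$, through binding $\msg$ in a $\receive{\msg}$, through extending $\xi$ on the free variables of a guard $\cond{\varphi}$, or through the re-evaluation accompanying a process call $X(\dexp{exp}_1\comma\ldots\comma\dexp{exp}_n)$. Inspecting Processes~\ref{pro:aodv}--\ref{pro:queues}, no assignment ever targets $\ip$, and the bound receive variable is always $\msg$ rather than $\ip$; a guard $\cond{\varphi}$ only valuates variables not already in the domain of $\xi$ (cf.\ the definition of $\xi\stackrel{\varphi}{\rightarrow}\xii$), and $\ip$ is always already defined, so it is left untouched.

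The only genuinely delicate point---and the step I expect to be the main obstacle---is the rule for process names, since its semantics recomputes the valuation as $\xi^\#$ with $\xi^\#(\keyw{var}_i)=\xi(\dexp{exp}_i)$, dropping all bindings of variables other than the formal parameters $\keyw{var}_i$. Here I would verify, by going through each of the six sequential processes and every call site, that the formal parameter $\ip$ always appears among the $\keyw{var}_i$ and that the corresponding actual argument is literally the expression $\ip$ (never $\sip$, $\nhip$, or another address expression). Consequently $\xi^\#(\ip)=\xi(\ip)$ at every call, so the invariant is carried across each process invocation as well. Combining this with the base case and the immutability of the address prefix yields $\xiN{\dval{ip}}(\ip)=\dval{ip}$ for all reachable $N$.
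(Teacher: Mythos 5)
Your proof is correct and takes essentially the same approach as the paper's: the claim holds in every initial state by \eqref{eq:initialstate_rt}, and no process ever contains an assignment changing the value of the variable $\ip$. The paper's own proof is just a two-line statement of this; your additional care about the process-name rule (checking that at every call site the formal parameter $\ip$ receives the literal actual argument $\ip$, so that $\xi^\#(\ip)=\xi(\ip)$) merely makes explicit a point the paper leaves implicit under ``no assignment changes $\ip$''.
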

\begin{proof}
According to \SSect{initial} the claim is assumed
to hold for each initial state, and none of our processes has an
assignment changing the value of the variable $\ip$.
\end{proof}
This proposition will be used implicitly in many of the proofs to follow.
In particular, for all $\dval{ip}',\dval{ip}''\mathbin{\in}\IP$%
\begin{equation}\label{eq:uniqueidwithxi}
\xiN{\dval{ip}'}(\ip)=\dval{ip}'' \ims \dval{ip}'=\dval{ip}''\ans \xiN{\dval{ip}'}=\xiN{\dval{ip}''}\ .
\end{equation}

Next, we show that every AODV control message contains the IP address of the sender.
\begin{prop}\label{prop:ip=ipc}\rm If an AODV control message is
  sent by node $\dval{ip}\in\IP$, the node sending this message
  identifies itself correctly:
\begin{equation*}
	N\ar{R:\starcastP{m}}_{\dval{ip}}N' \ims \dval{ip}=\ipc\ ,
\end{equation*}
where the message $m$ is either
$\rreq{*}{*}{*}{*}{*}{*}{*}{\ipc}$,
$\rrep{*}{*}{*}{*}{\ipc}$, or
$\rerr{*}{\ipc}$.
\end{prop}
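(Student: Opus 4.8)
The plan is to establish this not by induction on reachability but by a direct, finite inspection of the model, since the conclusion is decided entirely by the single transition that performs the cast. By the taxonomy of transitions in \SSect{transition invariants}, a transition $N\ar{R:\starcastP{m}}_{\dval{ip}}N'$ arises from exactly one of the process-algebra actions $\broadcastP{\dexp{ms}}$, $\groupcastP{\dexp{dests}}{\dexp{ms}}$ or $\unicast{\dexp{dest}}{\dexp{ms}}$, each occurring at an identifiable line of one of the processes of \Sect{modelling_AODV}. It therefore suffices to enumerate those casts whose transmitted message $m$ is an AODV control message---that is, is built using $\rreqID$, $\rrepID$ or $\rerrID$---and to read off the sender field in each.

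The decisive observation is uniform across all these cases: whenever a control message is emitted, it is constructed by applying $\rreqID$, $\rrepID$ or $\rerrID$ with the process variable $\ip$ supplied as the final (sender) argument $\sip$. Consequently the value occupying the $\sip$-slot of the emitted message is exactly $\xiN{\dval{ip}}(\ip)$, the local evaluation of $\ip$ at the casting node. Writing $\ipc$ for this sender field as in the statement, and invoking the already-established \Prop{self-identification}, we obtain $\ipc=\xiN{\dval{ip}}(\ip)=\dval{ip}$, which is precisely the claim.

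Concretely I would walk through every cast of a control message: the RREQ \textbf{broadcast}s (the freshly generated request initiated in Pro.~\ref{pro:aodv} and the forwarded request in Pro.~\ref{pro:rreq}); the RREP \textbf{unicast}s (the reply generated by the destination and the reply generated by an intermediate node, both in Pro.~\ref{pro:rreq}, together with the forwarded reply in Pro.~\ref{pro:rrep}); and the RERR \textbf{groupcast}s triggered by failed unicasts or undeliverable packets in Pros.~\ref{pro:aodv}, \ref{pro:pkt}, \ref{pro:rreq} and~\ref{pro:rrep}, plus the forwarded error message in Pro.~\ref{pro:rerr}. In each case one simply checks that the last argument handed to the message constructor is $\ip$, and the conclusion is immediate.

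The only genuine work---and the main place a slip could occur---is exhaustiveness: I must be certain to have collected every cast that emits a control message, while correctly excluding the casts that transmit data packets (the $\pktID$ messages forwarded in Pro.~\ref{pro:pkt}), about which the proposition asserts nothing. Because the constructors $\rreqID$, $\rrepID$ and $\rerrID$ are syntactically distinct from $\pktID$ and $\newpktID$, the verification reduces to scanning the processes for occurrences of these three constructors and confirming that each sits inside a \textbf{broadcast}, \textbf{groupcast} or \textbf{unicast} with $\ip$ in the sender position. No reasoning about prior transitions is needed; the statement follows from the single casting transition together with \Prop{self-identification}.
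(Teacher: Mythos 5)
Your proposal is correct and is essentially the paper's own argument: the paper likewise disposes of this proposition by observing that at every place where a control message is cast, $\xi(\ip)$ is supplied as the last argument, with \Prop{self-identification} (used implicitly there, explicitly by you) giving $\xiN{\dval{ip}}(\ip)=\dval{ip}$. Your added care about exhaustively enumerating the casts and excluding $\pktID$/$\newpktID$ messages is just a fuller spelling-out of the same finite inspection.
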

The proof is straightforward: whenever such a message is sent in
one of the processes of \Sect{modelling_AODV}, $\xi(\ip)$ is set as the last argument.
\endbox

\begin{cor}\rm\label{cor:sipnotip}
At no point will the variable $\sip$ maintained by node \dval{ip} have
the value \dval{ip}.
\begin{equation*}
\xiN{\dval{ip}}(\sip)\neq\dval{ip}
\end{equation*}
\end{cor}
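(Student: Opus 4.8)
The plan is to derive the corollary directly from Propositions~\ref{prop:preliminaries} and~\ref{prop:ip=ipc}, treating it as an invariant established by induction on reachability in the style of Remark~\ref{rem:remark}. First I would observe that the variable $\sip$ is maintained only by the processes $\RREQ$, $\RREP$ and $\RERR$; it does not occur among the parameters of $\AODV$, $\NEWPKT$ or $\PKT$. Moreover none of these processes contains an assignment to $\sip$: the variable receives its value exactly once, namely when one of $\RREQ$, $\RREP$, $\RERR$ is invoked from $\AODV$ after a control message has been read via $\receive{\msg}$ in Line~\ref{aodv:line2} of Pro.~\ref{pro:aodv}. At that invocation $\sip$ is bound to the final argument of the received message $m$, which is of the form $\rreq{*}{*}{*}{*}{*}{*}{*}{\sip}$, $\rrep{*}{*}{*}{*}{\sip}$ or $\rerr{*}{\sip}$. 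Hence in any reachable state where $\sip$ is defined, its value equals the sender field of the control message that triggered the current process, and it suffices to show that this field differs from $\dval{ip}$.

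Next I would trace the origin of $m$. By Proposition~\ref{prop:preliminaries}\eqref{it:preliminariesi}, every message handled by the main routine is either a $\newpktID$-message or was cast by some node $\dval{ip}'$ through an action $\starcastP{m}$. Since a $\newpktID$-message is a data packet processed by $\NEWPKT$ (which does not maintain $\sip$), and ordinary data packets are processed by $\PKT$ (which maintains $\oip$, not $\sip$), the only messages that give rise to a binding of $\sip$ are the control messages, and these must therefore stem from a $\starcastP{m}$ of some node $\dval{ip}'$. Applying Proposition~\ref{prop:ip=ipc} to that cast yields $\dval{ip}'=\ipc$, i.e.\ the sender field of $m$ equals $\dval{ip}'$; thus $\xiN{\dval{ip}}(\sip)=\dval{ip}'$. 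Finally, Proposition~\ref{prop:preliminaries}\eqref{it:preliminariesii} gives $\dval{ip}\neq\dval{ip}'$, whence $\xiN{\dval{ip}}(\sip)=\dval{ip}'\neq\dval{ip}$, as required. In every initial state $\sip$ is undefined, so the invariant holds there vacuously.

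I expect the main obstacle to be a matter of rigour rather than of mathematical depth. The delicate point is the bookkeeping about when $\sip$ is defined and that its value stays constant for the whole lifetime of an $\RREQ$, $\RREP$ or $\RERR$ call. I would discharge this by inspecting Processes~\ref{pro:aodv}--\ref{pro:rerr} and confirming that $\sip$ is never re-bound inside these processes: when a request or reply is forwarded, the outgoing message carries $\ip$ (the forwarding node's own identity), not the variable $\sip$, as its sender field, so no subsequent reassignment of $\sip$ occurs before control returns to $\AODV$. Once this is verified, the only transitions that introduce a binding of $\sip$ are the three process invocations discussed above, and the corollary follows immediately from the two cited propositions.
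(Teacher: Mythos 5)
Your proof is correct and takes essentially the same route as the paper's: trace the value of $\sip$ to the sender field of an incoming AODV control message (bound at Lines~\ref{aodv:line8}, \ref{aodv:line12} or~\ref{aodv:line16} of Pro.~\ref{pro:aodv}), note that it is never reassigned, and then combine Proposition~\ref{prop:preliminaries} (the message was sent beforehand by some node $\dval{ip}'\neq\dval{ip}$) with Proposition~\ref{prop:ip=ipc} (the sender identifies itself correctly) to conclude $\xiN{\dval{ip}}(\sip)=\dval{ip}'\neq\dval{ip}$. The extra bookkeeping you provide---which processes carry $\sip$ as a parameter, vacuity in initial states, and the check that forwarded messages carry $\ip$ rather than $\sip$ in the sender field---is simply a more explicit rendering of the paper's one-line observation that ``the value of $\sip$ is never changed.''
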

\begin{proof}
The value of $\sip$ stems, through Lines~\ref{aodv:line8},~\ref{aodv:line12} or~\ref{aodv:line16}
of Pro.~\ref{pro:aodv}, from an incoming AODV control message of the form
$\xiN{\dval{ip}}(\rreq{*}{*}{*}{*}{*}{*}{*}{\sip})$,
$\xiN{\dval{ip}}(\rrep{*}{*}{*}{*}{\sip})$, or
$\xiN{\dval{ip}}(\rerr{*}{\sip})$
(Pro.~\ref{pro:aodv}, Line~\ref{aodv:line2}); the value of $\sip$ is never changed.
By \Prop{preliminaries}, this message must have been sent before by a node $\dval{ip}'\neq\dval{ip}$.
By \Prop{ip=ipc}, $\xiN{\dval{ip}}(\sip)=\dval{ip}'$.
\end{proof}

\begin{prop}\rm\label{prop:positive hopcount}
All routing table entries have a hop count greater or equal than $1$.
\begin{equation}\label{eq:inv_length}
(*,*,*,*,\dval{hops},*,*)\in\xiN{\dval{ip}}(\rt) \ims \dval{hops}\geq1
\end{equation}
\end{prop}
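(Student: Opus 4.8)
The plan is to establish \eqref{eq:inv_length} as a state invariant by induction on reachability, exactly in the style licensed by Remark~\ref{rem:remark}. In every initial state the routing table is empty, $\xiN{\dval{ip}}(\rt)=\emptyset$ by \eqref{eq:initialstate_rt}, so the invariant holds vacuously. For the inductive step I only have to inspect the points where a routing table entry is created or altered. By Remark~\ref{rem:remark} these are precisely the applications of $\fnupd$ and $\fninv$; the function $\fnaddprecrt$ may be ignored here, since it modifies only the precursor component (the seventh) and the invariant says nothing about precursors.

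The case of $\fninv$ is immediate. Reading off the definition of $\inv{\rt}{\dests}$, each entry of the result is either copied verbatim from $\rt$ or produced from an entry of $\rt$ by changing only its sequence number (second component) and validity flag (fourth component), with the hop count $\pi_5$ carried over unchanged. Hence if every entry of $\xiN{\dval{ip}}(\rt)$ has hop count $\geq 1$ before the invalidation, the same holds afterwards.

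The substance is the treatment of $\fnupd$. Applied to $\rt$ and a route $r$, the function leaves every pre-existing entry untouched except possibly $s:=\selr{\rt}{\pi_1(r)}$, and inserts a single new entry whose hop count---inspecting the six cases of the definition---is either $\pi_5(r)$ (the first five cases, where the inserted entry is $r$ possibly with adjusted precursors or sequence number) or $\pi_5(s)$ (the final case). In the latter case $s\in\rt$, so $\pi_5(s)\geq 1$ by the induction hypothesis. Thus it remains only to check that $\pi_5(r)\geq 1$ for the route $r$ supplied at each call of $\fnupd$. Two of these calls need no checking at all: the domain restriction of $\fnupd$ includes $\pi_3(r)=\unkno\ims\pi_5(r)=1$, so whenever the inserted route carries an unknown sequence number its hop count is already forced to be $1$.

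It therefore suffices to run through the (finitely many) occurrences of $\fnupd$ in the processes of \Sect{modelling_AODV} and read off the hop count of the inserted route. At Pro.~\ref{pro:aodv}, Lines~\ref{aodv:line10}, \ref{aodv:line14} and~\ref{aodv:line18}, the entry inserted for the sender $\sip$ has hop count $1$; at Pro.~\ref{pro:rreq}, Line~\ref{rreq:line6}, the inserted route $(\oip,\osn,\kno,\val,\hops\mathord+1,\sip,\emptyset)$ has hop count $\hops\mathord+1\geq 1$; and at Pro.~\ref{pro:rrep}, Line~\ref{rrep:line5}, the inserted forward route again has the incremented hop count $\hops\mathord+1\geq 1$. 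In every case $\pi_5(r)\geq 1$, which completes the induction. The only real obstacle is organisational rather than mathematical: one must be sure to have located \emph{all} call sites of $\fnupd$, and to recognise that the inserted entry always inherits its hop count from either $r$ or the pre-existing entry $s$, so that no genuinely new hop-count value is ever introduced.
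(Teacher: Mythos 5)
Your proof is correct and takes essentially the same route as the paper's: induction on reachability, dismissing $\fninv$ and $\fnaddprecrt$ because they leave hop counts untouched, and then checking the inserted route's hop count at each call site of $\fnupd$ (Pro.~\ref{pro:aodv}, Lines~\ref{aodv:line10}, \ref{aodv:line14}, \ref{aodv:line18} with hop count $1$; Pro.~\ref{pro:rreq}, Line~\ref{rreq:line6} and Pro.~\ref{pro:rrep}, Line~\ref{rrep:line5} with hop count $\hops\mathord+1\geq 1$). Your extra observations---that the sixth clause of $\fnupd$ inherits the hop count of the pre-existing entry (handled by the induction hypothesis, where the paper simply notes the entry is unchanged beyond precursors) and that the domain restriction $\pi_3(\dval{r})=\unkno\ims\pi_5(\dval{r})=1$ already forces some cases---are valid but do not change the substance of the argument.
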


\begin{proofNobox}
All initial states trivially satisfy the invariant since all routing tables are empty.
The functions \hyperlink{invalidate}{\fninv} and \hyperlink{addprert}{\fnaddprecrt} do not affect
the invariant, since they do not change the hop count of a routing table entry.
Therefore, we only have to look at the application calls of \hyperlink{update}{\fnupd}.
In each case, if the update does not change the routing table entry beyond its precursors
({the last clause} of \hyperlink{update}{\fnupd}), the invariant is trivially
preserved; hence we examine the cases that an update actually occurs.
	\begin{description}
		\item[Pro.~\ref{pro:aodv}, Lines~\ref{aodv:line10}, \ref{aodv:line14}, \ref{aodv:line18}:]
			All these updates have a hop count equals to $1$; hence the invariant is preserved.
		\item[Pro.~\ref{pro:rreq}, Line~\ref{rreq:line6}; Pro.~\ref{pro:rrep}, Line~\ref{rrep:line5}:]
			Here, $\xi(\hops)+1$ is used for the update. Since $\xi(\hops)\in\NN$, the invariant is maintained.
			\endbox
	\end{description}
\end{proofNobox}

\begin{prop}\label{prop:starcastNew}\rm~
\begin{enumerate}[(a)]
\item If a route request with hop count $0$ is sent by a node
  $\ipc\in\IP$ , the sender must be the originator.
	\begin{equation}\label{inv:starcast_i}
	N\ar{R:\starcastP{\rreq{0}{*}{*}{*}{*}{\oipc}{*}{\ipc}}}_{\dval{ip}}N' \ims \oipc=\ipc(=\dval{ip})
	\end{equation}
\item If a route reply with hop count $0$ is sent by a node
  $\ipc\in\IP$, the sender must be the destination.
	\begin{equation}\label{inv:starcast_i_rrep}
        N\ar{R:\starcastP{\rrep{0}{\dipc}{*}{*}{\ipc}}}_{\dval{ip}}N' \ims \dipc=\ipc(=\dval{ip})    
	\end{equation}
\end{enumerate}
\end{prop}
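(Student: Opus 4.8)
The plan is to read this proposition off directly from the process definitions, using the taxonomy of cast transitions from \SSect{transition invariants}: every transition $N\ar{R:\starcastP{m}}_{\dval{ip}}N'$ originates from a \textbf{broadcast}, \textbf{groupcast} or \textbf{unicast} action that can be located at a single line of one of the processes of \Sect{modelling_AODV}, and a successful cast emits exactly the message written at that line. Since the claim constrains only the \emph{shape} of the emitted message, no induction over reachable states is needed: it suffices to enumerate every line that casts a route request (for part~(a)) or a route reply (for part~(b)) and inspect the term placed in its hop-count field. The parenthetical $(=\dval{ip})$ in both conclusions is immediate from \Prop{ip=ipc}, which already records that the casting node writes its own address into the sender field, i.e.\ $\ipc=\dval{ip}$; so the genuine content to establish is $\oipc=\ipc$ in~(a) and $\dipc=\ipc$ in~(b).

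For part~(a), inspection of the model shows that a route request is cast at exactly two lines. At Pro.~\ref{pro:aodv}, Line~\ref{aodv:line39} a node generates a fresh request: the hop-count field is the literal $0$, while both the originator field and the sender field are set to $\xi(\ip)$, so $\oipc=\ipc=\dval{ip}$ and the conclusion holds. At Pro.~\ref{pro:rreq}, Line~\ref{rreq:line34} a received request is re-broadcast with hop count $\xi(\hops)+1$; as $\xi(\hops)\in\NN$ we have $\xi(\hops)+1\geq1\neq0$, so the emitted message does not match the pattern $\rreq{0}{*}{*}{*}{*}{*}{*}{*}$ of the antecedent and the implication is vacuously true for that transition. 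Since these are the only two request-casting lines, part~(a) follows.

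For part~(b) a route reply is cast at exactly three lines. When the handling node is itself the requested destination---the branch guarded by $\dip=\ip$ at Pro.~\ref{pro:rreq}, Line~\ref{rreq:line10}---it emits a reply with hop count $0$ whose destination field is $\xi(\dip)$; the guard forces $\xi(\dip)=\xi(\ip)$, whence $\dipc=\ipc=\dval{ip}$. The other two casting sites each emit a provably positive hop count: the forwarding of a received reply in Pro.~\ref{pro:rrep} (after the precursor updates of Lines~\ref{rrep:line12a}--\ref{rrep:line12b}) uses $\xi(\hops)+1\geq1$ with $\xi(\hops)\in\NN$, and the reply generated by an intermediate node (Pro.~\ref{pro:rreq}, with hop-count field $\dhops{\rt}{\dip}$ set at Line~\ref{rreq:line26}) uses $\xi(\dhops{\rt}{\dip})$, which is the hop-count field of an entry of $\xi(\rt)$ and hence is $\geq1$ by \Prop{positive hopcount}. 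In both of these cases the emitted message fails to match $\rrep{0}{*}{*}{*}{*}$, so the implication is again vacuous, and part~(b) follows.

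The one step that is not a purely syntactic reading of a line is the intermediate-reply case of part~(b), and it is where I would be most careful. Here the hop count comes from the routing table rather than from an incremented message field, so ruling out the value $0$ depends essentially on the earlier invariant \Prop{positive hopcount}; this is legitimate because an intermediate node reaches the reply branch only when it holds a valid route to $\dip$, which makes $\dhops{\rt}{\dip}$ defined and subject to that positivity invariant. I would flag this dependency explicitly, since it is precisely the case that the type of the hop-count variable alone does not settle.
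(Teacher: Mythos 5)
Your proof is correct and follows essentially the same route as the paper's: the same enumeration of the two RREQ-casting sites and three RREP-casting sites, the same vacuous-antecedent argument via $\xi(\hops)+1\neq 0$ for the forwarding cases, and the same appeal to \Prop{positive hopcount} for the intermediate-reply case at Pro.~\ref{pro:rreq}, Line~\ref{rreq:line26}. Your added remarks---deriving the parenthetical $(=\dval{ip})$ from \Prop{ip=ipc} and flagging that $\dhops{\rt}{\dip}$ is defined because Line~\ref{rreq:line22} guarantees a valid entry---are just explicit versions of steps the paper leaves implicit.
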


\begin{proofNobox}~
\begin{enumerate}[(a)]
\item We have to check that the consequent holds whenever a route request is sent. In all the processes there
are only two locations where this happens.
\begin{description}
	\item[\Pro{aodv}, Line~\ref{aodv:line39}:]
		A request with content 
		$
		\xi(0\comma*\comma*\comma*\comma*\comma\ip\comma*\comma\ip)
		$
		 is sent. Since the sixth and the eighth component are
                 the same ($\xi(\ip)$), the claim holds.
	\item[Pro.~\ref{pro:rreq}, Line~\ref{rreq:line34}:]
	The message has the form $\rreq{\xi(\hops)\mathord+1}{*}{*}{*}{*}{*}{*}{*}$.
	Since $\xi(\hops)\in\NN$, $\xi(\hops)+1\not=0$ and hence the
        antecedent does not hold.
\end{description}
\item We have to check that the consequent holds whenever a route reply is sent. In all the processes there
are only three locations where this happens.
\begin{description}
	\item[\Pro{rreq}, Line~\ref{rreq:line14a}:]
		A reply with content 
		$
		\xi(0\comma\dip\comma*\comma*\comma\ip)
		$
		is sent. By Line~\ref{rreq:line10} we have
                $\xi(\dip)=\xi(\ip)$, so the claim holds.
	\item[Pro.~\ref{pro:rreq}, Line~\ref{rreq:line26}:]
	The message has the form $\rrep{\dhops{\rt}{\dip}}{\!*}{\!*}{\!*}{\!*}$.
        By \Prop{positive hopcount}, $\dhops{\rt}{\dip}>0$, so the
        antecedent does not hold.
	\item[Pro.~\ref{pro:rrep}, Line~\ref{rrep:line13}:]
	The message has the form $\rrep{\xi(\hops)\mathord+1}{\!*}{\!*}{\!*}{\!*}$.
	Since $\xi(\hops)\in\NN$, $\xi(\hops)+1\not=0$ and hence the
        antecedent does not hold.
\endbox
\end{description}
\end{enumerate}
\end{proofNobox}

\begin{prop}\rm\label{prop:rte}~
\begin{enumerate}[(a)]
\item\label{it_a} Each routing table entry
with $0$ as its destination sequence number has a sequence-number-status flag valued unknown.
\begin{equation}\label{eq:unk_sqn}
(\dval{dip},0,\dval{f},*,*,*,*)\in\xiN{\dval{ip}}(\rt) \ims \dval{f}=\unkno
\end{equation}
\item\label{it_d} Unknown sequence numbers can only occur at $1$-hop connections.
\begin{equation}\label{eq:inv_viia}
(*,*,\unkno,*,\dval{hops},*,*)\in\xiN{\dval{ip}}(\rt) \ims \dval{hops}=1
\end{equation}
\item\label{it_c} $1$-hop connections must contain the destination as next hop.
\begin{equation}\label{eq:inv_vii}
(\dval{dip},*,*,*,1,\dval{nhip},*)\in\xiN{\dval{ip}}(\rt)\ims \dval{dip}=\dval{nhip}
\end{equation}
\item\label{it_e}
If the sequence number $0$ occurs within a routing table entry, the
hop count as well as the next hop can be determined.
\begin{equation}\label{eq:inv_viib}
(\dval{dip},0,\dval{f},*,\dval{hops},\dval{nhip},*)\in\xiN{\dval{ip}}(\rt)\ims \dval{f}=\unkno \ans \dval{hops}=1\ans \dval{dip}=\dval{nhip}
\end{equation}
\end{enumerate}
\end{prop}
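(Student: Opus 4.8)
I would prove the first three statements \eqref{eq:unk_sqn}, \eqref{eq:inv_viia} and \eqref{eq:inv_vii} by induction on reachability, in the style of Remark~\ref{rem:remark}, and then read off \eqref{eq:inv_viib} as a corollary. Every initial state satisfies all four claims vacuously, since each routing table starts empty, so it suffices to inspect the points where a routing table is changed. As none of the invariants speaks about precursors, $\fnaddprecrt$ may be ignored, leaving only $\fninv$ and $\fnupd$. One observation serves all three inductions: $\fninv$ alters neither the flag $\pi_3$, nor the hop count $\pi_5$, nor the next hop $\pi_6$, nor the destination $\pi_1$ of any entry, so for \eqref{eq:inv_viia} and \eqref{eq:inv_vii} invalidation simply inherits the invariant, and only its effect on the sequence number $\pi_2$ (relevant to \eqref{eq:unk_sqn}) needs separate examination.

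For \eqref{eq:unk_sqn} and \eqref{eq:inv_viia} I would exploit the side conditions built into the domain of $\fnupd$, namely $\pi_2(\dval{r})=0\iffs\pi_3(\dval{r})=\unkno$ and $\pi_3(\dval{r})=\unkno\ims\pi_5(\dval{r})=1$. In the first four clauses of $\fnupd$ the replacing entry carries $\pi_2,\pi_3,\pi_5$ directly from $\dval{r}$, so both invariants follow; in the fifth clause the new entry has flag $\unkno$ and, by the second side condition, hop count $1$, so both hold regardless of the reverted sequence number; and the last clause keeps everything from the old entry $\dval{s}$, where the induction hypothesis applies. The only remaining worry is $\fninv$ lowering some $\pi_2$ to $0$ under a $\kno$ flag, which would break \eqref{eq:unk_sqn}. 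This cannot occur: when $\dests$ is built as the set of pairs $(\rip,\inc{\sqn{\rt}{\rip}})$ the new number is $0$ only if the old one was (since $\fninc$ fixes $0$), so the preserved flag was $\unkno$ already by induction; and when $\dests$ is distilled from an error message, the guard $\sqn{\rt}{\rip}<\rsn$ imposed in Pro.~\ref{pro:rerr}, Line~\ref{rerr:line2} forces $\rsn\geq1$.

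The substantive statement is \eqref{eq:inv_vii}, for which only $\fnupd$ is interesting. Here the task reduces to checking, at each update site listed in the proof of \Prop{positive hopcount}, that the incoming entry satisfies $\pi_5(\dval{r})=1\ims\pi_1(\dval{r})=\pi_6(\dval{r})$ (entries kept from $\dval{s}$ again being covered by induction). At Lines~\ref{aodv:line10}, \ref{aodv:line14} and~\ref{aodv:line18} of Pro.~\ref{pro:aodv} the inserted entry has the shape $(\sip,\ldots,1,\sip,\ldots)$, so destination and next hop coincide outright. At Pro.~\ref{pro:rreq}, Line~\ref{rreq:line6} the entry is $\xi(\oip,\osn,\kno,\val,\hops\mathord+1,\sip,\emptyset)$, whose hop count equals $1$ exactly when $\xi(\hops)=0$; and at Pro.~\ref{pro:rrep}, Line~\ref{rrep:line5} the analogous entry has destination $\dip$, next hop $\sip$ and hop count $\xi(\hops)\mathord+1$.

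Thus everything reduces to one auxiliary fact: a route request (route reply) received with hop count $0$ must have $\oip=\sip$ ($\dip=\sip$). This is where the communication-level results enter, and it is the part of the proof that cannot be settled by bookkeeping on $\fnupd$ and $\fninv$ alone. By \Prop{preliminaries} the received message was previously emitted by some node via a cast action (being a control message, it is not a \textbf{newpkt}), and by \Prop{starcastNew} every cast route request (reply) of hop count $0$ has its originator (destination) field equal to its sender field; since the message arrives verbatim, the received entry therefore has $\pi_1(\dval{r})=\pi_6(\dval{r})$ whenever $\xi(\hops)=0$, which closes \eqref{eq:inv_vii}. Finally \eqref{eq:inv_viib} needs no further induction: an entry whose destination sequence number is $0$ has flag $\unkno$ by \eqref{eq:unk_sqn}, hence hop count $1$ by \eqref{eq:inv_viia}, hence $\dval{dip}=\dval{nhip}$ by \eqref{eq:inv_vii}. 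The main obstacle is precisely this appeal, in \eqref{eq:inv_vii}, to \Prop{preliminaries} and \Prop{starcastNew}: one must correctly instantiate them at the sender and originator/destination fields of the received hop-count-$0$ message.
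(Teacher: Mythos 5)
Your proof is correct and takes essentially the same route as the paper's: induction on reachability, dismissing $\fnaddprecrt$ and (for flag, hop count and next hop) $\fninv$, exploiting the definedness conditions of $\fnupd$ for \eqref{eq:unk_sqn} and \eqref{eq:inv_viia}, invoking \Prop{preliminaries} together with \Prop{starcastNew} for the hop-count-$0$ case of \eqref{eq:inv_vii}, and reading off \eqref{eq:inv_viib} as an immediate corollary. The only differences are presentational: the paper disposes of $\fninv$ in one stroke by citing \Prop{dsn increase}, where you re-derive by hand that invalidation cannot drop a sequence number to $0$ under a $\kno$ flag, and it checks \eqref{eq:inv_viia} per update site, where you argue uniformly from the side condition $\pi_3(\dval{r})=\unkno\ims\pi_5(\dval{r})=1$.
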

\begin{proofNobox}At the initial states all routing tables are empty. Since \hyperlink{invalidate}{$\fninv$} and \hyperlink{addprert}{$\fnaddprecrt$}
  change neither the sequence-number-status flag, nor the next hop or the hop count of a routing
  table entry, and---by \Prop{dsn increase}---cannot decrease the sequence number of a destination,
  we only have to look at the application calls of \hyperlink{update}{\fnupd}.
  As before, we only examine the cases that an update actually occurs.
\begin{enumerate}[(a)]
\item
 Function calls of the form \hyperlink{update}{\upd{\dval{rt}}{\dval{r}}} always preserve the invariant:
in case {\fnupd} is
  given an argument for which it is not defined, the process algebra blocks and no change
  of the routing table is performed (cf.\ Footnote~\ref{partial} in \Sect{process_algebra});
in case one of the first four clauses in the definition of \hyperlink{update}{\fnupd} is used, this
follows because $\upd{\dval{rt}}{\dval{r}}$ is defined only when $\pi_{2}(\dval{r})=0\Leftrightarrow\pi_{3}(\dval{r})=\unkno$;
 in
case the fifth clause is used it follows because $\pi_{3}(\dval{r})=\unkno$; and in case the last
clause is used, it follows by induction, since the invariant was already valid before the update.
\item\begin{description}
		\item[Pro.~\ref{pro:aodv}, Lines~\ref{aodv:line10}, \ref{aodv:line14}, \ref{aodv:line18}:]
			All these updates have an unknown sequence number and hop count equal to $1$.
                        By Clause 5 of
			\hyperlink{update}{\fnupd}, these sequence-number-status flag and hop count
			are transferred literally into the routing table; hence the invariant is preserved.
		\item[Pro.~\ref{pro:rreq}, Line~\ref{rreq:line6} and Pro.~\ref{pro:rrep}, Line~\ref{rrep:line5}:]
		  In these updates the sequence-number-status flag is set to {\kno}. By the definition of
		  \hyperlink{update}{\fnupd}, this value ends up in the routing table.
		  Hence the assumption of the invariant to be proven
		  is not satisfied. 
	\end{description}
\item 	\begin{description}
		\item[Pro.~\ref{pro:aodv}, Lines~\ref{aodv:line10}, \ref{aodv:line14}, \ref{aodv:line18}:]
			The new entries ($\xi(\sip,0,\unkno,\val,1,\sip,\emptyset)$) satisfy the invariant; even if the routing table is actually
			updated with one of the new routes, the invariant holds afterwards.
		\item[Pro.~\ref{pro:rreq}, Line~\ref{rreq:line6}; Pro.~\ref{pro:rrep}, Line~\ref{rrep:line5}:]
			The route which might be inserted into the routing table
			has hop count $\hops\mathord+1$, $\hops\in\NN$.
		 	It can only be equal to $1$ if the received message had hop count $\hops=0$.
			In that case
                        Invariant~\eqref{inv:starcast_i}, resp.~\eqref{inv:starcast_i_rrep}, guarantees that the invariant remains unchanged.
	\end{description}
\item Immediate from Parts~\eqref{it_a} to \eqref{it_c}. \endbox
\end{enumerate}
\end{proofNobox}

\begin{prop}\label{prop:msgsendingii}~\rm
\begin{enumerate}[(a)]
\item Whenever an originator sequence number is sent as part of a route request message, it is known, i.e.,
it is greater or equal than $1$.
	\begin{equation}\label{inv:starcast_sqni}	
		N\ar{R:\starcastP{\rreq{*}{*}{*}{*}{*}{*}{\osnc}{*}}}_{\dval{ip}}N' \ims \osnc \geq1
	\end{equation}
\item Whenever a destination sequence number is sent as part of a route reply message, it is known, i.e.,
it is greater or equal than $1$.
 	\begin{equation}\label{inv:starcast_sqnii}
		N\ar{R:\starcastP{\rrep{*}{*}{\dsnc}{*}{*}}}_{\dval{ip}}N'\ims\dsnc\geq1
 	\end{equation}
\end{enumerate}
\end{prop}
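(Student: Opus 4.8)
The plan is to prove both statements together as transition invariants by \emph{induction on reachability}, exactly in the style of the preceding propositions. It will suffice to inspect every location in Processes~\ref{pro:aodv}--\ref{pro:queues} at which a route request (for part~(a)) or a route reply (for part~(b)) is cast, and to check that the transmitted originator resp.\ destination sequence number is at least~$1$. These locations are precisely those already enumerated in the proof of \Prop{starcastNew}: a RREQ is cast only at \Pro{aodv}, Line~\ref{aodv:line39} and \Pro{rreq}, Line~\ref{rreq:line34}; a RREP only at \Pro{rreq}, Lines~\ref{rreq:line14a} and~\ref{rreq:line26}, and at \Pro{rrep}, Line~\ref{rrep:line13}. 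In each case I would distinguish whether the message is \emph{freshly generated}---so that the sequence number is drawn from the node's own state---or merely \emph{forwarded}, in which case the number is copied verbatim from an incoming message.

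For the fresh-generation cases I expect only appeals to already-established invariants. At \Pro{aodv}, Line~\ref{aodv:line39} the request carries $\osnc=\xi(\sn)$, and at \Pro{rreq}, Line~\ref{rreq:line14a} the reply carries $\dsnc=\xi(\sn)$ (the own sequence number having just been set at Line~\ref{rreq:line12}); in both cases $\xi(\sn)\geq 1$ by \Prop{invarianti_itemiii}. At \Pro{rreq}, Line~\ref{rreq:line26} the reply carries $\dsnc=\xi(\sqn{\rt}{\dip})$, and the guard tested at Line~\ref{rreq:line22} ensures $\xi(\sqnf{\rt}{\dip})=\kno$; the contrapositive of \Prop{rte}\eqref{eq:unk_sqn} then forbids the value~$0$ for a known sequence number, so $\dsnc\geq 1$.

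The two forwarding cases, \Pro{rreq}, Line~\ref{rreq:line34} and \Pro{rrep}, Line~\ref{rrep:line13}, are where the induction does the work. There the cast number is $\xi(\osn)$ resp.\ $\xi(\dsn)$, i.e.\ the value received in the incoming control message at \Pro{aodv}, Line~\ref{aodv:line2}. Since a RREQ (resp.\ RREP) is not of the form $\newpkt{\dval{d}}{\dval{dip}}$, \Prop{preliminaries}\eqref{it:preliminariesi} guarantees that this message was cast by some node in a strictly earlier transition, through an action $\starcastP{\rreq{*}{*}{*}{*}{*}{*}{\osnc}{*}}$ (resp.\ $\starcastP{\rrep{*}{*}{\dsnc}{*}{*}}$). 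The induction hypothesis---invariant~\eqref{inv:starcast_sqni} (resp.~\eqref{inv:starcast_sqnii}) applied to that prior transition---yields $\osnc\geq 1$ (resp.\ $\dsnc\geq 1$) for the received value, and since it is relayed unchanged the same bound holds for the number being cast now.

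The one point that I expect to require care is this forwarding argument: I must confirm that the variables $\osn$ and $\dsn$ are not reassigned between their binding at \Pro{aodv}, Line~\ref{aodv:line2} and the casts at Lines~\ref{rreq:line34} resp.~\ref{rrep:line13}, so that the value cast genuinely coincides with the value received, and I must invoke \eqref{inv:starcast_sqni}/\eqref{inv:starcast_sqnii} only for an earlier transition occurrence---which is exactly what the induction on reachability of \SSect{transition invariants} licenses. Everything else reduces to direct citations of \Prop{invarianti_itemiii} and \Prop{rte}, so I anticipate no further difficulty.
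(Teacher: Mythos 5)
Your proposal is correct and follows essentially the same route as the paper's own proof: the identical case analysis over the five cast locations, with $\xi(\sn)\geq 1$ from \Prop{invarianti_itemiii} for the freshly generated messages, Invariant~\eqref{eq:unk_sqn} combined with the guard of Line~\ref{rreq:line22} for the intermediate route reply, and \Prop{preliminaries}(\ref{it:preliminariesi}) plus induction on reachability for the two forwarding cases. The point you flag as needing care---that $\osn$ and $\dsn$ are not reassigned between receipt and cast---is exactly the observation the paper makes as well, so nothing is missing.
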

\begin{proofNobox}~
\begin{enumerate}[(a)]
\item
We have to check that the consequent holds whenever a route request is sent.
\begin{description}
	\item[Pro.~\ref{pro:aodv}, Line~\ref{aodv:line39}:] A route request is initiated. 
		The originator sequence number is a copy of the node's own sequence 
		number, i.e., $\osnc=\xi(\sn)$. By \Prop{invarianti_itemiii}, 
		we get $\osnc\geq 1$.
	\item[Pro.~\ref{pro:rreq}, Line~\ref{rreq:line34}:] Here, $\osnc:=\xi(\osn)$. $\xi(\osn)$ 
		is not changed within Pro.~\ref{pro:rreq}; it stems, through Line~\ref{aodv:line8}
		of Pro.~\ref{pro:aodv}, from an incoming RREQ message (Pro.~\ref{pro:aodv}, Line~\ref{aodv:line2}).
		For this incoming RREQ message, using
		\Prop{preliminaries}(\ref{it:preliminariesi}) and
		\hyperlink{induction-on-reachability}{induction on reachability}, the
		invariant holds and hence the claim follows immediately. 
\end{description}
\item We have to check that the consequent holds whenever a route reply is sent.
\begin{description}
	\item[Pro.~\ref{pro:rreq}, Line~\ref {rreq:line14}:] The destination initiates a route reply. 
		The sequence number is a copy of the node's own sequence number, i.e., 
		$\dsnc=\xi(\sn)$. By \Prop{invarianti_itemiii}, we get $\dsnc\geq 1$.
	\item[Pro.~\ref{pro:rreq}, Line~\ref{rreq:line26}:] 
		The sequence number used for the message is copied from the routing table;
		its value is $\dsnc:=\sqn{\xi(\rt)}{\xi(\dip)}$. 
		By Line~\ref{rreq:line22}, we know that $\status{\xi(\rt)}{\xi(\dip)}=\kno$ and hence,
                by Invariant~\Eq{unk_sqn}, $\dsnc\geq 1$. Thus the invariant is maintained.
	\item[Pro.~\ref{pro:rrep}, Line~\ref{rrep:line13}:] Here, $\dsnc:=\xi(\dsn)$. $\xi(\dsn)$ 
		is not changed within Pro.~\ref{pro:rrep}; it stems, through Line~\ref{aodv:line12}
		of Pro.~\ref{pro:aodv}, from an incoming RREP message (Pro.~\ref{pro:aodv}, Line~\ref{aodv:line2}).
		For this incoming RREP message the invariant holds and hence the claim follows immediately.
\endbox
\end{description}
\end{enumerate}
\end{proofNobox}

\begin{prop}\label{prop:msgsending}\rm~
\begin{enumerate}[(a)]
\item\label{prop:msgsendingRREQ}
If a route request is sent (forwarded) by a node $\ipc$ different from
the originator of the request 
then the content of $\ipc$'s routing table
must be fresher or at least as good as the information inside the message.
	\begin{equation}\label{inv:starcast_ii}	
	\begin{array}{rcl}
	  &&N\ar{R:\starcastP{\rreq{\hopsc}{*}{*}{*}{*}{\oipc}{\osnc}{\ipc}}}_{\dval{ip}}N'
          \ans\ipc\neq\oipc\\
	  &\Rightarrow&
	  \oipc\in\kd{\ipc}
	  \ans\big(\sq[\oipc]{\ipc}>\osnc
	  \\
	  &&\ors (\sq[\oipc]{\ipc}=\osnc \ans \dhp[\oipc]{\ipc}\leq\hopsc\ans \sta[\oipc]{\ipc}=\val)\big)
	\end{array}
	\end{equation}
\item
If a route reply is sent by a node $\ipc$, different from
the destination of the route, then the content of $\ipc$'s routing table
must be consistent with the information inside the message.
 	\begin{equation}\label{inv:starcast_iv}
 	\begin{array}{@{}rcl@{}}
 	  &&N\ar{R:\starcastP{\rrep{\hopsc}{\dipc}{\dsnc}{*}{\ipc}}}_{\dval{ip}}N'
          \ans\ipc\neq\dipc\\
 	  &\Rightarrow&
	  \dipc\in\kd{\ipc}
 	  \ans \sq[\dipc]{\ipc} = \dsnc\ans  \dhp[\dipc]{\ipc}=\hopsc\ans \sta[\dipc]{\ipc}=\val
 	\end{array}
 	\end{equation}
\end{enumerate}
\end{prop}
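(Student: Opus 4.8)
The plan is to prove both parts by case analysis over the few program locations where a route request, respectively a route reply, is cast, exactly the strategy used for \Prop{starcastNew} and \Prop{msgsendingii}. Neither part needs \hyperlink{induction-on-reachability}{induction on reachability}: in each relevant case the required relation between the cast message and the sender's routing table follows locally from a routing-table update performed a few lines earlier, using only Remark~\ref{rem:remark} (routing tables are changed solely by $\fnupd$, $\fninv$ and $\fnaddprecrt$), together with \Prop{destinations maintained} and \Prop{dsn increase}. The reusable ingredient I would isolate first is a lemma on $\fnupd$: if the update $\upd{\rt}{(d,sn,\kno,\val,h,*,*)}$ is defined, then afterwards $d$ is a known destination and its entry has sequence number greater than $sn$, or else sequence number equal to $sn$ together with hop count at most $h$ and validity $\val$. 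This is read off the six clauses of $\fnupd$: clauses~1--4 install the incoming route verbatim (modulo precursors), giving equality with hop count $h$ and $\val$; clause~5 is excluded because $\pi_3=\kno$; and in the ``otherwise'' clause the retained old entry satisfies the disjunction precisely because the guards of clauses~2--4 fail.

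For part~\eqref{inv:starcast_ii}, a route request is cast only at \Pro{aodv}, Line~\ref{aodv:line39} and at \Pro{rreq}, Line~\ref{rreq:line34}. At Line~\ref{aodv:line39} the sender is the originator, so $\oipc=\ipc$ and the antecedent $\ipc\neq\oipc$ fails. At Line~\ref{rreq:line34} the forwarded request carries $\hopsc=\xi(\hops)+1$, $\oipc=\xi(\oip)$ and $\osnc=\xi(\osn)$, all copied unchanged from the incoming request, while at Line~\ref{rreq:line6} the node has just performed $\upd{\rt}{(\oip,\osn,\kno,\val,\hops+1,\sip,\emptyset)}$; on the path from Line~\ref{rreq:line6} to Line~\ref{rreq:line34} no table-modifying function is invoked, so the routing table and the variables $\hops,\oip,\osn$ are untouched. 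Instantiating the $\fnupd$-lemma with $d=\oipc$, $sn=\osnc$, $h=\hopsc$ then yields $\oipc\in\kd{\ipc}$ and the required disjunction verbatim.

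For part~\eqref{inv:starcast_iv}, a route reply is cast at \Pro{rreq}, Line~\ref{rreq:line14a}; \Pro{rreq}, Line~\ref{rreq:line26}; and \Pro{rrep}, Line~\ref{rrep:line13}. At Line~\ref{rreq:line14a} the guard of Line~\ref{rreq:line10} gives $\xi(\dip)=\xi(\ip)$, so $\dipc=\ipc$ and the antecedent fails. At Line~\ref{rreq:line26} the reply carries $\dipc=\xi(\dip)$, $\dsnc=\sqn{\xi(\rt)}{\xi(\dip)}$ and $\hopsc=\dhops{\xi(\rt)}{\xi(\dip)}$ read straight from the table, and the guard of Line~\ref{rreq:line22} supplies $\dip\in\akD{\xi(\rt)}$, hence $\dipc\in\kd{\ipc}$ and $\sta[\dipc]{\ipc}=\val$; the precursor updates of Lines~\ref{rreq:line24}--\ref{rreq:line25} preserve sequence number, hop count and validity, so all four conjuncts hold by definition of the projections. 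At Line~\ref{rrep:line13} the forwarded reply carries $\dipc=\xi(\dip)$, $\dsnc=\xi(\dsn)$ and $\hopsc=\xi(\hops)+1$; reaching this line forces the condition of Line~\ref{rrep:line3} (otherwise the reply is discarded at Lines~\ref{rrep:line25}--\ref{rrep:line26}), and that condition is exactly the disjunction of the guards of clauses~1--4 of $\fnupd$, so the update $\upd{\rt}{(\dip,\dsn,\kno,\val,\hops+1,\sip,\emptyset)}$ of Line~\ref{rrep:line5} installs the route verbatim. The precursor additions of Lines~\ref{rrep:line12a}--\ref{rrep:line12b} preserve the relevant fields, giving the exact match $\sq[\dipc]{\ipc}=\dsnc$, $\dhp[\dipc]{\ipc}=\hopsc$ and $\sta[\dipc]{\ipc}=\val$.

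The main obstacle is not a single deep argument but the disciplined valuation bookkeeping: in each case I must confirm that every message field is taken from the same valuation $\xiN{\dval{ip}}$ as the routing-table entry it is compared against, and that between the update line and the cast line the table is altered only by $\fnaddprecrt$, which by Remark~\ref{rem:remark} is irrelevant here since the invariant makes no claim about precursors. The most delicate point is the ``otherwise'' clause of the $\fnupd$-lemma used in part~(a), where one must show that the retained old entry already dominates the incoming route; this is routine once the negated guards of clauses~2--4 are written out, but it is the step most easily gotten wrong.
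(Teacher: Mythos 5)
Your proposal is correct and follows essentially the same route as the paper's proof: a case analysis over the exact same cast locations (\Pro{aodv}, Line~\ref{aodv:line39} and \Pro{rreq}, Line~\ref{rreq:line34} for part (a); \Pro{rreq}, Lines~\ref{rreq:line14}/\ref{rreq:line26} and \Pro{rrep}, Line~\ref{rrep:line13} for part (b)), with the conclusion in each case read off locally from the \hyperlink{update}{$\fnupd$} call a few lines earlier and the unchanged routing table and variables in between, without induction on reachability. The only difference is presentational: you factor the clause-by-clause analysis of $\fnupd$ (insertion via clauses 1--4 versus the ``otherwise'' clause, with clause 5 excluded by $\pi_3=\kno$) into a standalone lemma, whereas the paper inlines exactly that reasoning in the cases for Line~\ref{rreq:line34} and Line~\ref{rrep:line13}.
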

\begin{proofNobox}~
\begin{enumerate}[(a)]
\item
We have to check all cases where a route request is sent:
\begin{description}
	\item[Pro.~\ref{pro:aodv}, Line~\ref{aodv:line39}:]
		A new route request is initiated with $\ipc=\oipc:=\xi(\ip)=\dval{ip}$.
		Here the antecedent of \eqref{inv:starcast_ii} is not satisfied.
	\item[Pro.~\ref{pro:rreq}, Line~\ref{rreq:line34}:]
                \hypertarget{forinstance}{The broadcast message has the form
                \[\xi(\rreq{\hops\mathord+1}{\rreqid}{\dip}{\max(\sqn{\rt}{\dip},\dsn)}{\dsk}{\oip}{\osn}{\ip})\ .\]
                Hence $\hopsc:=\xi(\hops)\mathord+1$,
                $\oipc:=\xi(\oip)$, $\osnc:=\xi(\osn)$,
                $\ipc:=\xi(\ip)=\dval{ip}$ and $\xiN{\ipc}=\xi$
                (by \Eq{uniqueidwithxi}).
		At Line~\ref{rreq:line6} we update the routing table using
		$\dval{r}\mathbin{:=}\xi(\oip,\osn,\kno,\hspace{-.5pt}\val,\hops\mathord+1,\sip,\emptyset)$ as new entry.
		The routing table does not change between
                	Lines~\ref{rreq:line6}
                	and~\ref{rreq:line34}; nor do the values of the variables
                	{\hops}, {\oip} and {\osn}.}
		If the new (valid) entry is inserted into the routing table, then
                	one of the first four cases in the definition of \hyperlink{update}{$\fnupd$}
                	must have applied---the fifth case cannot apply, since $\pi_3(\dval{r})=\kno$.
                	Thus, using that $\oipc\neq\ipc$,
		$$\begin{array}{r@{~=~}l@{~=~}l@{~=~}l}
		\sq[\oipc]{\ipc} & \sqn{\xi(\rt)}{\xi(\oip)}& \xi(\osn) & \osnc\\
		\dhp[\oipc]{\ipc} & \dhops{\xi(\rt)}{\xi(\oip)} & \xi(\hops)+1 & \hopsc\\
		\sta[\oipc]{\ipc}& \status{\xi(\rt)}{\xi(\oip)} & \xi(\val) & \val\;.
		\end{array}$$
		
		In case the new entry is not inserted into the routing
		table (the sixth case of \hyperlink{update}{$\fnupd$}), we have
		$\sq[\oipc]{\ipc}=\sqn{\xi(\rt)}{\xi(\oip)}\geq\xi(\osn)=\osnc$,
		and in case that		
		$\sq[\oipc]{\ipc}=\osnc$ we see that
		$\dhp[\oipc]{\ipc} = \dhops{\xi(\rt)}{\xi(\oip)}\leq \xi(\hops)\mathop{+}1 = \hopsc$
		and moreover $\sta[\oipc]{\ipc}=\val$.
		Therefore the invariant holds.
\end{description}
\item We have to check all cases where a route reply is sent.
\begin{description}
	\item[Pro.~\ref{pro:rreq}, Line~\ref {rreq:line14}:]
		A new route reply with
		$\ipc:=\xi(\ip)=\dval{ip}$ is initiated.
		Moreover, by Line~\ref{rreq:line10}, $\dipc:=\xi(\dip)=\xi(\ip)=\dval{ip}$ and 
		thus $\ipc=\dipc$.
		Hence, the antecedent of \eqref{inv:starcast_iv} is not satisfied.
	\item[Pro.~\ref{pro:rreq}, Line~\ref{rreq:line26}:]
	        We have $\ipc:=\xi(\ip)=\dval{ip}$, so	$\xiN{\ipc}=\xi$.
		This time, by Line~\ref{rreq:line20}, 
		$\dipc:= \xi(\dip)\neq\xi(\ip)=\ipc$.
		By Line~\ref{rreq:line22} there is a valid routing table entry for $\dipc:=\xi(\dip)$.
		\[\begin{array}{r@{~:=~}c@{~}l}
		\dsnc&\sqn{\xi(\rt)}{\xi(\dip)}&=~\sq[\dipc]{\ipc}\ ,
		\\
		\hopsc&\dhops{\xi(\rt)}{\xi(\dip)}&=~\dhp[\dipc]{\ipc}\ .
		\end{array}\]
	\item[Pro.~\ref{pro:rrep}, Line~\ref{rrep:line13}:]
		The RREP message has the form
		\[\xi(\rrep{\hops\mathop{+}1}{\dip}{\dsn}{\oip}{\ip})\ .\]
		Hence $\hopsc:=\xi(\hops)\mathord+1$,
                $\dipc:=\xi(\dip)$, $\dsnc:=\xi(\dsn)$,
                $\ipc:=\xi(\ip)=\dval{ip}$ and \mbox{$\xiN{\ipc}=\xi$}.
		Using $(\xi(\dip),\xi(\dsn),\kno,\val,\xi(\hops)\mathord{+}1,\xi(\sip),\emptyset)$ as
		new entry, the routing table is updated at Line~\ref{rrep:line5}.
		With exception of its precursors, which are irrelevant here, the routing table
		does not change between Lines~\ref{rrep:line5} and \ref{rrep:line13};
		nor do the values of the variables {\hops}, {\dip} and {\dsn}.
		Line~\ref{rrep:line3} guarantees that
		during the update in Line~\ref{rrep:line5},
		the new entry is inserted into the routing table,
		so\\
		\mbox{}\hfill$\begin{array}[b]{r@{~=~}l@{~=~}l@{~=~}l}
		\sq[\dipc]{\ipc}& \sqn{\xi(\rt)}{\xi(\dip)} & \xi(\dsn) & \dsnc\\
		\dhp[\dipc]{\ipc} & \dhops{\xi(\rt)}{\xi(\dip)} & \xi(\hops)+1 & \hopsc\\
		\sta[\dipc]{\ipc} & \status{\xi(\rt)}{\xi(\dip)} & \xi(\val) & \val\;.
		\end{array}$\hfill\mbox{\endbox}
\end{description}
\end{enumerate}
\end{proofNobox}

\begin{prop}\rm\label{prop:starcastrerr} Any sequence number appearing in a route error message
 stems from an invalid destination
 and is equal to the sequence number for that destination in
 the sender's routing table at the time of sending.
\begin{equation}\label{inv:starcast_rerr}
N\ar{R:\starcastP{\rerr{\destsc}{\ipc}}}_{\dval{ip}}N'\ans (\ripc,\rsnc)\in\destsc
\ims 
\ripc\in\ikd{\dval{ip}} \ans \rsnc = \sq[\ripc]{\dval{ip}}
\end{equation}
\end{prop}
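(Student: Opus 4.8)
The plan is to establish \eqref{inv:starcast_rerr} as a transition invariant, in the style of Remark~\ref{rem:remark}: I would inspect every location in Processes~\ref{pro:aodv}--\ref{pro:rerr} at which an error message is cast via \textbf{groupcast}, and for each such cast verify both conjuncts of the consequent for an arbitrary pair $(\ripc,\rsnc)\in\destsc$. By \Prop{ip=ipc} the sender identifies itself correctly, so $\ipc=\dval{ip}$ throughout. The casts of RERR messages occur at Pro.~\ref{pro:aodv} Line~\ref{aodv:line33}; Pro.~\ref{pro:pkt} Lines~\ref{pkt2:line14} and~\ref{pkt2:line20}; Pro.~\ref{pro:rreq} Lines~\ref{rreq:line19} and~\ref{rreq:line31}; Pro.~\ref{pro:rrep} Line~\ref{rrep:line20}; and Pro.~\ref{pro:rerr} Line~\ref{rerr:line6}.

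The unifying observation is the behaviour of \hyperlink{invalidate}{\fninv} (Section~\ref{sssec:invalidate}): after an assignment $\assignment{\rt:=\inv{\rt}{\dests}}$, every destination $\rip$ with $(\rip,\rsn)\in\dests$ for which $\rt$ had an entry becomes an \emph{invalid} entry whose stored sequence number is exactly $\rsn$. In all cases except Pro.~\ref{pro:pkt} Line~\ref{pkt2:line20}, the set $\destsc$ transmitted in the message is a subset of the very set $\dests$ passed to $\fninv$ immediately beforehand---the subset arising only from discarding destinations with empty precursor sets (e.g.\ at Lines~\ref{aodv:line31a} and~\ref{rerr:line3a})---and $\dests$ is constructed to list only destinations that currently have an entry in $\rt$, so the invalidation does apply to each of them. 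Crucially, none of the actions performed between the invalidation and the cast (setting \penFlag{}s, computing precursor sets, thinning $\dests$) alters the routing table. Hence at the instant of the cast the entry for $\ripc$ is invalid, giving $\ripc\in\ikd{\dval{ip}}$, and its sequence number equals the value $\rsnc$ taken from $\dests$, giving $\rsnc=\sq[\ripc]{\dval{ip}}$.

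It then remains only to identify $\dests$ and its matching invalidation in each case. In the ``broken-link'' cases---Pro.~\ref{pro:aodv} Line~\ref{aodv:line33} (invalidation at Line~\ref{aodv:line32}, set built at Line~\ref{aodv:line30}), and analogously Pro.~\ref{pro:pkt} Line~\ref{pkt2:line14} against Line~\ref{pkt2:line10}, Pro.~\ref{pro:rreq} Lines~\ref{rreq:line19},~\ref{rreq:line31} against Lines~\ref{rreq:line18},~\ref{rreq:line30}, and Pro.~\ref{pro:rrep} Line~\ref{rrep:line20} against Line~\ref{rrep:line18}---the set has the form $\dests=\{(\rip,\inc{\sqn{\rt}{\rip}})\mid\ldots\}$ assembled from the valid routes through the failed next hop, so $\rsnc=\inc{\sqn{\rt}{\ripc}}$ is the value copied into the now-invalid entry. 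In the forwarding case Pro.~\ref{pro:rerr} Line~\ref{rerr:line6}, the set built at Line~\ref{rerr:line2}---subject to the side condition $\sqn{\rt}{\rip}<\rsn$---carries the sequence numbers $\rsn$ taken from the incoming error message, which Line~\ref{rerr:line5} then writes verbatim into the routing table, again producing the exact match. In contrast to \Prop{msgsending}, no appeal to \hyperlink{induction-on-reachability}{induction on reachability} is required here, since in every case $\rsnc$ is either freshly computed from $\rt$ or stamped into $\rt$ by $\fninv$ before the send; the invariant is verified entirely locally at each cast.

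The one case falling outside the invalidation template, and the point needing the most care, is Pro.~\ref{pro:pkt} Line~\ref{pkt2:line20}: here the node has no valid route to $\dip$ but holds an \emph{already invalid} entry (the guard at Line~\ref{pkt2:line18}), and performs no further invalidation---indeed rule~(d) of Section~\ref{sssec:invalidate} forbids re-invalidating an invalid route. For this branch I would argue directly that $\destsc$ is (a subset of) the singleton $\{(\dip,\sqn{\rt}{\dip})\}$, so that $\ripc=\dip\in\ikd{\dval{ip}}$ holds already by the guard while $\rsnc=\sqn{\rt}{\dip}=\sq[\ripc]{\dval{ip}}$ is immediate, the routing table being unchanged up to the cast. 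The remaining obstacle is purely one of bookkeeping, already flagged in the ``Notions and Notations'' subsection: one must keep the valuation $\xi$ indexed to the correct line number, so that the ``time of sending'' in the statement faithfully refers to the post-invalidation, pre-cast routing table rather than to the table as it stood when $\dests$ was first assembled.
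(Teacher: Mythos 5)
Your proposal is correct and follows essentially the same approach as the paper's proof: the same seven-case analysis over the RERR casts, the same key observation that the transmitted set is a subset of the $\dests$ just passed to $\fninv$, which stamps each $\rsnc$ into the now-invalid entry with no intervening change to $\rt$, and the same separate direct treatment of Pro.~\ref{pro:pkt}, Line~\ref{pkt2:line20} via the guard at Line~\ref{pkt2:line18}. The extra details you supply (appeal to \Prop{ip=ipc}, the distinction between freshly incremented and copied sequence numbers) are harmless refinements of what the paper compresses into ``Exactly as above.''
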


\begin{proofNobox}
We have to check that the consequent holds whenever a route error message is sent.
In all the processes there are only seven locations where this happens.
\begin{description}
\item[Pro.~\ref{pro:aodv}, Line~\ref{aodv:line33}:]
The set $\destsc$ is constructed in Line~\ref{aodv:line31a} as a subset of 
$\xiN[N_{\ref*{aodv:line31a}}]{\dval{ip}}(\dests)=\xiN[N_{\ref*{aodv:line32}}]{\dval{ip}}(\dests)$.
For each pair $(\ripc,\rsnc)\in\xiN[N_{\ref*{aodv:line32}}]{\dval{ip}}(\dests)$
one has $\ripc=\xiN[N_{\ref*{aodv:line30}}]{\dval{ip}}(\rip)\in\fnakD_{N_{\ref*{aodv:line30}}}^{\dval{ip}}$.
Then in Line~\ref{aodv:line32}, using the function \hyperlink{invalidate}{$\fninv$},
$\status{\xi(\rt)}{\ripc}$ is set to $\inval$ and 
$\sqn{\xi(\rt)}{\ripc}$ to $\rsnc$.
Thus we obtain $\ripc\in\ikd{\dval{ip}}$ and
$\sq[\ripc]{\dval{ip}} = \rsnc$.
\item[Pro.~\ref{pro:pkt}, Line~\ref{pkt2:line14};
      Pro.~\ref{pro:rreq}, Lines~\ref{rreq:line19},~\ref{rreq:line31};
      Pro.~\ref{pro:rrep}, Line~\ref{rrep:line20};
      Pro.~\ref{pro:rerr}, Line~\ref{rerr:line6}:]
Exactly as above.
\item[Pro.~\ref{pro:pkt}, Line~\ref{pkt2:line20}:]
The set $\destsc$ contains only one single element. Hence $\ripc:=\xiN{\dval{ip}}(\dip)$ and $\rsnc:=\xiN{\dval{ip}}(\sqn{\rt}{\dip})$.
By Line~\ref{pkt2:line18}, we have $\ripc=\xiN{\dval{ip}}(\dip)\in\ikd{\dval{ip}}$. The remaining claim follows by 
$
\rsnc=\xiN{\dval{ip}}(\sqn{\rt}{\dip})=\sqn{\xiN{\dval{ip}}(\rt)}{\xiN{\dval{ip}}(\dip)} = \sq[\ripc]{\dval{ip}}.
$\endbox
\end{description}
\end{proofNobox}

\subsection{Well-Definedness}
We have to ensure that our specification of AODV is 
actually well defined. Since many functions introduced in \Sect{types} are only partial,
it has to be checked that these functions are either defined when they are used, 
or are subterms of atomic formulas. In the latter case, those formula would evaluate 
to {\tt false} (cf.\ Footnote~\ref{fn:undefvalues} on Page~\pageref{fn:undefvalues}).

The first proposition shows that the functions defined in \Sect{types} respect the data structure.
In fact, these properties are required (or implied) by our data structure.
\begin{prop}\label{prop:invarianti}\rm~
\begin{enumerate}[(a)]
\item\label{prop:invarianti_itemii} In each routing table
 there is at most one entry for each destination.
\item\label{prop:invarianti_itemiv} In each store of queued data
  packets there is at most one data queue for each destination.
\item\label{prop:invarianti_dests} Whenever a set of pairs $(\dval{rip},\dval{rsn})$ is assigned
  to the variable {\dests} of type $\tIP\rightharpoonup\tSQN$,
or to the first argument of the function $\rerrID$, this
  set is a partial function, i.e., there is at most one entry
  $(\dval{rip},\dval{rsn})$ for each destination $\dval{rip}$.
\end{enumerate}
\end{prop}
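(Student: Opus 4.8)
The plan is to read all three parts as the assertion that the assignments performed by our model respect the defining constraints of the types $\tRT$, $\tQUEUES$ and $\tIP\rightharpoonup\tSQN$, and to establish them by \hyperlink{induction-on-reachability}{induction on reachability} along the lines of Remark~\ref{rem:remark}. In every initial state the routing table, the store and every $\dests$ are empty (cf.\ \Eq{initialstate_rt}), so each claim holds vacuously there; it therefore suffices to inspect the finitely many program locations at which the relevant structure is modified and to check that, assuming the property beforehand, it is preserved.

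For part~\eqref{prop:invarianti_itemii} I would argue directly from the definitions of the only three functions that ever alter a routing table, namely \fnupd, \fninv\ and \fnaddprecrt. Both \fninv\ and \fnaddprecrt\ replace each affected entry by one with the same first component $\pi_1$, so the collection of destinations occurring in the table is unchanged and no duplication can arise. For \fnupd\ the first clause inserts a new entry only under the guard $\pi_1(\dval{r})\notin\kD{\rt}$, i.e.\ when the table contains no entry for that destination, whereas every remaining clause first removes the old entry $\dval{s}$ for $\pi_1(\dval{r})$ (forming $\rt-\{\dval{s}\}$) and then inserts a single new entry for the same destination. Hence at most one entry per destination survives. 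Part~\eqref{prop:invarianti_itemiv} is entirely analogous: a store is changed only by \fnadd, \fndrop, \fnunsetrrf\ and \fnsetrrf; the latter two leave the destination component of every triple fixed, and \fnadd\ and \fndrop\ either act on a destination that is absent from (respectively being deleted from) the store, or replace the unique triple for a destination by exactly one new triple carrying the same destination.

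For part~\eqref{prop:invarianti_dests} I would enumerate the assignments to $\dests$ (and to the first argument of $\rerrID$), which come in two shapes. In the error-handling branches triggered by a failed unicast (Pro.~\ref{pro:aodv}, Line~\ref{aodv:line30}; the analogous lines of Pro.~\ref{pro:pkt}, Pro.~\ref{pro:rreq} and Pro.~\ref{pro:rrep}; and, as a singleton, Pro.~\ref{pro:pkt}, Line~\ref{pkt2:line20}) the set is built as $\{(\dval{rip},\inc{\sqn{\rt}{\dval{rip}}})\}$ indexed by a set of pairwise distinct destinations, with the second component a function of the first; such a set is a partial function by construction. The ``thinning out'' performed before transmission (e.g.\ Pro.~\ref{pro:aodv}, Line~\ref{aodv:line31a}) and the reconstruction in Pro.~\ref{pro:rerr}, Line~\ref{rerr:line2}, both pass to a subset of an existing $\dests$, and the partial-function property is inherited by subsets; in particular the first argument finally handed to $\rerrID$ is again a partial function.

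The one genuinely non-routine step, and what I expect to be the main obstacle, is the base of the Pro.~\ref{pro:rerr} case: to conclude I need that the $\dests$ \emph{carried by} an incoming $\rerr$ message is already a partial function. This is exactly where the induction on reachability does real work. By \Prop{preliminaries}\eqref{it:preliminariesi} that message, say $m$, stems from a $\starcastP{m}$ action of some node at a strictly earlier transition, and the $\dests$ it carries was at that sending point the first argument of an application of $\rerrID$, hence a partial function by the instance of part~\eqref{prop:invarianti_dests} already available for that prior transition. Feeding this back closes the loop. The only care required is to phrase the whole argument over transition occurrences, so that the appeal to the invariant ``on the sending side'' is legitimately to an earlier point in the path; everything else reduces to reading off the relevant function and process definitions.
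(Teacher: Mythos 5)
Your proposal is correct and follows essentially the same route as the paper's proof: empty initial structures, a case analysis of the finitely many modification points, the observation that \fnupd, \fninv\ and \fnaddprecrt\ (resp.\ the store- and \dests-manipulating functions) preserve uniqueness of destinations, and---for the incoming \rerr\ case---an appeal to \Prop{preliminaries}\eqref{it:preliminariesi} plus induction on reachability to know the sender put a partial function into $\rerrID$. The only (harmless) difference is that you additionally check $\fnsetrrf$/$\fnunsetrrf$ in part~(\ref{prop:invarianti_itemiv}), which the paper subsumes under its brief remark that stores are only updated by functions respecting the invariant.
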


\begin{proofNobox}~
\begin{enumerate}[(a)]
\item In all initial states the invariant is satisfied, as a routing
table starts out empty
(see \eqref{eq:initialstate_rt} in Section~\ref{ssec:initial}).
None of the Processes \ref{pro:aodv}--\ref{pro:queues} of
\Sect{modelling_AODV} changes a routing table directly;
the only way a routing table can be changed is through the functions
\hyperlink{update}{$\fnupd$}, \hyperlink{invalidate}{$\fninv$} and
\hyperlink{addprert}{$\fnaddprecrt$}. The latter two
only change the sequence number, the validity status and the precursors of an existing route.
This kind of update has no effect on the invariant.
The first function inserts a new entry into a routing table only if the
destination is unknown, that is, if no entry for this destination
already exists in the routing table; otherwise the existing entry is replaced.
Therefore the invariant is maintained.
\item In any initial state the invariant is satisfied, as each store of
  queued data packets starts out empty. In Processes
  \ref{pro:aodv}--\ref{pro:queues} of \Sect{modelling_AODV}
  a store is updated only through the functions \hyperlink{add}{\fnadd} and \hyperlink{drop}{\fndrop}.
  These functions respect the invariant.
\item This is checked by inspecting all assignments to $\dests$ in
  Processes \ref{pro:aodv}--\ref{pro:queues}.
\begin{description}
\item[Pro.~\ref{pro:aodv}, Line~\ref{aodv:line16}:]
  The message $\xi(\msg)$ is received in Line~\ref{aodv:line2}, and
  hence, by \Prop{preliminaries}(\ref{it:preliminariesi}), sent by
  some node before. The content of the message does not change during
  transmission, and we assume there is only one way to read a message
  $\xi(\msg)$ as $\rerr{\xi(\dests)}{\xi(\sip)}$. By induction, we may
  assume that when the other node composed the message, a partial
  function was assigned to the first argument $\xi(\dests)$ of $\rerrID$.
\item[Pro.~\ref{pro:aodv}, Line~\ref{aodv:line30}; 
	Pro.~\ref{pro:pkt}, Line~\ref{pkt2:line9}; 
	Pro.~\ref{pro:rreq}, Lines~\ref{rreq:line16},~\ref{rreq:line28}; 
	Pro.~\ref{pro:rrep}, Line~\ref{rrep:line16}:]
  The assigned sets have the form
  $\{(\xi(\rip),\inc{\sqn{\xi(\rt)}{\xi(\rip)}})\mid \dots\})$.
  Since $\fninc$ and $\fnsqn$ are functions, for each $\xi(\rip)$ there is only
  one pair $(\xi(\rip),\inc{\sqn{\xi(\rt)}{\xi(\rip)}})$.
\item[Pro.~\ref{pro:aodv}, Line~\ref{aodv:line31a}; 
	Pro.~\ref{pro:pkt}, Line~\ref{pkt2:line13}; 
	Pro.~\ref{pro:rreq}, Lines~\ref{rreq:line17a},~\ref{rreq:line29a}; 
	Pro.~\ref{pro:rrep}, Line~\ref{rrep:line17a}; 
	Pro.~\ref{pro:rerr}, Line~\ref{rerr:line3a}:]
In each of these cases a set $\xi(\dests)$ constructed four lines before is used to construct a new set.
By the invariant to be proven, these sets are already partial functions. 
From these sets some values are removed.
Since subsets of partial functions are again partial functions, the claim follows immediately.
\item[Pro.~\ref{pro:rerr}, Line~\ref{rerr:line2}:] Similar to the previous case except that the set $\xi(\dests)$ to be thinned 
out is not constructed before but stems from an incoming RERR message.
\item[Pro.~\ref{pro:pkt}, Lines~\ref{pkt2:line20}:]
The set is explicitly given and consists of only one element; thus the claim is trivial.
\endbox
\end{description}
\end{enumerate}
\end{proofNobox}
Property~\eqref{prop:invarianti_itemii}
is stated in the RFC~\cite{rfc3561}.

\begin{prop}\label{prop:selr_welldefined}\rm
In our specification of AODV, the functions \hyperlink{selroute}{$\fnselroute$} and
\hyperlink{status}{$\fnstatus$}
are only used when they are defined.
\end{prop}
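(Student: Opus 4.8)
The plan is to prove the claim by exhaustively inspecting every occurrence of $\fnselroute$ and $\fnstatus$ among Processes~\ref{pro:aodv}--\ref{pro:queues}. The starting point is the observation, recorded at the end of \SSect{rt}, that both functions are defined on a pair $(\rt,\dip)$ \emph{exactly} when $\dip\in\kD{\rt}$. Since $\fnstatus$ is by definition $\pi_4\circ\fnselroute$, the two functions have the same domain, so it suffices to treat them together: at each point of use I must either establish $\dip\in\kD{\rt}$ for the routing table and destination actually referred to, or observe that the occurrence is a subterm of an atomic formula, in which case the convention of Footnote~\ref{fn:undefvalues} applies and evaluation is sound regardless (the formula is simply {\tt false} if the subterm is undefined). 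These are precisely the two permitted readings of ``used when defined'' announced at the head of the Well-Definedness subsection.

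For the bulk of the $\fnstatus$ occurrences --- those sitting inside guards of the shape $[\,\ldots\status{\rt}{\dip}\ldots\,]$ --- the second escape applies immediately. The genuinely partial, non-formula uses arise chiefly through $\fnaddprecrt$, which internally calls $\selr{\rt}{\dip}$, in the precursor-updating assignments of Pro.~\ref{pro:rreq} and Pro.~\ref{pro:rrep}, together with any explicit $\fnselroute$ lookups. For each such use I would locate, on every control path reaching it, a licence for $\dip\in\kD{\rt}$ coming from one of two sources. First, an earlier guard of the form $\dip\in\akD{\rt}$ or $\dip\in\ikD{\rt}$ forces $\dip\in\kD{\rt}$, since $\kD{\rt}=\akD{\rt}\cup\ikD{\rt}$; this covers, e.g., the forward-route precursor update at Line~\ref{rreq:line24}, guarded by Line~\ref{rreq:line22}, and the reverse updates in Pro.~\ref{pro:rrep} guarded by Line~\ref{rrep:line11}. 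Second, an $\upd{\rt}{\dval{r}}$ with $\pi_1(\dval{r})=\dip$ executed earlier on the path inserts or retains an entry for $\dip$, so that $\dip\in\kD{\rt}$ thereafter; this covers the precursor manipulations keyed to $\oip$ after Line~\ref{rreq:line6}, and those keyed to $\dip$ after Line~\ref{rrep:line5}. By Remark~\ref{rem:remark}, routing tables change only through $\fnupd$, $\fninv$ and $\fnaddprecrt$, of which the latter two never delete an entry, so knownness once established is preserved; here I would invoke \Prop{destinations maintained} to carry $\dip\in\kD{\rt}$ forward across any intervening transitions not touching $\dip$'s entry.

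The main obstacle is bookkeeping rather than mathematical depth. At each relevant line I must correctly identify which routing table and which destination the lookup refers to and confirm that the guaranteeing guard or update genuinely lies on every path reaching that line --- delicate because the table is frequently rewritten between the establishing step and the use, for instance between Line~\ref{rreq:line6} and the precursor manipulation later in Pro.~\ref{pro:rreq}. I would therefore track the destination in question explicitly through the intervening assignments, in the style of the reworded passages of \SSect{transition invariants}, and close each case by monotonicity of $\kD$. No individual case is hard; the real risk is overlooking an occurrence, so the enumeration must be demonstrably complete.
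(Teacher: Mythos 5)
Your proposal is sound in its logic, but it takes a genuinely different---and much heavier---route than the paper, which disposes of this proposition in two lines: in the actual process code of Processes~\ref{pro:aodv}--\ref{pro:queues}, the functions $\fnselroute$ and $\fnstatus$ are \emph{never used at all}; they occur only inside the \emph{definitions} of other functions ($\fnsqn$, $\fnsqnf$, $\fndhops$, $\fnnhop$, $\fnprecs$, $\fnaddprecrt$, \dots), so the claim is vacuously true. The well-definedness of those derived functions at their points of use is then handled in the subsequent, separate propositions (Propositions~\ref{prop:dhops_well_defined}--\ref{prop:qflag_well_defined}). What you do instead is fold the transitive uses into this proposition: you treat the internal call of $\fnselroute$ inside $\fnaddprecrt$ as a use of $\fnselroute$ and discharge it by exactly the licensing arguments the paper deploys later for Prop.~\ref{prop:addpreRT_well_defined} (the guard in Line~\ref{rreq:line22} for Line~\ref{rreq:line24}; the update in Line~\ref{rreq:line6} for Line~\ref{rreq:line25}; the update in Line~\ref{rrep:line5} and guard in Line~\ref{rrep:line11} for Lines~\ref{rrep:line12a}--\ref{rrep:line12b}), together with monotonicity of $\fnkD$ (Prop.~\ref{prop:destinations maintained}). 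That argument is correct and would even establish a stronger statement, but it duplicates work the paper deliberately distributes over the later propositions, and it sacrifices the clean decomposition in which each partial function gets its own well-definedness check.

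One factual misreading is worth flagging: you assert that ``the bulk of the $\fnstatus$ occurrences'' sit inside guards of the shape $[\,\ldots\status{\rt}{\dip}\ldots\,]$. No such guards exist in the specification; validity of routes is always tested via set membership, $\dip\in\akD{\rt}$ or $\dip\in\ikD{\rt}$, and the one formula involving $\status{\rt}{\dip}$ that the paper displays (as a candidate replacement for Line~\ref{rrep:line3} of Pro.~\ref{pro:rrep}) is explicitly a hypothetical rewriting discussed in the text, not part of the process code. This does not break your proof---the case is simply vacuous---but it shows the enumeration you rightly identify as the ``real risk'' was not actually carried out against the specification, which is precisely the step where the paper's observation (zero direct occurrences) makes everything collapse.
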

\begin{proof}In our entire specification we do not use
these functions at all; they are only used for defining other functions.
\end{proof}

\begin{prop}\label{prop:dhops_well_defined}\rm
In our specification of AODV, the function \hyperlink{dhops}{$\fndhops$}
is only used when it is defined.
\end{prop}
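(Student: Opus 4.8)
The plan is to mimic the bookkeeping strategy of Remark~\ref{rem:remark}: recall from the end of \SSect{rt} that $\fndhops$ is defined for arguments $\rt$ and $\dval{dip}$ exactly when $\dval{dip}\in\kD{\rt}$, and then inspect every occurrence of $\fndhops$ in our specification, sorting the occurrences into two kinds. Following the well-definedness convention recalled at the start of this subsection (and in Footnote~\ref{fn:undefvalues}), an occurrence is harmless if $\fndhops$ appears merely as a subterm of an atomic formula, since such a formula evaluates to {\tt false} when the subterm is undefined; for those occurrences nothing needs to be shown. The only occurrences that genuinely require $\dval{dip}\in\kD{\rt}$ are those in which the value $\dhops{\rt}{\dval{dip}}$ is actually consumed, i.e.\ copied into a message or used in an assignment.

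First I would collect the occurrences of $\fndhops$ that sit inside atomic formulas. These are the case guards in the definition of \hyperlink{update}{$\fnupd$}, namely the condition $\dhops{\rt}{\pi_{1}(\route)}>\pi_{5}(\route)$ (which is moreover conjoined with $\pi_{1}(\route)\in\kD{\rt}$), and---should one adopt the reformulation of \Pro{rrep}, Line~\ref{rrep:line3} discussed in Footnote~\ref{fn:lazy}---the disjunct $\dhops{\rt}{\dip}>\hops\mathord+1$ appearing there. In each of these the term $\fndhops$ is a subterm of an atomic formula, so by the convention of Footnote~\ref{fn:undefvalues} the use is well defined regardless of whether $\dval{dip}\in\kD{\rt}$; hence these require no further argument.

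The one remaining occurrence is in \Pro{rreq}, Line~\ref{rreq:line26}, where $\dhops{\rt}{\dip}$ is placed into the hop-count field of the generated route reply; here the value is genuinely consumed. I would argue that this line is reached only after the guard in Line~\ref{rreq:line22} has succeeded, and that guard in particular requires a \emph{valid} route to $\dip$, i.e.\ $\dip\in\akD{\rt}$. Since $\akD{\rt}\subseteq\kD{\rt}$, we obtain $\dip\in\kD{\rt}$, so $\dhops{\rt}{\dip}$ is defined at the point it is used. This exhausts the occurrences and completes the proof.

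I expect no serious obstacle: the argument is a finite, routine case check across Processes~\ref{pro:aodv}--\ref{pro:queues} and the auxiliary function definitions of \Sect{types}. The only delicate points are (i) making the enumeration of occurrences of $\fndhops$ exhaustive---in particular confirming that data-packet forwarding and RREP forwarding use $\fnnhop$ and the message's own hop count rather than $\fndhops$, so that no further value-use slips through---and (ii) correctly classifying each occurrence as either a value-use or a subterm of a formula, since only the former needs the guard $\dip\in\akD{\rt}$ supplied by Line~\ref{rreq:line22}.
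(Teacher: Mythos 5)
Your proposal is correct and takes essentially the same approach as the paper: the paper's proof likewise reduces everything to the single value-consuming occurrence in \Pro{rreq}, Line~\ref{rreq:line26}, and discharges it by noting that Line~\ref{rreq:line22} guarantees $\xi(\dip)\in\akD{\xi(\rt)}\subseteq\kD{\xi(\rt)}$. Your additional classification of the formula-embedded occurrences (in the definition of $\fnupd$ and in the Footnote~\ref{fn:lazy} reformulation) as harmless under the convention of Footnote~\ref{fn:undefvalues} is a slightly more explicit bookkeeping of cases the paper silently omits, but it is the same argument.
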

\begin{proofNobox}
The function $\dhops{\dval{rt}}{\dval{dip}}$ is defined iff $\dval{dip}\in\kD{\dval{rt}}$.
\begin{description}
\item[\Pro{rreq}, Line~\ref{rreq:line26}:]
By Line~\ref{rreq:line22} $\xi(\dip)\mathbin\in\akD{\xi(\rt)}\mathbin\subseteq\kD{\xi(\rt)}$; so $\dhops{\xi(\rt)}{\xi(\dip)}$ is defined.\hspace*{-10pt}\endbox
\end{description}
\end{proofNobox}

\begin{prop}\label{prop:nhop_well_defined}\rm
In our specification of AODV, the function \hyperlink{nhop}{$\fnnhop$} is either used within formulas or if it is defined; hence it is only used in a meaningful way. 
\end{prop}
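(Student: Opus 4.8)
The plan is to mimic the proofs of \Prop{selr_welldefined} and \Prop{dhops_well_defined}: recall that $\nhop{\rt}{\dip}$ is defined exactly when $\dip\in\kD{\rt}$, and then inspect every occurrence of $\fnnhop$ in Processes \ref{pro:aodv}--\ref{pro:queues}, sorting each into one of two groups. The first group consists of occurrences inside an atomic formula; here the only case is the construction of $\dval{dests}$ in \Pro{rerr}, Line~\ref{rerr:line2}, where the condition $\nhop{\rt}{\rip}=\sip$ is evaluated. By the convention of Footnote~\ref{fn:undefvalues}, if $\nhop{\rt}{\rip}$ is undefined this atomic formula simply evaluates to {\tt false}, so the use is meaningful irrespective of definedness. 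For the second group --- all remaining occurrences, which sit in the destination slot of a unicast or in an argument of $\fnaddprecrt$ --- I would instead show that the pertinent destination lies in $\kD{\rt}$ at the moment of evaluation, so that $\fnnhop$ is in fact defined.

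The second group splits according to whether the next hop is taken towards a forwarding destination $\dip$ or towards an originator $\oip$. For the data-forwarding unicasts in \Pro{aodv} (transmission of a queued packet, guarded by the test at Line~\ref{aodv:line22}) and in \Pro{pkt}, the relevant line is reached only when $\dip\in\akD{\rt}$, and since $\akD{\rt}\subseteq\kD{\rt}$ the term $\nhop{\rt}{\dip}$ is defined; the same guard covers the precursor update referring to $\nhop{\rt}{\dip}$ in \Pro{rreq}, Line~\ref{rreq:line25}. For the route-reply unicasts and precursor updates involving $\nhop{\rt}{\oip}$: in the two reply-sending cases of \Pro{rreq} (Lines~\ref{rreq:line14a} and~\ref{rreq:line26}) a valid entry for $\oip$ was installed by the $\fnupd$ call at Line~\ref{rreq:line6}, placing $\oip\in\kD{\rt}$; in \Pro{rrep} (Lines~\ref{rrep:line12a} and~\ref{rrep:line13}) the guard $\oip\in\akD{\rt}$ of Line~\ref{rrep:line11} applies directly; and the remaining \Pro{rrep} use, $\nhop{\rt}{\dip}$ at Line~\ref{rrep:line12b}, is covered because the update at Line~\ref{rrep:line5} has just placed $\dip$ in $\kD{\rt}$.

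The only subtlety I must discharge with care is that the guard (or $\fnupd$ call) establishing membership in $\kD{\rt}$ typically precedes by a few lines the point where $\fnnhop$ is actually evaluated, so I need to confirm that membership is not lost in between. This is immediate from \Prop{destinations maintained}: none of $\fnupd$, $\fninv$ or $\fnaddprecrt$ ever deletes a routing-table entry, so $\kD{\rt}$ only grows over the intervening steps and the established membership reaches the point of use unbroken. I expect this bookkeeping --- tracking each establishing guard forward to its point of use, and being exhaustive in enumerating all occurrences of $\fnnhop$ across the seven processes --- to be the main (and essentially only) obstacle; beyond the partiality convention already fixed in \Sect{process_algebra}, there is no genuine mathematical difficulty here.
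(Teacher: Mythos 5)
Your overall strategy coincides with the paper's own proof: recall that $\nhop{\dval{rt}}{\dval{dip}}$ is defined iff $\dval{dip}\in\kD{\dval{rt}}$, split the occurrences of $\fnnhop$ into those sitting inside atomic formulas (meaningful by the convention of Footnote~\ref{fn:undefvalues}) and those where definedness must be shown, and discharge the latter by pointing to the establishing guard or $\fnupd$ call. Your case analysis for the second group matches the paper's exactly (\Pro{aodv}, Line~\ref{aodv:line24}; \Pro{pkt}, Line~\ref{pkt2:line7}; \Pro{rreq}, Lines~\ref{rreq:line14a}, \ref{rreq:line25}, \ref{rreq:line26}; \Pro{rrep}, Lines~\ref{rrep:line12a}, \ref{rrep:line12b}, \ref{rrep:line13}), and your explicit appeal to \Prop{destinations maintained} to carry membership in $\kD{\rt}$ from the guard to the later point of use is sound---the paper leaves that bookkeeping implicit.

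There is, however, a concrete gap in your enumeration: Line~\ref{rerr:line2} of \Pro{rerr} is \emph{not} the only occurrence of $\fnnhop$ inside a formula. The constructions of $\dests$ that precede every invalidation-and-error-message sequence also use it, namely \Pro{aodv}, Line~\ref{aodv:line30}; \Pro{pkt}, Line~\ref{pkt2:line9}; \Pro{rreq}, Lines~\ref{rreq:line16} and~\ref{rreq:line28}; and \Pro{rrep}, Line~\ref{rrep:line16}, whose comprehension conditions contain $\nhop{\rt}{\rip}=\nhop{\rt}{\dip}$. These five occurrences are neither in the destination slot of a \textbf{unicast} nor arguments of $\fnaddprecrt$, so your description of the ``remaining occurrences'' does not catch them either; as written they are simply unaccounted for. (A smaller slip of the same kind: Line~\ref{rrep:line12b} of \Pro{rrep} uses $\nhop{\rt}{\oip}$ in addition to $\nhop{\rt}{\dip}$, covered by the guard of Line~\ref{rrep:line11}.) The repair is immediate---the missing cases are handled by precisely the Footnote~\ref{fn:undefvalues} argument you already give for Line~\ref{rerr:line2}---but since the entire substance of this proposition is an exhaustive inspection of all uses of $\fnnhop$ in the specification, the enumeration itself must be complete for the proof to stand.
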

\begin{proofNobox}
As in \Prop{dhops_well_defined}, the function $\nhop{\dval{rt}}{\dval{dip}}$ is defined iff $\dval{dip}\in\kD{\dval{rt}}$. 
\begin{description}
\item[\Pro{aodv}, Line~\ref{aodv:line30};
    \Pro{pkt}, Line~\ref{pkt2:line9}; 
    \Pro{rreq}, Lines~\ref{rreq:line16}, \ref{rreq:line28}; 
    \Pro{rrep}, Line~\ref{rrep:line16}; 
    \Pro{rerr}, Line~\ref{rerr:line2}:] The function is used within a formula.
\item[\Pro{aodv}, Line~\ref{aodv:line24}:] Line~\ref{aodv:line22} states $\xi(\dip)\in\akD{\xi(\rt)}$; hence $\nhop{\xi(\rt)}{\xi(\dip)}$ is defined.
\item[\Pro{pkt}, Line~\ref{pkt2:line7}:] By Line~\ref{pkt2:line5}, $\xi(\dip)\in\akD{\xi(\rt)}$.
\item[\Pro{rreq}, Lines~\ref{rreq:line14a}, \ref{rreq:line26}:] In Line~\ref{rreq:line6} the entry for destination $\xi(\hspace{-.2pt}\oip\hspace{-.2pt})$ is updated;
by this $\xi(\hspace{-.2pt}\oip\hspace{-.2pt})\mathbin\in\kD{\xi(\hspace{-.2pt}\rt\hspace{-.2pt})}$.
\item[\Pro{rreq}, Line~\ref{rreq:line25}:] By Line~\ref{rreq:line22} $\xi(\dip)\in\akD{\xi(\rt)}$.
\item[\Pro{rrep}, Lines~\ref{rrep:line12a},  \ref{rrep:line13}:]  By Line~\ref{rrep:line11} $\xi(\oip)\in\akD{\xi(\rt)}$.
\item[\Pro{rrep}, Line~\ref{rrep:line12b}:] In Line~\ref{rrep:line5} the entry for destination $\xi(\dip)$ is updated;
by this $\xi(\dip)\mathbin\in\kD{\xi(\rt)}$. By Line~\ref{rrep:line11} $\xi(\oip)\in\akD{\xi(\rt)}$.
\end{description}
  If $\fnnhop$ is used within a formula, then
  $\nhop{\dval{rt}}{\dval{rip}}$ may not be defined, namely
  if $\dval{rip}\not\in\kD{\dval{rt}}$. In such a case, according to the convention of
  Footnote~\ref{fn:undefvalues} in \Sect{process_algebra},  the atomic formula
  in which this term occurs evaluates to {\tt false}, and thereby is
  defined properly.\endbox
\end{proofNobox}
If one chooses to use lazy evaluation for conjunction, then \hyperlink{nhop}{$\fnnhop$} is only used where it is defined.

\begin{prop}\label{prop:precs_well_defined}\rm
In our specification of AODV, the function \hyperlink{precs}{$\fnprecs$} is only used when it is defined.
\end{prop}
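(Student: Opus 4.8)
The plan is to follow the same template as the preceding well-definedness results, \Prop{dhops_well_defined} and \Prop{nhop_well_defined}. Recall from \SSect{rt} that $\precs{\dval{rt}}{\dval{dip}}$ is defined exactly when $\dval{dip}\in\kD{\dval{rt}}$. So it suffices to locate every occurrence of $\fnprecs$ in Processes~\ref{pro:aodv}--\ref{pro:queues} and to check, for each, that the destination supplied as its second argument lies in $\kD{\xi(\rt)}$, where $\xi$ is the local valuation at the point of use.

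First I would observe that $\fnprecs$ is applied only inside the error-handling code, namely in the lines that collect the precursor set $\pre$ of the destinations to be warned (in the routines \Pro{aodv}, \Pro{pkt}, \Pro{rreq}, \Pro{rrep} and \Pro{rerr}). In each such line the precursors are read off for the destinations $\rip$ ranging over the set $\dests$ that was assembled a few lines earlier. The crux is then to recall how $\dests$ is built: in \Pro{aodv}, \Pro{pkt}, \Pro{rreq} and \Pro{rrep} it is constructed from the currently \emph{valid} destinations whose next hop is the broken link, so that $\rip\in\akD{\xi(\rt)}\subseteq\kD{\xi(\rt)}$; in \Pro{rerr} the set $\dests$ is thinned out from the incoming RERR message subject to a side condition selecting only destinations that have a (valid) routing-table entry, again giving $\rip\in\kD{\xi(\rt)}$. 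In every case the queried entry exists, so $\precs{\xi(\rt)}{\rip}$ is defined.

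The one point that needs care---and the likely main obstacle---is the relative ordering of the invalidation function $\fninv$ and the precursor collection. If $\pre$ is gathered only after the routes have been invalidated, the destinations $\rip$ would already sit in $\ikD{\xi(\rt)}$ rather than $\akD{\xi(\rt)}$. This causes no problem, because $\fninv$ never deletes an entry; it merely flips its validity flag and rewrites its sequence number. Hence membership in $\kD{\xi(\rt)}$ is preserved (this monotonicity is exactly \Prop{destinations maintained}), and $\precs{\xi(\rt)}{\rip}$ remains defined whether the precursors are read before or after invalidation. Unlike $\fnnhop$ in \Prop{nhop_well_defined}, $\fnprecs$ is never used inside a guard formula, so no appeal to the false-on-undefined convention of Footnote~\ref{fn:undefvalues} is required: in all its occurrences it is genuinely defined.
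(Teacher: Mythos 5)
Your strategy coincides with the paper's: $\precs{\dval{rt}}{\dval{dip}}$ is defined iff $\dval{dip}\in\kD{\dval{rt}}$, so one walks through every occurrence of $\fnprecs$ in Processes~\ref{pro:aodv}--\ref{pro:queues} and checks that the queried destination is a known one. Your third paragraph also correctly resolves the one genuine subtlety: in the error-handling blocks the call of \hyperlink{invalidate}{$\fninv$} precedes the lines that read the precursors, so at the moment $\fnprecs$ is applied the destinations taken from $\dests$ are already invalid; since invalidation never removes an entry, they remain in $\kD{\xi(\rt)}$. The paper's proof leaves this step implicit (it only records that $\dests$ was built from destinations in $\fnakD$ three or four lines earlier), so on this point your write-up is actually more careful than the original.

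However, your enumeration of the use sites is incomplete, and the missing occurrence is precisely the one that does not fit your description of them. In \Pro{pkt}, Line~\ref{pkt2:line20}, a node holding a data packet for a destination $\dip$ to which it has no valid route, but an invalid entry, sends the error message $\rerr{\{(\dip,\sqn{\rt}{\dip})\}}{\ip}$ to the precursors $\precs{\rt}{\dip}$. Here there is no set $\dests$ ``assembled a few lines earlier'' from valid destinations with a broken next hop: the first argument of $\rerrID$ is an explicitly given singleton, and $\dip$ has an \emph{invalid} entry. So neither your second paragraph (which presumes $\rip\in\akD{\xi(\rt)}$ at construction time) nor your third (which concerns the ordering of $\fninv$ and the precursor collection within a single error-handling block) applies to this line. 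The repair is immediate and uses a fact you already have at hand: the guard in Line~\ref{pkt2:line18} gives $\xi(\dip)\in\ikD{\xi(\rt)}\subseteq\kD{\xi(\rt)}$, so $\fnprecs$ is defined there as well. The paper treats this as a separate case of the proof; as written, yours does not cover it.
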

\begin{proofNobox}
As in \Prop{dhops_well_defined}, the function $\precs{\dval{rt}}{\dval{dip}}$ is defined iff $\dval{dip}\in\kD{\dval{rt}}$. 
\begin{description}
\item[\Pro{aodv}, Line~\ref{aodv:line31}; 
	\Pro{pkt}, Line~\ref{pkt2:line12}; 
	\Pro{rreq}, Lines~\ref{rreq:line17}, \ref{rreq:line29}; 
	\Pro{rrep}, Line~\ref{rrep:line17}:] 
Three lines before the $\fnprecs$ is used, the set $\xi(dests)$ is created containing only pairs $(\xi(\rip),*)$ with $\xi(\rip)\in\akD{\xi(\rt)}$.
\item[\Pro{aodv}, Line~\ref{aodv:line31a}; 
	\Pro{pkt}, Line~\ref{pkt2:line13}; 
	\Pro{rreq}, Lines~\ref{rreq:line17a}, \ref{rreq:line29a}; 
	\Pro{rrep}, Line~\ref{rrep:line17a}:] 
Four lines before the $\fnprecs$ is used, the set $\xi(dests)$ is created containing only pairs $(\xi(\rip),*)$ with $\xi(\rip)\in\akD{\xi(\rt)}$.
\item[\Pro{pkt}, Line~\ref{pkt2:line20}:] Line~\ref{pkt2:line18} states that $\xi(\dip)\in\ikD{\xi(\rt)}\subseteq\kD{\xi(\rt)}$.
\item[\Pro{rerr}, Line~\ref{rerr:line3}:] Similar to \Pro{aodv}, Line~\ref{aodv:line31}; the set $\xi(\dests)$ is created under the assumption 
$\xi(\rip)\in\akD{\xi(\rt)}$ in Line~\ref{rerr:line2}.
\endbox
\end{description}
\end{proofNobox}

\begin{prop}\label{prop:upd_well_defined}\rm
In our specification of AODV, the function \hyperlink{update}{$\fnupd$} is only used when it is defined. 
\end{prop}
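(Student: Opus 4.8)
The plan is to follow the strategy of Remark~\ref{rem:remark} and inspect every occurrence of \hyperlink{update}{$\fnupd$} in Processes~\ref{pro:aodv}--\ref{pro:queues}, checking in each case that the routing table entry $\dval{r}$ passed as second argument satisfies the three conditions under which \hyperlink{update}{$\fnupd$} is defined: $\pi_{4}(\dval{r})=\val$, $\pi_{2}(\dval{r})=0\Leftrightarrow\pi_{3}(\dval{r})=\unkno$, and $\pi_{3}(\dval{r})=\unkno\ims\pi_{5}(\dval{r})=1$. By inspection, \hyperlink{update}{$\fnupd$} is called only at Lines~\ref{aodv:line10}, \ref{aodv:line14} and~\ref{aodv:line18} of Pro.~\ref{pro:aodv}, at Line~\ref{rreq:line6} of Pro.~\ref{pro:rreq}, and at Line~\ref{rrep:line5} of Pro.~\ref{pro:rrep}, so there are only these five cases to treat.

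First I would dispose of the three calls in Pro.~\ref{pro:aodv}, where the supplied entry is $\xi(\sip,0,\unkno,\val,1,\sip,\emptyset)$. All three conditions can be read off directly: the validity flag is $\val$; the sequence number $0$ is paired with status $\unkno$, so the biconditional holds; and $\pi_{3}(\dval{r})=\unkno$ together with $\pi_{5}(\dval{r})=1$ makes the final implication true. Hence these cases are immediate.

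The substance lies in Line~\ref{rreq:line6} of Pro.~\ref{pro:rreq} and Line~\ref{rrep:line5} of Pro.~\ref{pro:rrep}, where the new entries are $\xi(\oip,\osn,\kno,\val,\hops\mathord+1,\sip,\emptyset)$ and $\xi(\dip,\dsn,\kno,\val,\hops\mathord+1,\sip,\emptyset)$. In both the validity flag is $\val$ and the status flag is $\kno$, so the third condition holds vacuously, and the biconditional reduces---since $\pi_{3}(\dval{r})=\kno\neq\unkno$---to showing that the sequence number is nonzero, i.e.\ $\xi(\osn)\geq1$, respectively $\xi(\dsn)\geq1$. This is the main obstacle, because these values are not generated locally: $\xi(\osn)$ is copied from an incoming RREQ and $\xi(\dsn)$ from an incoming RREP, both read in Line~\ref{aodv:line2} of Pro.~\ref{pro:aodv}. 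I would discharge it exactly as in the proof of \Prop{msgsendingii}: by \Prop{preliminaries}(\ref{it:preliminariesi}) the incoming message must have been transmitted by a $\starcastP{m}$-action of some node at an earlier transition, whence \hyperlink{induction-on-reachability}{induction on reachability} lets me apply Invariant~\eqref{inv:starcast_sqni} (resp.~\eqref{inv:starcast_sqnii}) to that earlier cast to obtain $\xi(\osn)\geq1$ (resp.\ $\xi(\dsn)\geq1$). This establishes the biconditional in the remaining two cases and hence the proposition.
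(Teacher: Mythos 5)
Your overall strategy---enumerate every occurrence of $\fnupd$ in Processes~\ref{pro:aodv}--\ref{pro:queues} and verify the three definedness conditions, discharging the two non-trivial cases via \Prop{preliminaries}(\ref{it:preliminariesi}), \hyperlink{induction-on-reachability}{induction on reachability} and Invariants~\eqref{inv:starcast_sqni}/\eqref{inv:starcast_sqnii}---is exactly the paper's argument, and your treatment of the five cases you list is correct. The gap is in the enumeration itself: $\fnupd$ does not occur in only those five places. In \Pro{rrep} it also occurs inside the \emph{guards} of Lines~\ref{rrep:line3} and~\ref{rrep:line25}, where the process tests whether the incoming RREP will actually create or update a forward routing table entry (and, complementarily, whether it will not); this is why the paper's proof treats the cases ``\Pro{rrep}, Lines~\ref{rrep:line3},~\ref{rrep:line5},~\ref{rrep:line25}'', and why \SSect{modifyRREP} later remarks that the cases for Lines~\ref{rrep:line3} and~\ref{rrep:line25} are skipped once those lines are dropped. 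Your claim that ``there are only these five cases to treat'' is therefore false of the specification, and in a well-definedness proof whose entire content is an exhaustive scan of occurrences, an incomplete scan is a real defect. Note that the distinction matters here: unlike \Prop{nhop_well_defined}, which only claims meaningful use \emph{within formulas} (falling back on the convention of Footnote~\ref{fn:undefvalues}), \Prop{upd_well_defined} asserts that $\fnupd$ is defined wherever it is used, including in guards, where an undefined application would silently force the atomic formula to {\tt false} and alter the control flow.

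That said, the repair is immediate: the entry supplied to $\fnupd$ in the guards of Lines~\ref{rrep:line3} and~\ref{rrep:line25} is the same $\xi(\dip,\dsn,\kno,\val,\hops\mathord+1,\sip,\emptyset)$ as in Line~\ref{rrep:line5}, so the argument you give for Line~\ref{rrep:line5}---validity flag $\val$, status flag $\kno$, and $\xi(\dsn)\geq 1$ by Invariant~\eqref{inv:starcast_sqnii} applied to the incoming RREP message---covers all three occurrences verbatim. Adding these two cases (or, as the paper does, treating Lines~\ref{rrep:line3},~\ref{rrep:line5} and~\ref{rrep:line25} as a single case) makes your proof complete.
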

\begin{proofNobox}
$\upd{\dval{rt}}{\dval{r}}$ is defined only under the assumptions $\pi_{4}(\dval{r})\mathbin=\val$,
\mbox{$\pi_{2}(\dval{r})\mathbin=0 \Leftrightarrow \pi_{3}(\dval{r})\mathbin=\unkno$} and $\pi_{3}(r)=\unkno\Rightarrow\pi_{5}(r)=1$.
In \Pro{aodv}, Lines~\ref{aodv:line10}, \ref{aodv:line14} and \ref{aodv:line18}, the entry
$\xi(\sip,0,\unkno,\val,1,\sip, \emptyset)$ is used as second argument, which obviously satisfies the 
assumptions. The function is used at four other locations: 
\begin{description}
\item[\Pro{rreq}, Line~\ref{rreq:line6}:]
Here, the entry $\xi(\oip, \osn, \kno, \val, \hops + 1, \sip, \emptyset)$ is
used as $\dval{r}$ to update the routing table. 
This entry fulfils $\pi_{4}(\dval{r})=\val$. 
Since $\pi_{3}(\dval{r})=\kno$, it remains to show that $\pi_{2}(\dval{r})=\xi(\osn)\geq1$.
The sequence number $\xi(\osn)$ stems, through Line~\ref{aodv:line8}
of Pro.~\ref{pro:aodv}, from an incoming RREQ message and is not changed within Pro.~\ref{pro:rreq}.
Hence, by Invariant~\eqref{inv:starcast_sqni}, $\xi(\osn)\geq 1$.
\item[\Pro{rrep}, Lines~\ref{rrep:line3},~\ref{rrep:line5},~\ref{rrep:line25}:]
                  The update is similar to the one of Pro.~\ref{pro:rreq}, Line~\ref{rreq:line6}. 
                  The only difference is that the information stems from an incoming RREP message and 
                  that a routing table entry to $\xi(\dip)$ (instead of  $\xi(\oip)$) is established. 
                  Therefore, the proof is similar to the one of Pro.~\ref{pro:rreq}, Line~\ref{rreq:line6}; instead 
                  of Invariant~\eqref{inv:starcast_sqni} we use Invariant~\eqref{inv:starcast_sqnii}. \endbox
\end{description}
\end{proofNobox}

\begin{prop}\label{prop:addpreRT_well_defined}\rm
In our specification of AODV, the function \hyperlink{addprert}{$\fnaddprecrt$} is only used when it is defined.
\end{prop}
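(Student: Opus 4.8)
The plan is to follow exactly the recipe of the preceding well\=/definedness propositions (\Prop{dhops_well_defined}--\Prop{precs_well_defined}) together with Remark~\ref{rem:remark}. Observe first that $\addprecrt{\dval{rt}}{\dval{dip}}{\dval{npre}}$ merely removes the entry $\selr{\dval{rt}}{\dval{dip}}$ and re\=/inserts it with extra precursors, so it is defined exactly when $\selr{\dval{rt}}{\dval{dip}}$ is, i.e.\ when $\dval{dip}\in\kD{\dval{rt}}$ (cf.\ \SSect{rt}). Hence it suffices to walk through every occurrence of $\fnaddprecrt$ in Processes~\ref{pro:aodv}--\ref{pro:queues} and check that its second argument is a known destination of the current routing table. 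The only occurrences are \Pro{rreq}, Lines~\ref{rreq:line24} and~\ref{rreq:line25}, and \Pro{rrep}, Lines~\ref{rrep:line12a} and~\ref{rrep:line12b}; the suggested additional update on Page~\pageref{pg:small_precursor_improvement} is not part of the specification and can be ignored.

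The first three cases are immediate and mirror the corresponding cases of \Prop{nhop_well_defined}. For \Pro{rreq}, Line~\ref{rreq:line24}, the destination argument is $\xi(\dip)$, and Line~\ref{rreq:line22} gives $\xi(\dip)\in\akD{\xi(\rt)}\subseteq\kD{\xi(\rt)}$. For \Pro{rreq}, Line~\ref{rreq:line25}, the argument is $\xi(\oip)$, which lies in $\kD{\xi(\rt)}$ because the entry for $\xi(\oip)$ was just created or updated at Line~\ref{rreq:line6}. For \Pro{rrep}, Line~\ref{rrep:line12a}, the argument is $\xi(\dip)$, which is known by the update at Line~\ref{rrep:line5}.

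The genuinely delicate case is \Pro{rrep}, Line~\ref{rrep:line12b}, whose destination argument is $\nhop{\xi(\rt)}{\xi(\dip)}$; here one needs $\nhop{\xi(\rt)}{\xi(\dip)}\in\kD{\xi(\rt)}$, which is a statement about the \emph{value} of $\fnnhop$ rather than about its domain and so does not follow from the earlier propositions. The plan is to argue locally: by Line~\ref{rrep:line3} the update at Line~\ref{rrep:line5} is forced to insert its new entry, whose next hop is $\xi(\sip)$ (the fifth clause of $\fnupd$ cannot apply, as that entry carries $\kno$), and this next hop survives unchanged up to Line~\ref{rrep:line12b} since Line~\ref{rrep:line12a} only touches precursors; thus $\nhop{\xi(\rt)}{\xi(\dip)}=\xi(\sip)$ there. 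It then remains to note that $\xi(\sip)\in\kD{\xi(\rt)}$: before \RREP\ is entered, the main routine inserted an entry for the sender at \Pro{aodv}, Line~\ref{aodv:line14}, and no update function ever deletes a routing\=/table entry (cf.\ \Prop{destinations maintained}); the intervening $\xi(\dip)$\=/update leaves the $\xi(\sip)$\=/entry intact unless $\xi(\sip)=\xi(\dip)$, in which case $\nhop{\xi(\rt)}{\xi(\dip)}=\xi(\dip)\in\kD{\xi(\rt)}$ holds anyway.

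I expect this last case to be the main obstacle: every other use reduces to a membership fact already recorded, whereas Line~\ref{rrep:line12b} forces one to reason that the \emph{stored} next hop towards $\dip$ is itself a destination in the table. The cleanest route is probably to isolate this as a standalone invariant---``the next hop of any known route is a known destination'', i.e.\ $\nhop{\dval{rt}}{\dval{dip}}\in\kD{\dval{rt}}$ whenever $\dval{dip}\in\kD{\dval{rt}}$---proved once by the usual induction over the occurrences of $\fnupd$, $\fninv$ and $\fnaddprecrt$, and then to invoke it directly here.
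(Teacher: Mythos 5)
Your proof is correct and follows essentially the same route as the paper's: the first three occurrences are discharged by the guards/updates at Pro.~\ref{pro:rreq}, Lines~\ref{rreq:line22} and~\ref{rreq:line6}, and Pro.~\ref{pro:rrep}, Line~\ref{rrep:line5}, and for the delicate case Pro.~\ref{pro:rrep}, Line~\ref{rrep:line12b} the paper argues exactly as you do---Line~\ref{rrep:line3} forces the actual insertion of the new entry, so $\nhop{\xi(\rt)}{\xi(\dip)}=\xi(\sip)$, and an entry for $\xi(\sip)$ exists by Line~\ref{aodv:line14} of \Pro{aodv}. Incidentally, the standalone invariant you suggest at the end (``the next hop of any known route is a known destination'') is exactly \Prop{route to nhip}, which the paper proves later and does not invoke here, but does use for this very purpose when adapting the proposition to the modified protocol of \SSect{modifyRREP}.
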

\begin{proofNobox}
It suffices to check that for any call $\addprecrt{\dval{rt}}{\dval{dip}}{*}$ the destination has an entry in the routing table, i.e., $\dval{dip}\in\kD{\dval{rt}}$.
\begin{description}
\item[Pro.~\ref{pro:rreq}, Line~\ref{rreq:line24}:] Line~\ref{rreq:line22} shows that $\xi(\dip)\in\akD{\xi(\rt)}\subseteq\kD{\xi(\rt)}$.
\item[Pro.~\ref{pro:rreq}, Line~\ref{rreq:line25}:] In Line~\ref{rreq:line6} an entry to $\xi(\oip)$ is updated. 
In case there was no entry before, it is inserted; hence we know $\xi(\oip)\in\kD{\xi(\rt)}$.
\item[Pro.~\ref{pro:rrep}, Line~\ref{rrep:line12a}:] Similar to the previous case: Line~\ref{rrep:line5} updates a 
routing entry to $\xi(\dip)$.
\item[Pro.~\ref{pro:rrep}, Line~\ref{rrep:line12b}:] Line~\ref{rrep:line5} updates the routing table entry with destination 
$\xi(\dip)$. By Line~\ref{rrep:line3} it is known that the entry $\xi(\dip,\dsn,\kno,\val,\hops+1,\sip,\emptyset)$
is inserted; hence $\nhop{\xi(\rt)}{\xi(\dip)} = \xi(\sip)$. A routing table entry for $\xi(\sip)$ exists by Line~\ref{aodv:line14} of \Pro{aodv}.
\endbox
\end{description}
\end{proofNobox}

\begin{prop}\label{prop:headtail_well_defined}\rm
In our specification of AODV, the functions \hyperlink{head}{$\fnhead$ and $\fntail$} are only used when they are defined.
\end{prop}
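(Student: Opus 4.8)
The plan is to use the fact that $\fnhead$ and $\fntail$, being the usual queue operations, are defined on a queue $q$ precisely when $q\neq[\,]$. Hence, exactly as in the preceding well-definedness propositions, it suffices to locate every syntactic occurrence of $\fnhead$ or $\fntail$ in Processes~\ref{pro:aodv}--\ref{pro:queues} and to argue that at each such point the queue supplied as argument is non-empty. There are two kinds of queues on which these operations act: the per-destination data queues $\selq{\queues}{\dip}$ stored in a node's variable $\queues$, and the node's message queue $\msgs$; I would organise the case analysis accordingly. Occurrences of $\fntail$ \emph{inside} the definition of $\fndrop$ are not direct uses and belong to the separate question of definedness of $\fndrop$.

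For the data queues, the only direct use of $\fnhead$ is in the data-sending part of the basic routine (Process~\ref{pro:aodv}, between Lines~\ref{aodv:line22} and~\ref{aodv:line24}), where $\data:=\fnhead(\selq{\queues}{\dip})$. Here I would invoke Line~\ref{aodv:line22}, which chooses $\dip$ from $\qD{\queues}\cap\akD{\rt}$, so in particular $\dip\in\qD{\queues}$. Invariant~\eqref{non-empty} then gives $\selq{\queues}{\dip}\neq[\,]$, so $\fnhead$ is applied to a non-empty queue and is therefore defined.

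For the message queue, the only uses of $\fnhead$ and $\fntail$ occur in the queue process $\QMSG$ (Process~\ref{pro:queues}): the branch that forwards the oldest buffered message to the main routine sends $\fnhead(\msgs)$ and recurses on $\fntail(\msgs)$. Both of these are guarded by an explicit test $\msgs\neq[\,]$ that precedes them, so both functions are again applied only to a non-empty queue, completing the enumeration. I do not expect any genuine obstacle: the whole argument reduces to the already-established invariant~\eqref{non-empty} for the single data-queue occurrence and to reading off the explicit non-emptiness guard for the two message-queue occurrences. The only thing to be careful about is not to overlook an occurrence, in particular to check that the data queues are never head/tail-ed anywhere other than Process~\ref{pro:aodv}: packet reception in $\NEWPKT$ and $\PKT$ touches queues only through $\fnadd$, and invalidation only through $\fndrop$, neither of which is a direct use of $\fnhead$ or $\fntail$.
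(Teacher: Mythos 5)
Your proof is correct and takes essentially the same approach as the paper's own: it checks exactly the same two occurrences, namely \Pro{aodv}, Line~\ref{aodv:line23} (where $\fnhead$ is applied to $\selq{\xi(\queues)}{\xi(\dip)}$, justified by Line~\ref{aodv:line22} together with Invariant~(\ref{non-empty})) and \Pro{queues}, Line~\ref{queues:line5} (where $\fnhead$ and $\fntail$ act on $\msgs$, guarded by the test $\msgs\neq[\,]$ in Line~\ref{queues:line3}), and it correctly defers the occurrence of $\fntail$ inside the definition of $\fndrop$ to the separate well-definedness proposition for $\fndrop$. One negligible slip in a side remark: $\fndrop$ is used to remove a successfully transmitted packet (\Pro{aodv}, Line~\ref{aodv:line26}), not upon invalidation (which uses $\fnsetrrf$), but this does not affect your enumeration.
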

\begin{proofNobox}These functions are defined if the list given as argument is non-empty. 
\begin{description}
\item[Pro.~\ref{pro:aodv}, Line~\ref{aodv:line23}:] The function {\fnhead}
	tries to return the first element of $\selq{\xi(\queues)}{\xi(\dip)}$,
         which is, by Line~\ref{aodv:line22}
	($\xi(\dip)\in\qD{\xi(\queues)}$) and (\ref{non-empty}), not empty.
\item[Pro.~\ref{pro:queues}, Line~\ref{queues:line5}:] Here, the functions work on
the list $\xi(\msgs)$; Line~\ref{queues:line3} shows that \mbox{$\xi(\msgs)\not=[\,]$}.
\endbox
\end{description}
\end{proofNobox}

\begin{prop}\label{prop:drop_well_defined}\rm
In our specification of AODV, the function \hyperlink{drop}{$\fndrop$} is only used when it is defined.
\end{prop}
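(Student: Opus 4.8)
The plan is to reuse the uniform strategy of the preceding well-definedness propositions, such as \Prop{headtail_well_defined}: first determine exactly when $\fndrop$ is defined, and then inspect every line of Processes~\ref{pro:aodv}--\ref{pro:queues} at which $\fndrop$ is invoked, checking that the definedness condition holds there. Concretely, I would recall from \SSect{store} that $\drop{\dip}{\queues}$ is computed by applying $\fntail$ to the selected queue $q:=\selq{\queues}{\dip}$, so that $\fndrop$ is defined exactly when $q\neq[\,]$. By the definition of $\fnselqueue$ we have $\selq{\queues}{\dip}=[\,]$ whenever $\dip\notin\qD{\queues}$, while Invariant~\eqref{non-empty} supplies the converse implication $\dip\in\qD{\queues}\Rightarrow\selq{\queues}{\dip}\neq[\,]$. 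Hence $\drop{\xi(\dip)}{\xi(\queues)}$ is defined if and only if $\xi(\dip)\in\qD{\xi(\queues)}$, and it suffices to establish $\xi(\dip)\in\qD{\xi(\queues)}$ at each occurrence of $\fndrop$.

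Next I would collect all occurrences of $\fndrop$ in the specification. The only place a queued data packet is removed from a store is \Pro{aodv}, Line~\ref{aodv:line26}, reached after a successful unicast of the packet selected for destination $\dip$. There Line~\ref{aodv:line22} has already chosen $\dip$ so that $\xi(\dip)\in\qD{\xi(\queues)}\cap\akD{\xi(\rt)}$, and in particular $\xi(\dip)\in\qD{\xi(\queues)}$. Between Lines~\ref{aodv:line22} and~\ref{aodv:line26} the process only reads $\data$, performs the unicast and possibly evaluates a guard; none of these assigns to the variable \queues. Thus the value of \queues, and hence the set $\qD{\xi(\queues)}$, is unchanged, so $\xi(\dip)\in\qD{\xi(\queues)}$ still holds at Line~\ref{aodv:line26} and $\fndrop$ is applied within its domain.

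There is no deep obstacle here; the care lies entirely in the bookkeeping. The two points that must be checked scrupulously are, first, that I have located every invocation of $\fndrop$ across all seven processes (rather than only the expected one in \Pro{aodv}), and second, that \queues is genuinely not reassigned along the control path from the line establishing $\dip\in\qD{\queues}$ to the line applying $\fndrop$. Because $\fndrop$ is the only operation on a store that can fail to be defined, its well-definedness reduces completely to Invariant~\eqref{non-empty} together with this control-flow tracking.
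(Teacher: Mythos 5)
Your proposal is correct and follows essentially the same route as the paper's proof: identify the unique occurrence of $\fndrop$ in Pro.~\ref{pro:aodv}, Line~\ref{aodv:line26}, and derive definedness from the guard $\xi(\dip)\in\qD{\xi(\queues)}$ of Line~\ref{aodv:line22} together with Invariant~(\ref{non-empty}). Your additional check that \queues{} is not reassigned between Lines~\ref{aodv:line22} and~\ref{aodv:line26} makes explicit a step the paper leaves implicit, but it does not change the argument.
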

\begin{proof}
The function $\fndrop$ is only used in Pro.~\ref{pro:aodv}, Line~\ref{aodv:line26}.
It tries to delete the oldest packet queued for destination $\xi(\dip)$; the function is defined
if at least one packet for $\xi(\dip)$ is stored in $\xi(\queues)$---this is guaranteed by
Line~\ref{aodv:line22}, which states $\xi(\dip)\in\qD{\xi(\queues)}$, and (\ref{non-empty}).
\end{proof}

\begin{prop}\label{prop:qflag_well_defined}\rm
In our specification of AODV, the function \hyperlink{qflag}{$\fnfD$} is only used within formulas.\endbox
\end{prop}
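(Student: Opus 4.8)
The plan is to establish this by direct syntactic inspection of all occurrences of $\fnfD$ in Processes~\ref{pro:aodv}--\ref{pro:queues}, exactly in the spirit of the preceding well-definedness results such as \Prop{nhop_well_defined} and \Prop{precs_well_defined}. Recall that, unlike several of those propositions, the present claim is weaker: I do not need to show that $\fnfD$ is always \emph{defined} where it is used, only that it is always used \emph{within a formula}. This already suffices, because by the convention of Footnote~\ref{fn:undefvalues} any atomic formula containing an undefined subterm evaluates to \texttt{false}; hence a use of $\fnfD$ inside a guard is meaningful regardless of whether the flag happens to be defined for the destination in question.

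First I would scan the entire specification for every syntactic occurrence of $\fnfD$, written $\fD{\queues}{\dip}$. According to the informal description of the basic routine in \SSect{proc_aodv}, the \penFlag is consulted precisely to decide whether a fresh route discovery process must be initiated; this happens in the guard governing the third part of $\AODV$ (Lines~\ref{aodv:line34}--\ref{aodv:line40}), where the process tests that a destination with queued data is ``not waiting for a reply'', i.e.\ that its flag equals $\nonpen$. For each located occurrence I would then verify that it appears as an atomic subformula of a guard $[\varphi]$, and not as a standalone term that is genuinely evaluated---for instance on the right-hand side of an assignment, or as an argument of \textbf{unicast}, \textbf{broadcast}, \textbf{send} or a message constructor. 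Since the \penFlag is only ever \emph{read} in order to be compared against $\nonpen$, every such occurrence is of the form $\fD{\queues}{\dip}=\nonpen$ sitting inside a guard, and the conclusion is then immediate.

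The main obstacle is not mathematical depth but completeness of the case analysis: I must be sure I have caught every occurrence of $\fnfD$ across all seven processes, since a single use outside a formula would falsify the claim. In carrying this out I would take care to distinguish reads of the flag via $\fnfD$ from the operations $\fnsetrrf$ and $\fnunsetrrf$, which \emph{write} the flag rather than read it; only the former are relevant to this proposition, and the latter are total, so they raise no definedness concern here.
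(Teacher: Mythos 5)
Your proposal is correct and matches the paper's own argument: a direct syntactic inspection showing that $\fnfD$ occurs only once, namely in the guard of Pro.~\ref{pro:aodv}, Line~\ref{aodv:line34} (as $\fD{\xi(\queues)}{\xi(\dip)}$ compared against $\nonpen$), hence only within a formula. Your additional care in separating reads via $\fnfD$ from writes via $\fnsetrrf$/$\fnunsetrrf$, and in noting why Footnote~\ref{fn:undefvalues} makes this weaker claim sufficient, is sound but does not change the substance of the argument.
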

The function is called only in Pro.~\ref{pro:aodv} in Line~\ref{aodv:line34} using
$\fD{\xi(\queues)}{\xi(\dip)}$. Again, if one would use lazy evaluation for
  conjunction, then \hyperlink{nhop}{$\fnfD$}
  is used where it is defined.

\subsection{The Quality of Routing Table Entries}\label{ssec:quality}

In this section we define a total preorder $\rtord$ on routing table
entries for a given destination \dval{dip}. Entries are ordered
by the \phrase{quality} of the information they provide. This order will be
defined in such a way that 
(a) the quality of a node's routing table entry for \dval{dip}
will only increase over time, and 
(b)  the quality of valid routing table entries along a route to \dval{dip} strictly increases every hop
(at least prior to reaching \dval{dip}).
This order allows us 
to prove \emph{loop freedom} of AODV in the next section.

A main ingredient in the definition of the quality preorder is the
sequence number of a routing table entry. A higher sequence number
denotes fresher information. However, it generally is not the
case that along a route to \dval{dip} found by AODV the sequence
numbers are only increasing. This is since AODV increases the
sequence number of an entry at an intermediate node when invalidating
it.  To ``compensate'' for that we introduce the concept of a \phrase{net
sequence number}. It is defined by a function $\fnnsqn:\tROUTE\to\tSQN$
\[\begin{array}{@{}r@{\hspace{0.5em}}c@{\hspace{0.5em}}l@{}}
\fnnsqn(\dval{r})&:=&\left\{
\begin{array}{ll}
\pi_2(\dval{r}) &\mbox{if } \pi_4(\dval{r})=\val
                             \ors\pi_2(\dval{r})=0\\
\pi_2(\dval{r})-1&\mbox{otherwise}\ .
\end{array}\right.
\end{array}\]
For $n\in\NN$ define $n\decremented:=\max(n\mathord-1,0)$,
so that $\hyperlink{inc}{\fninc}(n)\decremented=n$.
Then $\fnnsqn(r)\mathbin=\pi_2(r)\decremented$ if $\pi_4(r)\mathbin=\inval$.

To model increase in quality, we define $\rtord$ by first comparing the net sequence numbers of
two entries---a larger net sequence number denotes fresher
and higher quality information. In case the net sequence numbers are
equal, we decide on their hop counts---the entry with the least hop
count is the best. This yields the following lexicographical order:

Assume two routing table entries $\dval{r},\dval{r}'\in\tROUTE$ with
$\pi_1(\dval{r})=\pi_1(\dval{r})=\dval{dip}$. Then
\[
\dval{r} \rtord \dval{r}'\ :\Leftrightarrow\
\fnnsqn(\dval{r}) < \fnnsqn(\dval{r}') \ors 
\left(\fnnsqn(\dval{r})=\fnnsqn(\dval{r}')\ans
   \pi_5(\dval{r}) \geq \pi_5(\dval{r}')
\right)\ .
\]

To reason about AODV, net sequence numbers and the quality preorder is
lifted to routing tables. As for \hyperlink{sqn}{$\fnsqn$} we define a total function 
to determine net sequence numbers.
\hypertarget{nsqn}{
\[\begin{array}{r@{\hspace{0.5em}}r@{\hspace{0.5em}}l}
  \fnnsqn : \tRT\times\tIP&\to& \tSQN\\
  \nsqn{\dval{rt}}{\dval{dip}}&:=&
    \left\{
      \begin{array}{ll}
        \fnnsqn(\selr{\dval{rt}}{\dval{dip}})&\mbox{if }\selr{\dval{rt}}{\dval{dip}}\mbox{ is defined}\\
        0&\mbox{otherwise}
      \end{array}
    \right.\\[2ex]
  &=&
    \left\{
      \begin{array}{ll}
        \sqn{\dval{rt}}{\dval{dip}}&\mbox{if }\status{\dval{rt}}{\dval{dip}}=\val \ors \sqn{\dval{rt}}{\dval{dip}}=0\\
        \sqn{\dval{rt}}{\dval{dip}}-1&\mbox{otherwise}\ .
      \end{array}
    \right.
  \end{array}\]
}
If two routing tables $\dval{rt}$ and $\dval{rt}'$ have a routing table entry to $\dval{dip}$, i.e., $\dval{dip}\in\kD{\dval{rt}}\cap\kD{\dval{rt}'}$,
 the preorder can be lifted as well.
\[\begin{array}{r@{\hspace{0.5em}}r@{\hspace{0.5em}}l}
\dval{rt} \rtord \dval{rt}' &:\Leftrightarrow&
\selr{\dval{rt}}{\dval{dip}} \rtord \selr{\dval{rt}'}{\dval{dip}}\\
&\Leftrightarrow&\nsqn{\dval{rt}}{\dval{dip}} < \nsqn{\dval{rt}'}{\dval{dip}} \ors\\
&&\big(\nsqn{\dval{rt}}{\dval{dip}} = \nsqn{\dval{rt}'}{\dval{dip}} \ans
\dhops{\dval{rt}}{\dval{dip}} \geq \dhops{\dval{rt}'}{\dval{dip}}\big)
\end{array}
\]
\noindent For all destinations $\dval{dip}\in\IP$, 
the relation $\rtord$ on routing tables with an entry for \dval{dip} is total preorder.
The equivalence relation induced by $\rtord$ is denoted by $\rtequiv$.

As with \fnsqn, we shorten $\fnnsqn$:
$
\nsq{\dval{ip}} := \nsqn{\xiN{\dval{ip}}(\rt)}{\dval{dip}}.
$
Note that
\begin{equation}\label{eq:sqn_vs_nsqn}
\sq{\dval{ip}}\decremented\leq \nsq{\dval{ip}}\leq \sq{\dval{ip}}\ .
\end{equation}
After setting up this notion of quality, we now show that routing tables, when modified by AODV, 
never decrease their quality.
\pagebreak[3]

\begin{prop}\label{prop:qual}\rm~
Assume a routing table $\dval{rt}\in\tRT$ with $\dval{dip}\in\kD{\dval{rt}}$.
\begin{enumerate}[(a)]
\item\label{it:qual_upd}
An \hyperlink{update}{\fnupd} of $\dval{rt}$ can only increase the quality of the routing table.
That is, for all routes \dval{r} such that \upd{\dval{rt}}{\dval{r}} is defined
(i.e., $\pi_{4}(\dval{r})=\val$,
 $\pi_{2}(\dval{r})=0\Leftrightarrow\pi_{3}(\dval{r})=\unkno$ and
 $\pi_{3}(\dval{r})=\unkno\Rightarrow\pi_{5}(\dval{r})=1$)
we have\vspace{-1ex}\vspace{-3pt}
  \begin{equation}\label{eq:qual_upd}
    \dval{rt}\rtord\upd{\dval{rt}}{\dval{r}}\ .
\vspace{-2pt}
  \end{equation}
\item\label{it:qual_inv}
An \hyperlink{invalidate}{$\fninv$} on $\dval{rt}$ does not change the
quality of the routing table if, for each $(\dval{rip},\dval{rsn})\in\dval{dests}$,
\dval{rt} has a valid entry for \dval{rip}, and
  \begin{itemize}
  \item \dval{rsn} is the by one incremented sequence number from the routing table, or 
  \item both \dval{rsn} and the sequence number in the routing table are $0$.
  \end{itemize}
That is, for all partial functions $\dval{dests}$ (subsets of $\tIP\times\tSQN$)
\vspace{-2pt}
  \begin{equation}\label{eq:qual_inv}
  \begin{array}{rcl}
  &&\big((\dval{rip},\dval{rsn})\in\dval{dests} \ims \dval{rip}\in\akD{\dval{rt}} \ans
  \dval{rsn}=\inc{\sqn{\dval{rt}}{\dval{rip}}}\big) \\
  &\Rightarrow&
    \dval{rt}\rtequiv\inv{\dval{rt}}{\dval{dests}}\ .
  \end{array}
\vspace{-2pt}
  \end{equation}
\item\label{it:qual_addpre}
If precursors are added to an entry of $\dval{rt}$, the quality of the routing table
does not change.
 That is, for all $\dval{dip}\in\IP$ and sets of precursors $\dval{npre}\in\pow(\IP)$,
\vspace{-2pt}
  \begin{equation}\label{eq:qual_addpre}
    \dval{rt}\rtequiv\addprecrt{\dval{rt}}{\dval{dip}}{\dval{npre}}\ .
\vspace{-2pt}
  \end{equation}
\end{enumerate}
\end{prop}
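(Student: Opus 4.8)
The plan is to exploit that, for the fixed destination $\dval{dip}$, the preorder $\rtord$ compares two routing tables only through their entries for $\dval{dip}$, and there only through the net sequence number $\nsqn{\cdot}{\dval{dip}}$ and the hop count $\dhops{\cdot}{\dval{dip}}$. So for each of the three operations I would first note that $\dval{dip}$ remains a known destination afterwards---none of $\fnupd$, $\fninv$, $\fnaddprecrt$ ever deletes an entry---and then reduce the claim to tracking these two quantities for the $\dval{dip}$-entry. In particular, whenever an operation leaves the $\dval{dip}$-entry unchanged the result is $\rtequiv$, which already gives $\rtord$ in part~\eqref{it:qual_upd}.

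Part~\eqref{it:qual_addpre} is then immediate: $\fnaddprecrt$ alters only the precursor component $\pi_7$ of a single entry, whereas $\fnnsqn$ and $\fndhops$ ignore $\pi_7$; hence the $\dval{dip}$-entry keeps its net sequence number and hop count and $\dval{rt}\rtequiv\addprecrt{\dval{rt}}{\dval{dip}}{\dval{npre}}$. For part~\eqref{it:qual_inv} I would split on whether $(\dval{dip},\dval{rsn})\in\dval{dests}$ for some $\dval{rsn}$. If not, the $\dval{dip}$-entry is untouched. If so, the hypothesis guarantees a \emph{valid} entry with sequence number $\dval{sqn}:=\sqn{\dval{rt}}{\dval{dip}}$, which $\fninv$ turns invalid with sequence number $\dval{rsn}=\inc{\dval{sqn}}$ and unchanged hop count. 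The net sequence number before is $\dval{sqn}$ (valid), and after is $\dval{rsn}\decremented=\inc{\dval{sqn}}\decremented=\dval{sqn}$ by the identity $\inc{n}\decremented=n$; the case $\dval{sqn}=0$ is covered since then $\dval{rsn}=0$ and both net sequence numbers equal $0$. Thus net sequence number and hop count are preserved and $\dval{rt}\rtequiv\inv{\dval{rt}}{\dval{dests}}$.

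Part~\eqref{it:qual_upd} is where the work lies. After the reduction I may assume $\pi_1(\dval{r})=\dval{dip}$, write $\dval{s}:=\selr{\dval{rt}}{\dval{dip}}$ for the old entry, and run through the clauses of $\fnupd$. Clause~1 cannot apply since $\dval{dip}\in\kD{\dval{rt}}$, and clause~6 only adds precursors, giving $\rtequiv$. In clause~2 the new entry is valid with $\pi_2(\dval{r})>\sqn{\dval{rt}}{\dval{dip}}\geq\nsqn{\dval{rt}}{\dval{dip}}$, so its net sequence number is strictly larger. In clause~3, where $\sqn{\dval{rt}}{\dval{dip}}=\pi_2(\dval{r})$ and $\dhops{\dval{rt}}{\dval{dip}}>\pi_5(\dval{r})$, the net sequence number either strictly increases (if $\dval{s}$ is invalid with positive sequence number) or stays equal to $\dval{sqn}:=\sqn{\dval{rt}}{\dval{dip}}$ while the hop count strictly drops. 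In clause~4, $\dval{s}$ is invalid; for $\dval{sqn}>0$ the net sequence number rises from $\dval{sqn}\decremented$ to $\dval{sqn}$, while $\dval{sqn}=0$ forces $\pi_2(\dval{r})=0$, hence $\pi_3(\dval{r})=\unkno$, and is subsumed by clause~5.

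I expect clause~5 to be the main obstacle. There $\pi_3(\dval{r})=\unkno$, so $\pi_2(\dval{r})=0$ and $\pi_5(\dval{r})=1$, and the result keeps the sequence number and validity of $\dval{s}$ but sets the hop count to $1$. If the net sequence number strictly increases we are done; if it stays equal I must compare hop counts, and since the new hop count is $1$ this requires $\pi_5(\dval{s})\geq1$---exactly \Prop{positive hopcount}. Two bookkeeping points keep this honest: the clauses of $\fnupd$ are not mutually exclusive, so I would always analyse the clause-5 value whenever $\pi_3(\dval{r})=\unkno$, which is legitimate by well-definedness of $\fnupd$; and whenever $\pi_3(\dval{r})=\kno$ the definedness requirement $\pi_2(\dval{r})=0\Leftrightarrow\pi_3(\dval{r})=\unkno$ yields $\pi_2(\dval{r})\geq1$, which rules out the degenerate $\dval{sqn}=0$ case and keeps the clause-3 and~4 arguments self-contained.
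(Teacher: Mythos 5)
Your proof is correct and takes essentially the same approach as the paper's: part (a) is the same clause-by-clause analysis of $\fnupd$ (the paper's cases (i)--(iii) are your clauses 2--4, and its separate unknown-sequence-number case is your clause 5), resting on the same ingredients---$\fnnsqn\leq\fnsqn$, the identity $\inc{n}\decremented=n$, and \Prop{positive hopcount} for the hop-count-$1$ comparison---while parts (b) and (c) are handled, as in the paper, by observing that $\fninv$ and $\fnaddprecrt$ leave the net sequence number and hop count of the affected entry unchanged. One harmless slip: in clause 5 the resulting entry inherits its validity flag from $\dval{r}$ (hence is valid, since $\fnupd$ requires $\pi_4(\dval{r})=\val$), not from the old entry $\dval{s}$; your dichotomy (net sequence number strictly increases, or stays equal and then the new hop count $1$ is compared with $\pi_5(\dval{s})\geq 1$) is exhaustive under either reading, so the argument stands.
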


\begin{proofNobox}For the proof we denote the routing table after the update by $\dval{rt}'$.
  \begin{enumerate}[(a)]
 \item By assumption, there is an entry $(\dval{dip},\dval{dsn}_{\dval{rt}},*,\dval{f}_{\dval{rt}},\dval{hops}_{\dval{rt}},*,*)$ 
 for $\dval{dip}$ in $\dval{rt}$. In case $\pi_{1}(r) \not=\dval{dip}$ the quality of the routing table w.r.t.\ \dval{dip}
stays the same, since the entry for \dval{dip} is not changed.
 
We first assume that $\dval{r} := (\dval{dip},0,\unkno,\val,1,*,*)$. This 
means that the Clause 5 in the definition of \hyperlink{update}{\fnupd} 
is used. The updated routing table entry to $\dval{dip}$ has the form $(\dval{dip},\dval{dsn}_{\dval{rt}},\unkno,\val,1,*,*)$.
So\vspace{-2pt}
 \begin{eqnarray*}
 &\nsqn{\dval{rt}}{\dval{dip}} \leq \sqn{\dval{rt}}{\dval{dip}}=\dval{dsn}_{\dval{rt}} =  \nsqn{\dval{rt}'}{\dval{dip}}\ ,\mbox{ and}\\
 &\dhops{\dval{rt}}{\dval{dip}} = \dval{hops}_{\dval{rt}} \geq 1 =  \dhops{\dval{rt}'}{\dval{dip}}\ .
\vspace{-2pt}
 \end{eqnarray*}
The first inequality holds by~\eqref{eq:sqn_vs_nsqn}; the penultimate step by Invariant~\eqref{eq:inv_length}.

Next, we assume that the sequence number is known and therefore the 
route used for the update has the form $\dval{r} = (\dval{dip},\dval{dsn},\kno,\val,\dval{hops},*,*)$ with $\dval{dsn}\geq1$.
After the performed update the routing entry for $\dval{dip}$ either has the form 
$(\dval{dip},\dval{dsn}_{\dval{rt}},*,\dval{f}_{\dval{rt}},\dval{hops}_{\dval{rt}},*,*)$ or 
$(\dval{dip},\dval{dsn},\kno,\val,\dval{hops},*,*)$.
In the former case the invariant is trivially preserved;
in the latter, we know, by definition of \fnupd, that either
(i) $\dval{dsn}_{\dval{rt}}<\dval{dsn}$, 
(ii) $\dval{dsn}_{\dval{rt}}=\dval{dsn} \wedge \dval{hops}_{\dval{rt}}>\dval{hops}$, or
(iii) $\dval{dsn}_{\dval{rt}}=\dval{dsn} \wedge \dval{f}_{\dval{rt}}=\inval$ holds. 
We complete the proof of the invariant by a case distinction.
\begin{description}\hfuzz 50pt
\item[(i) holds:]
First, $\nsqn{\dval{rt}}{\dval{dip}}\leq\dval{dsn}_{\dval{rt}}<\dval{dsn}=\sqn{\dval{rt}'}{\dval{dip}}=\nsqn{\dval{rt}'}{\dval{dip}}$. Since $\dval{dsn}_{\dval{rt}}$ is strictly smaller than $\nsqn{\dval{rt}'}{\dval{dip}}$, there is nothing more to prove.
\item[(iii) holds:]
We have $\nsqn{\dval{rt}}{\dval{dip}}=\dval{dsn}_{\dval{rt}}\decremented<\dval{dsn}=\sqn{\dval{rt}'}{\dval{dip}}=\nsqn{\dval{rt}'}{\dval{dip}}$.
The inequality holds since either $\dval{dsn}_{\dval{rt}}\decremented=0<1\leq\dval{dsn}$ or 
$\dval{dsn}_{\dval{rt}}\decremented=\dval{dsn}_{\dval{rt}}-1<\dval{dsn}_{\dval{rt}}=\dval{dsn}$.
\item[(ii) holds but (iii) does not:] Then $\dval{f}_{\dval{rt}}=\val$.
In this case the update does not change the net sequence number for $\dval{dip}$:\\[-1ex]
\centerline{$\nsqn{\dval{rt}}{\dval{dip}}=\dval{dsn}_{\dval{rt}}=\dval{dsn}=\nsqn{\dval{rt}'}{\dval{dip}}$\ .} 
By (ii), the hop count decreases:\\
\centerline{$\dhops{\dval{rt}}{\dval{dip}} = \dval{hops}_{\dval{rt}}>\dval{hops} = \dhops{\dval{rt}'}{\dval{dip}}$\ .}
\end{description}
\item
Assume that \hyperlink{invalidate}{$\fninv$} modifies an
entry of the form $(\dval{rip},\dval{dsn},*,\dval{flag},*,*,*)$.
Let $(\dval{rip},\dval{rsn})\mathbin\in\dval{dests}$; then $\dval{flag}\mathbin=\val$ and
the update results in the
entry $(\dval{rip},\inc{\dval{dsn}},*,\inval,*,*,*)$.\linebreak[2]
By definition of net sequence numbers,\vspace{-2pt}
\[
\nsqn{\dval{rt}}{\dval{rip}} = \sqn{\dval{rt}}{\dval{rip}} =
  \dval{dsn} =
  \inc{\dval{dsn}}\decremented 
   = \nsqn{\dval{rt}'}{\dval{rip}}\,.
\vspace{-2pt}
\]
Since the hop count is not changed by $\fninv$, we also have
$\dhops{\dval{rt}}{\dval{rip}} =\dhops{\dval{rt}'}{\dval{rip}}$,
and hence $\dval{rt}\rtequiv\inv{\dval{rt}}{\dval{dests}}$.
 \item The function \hyperlink{addprert}{$\fnaddprecrt$} only modifies a set of precursors; 
 it does not change the sequence number, the validity, the flag, nor the hop count 
 of any entry of the routing table $\dval{rt}$.
\endbox
\end{enumerate}
\end{proofNobox}
We can apply this result to obtain the following theorem.

\begin{theorem}\label{thm:state_quality}\rm
In AODV, the quality of routing tables can only be increased, never decreased.

Assume $N \ar{\ell}N'$ and $\dval{ip},\dval{dip}\mathbin\in\IP$.
If $\dval{dip}\in\kd{\dval{ip}}$, then $\dval{dip}\in\kd[N']{\dval{ip}}$ and
\[\xiN{\dval{ip}}(\rt)\rtord\xiN[N']{\dval{ip}}(\rt)\ .\]
\end{theorem}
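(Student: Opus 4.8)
The plan is to establish the statement by \hyperlink{induction-on-reachability}{induction on reachability}, inspecting each transition $N\ar{\ell}N'$. The inclusion $\dval{dip}\in\kd[N']{\dval{ip}}$ is immediate from \Prop{destinations maintained}, so the real work lies in the quality comparison. I would first cut down the number of relevant transitions: the routing table $\xiN{\dval{ip}}(\rt)$ is changed only when node $\dval{ip}$ itself executes an assignment to $\rt$, whereas every \textbf{connect}, \textbf{disconnect}, \textbf{newpkt}, \textbf{deliver}, receive and cast step leaves $\rt$ untouched. Since $\rtord$ is a (reflexive) total preorder, all such transitions are trivial. By Remark~\ref{rem:remark}, an assignment to $\rt$ goes through one of the functions \hyperlink{update}{\fnupd}, \hyperlink{invalidate}{\fninv} or \hyperlink{addprert}{\fnaddprecrt}, and each assignment is a single $\tau$-transition; hence every transition performs at most one such modification. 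The proof therefore reduces to checking the finitely many lines of Processes~\ref{pro:aodv}--\ref{pro:queues} at which these functions are applied to $\rt$, invoking \Prop{qual} in each.

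The updates and precursor additions are handled directly. For each call of \hyperlink{update}{\fnupd} (Pro.~\ref{pro:aodv}, Lines~\ref{aodv:line10},~\ref{aodv:line14},~\ref{aodv:line18}; Pro.~\ref{pro:rreq}, Line~\ref{rreq:line6}; Pro.~\ref{pro:rrep}, Line~\ref{rrep:line5}) the function is defined at its point of use, by \Prop{upd_well_defined}, so \Prop{qual}\eqref{it:qual_upd} applies and gives $\xiN{\dval{ip}}(\rt)\rtord\xiN[N']{\dval{ip}}(\rt)$. For each call of \hyperlink{addprert}{\fnaddprecrt} (Pro.~\ref{pro:rreq}, Lines~\ref{rreq:line24},~\ref{rreq:line25}; Pro.~\ref{pro:rrep}, Lines~\ref{rrep:line12a},~\ref{rrep:line12b}), \Prop{qual}\eqref{it:qual_addpre} gives $\xiN{\dval{ip}}(\rt)\rtequiv\xiN[N']{\dval{ip}}(\rt)$, and hence $\rtord$.

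The invalidations split into two groups. At the link-breakage invalidations (Pro.~\ref{pro:aodv}, Line~\ref{aodv:line32}; Pro.~\ref{pro:pkt}, Line~\ref{pkt2:line10}; Pro.~\ref{pro:rreq}, Lines~\ref{rreq:line18},~\ref{rreq:line30}; Pro.~\ref{pro:rrep}, Line~\ref{rrep:line18}) the argument $\dval{dests}$ is built immediately beforehand as a set of pairs $(\dval{rip},\inc{\sqn{\rt}{\dval{rip}}})$ with $\dval{rip}\in\akD{\xiN{\dval{ip}}(\rt)}$, exactly as in the proof of \Prop{dsn increase}. This meets the hypothesis of \Prop{qual}\eqref{it:qual_inv} verbatim, so that proposition yields $\xiN{\dval{ip}}(\rt)\rtequiv\xiN[N']{\dval{ip}}(\rt)$, hence $\rtord$. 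The sole remaining case---the error-forwarding invalidation of Pro.~\ref{pro:rerr}, Line~\ref{rerr:line5}---is the main obstacle, because there $\dval{dests}$ is distilled from an incoming \textbf{rerr} message and its sequence numbers are in general not of the form $\inc{\sqn{\rt}{\dval{rip}}}$, so \Prop{qual}\eqref{it:qual_inv} does not apply.

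For that last case I would compute net sequence numbers by hand. Write $\dval{rt}:=\xiN{\dval{ip}}(\rt)$ and $\dval{rt}':=\xiN[N']{\dval{ip}}(\rt)$. If $\dval{dip}$ is not among the invalidated destinations its entry is unchanged and $\rtord$ is trivial; otherwise $(\dval{dip},\dval{rsn})\in\dval{dests}$, and the side condition imposed when $\dval{dests}$ is formed in Pro.~\ref{pro:rerr}, Line~\ref{rerr:line2}, guarantees that $\dval{rt}$ has a \emph{valid} entry for $\dval{dip}$ with $\sqn{\dval{rt}}{\dval{dip}}<\dval{rsn}$. The invalidation then turns this valid entry of sequence number $\dval{dsn}:=\sqn{\dval{rt}}{\dval{dip}}$ into an invalid one of sequence number $\dval{rsn}$, leaving the hop count fixed. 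As $\dval{rsn}\geq 1$, the definition of \hyperlink{nsqn}{\fnnsqn} gives $\nsqn{\dval{rt}'}{\dval{dip}}=\dval{rsn}\decremented=\dval{rsn}-1\geq\dval{dsn}=\nsqn{\dval{rt}}{\dval{dip}}$ (the last equation using validity of the old entry), while $\dhops{\dval{rt}'}{\dval{dip}}=\dhops{\dval{rt}}{\dval{dip}}$; hence $\dval{rt}\rtord\dval{rt}'$. This is exactly the point where the non-standard requirement $\sqn{\rt}{\dval{rip}}<\dval{rsn}$ of Pro.~\ref{pro:rerr} is needed: without it the copied sequence number, and with it the net sequence number, could drop and break monotonicity---the phenomenon examined in \SSect{decreasingSQN}.
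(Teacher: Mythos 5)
Your proposal is correct and follows essentially the same route as the paper's own proof: it uses \Prop{destinations maintained} for the inclusion, reduces via Remark~\ref{rem:remark} and \Prop{qual}(\ref{it:qual_upd}),(\ref{it:qual_addpre}) to the calls of \hyperlink{invalidate}{$\fninv$}, dispatches the link-break invalidations with \Prop{qual}(\ref{it:qual_inv}), and settles the remaining case of Pro.~\ref{pro:rerr}, Line~\ref{rerr:line5} by the same direct net-sequence-number computation, exploiting validity of the old entry and the guard $\sqn{\rt}{\rip}<\rsn$ from Line~\ref{rerr:line2} together with the unchanged hop count. The only differences are presentational (you spell out the \fnupd{} and \fnaddprecrt{} cases and cite \Prop{upd_well_defined} explicitly, where the paper compresses this into one sentence), not mathematical.
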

\begin{proofNobox}
If $\dval{dip}\in\kd{\dval{ip}}$, then
$\dval{dip}\in\kd[N']{\dval{ip}}$ follows by \Prop{destinations maintained}.
To show $\xiN{\dval{ip}}(\rt)\rtord\xiN[N']{\dval{ip}}(\rt)$,
by Remark~\ref{rem:remark} and \Prop{qual}(\ref{it:qual_upd}) and~(\ref{it:qual_addpre})
it suffices to check all calls of \hyperlink{invalidate}{$\fninv$}.
\begin{description}
\item[Pro.~\ref{pro:aodv}, Line~\ref{aodv:line32}; 
	\Pro{pkt}, Line~\ref{pkt2:line10}; 
	Pro.~\ref{pro:rreq}, Lines~\ref{rreq:line18}, \ref{rreq:line30}; 
	Pro.~\ref{pro:rrep}, Line~\ref{rrep:line18}:]~\\
By construction of {\dests} (immediately before the invalidation call) 
$(\dval{rip},\dval{rsn})\in\xiN{\dval{ip}}(\dests) \ims  \dval{rip}\in\akD{\xiN{\dval{ip}}(\rt)} \ans \dval{rsn}=\inc{\sqn{\xiN{\dval{ip}}(\rt)}{\dval{rip}}}$
and hence, by \Prop{qual}\eqref{it:qual_inv},
$\xiN{\dval{ip}}(\rt)\rtequiv\inv{\xiN{\dval{ip}}(\rt)}{\xiN{\dval{ip}}(\dests)} =\xiN[N']{\dval{ip}}(\rt)$.
\item[Pro.~\ref{pro:rerr}, Line~\ref{rerr:line5}:]
Assume that \hyperlink{invalidate}{$\fninv$} modifies an
entry of the form $(\dval{rip},\dval{dsn},*,\dval{flag},*,*,*)$.
Let $(\dval{rip},\dval{rsn})\mathbin\in\dval{dests}$; then
the update results in the entry $(\dval{rip},\dval{rsn},*,\inval,*,*,*)$.
Moreover, by Line~\ref{rerr:line2} of Pro.~\ref{pro:rerr}, $\dval{flag}\mathbin=\val$.
By definition of net sequence numbers,
\[
\nsqn{\xiN{\dval{ip}}(\rt)}{\dval{rip}} = \sqn{\xiN{\dval{ip}}(\rt)}{\dval{rip}}
  \leq \dval{rsn}\decremented 
  = \nsqn{\xiN[N']{\dval{ip}}(\rt)}{\dval{rip}}\,.
\]
The second step holds, since, by Line~\ref{rerr:line2},
$\sqn{\xiN[N_{\ref*{rerr:line2}}]{\dval{ip}}(\rt)}{\dval{rip}} <\dval{rsn}$.
Since the hop count is not changed by $\fninv$, we also have
$\dhops{\xiN{\dval{ip}}(\rt)}{\dval{rip}} =\dhops{\xiN[N']{\dval{ip}}(\rt)}{\dval{rip}}$,
and therefore $\xiN{\dval{ip}}(\rt)\rtord\xiN[N']{\dval{ip}}(\rt)$.
\endbox
\end{description}
\end{proofNobox}
\Thm{state_quality} states in particular that if $N \ar{\ell}N'$ then
$\nsq{\dval{ip}} \leq \fnnsqn_{N'}^{\dval{ip}}(\dval{dip})$.

\begin{prop}\rm\label{prop:inv_nsqn}
If, in a reachable network expression $N$, a node $\dval{ip}\mathop\in\IP$ has a
routing table entry to $\dval{dip}$, then also the next hop
\dval{nhip} towards \dval{dip}, if not \dval{dip} itself, has a
routing table entry to $\dval{dip}$, and the net sequence number of
the latter entry is at least as large as that of the former.
\begin{equation}\label{eq:inv_ix}
\dval{dip}\in\kd{\dval{ip}}\ans\dval{nhip}\not=\dval{dip}
\ims	\dval{dip}\in\kd{\dval{nhip}} \ans \nsq{\dval{ip}}\leq \nsq{\dval{nhip}}\ ,
\end{equation}
where $\dval{nhip}:=\nhp{\dval{ip}}$ is the IP address of the next hop.
\end{prop}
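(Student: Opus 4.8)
The plan is to prove~\eqref{eq:inv_ix} as a state invariant by \hyperlink{induction-on-reachability}{induction on reachability}, in the style of Remark~\ref{rem:remark}. In every initial state all routing tables are empty, so $\kd{\dval{ip}}=\emptyset$ and the antecedent is never met; the claim holds vacuously. For the inductive step I consider a transition $N\ar{\ell}N'$ and recall that it changes the routing table of at most one node---the local node $\ipc$ performing the action---and only through \hyperlink{update}{$\fnupd$} or \hyperlink{invalidate}{$\fninv$}; since~\eqref{eq:inv_ix} does not mention precursors, \hyperlink{addprert}{$\fnaddprecrt$} can be ignored. I then distinguish the two roles $\ipc$ may play in an instance of~\eqref{eq:inv_ix}.

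The \emph{downstream} role, in which $\ipc$ appears as the next hop of some node $\dval{ip}\neq\ipc$ with $\nhp{\dval{ip}}=\ipc$, is handled uniformly. Here the left-hand side $\nsq[\dval{dip}]{\dval{ip}}$ and the pointer $\nhp{\dval{ip}}$ are untouched by the transition, while by \Thm{state_quality} the right-hand side $\nsq[\dval{dip}]{\ipc}$ can only grow and, by \Prop{destinations maintained}, $\dval{dip}$ stays in $\kd{\ipc}$; hence the consequent survives. Instances not involving $\ipc$ at all are preserved trivially, so it remains to treat the \emph{upstream} role, where the entry that changes is $\ipc$'s own entry for the destination $\dval{dip}$ of the instance considered.

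For the \hyperlink{update}{$\fnupd$}-calls: in \Pro{aodv}, Lines~\ref{aodv:line10},~\ref{aodv:line14},~\ref{aodv:line18} the affected entry is the one for $\sip$, whose next hop is $\sip$ under every clause of $\fnupd$; thus $\dval{nhip}=\dval{dip}$ and the premise fails. In \Pro{rreq}, Line~\ref{rreq:line6} (a route to $\oip$ via $\sip$) and \Pro{rrep}, Line~\ref{rrep:line5} (a route to $\dip$ via $\sip$): if $\sip$ equals the destination the premise again fails; otherwise the sender $\sip$ differs from the originator, resp.\ the destination, the message was cast earlier by $\sip$ (\Prop{preliminaries}), and \Prop{msgsending}---Invariant~\eqref{inv:starcast_ii}, resp.~\eqref{inv:starcast_iv}---applies. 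Translating the disjunction of~\eqref{inv:starcast_ii} via~\eqref{eq:sqn_vs_nsqn}, resp.\ reading off~\eqref{inv:starcast_iv}, yields that at the moment of sending $\dval{dip}\in\kd{\sip}$ and $\nsq[\dval{dip}]{\sip}\geq\osn$, resp.\ $=\dsn$. By \Thm{state_quality} and \Prop{destinations maintained} these bounds only improve, so in $N'$ still $\dval{dip}\in\kd{\sip}$ and $\nsq[\dval{dip}]{\sip}\geq\osn$, resp.\ $\geq\dsn$. Since an actually inserted entry is valid with sequence number $\osn$ (resp.\ $\dsn$), its net sequence number equals that value, giving $\nsq[\dval{dip}]{\ipc}\leq\nsq[\dval{dip}]{\sip}$; if no insertion takes place (sixth clause of $\fnupd$) the entry is unchanged and the claim follows by induction.

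The decisive case is \hyperlink{invalidate}{$\fninv$}, which never alters the next hop. For the standard invalidations (\Pro{aodv}, Line~\ref{aodv:line32}; \Pro{pkt}, Line~\ref{pkt2:line10}; \Pro{rreq}, Lines~\ref{rreq:line18},~\ref{rreq:line30}; \Pro{rrep}, Line~\ref{rrep:line18}) the net sequence number is left fixed, since $\inc{\dval{dsn}}\decremented=\dval{dsn}$ exactly as in \Prop{qual}\eqref{it:qual_inv}; both sides of~\eqref{eq:inv_ix} then stay put and induction applies. The crux is the error-driven invalidation at \Pro{rerr}, Line~\ref{rerr:line5}, where the sequence number $\rsn$ copied from the incoming RERR can \emph{raise} $\ipc$'s net sequence number for an affected destination $\rip$ to $\rsn\decremented$. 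The key is that the construction of $\dval{dests}$ in Line~\ref{rerr:line2} forces the next hop of each affected entry to be the RERR sender $\sip$; so $\dval{nhip}=\sip$, and if $\sip=\rip$ the premise fails. Otherwise \Prop{starcastrerr} tells us that when $\sip$ cast the message it held an \emph{invalid} entry for $\rip$ with sequence number $\rsn$, hence net sequence number $\rsn\decremented$, and \Thm{state_quality} gives $\nsq[\rip]{\sip}\geq\rsn\decremented=\nsq[\rip]{\ipc}$, together with $\rip\in\kd{\sip}$ by \Prop{destinations maintained}. I expect this last matching---an invalidation bumping our own net sequence number is mirrored exactly by the already-invalidated entry at the sender---to be the main obstacle, both to notice and to state cleanly.
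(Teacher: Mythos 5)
Your proof is correct and takes essentially the same route as the paper's: induction on reachability, dismissing changes to the next hop's table via \Thm{state_quality} and \Prop{destinations maintained}, then a case analysis of the \fnupd{} calls (antecedent failing for the 1-hop updates, Invariants~\eqref{inv:starcast_ii}/\eqref{inv:starcast_iv} for \Pro{rreq}/\Pro{rrep}) and of the \fninv{} calls (quality preservation for the standard invalidations, \Prop{starcastrerr} for the RERR-driven one). Your "downstream/upstream role" framing and the compression of the two-case disjunction of~\eqref{inv:starcast_ii} into the single bound $\fnnsqn_{N^\dagger}^{\dval{nhip}}(\dval{dip})\geq\xi(\osn)$ are only presentational variations on the paper's argument.
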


\begin{proofNobox} As before, we first check the initial states
  of our transition system and then check all locations in
  Processes~\ref{pro:aodv}--\ref{pro:queues} where a routing table might
  be changed. For an initial network expression, the invariant holds
  since all routing tables are empty.
	
A modification of \plat{$\xiN{\dval{nhip}}(\rt)$} is harmless, as it
can only increase $\kd{\dval{nhip}}$ (cf.\ \Prop{destinations maintained})
as well as $\nsq{\dval{nhip}}$ (cf.\ \Thm{state_quality}).

Adding precursors to $\xiN{\dval{ip}}(\rt)$ does not harm since the 
invariant does not depend on precursors.
It remains to examine all calls of \hyperlink{update}{$\fnupd$} and
\hyperlink{invalidate}{$\fninv$} to \plat{$\xiN{\dval{ip}}(\rt)$}.
Without loss of generality we restrict attention to those applications of $\fnupd$
or $\fninv$ that actually modify the entry for \dval{dip}, beyond its
precursors; if $\fnupd$ only adds some precursors in the routing
table, the invariant---which is assumed to hold before---is maintained.
If $\fninv$ occurs,
the next hop \dval{nhip} is not changed.
Since the invariant has to hold before the execution, it follows that 
$\dval{dip}\in\kd{\dval{nhip}}$ also holds
after execution.

\begin{description}
\item[Pro.~\ref{pro:aodv}, Lines~\ref{aodv:line10}, \ref{aodv:line14}, \ref{aodv:line18}:]
	The entry $\xi(\sip\comma0\comma\unkno\comma\val\comma1\comma\sip\comma\emptyset)$ is used for the update; 
	its destination is $\dval{dip}:=\xi(\sip)$.
	Since $\dval{dip}=\xi(\sip)=\nhp[\xi(\sip)]{\dval{ip}}=\nhp[\dval{dip}]{\dval{ip}}=\dval{nhip}$, the antecedent of the invariant to be proven is not satisfied.
\item[Pro.~\ref{pro:aodv}, Line~\ref{aodv:line32}; 
	Pro.~\ref{pro:pkt}, Line~\ref{pkt2:line10}; 
	Pro.~\ref{pro:rreq}, Lines~\ref{rreq:line18}, \ref{rreq:line30}; 
	Pro.~\ref{pro:rrep}, Line~\ref{rrep:line18}:]~\\
        In each of these cases, the precondition of  \eqref{eq:qual_inv} is satisfied by 
        the executions of the line immediately before the call of {\fninv} 
        (Pro.~\ref{pro:aodv}, Line~\ref{aodv:line30},
        Pro.~\ref{pro:pkt}, Line~\ref{pkt2:line9}; Pro.~\ref{pro:rreq}, Lines~\ref{rreq:line16}, \ref{rreq:line28}; Pro.~\ref{pro:rrep}, Line~\ref{rrep:line16}).
        Thus, the quality of the routing table w.r.t.\ \dval{dip}, and thereby the net
        sequence number of the routing table entry for \dval{dip},
	remains unchanged. Therefore the invariant is maintained.
\item[Pro.~\ref{pro:rreq}, Line~\ref{rreq:line6}:] 
        We assume that the entry $\xi(\oip,\osn,\kno,\val,\hops+1,\sip,*)$ is inserted into $\xi(\rt)$.
	So $\dval{dip}:=\xi(\oip)$, $\dval{nhip}:=\xi(\sip)$,
        $\nsq{\dval{ip}}:=\xi(\osn)$ and $\dhp{\dval{ip}}:=\xi(\hops)+1$.\linebreak[2]
        This information is distilled from a received
	route request message (cf.\ Lines~\ref{aodv:line2} and~\ref{aodv:line8}
	of Pro.~\ref{pro:aodv}).
	By \Prop{preliminaries} this message was sent before, say in state $N^\dagger$;
        by \Prop{ip=ipc} the sender of this message is $\xi(\sip)$.

	By Invariant~\eqref{inv:starcast_ii}, with $\ipc:=\xi(\sip)=\dval{nhip}$,
	~$\oipc:=\xi(\oip)=\dval{dip}$, ~$\osnc:=\xi(\osn)$~ and ~$\hopsc:=\xi(\hops)$,
        and using that $\ipc = \dval{nhip} \neq \dval{dip} = \oipc$, we get that
	$\dval{dip}\in \kd[N^\dagger]{\dval{nhip}}$ and
\begin{eqnarray*}
&\fnsqn_{N^\dagger}^{\dval{nhip}}(\dval{dip})~=~\fnsqn_{N^\dagger}^{\ipc}(\oipc) ~>~ \osnc ~=~ \xi(\osn)\ , \mbox{ or}\\
&\fnsqn_{N^\dagger}^{\dval{nhip}}(\dval{dip})~=~\xi(\osn) \ans
 \fnstatus_{N^\dagger}^{\dval{nhip}}(\dval{dip})~=~\val\ .
\end{eqnarray*}
We first assume that the first line holds.
Then, by \Thm{state_quality} and \Eq{sqn_vs_nsqn},
\[
\nsq{\dval{nhip}}
\geq
\fnnsqn_{N^\dagger}^{\dval{nhip}}(\dval{dip})
\geq
\fnsqn_{N^\dagger}^{\dval{nhip}}(\dval{dip})\decremented
\geq \xi(\osn)=\nsq{\dval{ip}}\ .
\]

We now assume the second line to be valid. 
From this we conclude
$$\nsq{\dval{nhip}}\geq
\fnnsqn_{N^\dagger}^{\dval{nhip}}(\dval{dip})
=\fnsqn_{N^\dagger}^{\dval{nhip}}(\dval{dip})
=\xi(\osn)=\nsq{\dval{ip}}.$$

\item[Pro.~\ref{pro:rrep}, Line~\ref{rrep:line5}:]
                  The update is similar to the one of Pro.~\ref{pro:rreq}, Line~\ref{rreq:line6}. 
                  The only difference is that the information stems from an incoming RREP message and 
                  that a routing table entry to $\xi(\dip)$ (instead of  $\xi(\oip)$) is established. 
                  Therefore, the proof is similar to the one of Pro.~\ref{pro:rreq}, Line~\ref{rreq:line6}; instead 
                  of Invariant~\eqref{inv:starcast_ii} we use Invariant~\eqref{inv:starcast_iv}.
\item[Pro.~\ref{pro:rerr}, Line~\ref{rerr:line5}:]
        Let $N_{\ref*{rerr:line5}}$ and $N$ be the network expressions right before
        and right after executing Pro.~\ref{pro:rerr}, Line~\ref{rerr:line5}.
	The entry for destination $\dval{dip}$ can be affected 
	only if $(\dval{dip},\dval{dsn})\in\xiN[N_{\ref*{rerr:line2}}]{\dval{ip}}(\dests)$ for some $\dval{dsn}\in\tSQN$.
	In that case, by Line~\ref{rerr:line2},
        \plat{$(\dval{dip},\dval{dsn})\in\xiN[N_{\ref*{rerr:line2}}]{\dval{ip}}(\dests)$},
	\plat{$\dval{dip}\in\akd[N_{\ref*{rerr:line2}}]{\dval{ip}}$}, and
        \plat{$\fnnhop_{N_{\ref*{rerr:line2}}}^{\dval{ip}}(\dval{dip})=\xiN[N_{\ref*{rerr:line2}}]{\dval{ip}}(\sip)$}.
By definition of \hyperlink{invalidate}{\fninv},
$\sq{\dval{ip}} = \dval{dsn}$
and $\sta{\dval{ip}}=\inval$, so
$$\nsq{\dval{ip}}=\sq{\dval{ip}}\decremented=\dval{dsn}\decremented\;.$$
Hence we need to show that $\dval{dsn}\decremented \leq \nsq{\dval{nhip}}$.

        The values $\xiN[N_{\ref*{rerr:line2}}]{\dval{ip}}(\dests)$ and
        $\xiN[N_{\ref*{rerr:line2}}]{\dval{ip}}(\sip)$ stem from a received route 
	error message (cf.\ Lines~\ref{aodv:line2} and~\ref{aodv:line16} of Pro.~\ref{pro:aodv}).
	By \Prop{preliminaries}\eqref{it:preliminariesi}, a transition
	labelled $\colonact{R}{\starcastP{\rerr{\destsc}{\ipc}}}$ with $\destsc:=\xiN[N_{\ref*{rerr:line2}}]{\dval{ip}}(\dests)$
	and $\ipc:=\xiN[N_{\ref*{rerr:line2}}]{\dval{ip}}(\sip)$ must have occurred before, say in state $N^\dagger$.  
	By \Prop{ip=ipc}, the node casting this message is
	\plat{$\ipc=
	\xiN[N_{\ref*{rerr:line2}}]{\dval{ip}}(\sip)=
	\fnnhop_{N_{\ref*{rerr:line2}}}^{\dval{ip}}(\dval{dip})=
	\nhp{\dval{ip}}=\dval{nhip}$}.
	The penultimate equation holds since the next hop to $\dval{dip}$
	is not changed during the execution of Pro.~\ref{pro:rerr}.

By \Prop{starcastrerr} we have 
\plat{$\dval{dip}\in\fnikD_{N^\dagger}^{\dval{nhip}}$} and 
\plat{$\dval{dsn} \leq \sqn{\xiN[N^\dagger]{\dval{nhip}}(\rt)}{\dval{dip}}$}. 
Hence\vspace{-1ex}
$$\nsq{\dval{nhip}}\geq
\fnnsqn_{N^\dagger}^{\dval{nhip}}(\dval{dip}) =
\nsqn{\xiN[N^\dagger]{\dval{nhip}}(\rt)}{\dval{dip}} =
\sqn{\xiN[N^\dagger]{\dval{nhip}}(\rt)}{\dval{dip}}\decremented \geq
\dval{dsn}\decremented\ ,$$ where the first inequality follows by \Thm{state_quality}.
\endbox
\end{description}
\end{proofNobox}

\noindent
To prove loop freedom we will show that on any route established by AODV the quality 
of routing tables increases when going from one node to the next hop. Here, the 
preorder is not sufficient, since we need a strict increase in quality. 
Therefore, on routing tables $\dval{rt}$ and $\dval{rt}'$ that both have an entry to $\dval{dip}$, i.e., $\dval{dip}\in\kD{\dval{rt}}\cap\kD{\dval{rt}'}$,
 we define a relation $\rtsord$ by
\[
\dval{rt} \rtsord \dval{rt}'\ :\Leftrightarrow\  \dval{rt} \rtord \dval{rt}' \ans \dval{rt} \not\rtequiv \dval{rt}'\ .
\]

\begin{cor}\label{cor:strictord}\rm
The relation $\rtsord$ is irreflexive and transitive. 
\end{cor}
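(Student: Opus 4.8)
The plan is to observe that this corollary is a purely order-theoretic consequence of the definitions given just above: for any fixed destination $\dval{dip}$, the relation $\rtord$ restricted to routing tables possessing an entry for $\dval{dip}$ has already been noted to be a total preorder, and $\rtequiv$ is by definition its induced equivalence, so that $\dval{rt}\rtequiv\dval{rt}'$ holds exactly when both $\dval{rt}\rtord\dval{rt}'$ and $\dval{rt}'\rtord\dval{rt}$. The claim---that the strict part $\rtsord$ of a preorder is irreflexive and transitive---is then the standard fact about strict orders, and I would simply carry out the elementary argument, using only reflexivity and transitivity of $\rtord$.

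For irreflexivity I would argue directly: $\dval{rt}\rtsord\dval{rt}$ would, by definition of $\rtsord$, require $\dval{rt}\not\rtequiv\dval{rt}$; but $\rtequiv$ is an equivalence relation and hence reflexive, so $\dval{rt}\rtequiv\dval{rt}$ always holds, a contradiction. Thus $\dval{rt}\not\rtsord\dval{rt}$ for every $\dval{rt}$.

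For transitivity I would assume $\dval{rt}\rtsord\dval{rt}'$ and $\dval{rt}'\rtsord\dval{rt}''$, unfolding each into an $\rtord$-comparison together with a non-equivalence. Transitivity of the preorder $\rtord$ immediately yields $\dval{rt}\rtord\dval{rt}''$, so it only remains to show $\dval{rt}\not\rtequiv\dval{rt}''$. I would establish this by contradiction: if $\dval{rt}\rtequiv\dval{rt}''$, then in particular $\dval{rt}''\rtord\dval{rt}$; chaining this with $\dval{rt}\rtord\dval{rt}'$ via transitivity gives $\dval{rt}''\rtord\dval{rt}'$, which together with the hypothesis $\dval{rt}'\rtord\dval{rt}''$ forces $\dval{rt}'\rtequiv\dval{rt}''$---contradicting $\dval{rt}'\not\rtequiv\dval{rt}''$. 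Hence $\dval{rt}\rtsord\dval{rt}''$, as required.

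There is really no obstacle here: the only subtlety is to remember that $\rtsord$, $\rtord$ and $\rtequiv$ must all be read relative to the same fixed destination $\dval{dip}$ (so that all three routing tables involved are required to have an entry for $\dval{dip}$), and to use the induced-equivalence characterisation of $\rtequiv$ rather than any concrete inspection of net sequence numbers and hop counts. Everything follows from the abstract preorder structure already established.
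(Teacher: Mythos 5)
Your proof is correct, and it supplies exactly the standard order-theoretic argument that the paper leaves implicit: the paper states this corollary without any proof, treating it as the well-known fact that the strict part of a (total) preorder is irreflexive and transitive. Your use of the kernel characterisation $\dval{rt}\rtequiv\dval{rt}' \Leftrightarrow (\dval{rt}\rtord\dval{rt}' \wedge \dval{rt}'\rtord\dval{rt})$ and your care in fixing a single destination $\dval{dip}$ throughout are precisely what the paper's definitions require.
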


\begin{theorem}\rm
\label{thm:inv_a}
The quality of the routing table entries for a destination \dval{dip} is strictly increasing
along a route towards \dval{dip},
 until it reaches either \dval{dip} or a node with an invalid routing table entry to \dval{dip}.
\begin{equation}\label{eq:inv_x}
\dval{dip}\in\akd{\dval{ip}}\cap \akd{\dval{nhip}} \ans\dval{nhip}\not=\dval{dip}
\ims \xiN{\dval{ip}}(\rt)\rtsord \xiN{\dval{nhip}}(\rt)\ ,
\end{equation}
where $N$ is a reachable network expression and $\dval{nhip}:=\nhp{\dval{ip}}$ is the IP address of the next hop.
\end{theorem}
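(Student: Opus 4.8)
The plan is to establish \eqref{eq:inv_x} as a state invariant by \hyperlink{induction-on-reachability}{induction on reachability} in the manner of Remark~\ref{rem:remark}: it holds vacuously in every initial state, where all routing tables are empty, and for a transition $N\ar{\ell}N'$ I assume it at $N$ and derive it at $N'$. So fix $\dval{ip},\dval{dip}\in\IP$ for which, in $N'$, both $\dval{ip}$ and $\dval{nhip}:=\fnnhop_{N'}^{\dval{ip}}(\dval{dip})$ have a valid entry for $\dval{dip}$ and $\dval{nhip}\neq\dval{dip}$. As a single transition is an action of one node, only two sorts of transition can disturb \eqref{eq:inv_x} for this pair: one in which the acting node is $\dval{ip}$ and changes $\dval{ip}$'s own entry for $\dval{dip}$ (and hence possibly $\dval{nhip}$ and the left-hand quality), and one in which the acting node is $\dval{nhip}\neq\dval{ip}$ and changes its entry for $\dval{dip}$; every other transition leaves both entries unchanged and \eqref{eq:inv_x} is inherited. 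Throughout I lean on \Prop{inv_nsqn}, which already yields the non-strict $\nsq{\dval{ip}}\leq\nsq{\dval{nhip}}$, on \Thm{state_quality} (quality is monotone along the computation), and on \Cor{strictord} (transitivity of $\rtsord$); the real work is to sharpen $\rtord$ to the strict $\rtsord$.

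I would treat the case where $\dval{nhip}$ acts first. Here $\dval{ip}$'s entry — and so $\dval{nhip}$ and the left-hand side of \eqref{eq:inv_x} — is unchanged. If $\dval{nhip}$ already had a valid entry for $\dval{dip}$ in $N$, the induction hypothesis gives $\xiN{\dval{ip}}(\rt)\rtsord\xiN{\dval{nhip}}(\rt)$, while \Thm{state_quality} gives $\xiN{\dval{nhip}}(\rt)\rtord\xiN[N']{\dval{nhip}}(\rt)$, so \Cor{strictord} yields the claim at $N'$. The interesting subcase is when $\dval{nhip}$'s entry for $\dval{dip}$ passes from invalid to valid via \hyperlink{update}{$\fnupd$}. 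By \Prop{inv_nsqn} at $N$ we have $\nsq{\dval{ip}}\leq\nsq{\dval{nhip}}$, and I would observe that validating an invalid entry through $\fnupd$ strictly increases its net sequence number: inspecting the replacing clauses~2--5 of \hyperlink{update}{$\fnupd$}, an invalid entry with sequence number $q\geq1$ has net sequence number $q-1$, while the new valid entry has sequence number $\geq q$, hence net sequence number $\geq q$. The sole exception is $q=0$, but then the resulting valid entry has sequence number $0$, so Invariant~\eqref{eq:inv_viib} forces its next hop to be $\dval{dip}$, contradicting $\dval{nhip}\neq\dval{dip}$; thus this degenerate situation cannot satisfy the antecedent. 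Consequently $\fnnsqn_{N'}^{\dval{nhip}}(\dval{dip})>\nsq{\dval{nhip}}\geq\nsq{\dval{ip}}=\fnnsqn_{N'}^{\dval{ip}}(\dval{dip})$, strict in the first component, giving $\rtsord$.

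Next I would treat the case where $\dval{ip}$ itself updates its entry for $\dval{dip}$. The self-entry updates (Pro.~\ref{pro:aodv}, Lines~\ref{aodv:line10},~\ref{aodv:line14},~\ref{aodv:line18}) install the destination as next hop, so $\dval{nhip}=\dval{dip}$ and the antecedent fails; a call of \hyperlink{invalidate}{$\fninv$} invalidates $\dval{ip}$'s entry, again failing it; and if $\fnupd$ keeps the old entry (its last clause) nothing relevant changes. Only Pro.~\ref{pro:rreq}, Line~\ref{rreq:line6} and Pro.~\ref{pro:rrep}, Line~\ref{rrep:line5} remain, where a fresh valid entry for $\dval{dip}$ is installed with next hop $\dval{nhip}=\xi(\sip)$ (so $\dval{nhip}\neq\dval{ip}$ by \Cor{sipnotip}), hop count $\xi(\hops)\mathord+1$ and net sequence number $\xi(\osn)$, resp.\ $\xi(\dsn)$. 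By \Prop{preliminaries} the triggering message was cast earlier, in some state $N^\dagger$, and by \Prop{ip=ipc} its sender is $\dval{nhip}$; I feed this cast into Invariant~\eqref{inv:starcast_ii} (RREQ), resp.\ \eqref{inv:starcast_iv} (RREP), using $\dval{nhip}\neq\dval{dip}$. The RREP case is immediate: \eqref{inv:starcast_iv} says $\dval{nhip}$ had, in $N^\dagger$, a valid entry for $\dval{dip}$ with sequence number $\xi(\dsn)$ and hop count $\xi(\hops)$, and transporting this forward by \Thm{state_quality} gives at $N'$ either a strictly larger net sequence number, or an equal one with hop count $\leq\xi(\hops)<\xi(\hops)\mathord+1$ — either way $\rtsord$. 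The RREQ case splits on the disjunction of \eqref{inv:starcast_ii}: the ``equally fresh and valid'' disjunct proceeds exactly as RREP, and a valid entry with strictly larger sequence number at once gives a strictly larger net sequence number; the delicate disjunct is a \emph{fresher but invalid} entry at $N^\dagger$ (sequence number $\geq\xi(\osn)\mathord+1$), which only yields $\fnnsqn_{N'}^{\dval{nhip}}(\dval{dip})\geq\xi(\osn)$.

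I expect this last disjunct to be the main obstacle, since there the net-sequence-number bound is merely non-strict. Strictness must then be recovered from a different source: were $\fnnsqn_{N'}^{\dval{nhip}}(\dval{dip})=\xi(\osn)$, then — the entry being valid in $N'$ — node $\dval{nhip}$ would have sequence number $\xi(\osn)$ in $N'$, whereas it had sequence number $\geq\xi(\osn)\mathord+1$ already in $N^\dagger$, contradicting the monotonicity of sequence numbers (\Prop{dsn increase}); hence the bound is in fact strict. The comparable pitfall in the invalidation branch is the $q=0$ degeneracy excluded above via \eqref{eq:inv_viib}. Beyond these two points the argument is bookkeeping: keeping straight in which state ($N$, $N'$, or the earlier cast state $N^\dagger$) each sequence number and hop count is read, and invoking \Thm{state_quality} and \Prop{dsn increase} to transport the relevant quantities from $N^\dagger$ forward to $N'$.
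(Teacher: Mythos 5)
Your plan coincides in structure and in every key lemma with the paper's own proof (induction on reachability; the split between $\fnupd$s at $\dval{nhip}$ and at $\dval{ip}$; \Prop{inv_nsqn}, \Thm{state_quality}, \Cor{strictord}; Invariants~\eqref{inv:starcast_ii} and~\eqref{inv:starcast_iv} fed by \Prop{preliminaries} and \Prop{ip=ipc}), and your handling of Pro.~\ref{pro:rreq}, Line~\ref{rreq:line6} and Pro.~\ref{pro:rrep}, Line~\ref{rrep:line5} is correct: in particular, your by-contradiction recovery of strictness in the ``fresher but invalid at $N^\dagger$'' disjunct (validity in $N'$ plus \Prop{dsn increase}) is just a rephrasing of the paper's direct chain $\nsq{\dval{nhip}}=\sq{\dval{nhip}}\geq\fnsqn_{N^\dagger}^{\dval{nhip}}(\dval{dip})>\xi(\osn)=\nsq{\dval{ip}}$, which needs no case split on validity at $N^\dagger$. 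There is, however, one step that fails: the exclusion of the degenerate case $q=0$ in the invalid-to-valid branch at $\dval{nhip}$.

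You apply Invariant~\eqref{eq:inv_viib} to the \emph{resulting entry at $\dval{nhip}$} (valid, sequence number $0$) and conclude that ``its next hop is $\dval{dip}$, contradicting $\dval{nhip}\neq\dval{dip}$''. No contradiction follows: the invariant forces the next hop recorded in $\dval{nhip}$'s \emph{own} routing table to equal $\dval{dip}$, which is perfectly consistent with $\dval{nhip}\neq\dval{dip}$---it merely says that $\dval{nhip}$ holds a $1$-hop route to $\dval{dip}$. So in this subcase you are left with $\fnnsqn_{N'}^{\dval{nhip}}(\dval{dip})=0=\nsq{\dval{ip}}$ and no strict inequality. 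The paper closes exactly this hole by routing the argument through \emph{$\dval{ip}$'s} entry instead: if $\sq{\dval{nhip}}=0$ right before the update, then $\nsq{\dval{nhip}}=0$, hence $\nsq{\dval{ip}}=0$ by Invariant~\eqref{eq:inv_ix}; since $\dval{ip}$'s entry is valid, $\sq{\dval{ip}}=0$; and now Invariant~\eqref{eq:inv_viib} applied to \emph{that} entry forces its next hop---which \emph{is} $\dval{nhip}$---to equal $\dval{dip}$, contradicting the antecedent. (Two minor inaccuracies nearby: when $q=0$ the update need not produce an entry with sequence number $0$, since Clause~2 of $\fnupd$ may install a known number $\geq 1$, in which case strictness holds anyway, so only the resulting-sequence-number-$0$ subcase is critical; and your case split at $\dval{nhip}$ omits the possibility that $\dval{nhip}$ has \emph{no} entry at $N$, which the paper's $\sq{\dval{nhip}}=0$ case covers automatically because $\fnsqn$ is total, and which is anyway vacuous by \Prop{inv_nsqn}.) With this one repair your proof is the paper's proof.
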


\begin{proofNobox}
As before, we first check the initial states of our transition system
and then check all locations in Processes~\ref{pro:aodv}--\ref{pro:queues}
where a routing table might be changed. For an initial network
expression, the invariant holds since all routing tables are empty.
Adding precursors to $\xiN{\dval{ip}}(\rt)$ or $\xiN{\dval{nhip}}(\rt)$
does not affect the invariant, since the invariant does not depend on
precursors, so it suffices to examine all modifications to $\xiN{\dval{ip}}(\rt)$
or $\xiN{\dval{nhip}}(\rt)$ using \hyperlink{update}{$\fnupd$} or
\hyperlink{invalidate}{$\fninv$}. Moreover, without loss of generality we restrict
attention to those applications of $\fnupd$ or $\fninv$ that actually
modify the entry for \dval{dip}, beyond its precursors; if $\fnupd$
only adds some precursors in the routing table, the invariant---which
is assumed to hold before---is maintained. 

Applications of {\fninv} to either $\xiN{\dval{ip}}(\rt)$ or
$\xiN{\dval{nhip}}(\rt)$ lead to a network state in which the
antecedent of \Eq{inv_x} is not satisfied.
Now consider an application of $\fnupd$ to \plat{$\xiN{\dval{nhip}}(\rt)$.}
We restrict attention to the case that the antecedent of \Eq{inv_x} is satisfied right after the
update, so that right before the update we have 
$\dval{dip}\in\akd{\dval{ip}} \wedge \dval{nhip}\not=\dval{dip}$.
In the special case that $\sq{\dval{nhip}}=0$ right before the update, we have
$\nsq{\dval{nhip}}=0$ and thus $\nsq{\dval{ip}}=0$ by Invariant~\Eq{inv_ix}.
Since \plat{$\sta{\dval{ip}}=\val$}, this implies $\sq{\dval{ip}}=0$.
By \Prop{rte}(\ref{it_e}) we have $\dval{nhip}=\dval{dip}$,
contradicting our assumptions. It follows that right before the update $\sq{\dval{nhip}}>0$, and
hence $\nsq{\dval{nhip}}<\sq{\dval{nhip}}$.

An application of $\fnupd$ to \plat{$\xiN{\dval{nhip}}(\rt)$}
that changes \plat{$\sta{\dval{nhip}}$} from {\inval} to {\val} cannot
decrease the sequence number of the entry to \dval{dip} and hence
strictly increases its net sequence number.
Before the $\fnupd$ we had
$\nsq{\dval{ip}}\leq \nsq{\dval{nhip}}$ by Invariant~\eqref{eq:inv_ix},
so afterwards we must have
\mbox{$\nsq{\dval{ip}} < \nsq{\dval{nhip}}$}, and hence 
$\xiN{\dval{ip}}(\rt)\rtsord \xiN{\dval{nhip}}(\rt)$.
An $\fnupd$ to $\xiN{\dval{nhip}}(\rt)$ that maintains
$\sta{\dval{nhip}}=\val$ can only increase the quality of the entry to
\dval{dip} (cf.\ \Thm{state_quality}), and hence maintains Invariant \Eq{inv_x}.

It remains to examine the $\fnupd$s to $\xiN{\dval{ip}}(\rt)$.
\begin{description}
\item[Pro.~\ref{pro:aodv}, Lines~\ref{aodv:line10}, \ref{aodv:line14}, \ref{aodv:line18}:]
	The entry $\xi(\sip\comma0\comma\unkno\comma\val\comma1\comma\sip\comma\emptyset)$ is used for the update; 
	its destination is $\dval{dip}:=\xi(\sip)$.
	Since $\dval{dip}=\nhp[\dval{dip}]{\dval{ip}}=\dval{nhip}$,
	the antecedent of the invariant to be proven is not satisfied.
\item[Pro.~\ref{pro:rreq}, Line~\ref{rreq:line6}:]\hypertarget{729Pro3Line4}{ }\label{pg:729Pro3Line4}
        We assume that  the entry $\xi(\oip,\osn,\kno,\val,\hops+1,\sip,*)$ is inserted into $\xi(\rt)$.
	So $\dval{dip}:=\xi(\oip)$, $\dval{nhip}:=\xi(\sip)$,
        $\nsq{\dval{ip}}:=\xi(\osn)$ and $\dhp{\dval{ip}}:=\xi(\hops)+1$.\linebreak[2]
        This information is distilled from a received
	route request message (cf.\ Lines~\ref{aodv:line2} and~\ref{aodv:line8}
	of Pro.~\ref{pro:aodv}).
	By \Prop{preliminaries} this message was sent before, say in state $N^\dagger$;
        by \Prop{ip=ipc} the sender of this message is $\xi(\sip)$.

	By Invariant~\eqref{inv:starcast_ii}, with $\ipc:=\xi(\sip)=\dval{nhip}$,
	~$\oipc:=\xi(\oip)=\dval{dip}$, ~$\osnc:=\xi(\osn)$~ and ~$\hopsc:=\xi(\hops)$,
        and using that $\ipc = \dval{nhip} \neq \dval{dip} = \oipc$, we get that
\begin{eqnarray*}
&\fnsqn_{N^\dagger}^{\dval{nhip}}(\dval{dip})~=~\fnsqn_{N^\dagger}^{\ipc}(\oipc) ~>~ \osnc ~=~ \xi(\osn)\ , \mbox{ or}\\
&\fnsqn_{N^\dagger}^{\dval{nhip}}(\dval{dip})~=~\xi(\osn) \ans
 \fndhops_{N^\dagger}^{\dval{nhip}}(\dval{dip})\leq\xi(\hops) \ans
 \fnstatus_{N^\dagger}^{\dval{nhip}}(\dval{dip})~=~\val\ .
\end{eqnarray*}
We first assume that the first line holds.
Then, by the assumption $\dval{dip}\in\akD{\xiN{\dval{nhip}}(\rt)}$,
the definition of net sequence numbers, and \Prop{dsn increase},
\vspace{-.5ex}
\[
\nsq{\dval{nhip}}=\sq{\dval{nhip}} \geq
\fnsqn_{N^\dagger}^{\dval{nhip}}(\dval{dip})
>\xi(\osn)=\nsq{\dval{ip}}\ .
\vspace{-.5ex}
\]
and hence $\xiN{\dval{ip}}(\rt)\rtsord \xiN{\dval{nhip}}(\rt)$.

We now assume the second line to be valid. 
From this we conclude
\vspace{-.5ex}
\[ \fnnsqn_{N^\dagger}^{\dval{nhip}}(\dval{dip})
  =\fnsqn_{N^\dagger}^{\dval{nhip}}(\dval{dip})
  =\xi(\osn)
  =\nsq{\dval{ip}}\ .
\vspace{-.5ex}
\]

Moreover,\hfill $\fndhops_{N^\dagger}^{\dval{nhip}}(\dval{dip})~\leq~\xi(\hops) ~<~ \xi(\hops)+1~=~\dhp{\dval{ip}}$~.\hfill\hfill\mbox{}\\
Hence $\xiN{\dval{ip}}(\rt)\rtsord \xiN[N^\dagger]{\dval{nhip}}(\rt)$.
Together with \Thm{state_quality} and the transitivity of $\rtord$\vspace{-2pt}
this yields $\xiN{\dval{ip}}(\rt)\rtsord \xiN{\dval{nhip}}(\rt)$.
	
\item[Pro.~\ref{pro:rrep}, Line~\ref{rrep:line5}:]
                  The update is similar to the one of Pro.~\ref{pro:rreq}, Line~\ref{rreq:line6}. 
                  The only difference is that the information stems from an incoming RREP message and 
                  that a routing table entry to $\xi(\dip)$ (instead of  $\xi(\oip)$) is established. 
                  Therefore, the proof is similar to the one of Pro.~\ref{pro:rreq}, Line~\ref{rreq:line6}; instead 
                  of Invariant~\eqref{inv:starcast_ii} we use Invariant~\eqref{inv:starcast_iv}.
\endbox
\end{description}
\end{proofNobox}

\subsection{Loop Freedom}\label{ssec:loop freedom}
\index{loop freedom}%
The ``na\"ive'' notion of loop freedom is a term that informally means
that ``a packet never goes round in cycles without (at some point)
being delivered".  This dynamic definition is not only hard to formalise, 
it is also too restrictive a requirement for AODV\@. There are situations where 
packets are sent in cycles, but which are not considered harmful. 
This can happen when the topology keeps changing.%

\subsubsection*{Loops within a Topology that Changes Forever}
\begin{wrapfigure}[10]{r}{0.35\textwidth}
 \vspace{-8.5ex}
\centering
\includegraphics[scale=1.3]{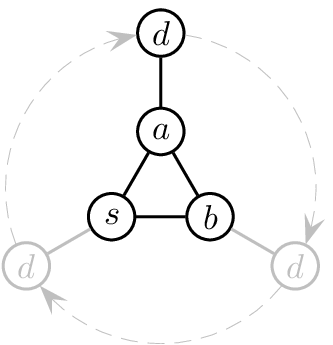}
\caption{Loop considered harmless}\label{fig:good loop}
\vspace*{-.5ex}
\end{wrapfigure}
The following example shows that data packets can travel in cycles 
without being delivered. However, it is our belief that this 
example is not a loop that should be avoided by a good routing protocol. 

The example consists of
a ``core'' network built up by the nodes $s$, $a$ and $b$, which form a ring topology. 
All links between these three nodes are stable. Node $d$ is part of the network and keeps moving around the 
core such that it is always connected to only one of the nodes at a time; see \Fig{good loop}.
In the initial state $d$ is connected to $a$ and node $s$
wants to send a data packet to $d$.

Since $s$ does not have a routing table entry to $d$, it generates and broadcasts a
RREQ message, which is received by $d$ via node $a$ (\Fig{example_dynamic_loop} (a)).\footnote{The
``snapshots'' in this figure are slightly different from the ones presented before;
in Figures~\ref{fig:example1} and \ref{fig:example2} (as well as in \ref{fig:loopdec}),
each snapshot presents the system in a state after an AODV control
message or data packet has been received and already partly handled
(e.g., the routing tables are updated). Here, the subfigures describe
the system when each message has either been handled completely or has been received and stored in
the buffer, but {\em not} yet handled.} 
In \Fig{example_dynamic_loop}(b), $d$ sends a RREP message back to $s$ (via $a$). 

Since $s$ now has a routing table entry for $d$, it sends the data packet to
 $a$---the next hop on the route to $d$ (\Fig{example_dynamic_loop}(c)).
In the meantime, node $d$ has moved away from node $a$, and is now connected to node $b$. 
In \Fig{example_dynamic_loop}(d), node $a$ detects the link break
(e.g.\ while trying to send the data packet from node
$s$ to node $d$), and proceeds to do a local repair.\footnote{Even
  though we do not model the local repair feature, we use it here to
  illustrate scenarios where data packets can travel in
  cycles. It is easy to modify the presented example into one without
  local repair; however the modified example would require error handling and hence would
  be longer.} The data packet is buffered while waiting for the local repair
  process to complete. To repair the link break, node $a$ generates a new RREQ message,
  which is received by node $d$ via node $b$.

In \Fig{example_dynamic_loop}(e), node $d$ sends a RREP message back
to node $a$ (via node $b$), thus enabling node $a$ to repair
its routing table entry to node $d$.\footnote{We
  simplify the description of the local repair process here. Further
  details are available in the RFC~\cite{rfc3561}.}

With a valid entry in its routing table for node $d$, node $a$ can now send the buffered data packet to node $b$---the next hop on the route towards node $d$ (\Fig{example_dynamic_loop}(f)). If node $d$ now moves away from node $b$ and into the transmission range of node $s$, the events of Parts (d)--(f) will repeat. This will continue as long as the destination node $d$ keeps moving ``around" nodes $s$, $a$ and $b$. The data packet will then travel through a loop $a$--$b$--$s$--$a$. Though this is a loop, it is not undesirable behaviour since the data packet is always travelling on the shortest path towards node $d$; it is due to the movement of node $d$ that the data packet is never delivered.
\endbox
{
\setlength{\medFigShift}{32.2pt}
\setlength{\shortmedFigShift}{.1pt}
\begin{exampleFig}{A ``dynamic loop''}{fig:example_dynamic_loop}
	\FigLine[slsr]%
	{$s$ broadcasts a new RREQ message destined to $d$.}{fig/ex_loop_dynamic4nodesii_2}{}
	{$d$ updates its RT and unicasts a RREP.}{fig/ex_loop_dynamic4nodesii_3}{}
	\FigLine[xslxsr]%
	{The topology changes;\\$s$ sends the data packet to $a$.}
	{fig/ex_loop_dynamic4nodesii_4}{}
	{$a$ detects the link break;\\ it initiates new RREQ (local repair).}{fig/ex_loop_dynamic4nodesii_5}{}
	\FigLine[xslxsr]%
	{$d$ updates its RT and unicasts a RREP back to $a$.}{fig/ex_loop_dynamic4nodesii_6}{}
	{$a$ forwards data packet to $b$;\\the topology changes.}{fig/ex_loop_dynamic4nodesii_7}{}
\end{exampleFig}
}
\noindent
Due to this dynamic behaviour, the sense of loop freedom is much
better captured by a static invariant,
saying that at any given
time the collective routing tables of the nodes do not admit a loop.
Such a requirement does not rule out the dynamic loop exemplified
above. However, in situations where the topology remains stable sufficiently long
it does guarantee that packets will not keep going around in cycles.
In the above example the packet would actually be delivered as soon as the topology 
stops changing---it does not matter when.

\newcommand{\RG}[2]{\mathcal{R}_{#1}(#2)}
To this end we define the \phrase{routing graph} of network expression $N$ with respect to
destination~$\dval{dip}$ by $\RG{N}{\dval{dip}}\mathop{:=}\linebreak[1](\IP,E)$, where
all nodes of the network form the set of vertices and there is an
arc $({\dval{ip}},{\dval{ip}}')\in E$ iff $\dval{ip}\mathop{\not=}\dval{dip}$ and
$
(\dval{dip}\comma\nosp{*}\comma\nosp{*}\comma\nosp{\val}\comma\nosp{*}\comma\nosp{\dval{ip}'}\comma\nosp{*})\mathop{\in}\xiN{\dval{ip}}(\rt).
$

An arc in a routing graph states that $\dval{ip}'$ is the next hop on
a valid route to $\dval{dip}$ known by $\dval{ip}$; a path in a routing
graph describes a route towards $\dval{dip}$ discovered by AODV\@.
\index{loop freedom}%
We say that a network expression $N$ is \emph{loop free} if the
corresponding routing graphs $\RG{N}{\dval{dip}}$ are loop free, for
all $\dval{dip}\mathop{\in}\IP$. A routing protocol, such as AODV, is
\emph{loop free} iff all reachable network expressions are loop free.

Using this definition of a routing graph, \Thm{inv_a} states that 
along a path towards a destination \dval{dip} in the routing
graph of a reachable network expression $N$, until it reaches either
\dval{dip} or a node with an invalided routing table entry to dip,
the quality of the routing table entries for \dval{dip} is strictly increasing.
From this, we can immediately conclude
\begin{theorem}\rm\label{thm:loop free}
The specification of AODV given in \Sect{modelling_AODV} is loop free.
\end{theorem}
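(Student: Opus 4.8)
The plan is to read off loop freedom as a direct corollary of the strict quality increase established in \Thm{inv_a}, via a short contradiction argument. Since \Thm{inv_a} and its supporting invariants have already done all the real work, the only remaining task is combinatorial: turn a hypothetical cycle in a routing graph into a chain of $\rtsord$-comparisons and then invoke irreflexivity. No induction on reachability is needed here, because \Thm{inv_a} is a state invariant that already holds in the arbitrary reachable expression $N$.

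Concretely, I would fix a reachable network expression $N$ and a destination $\dval{dip}\in\IP$, and suppose towards a contradiction that the routing graph $\RG{N}{\dval{dip}}$ contains a directed cycle $\dval{ip}_0 \to \dval{ip}_1 \to \cdots \to \dval{ip}_n = \dval{ip}_0$ with $n\geq 1$. First I would unfold the definition of an arc: for each $i$, the arc $(\dval{ip}_i,\dval{ip}_{i+1})$ means $\dval{ip}_i\neq\dval{dip}$ together with $(\dval{dip}\comma *\comma *\comma \val\comma *\comma \dval{ip}_{i+1}\comma *)\in\xiN{\dval{ip}_i}(\rt)$; that is, $\dval{dip}\in\akd{\dval{ip}_i}$ and, since each routing table holds at most one entry per destination (\Prop{invarianti}), $\dval{ip}_{i+1}=\nhp{\dval{ip}_i}$ is exactly the next hop stored at $\dval{ip}_i$ towards $\dval{dip}$.

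The key observation is that every vertex occurring in the cycle is itself the source of a cycle arc, so each $\dval{ip}_i$ (including the target $\dval{ip}_{i+1}$ of every arc) satisfies $\dval{ip}_i\neq\dval{dip}$ and $\dval{dip}\in\akd{\dval{ip}_i}$. Hence, for each $i=0,\dots,n-1$ the hypotheses of \Thm{inv_a} are met with $\dval{ip}:=\dval{ip}_i$ and $\dval{nhip}:=\dval{ip}_{i+1}=\nhp{\dval{ip}_i}$: we have $\dval{dip}\in\akd{\dval{ip}_i}\cap\akd{\dval{ip}_{i+1}}$ and $\dval{ip}_{i+1}\neq\dval{dip}$. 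The theorem then yields $\xiN{\dval{ip}_i}(\rt)\rtsord\xiN{\dval{ip}_{i+1}}(\rt)$ for every $i$, all comparisons being with respect to the same fixed $\dval{dip}$ and all well defined since each table carries a valid entry for $\dval{dip}$. By transitivity of $\rtsord$ (Corollary~\ref{cor:strictord}) this gives $\xiN{\dval{ip}_0}(\rt)\rtsord\xiN{\dval{ip}_n}(\rt)=\xiN{\dval{ip}_0}(\rt)$, contradicting irreflexivity of $\rtsord$. Therefore no such cycle exists; every $\RG{N}{\dval{dip}}$ is loop free, and since $N$ and $\dval{dip}$ were arbitrary reachable data, AODV is loop free.

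There is essentially no hard step: the argument is a one-line consequence of \Thm{inv_a} once the arc definition is unwound. The only point requiring care—and the closest thing to an obstacle—is verifying that both endpoints of each arc genuinely satisfy the antecedent of \Thm{inv_a}, in particular that the target $\dval{ip}_{i+1}$ again carries a \emph{valid} entry for $\dval{dip}$; this holds precisely because $\dval{ip}_{i+1}$ is again a cycle vertex and hence the source of the subsequent arc. All the genuine difficulty has already been absorbed into the development preceding \Thm{inv_a}, namely the quality preorder $\rtord$, the notion of net sequence number, \Thm{state_quality}, and \Prop{inv_nsqn}.
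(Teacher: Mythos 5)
Your proof is correct and follows essentially the same route as the paper's: apply \Thm{inv_a} to each edge of a hypothetical cycle, chain the resulting $\rtsord$-comparisons by transitivity, and contradict irreflexivity (\Cor{strictord}). Your explicit verification that every cycle vertex satisfies the antecedent of \Thm{inv_a}—in particular that the target of each arc again holds a valid entry for $\dval{dip}$ because it is the source of the next arc—is a detail the paper leaves implicit, and it is checked correctly.
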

\begin{proof}
If there were a loop in a routing graph $\RG{N}{\dval{dip}}$, then for
any edge $(\dval{ip},\dval{nhip})$ on that loop one has
$\xiN{\dval{ip}}(\rt)\rtsord\xiN{\dval{nhip}}(\rt)$, by \Thm{inv_a}. 
Thus, by transitivity of $\rtsord$, one has
$\xiN{\dval{ip}}(\rt)\rtsord\xiN{\dval{ip}}(\rt)$, which
contradicts the irreflexivity of {\rtsord} (cf.\ \Cor{strictord}).
\end{proof}

\hypertarget{end}{According to \Thm{loop free} any route to a destination \dval{dip}
established by AODV---i.e.\ a path in $\RG{N}{\dval{dip}}$---ends after finitely many
hops. There are three possible ways in which it could end:
\begin{enumerate}[(1)]
\item by reaching the destination,\label{eq:success}
\item by reaching a node with an invalid entry to \dval{dip}, or\label{eq:invalid}
\item by reaching a node without any entry to \dval{dip}.\label{eq:failure}
\end{enumerate}
\Eq{success} is what AODV attempts to accomplish, whereas
\Eq{invalid} is an unavoidable
due to link breaks in a dynamic topology. It follows directly from \Prop{inv_nsqn} that \Eq{failure} can never occur.}

\subsection{Route Correctness}\label{ssec:route correctness}

\newcommand{\CG}[1]{\mathcal{C}_{#1}}
The creation of a routing table entry at node \dval{ip} for destination \dval{dip} is no guarantee
that a route from \dval{ip} to \dval{dip} actually exists. The entry is created based on information
gathered from messages received in the past, and at any time link breaks may occur. The best one could require
of a protocol like AODV is that routing table entries are based on information that was valid at some
point in the past. This is the essence of what we call \phrase{route correctness}.

We define a \phrase{history} of an AODV-like protocol as a sequence $H=N_0 N_1 \ldots N_k$ of network expressions,
where $N_0$ is an initial state of the protocol, and for $1\leq i\leq k$ there is a transition $N_{i-1}\ar{\ell}N_i$;
we call $H$ a history \emph{of} the state $N_k$.
The \phrase{connectivity graph} of a history $H$ is $\CG{H}\mathop{:=}(\IP,E)$, where
the nodes of the network form the set of vertices and there is an
arc $({\dval{ip}},{\dval{ip}}')\in E$ iff ${\dval{ip}}'\in \RN[N_i]{\dval{ip}}$ for some $0\leq i
\leq k$, i.e.\ if at some point during that history node $\dval{ip}'$ was in transmission range of \dval{ip}.
A protocol satisfies the property \phrase{route correctness} if 
for every history $H$ of a reachable state $N$
and for every routing table entry $(\dval{dip}\comma\nosp{*}\comma\nosp{*}\comma\nosp{*}\comma\nosp{\dval{hops}}\comma\nosp{\dval{nhip}}\comma\nosp{*})\mathop{\in}\xiN[N]{\dval{ip}}(\rt)$
there is a path $\dval{ip}\rightarrow\dval{nhip}\rightarrow\cdots\rightarrow\dval{dip}$
in $\CG{H}$ from $\dval{ip}$ to $\dval{dip}$ with \dval{hops} hops and (if $\dval{hops}>0$) next hop \dval{nhip}.%
\footnote{A path with $0$ hops consists of a single node only.}

\begin{theorem}\rm\label{thm:route correctness}
Let $H$ be a history of a network state $N$.
\begin{enumerate}[(a)]
\item For each routing table entry $(\dval{dip},*,*,*,\dval{hops},\dval{nhip},*)\mathop{\in}\xiN[N]{\dval{ip}}(\rt)$
there is a path $\dval{ip}\rightarrow\dval{nhip}\rightarrow\cdots\rightarrow\dval{dip}$
in $\CG{H}$ from $\dval{ip}$ to $\dval{dip}$ with \dval{hops} hops and (if $\dval{hops}>0$) next hop \dval{nhip}.
\item For each route request sent in state $N$ there is a corresponding path in the connectivity graph of $H$.
	\begin{equation}\label{eq:rcreq}
	\begin{array}{rcl}
	  &&N\ar{R:\starcastP{\rreq{\hopsc}{*}{*}{*}{*}{\oipc}{*}{\ipc}}}_{\dval{ip}}N'\\
	  &\Rightarrow&
          \mbox{there is a path $\ipc\rightarrow\cdots\rightarrow\oipc$ in $\CG{H}$ from $\ipc$ to $\oipc$ with $\hopsc$ hops}
	\end{array}
	\end{equation}
\item For each route reply sent in state $N$ there is a corresponding path in the connectivity graph of $H$.
 	\begin{equation}\label{eq:rcrep}
 	\begin{array}{@{}rcl@{}}
 	  &&N\ar{R:\starcastP{\rrep{\hopsc}{\dipc}{*}{*}{\ipc}}}_{\dval{ip}}N'\\
 	  &\Rightarrow&
          \mbox{there is a path $\ipc\rightarrow\cdots\rightarrow\dipc$ in $\CG{H}$ from $\ipc$ to $\dipc$ with $\hopsc$ hops}
 	\end{array}
 	\end{equation}

\end{enumerate}
\end{theorem}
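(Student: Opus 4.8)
The plan is to prove parts (a), (b) and (c) \emph{simultaneously} by \hyperlink{induction-on-reachability}{induction on reachability}, since the three statements are mutually dependent: a routing table entry created at a node is distilled from a received $\RREQ$ or $\RREP$ message, so establishing (a) for such an entry needs (b) or (c) for the message that produced it; conversely, an intermediate node's reply sends the hop count stored in its own routing table, so (c) will invoke (a). Here (a) is a state invariant while (b) and (c) are transition invariants. In the base case $N_0$ is initial: all routing tables are empty and no cast has occurred, so all three statements hold vacuously.

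The single mechanism driving every inductive step is the following \emph{one-hop observation}. Whenever a node $\dval{ip}$ handles an AODV control message (received in Pro.~\ref{pro:aodv}, Line~\ref{aodv:line2}), \Prop{preliminaries} tells us the message was cast earlier by some node $\dval{ip}'\neq\dval{ip}$, and \Prop{ip=ipc} identifies $\dval{ip}'$ with the value of $\sip$ carried in the message. At the moment of that cast $\dval{ip}$ lay in the transmission range of $\dval{ip}'$; since \awn\ establishes only \emph{bidirectional} links by default, $\dval{ip}'$ was then also in range of $\dval{ip}$, so $\CG{H}$ contains the arc $(\dval{ip},\dval{ip}')$. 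This is precisely the place where route correctness would fail for unidirectional links (cf.\ the discussion in \Sect{abstractions}). Moreover $\CG{H}$ is monotone along a history, so any path produced by the induction hypothesis for an earlier state survives as a path in $\CG{H}$.

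I would then dispatch the cases by location. For (a): the functions $\fninv$ and $\fnaddprecrt$ change neither the hop count nor the next hop of an entry, hence preserve any existing path, so only the calls of $\fnupd$ matter. The updates in Pro.~\ref{pro:aodv}, Lines~\ref{aodv:line10},~\ref{aodv:line14},~\ref{aodv:line18} install the entry $(\sip,0,\unkno,\val,1,\sip,\emptyset)$; here $\dval{dip}=\dval{nhip}=\sip$, and the single arc $(\dval{ip},\sip)$ furnished by the one-hop observation is the required $1$-hop path. The update in Pro.~\ref{pro:rreq}, Line~\ref{rreq:line6} installs $(\oip,\osn,\kno,\val,\hops\mathord+1,\sip,\emptyset)$; prefixing the arc $(\dval{ip},\sip)$ to the $\hops$-hop path $\sip\rightarrow\cdots\rightarrow\oip$ given by (b) (applied to the incoming $\RREQ$) yields the desired $(\hops\mathord+1)$-hop path, and the update in Pro.~\ref{pro:rrep}, Line~\ref{rrep:line5} is identical with (c) in place of (b). For (b): a freshly initiated request (Pro.~\ref{pro:aodv}, Line~\ref{aodv:line39}) has hop count $0$ and $\oipc=\ipc$, so the empty path suffices, while a forwarded request (Pro.~\ref{pro:rreq}, Line~\ref{rreq:line34}) has hop count $\hops\mathord+1$, and again the one-hop arc prepended to the $\hops$-hop path for the incoming $\RREQ$ does the job. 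For (c): the destination's reply (Pro.~\ref{pro:rreq}, Line~\ref{rreq:line14a}) has hop count $0$ and $\dipc=\ipc$; the intermediate reply (Pro.~\ref{pro:rreq}, Line~\ref{rreq:line26}) sends hop count $\dhops{\rt}{\dip}$, and since Line~\ref{rreq:line22} guarantees $\dip\in\akD{\rt}$, applying (a) to the current entry for $\dip$ produces the matching path; the forwarded reply (Pro.~\ref{pro:rrep}, Line~\ref{rrep:line13}) is handled exactly as the forwarded request, using (c).

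The main obstacle is not any individual case but keeping the mutual induction well-founded. The key points are that in every case where a path is borrowed from (b) or (c), the corresponding cast occurred at a transition \emph{strictly} earlier than the one under consideration (by \Prop{preliminaries}), and that in the one case where (c) borrows from (a)---the intermediate reply---the entry for $\dip$ already exists in the pre-transition state, where (a) is available by the induction hypothesis. One must additionally verify the bookkeeping that the variables $\hops$, $\oip$, $\osn$, $\dip$, $\dsn$ and $\sip$ are not altered between the update line and the subsequent cast line, so that the stored path has precisely the hop count announced in the outgoing message---the same tracking of variable values already carried out in the proof of \Prop{msgsending}.
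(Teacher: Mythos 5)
Your proposal is correct and takes essentially the same approach as the paper's proof: simultaneous induction on reachability over parts (a)--(c), the one-hop arc extracted via \Prop{preliminaries}, \Prop{ip=ipc} and the symmetry of the connectivity graph, and the same case analysis over the calls of \hyperlink{update}{$\fnupd$} and the cast locations, with paths from the induction hypothesis prefixed by that arc. The only difference is presentational: you spell out the well-foundedness of the mutual induction (and the deferral of variable bookkeeping), which the paper handles implicitly through its references to prior transitions and to invariant (a) in the pre-transition state.
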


\begin{proofNobox}
In the course of running the protocol, the set of edges $E$ in the connectivity graph $\CG{H}$ only increases,
so the properties are invariants. We prove them by simultaneous induction.
\begin{enumerate}[(a)]
\item In an initial state the invariant is satisfied because the routing tables are empty.
  Since routing table entries can never be removed, and the functions $\fnaddprecrt$ and $\fninv$ do
  not affect $\dval{hops}$ and $\dval{nhip}$, it suffices to check all application calls of \hyperlink{update}{$\fnupd$}.
  In each case, if the update does not change the routing table entry beyond its precursors
  (the last clause of \hyperlink{update}{\fnupd}), the invariant is trivially
  preserved; hence we examine the cases that an update actually occurs.
\begin{description}
\item[Pro.~\ref{pro:aodv}, Lines~\ref{aodv:line10}, \ref{aodv:line14}, \ref{aodv:line18}:]
The update changes the entry into $\xi(\sip,*,\unkno,\val,1,\sip,*)$; hence \mbox{$\dval{hops}\mathbin=1$} and
$\dval{nhip}=\dval{dip}:=\xi(\sip)$. The value $\xi(\sip)$ stems through
Lines~\ref{aodv:line8},~\ref{aodv:line12} or~\ref{aodv:line16} of Pro.~\ref{pro:aodv} from an
incoming AODV control message.	By \Prop{preliminaries} this message was sent before, say in state $N^\dagger$;
by \Prop{ip=ipc} the sender of this message is $\xi(\sip)=\dval{nhip}$. Since in state $N^\dagger$
the message must have reached the queue of incoming messages of node \dval{ip}, it must be that
\plat{${\dval{ip}}\mathbin\in \RN[N^\dagger]{\dval{nhip}}$}.
In our formalisation of \awn the connectivity graph is always symmetric: ${\dval{nhip}}\mathbin\in
\RN[N^\dagger]{\dval{ip}}$ iff \plat{${\dval{ip}}\mathbin\in \RN[N^\dagger]{\dval{nhip}}$}.
It follows that $(\dval{ip},\dval{nhip})\in E$, so there is a 1-hop path in $\CG{H}$ from $\dval{ip}$ to $\dval{dip}$.
\item[Pro.~\ref{pro:rreq}, Line~\ref{rreq:line6}:] 
Here $\dval{dip}:=\xi(\oip)$, $\dval{hops}:=\xi(\hops)\mathord+1$ and $\dval{nhip}:=\xi(\sip)$. 
These values stem from an incoming RREQ message, which must have been sent beforehand, say in state $N^\dagger$.
As in the previous case we obtain $(\dval{ip},\dval{nhip})\in E$.
By Invariant~\Eq{rcreq}, with $\oipc:=\xi(\oip)=\dval{dip}$, $\hopsc:=\xi(\hops)$ and $\ipc:=\xi(\sip)=\dval{nhip}$, 
there is a path $\dval{nhip}\rightarrow\cdots\rightarrow\dval{dip}$ in $\CG{H}$ from $\ipc$ to $\oipc$ with $\hopsc$ hops.
It follows that there is a path $\dval{ip}\rightarrow\dval{nhip}\rightarrow\cdots\rightarrow\dval{dip}$
in $\CG{H}$ from $\dval{ip}$ to $\dval{dip}$ with \dval{hops} hops and next hop \dval{nhip}.
\item[Pro.~\ref{pro:rrep}, Line~\ref{rrep:line5}:]
Here $\dval{dip}:=\xi(\dip)$, $\dval{hops}:=\xi(\hops)\mathord+1$ and $\dval{nhip}:=\xi(\sip)$. 
The reasoning is exactly as in the previous case, except that we deal with an incoming RREP message
and use Invariant~\Eq{rcrep}.
\end{description}

\item We check all occasions where a route request is sent.
\begin{description}
	\item[Pro.~\ref{pro:aodv}, Line~\ref{aodv:line39}:]
		A new route request is initiated with $\ipc=\oipc:=\xi(\ip)=\dval{ip}$
                and $\hopsc:=0$.
                Indeed there is a path in $\CG{H}$ from $\ipc$ to $\oipc$ with $0$ hops.
	\item[Pro.~\ref{pro:rreq}, Line~\ref{rreq:line34}:]
                The broadcast message has the form
                \[\xi(\rreq{\hops\mathord+1}{\rreqid}{\dip}{\max(\sqn{\rt}{\dip},\dsn)}{\dsk}{\oip}{\osn}{\ip})\ .\]
                Hence $\hopsc:=\xi(\hops)\mathord+1$, $\oipc:=\xi(\oip)$ and $\ipc:=\xi(\ip)=\dval{ip}$.
                The values $\xi(\hops)$ and $\xi(\oip)$ stem through Line~\ref{aodv:line8} of Pro.~\ref{pro:aodv} from an
                incoming RREQ message of the form
                \[\xi(\rreq{\hops}{\rreqid}{\dip}{\dsn}{\dsk}{\oip}{\osn}{\sip})\ .\]
                By \Prop{preliminaries} this message was sent before, say in state $N^\dagger$;
                by \Prop{ip=ipc} the sender of this message is $\dval{sip}:=\xi(\sip)$. 
                By induction, using Invariant~\eqref{eq:rcreq}, there is a path $\dval{sip}\rightarrow\cdots\rightarrow\oipc$ in
                $\CG{H^{\dagger}} \subseteq \CG{H}$ from $\dval{sip}$ to $\oipc$ with $\xi(\hops)$ hops.
                It remains to show that there is a $1$-hop path from $\dval{ip}$ to
                $\dval{sip}$. In state $N^\dagger$ the message sent by $\dval{sip}$ must have
                reached the queue of incoming messages of node \dval{ip}, and therefore $\dval{ip}$
                was in transmission range of $\dval{sip}$, i.e., \plat{${\dval{ip}}\mathbin\in \RN[N^\dagger]{\dval{sip}}$}.
                Since the connectivity graph of \awn is always symmetric (cf.\ Tables~\ref{tab:sos node}
                and \ref{tab:sos network}, and explanation on Page~\pageref{pg:sym}),
                \plat{${\dval{ip}}\mathbin\in \RN[N^\dagger]{\dval{sip}}$} holds as well. Hence it follows that $(\dval{ip},\dval{sip})\in E$.
\end{description}
\item We check all occasions where a route reply is sent.
\begin{description}
	\item[Pro.~\ref{pro:rreq}, Line~\ref {rreq:line14}:]
		A new route reply with $\hopsc:=0$ and
		$\ipc:=\xi(\ip)=\dval{ip}$ is initiated.
		Moreover, by Line~\ref{rreq:line10}, $\dipc:=\xi(\dip)=\xi(\ip)=\dval{ip}$.
                Thus there is a path in $\CG{H}$ from $\ipc$ to $\dipc$ with $0$ hops.
	\item[Pro.~\ref{pro:rreq}, Line~\ref{rreq:line26}:]
                We have $\dipc:=\xi(\dip)$, $\hopsc:=\dhp[\dipc]{\dval{ip}}$ and $\ipc:=\xi(\ip)=\dval{ip}$.
		By Line~\ref{rreq:line22} there is a routing table entry
                \plat{$(\dipc,*,*,*,\hopsc,*,*)\mathop{\in}\xiN[N]{\dval{ip}}(\rt)$}.
                Hence by Invariant~(a), which we may assume to hold when using simultaneous
                induction, there is a path $\dval{ip}\rightarrow\cdots\rightarrow\dipc$
                in $\CG{H}$ from $\dval{ip}=\ipc$ to $\dipc$ with $\hopsc$ hops.
	\item[Pro.~\ref{pro:rrep}, Line~\ref{rrep:line13}:]
		The RREP message has the form
		$\xi(\rrep{\hops\mathop{+}1}{\dip}{\dsn}{\oip}{\ip})$
                and the proof goes exactly as for Pro.~\ref{pro:rreq}, Line~\ref{rreq:line34} of Part (b),
                by using $\dipc:=\xi(\dip)$ instead of $\oipc:=\xi(\oip)$, and an incoming RREP
                message instead of an incoming RREQ message.
\endbox
\end{description}
\end{enumerate}
\end{proofNobox}
\Thm{route correctness}(a) says that the AODV protocol is route correct.
For the proof it is essential that we use the version of \awn were a node $\dval{ip}'$ is in
the range of node $\dval{ip}$, meaning that $\dval{ip}'$ can receive
messages sent by $\dval{ip}$, if and only if $\dval{ip}$ is in the range of $\dval{ip}'$.
If \awn is modified so as to allow asymmetric connectivity graphs, as indicated in \SSect{networks},
it is trivial to construct a 2-node counterexample to route correctness.

A stronger concept of route correctness requires that for each
$(\dval{dip},*,*,*,\dval{hops},\dval{nhip},*)\mathop{\in}\xiN{\dval{ip}}(\rt)$
\begin{itemize}\vspace{-0.5ex}
\item either $\dval{hops}=0$ and $\dval{dip}=\dval{ip}$,\vspace{-1ex}
\item or $\dval{hops}=1$ and $\dval{dip}=\dval{nhip}$ and there is a $N^\dagger$ in $H$ such that
  ${\dval{nhip}}\mathbin\in \RN[N^\dagger]{\dval{ip}}$,\vspace{-1ex}
\item or $\dval{hops}\mathbin>1$ and there is a $N^\dagger$ in $H$ with \plat{${\dval{nhip}}\mathbin\in \RN[N^\dagger]{\dval{ip}}$}
and \plat{$(\dval{dip},*,*,\val,\dval{hops}\mathord-1,*,*)\mathbin\in\xiN[N^\dagger]{\dval{nhip}}(\rt)$}.\vspace{-0.5ex}
\end{itemize}
It turns out that this stronger form of route correctness does not hold for AODV\@.

\subsection{Further Properties}
We conclude this section by proving a few more properties of AODV; these will be used
  later in the paper and/or shed some light on how AODV operates.
\subsubsection{Queues}
\begin{prop}\rm A node $\dval{ip}\in\IP$ never queues data packets intended for itself.
\begin{equation}\label{eq:inv_v}
\dval{ip}\not\in\qD{\xiN{\dval{ip}}(\queues)}
\end{equation}
\end{prop}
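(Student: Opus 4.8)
The plan is to prove this by the standard invariant technique established throughout the section (cf.\ Remark~\ref{rem:remark}): verify the property in all initial states, then check that it is preserved by every transition that could modify the relevant data. Here the relevant data is the store $\xiN{\dval{ip}}(\queues)$ together with the node's own identity $\xiN{\dval{ip}}(\ip)=\dval{ip}$ (the latter fixed by \Prop{self-identification}). In every initial state the store is empty (see \eqref{eq:initialstate_rt} in \SSect{initial}), so $\qD{\xiN{\dval{ip}}(\queues)}=\emptyset$ and the invariant holds trivially.

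For the inductive step I would first observe that the set of queued destinations $\qD{\xiN{\dval{ip}}(\queues)}$ can only be enlarged by the function \hyperlink{add}{$\fnadd$}; the functions \hyperlink{drop}{$\fndrop$}, \hyperlink{setrrf}{$\fnsetrrf$} and \hyperlink{setrrf}{$\fnunsetrrf$} either remove destinations or leave $\qD{\queues}$ unchanged (they only delete packets or toggle the \penFlag), so they cannot introduce a violating entry. Hence it suffices to inspect every call of $\fnadd$ in the specification. In our model $\fnadd$ is used only in \Pro{newpkt}, Line~\ref{newpkt:line5}, where a data packet is added to the queue for destination $\xi(\dip)$. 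The guard in the preceding Line~\ref{newpkt:line4} is $\dip\mathbin{\not=}\ip$, so by \Prop{self-identification} the destination $\xi(\dip)$ added to $\qD{\queues}$ is different from $\dval{ip}$. Therefore the newly queued destination is not $\dval{ip}$, and since by induction no pre-existing queued destination equals $\dval{ip}$, the invariant is maintained.

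The main thing to get right is the bookkeeping of exactly which operations touch the store and which of them can enlarge $\qD{\queues}$, rather than any substantive difficulty; the argument itself is routine once the relevant line is identified. I would therefore expect the only real obstacle to be confirming that $\fnadd$ is genuinely the sole way a destination enters $\qD{\queues}$ (as opposed to being silently introduced by some other update), and that the guard $\dip\mathbin{\not=}\ip$ at Line~\ref{newpkt:line4} is correctly combined with $\xiN{\dval{ip}}(\ip)=\dval{ip}$ to conclude $\xi(\dip)\neq\dval{ip}$. The data packet received by $\NEWPKT$ comes from a $\newpktID$ message injected by a client, so there is no constraint forcing $\dip$ to differ from $\ip$ a priori; the guard is exactly what rules out the self-destined case before enqueuing, which is the crux of the argument.
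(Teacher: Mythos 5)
Your proposal is correct and follows essentially the same argument as the paper's proof: verify the empty-store initial states, note that \Pro{newpkt}, Line~\ref{newpkt:line5} is the only place a new destination enters the store, and use the guard $\dip\mathbin{\not=}\ip$ of Line~\ref{newpkt:line4} together with $\xiN{\dval{ip}}(\ip)=\dval{ip}$ to rule out self-destined entries. Your additional bookkeeping about $\fndrop$, $\fnsetrrf$ and $\fnunsetrrf$ not enlarging $\qD{\queues}$ is a harmless elaboration of what the paper states implicitly.
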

\begin{proof} We first show the claim for the initial states;
afterwards we go through our specification (step by step) and look at all locations
where the store of an arbitrary node $\dval{ip}\in\IP$ can be changed.

In an initial network expression all sets of queued data are empty.
There is only one place where a new destination is added to \queues,
namely Pro.~\ref{pro:newpkt}, Line~\ref{newpkt:line5}.  Here,
$\xi(\dip)$ is added as new queued destination. However,
Line~\ref{newpkt:line4} shows that $\xi(\dip)\not=\xi(\ip)$.
\end{proof}

\subsubsection{Route Requests and RREQ IDs}
A transition $N\ar{R:\starcastP{\rreq{*}{\dval{rreqid}}{\dval{dip}}{*}{*}{\dval{oip}}{\dval{osn}}{*}}}N'$
that stems from \Pro{aodv}, Line~\ref{aodv:line39} marks the
initiation of a \emph{new} route request. Each such transition that
stems from \Pro{rreq}, Line~\ref{rreq:line34}, which
is the only alternative, marks the \phrase{forwarding} of a route request.
In this case, the variables {\rreqid}, {\dip}, {\oip} and {\osn},
which supply the values \dval{rreqid}, \dval{dip}, \dval{oip} and
\dval{osn}, get these values in \Pro{aodv}, Line~\ref{aodv:line8};
nowhere else is the value of these variables set or changed.
Hence the values mentioned are copied directly from another RREQ
message, read in \Pro{aodv}, Line~\ref{aodv:line2}.
By \Prop{preliminaries}(\ref{before}), this message has to be sent
before; and this is the message that is forwarded.
\index{route request (RREQ)}%
Now a \emph{route request} can be defined as an equivalence class of
route request messages (transitions in our operational semantics),
namely by considering a forwarded RREQ message to belong to the same
route request as the message being forwarded.

\begin{prop}\label{prop:invarianti_itemi}\rm
A route request is uniquely determined by the pair
$(\dval{oip}, \dval{rreqid})$ of the originator IP address and its
route request identifier.\end{prop}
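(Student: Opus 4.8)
The plan is to prove this by induction on reachability, tracking how route request messages are generated and forwarded. The essential observation, already established in the discussion preceding the proposition, is that every route request message (transition) arises in exactly one of two ways: either it is \emph{initiated} via \Pro{aodv}, Line~\ref{aodv:line39}, or it is \emph{forwarded} via \Pro{rreq}, Line~\ref{rreq:line34}. A route request, as an equivalence class of such messages, therefore has a unique ``root'' initiating transition, and all other messages in the class are obtained by a chain of forwardings, each copying $\dval{oip}$ and $\dval{rreqid}$ verbatim from the message being forwarded (the values pass unchanged through \Pro{aodv}, Line~\ref{aodv:line8} into the variables {\oip} and {\rreqid}).

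First I would reduce the claim to a statement about initiating transitions: since forwarding preserves the pair $(\dval{oip},\dval{rreqid})$ along an entire route request, it suffices to show that two \emph{distinct} initiating transitions always generate distinct pairs $(\dval{oip},\dval{rreqid})$. I would then examine \Pro{aodv}, Line~\ref{aodv:line39}, where a new request is generated. Here $\dval{oip}=\xiN{\dval{ip}}(\ip)=\dval{ip}$ by \Prop{self-identification}, so the originator IP address equals the address of the initiating node. Thus two initiating transitions at \emph{different} nodes automatically yield different values of $\dval{oip}$, and the pairs differ. It remains to rule out collisions between two initiating transitions at the \emph{same} node $\dval{ip}$.

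For two initiations at the same node, I would argue using the freshness of the RREQ ID. The identifier is produced by $\nrreqid{\xiN{\dval{ip}}(\rreqs)}{\dval{ip}}$ and simultaneously the generated pair $(\dval{ip},\dval{rreqid})$ is recorded in the set $\xiN{\dval{ip}}(\rreqs)$ (\Pro{aodv}, Line~\ref{aodv:line38b}). By the definition of \hyperlink{nrreqid}{$\fnnrreqid$}, the returned identifier is $\max\{n\mid(\dval{ip},n)\in\rreqs\}+1$, which is strictly greater than any $n$ with $(\dval{ip},n)$ already in $\rreqs$. Combined with \Prop{rreqs increase}, which guarantees that $\xiN{\dval{ip}}(\rreqs)$ only grows over time, this shows that once a pair $(\dval{ip},\dval{rreqid})$ has been generated it stays in $\rreqs$ forever, and any subsequent initiation at $\dval{ip}$ must use a strictly larger identifier. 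Hence no two initiating transitions at the same node can produce the same $\dval{rreqid}$, completing the argument.

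The main obstacle I anticipate is not any single calculation but rather making the equivalence-class formulation fully precise and verifying that the two-way case analysis of transitions is genuinely exhaustive---that every $\starcastP{\rreq{\cdots}{}{}{}{}{}{}{}}$ transition is accounted for by one of Lines~\ref{aodv:line39} and~\ref{rreq:line34}, and that the forwarding relation used to define route requests really does preserve $(\dval{oip},\dval{rreqid})$. This bookkeeping has already been largely discharged in the text introducing the proposition, so the remaining work is to assemble \Prop{self-identification}, \Prop{rreqs increase}, and the strict-increase property of \hyperlink{nrreqid}{$\fnnrreqid$} into the uniqueness conclusion.
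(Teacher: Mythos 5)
Your proposal is correct and follows essentially the same route as the paper's own proof: reduce to initiating transitions (since forwarding copies $(\dval{oip},\dval{rreqid})$ verbatim), then derive freshness of the pair from the definition of \hyperlink{nrreqid}{$\fnnrreqid$}, the recording of the pair in $\rreqs$ at \Pro{aodv}, Line~\ref{aodv:line38b}, and the monotonicity of $\rreqs$ (\Prop{rreqs increase}). Your explicit case split between initiations at different nodes (via \Prop{self-identification}) and at the same node is a minor elaboration that the paper leaves implicit, but the substance is identical.
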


\begin{proof}
  As argued above, each forwarded RREQ message carries the same
  pair $(\dval{oip}, \dval{rreqid})$ as the message being forwarded.
  It remains to show that each new route request is initiated with a
  different pair $(\dval{oip}, \dval{rreqid})$.

The broadcast message id is determined by the function \hyperlink{nrreqid}{$\fnnrreqid$}.
At the initial state the function $\fnnrreqid$ will return $1$, since $\rreqs(\ip)$ is empty.
If a new id---determined by the function $\fnnrreqid$---is used by a node \dval{ip},
the id is also added to $\xi^\dval{ip}_N(\rreqs)$ (Pro.~\ref{pro:aodv}, Line~\ref{aodv:line38b}).
By \Prop{rreqs increase}, this id will never be deleted from $\xi(\rreqs)$.
Therefore, whenever the function $\fnnrreqid$ is called afterwards by the same node, the
return value will be strictly higher. In fact it will be increased by $1$ each time a new request is sent.
It follows that for each route request the pair $(\dval{oip}, \dval{rreqid})$ is unique.
\end{proof}
This pair $(\dval{oip}, \dval{rreqid})$ is stored in the local variables
$\rreqs$ maintained by each node that encounters the route request.

The following proposition paves the way for the conclusion that the role of the
  component $\dval{rreqid}$ in route request messages could just as well be taken over by the
  existing component $\dval{osn}$ of these messages.

\begin{prop}\label{prop:messagebroadcast}\rm~
\begin{enumerate}[($a$)]
\item A node's sequence number is greater than  its last used RREQ id, i.e.,
\[
\xiN{\dval{ip}}(\sn)  > \rreqid_{N}^{\dval{ip}}\ ,
\]
where
$\rreqid_{N}^{\dval{ip}}:=\max\{n\mid(\dval{ip},n)\in\xiN{\dval{ip}}(\rreqs)\}$and the maximum of the empty set is defined to be $0$.
\item A route request is uniquely determined by
the combination of $\dval{osn}$ and $\dval{oip}$.
\end{enumerate}
\end{prop}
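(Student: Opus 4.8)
The plan is to prove both parts by induction on reachability, in the style of the preceding invariant proofs, relying on \Prop{invarianti_itemiii} (the sequence number increases monotonically and is at least~$1$), \Prop{rreqs increase} (the set $\rreqs$ increases monotonically), \Prop{preliminaries} and \Prop{invarianti_itemi} (a route request is uniquely determined by $(\dval{oip},\dval{rreqid})$).

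For part~(a) I would verify the invariant $\xiN{\dval{ip}}(\sn) > \rreqid_N^{\dval{ip}}$ at the initial states---there $\xi(\sn)=1$ and $\xi(\rreqs)=\emptyset$, so the right-hand side is $\max\emptyset = 0 < 1$---and then inspect every location that changes either $\sn$ or $\rreqs$ of the node $\dval{ip}$ under consideration. By \Prop{invarianti_itemiii} the value of $\sn$ changes only at \Pro{aodv}, Line~\ref{aodv:line36} and \Pro{rreq}, Line~\ref{rreq:line12}, and only upward; the increase at Line~\ref{rreq:line12} leaves $\rreqs$ untouched and hence preserves the strict inequality. The set $\rreqs$ is enlarged only at \Pro{aodv}, Line~\ref{aodv:line38b} and \Pro{rreq}, Line~\ref{rreq:line8}. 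At Line~\ref{rreq:line8} the pair $(\xi(\oip),\xi(\rreqid))$ is added; I would argue that $\xi(\oip)\neq\dval{ip}$ here, so that $\rreqid_N^{\dval{ip}}$---the maximum taken only over pairs whose first component is $\dval{ip}$---is unaffected. Indeed, by the tracing-back reasoning in the proof of \Prop{invarianti_itemi}, any received RREQ with originator $\dval{ip}$ was initiated by $\dval{ip}$ itself, which recorded $(\dval{ip},\xi(\rreqid))$ in its $\rreqs$ at Line~\ref{aodv:line38b}; by \Prop{rreqs increase} that pair is still present, so the guard $(\oip,\rreqid)\notin\rreqs$ of Line~\ref{rreq:line4} fails and Line~\ref{rreq:line8} is never reached with $\oip=\dval{ip}$.

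The decisive and only genuinely delicate case is the creation of a fresh identifier at Line~\ref{aodv:line38b}, where $\nrreqid{\rreqs}{\ip}$ returns $\rreqid_N^{\dval{ip}}+1$, so the right-hand side jumps by one. The subtlety is that, if Lines~\ref{aodv:line36} and~\ref{aodv:line38b} are read as two separate transitions, the induction hypothesis taken at the intermediate state (after the increment) only yields $\sn \geq \rreqid_N^{\dval{ip}}+1$, which is too weak once the identifier jumps. The fix I would use is to treat these two assignments together: writing $\xi$ for the valuation of the node just before Line~\ref{aodv:line36} (no transition of $\dval{ip}$, and hence no change of $\dval{ip}$'s $\sn$ or $\rreqs$, interleaves in between, and Line~\ref{aodv:line35} touches only $\queues$), the induction hypothesis gives $\xi(\sn) > \rreqid_N^{\dval{ip}}$. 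After both lines one has $\sn = \inc{\xi(\sn)} = \xi(\sn)+1$ (using $\xi(\sn)\geq 1$ from \Prop{invarianti_itemiii}) while the new maximal identifier equals $\rreqid_N^{\dval{ip}}+1$; hence $\xi(\sn)+1 > \rreqid_N^{\dval{ip}}+1$, exactly as required.

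For part~(b) I would build on \Prop{invarianti_itemi} and the equivalence-class view of a route request defined just before the proposition, reducing the claim to injectivity of the map sending each route request to its $(\dval{oip},\dval{osn})$ pair. First, $\dval{osn}$ is constant within a single route request: on forwarding (\Pro{rreq}, Line~\ref{rreq:line34}) the value $\osn$ is copied unchanged from the message being forwarded, exactly as $\rreqid$ is (it originates at \Pro{aodv}, Line~\ref{aodv:line8} and is not altered in between). Second, two distinct route requests sharing an originator carry distinct $\dval{osn}$: each is initiated at \Pro{aodv}, Line~\ref{aodv:line39} with $\osn$ equal to the originator's sequence number right after the increment at Line~\ref{aodv:line36}; since that increment strictly raises $\sn$ (as $\inc{n}=n+1$ for $n\geq 1$) and $\sn$ never decreases afterwards (\Prop{invarianti_itemiii}), successive initiations by the same node use strictly increasing, hence pairwise distinct, values of $\osn$. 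Combining the two observations, two RREQ messages agree on $(\dval{oip},\dval{osn})$ precisely when they belong to the same route request. The main obstacle throughout is the strictness bookkeeping in part~(a)---ensuring the one-step jump of the identifier at Line~\ref{aodv:line38b} is strictly dominated by the immediately preceding increment of the sequence number---together with the need to exclude a node recording its own originator IP at Line~\ref{rreq:line8}.
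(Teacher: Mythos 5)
Your proof is correct and takes essentially the same approach as the paper's: check the initial state, observe that a route-request initiation increments the sequence number first and then raises the maximal RREQ id by exactly one, and note that forwarding copies $\dval{oip}$, $\dval{osn}$ and $\dval{rreqid}$ unchanged. You are in fact more explicit than the paper on two points its terse argument glosses over---that Line~\ref{rreq:line8} of \Pro{rreq} can never add a pair with first component $\dval{ip}$ (so $\rreqid_{N}^{\dval{ip}}$ indeed changes nowhere else), and that Lines~\ref{aodv:line36} and~\ref{aodv:line38b} of \Pro{aodv} must be treated jointly for the strict inequality to survive the jump of the identifier.
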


\begin{proofNobox} ~
\begin{enumerate}[($a$)]
\item In the initial state $\xiN{\dval{ip}}(\sn)=1$ and $\rreqid_{N}^{\dval{ip}}=0$.
Both numbers are increased by $1$ if a route request is initiated; the {\sn} is increased first.
$\rreqid_{N}^{\dval{ip}}$ is not changed elsewhere; however, when generating
a route reply $\xiN{\dval{ip}}(\sn)$ might be increased
(cf.\ Pro.~\ref{pro:rreq}, Line~\ref{rreq:line12}).
\item When a route request is initiated, the value of the component $\dval{osn}$ in
  the initial RREQ message equals the (newly incremented) current value of $\sn$
  maintained by the initiating node, just like the component $\dval{rreqid}$ in
  the initial RREQ message equals the (newly incremented) current value of $\rreqid_{N}^{\dval{ip}}$
  of the initiating node.
  Now the statement follows
  since the value of $\sn$ is increased whenever a route request
  is initiated
  and $\dval{osn}$ and $\dval{oip}$ are passed on unchanged when
  forwarding a route request, just like $\dval{rreqid}$ and $\dval{oip}$.
  \hfill\mbox{\endbox}
\end{enumerate}
\end{proofNobox}

\noindent The following proposition states three properties about sending a route request. 
\begin{prop}\label{prop:starcast}\rm~
\begin{enumerate}[(a)]
\item\label{it:starcastii} If a route request is sent by a node $\ipc\in\IP$, the sender has stored the unique pair
  of the originator's IP address and the request id.
\begin{equation}\label{inv:starcast_iii}
	N\ar{R:\starcastP{\rreq{*}{\rreqidc}{*}{*}{*}{\oipc}{*}{\ipc}}}_{\dval{ip}}N' \ims (\oipc,\rreqidc)\in\xiN{\ipc}(\rreqs)
	\end{equation}
\item  If a route request is sent, the originator has stored the unique pair
  of the originator's IP address and the request id.
\begin{equation}\label{inv:starcast_rreqid}
	N\ar{R:\starcastP{\rreq{*}{\rreqidc}{*}{*}{*}{\oipc}{*}{*}}}_{\dval{ip}}N' \ims (\oipc,\rreqidc)\in\xiN{\oipc}(\rreqs)
	\end{equation}
\item\label{it:starcastiii} The sequence number of an originator appearing in a route request can never be greater than the
 	originator's own sequence number.
\begin{equation}\label{inv:starcast_v}
	  N\ar{R:\starcastP{\rreq{*}{*}{*}{*}{*}{\oipc}{\osnc}{*}}}_{\dval{ip}}N'
          \ims \osnc\leq\xiN{\oipc}(\sn)
	\end{equation}
\end{enumerate}
\end{prop}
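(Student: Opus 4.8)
The plan is to establish all three parts simultaneously by \hyperlink{induction-on-reachability}{induction on reachability}, in the same style as the earlier star-cast invariants \Prop{msgsendingii} and \Prop{msgsending}. A route request is cast in exactly two places: Pro.~\ref{pro:aodv}, Line~\ref{aodv:line39} (the initiation of a fresh RREQ) and Pro.~\ref{pro:rreq}, Line~\ref{rreq:line34} (the forwarding of one). So for each part it suffices to verify the consequent at these two locations, assuming the invariants already hold for all prior transitions; at initiation we additionally note $\ipc=\oipc=\dval{ip}$, whereas at forwarding $\ipc=\dval{ip}$ but $\oipc$ may be a different node.

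For part~(\ref{it:starcastii}), at the initiation we have $\ipc=\oipc=\dval{ip}$, and the fresh pair $(\ip,\nrreqid{\rreqs}{\ip})$ is inserted into $\rreqs$ in Line~\ref{aodv:line38b}, one line before the broadcast; hence $(\oipc,\rreqidc)\in\xiN{\ipc}(\rreqs)$. At the forwarding in Pro.~\ref{pro:rreq}, the pair $(\oipc,\rreqidc)=(\xi(\oip),\xi(\rreqid))$ is the unique identifier of the incoming request: the guard in Line~\ref{rreq:line4} guarantees it was not yet in $\rreqs$, and Line~\ref{rreq:line8} adds it, after which neither the values of $\oip$, $\rreqid$ nor their membership in $\rreqs$ are touched before Line~\ref{rreq:line34}. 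So again $(\oipc,\rreqidc)\in\xiN{\ipc}(\rreqs)$.

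For parts~(b) and~(\ref{it:starcastiii}) the initiation case is immediate: there $\ipc=\oipc=\dval{ip}$, so~(b) follows from~(\ref{it:starcastii}), and for~(\ref{it:starcastiii}) the value $\osnc$ is a copy of the own sequence number (just incremented in Line~\ref{aodv:line36}), giving $\osnc=\xiN{\oipc}(\sn)$. The interesting case is forwarding, where the forwarded message carries the same $\oipc$, $\rreqidc$ and $\osnc$ as the received one. By \Prop{preliminaries}\eqref{it:preliminariesi} that received message must itself have been $\starcastP{}$ before, say in a prior state $N^\dagger$. Applying the induction hypothesis for~(b), resp.\ (\ref{it:starcastiii}), to that earlier cast yields $(\oipc,\rreqidc)\in\xiN[N^\dagger]{\oipc}(\rreqs)$, resp.\ $\osnc\leq\xiN[N^\dagger]{\oipc}(\sn)$. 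Monotonicity then carries these facts forward along the path from $N^\dagger$ to the current state: iterating \Prop{rreqs increase} gives $\xiN[N^\dagger]{\oipc}(\rreqs)\subseteq\xiN{\oipc}(\rreqs)$ for~(b), and iterating \Prop{invarianti_itemiii} gives $\xiN[N^\dagger]{\oipc}(\sn)\leq\xiN{\oipc}(\sn)$ for~(\ref{it:starcastiii}).

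The only real care needed---and the closest thing to an obstacle---is the bookkeeping of valuations across line numbers: one must check that the values bound to $\oip$, $\rreqid$ and $\osn$ are not altered between the point where they enter the node (Pro.~\ref{pro:aodv}, Line~\ref{aodv:line8}, reading the message received in Line~\ref{aodv:line2}) and the point where the forwarded request is cast (Pro.~\ref{pro:rreq}, Line~\ref{rreq:line34}), and likewise that the pair recorded in $\rreqs$ at Line~\ref{rreq:line8} is never removed. This is exactly the reasoning made explicit by the notation of \SSect{transition invariants}, where each $\xi$ is read most locally at its own line number; with that convention in place the variable-tracking is routine and the three invariants drop out.
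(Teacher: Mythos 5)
Your proposal is correct and follows essentially the same route as the paper's own proof: checking the two cast locations (Pro.~\ref{pro:aodv}, Line~\ref{aodv:line39} and Pro.~\ref{pro:rreq}, Line~\ref{rreq:line34}), using the insertion into $\rreqs$ at Lines~\ref{aodv:line38b}/\ref{rreq:line8} for part~(a), and for parts~(b) and~(c) combining \Prop{preliminaries}\eqref{it:preliminariesi} with \hyperlink{induction-on-reachability}{induction on reachability} and the monotonicity results (\Prop{rreqs increase}, \Prop{invarianti_itemiii}). The only differences are cosmetic: you derive the initiation case of~(b) from~(a) rather than repeating the argument, and you make explicit the appeal to \Prop{rreqs increase} that the paper leaves implicit.
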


\begin{proofNobox}
We have to check that the consequent holds whenever a route request is sent. In all the processes there
are only two locations where this happens, namely \Pro{aodv}, Line~\ref{aodv:line39}
and {Pro.~\ref{pro:rreq}, Line~\ref{rreq:line34}}.
\begin{enumerate}[(a)]
\item
\begin{description}
\item[\Pro{aodv}, Line~\ref{aodv:line39}:]
        A request with content
		$
		\xi(*\comma\rreqid\comma*\comma*\comma*\comma\ip\comma*\comma\ip)
		$
		 is sent.
        So
        $\ipc :=\xi(\ip)$, $\oipc:=\xi(\ip)$ and $\rreqidc:=\xi(\rreqid)$.
        Hence, using~\eqref{eq:uniqueidwithxi}, \ \plat{$\xiN{\ipc}=\xiN{\dval{ip}}= \xi$}.
	Right before broadcasting the request,
        $(\xi(\ip),\xi(\rreqid))$ is added to the set
        $\xi(\rreqs)$.
\item[Pro.~\ref{pro:rreq}, Line~\ref{rreq:line34}:]
	The information $(\xi(\oip),\xi(\rreqid))$  is added to $\xi(\rreqs)$ at Line~\ref{rreq:line8}.
	Moreover, the set of handled requests $\xi(\rreqs)$ as well as
	the values of $\oip$ and $\rreqid$ do not change between
	Line~\ref{rreq:line8} and \ref{rreq:line34}.
        Again \plat{$(\oipc,\rreqidc)\mathbin=(\xi(\oip),\xi(\rreqid))\in\xi(\rreqs)\mathbin=\xiN{\ipc}(\rreqs)$}.
\end{description}
\item 
\begin{description}
\item[\Pro{aodv}, Line~\ref{aodv:line39}:]
        A request with content
		$
		\xi(*\comma\rreqid\comma*\comma*\comma*\comma\ip\comma*\comma\ip)
		$
		 is sent.
        So $\oipc:=\xi(\ip)$ and hence, by \eqref{eq:uniqueidwithxi}, \plat{$\xiN{\oipc}= \xi$}.
        Moreover, $\rreqidc:=\xi(\rreqid)$.
	Right before broadcasting the request, the pair
        $(\xi(\ip),\xi(\rreqid))$ is added to the set
        $\xi(\rreqs)$.
\item[Pro.~\ref{pro:rreq}, Line~\ref{rreq:line34}:]
        A request with content $\xi(*\comma\rreqid\comma*\comma*\comma*\comma\oip\comma*\comma*)$ is sent.
        The values of the variables $\rreqid$ and $\oip$ do not change in
        Pro.~\ref{pro:rreq}; they stem, through Line~\ref{aodv:line8}
	of Pro.~\ref{pro:aodv}, from an incoming RREQ message
	(Pro.~\ref{pro:aodv}, Line~\ref{aodv:line2}).
        Now the claim follows immediately from
        the fact the each RREQ message received, 
	has been sent by some node (\Prop{preliminaries}\eqref{it:preliminariesi}), and 
	\hyperlink{induction-on-reachability}{induction on reachability}.
	\end{description}
\item
\begin{description}
\item[\Pro{aodv}, Line~\ref{aodv:line39}:]
	The sender is the originator, so
	$\oipc:=\xi(\ip)=\dval{ip}$ and $\osnc:=\xi(\sn)$. By \eqref{eq:uniqueidwithxi},
	 $\xiN{\oipc}=\xi$, which immediately implies $\osnc:=\xiN{\oipc}(\sn)$.
\item[Pro.~\ref{pro:rreq}, Line~\ref{rreq:line34}:]
        Here $\oipc:=\xi(\oip)$ and $\osnc:=\xi(\osn)$.
        The values of the variables $\oip$ and $\osn$ do not change in
        Pro.~\ref{pro:rreq}; they stem from Line~\ref{aodv:line8} of Pro.~\ref{pro:aodv}.
        By \Prop{preliminaries}\eqref{it:preliminariesi}, a transition
        labelled $\colonact{R}{\starcastP{\rreq{*}{*}{*}{*}{*}{\oipc}{\osnc}{*}}}$
        must have occurred before, say in state $N^\dagger$.  
        Thus, by induction and \Prop{invarianti_itemiii},
        \plat{$\osnc\leq \xi_{N^{{\dagger}}}^{\oipc}(\sn) \leq \xi_{N}^{\oipc}(\sn)$}.
	\endbox
\end{description}
\end{enumerate}
\end{proofNobox}

\subsubsection{Routing Table Entries}
\begin{prop}\rm\label{prop:dsn}~
\begin{enumerate}[(a)]
\item The sequence number of a destination appearing in a route reply can never be greater than the
  destination's own sequence number.
  \begin{equation}\label{eq:dsn_rrep}
    N\ar{R:\starcastP{\rrep{*}{\dipc}{\dsnc}{*}{*}}}_{\dval{ip}}N'
    \ims \dsnc\leq\xiN{\dipc}(\sn)
  \end{equation}
\item A known destination sequence number of a valid routing table entry can never be greater than the
  destination's own sequence number.
  \begin{equation}\label{eq:dsn}
    (\dval{dip},\dval{dsn},\kno,\val,*,*,*)\in\xiN{\dval{ip}}(\rt)
    \ims \dval{dsn}\leq\xiN{\dval{dip}}(\sn)
  \end{equation}
\end{enumerate}
\end{prop}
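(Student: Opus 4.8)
The plan is to prove both statements simultaneously by induction on reachability, exactly in the style of the preceding propositions, since part (b) for valid routing-table entries is fed by the route replies of part (a), and conversely a reply sent by a node copies information from that node's routing table. First I would handle part (a). A route reply is cast only at three locations: Pro.~\ref{pro:rreq}, Line~\ref{rreq:line14} (the destination itself replies), Pro.~\ref{pro:rreq}, Line~\ref{rreq:line26} (an intermediate node replies from its table), and Pro.~\ref{pro:rrep}, Line~\ref{rrep:line13} (a node forwards a reply). In the first case $\dipc=\xi(\ip)=\dval{ip}$ and $\dsnc=\xi(\sn)=\xiN{\dipc}(\sn)$, so the claim is immediate. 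In the second case $\dsnc=\sq[\dipc]{\dval{ip}}$ is a known, valid destination sequence number in the sender's routing table, so part (b)---available by simultaneous induction---gives $\dsnc\leq\xiN{\dipc}(\sn)$. In the third (forwarding) case $\dsnc=\xi(\dsn)$ is copied unchanged from an incoming RREP message that, by \Prop{preliminaries}(\ref{it:preliminariesi}) and induction on reachability, was cast earlier with the same $\dipc$ and $\dsnc$; the earlier instance of Invariant~\eqref{eq:dsn_rrep} gives $\dsnc\leq\xiN[N^\dagger]{\dipc}(\sn)$, and \Prop{invarianti_itemiii} (monotonicity of own sequence numbers) lifts this to the present state.

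Next I would prove part (b) by checking all modifications to a routing table, using Remark~\ref{rem:remark}: only \hyperlink{update}{$\fnupd$}, \hyperlink{invalidate}{$\fninv$} and \hyperlink{addprert}{$\fnaddprecrt$} change a table. The function $\fnaddprecrt$ touches only precursors, so it is harmless. For $\fninv$ the crucial observation is that invalidation sets the validity flag to $\inval$, so any entry it modifies no longer satisfies the antecedent $(\dval{dip},\dval{dsn},\kno,\val,*,*,*)$; entries it leaves untouched keep their old (already-established) bound, and the destination's own sequence number never decreases by \Prop{invarianti_itemiii}. Hence only the $\fnupd$ calls that produce a \emph{valid, known} entry need real work: Pro.~\ref{pro:aodv}, Lines~\ref{aodv:line10},~\ref{aodv:line14},~\ref{aodv:line18} insert an entry with sequence-number-status flag $\unkno$, which again fails the antecedent; the substantive cases are Pro.~\ref{pro:rreq}, Line~\ref{rreq:line6} and Pro.~\ref{pro:rrep}, Line~\ref{rrep:line5}.

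For Line~\ref{rreq:line6} the inserted entry is $\xi(\oip,\osn,\kno,\val,\hops\mathord+1,\sip,\emptyset)$, whose destination is $\xi(\oip)$ and whose known sequence number is $\xi(\osn)$. This value stems from an incoming RREQ message that, by \Prop{preliminaries} and induction on reachability, was cast earlier in some state $N^\dagger$ as an originator sequence number $\osnc=\xi(\osn)$ with originator $\oipc=\xi(\oip)$. Invariant~\eqref{inv:starcast_v} of \Prop{starcast}(\ref{it:starcastiii}) gives $\xi(\osn)\leq\xiN[N^\dagger]{\xi(\oip)}(\sn)$, and \Prop{invarianti_itemiii} extends this to $\xiN{\xi(\oip)}(\sn)$, the current own sequence number of the destination $\xi(\oip)$; after the update $\dval{dsn}=\xi(\osn)$, so $\dval{dsn}\leq\xiN{\dval{dip}}(\sn)$ holds. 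The case Line~\ref{rrep:line5} is identical in structure, except the entry is to $\xi(\dip)$ with known sequence number $\xi(\dsn)$ taken from an incoming RREP, so instead of Invariant~\eqref{inv:starcast_v} I would invoke part (a), i.e.\ Invariant~\eqref{eq:dsn_rrep}, in its form at the earlier cast, again lifted by monotonicity of the destination's own sequence number.

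The main obstacle will be keeping the simultaneous induction airtight: part (b) in the RREP-update case depends on part (a) having already been secured for the \emph{earlier} cast that produced the incoming message, while part (a)'s intermediate-reply case depends on part (b) holding \emph{now} for the replying node's current valid entry. I would organise this as a single induction on reachability (in the sense of the paper's \hyperlink{induction-on-reachability}{induction on reachability}), so that every appeal to either invariant is to a strictly earlier transition occurrence or to the current state under the induction hypothesis, and both appeals to \Prop{invarianti_itemiii} correctly bridge the gap between the destination's own sequence number in the past state $N^\dagger$ and in the present.
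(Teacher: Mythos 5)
Your proposal is correct and follows essentially the same route as the paper's own proof: a simultaneous induction (on reachability) over both invariants, with the same three cast locations for part (a)---using part (b) for the intermediate-node reply and the earlier-cast instance of \eqref{eq:dsn_rrep} plus \Prop{invarianti_itemiii} for forwarding---and the same update cases for part (b), settled via Invariant~\eqref{inv:starcast_v} for the RREQ case and part (a) for the RREP case. The only differences are presentational: you spell out explicitly that $\fninv$ and $\fnaddprecrt$ are harmless, which the paper compresses into the remark that invalidated entries are not valid and that precursor-only changes can be ignored.
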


\begin{proofNobox}
We apply simultaneous induction to prove these invariants.
\begin{enumerate}[(a)]
\item We have to check that the consequent holds whenever a route reply is sent.
\begin{description}
	\item[Pro.~\ref{pro:rreq}, Line~\ref {rreq:line14}:]
		A route reply with sequence number $\dsnc:=\xiN{\dval{ip}}(\sn)$ is initiated.
		Moreover, by Line~\ref{rreq:line10}, $\dipc:=\xiN{\dval{ip}}(\dip)=\xiN{\dval{ip}}(\ip)=\dval{ip}$.
                So $\dsnc=\xiN{\dipc}(\sn)$.
	\item[Pro.~\ref{pro:rreq}, Line~\ref{rreq:line26}:]
		A route reply with $\dipc:=\xiN{\dval{ip}}(\dip)$ and
                $\dsnc:=\xiN{\dval{ip}}(\sqn{\rt}{\dip})=\sq[\dipc]{\dval{ip}}$ is initiated.
                By Line~\ref{rreq:line22} $\dsnc$ is a {\kno}wn sequence number, stemming from a valid entry for $\dipc$
                in the routing table of $\dval{ip}$. Hence by Invariant~\Eq{dsn}
                \plat{$\dsnc=\sq[\dipc]{\dval{ip}}\leq\xiN{\dipc}(\sn)$}.
	\item[Pro.~\ref{pro:rrep}, Line~\ref{rrep:line13}:]
		The RREP message has the form
		$\xiN{\dval{ip}}(\rrep{\hops\mathop{+}1}{\dip}{\dsn}{\oip}{\ip})$.
		Hence $\dipc:=\xiN{\dval{ip}}(\dip)$ and $\dsnc:=\xiN{\dval{ip}}(\dsn)$.
        The values of the variables $\dip$ and $\dsn$ do not change in
        Pro.~\ref{pro:rrep}; they stem, through Line~\ref{aodv:line12}
	of Pro.~\ref{pro:aodv}, from an incoming RREP message
	(Pro.~\ref{pro:aodv}, Line~\ref{aodv:line2}).
	By \Prop{preliminaries} this message was sent before, say by node \dval{sip} in state $N^\dagger$.
        By induction we have $\dsnc\leq\xiN[N^\dagger]{\dipc}(\sn)\leq\xiN{\dipc}(\sn)$, where the
        latter inequality is by \Prop{invarianti_itemiii}.
\end{description}
\item We have to examine all application calls of \hyperlink{update}{$\fnupd$}---entries resulting
  from a call of $\fninv$ are not valid. Moreover, without loss of generality we restrict
  attention to those applications of $\fnupd$ that actually
  modify the entry for \dval{dip}, beyond its precursors; if $\fnupd$
  only adds some precursors in the routing table, the invariant---which
  is assumed to hold before---is maintained.
\begin{description}
\item[Pro.~\ref{pro:aodv}, Lines~\ref{aodv:line10}, \ref{aodv:line14}, \ref{aodv:line18}:]
These calls yield entries with $\unkno$nown destination sequence numbers.
\item[Pro.~\ref{pro:rreq}, Line~\ref{rreq:line6}:] 
Here $\dval{dip}:=\xi(\oip)$ and $\dval{dsn}:=\xi(\osn)$. These values stem from an incoming RREQ
message, which must have been sent beforehand, say in state $N^\dagger$.
By Invariant~\eqref{inv:starcast_v}, with $\oipc:=\xi(\oip)=\dval{dip}$ and
$\osnc:=\xi(\osn)=\dval{dsn}$ we have
{$\dval{dsn}\leq\xiN[N^\dagger]{\dval{dip}}(\sn)\leq\xiN{\dval{dip}}(\sn)$}, where the
latter inequality is by \Prop{invarianti_itemiii}.
\item[Pro.~\ref{pro:rrep}, Line~\ref{rrep:line5}:]
Here $\dval{dip}:=\xi(\dip)$ and $\dval{dsn}:=\xi(\dsn)$. These values stem from an incoming RREP
message, which must have been sent beforehand, say in state $N^\dagger$.
By Invariant~\eqref{eq:dsn_rrep}, with $\dipc:=\xi(\dip)=\dval{dip}$ and
$\dsnc:=\xi(\dsn)=\dval{dsn}$ we have
{$\dval{dsn}\leq\xiN[N^\dagger]{\dval{dip}}(\sn)\leq\xiN{\dval{dip}}(\sn)$}.
\endbox
\end{description}
\end{enumerate}
\end{proofNobox}

\begin{prop}\label{prop:route to nhip}\rm
Whenever \dval{ip}'s routing table contains an entry with next hop $\dval{nhip}$, 
it also contains an entry for $\dval{nhip}$.\hspace{-5pt}
\begin{equation}
(*,*,*,*,*,\dval{nhip},*)\in\xiN{\dval{ip}}(\rt) \ims (\dval{nhip},*,*,*,*,*,*)\in\xiN{\dval{ip}}(\rt)
\end{equation}
\end{prop}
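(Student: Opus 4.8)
The plan is to prove this as a state invariant by \hyperlink{induction-on-reachability}{induction on reachability}, following the strategy of Remark~\ref{rem:remark}. First I would dispatch the initial states: by~\eqref{eq:initialstate_rt} every routing table starts out empty, so the antecedent is never satisfied and the invariant holds vacuously. Then, since a routing table is only ever modified through the functions \hyperlink{update}{$\fnupd$}, \hyperlink{invalidate}{$\fninv$} and \hyperlink{addprert}{$\fnaddprecrt$}, it suffices to check that each of these preserves the invariant.

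The functions $\fninv$ and $\fnaddprecrt$ are immediate: inspecting their definitions, neither changes the next hop $\pi_6$ of any entry, neither removes an entry, and neither introduces an entry with a new destination. Hence both the set of destinations and the set of next hops referenced by entries are unchanged, and the invariant---which speaks only about these two sets---carries over unchanged. For $\fnupd$ the argument splits in two parts. On the one hand, by \Prop{destinations maintained} the set of known destinations never shrinks, so any next-hop reference that was already satisfied before the update remains satisfied afterwards. On the other hand, I must verify, for each newly inserted or modified entry with next hop $\dval{nhip}$, that the resulting table still contains an entry for $\dval{nhip}$.

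For this last point I would go through all calls of $\fnupd$. At \Pro{aodv}, Lines~\ref{aodv:line10},~\ref{aodv:line14} and~\ref{aodv:line18} the new entry is $\xi(\sip,0,\unkno,\val,1,\sip,\emptyset)$, whose destination and next hop coincide (both $\xi(\sip)$); thus the entry required for the next hop is the freshly inserted entry itself. At \Pro{rreq}, Line~\ref{rreq:line6} the new entry has destination $\xi(\oip)$ and next hop $\xi(\sip)$; here I would argue that an entry for $\xi(\sip)$ already exists because it was created by the update at \Pro{aodv}, Line~\ref{aodv:line10}, immediately before $\RREQ$ was called, and by \Prop{destinations maintained} it has not been removed since. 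The case \Pro{rrep}, Line~\ref{rrep:line5} is analogous, with the entry for $\xi(\sip)$ supplied by \Pro{aodv}, Line~\ref{aodv:line14}.

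The one mild subtlety---and the closest thing to an obstacle---is confirming that the $\fnupd$ at Line~\ref{rreq:line6} (resp.~Line~\ref{rrep:line5}) cannot itself destroy the entry for $\xi(\sip)$ that I am relying on: $\fnupd$ replaces at most the entry whose destination is $\xi(\oip)$ (resp.~$\xi(\dip)$), so if that destination differs from $\xi(\sip)$ the entry for $\xi(\sip)$ is untouched, and if it equals $\xi(\sip)$ then the newly written entry is again an entry for $\xi(\sip)$. Either way the required target entry survives, so the invariant is maintained. This matches the reasoning already used implicitly in \Prop{addpreRT_well_defined} that a freshly created next-hop reference points to a destination for which an entry exists, so the whole argument is routine.
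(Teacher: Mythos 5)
Your proposal is correct and follows essentially the same route as the paper's proof: reduce to the calls of $\fnupd$ (the functions $\fninv$ and $\fnaddprecrt$ being harmless), observe that the entries written at Pro.~\ref{pro:aodv}, Lines~\ref{aodv:line10}, \ref{aodv:line14}, \ref{aodv:line18} have destination equal to next hop, and for Pro.~\ref{pro:rreq}, Line~\ref{rreq:line6} and Pro.~\ref{pro:rrep}, Line~\ref{rrep:line5} invoke the entry for $\xi(\sip)$ created just before in Pro.~\ref{pro:aodv}. The only cosmetic differences are that the paper cites Invariant~\eqref{eq:inv_vii} where you argue directly about the inserted entry, and that you spell out the (harmless) possibility that the update at Line~\ref{rreq:line6} overwrites the entry for $\xi(\sip)$ itself.
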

\begin{proofNobox}
As usual we only consider function calls of $\fnupd$ and assume that the update changes the routing table.
\begin{description}
\item[\Pro{aodv}, Lines~\ref{aodv:line10}, \ref{aodv:line14} and \ref{aodv:line18}:]
1-hop connections are inserted into the routing table. By Invariant~\eqref{eq:inv_vii}, the new entry has the form $(\dval{nhip},*,*,*,*,\dval{nhip},*)$. 
Therefore \dval{ip} has an entry for \dval{nhip}.
\item[\Pro{rreq}, Line~\ref{rreq:line6}:]
We assume that the entry $\xi(\oip,\osn,\kno,\val,\hops+1,\sip,*)$ is inserted into $\xi(\rt)$. So, $\dval{nhip} := \xi(\sip)$.
This information is distilled from a received route request message (cf.\ Lines~\ref{aodv:line2} and~\ref{aodv:line8} of Pro.~\ref{pro:aodv}).
Right after receiving the message, a route to $\xi(\sip)$ is created
or updated (Line~\ref{aodv:line10} of Pro.~\ref{pro:aodv}); hence an
entry for the next hop exists.
\item[\Pro{rrep}, Line~\ref{rrep:line5}:]
The update is similar to the one of \Pro{rreq}, Line~\ref{rreq:line6}. The only difference is that the information stems from an incoming RREP message and that a routing table entry to $\xi(\dip)$ (instead of $\xi(\oip)$) is established.\endbox
\end{description}
\end{proofNobox}

\newcommand{\fnupda}{\fnupd^{\RREP}}
\newcommand{\upda}[2]{\fnupda(#1\comma#2)}
\newcommand{\versions}{interpretations\xspace}
\newcounter{ambiguity}
\newcommand{\amb}{\refstepcounter{ambiguity}Ambiguity~\theambiguity}

\section{Interpreting the IETF RFC 3561 Specification}\label{sec:interpretation}
It is our belief that, up to the abstractions discussed in \Sect{abstractions}, the specification
presented in the previous sections reflects precisely the intention and the meaning of the IETF
RFC~\cite{rfc3561}. However, when formalising the AODV routing protocol, we came across some
ambiguities, contradictions and unspecified behaviour in the RFC\@. This is also reflected by the
fact that different implementations of AODV behave differently, although they all follow the lines
of the RFC\@. Of course a specification ``{\sf needs to be reasonably implementation
  independent\/}''\footnote{\url{http://www.ietf.org/iesg/statement/pseudocode-guidelines.html}} and
can leave some decisions to the software engineer; however it is our belief that any specification
should be clear and unambiguous enough to guarantee the same behaviour when given to different
developers.  As we will show, this is not the case for AODV\@.

In this section, we discuss and formalise many of the problematic behaviours found, as well as their
possible resolutions.  An \phrase{interpretation} of the RFC is given by the allocation of a
resolution to each of the ambiguities, contradictions and unspecified behaviours.  Each reading,
implementation, or formal analysis of AODV must pertain to one of its interpretations. The formal
specification of AODV presented in Sections~\ref{sec:types} and~\ref{sec:modelling_AODV} constitutes
one interpretation; the inventory of ambiguities and contradictions is formalised in
\SSect{interpreting} by specifying each resolution of each of the ambiguities and contradictions as
a modification of this formal specification, typically involving a rewrite of a few lines of code
only.  We also show which \versions give rise to routing loops or other unacceptable behaviour.
Beforehand, in \SSect{decreasingSQN}, we show how a decrease in the destination sequence number in a
routing table entry generally gives rise to unacceptable protocol behaviour; later on we use this
analysis to reject some of the possible interpretations of the RFC\@.  After we have presented the
ambiguities and their consequences, in \SSect{implementations} we briefly discuss five of the most
popular implementations of AODV\@ and demonstrate that the anomalies we discovered are not only
theoretically driven, but \emph{do} occur in practice. In particular, we show that three
implementations can yield routing loops.

\subsection{Decreasing Destination Sequence Numbers}\label{ssec:decreasingSQN}

In the RFC it is stated that a sequence number is
\begin{quote}\raggedright\small 
  ``{\tt
        A monotonically increasing number maintained by each originating node.}''\\ 
	\hfill\cite[Sect.~3]{rfc3561}
 \end{quote}
Based on this, it is tempting to assume
 that also any destination sequence
 number within a routing table entry should be
 increased monotonically. 
In fact this is also stated in the RFC: 
\hypertarget{6.1}{The sequence number for a particular destination
\begin{quote}\raggedright\small 
  ``{\tt is updated whenever a node receives new (i.e., not stale) information 
   about the sequence number from RREQ, RREP, or RERR messages that may 
   be received related to that destination. 
   {[\dots]} In order to ascertain that information about a destination is not 
   stale, the node compares its current numerical value for the sequence 
   number with that obtained from the incoming AODV message. 
   {[\dots]} If the result of 
   subtracting the currently stored sequence number from the value of 
   the incoming sequence number is less than zero, then the information 
   related to that destination in the AODV message MUST be discarded, 
   since that information is stale compared to the node's currently 
   stored information.}''
 	\hfill\cite[Sect.~6.1]{rfc3561}
 \end{quote}}
 This long-winded description simply says that all information distilled from any  AODV control message
that has a smaller sequence number for the destination under consideration, MUST be discarded. 
AODV should never decrease any destination sequence number, since this could create loops.
We illustrate this by \Fig{loopdec}.
 
\begin{figure}
\vspace{-2ex}
\begin{exampleFig}{Creating a loop when sequence numbers are decreased}{fig:loopdec}
\FigLine[slsr]%
  {The initial state;\\a connection between $d$ and $s$ has been established.}{fig/ex_loop_decrease_1}{}
  {Assumption:\\A sequence number inside $a$'s RT is decreased}{fig/ex_loop_decrease_2}{}
\FigLine[slsr]%
  {The topology changes;\\$a$ invalidates routes to $d$ and $s$.}{fig/ex_loop_decrease_3}{}
  {The topology changes again;\\$a$ broadcasts a new RREQ destined to $d$;\\node $s$ receives the RREQ and updates its RT.}{fig/ex_loop_decrease_4}{}
\FigLineHalf[sl]%
  {$s$ has information about a route to $d$;\\it unicasts a RREP back.\\$a$ updates its RT and creates a loop.}{fig/ex_loop_decrease_5}{}
\end{exampleFig}
\vspace{-3ex}
\end{figure}

Assume a linear topology with three nodes. 
In the past, node $d$ sent a request to establish a route to $s$. 
This RREQ message was answered by a RREP message of node $s$. After the route has been established,
the network is in the state of \Fig{loopdec}(a). In Part~(b) we assume that the sequence number of the 
routing table entry to $d$ of $a$'s routing table is decreased. Due to topology changes, node $a$ then 
looses connection to all neighbours and invalidates its routing table entries (Part (c)). In particular, it 
increments all sequence numbers of the routing table and sets the status flags to \inval. 
A possible error message sent by node $a$ is not received by any other node. 
After the link between $a$ and $s$ has appeared again, 
node $a$ wants to re-establish a route to $d$; it broadcasts  a new RREQ message (Part (d)).
The AODV control message  generated is $\rreq{0}{\dval{rreqid}}{d}{2}{\kno}{a}{2}{a}$, where 
$\dval{rreqid}$ is the unique id of the message. Since node $s$ has information about $d$, which is fresh enough, 
it generates the RREP message $\rrep{2}{d}{2}{a}{s}$ (Part (e)). 
Finally node $a$ receives the reply and establishes a route to $d$ via $a$. 
A loop has been created. 

Further on, we will discuss how sequence numbers might be
decreased when following the RFC literally or interpreting the RFC in
a wrong way.

\subsection{Interpreting the RFC}\label{ssec:interpreting}

In the following we discuss some ambiguities in the RFC, each giving rise to up to $6$ \versions
of AODV\@.
To resolve ambiguities, we often looked into real implementation, such as AODV-UU~\cite{AODVUU},
Kernel AODV~\cite{AODVNIST}
and AODV-UCSB~\cite{CB04} to determine the intended  version of AODV\@.
Additionally, we tried to derive unwanted behaviour from some of the possible interpretations.

\subsubsection[Updating Routing Table Entries]{\hypertarget{sss921}{Updating Routing Table Entries}}
\label{sssec:interpretation_update}
One of the crucial aspects of AODV is the maintenance of routing tables.
In this subsection we consider the update of routing table entries with new information.
In our specification we used the function \hyperlink{update}{$\fnupd$}
to specify the desired behaviour. Unfortunately, the RFC specification only gives hints 
how to update routing table entries; an exact and precise definition is missing.

\paragraph[Updating the Unknown Sequence Number in Response to a Route Reply]{\hypertarget{sss921a}{\amb: Updating the Unknown Sequence Number in Response to a Route Reply}}~\\
\hypertarget{6.7}{If a node receives a RREP message, it might have to update its routing table:
\begin{quote}\raggedright \small
``{\tt
  the existing entry is updated only in the following circumstances:
  \begin{enumerate}[(i)]
   \item the sequence number in the routing table is marked as invalid%
\index{sequence number!invalid}%
\footnote{The RFC \cite{rfc3561} uses the term ``invalid'' in relation to sequence
 numbers as synonym for ``unknown''. We use ``unknown'' ($\unkno\in\tSQNK$)
 only, in order to avoid confusion with the validity of the routing
 table entry in which the sequence number occurs ($\val,\inval\in\tFLAG$).}
\\ 
   \hspace{-2.0em}{[\dots]}{\rm''\hfill\cite[Sect.~6.7]{rfc3561}}
   \end{enumerate}
  }
 \end{quote}}
 In the same section it is also stated which actions occur if a route is updated:
 \begin{quote}\raggedright \small
   {\tt
  \begin{list}{-}{}
\index{route!active}%
	\item[{\rm``}-] the route is marked as active\footnote{The RFC
        uses the term ``active'' in relations to routes---actually
        referring to routing table entries---as a synonym for ``valid''.} [(\val)], 
	\item the destination sequence number is marked as valid [(\kno)], 
	\item the next hop in the route entry is assigned to be the node from which the RREP is received, [\dots]
     	\item the hop count is set to the value of the New Hop Count [%
          {\rm obtained by incrementing ``}the hop count value in
          the RREP by one, to account for the new hop through the
          intermediate node{\rm ''}],\\
	\hspace{-1.18em}{[\dots]}
	\item and the [new] destination sequence number is the Destination Sequence 
      Number in the RREP message.{\rm''\hfill\cite[Sect.~6.7]{rfc3561}}
   \end{list}
   }
 \end{quote}
{
\newcommand{\nrt}{\dval{nrt}}
\renewcommand{\rt}{\dval{rt}}
 \newcommand{\nr}{\dval{nr}}
To model this fragment of the RFC accurately, we define another update function, which 
adds a case to the \hyperlink{update}{original definition}:
\hypertarget{updatea}{
\[\begin{array}{r@{\hspace{0.5em}}c@{\hspace{0.5em}}l}
\fnupda : \tRT\times\tROUTE&\rightharpoonup&\tRT}\label{df:updatea}\\
\upda{\rt}{\route}&:=& \left\{
\begin{array}{@{\,}ll@{}}
\nrt\cup\{\nr\}&\mbox{if } \pi_{1}(\route)\in\kD{\rt} \wedge  \sqnf{\rt}{\route}=\unkno\\[1mm]
\upd{\rt}{\route}&\mbox{otherwise\ ,}
\end{array}
\right.
\end{array}\]
where, as in the definition of \hyperlink{update}{$\fnupd$},  $\nrt :=
\rt -\{\selr{\rt}{\pi_{1}(\route)}\}$ is the routing table without the
current entry in the routing table for the destination of $\route$ 
and  $\nr:=\addprec{\route}{\pi_{7}(\selr{\rt}{\pi_{1}(\route))}}$ is identical to~$\route$ except
that the precursors from the original entry are added. 
This function is now used in the process for RREP handling instead of $\fnupd$.
In particular, Lines 
\ref{rrep:line3},
\ref{rrep:line5} and
\ref{rrep:line25}
have to be changed in \Pro{rrep}; all other processes
(\Pro{aodv}--\Pro{rreq} and \Pro{rerr},\,\ref{pro:queues}) remain unchanged 
and use the original version of \fnupd .
}

Using this fragment of the RFC,
a sequence number of a routing table entry could be decreased. 
For example, an entry $(d,2,\unkno,\val,1,d,*)$ is replaced by $(d,1,\kno,\val,n\mathord+1,a,*)$
if the reply has the form $\rrep{n}{d}{1}{*}{a}$.\footnote{%
To see that this can actually happen, consider a variant of the example of \Fig{example2} in
\Sect{aodv} in which node $s$ starts out with a routing table entry $(d,2,\kno,\inval,1,d,*)$, which may
have resulted from a previous RREQ-RREP cycle, initiated by $s$, followed by an invalidation after the
link between $s$ and $d$ broke down.
Then in \Fig{example2}(e) this entry is updated to $(d,2,\unkno,\val,1,d,*)$, and in
\Fig{example2}(h) node $d$ sends a RREP message of the form $\rrep{0}{d}{1}{s}{d}$.
}
As indicated in \SSect{decreasingSQN}, this in turn can create routing loops.
This updating mechanism is in contradiction to 
\hyperlink{6.1}{the quote} from \cite[Sect.~6.1]{rfc3561} in \SSect{decreasingSQN}.
In view of the undesirability of routing loops, the only way to
resolve this contradiction is by ignoring (i) in \cite[Sect.~6.7]{rfc3561},
\hyperlink{6.7}{the statement quoted at the beginning of this paragraph}.

\paragraph[Updating with the Unknown Sequence Number]{\hypertarget{sss921b}{\amb: Updating with the Unknown Sequence Number}}~\\
Above we have discussed the update mechanism if a routing table entry with an unknown sequence number has to be updated. 
But what happens if the incoming AODV message carries an unknown number? This occurs regularly:
whenever a node receives a forwarded AODV control message from a $1$-hop neighbour (i.e., the neighbour is not the originator of the message), it creates a new or updates an existing routing table entry to that neighbour (cf.\ Lines~\ref{aodv:line10}, \ref{aodv:line14}, \ref{aodv:line18} of \Pro{aodv}).
For example, 
\hypertarget{6.5}{\begin{quote}\raggedright\small
``{\tt [w]hen a node receives a RREQ, it first creates or updates a route to
   the previous hop without a valid sequence number}''\hfill\cite[Sect.~6.5]{rfc3561}
 \end{quote}}
In case a new routing table entry is created, the sequence number is set to zero and the sequence-number-status flag is set to {\unkno} to signify that the sequence number corresponding to the neighbour is unknown. 
But, what happens if the routing table entry $(a,2,\kno,\val,2,b,\emptyset)$ 
of node $d$ is updated by 
$(a,0,\unkno,\val,1,a,\emptyset)$ as a consequence of the incoming RREQ
message $\rreq{1}{\dval{rreqid}}{x}{7}{\kno}{s}{2}{a}$, sent by node $a$? 
This situation is sketched in \Fig{updateunknownsqn}.\footnote{Only the routing table entry under consideration is depicted.}

\begin{exampleFig}{Updating routing table entries with the unknown sequence number}{fig:updateunknownsqn}
\FigLine[slsr]%
  {$d$ has established a route to $a$ with known sqn.\\~}{fig/ex_update_unknown_sqn_1}{}
  {The topology changes; $s$ looks for a route to $x$;\\$d$ receives the RREQ from $a$.}{fig/ex_update_unknown_sqn_2}{}
\end{exampleFig}

\noindent Following the RFC the routing table has to be updated. Unfortunately, it is not stated 
how the update is done. There are four reasonable updates---we call
them (2a), (2b), (2c) and (2d) to label them as resolutions of Ambiguity~\theambiguity:

\begin{enumerate}[(2a)]
\item\label{amb:2a} $(a\comma2\comma\kno\comma\val\comma 2\comma b\comma\emptyset)$: no update occurs (more precisely,
only an update of the lifetime of the routing table entry happens; this is not modelled in this paper).
To formalise this resolution, one skips the fifth option
(out of 6) in the definition of \hyperlink{update}{\fnupd} in Section~\ref{sssec:update}:
With this modification all our proofs in \Sect{invariants} remain
valid, which yields loop freedom and route correctness of this alternative interpretation of AODV\@.
It can be argued that the RFC rules out this resolution by including ``or
updates'' in the \hyperlink{6.5}{quote above}.
\item\label{amb:2b}
{
\newcommand{\nr}{\dval{nr}}
  $(a\comma0\comma\unkno\comma\val\comma1\comma a\comma\emptyset)$: all information is taken from the incoming AODV control message.  
To formalise this resolution, one 
changes the definition of \hyperlink{update}{$\fnupd$} by replacing $\nr'$ by $\nr$. 
Since this can decrease sequence numbers, 
routing loops might occur. Hence this update must not be used.
}
\item\label{amb:2c} $(a\comma2\comma\unkno\comma\val\comma1\comma a\comma\emptyset)$: the information from the routing table
and from the incoming AODV control message is
merged, by taking only the destination sequence number from the existing routing table
  entry and all other information from the AODV control message; as usual the sets of
  precursors are combined.
This is how our specification works. As we have shown in \Sect{invariants}, no loops
can occur. Moreover, node $d$ establishes an optimal route to $a$.
In case  $d$'s routing table would contain the tuple
$(a,1,\kno,\val,1,a)$, the sequence-number-status flag would also be
set to {\unkno}---this might be surprising, but it is consistent with
the RFC\@.
\item\label{amb:2d} $(a\comma2\comma\kno\comma\val\comma 1\comma a\comma\emptyset)$: the information from the routing table
  and from the incoming AODV control message is merged, by taking the destination sequence number
  and the sequence-number-status flag from the existing routing table entry and all other
  information from the AODV control message; as usual the sets of precursors are combined.
  To formalise this resolution, one takes
 {\newcommand{\s}{\dval{s}}%
  \newcommand{\nr}{\dval{nr}}%
\renewcommand{\dip}{\dval{dip}}%
\renewcommand{\dsn}{\dval{dsn}}%
\renewcommand{\flag}{\dval{flag}}%
\renewcommand{\hops}{\dval{hops}}%
\renewcommand{\nhip}{\dval{nhip}}%
\renewcommand{\pre}{\dval{pre}}%
  $\nr':=(\dip_{\nr}\comma\pi_{2}(\s)\comma\pi_{3}(\s)\comma\flag_{\nr}\comma\hops_{\nr}\comma\nhip_{\nr}\comma\pre_{\nr})$
  in the definition of \hyperlink{update}{\fnupd} in Section~\ref{sssec:update}.
In the case where $\sqn{\rt}{\pi_{1}(r)} = \pi_{2}(r)$ the routes $\nr$ and $\nr'$ are not equal anymore and 
  hence the function is not well defined. To achieve
  well-definedness, we create mutual exclusive cases by using the fourth and 
  fifth clause only if $\pi_{3}(r)=\kno$.
With this modification all results of \Sect{invariants}, except for \Prop{dsn},\footnote{The proof
  of \Prop{dsn} breaks down on the case Pro.~\ref{pro:aodv}, Lines~\ref{aodv:line10}, \ref{aodv:line14}, \ref{aodv:line18}.}
  remain valid, with the same proofs, which yields loop freedom and route correctness of this alternative interpretation.
  }
\end{enumerate}
One could also mix Resolution (2\ref{amb:2a}) with (2\ref{amb:2c}) or (2\ref{amb:2d}), for
instance by applying (2\ref{amb:2c}) for updates in response to a RREQ message and
(2\ref{amb:2a}) for updates in response to a RREP or RERR message. This could be
justified by the location of the \hyperlink{6.5}{quote above} in Sect.~6.5 of the RFC, which deals with
processing RREQ messages only. Furthermore, as a variant of (2\ref{amb:2c}) one
could skip the update of Line~\ref{aodv:line14} of \Pro{aodv} in the special case that
$\xi(\sip)=\xi(\dip)$, since in that case a sequence number for the previous hop is known.
Also for these variants, which we will not further elaborate here, the proofs of
\Sect{invariants} (with the exception of \Prop{dsn}) remain valid, and loop freedom and
route correctness hold.

When taking Resolutions (2\ref{amb:2a}) or (2\ref{amb:2d}), it is easy to check
  that for any routing table entry $r$ we always have $\pi_3(r)=\unkno \Leftrightarrow \pi_2(r)=0$.
  As a consequence, the sequence-number-status flag is redundant, and can be omitted from the
  specification altogether.\footnote{In Pro.~\ref{pro:rreq}, Line~\ref{rreq:line32},
  ``$\sqnf{\rt}{\dip}\mathbin=\unkno$'' should then be replaced by ``$\sqn{\rt}{\dip}=0$", and
  likewise, in Line~\ref{rreq:line22}, ``$\sqnf{\rt}{\dip}\mathbin=\kno$'' by ``$\sqn{\rt}{\dip}\neq0$".\label{foot:skipped}}
This is the way in which AODV-UU~\cite{AODVUU} is implemented: it skips the sequence-number-status
flag and follows (2\ref{amb:2d}).
Since Resolution (2\ref{amb:2b}) can lead to loops, and (2\ref{amb:2a}) and (2\ref{amb:2d}) do not make proper use of sequence-number-status flags, 
we assume that Resolution (2\ref{amb:2c}) is in line with the intention of the RFC\@.
In \SSect{skipunknownflag} we will discuss the relative merits of the Resolutions~(2\ref{amb:2a}), (2\ref{amb:2c}) and (2\ref{amb:2d}) and propose an improvement.

\paragraph[More Inconclusive Evidence on Dealing with the Unknown Sequence Number]{\hypertarget{sss921c}{\amb: More Inconclusive Evidence on Dealing with the Unknown Sequence Number}}~\\
Section~6.2 of the RFC
describes under which circumstances an update occurs.
\begin{quote}\raggedright\small
``\hypertarget{6.2}{\tt The route is only updated if  the new sequence number is either
   \begin{enumerate}[(i)]
        \item higher than the destination sequence number in the route table, or
        \item the sequence numbers are equal, but the hop count (of the
              new information) plus one, is smaller than the existing hop
              count in the routing table, or
   	\item the sequence number is unknown.{\rm''\hfill \cite[Sect.~6.2]{rfc3561}}
   \end{enumerate}
   }
 \end{quote}
Part (iii) is ambiguous. The most plausible reading appears to be that
``{\tt the sequence number}'' refers to the new sequence number, i.e., the one provided by an
incoming AODV control message triggering a potential update of the
node's routing table. This reading is incompatible with (2\ref{amb:2a}) above, and
thus supports only Resolutions~(2\ref{amb:2b}), (2\ref{amb:2c}) and (2\ref{amb:2d}).
An alternative reading is that
it refers to the sequence number
in the routing table, meaning that the corresponding
sequence-number-status flag has the value $\unkno$.
This reading of (iii) is consistent with \hyperlink{6.7}{the quote from Section 6.7
above}, and leads to routing loops in the same way.\footnote{%
It can be formalised by using \hyperlink{updatea}{$\fnupda$} instead of
{\fnupd} in all process \Pro{aodv}--\Pro{rrep}, and furthermore
skipping the fifth option in the definition of \hyperlink{update}{\fnupd} in \SSSect{update}.}
The remaining possibility is that Part (iii) refers to the sequence number
in the routing table, but only deals with the case that that number is
truly unknown, i.e.\ has the value $0$. This reading is consistent
with Resolution (2\ref{amb:2a}) above. However, it implies that the
   routing table may not be updated if the existing entry has a known
   sequence number whereas the route distilled from the incoming
   information does not. This is in contradiction the \hyperlink{6.5}{quote from
   Sect.~6.5 in the RFC above}. It is for this reason that we take the
   first reading of (iii) as our default.

An IETF Internet draft---published after the RFC---rephrases the 
above statement as follows:
\begin{quote}\small
{\tt
{\rm ``}A route is only updated if one of the following conditions is met:\\
{[\dots]}
   \begin{enumerate}[(iv)]
   \item the sequence number in the routing table is unknown.{\rm''\hfill \cite[Sect.~6.2]{dbis00}}
   \end{enumerate}}
\end{quote}
Since in \cite{dbis00} the sequence-number-status flag has been
dropped, the only ``unknown'' sequence number left is $0$, so this
quote takes the third reading above. We do not know, however, whether
this is meant to be a clarification of \cite{rfc3561}, or a
proposal to change it.

\paragraph[Updating Invalid Routes]{\hypertarget{sss921d}{\amb: Updating Invalid Routes}}~\\
Another closely related question that arose during formalising AODV is whether an invalid route should be updated in all cases. 
For example, should an entry $(a,3,\kno,\inval,4,b,\emptyset)$ of a routing table
be overwritten by $(a,1,\kno,\val,2,c,\emptyset)$?
 Of course this should not be allowed: if an invalid routing table entry were to be replaced by {\emph {any}}
valid entry---even with smaller sequence number---the protocol 
would not be loop free.

This time, the RFC~\cite{rfc3561} confirms this assumption:
\begin{quote}\raggedright\small
``{\tt
Whenever any fresh enough (i.e., containing a sequence number at
least equal to the recorded sequence number) routing information for
an affected  destination is received by a node that has marked that
route table entry as invalid, the node SHOULD update its route table
information according to the information contained in the update.}''\hfill \cite[Sect.~6.1]{rfc3561}
\end{quote}
However, it is somewhat less clear what should be done in case the sequence numbers are equal.
For example, should an entry $(a,3,\kno,\inval,2,b,\emptyset)$ of a routing table
be overwritten by $(a,3,\kno,\val,4,c,\emptyset)$?
According to the quote from Sect.~6.1 above the answer is yes, but according the
\hyperlink{6.2}{preceding quote} from Sect.\ 6.2 of the RFC, the answer is no.
Our formalisation follows Sect.~6.1 in this regard. To formalise the
alternative, one skips the fourth option in the definition of \hyperlink{update}{\fnupd} in \SSSect{update}.
This contradiction needs to be resolved in favour of Sect.~6.1: none
of the two options gives rise to routing loops, but the alternative
interpretation would result in a severely handicapped version of AODV,
in which many broken routes will never be repaired.
We illustrate this by the following example.
\begin{exampleFig}{Invalidating a route}{fig:invalidateroutes}
\FigLine[slsr]%
  {The initial state;\\a connection between $s$ and $d$ has been established.}{fig/ex_err_1}{}
  {The link breaks down;\\both nodes invalidate their entries.}{fig/ex_err_2}{}
\FigLine[slsr]%
  {The topology changes again;\\the link re-appears.}{fig/ex_err_3}{}
  {$s$ broadcasts a new RREQ destined to $d$;\\$d$ receives the RREQ message.}{fig/ex_err_4}{}
\FigLine[slsr]%
  {$d$ updates its RT as well as its sequence number;\\$s$ receives the RREP sent by $d$.}{fig/ex_err_5}{}
  {As a consequence $s$ updates its routing table.}{fig/ex_err_6}{}
\end{exampleFig}

\noindent
We assume a network with two nodes only. Node $s$ has already sent out a route request destined for $d$ and received 
a route reply message (\Fig{invalidateroutes}(a)).  Due to mobility the link between the nodes breaks. After nodes $s$ and 
$d$ have invalidated their routing table entries to each other (\Fig{invalidateroutes}(b)), the link 
becomes available  again. Node $s$ initiates a new route request
destined to $d$
(for instance because $s$ wants to send another data-packet to $d$). 
As usual, node $d$ receives the request (Part (d)), and,
depending on which version of AODV we follow,
may update its routing table. Figures~\ref{fig:invalidateroutes}(e) and (f) depict the standard and non-handicapped version of AODV
where first node $d$ updates its routing table with a valid routing table entry and sends a reply back to $s$. Then $s$ receives the reply 
and also updates its routing table. 
In the handicapped version of AODV neither node $s$ nor $d$ will update their routing tables---the messages would be send around
without any actual update being performed. At the end of the RREQ-RREP cycle the network would be in the same state as depicted in \Fig{invalidateroutes}(c).
Only if node $s$ initiates yet another route request---and therefore increases its own sequence number to~$4$, 
the resulting routing table of $d$ will contain a valid entry with destination $s$---$s$ would still end up with an invalid entry for $d$.
As long as node $d$ does not increase its own sequence number (e.g., due to the initiation of a route request),
node $s$ cannot re-establish a valid route.

\subsubsection[Self-Entries in Routing Tables]{\hypertarget{sss922}{Self-Entries in Routing Tables}}%
\label{sssec:interpretation_selfroutes}

In any practical implementation, when a node sends a message to itself, 
the message will be delivered to the corresponding application on the 
local node without ever involving a routing protocol and therefore without being ``seen'' 
by AODV or any other routing protocol.
Hence it ought not matter if any node using AODV creates a routing table entry 
to itself. However, as we will show later, there are situations where these \phrase{self-entries} yield loops.

\paragraph[(Dis)Allowing Self-Entries]{\hypertarget{sss922a}{\amb: (Dis)Allowing Self-Entries}}~\\
  In AODV, when a node receives a RREP message, it creates a routing table entry for the
  destination node if such an entry does not already exist \cite[Sect 6.7]{rfc3561}. If the destination node
  happens to be the processing node itself, this leads to the creation of a self-entry.
 The RFC does mention self-entries explicitly; it only refers to them at one location:
\begin{quote}\raggedright\small 
  ``{\tt 
  A node may change the sequence number in the routing table entry of a 
   destination only if: \\[1ex]
   -  it is itself the destination node {[\dots]}}''\hfill\hfill\cite[Sect.~6.1]{rfc3561}
\end{quote}
This points at least to the possibility of having self-entries.
On the other hand, some implementations, such as AODV-UU, disallow self-entries.
For our specification (Sections~\ref{sec:types} and~\ref{sec:modelling_AODV}) we have chosen to 
allow (arbitrary) self-entries since the RFC does {\em not} prohibits their occurrence.
We will refer to this resolution of Ambiguity 5 as \hypertarget{amb:5a}{(5a)}.

Looking at our specification, self-entries can only occur using the function \hyperlink{update}{$\fnupd$}. 
More precisely, we show the following.
\begin{proposition}\label{prop:selfentries}
There is only one location where self-entries can be established, namely Pro.~\ref{pro:rrep}, Line~\ref{rrep:line5}.
\end{proposition}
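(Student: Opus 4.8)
The plan is to reduce the statement to an inspection of the places where routing tables are modified. By Remark~\ref{rem:remark} (and the proof of \Prop{invarianti}\eqref{prop:invarianti_itemii}), a node's routing table \plat{$\xiN{\dval{ip}}(\rt)$} is only ever changed through the three functions \hyperlink{update}{$\fnupd$}, \hyperlink{invalidate}{$\fninv$} and \hyperlink{addprert}{$\fnaddprecrt$}. Of these, \fninv and \fnaddprecrt neither insert new entries nor alter the destination (first component) of any existing entry, so they can never turn a routing table that contains no entry for \dval{ip} into one that does; in particular they cannot \emph{establish} a self-entry. Hence it suffices to examine every occurrence of \fnupd that actually modifies the table and to check that the destination of the freshly inserted entry differs from \dval{ip}, with the sole exception of \Pro{rrep}, Line~\ref{rrep:line5}.

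By \Prop{upd_well_defined} the occurrences of \fnupd are at \Pro{aodv}, Lines~\ref{aodv:line10}, \ref{aodv:line14} and~\ref{aodv:line18}; \Pro{rreq}, Line~\ref{rreq:line6}; and \Pro{rrep}, Lines~\ref{rrep:line3}, \ref{rrep:line5} and~\ref{rrep:line25}. The occurrences in \Pro{rrep}, Lines~\ref{rrep:line3} and~\ref{rrep:line25} sit inside formulas and leave the routing table unchanged, so they are irrelevant. For the three lines of \Pro{aodv} the inserted entry is $\xi(\sip,0,\unkno,\val,1,\sip,\emptyset)$, whose destination is $\xi(\sip)$; by \Cor{sipnotip} one has $\xiN{\dval{ip}}(\sip)\neq\dval{ip}$, so no self-entry is created. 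For \Pro{rrep}, Line~\ref{rrep:line5} the inserted entry has destination $\xi(\dip)$, and this value can genuinely equal \dval{ip} (namely when \dval{ip} forwards a reply for a route leading to itself), which is exactly the location named in the statement, so nothing needs ruling out here.

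The remaining and only delicate case is \Pro{rreq}, Line~\ref{rreq:line6}, where the inserted entry has destination $\xi(\oip)$, and I must rule out $\xi(\oip)=\dval{ip}$; this is the main obstacle, since it cannot be read off the inserted tuple directly. The idea is that a node never processes a route request of which it is itself the originator, because such a request is filtered out beforehand. Concretely, Line~\ref{rreq:line6} is reached only once the guard of Lines~\ref{rreq:line2}--\ref{rreq:line3} has failed, i.e.\ when $(\oip,\rreqid)\notin\xi(\rreqs)$. Suppose towards a contradiction that $\xi(\oip)=\dval{ip}$. The values $\xi(\oip)$ and $\xi(\rreqid)$ are copied, through Line~\ref{aodv:line8} of \Pro{aodv}, from an incoming RREQ message, which by \Prop{preliminaries}\eqref{it:preliminariesi} was cast in some earlier state $N^\dagger$. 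By \Prop{starcast}, Invariant~\eqref{inv:starcast_rreqid}, the originator of that message has stored the pair, i.e.\ $(\dval{ip},\xi(\rreqid))\in\xiN[N^\dagger]{\dval{ip}}(\rreqs)$, and by \Prop{rreqs increase} this pair persists to the current state, giving $(\oip,\rreqid)\in\xi(\rreqs)$. This contradicts the failed guard, whence $\xi(\oip)\neq\dval{ip}$ and no self-entry arises at Line~\ref{rreq:line6}. Collecting the three cases yields the proposition.
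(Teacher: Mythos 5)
Your proof is correct and takes essentially the same approach as the paper's own: it reduces the claim to the calls of $\fnupd$, disposes of \Pro{aodv}, Lines~\ref{aodv:line10}, \ref{aodv:line14}, \ref{aodv:line18} via \Cor{sipnotip}, rules out a self-entry at \Pro{rreq}, Line~\ref{rreq:line6} by the same contradiction (\Prop{preliminaries}, Invariant~\eqref{inv:starcast_rreqid} and \Prop{rreqs increase} against the guard $(\oip,\rreqid)\not\in\rreqs$ needed to reach the update), and identifies \Pro{rrep}, Line~\ref{rrep:line5} as the one genuine location. The only differences are presentational: you spell out the reduction to $\fnupd$ (which the paper states in the text preceding the proposition) and explicitly note that the guard occurrences of $\fnupd$ in \Pro{rrep}, Lines~\ref{rrep:line3} and~\ref{rrep:line25} leave the routing table unchanged.
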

\begin{proofNobox}
No self-entries are established at the initialisation of the protocol (cf.\ \SSect{initial}).
Hence we have to consider merely all occurrences of update:
\begin{description}
	\item[\Pro{aodv}, Lines~\ref{aodv:line10}, \ref{aodv:line14}, \ref{aodv:line18}:]
                By \Cor{sipnotip}
		we have $\xi(\sip)\not=\dval{ip}$ and therefore no self-entry can be written in the routing table.
	\item[\Pro{rreq}, Line~\ref{rreq:line6}:] 
		An entry with destination $\xi(\oip)$ is updated/inserted.
          The value $\xi(\oip)$ stems from a received RREQ message
		(cf.\ Lines~\ref{aodv:line2} and~\ref{aodv:line8} of Pro.~\ref{pro:aodv}).
		A self-entry can only be established if
		$\xi(\oip)=\dval{ip}$.
		In that case, by Invariant~\eqref{inv:starcast_rreqid},
		$\xi((\oip,\rreqid))\in\xiN[N^\dagger]{\dval{ip}}(\rreqs)$, where $N^{\dagger}$ is the
		network expression at the time when the RREQ message was sent.
		(Here we use \Prop{preliminaries}.)  By \Prop{rreqs increase}, we obtain
		$\xi((\oip,\rreqid))\in\xiN[N_{\ref*{rreq:line4}}]{\dval{ip}}(\rreqs)$ and
		therefore Line~\ref{rreq:line4} evaluates to false and the update is never performed.
\item[\Pro{rrep}, Line~\ref{rrep:line5}:] Self-entries can occur;  an example is given below.
			\endbox
	\end{description}
\end{proofNobox}
We now show that self-entries can occur in our specifications. The
presented example (\Fig{selfentries}) is not the smallest possible
one; however, later on it will serve as a basis for showing how routing loops can occur.
The example consists of six nodes, where $5$ of them form a
circular topology, including one
link---between the nodes $d$ and $s$---that is is unstable and
unreliable. This link will disappear and re-appear several times in
the example.

\begin{exampleFig}{Self-entries}{fig:selfentries}
\FigLine[slsr]%
  {The initial state.}%
  {fig/ex_self-entry_1}%
  {}
  {$s$ broadcasts a new RREQ message destined to $d$;\\$c$ receives  the message and buffers it.}
  {fig/ex_self-entry_2}
  {\queue{d}{}\queue{s}{}\queue{x}{}\queue{a}{}\queue{b}{}\queue{c}{RREQ${}_{1s}$}}
\FigLine[slsr]%
  {The topology changes;\\$d$ moves into transmission range of $s$.}{fig/ex_self-entry_3}
  {\queue{d}{}\queue{s}{}\queue{x}{}\queue{a}{}\queue{b}{}\queue{c}{RREQ${}_{1s}$}}
  {$s$ broadcasts a new RREQ message destined to $x$.\\($x$ is either down or not in range of any node.)}{fig/ex_self-entry_4}
  {\queue{d}{RREQ${}_{2s}$}\queue{s}{}\queue{x}{}\queue{a}{}\queue{b}{}\queue{c}{RREQ${}_{2s}$\\RREQ${}_{1s}$}}
\FigLine[slsr]%
  {$d$ handles RREQ${}_{2s}$, forwards it and updates its RT;\\$s$ silently ignores RREQ${}_{2d}$ (after updating its RT);\\[-1.75348pt]\raisebox{-1.75348pt}[0pt][0pt]{node $a$ buffers the message.}}%
  {fig/ex_self-entry_5}
  {\queue{d}{}\queue{s}{}\queue{x}{}\queue{a}{RREQ${}_{2d}$}\queue{b}{}\queue{c}{RREQ${}_{2s}$\\RREQ${}_{1s}$}}
  {$c$ updates its RT and forwards RREQ${}_{1}$;\\$s$ silently ignores it; $b$ queues it.\\[-2pt]\mbox{}}{fig/ex_self-entry_6}
  {\queue{d}{}\queue{s}{}\queue{x}{}\queue{a}{RREQ${}_{2d}$}\queue{b}{RREQ${}_{1c}$}\queue{c}{RREQ${}_{2s}$}}
\FigLine[slxsr]%
  {The topology changes again;\\$d$ broadcasts a new RREQ message destined to $a$.}{fig/ex_self-entry_7}
  {\queue{d}{}\queue{s}{}\queue{x}{}\queue{a}{RREQ${}_{3d}$\\RREQ${}_{2d}$}\queue{b}{RREQ${}_{1c}$}\queue{c}{RREQ${}_{2s}$}}
  {$a$ handles RREQ${}_{2d}$ and forwards the message;\\node $d$ silently ignores it; $b$ queues it.}{fig/ex_self-entry_8}
  {\queue{d}{}\queue{s}{}\queue{x}{}\queue{a}{RREQ${}_{3d}$}\queue{b}{RREQ${}_{2a}$\\RREQ${}_{1c}$}\queue{c}{RREQ${}_{2s}$}}
  \FigNewline
\FigLine[xslxsr]%
  {$c$ handles RREQ${}_{2s}$ and forwards the message;\\node $s$ silently ignores it; $b$ queues it.}{fig/ex_self-entry_9}
  {\queue{d}{}\queue{s}{}\queue{x}{}\queue{a}{RREQ${}_{3d}$}\queue{b}{RREQ${}_{2c}$\\RREQ${}_{2a}$\\RREQ${}_{1c}$}\queue{c}{}}
  {$a$ handles RREQ${}_{3d}$ and unicasts RREP${}_{3a}$ back to $d$;\\$d$ handles the RREP and establishes an entry to $a$.}{fig/ex_self-entry_10}
  {\queue{d}{}\queue{s}{}\queue{x}{}\queue{a}{}\queue{b}{RREQ${}_{2c}$\\RREQ${}_{2a}$\\RREQ${}_{1c}$}\queue{c}{}}
\FigLine[xslxsr]%
  {$b$ updates its RT and forwards RREQ${}_{1}$;\\$c$ silently ignores it; $a$ queues it.}{fig/ex_self-entry_11}
  {\queue{d}{}\queue{s}{}\queue{x}{}\queue{a}{RREQ${}_{1b}$}\queue{b}{RREQ${}_{2c}$\\RREQ${}_{2a}$}\queue{c}{}}
  {$b$ updates its RT and forwards RREQ${}_{2}$;\\$c$ silently ignores it; $a$ queues it.}{fig/ex_self-entry_12}
  {\queue{d}{}\queue{s}{}\queue{x}{}\queue{a}{RREQ${}_{2b}$\\RREQ${}_{1b}$}\queue{b}{RREQ${}_{2c}$}\queue{c}{}}
  \multicolumn{2}{|@{}l|}{\raisebox{0pt}[12pt][6pt]{}
    (\alph{figexmp}) $b$ handles RREQ${}_{2c}$; since it has handled RREQ${}_{2a}$ before, the message is ignored.}\stepcounter{figexmp}\\
    \hline
  \FigLine[xslxsr]%
  {$a$ generates a route reply in response for RREQ${}_{1b}$;\\the RREP is sent to $d$.}{fig/ex_self-entry_13}
  {\queue{d}{RREP${}_{1a}$}\queue{s}{}\queue{x}{}\queue{a}{RREQ${}_{2b}$}\queue{b}{}\queue{c}{}}
  {The topology changes\footnotemark;\\$a$ handles and ignores RREQ${}_{2b}$.}{fig/ex_self-entry_14}
  {\queue{d}{RREP${}_{1a}$}\queue{s}{}\queue{x}{}\queue{a}{}\queue{b}{}\queue{c}{}}
\FigLineHalf[xsl]%
  {$d$ forwards RREP${}_{1}$;\\it is finally received and handled by $s$.}{fig/ex_self-entry_15}{}
\end{exampleFig}%
\footnotetext{The link between $d$ and $s$ can appear at any time between Parts (h) and (o).}

First, node $s$  broadcasts a new route request message RREQ${}_{1s}$ destined for node $d$---the
second index $s$ only indicates the sender of the  message; a message RREQ${}_{1c}$ belongs to the
same route discovery process. After the route discovery has been initiated, $d$ moves into
transmission range of $s$ (\Fig{selfentries}~(c)).\pagebreak[1]
Next, node $s$ sends a second route request; this time it is destined
for node $x$. In \Fig{selfentries}(e), node $d$ handles the request
destined to $x$; since it has not seen this route
request before and it is not the destination, the request is
forwarded. Node $s$ receives the forwarded message, updates its
routing table and silently ignores RREQ${}_{2d}$ afterwards; node $a$
receives RREQ${}_{2d}$ and stores it into its message buffer.
In \Fig{selfentries}(f) node $c$ handles RREQ${}_{1s}$, which is the first message in its message queue; it inserts an entry with destination $s$ in its routing table and forwards RREQ${}_{1}$. The link between nodes $d$ and $s$ disappears in Part (g). Moreover a third route discovery search is 
initiated: this time node $d$ is looking for a route to node $a$.  Node $a$ is the only node who receives the broadcast message and 
queues it in its buffer---it cannot handle RREQ${}_{3d}$ immediately since its message buffer queue is not empty. The topmost element
is RREQ${}_{2d}$, which has been received earlier in Part (e). Node $a$ now handles this request;it creates entries
for $d$ and $s$ in its routing table and forwards the message (\Fig{selfentries}(h)). In Part (i), node~$c$ forwards the second request. 
The message RREQ${}_{2c}$ is received by nodes $b$ and $s$; $s$ immediately handles the incoming message, updates its routing table and then 
silently ignores it. After that, node $a$ empties its message queue and handles the third request, initiated by $d$: since $a$ is the destination, it creates a route reply and unicasts it towards $d$. After a single hop, node $d$, the originator, receives the reply and establishes a routing table entry to $a$ (cf.\ \Fig{selfentries}(j)). All but the message buffer of node $b$ are now empty; the queue of $b$ contains the messages RREQ${}_{1c}$, RREQ${}_{2a}$ and RREQ${}_{2c}$. In Part~(k), $b$ handles the topmost message (RREQ${}_{1c}$). As usual, it updates its routing table and forwards the message to its neighbours $a$ and $c$; node $a$ buffers the message, whereas $c$ silently ignores it. In Part (l) the same procedure happens again---but this time 
node $b$ handles RREQ${}_{2}$ instead of RREQ${}_{1}$. The last message stored by node $b$ is RREQ${}_{2c}$, which is handled now. 
Since $b$ has handled a message belonging to the second route discovery search before, this message is ignored (Part (m)). 
The only node that has messages stored in its buffer is now node $a$. The first message in its queue is RREQ${}_{1b}$, a route request sent by its originator $s$ in \Fig{selfentries}(b) and destined for node $d$. Since node $a$ has an entry for $d$ in its routing table ($(d,2,\kno,val,1,d)$), it 
generates a route reply using $\rrep{1}{d}{2}{s}{a}$.
RREP${}_{1a}$ has to be sent to the next hop on the route to $s$, which stored in the routing table of $a$: this is node $d$, the 
destination of the original route request. In Part (o) the topology changes and the message queue of $a$ is emptied. Node $a$ handles RREQ${}_{2b}$ and silently ignores it since, in Part (h), 
the node  has already handled the second request. In the last step, node $d$ handles the reply generated by node $a$, updates its routing table and 
forwards the message to $s$, which establishes a route to $d$. When updating the routing table, node $d$ creates a self-entry since
RREP${}_{1a}$ unicasts information about a route to $d$.

Later on, in \SSSect{interpretation_invalidate}, we
continue this example to show that the combination of allowing
self-entries and literally following the RFC when invalidating routing table
entries yields loops.

By \Prop{selfentries} only Pro.~\ref{pro:rrep} has to be changed to disallow self-entries. 
There are two possibilities to accomplish this:

\begin{enumerate}[(5a)]
\setcounter{enumi}{1}
\item\label{amb:5b} If a node receives a route reply and would create a self-entry, it silently ignores the
message and continues with the main process \AODV. This resolution is implemented in \Pro{disallow_selfentries_rrep}.
A disadvantage of this process is that more replies are lost. In the above example (\Fig{selfentries}), node $s$ would never receive a reply 
as a consequence  of the very first request sent. 
Nevertheless, this resolution appears closest in spirit to the RFC, which
lists ``{\tt a forward route has been created or updated}'' as a
precondition for forwarding the RREP.
The invariants of \Sect{invariants} remain valid, with the very same proofs.
\begin{figure}[ht]
\vspace{-3ex}
  \algsetup{linenodelimiter=.,linenosize=\tiny}
  \begin{algorithm}[H]
    {\footnotesize
      \caption{RREP handling (Resolution (5b))}
      \label{pro:disallow_selfentries_rrep}
      \begin{algorithmic}[1]
        \input{processes/disallow_selfentries_rrep.tex}
	\end{algorithmic}
    }
  \end{algorithm}

\vspace{-3ex}
\end{figure}

\item\label{amb:5c} The alternative is that the node who would create a self-entry does forward the message without updating the its routing table. 
(\Pro{disallow_selfentries_rrep2}). This resolution bears the risk that the main-invariant (Invariant~\eqref{eq:inv_x}) is violated, since information is forwarded by a node without 
updating the node's routing table. However,  all
invariants established in \Sect{invariants} still hold, with
the very same proofs---except that the proof of Invariant~\eqref{inv:starcast_iv} requires
one extra case, which is trivial since $\dval{ip}_c = \dval{dip}_c$.
\begin{figure}[h]
\vspace{-2ex}
  \algsetup{linenodelimiter=.,linenosize=\tiny}
  \begin{algorithm}[H]
    {\footnotesize
      \caption{RREP handling (Resolution (5c))}
      \label{pro:disallow_selfentries_rrep2}
      \begin{algorithmic}[1]
        \input{processes/disallow_selfentries_rrep2.tex}
	\end{algorithmic}
    }
  \end{algorithm}

\vspace{-3ex}
\end{figure}
\end{enumerate}
Both resolutions by themselves do not yield weird or unwanted behaviour.

\paragraph[Storing the Own Sequence Number]{\hypertarget{sss922b}{\amb: Storing the Own Sequence Number}}
\begin{quote}\raggedright\small
``{\tt AODV depends on each node
   in the network to own and maintain its destination sequence number to
   guarantee the loop freedom of all routes towards that node.{\rm''\hfill \cite[Sect.~6.1]{rfc3561}}
   }
 \end{quote}
The RFC does not specify how own sequence numbers should be stored. 
Since the own sequence numbers are never mentioned in combination with destination sequence numbers that are stored in routing tables, 
it is reasonable to assume that the own sequence number should be stored in a separate data structure. 
However, there are implementations (e.g.\ Kernel AODV) that maintain a node's own sequence number in the node's routing table.
Of course, just storing a variable does not cause routing loops itself; but since the way of maintenance influences other design decisions, we list this ambiguity.

The resolution where the own sequence number is stored in a separate
variable---Resolution (6a)---has been modelled in the specification
presented in Sections~\ref{sec:types} and \ref{sec:modelling_AODV}.
The other resolution stores the own sequence number of node $s$ as a
self-entry in $s$'s routing table, i.e., as an entry with destination $s$. 
All other components of that routing table entry are more or less artificial,
so the self-entry could for instance have
the form $(s,\sn,\kno,\val,0,s,*)$, where $\sn$ is the maintained sequence number.

\hypertarget{amb:6b}{A variant of AODV in which a node's own sequence number is stored in its routing table---Resolution
(6b)---is obtained by adapting the specification of \Sect{modelling_AODV} as follows:
\begin{enumerate}[(i)]
\item The argument $\sn$ of the processes $\AODV$, $\PKT$, $\RREQ$, $\RREP$ and $\RERR$ is dropped.
\item In Pro.~\ref{pro:aodv}, Line~\ref{aodv:line39} and 
	Pro.~\ref{pro:rreq}, Line~\ref{rreq:line14a}, the occurrence
  of $\sn$ as argument of a \textbf{broadcast} or \textbf{unicast} is replaced by $\sqn{\rt}{\ip}$.
\item \label{iii}In the initial state each node \dval{ip} has a routing table containing exactly one
\index{self-entries!optimal}
  optimal self-entry: $$(\dval{ip},1,\kno,\val,0,\dval{ip},\emptyset)\in\xi(\rt) \wedge |\xi(\rt)|=1.$$
\item \hypertarget{iv}{In Pro.~\ref{pro:aodv}, Line~\ref{aodv:line36}
  the assignment $\assignment{\sn:=\inc{\sn}}$, incrementing the node's own sequence number,
  is replaced by}
  $$\begin{array}{l}
    \assignment{\rt:=\upd{\rt}{(\ip,\inc{\sqn{\rt}{\ip}},\kno,\val,0,\ip,\emptyset)}}.
  \end{array}$$
\item \hypertarget{v}{In Pro.~\ref{pro:rreq}, Line~\ref{rreq:line12}
  the assignment $\assignment{\sn:=\max(\sn,\dsn)}$, updating $\ip$'s own sequence number,
  is replaced by}\vspace{-1ex}
  $$\assignment{\rt:=\upd{\rt}{(\ip,\max(\sqn{\rt}{\ip},\dsn),\kno,\val,0,\ip,\emptyset)}}.$$
\end{enumerate}}

\begin{theorem}\rm\label{thm:storing own sn as selfentry}
The interpretation of AODV that follows Resolution~(6\hyperlink{amb:6b}{b}) and 
in all other ways our default specification of Sections~\ref{sec:types}
is loop free and route correct.
\end{theorem}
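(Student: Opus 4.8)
The plan is to rerun the invariant development of \Sect{invariants} for the variant of AODV determined by Resolution~(6b), exploiting the single structural observation that in this variant the quantity $\sqn{\xiN{\dval{ip}}(\rt)}{\dval{ip}}$ --- the sequence number carried by $\dval{ip}$'s self-entry --- plays exactly the role that the dedicated variable $\xiN{\dval{ip}}(\sn)$ played in the default specification. The linchpin is a new \emph{self-entry invariant}: in every reachable network expression $N$, each node $\dval{ip}\in\IP$ has exactly one self-entry, and it has the optimal form $(\dval{ip},\sqn{\xiN{\dval{ip}}(\rt)}{\dval{ip}},\kno,\val,0,\dval{ip},\emptyset)$; in particular it is always valid, with known sequence number $\geq 1$, hop count $0$ and next hop $\dval{ip}$. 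Once this invariant is available, every occurrence of $\xi(\sn)$ in the statements and proofs of \Sect{invariants} is read as $\sqn{\rt}{\ip}$ (which by \Prop{self-identification} equals $\sq[\dval{ip}]{\dval{ip}}$, consistent with modification (ii), which already replaces $\sn$ by $\sqn{\rt}{\ip}$ in the broadcast at \Pro{aodv}, Line~\ref{aodv:line39} and the unicast at \Pro{rreq}, Line~\ref{rreq:line14a}), and the reasoning carries over essentially verbatim.

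First I would establish the self-entry invariant by induction on reachability. It holds initially by modification (iii). For the inductive step I would revisit \Prop{selfentries}: self-entries can now additionally be written at the two own-sequence-number updates, namely \Pro{aodv}, Line~\ref{aodv:line36} and \Pro{rreq}, Line~\ref{rreq:line12}, as rewritten by (iv) and (v). Both apply \fnupd\ to an argument of the form $(\ip,\inc{\sqn{\rt}{\ip}},\kno,\val,0,\ip,\emptyset)$, resp.\ $(\ip,\max(\sqn{\rt}{\ip},\dsn),\kno,\val,0,\ip,\emptyset)$; using $\sqn{\rt}{\ip}\geq 1$ these satisfy the definedness conditions of \fnupd, the update preserves the optimal form, and it can only raise the self-entry's sequence number (via \fninc\ and $\max$). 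The one genuinely delicate case is \Pro{rrep}, Line~\ref{rrep:line5}, where a route reply whose destination is the processing node itself threatens to clobber the designated self-entry; here I would show, using \Prop{dsn} with $\dipc=\dval{ip}$, that the incoming destination sequence number $\dsnc$ cannot exceed $\sqn{\rt}{\ip}$, so that none of the first five clauses of \fnupd\ fires and the self-entry survives unchanged. As in the proof of \Prop{dsn} itself, this forces a \emph{single simultaneous induction} on reachability that establishes the self-entry invariant jointly with the adapted forms of \Prop{invarianti_itemiii}, \Thm{state_quality} and \Prop{dsn}.

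With these in hand I would transfer the remaining invariants in order. \Prop{invarianti_itemiii} becomes ``$\sqn{\rt}{\ip}$ is monotonically increasing and $\geq 1$'', which follows since the self-entry is always valid (so $\sqn{\rt}{\ip}$ equals its net sequence number) and the only writes to it, at (iv) and (v), are increasing; \Thm{state_quality} then needs no new case, since the updates (iv),(v) are covered by \Prop{qual}(a) once definedness is known. \Prop{positive hopcount} is weakened to non-self entries, $(\dval{dip},*,*,*,\dval{hops},*,*)\in\xiN{\dval{ip}}(\rt)\ans\dval{dip}\neq\dval{ip}\ims\dval{hops}\geq1$, the self-entry being the unique exception; all its uses (e.g.\ in \Prop{starcastNew}(b) at \Pro{rreq}, Line~\ref{rreq:line26}, where $\dip\neq\ip$) concern only non-self entries, so they are unaffected, and since the self-entry's sequence number is never $0$, \Prop{rte}(\ref{it_e}) is untouched. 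The message-content invariants \Prop{msgsendingii}, \Prop{msgsending}, \Prop{starcast} and \Prop{dsn} now read $\sqn{\rt}{\ip}$ wherever they read $\sn$; their proofs invoke only $\sqn{\rt}{\ip}\geq 1$ and its monotonicity, both already secured, and \Prop{inv_nsqn} and \Thm{inv_a} go through with the substituted reading.

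Finally, for the conclusion I would observe that self-entries are invisible to the routing graph: the defining condition of $\RG{N}{\dval{dip}}$ requires $\dval{ip}\neq\dval{dip}$, so the self-entry of $\dval{ip}$ (whose destination is $\dval{ip}$) contributes no arc, and the routing graphs of the (6b) variant coincide with those built from the non-self entries alone. Hence \Thm{inv_a} applies unchanged and loop freedom follows exactly as in \Thm{loop free}, while route correctness follows from the (6b) analogue of \Thm{route correctness}: the self-entry, having hop count $0$ with $\dval{dip}=\dval{ip}$, is route correct by the convention that a $0$-hop path is a single node, and every other entry, request and reply is handled by the identical case analysis. I expect the main obstacle to be precisely the coupled induction of the self-entry invariant with \Prop{dsn} around \Pro{rrep}, Line~\ref{rrep:line5}: one must check carefully that an intermediate, possibly stale, self-directed route reply can never overwrite the optimal self-entry, since any such overwrite could lower $\sqn{\rt}{\ip}$ and thereby reintroduce exactly the decreasing-sequence-number pathology analysed in \SSect{decreasingSQN}.
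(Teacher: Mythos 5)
Your strategy is viable but it is not the paper's proof of this theorem: it is, almost step for step, the proof of the paper's \Prop{optimal}, which is established \emph{after} this theorem and is deliberately \emph{not} used in proving it. The paper's own proof is leaner: it never sets up an optimal-self-entry invariant. It substitutes $\sqn{\rt}{\ip}$ for $\sn$ throughout \Sect{invariants}, observes that monotonicity and positivity of $\sqn{\rt}{\ip}$ (the adapted \Prop{invarianti_itemiii}) follow already from \Prop{dsn increase} plus inspection of the initial state---because neither \fnupd\ nor \fninv\ ever lowers a stored destination sequence number---weakens \Prop{positive hopcount} to non-self-entries, and then checks the two new calls of \fnupd\ (the rewritten \Pro{aodv}, Line~\ref{aodv:line36} and \Pro{rreq}, Line~\ref{rreq:line12}) locally inside each affected proof; in \Prop{inv_nsqn} and \Thm{inv_a} these calls are harmless simply because they yield entries with $\dval{nhip}=\dval{dip}=\dval{ip}$, so the preconditions of those statements are vacuous. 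In particular, your closing worry is misplaced: a stale self-directed RREP can never ``lower $\sqn{\rt}{\ip}$'', since no clause of \fnupd\ decreases a stored sequence number; what is actually at risk in the case \Pro{rrep}, Line~\ref{rrep:line5} is only the \emph{optimal form} of the self-entry (hop count $0$, next hop $\dval{ip}$, validity), which is what your \Prop{dsn}-based argument protects.

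The genuine gap is in the induction for your linchpin invariant. You claim the self-entry is at all times \emph{valid} with hop count $0$, yet you only examine the calls of \fnupd; you never check the calls of \fninv\ (\Pro{aodv}, Line~\ref{aodv:line32}; \Pro{pkt}, Line~\ref{pkt2:line10}; \Pro{rreq}, Lines~\ref{rreq:line18}, \ref{rreq:line30}; \Pro{rrep}, Line~\ref{rrep:line18}; \Pro{rerr}, Line~\ref{rerr:line5}), any of which would destroy validity if it hit the self-entry. Ruling this out is not automatic: the paper's proof of \Prop{optimal} needs the companion invariant \Eq{inv_nonself}---\emph{only} self-entries have hop count $0$ or next hop $\dval{ip}$---to argue that every pair $(\dval{rip},\dval{rsn})$ placed in the set \dests\ for invalidation has next hop different from $\dval{ip}$, and hence $\dval{rip}\neq\dval{ip}$. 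This converse direction is not part of your stated invariant and must itself be carried through the induction (it holds because all non-self entries are created with next hop $\xi(\sip)\neq\dval{ip}$, by \Cor{sipnotip}). A minor further point: your claim that the self-entry keeps an empty precursor list is an over-strengthening you neither need nor defend. With \Eq{inv_nonself} added and the \fninv\ cases checked, your route does close---the paper itself carries it out as \Prop{optimal}, and your observation that self-entries contribute no arcs to any routing graph then gives loop freedom---but as sketched the induction does not go through.
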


\begin{proofNobox}
All invariants established in \Sect{invariants} and their proofs remain valid, with $\sqn{\rt}{\ip}$
substituted for all occurrences of $\sn$ (in \Prop{invarianti_itemiii}, the proof of \Prop{msgsendingii},
and Propositions~\ref{prop:messagebroadcast}, \ref{prop:starcast}(\ref{it:starcastiii}) and
\ref{prop:dsn} and their proofs), and with the following modifications:
\begin{itemize}
\item \Prop{invarianti_itemiii} now follows from \Prop{dsn increase} and an inspection of the
  initial state.
\item \Prop{positive hopcount} now holds for non-\!-self-entries only:
  \begin{equation*}
  (\dval{dip},*,*,*,\dval{hops},*,*)\in\xiN{\dval{ip}}(\rt) \wedge \dval{dip}\neq\dval{ip} \ims \dval{hops}\geq1
  \end{equation*}
  The claim holds for the initial state, since there are only self-entries in the routing tables.
  Furthermore there are two more calls of $\fnupd$ to be checked, but they all deal with
  self-entries.
\item In the proof of Proposition~\ref{prop:starcastNew}(b), when calling \Prop{positive hopcount},
  we need to check that $\dval{dip}\neq\dval{ip}$. This follows by Pro.~\ref{pro:rreq}, Line~\ref{rreq:line20}.
\item In the proof of Proposition~\ref{prop:rte} two more calls of $\fnupd$ have to be checked, all trivial.
\item In the proof of \Prop{upd_well_defined} two more calls of $\fnupd$ have to be checked.
  The case Pro.~\ref{pro:aodv}, Line~\ref{aodv:line36}
  is trivial; the case Pro.~\ref{pro:rreq}, Line~\ref{rreq:line12} uses \Prop{invarianti_itemiii}.
\item \Prop{qual}(\ref{it:qual_upd}) now holds for updates with non-\!-self-entries only.
  The reason is that its proof depends on \Prop{positive hopcount}---this is in fact the only
  other place where we use \Prop{positive hopcount}.
\item In the proof of \Thm{state_quality} we now have to check the updates with self-entries
  explicitly, since they are no longer covered by \Prop{qual}(\ref{it:qual_upd})---this is the only use of \Prop{qual}(\ref{it:qual_upd}).
  There are two of them (both introduced above), and none of them can decrease the quality of routing tables.
\item In the proofs of \Prop{inv_nsqn} and \Thm{inv_a} two more calls of $\fnupd$ have to be checked.
  If any of those calls actually modifies the entry for \dval{dip}, beyond its precursors, then in
  the resulting routing table $\dval{nhip}=\dval{dip}=\dval{ip}$, and the precondition of the
  proposition or theorem is not met.
\item In the proofs of \Thm{route correctness}(a) and Propositions~\ref{prop:dsn}(b)
  and~\ref{prop:route to nhip} two more calls of $\fnupd$ have to be checked, all trivial.
\endbox
\end{itemize}
\end{proofNobox}
\Thm{storing own sn as selfentry} remains valid for any of the Resolutions
(2\ref{amb:2a},\,3c), (2\ref{amb:2c},\,3a) or (2\ref{amb:2d},\,3a), in combination with  (5\hyperlink{amb:5a}{a})--(5\ref{amb:5c}) and with (6\hyperlink{amb:6b}{b}), for the modifications in the
proofs of \Sect{invariants} induced by these resolutions are orthogonal.
\begin{cor}\label{cor:storing own sn as selfentry}\rm
Assume any interpretation of AODV that uses Resolution~(6\hyperlink{amb:6b}{b}) in combination with one of the
Resolutions (2\ref{amb:2a},\,3c), (2\ref{amb:2c},\,3a) or (2\ref{amb:2d},\,3a) and
any of the Resolutions~(5\hyperlink{amb:5a}{a})--(5\ref{amb:5c}); in all other ways it follows our default specification of Sections~\ref{sec:types}
and~\ref{sec:modelling_AODV}.
This interpretation is loop free and 
route correct.\makebox[0pt]{\hspace{-4in}\endbox}
\end{cor}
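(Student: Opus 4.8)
The plan is to obtain \Cor{storing own sn as selfentry} not by a fresh induction but by superimposing three already-verified sets of proof modifications, exactly as announced in the remark preceding it. The three families in play act on \emph{disjoint} parts of the invariant proofs of \Sect{invariants}: the update resolution—one of (2\ref{amb:2a},\,3c), (2\ref{amb:2c},\,3a) or (2\ref{amb:2d},\,3a)—only alters the definition of $\fnupd$; Resolution~(6\hyperlink{amb:6b}{b}) leaves $\fnupd$ intact but adds two self-entry $\fnupd$-calls and moves own-sequence-number storage into the routing table; and the self-entry resolution—one of (5\hyperlink{amb:5a}{a})--(5\ref{amb:5c})—touches only \Pro{rrep}, Line~\ref{rrep:line5}. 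I would argue that these three perturbations never interfere, so the adaptations established for each in isolation compose.

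First I would isolate which invariants actually feed into the two target properties. Loop freedom (\Thm{loop free}) rests on \Thm{inv_a} and \Cor{strictord}, and hence on \Thm{state_quality}, \Prop{inv_nsqn}, \Prop{rte}, \Prop{dsn increase} and Invariants~\eqref{inv:starcast_ii} and~\eqref{inv:starcast_iv}; route correctness (\Thm{route correctness}(a)) rests on \Prop{preliminaries}, \Prop{ip=ipc}, the symmetry of the connectivity graph, and Invariants~\eqref{eq:rcreq} and~\eqref{eq:rcrep}. The decisive observation is that \Prop{dsn} occurs nowhere in this dependency chain. This is precisely why the standalone treatments of Resolutions~(2\ref{amb:2a}) and~(2\ref{amb:2d})—for which \Prop{dsn} alone fails—still yield loop freedom and route correctness, and it is what keeps the present combination safe despite those failures.

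Next I would recall the three individual preservation results and check that the one place where two of them could meet is in fact vacuous. The update resolutions are handled exactly as in the discussion of Ambiguities 2 and 3: they only affect the proof cases passing through a genuine (non-self-) $\fnupd$-update. The extra $\fnupd$-calls introduced by (6\hyperlink{amb:6b}{b}) always write a self-entry $(\ip,*,\kno,\val,0,\ip,*)$, for which $\dval{dip}=\dval{nhip}=\ip$; hence the antecedents of Invariants~\eqref{eq:inv_ix} and~\eqref{eq:inv_x} fail, the route-correctness clause reduces to the trivial $0$-hop path, and such an entry never contributes an arc to any routing graph $\RG{N}{\dval{dip}}$ (whose arcs require $\ip\neq\dval{dip}$), so it cannot close a loop—no matter how $\fnupd$ is defined. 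Concretely, each of these two calls either fires the (always-present) ``strictly greater sequence number'' clause or falls through to the precursor-only clause, so the self-entry's quality can only rise under every update definition. Finally, the self-entry resolution changes only \Pro{rrep}, Line~\ref{rrep:line5}: under (5\ref{amb:5b}) that update is skipped, and under (5\ref{amb:5c}) it is replaced by a plain forward, the latter needing a single extra, trivial case (with $\ipc=\dipc$) in the proof of Invariant~\eqref{inv:starcast_iv}. Since these changes live on different functions and different program lines, and their only meeting point is shown to be harmless, the modifications may be overlaid.

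Putting the pieces together, every invariant in the dependency chain above survives all $3\times3$ combinations of resolutions, with proofs obtained by overlaying the three individually checked adaptations; \Thm{inv_a}, \Cor{strictord}, \Thm{loop free} and \Thm{route correctness}(a) then go through verbatim, which is the claim. The main obstacle is the orthogonality bookkeeping rather than any new idea: one must confirm that no invariant needed for the conclusion secretly uses \Prop{dsn}, and that the weakening of \Prop{positive hopcount} to non-self-entries—already forced by (6\hyperlink{amb:6b}{b}) alone, and propagating into \Prop{qual}(\ref{it:qual_upd}) and \Prop{starcastNew}(b)—remains compatible with the altered update functions of (2\ref{amb:2a},\,3c) and (2\ref{amb:2d},\,3a), so that the two self-entry $\fnupd$-calls stay trivial under each modified $\fnupd$.
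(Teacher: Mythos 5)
Your proposal is correct and matches the paper's own argument: the paper likewise obtains the corollary by composing the previously established proof modifications for the update resolutions (2a,\,3c)/(2c,\,3a)/(2d,\,3a), the self-entry resolutions (5a)--(5c), and Resolution (6b), justified by exactly the observation you make---that these modifications are orthogonal, and that \Prop{dsn}, the one casualty of (2a)/(2d), lies outside the dependency chain for loop freedom and route correctness. One minor caveat: your parenthetical claim that each of the two new self-entry calls of $\fnupd$ fires either the strictly-greater-sequence-number clause or the precursor-only clause is not exhaustive under the combination of (2d) with (5a) and (6b), where non-optimal self-entries can survive (so the equal-sequence-number clauses can fire as well); but the conclusion you draw from it---that the quality of the self-entry never decreases---still holds in those cases, so the argument goes through.
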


The following proposition shows that under Resolution (6\hyperlink{amb:6b}{b}), unless
combined with (2\ref{amb:2d}) and (5\hyperlink{amb:5a}{a}), the own sequence number of a
node is stored in an \emph{optimal} self-entry---with hop count $0$---and that such a self-entry cannot be invalided, nor
overwritten by data from an incoming message (as this would result in a non-optimal
self-entry---cf.\ \Prop{positive hopcount} and its proof).

\begin{prop}\label{prop:optimal}\rm~Assume an interpretation of AODV of the kind described in \Cor{storing own sn as selfentry}, 
except that it does not use a combination of Resolutions (2\ref{amb:2d}) and (5\hyperlink{amb:5a}{a}).\pagebreak[3]
\begin{enumerate}[(a)]
\item Each node \dval{ip} maintains an optimal self-entry at all times, i.e.,
  a valid entry for $\dval{ip}$ with hop count $0$ and \dval{ip} as next hop.\vspace{-1ex}
\begin{equation}\label{eq:inv_self}
(\dval{ip},*,\kno,\val,0,\dval{ip},*)\in\xiN{\dval{ip}}(\rt)
\end{equation}
\item Only self-entries of \dval{ip} have hop count $0$, and only self-entries of \dval{ip} have \dval{ip} itself as the next hop.\\
In other words, if $ (\dval{dip},*,*,*,\dval{hops},\dval{nhip},*)\in\xiN{\dval{ip}}(\rt)$ then
\begin{equation}\label{eq:inv_nonself}
\dval{dip} = \dval{ip} \iffs \dval{hops}= 0 \iffs \dval{nhip}=\dval{ip}
\end{equation}
\end{enumerate}
\end{prop}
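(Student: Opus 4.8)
The plan is to prove both parts simultaneously by induction on reachability, following the standard strategy articulated in Remark~\ref{rem:remark}: verify the invariants in every initial state, then check that they are preserved by every application of the routing-table-modifying functions \hyperlink{update}{$\fnupd$}, \hyperlink{invalidate}{$\fninv$} and \hyperlink{addprert}{$\fnaddprecrt$}. Since neither invariant mentions precursors, $\fnaddprecrt$ need not be considered, and $\fninv$ changes only the sequence number and validity flag, not the hop count or next hop; hence for Part~(b) only calls of $\fnupd$ matter, while for Part~(a) I must additionally check that $\fninv$ never invalidates a self-entry. The crucial structural fact, supplied by the modifications (iii)--(v) of Resolution~(6\hyperlink{amb:6b}{b}), is that under this interpretation each node starts with exactly one entry, the optimal self-entry $(\dval{ip},1,\kno,\val,0,\dval{ip},\emptyset)$, and the only updates that ever write an entry with destination $\dval{ip}$ into $\xiN{\dval{ip}}(\rt)$ are the two self-entry updates in Pro.~\ref{pro:aodv}, Line~\ref{aodv:line36} and Pro.~\ref{pro:rreq}, Line~\ref{rreq:line12}.

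For Part~(a), the initial state satisfies \eqref{eq:inv_self} by modification~(iii). The two self-entry updates from~(iv) and~(v) use routes of the form $(\ip,*,\kno,\val,0,\ip,\emptyset)$, and by \Prop{qual}(\ref{it:qual_upd})---which under Resolution~(6\hyperlink{amb:6b}{b}) holds for self-entries as well---such an \hyperlink{update}{$\fnupd$} can only increase quality, so it leaves a valid self-entry of hop count $0$ in place. I would argue that no other process writes a routing table entry with destination $\dval{ip}$: by \Cor{sipnotip} the updates in Pro.~\ref{pro:aodv}, Lines~\ref{aodv:line10},~\ref{aodv:line14},~\ref{aodv:line18} have $\xi(\sip)\neq\dval{ip}$; the update in Pro.~\ref{pro:rreq}, Line~\ref{rreq:line6} has destination $\xi(\oip)$, which by Line~\ref{rreq:line20} (when reached) differs from $\dval{ip}$, and in any case the self-entry would already be blocked as in \Prop{selfentries}; and the update in Pro.~\ref{pro:rrep}, Line~\ref{rrep:line5} is precisely where self-entries can be created---but such an update cannot overwrite the optimal self-entry by a non-optimal one, since by \Prop{positive hopcount} (in its modified form for this interpretation) any inserted non-self-entry has hop count $\geq 1$, whereas the existing self-entry has hop count $0$, and the quality preorder together with \Prop{qual}(\ref{it:qual_upd}) forbids replacing it. Finally, no call of $\fninv$ can invalidate the self-entry: the sets $\dval{dests}$ passed to $\fninv$ are always drawn from valid destinations reached via a broken next hop (Pro.~\ref{pro:aodv}, Line~\ref{aodv:line30}, etc.) or from an incoming RERR, and in every case the next hop differs from $\dval{ip}$, whereas the self-entry has $\dval{ip}$ as its own next hop; I would record this as the observation that $\dval{ip}\notin\dval{dests}$ whenever $\fninv$ is invoked on $\xiN{\dval{ip}}(\rt)$.

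For Part~(b), the chain of equivalences $\dval{dip}=\dval{ip}\iffs\dval{hops}=0\iffs\dval{nhip}=\dval{ip}$ is again checked at the initial state (where only the optimal self-entry exists, satisfying all three) and then across updates. The equivalence $\dval{hops}=0\iffs\dval{nhip}=\dval{ip}$ needs the observation that the only entries with hop count $0$ are self-entries and that self-entries have $\dval{ip}$ as next hop, which is immediate from the two self-entry updates; conversely, every non-self update inserts an entry with hop count $\geq 1$ (by modified \Prop{positive hopcount}) and next hop $\xi(\sip)$ or $\xi(\nhip)$ distinct from $\dval{ip}$ (by \Cor{sipnotip} and \Prop{inv_nsqn}-style reasoning). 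The direction $\dval{dip}=\dval{ip}\ims\dval{hops}=0$ is where Part~(a) is used: if an entry for destination $\dval{ip}$ exists, it must be the optimal self-entry guaranteed by~\eqref{eq:inv_self}, because \Prop{invarianti_itemii} ensures at most one entry per destination and Part~(a) pins that entry down to hop count $0$.

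The main obstacle I anticipate is the subtle interplay with Resolution~(2\ref{amb:2d}) and~(5\hyperlink{amb:5a}{a}), which the proposition explicitly excludes. Under (2\ref{amb:2d}) the merge of an incoming unknown-sequence-number route with an existing entry retains the old sequence number and known-status flag while taking the shorter hop count from the message; combined with (5\hyperlink{amb:5a}{a}), which permits self-entries to be created in Pro.~\ref{pro:rrep}, this could allow an incoming RREP about a route to $\dval{ip}$ to overwrite the optimal self-entry with a \emph{shorter} non-optimal one---but here ``shorter than $0$'' is impossible, so the real danger is an entry of hop count $1$ being merged in a way that the quality argument no longer blocks. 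I therefore expect the heart of the argument to be a careful case analysis, exactly at Pro.~\ref{pro:rrep}, Line~\ref{rrep:line5}, showing that under every \emph{permitted} combination of resolutions the existing optimal self-entry has strictly higher quality (hop count $0 <$ any incoming hop count at equal net sequence number, via the definition of $\rtord$ and \Prop{positive hopcount}), so the update leaves it unchanged; and to isolate precisely why the excluded combination breaks this by permitting a decrease from hop count $0$ to hop count $1$ without a compensating sequence-number increase.
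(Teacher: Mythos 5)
Your overall skeleton---simultaneous induction on reachability over both invariants, checking the initial state, the two new self-entry updates (iv) and (v), the update sites in Pro.~\ref{pro:aodv} and Pro.~\ref{pro:rreq}, and all calls of $\fninv$ (using the next-hop characterisation from part (b) to show no self-entry is ever invalidated)---is the same as the paper's, and most of your cases are handled essentially as there. But your treatment of the one genuinely dangerous case, Pro.~\ref{pro:rrep}, Line~\ref{rrep:line5}, has a real gap. You argue that the existing self-entry, having hop count $0$, cannot be displaced by an incoming entry of hop count $\geq 1$ because the quality preorder together with \Prop{qual}(\ref{it:qual_upd}) forbids replacing it, and you explicitly reduce the case to ``equal net sequence number''. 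This does not work: $\rtord$ is lexicographic with the (net) sequence number dominant, so an incoming RREP whose destination sequence number $\xiN{\dval{ip}}(\dsn)$ is \emph{strictly greater} than the stored value $\sqn{\xiN{\dval{ip}}(\rt)}{\dval{ip}}$ triggers the second clause of $\fnupd$ and replaces the optimal self-entry by a non-optimal one of hop count $\geq 1$---a quality \emph{increase}, which nothing in \Prop{qual} forbids, yet exactly the violation of Invariant~\eqref{eq:inv_self} you must exclude. The missing ingredient is \Prop{dsn}(a), Invariant~\eqref{eq:dsn_rrep}: a destination sequence number carried in a RREP never exceeds the destination's own sequence number. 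With the RREP's destination being $\dval{ip}$ itself, this gives $\xiN{\dval{ip}}(\dsn)\leq\xiN[N^\dagger]{\dval{ip}}(\sqn{\rt}{\ip})\leq\xiN{\dval{ip}}(\sqn{\rt}{\ip})$ (the last step by the reformulated \Prop{invarianti_itemiii} under Resolution (6\hyperlink{amb:6b}{b})), so none of the update clauses applies---the existing entry being valid with hop count $0$ and the incoming route carrying flag $\kno$---and the routing table is unchanged beyond precursors. Your proof never invokes this invariant, and without it the induction does not close.

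This also explains why your diagnosis of the excluded combination of (2\ref{amb:2d}) and (5\hyperlink{amb:5a}{a}) is off target. The exclusion is not about ``a decrease from hop count $0$ to hop count $1$ without a compensating sequence-number increase''---with a compensating increase, that replacement is precisely what $\rtord$ permits. Rather, under Resolution~(2\ref{amb:2d}) the proof of \Prop{dsn} breaks down, so Invariant~\eqref{eq:dsn_rrep} is unavailable and the argument above cannot be run; under Resolutions~(5\ref{amb:5b}) or~(5\ref{amb:5c}) the guard $\dip\neq\ip$ in the modified RREP process blocks self-entry creation at Line~\ref{rrep:line5} altogether, so the invariant is not needed there. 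A further small point: you claim \Prop{qual}(\ref{it:qual_upd}) holds for self-entries under Resolution~(6\hyperlink{amb:6b}{b}), but the paper explicitly weakens it to non-self-entries in that setting (its proof relies on \Prop{positive hopcount}); for the updates (iv) and (v) it suffices to observe directly that they write, or leave in place, an entry of the form $(\dval{ip},*,\kno,\val,0,\dval{ip},*)$.
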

\begin{proof}
We prove both invariants by simultaneous induction. In the initial state each routing table contains exactly one entry (see \eqref{iii} above),
which is a self-entry satisfying $\dval{dip}=\dval{ip}$, $\dval{hops}=0$ and $\dval{nhip}=\dval{ip}$.
By Remark~\ref{rem:remark} it suffices 
to look at the application calls of \hyperlink{update}{\fnupd} and \hyperlink{invalidate}{\fninv}.
If an update does not change the routing table entry (the last clause of \hyperlink{update}{\fnupd}),
both invariants are trivially preserved; hence we only examine the cases that an update actually occurs.
\begin{enumerate}[(a)]
\item\begin{description}
\item[Pro.~\ref{pro:aodv}, Line~\hyperlink{iv}{\ref*{aodv:line36}}; Pro.~\ref{pro:rreq},
  Line~\hyperlink{v}{\ref*{rreq:line12}}:]
  In these (new) cases the update yields a self-entry of the required form.
\item[Pro.~\ref{pro:aodv}, Lines~\ref{aodv:line10}, \ref{aodv:line14}, \ref{aodv:line18}:]
  By \Cor{sipnotip} the update does not result in a self-entry.
\item[Pro.~\ref{pro:rreq}, Line~\ref{rreq:line6}:]
  As in the proof of \Prop{selfentries} we conclude that
  $\dval{dip}=\xiN{\dval{ip}}(\oip)\neq\dval{ip}$, i.e.\ the inserted entry is not a self-entry.
\item[Pro.~\ref{pro:rrep}, Line~\ref{rrep:line5}:]
  The update has the form $\xiN{\dval{ip}}(\upd{\rt}{(\dip,\dsn,\kno,\val,\hops+1,\sip,\emptyset)})$.
  Assume, towards a contradiction, that it results in a self-entry, i.e.\ that $\xiN{\dval{ip}}(\dip)=\dval{ip}$.
  The values $\xiN{\dval{ip}}(\dip)$ and $\xiN{\dval{ip}}(\dsn)$ stem through
  Line~\ref{aodv:line12} of Pro.~\ref{pro:aodv} from a received RREP message, which by
  \Prop{preliminaries} was sent before, say in state $N^\dagger\!$.
  By Invariant~\eqref{eq:dsn_rrep}, with $\dval{dsn}_c:=\xiN{\dval{ip}}(\dsn)$ and $\dval{dip}_c:=\xiN{\dval{ip}}(\dip)=\dval{dip}$,
  we have
  \[
  \xiN{\dval{ip}}(\dsn)=\dval{dsn}_c\leq \xi_{N^\dagger}^{\dval{dip}_c}(\sqn{\rt}{\ip})
  \leq \xi_{N}^{\dval{dip}_c}(\sqn{\rt}{\ip})= \xi_{N}^{\dval{ip}}(\sqn{\rt}{\ip})\ ,\]
  the third step by  \Prop{invarianti_itemiii}.
  Since Invariant~\Eq{inv_self} holds before the $\fnupd$,
 $\dhp[\dval{ip}]{\dval{ip}}=0$ and $\sta[\dval{ip}]{\dval{ip}}=\val$.
  Hence, by the definition of \hyperlink{update}{\fnupd}, no update actually occurs.

  Thus, the update does not result in a self-entry.
  \item[Pro.~\ref{pro:aodv}, Line~\ref{aodv:line32}:]
  By construction of {\dests} in Line~\ref{aodv:line30}, for any
  $(\dval{rip},\dval{rsn})\in\xiN{\dval{ip}}(\dests)$ we have that
  $\fnnhop_{N_{\ref*{aodv:line30}}}^{\dval{ip}}(\dval{rip})=\fnnhop_{N_{\ref*{aodv:line30}}}^{\dval{ip}}(\dval{dip})$,
  where $\dval{dip}:=\xiN[N_{\ref*{aodv:line30}}]{\dval{ip}}(\dip)=\xiN[N_{\ref*{aodv:line23}}]{\dval{ip}}(\dip)$.
  By Line~\ref{aodv:line22} and Invariant~\Eq{inv_v} $\dval{dip}\neq\dval{ip}$.
  Hence by Invariant~\Eq{inv_nonself}, which holds at Line~\ref{aodv:line30},
  $\fnnhop_{N_{\ref*{aodv:line30}}}^{\dval{ip}}(\dval{dip})\neq\dval{ip}$.
  Thus $\fnnhop_{N_{\ref*{aodv:line30}}}^{\dval{ip}}(\dval{rip})\neq\dval{ip}$ and by
  Invariant~\Eq{inv_nonself} $\dval{rip}\neq\dval{ip}$.
  It follows that Line~\ref{aodv:line32} will never invalidate a self-entry.
\item[Pro.~\ref{pro:pkt}, Line~\ref{pkt2:line10}:]
  The proof is similar to the previous case, except that $\dval{dip}\neq\dval{ip}$
  follows from Line~\ref{pkt2:line4}.
\item[Pro.~\ref{pro:rreq}, Lines~\ref{rreq:line18}, \ref{rreq:line30}:]
  The proof is again the same, but with
  $\xiN{\dval{ip}}(\oip)$ taking the role of \dval{dip} and
  \plat{$\xiN{\dval{ip}}(\dip)$}.
  That $\xiN{\dval{ip}}(\oip)\neq\dval{ip}$ follows as in the case
  \Pro{rreq}, Line~\ref{rreq:line6} of the proof of \Prop{selfentries}.
\item[Pro.~\ref{pro:rrep}, Line~\ref{rrep:line18}:]
  This follows as in the previous case, except that $\xiN{\dval{ip}}(\oip)\neq\dval{ip}$
  follows from Line~\ref{rrep:line9}.
\item[Pro.~\ref{pro:rerr}, Line~\ref{rerr:line5}:]
  The proof is like the previous ones, but this time with
  $\fnnhop_{N_{\ref*{rerr:line2}}}^{\dval{ip}}(\dval{rip})=\xiN[N_{\ref*{rerr:line2}}]{\dval{ip}}(\sip)\neq\dval{ip}$
  following from \Cor{sipnotip}.
\endbox
\end{description}
\item The function \hyperlink{invalidate}{\fninv} neither changes the destination, the hop count nor the next hop; hence the invariant is preserved under function calls of \fninv.
Moreover, Invariant~\eqref{eq:inv_self} already shows $\dval{dip} = \dval{ip} \ims \dval{hops}= 0$ and $\dval{dip} = \dval{ip} \ims \dval{nhip}=\dval{ip}$.
\begin{description}
\item[Pro.~\ref{pro:aodv}, Line~\hyperlink{iv}{\ref*{aodv:line36}}; Pro.~\ref{pro:rreq},
  Line~\hyperlink{v}{\ref*{rreq:line12}}:]
  The update yields a self-entry satisfying
  $\dval{dip} = \dval{ip}$, $\dval{hops}= 0$ and $\dval{nhip}=\dval{ip}$.
\item[Pro.~\ref{pro:aodv}, Lines~\ref{aodv:line10}, \ref{aodv:line14}, \ref{aodv:line18}:]
  By \Cor{sipnotip} the update does not result in a self-entry, and indeed $\dval{hops}:=1$ and
  $\dval{nhip}:=\xiN{\dval{ip}}(\sip)\neq\dval{ip}$.
\item[Pro.~\ref{pro:rreq}, Line~\ref{rreq:line6}; Pro.~\ref{pro:rrep}, Line~\ref{rrep:line5}:]
  As observed under (a) above,
  the inserted entry can not be a self-entry.
  Moreover $\dval{hops}:= \xiN{\dval{ip}}(\hops)\mathord+1 >0$ and
  $\dval{nhip}:=\xiN{\dval{ip}}(\sip)\neq\dval{ip}$ by \Cor{sipnotip}.
\end{description}
\end{enumerate}
The above proof uses Invariant~\Eq{dsn_rrep} (\Prop{dsn}), which is not available under
Resolution~(2\ref{amb:2d}).  However, when using Resolutions~(5\ref{amb:5b}) or~(5\ref{amb:5c}), the call to
\Eq{dsn_rrep} can be avoided, because Line~\ref{rrep5b:line1} of Pro.~\ref{pro:disallow_selfentries_rrep}
or~\ref{pro:disallow_selfentries_rrep2} guarantees that the update of Pro.~\ref{pro:rrep},
Line~\ref{rrep:line5} does not yield a self-entry.
\end{proof}
When using Resolution (6\hyperlink{amb:6b}{b}) in combination with either Resolution~(2\ref{amb:2a}) or~(2\ref{amb:2c}),
the choice of one of the Resolutions (5\hyperlink{amb:5a}{a}), (5\ref{amb:5b}) or (5\ref{amb:5c}) doesn't make any difference in protocol
behaviour, as the optimal self-entry prevents ``accidental'' non-optimal self-entries to be written in the routing table.

\subsubsection[Invalidating Routing Table Entries]{\hypertarget{sss923}{Invalidating Routing Table Entries}}%
\label{sssec:interpretation_invalidate}

We have seen that decreasing a sequence number of a routing table entry yields  potential loops.
A similar effect occurs if the routing table entry is invalidated, but the sequence number is not incremented. 

Of course, invalidating routing table entries is closely related to route error message generation. 
\begin{quote}\raggedright\small
``{\tt A node initiates processing for a RERR message in three situations:
    \begin{enumerate}[(i)]
   	\item if it detects a link break for the next hop of an active [(\val)] 
             route in its routing table while transmitting data {[\dots]}, or 
	\item if it gets a data packet destined to a node for which it 
             does not have an active route  {[\dots]}, or
        \item if it receives a RERR from a neighbor for one or more
             active routes.{\rm''\\\hfill \cite[Sect.~6.11]{rfc3561}}
   \end{enumerate}
   }
 \end{quote}
Before the error message is transmitted, the routing table has to be updated:
\begin{quote}\raggedright\small
{\tt\label{pgquoteabove}
    \begin{enumerate}
   	\item[{\rm``}1.]  The destination sequence number of this routing entry, if it 
      exists and is valid, is incremented for cases (i) and (ii) above,
      and copied from the incoming RERR in case (iii) above.
      \item[2.] The entry is invalidated by marking the route entry as invalid\\
      \hspace{-1.65em}{[\dots]}{\rm ''\hfill \cite[Sect.~6.11]{rfc3561}}
   \end{enumerate}
   }
 \end{quote}

\paragraph[Invalidating Entries in Response to a Link Break or Unroutable Data Packet]
{\hypertarget{sss923a}{\amb: Invalidating Entries in Response to a Link Break or Unroutable Data Packet}}~\\
 Part 2.\ of the above quotation is clear, whereas Part 1.\ is ambiguous: Where does ``it'' refer to?
 Does the destination sequence number have to exist and be valid (\kno) or is it 
 the routing table entry that should exist and be valid (\val)?
 
 From a linguistic point of view, it is more likely that ``it'' refers to the destination sequence number:
 first, the sequence number is the first noun and subject in the sentence; second, the pronoun ``this'' already indicates that a 
 routing entry must exist; hence the condition of existence would be superfluous. Following this resolution, 
 the routing table entry $(d,1,\kno,\val,1,d)$ would be updated to $(d,2,\kno,\inval,1,d)$, but
 $(d,1,\unkno,\val,1,d)$ would yield $(d,1,\unkno,\inval,1,d)$.
 A formalisation of this resolution could be obtained by changing
 Line~\ref{aodv:line30} of Pro.~\ref{pro:aodv} into
 \vspace{-2ex}
 
 {\footnotesize\[\textbf{[\![}\dests\begin{array}[t]{@{\,}c@{\,}l@{}}
 :=&\{(\rip,\inc{\sqn{\rt}{\rip}})\,|\,\rip\in\akD{\rt}\ans
   \nhop{\rt}{\rip}=\nhop{\rt}{\dip} \ans \sqnf{\rt}{\dip}=\kno\}\\
   \phantom{:}\cup&\{(\rip,\sqn{\rt}{\rip})\,|\,\rip\in\akD{\rt}\ans
   \nhop{\rt}{\rip}=\nhop{\rt}{\dip} \ans \sqnf{\rt}{\dip}=\unkno\}\textbf{]\!]}\ .\end{array}\]}

\noindent
 Here, sequence numbers of \keyw{kno}wn routing table entries are incremented, whereas
 sequence numbers of \keyw{unk}nown entries remain the same. 
 Both kinds of sequence numbers are used later on to invalidate routing table entries and for further error handling.
 Similar changes need to be made for Pro.~\ref{pro:pkt},
 Line~\ref{pkt2:line9}; Pro.~\ref{pro:rreq},
 Line~\ref{rreq:line16}, \ref{rreq:line28}
 and Pro.~\ref{pro:rrep}, Lines~\ref{rrep:line16}.
 With this interpretation AODV is able to 
 create routing loops; \Fig{loop1} shows an example.

Part (a) shows a network, in which the node $s$ has already established a 
route to $d$. This was done by a single route discover process
(cf.\ the first example of \SSect{detExample}). 
Next, node $b$ tries to establish a route to $a$. To that end, 
it initiates and broadcasts a route request;
the RREQ message is forwarded by $d$ (Part (b)). 
Node $a$ receives the message and updates the routing table entry 
$(d,1,\kno,\val,1,d)$ to
$(d,1,\unkno,\val,1,d)$, following Resolution (2\ref{amb:2c}) 
of \SSSect{interpretation_update} (\hyperlink{sss921b}{Updating with the Unknown Sequence}
\hyperlink{sss921b}{Number}).
After the route has been established, the topology changes and 
all links to $a$ break down; the node itself notices that the link to $d$ 
is down, invalidates the route, and sends a RERR message, but its
 RERR message is not received by any node 
(Part (d)). Invalidating the routes uses the assumption that 
the routing table entry $(d,1,\unkno,\val,1,d)$ is updated to $(d,1,\unkno,\inval,1,d)$.
In Part (e), $a$ reconnects to $s$ and a new link between $s$ and $d$ occurs.
Last, node $a$ tries to reestablish a route to $d$, broadcasts a 
request with destination sequence number $1$ and 
immediately receives an answer by $s$.
Now, the routing table of $a$ contains an entry to $d$ with next hop 
$s$, and $s$ has a routing table entry to $d$ with next hop $a$. 
A packet which is sent to node $d$ by either of these two nodes would circulate in a loop forever.

\begin{exampleFig}{Creating a loop by not incrementing unknown sequence numbers}{fig:loop1}
\FigLine[xslxsr]%
  {The initial state;\\a connection between $s$ and $d$ has been established.}{fig/ex_loop1_1}{}
  {$b$ broadcasts a new RREQ destined to $a$;\\$d$ and $a$ receive the RREQ and update their RTs.}{fig/ex_loop1_2}{}
\FigLine[xslxsr]%
  {$a$ unicasts a RREP back.}{fig/ex_loop1_3}{}
  {The topology changes;\\$a$ invalidates routes to $b$, $d$, and $s$.}{fig/ex_loop1_4}{}
\FigNewline
\FigLine[xslxsr]%
 {The topology changes again.}{fig/ex_loop1_5}{}
 {$a$ broadcasts a new RREQ destined to $d$;\\node $s$ receives the RREQ and updates its RT.}{fig/ex_loop1_6}{}
\FigLineHalf[xsl]%
 {$s$ has information about a route to $d$;\\hence it unicasts a RREP back.}{fig/ex_loop1_7}{}
\end{exampleFig}

In sum, the only acceptable reading of Part 1.\ above is the one
where ``it'' refers to ``routing entry'': before a RERR message is
sent, the destination sequence number of a routing table entry is
incremented, if such an entry exists and is valid.
This is the interpretation formalised in \Sect{modelling_AODV}.

\paragraph[Invalidating Entries in Response to a Route Error Message]
{\hypertarget{sss923b}{\amb: Invalidating Entries in Response to a Route Error Message}}~\\
\newcommand{\emptyspace}{\phantom{\rt - \{}}
The part ``{\tt and copied from the incoming RERR in case (iii) above}''
of the quote given on Page~\pageref{pgquoteabove} (from Sect.\ 6.11 of the RFC) is unambiguous.
It describes the replacement of an
existing destination sequence number in a routing table entry with another one, which may be strictly smaller.
This literal interpretation gives rise to a version of AODV without the requirement \highlight{$\sqn{\rt}{\rip}<\rsn$} in
Pro.~\ref{pro:rerr}, Line~\ref{rerr:line2} (cf.~Resolution (8a) below). 
However, replacing a sequence number with a strictly smaller one
contradicts the \hyperlink{6.1}{quote from Sect.\ 6.1 of the RFC} displayed in \SSect{decreasingSQN}.
To make the process of invalidation consistent with Sect.~6.1 of the RFC, one could
use Resolutions~(8b) or~(8c) instead. Resolution~(8b), which strictly follows Sect.~6.1, aborts the
invalidation attempt if the destination sequence number provided by the incoming RERR
message is smaller than the one already in the routing table. Resolution~(8c), on the other
hand, still invalidates in these circumstances, but prevents a decrease in the destination
sequence number by taking the maximum of the stored and the incoming number.

\begin{enumerate}[(8a)]
\item\label{amb:8a}
	Follow Section 6.11 of the RFC, in defiance of 6.1, i.e., {\em always\/} invalidate the \rte, and copy the destination sequence number from the error message to the corresponding entry in the routing table.%
	\footnote{It could be argued that this is not a reasonable interpretation of the RFC, since
          Section\ 6.1 should have priority over 6.11. However, this priority is not explicitly stated.} 
        This is formalised by skipping the requirement \highlight{$\sqn{\rt}{\rip}<\rsn$} in
        Pro.~\ref{pro:rerr}, Line~\ref{rerr:line2}.
\item\label{amb:8b}
	 Follow Section 6.11 only where it does not contradict 6.1, i.e., invalidate the \rte and
         copy the destination sequence number {\em only\/} if this does not give rise to a decrease
         of the destination sequence number in the routing table. This if formalised by
         replacing the requirement by \mbox{\highlight{$\sqn{\rt}{\rip}\leq\rsn$}}.
\item\label{amb:8c}
	Always invalidate the \rte (skip the requirement completely), but use a version of the function
        \hyperlink{invalidate}{$\fninv$} of \SSSect{invalidate} that uses
        $\highlight{\max(\pi_2(r),\dval{rsn})}$ instead of \highlight{\dval{rsn}},
        thereby updating the destination sequence number in the routing table to the maximum of its old value and the
        value contributed by the incoming RERR message.
\end{enumerate}
\noindent We now show that in combination with allowing self-entries (cf.~\SSSect{interpretation_selfroutes})
each of these resolutions gives rise to routing loops.
\Fig{loopsfromselfentries} continues the example of \Fig{selfentries},
and is valid for any of them.

\begin{exampleFig}{Combination of self-entries and an inappropriate $\fninv$ yields loops}{fig:loopsfromselfentries}
\FigLine[xslxsr]%
  {The initial state (same as \Fig{selfentries}(p)).}%
  {fig/ex_loop_from_selfentry_1}%
  {\queue{d}{}\queue{s}{}\queue{x}{}\queue{a}{}\queue{b}{}\queue{c}{}}
  {The topology changes.}
  {fig/ex_loop_from_selfentry_2}
  {\queue{d}{}\queue{s}{}\queue{x}{}\queue{a}{}\queue{b}{}\queue{c}{}}
\FigLine[xslxsr]%
  {A standard RREQ-RREP cycle starts;\\$d$ broadcasts a new RREQ message destined to $x$\\ nodes $a,s$ buffer the message.}%
  {fig/ex_loop_from_selfentry_3}%
  {\queue{d}{}\queue{s}{RREQ${}_{4d}$}\queue{x}{}\queue{a}{RREQ${}_{4d}$}\queue{b}{}\queue{c}{}}
  {$a$ and $s$ handle and forward the RREQ;\\$d$ silently ignores the messages;\\$b,c$ and $x$ store it.}
  {fig/ex_loop_from_selfentry_4}
  {\queue{d}{}\queue{s}{}\queue{x}{RREQ${}_{4s}$}\queue{a}{}\queue{b}{RREQ${}_{4a}$}\queue{c}{RREQ${}_{4s}$}}
\FigLine[xslxsr]%
  {$b$ forwards RREQ${}_{4}$;
   $a$ ignores it; $c$ stores it;\\
   $x$ replies to the RREQ with a RREP.}%
  {fig/ex_loop_from_selfentry_5}%
  {\queue{d}{}\queue{s}{RREP${}_{4x}$}\queue{x}{}\queue{a}{}\queue{b}{}\queue{c}{RREQ${}_{4b}$\\RREQ${}_{4s}$}}
  {$s$ forwards the RREP to $d$, which handles it;\\$c$ handles and forwards RREQ${}_{4s}$.}
  {fig/ex_loop_from_selfentry_6}
  {\queue{d}{}\queue{s}{RREQ${}_{4c}$}\queue{x}{}\queue{a}{}\queue{b}{RREQ${}_{4c}$}\queue{c}{RREQ${}_{4b}$}}
  \FigNewline
  \FigLine[xslxsr]%
  {All nodes have handled RREQ${}_{4}$ before;\\they silently ignore the messages in their queues;\\the topology changes.}%
  {fig/ex_loop_from_selfentry_7}%
  {\queue{d}{}\queue{s}{}\queue{x}{}\queue{a}{}\queue{b}{}\queue{c}{}}  
  {$d$ detects the link break, invalidates its self-entry,\\ and initiates a RERR message.}
  {fig/ex_loop_from_selfentry_8}
  {\queue{d}{}\queue{s}{RERR${}_{1d}$}\queue{x}{}\queue{a}{}\queue{b}{}\queue{c}{}}  
\FigLine[xslxsr]%
  {The topology changes;\\$s$ handles the RERR message.}%
  {fig/ex_loop_from_selfentry_9}%
  {\queue{d}{}\queue{s}{}\queue{x}{}\queue{a}{}\queue{b}{}\queue{c}{}}    
  {The topology changes again.}
  {fig/ex_loop_from_selfentry_10}
  {\queue{d}{}\queue{s}{}\queue{x}{}\queue{a}{}\queue{b}{}\queue{c}{}}    
\FigLine[xslxsr]%
  {$s$ broadcasts a new RREQ message destined to $d$}%
  {fig/ex_loop_from_selfentry_11}
  {\queue{d}{}\queue{s}{}\queue{x}{RREQ${}_{5s}$}\queue{a}{}\queue{b}{}\queue{c}{}}
  {$x$ replies to RREQ${}_{5s}$; $s$ handles the reply;\\a loop between $s$ and $x$ has been established.}
  {fig/ex_loop_from_selfentry_12}
  {\queue{d}{}\queue{s}{}\queue{x}{}\queue{a}{}\queue{b}{}\queue{c}{}}  
\end{exampleFig}
\vspace{-.1ex}

At the initial state (Part (a)), the message queues of all nodes are
empty. A couple of routes have been found and many routing
table entries are already set up. Node $d$ has among standard entries
also a self-entry, a valid entry to itself with sequence number and
hop count $2$. The example continues with node $x$ moving into the
transmission range of $s$, followed by a standard
RREQ-RREP cycle (\Fig{loopsfromselfentries}(b--g)) .

In Part (c), $d$ initiates a new route request for $x$. The generated message is received
by nodes $a$ and $s$; both nodes create reverse routes to $d$ and forward the
request (Part (d)). Node $x$ now handles the forwarded request and since this node is the
intended destination it generates and unicasts a route reply.  Meanwhile the broadcast
request still flows around in the network. In Part (e), node $b$ forwards it; in (f) the
message is handled by node $c$.  Here the route reply is also unicast back from $s$ to
$d$. The RREQ-RREP cycle ends with silently ignoring all remaining route request
messages of all message queues; this is due to the fact that all nodes have already
handled the request sent out by node $d$ in Part (d).

The last part of the example, which finally creates a routing loop, starts with a topology
change in \Fig{loopsfromselfentries}(g).  Node (d) detects the link break and invalidates
its link to $a$. It also invalidates its self-entry since the next hop on its
recorded route to $d$ is node $a$.  (At this point all treatments of the
  invalidation procedure contemplated in this section agree.)  As a
consequence of the link break node $d$
also casts a route error message to $s$. Due to
unreliable links (for example due to node $s$ moving around), the error message cannot be sent
forward; hence {\em only} node $s$ invalidates its entry for $d$. 
Before the entry is invalidated it is, by Line~\ref{aodv:line18} of \Pro{aodv}, updated to
 $(d,3,\unkno,\val,1,d)$. For the invalidation
we assume either of the Resolutions (8a), (8b) or (8c)---according to each the routing table entry
of $s$ to $d$ \emph{is} invalidated and the destination sequence number $3$ remains unchanged.
In Part (j) node $s$ moves into transmission range of $x$. We assume that it wants to send 
a data packet to $d$. Since its routing table entry for $d$ has been
invalidated, a new route request is sent out in Part (k).
The destination sequence number in this control message is set to $3$.
The message is received by node $x$, which immediately initiates a route reply since its
routing table contains a valid entry to $d$ with a sequence number
that is large enough. After node $s$ receives this message, a routing loop between $s$ and $x$ for destination $d$ has been established.
  
The problem in this example is that a routing table entry in invalidated without
increasing the destination sequence number.

As the above example shows, none of the above variants should be used in combination 
with non-optimal self-entries; thus either non-optimal self-entries should be
forbidden, or one should reject all plausible interpretations of the
invalidation process that are consistent with the combination of Sections 6.1 and 6.11 of the RFC\@.
However, the \hyperlink{6.2}{preceding quote from Sect.\ 6.2 of the RFC} suggests the interpretation
proposed in Sections~\ref{sec:types} and~\ref{sec:modelling_AODV}---Resolution (8f) below. Here we
invalidate the \rte and copy the destination sequence number only if this gives rise to an \emph{increase}
of the destination sequence number in the routing table. This if formalised by the requirement \highlight{$\sqn{\rt}{\rip}<\rsn$}
in~\Pro{rerr}, Line~\ref{rerr:line2}.
Another solution, Resolution (8d), is to still invalidate in this circumstances, but guarantee an
increase in the destination sequence number in the routing table by taking the maximum of
its incremented old value and the value stemming from the incoming RERR message. Finally, Resolution (8e) is a combination of~(8b) and (8d).

\begin{enumerate}[{(8}a{)}]
\setcounter{enumi}{3}
\item\label{amb:8d}
	Always invalidate the entry (skip the requirement \highlight{$\sqn{\rt}{\rip}<\rsn$}), but use a version of the function
        \hyperlink{invalidate}{$\fninv$} of \SSSect{invalidate} that uses
        \highlight{$\max(\inc{\pi_2(r)},\dval{rsn})$} instead of \highlight{\dval{rsn}}.
\item\label{amb:8e}
	Invalidate the \rte only if \highlight{$\sqn{\rt}{\rip}\leq\rsn$} and update the destination
        sequence number to \highlight{$\max(\inc{\pi_2(r)},\dval{rsn})$}.
	\footnote{The variant that invalidates only if \highlight{$\sqn{\rt}{\rip}\leq\rsn$} and
          updates to \highlight{$\max(\pi_2(r),\rsn)$} needs no separate consideration,
          since it is equivalent to Resolution (8b).}
\item\label{amb:8f}
	Invalidate the \rte only if \highlight{$\sqn{\rt}{\rip}<\rsn$}.%
	\footnote{Here, it does not matter whether we update to \highlight{\rsn},
          \highlight{$\max(\pi_2(r),\rsn)$} or to \highlight{$\max(\inc{\pi_2(r)},\rsn)$}; they are all equivalent.}
\end{enumerate}
In Sections~\ref{sec:types} and~\ref{sec:modelling_AODV} we have shown that our default
specification of AODV, implementing Resolution~(8f), is loop free and route correct.
We now show that the same holds when using Resolutions (8d) or (8e) instead.
\hypertarget{amb8modifications}{In fact, all invariants established in \Sect{invariants} and their proofs remain valid,
with the following modifications.}
\begin{itemize}
\item The proof of \Prop{dsn increase} simplifies, because not even the (modified) function $\fninv$
  can decrease a sequence number.
\item In \Prop{starcastrerr} the requirement $\rsnc=\sq[\ripc]{\dval{ip}}$ is weakened to
  $\rsnc\leq\sq[\ripc]{\dval{ip}}$. This change is harmless, since
  \Prop{starcastrerr} is applied in the proof of \Prop{inv_nsqn} only (at the end), where the
  weakened version is used anyway.
  
  The first case in the proof of \Prop{starcastrerr} is adapted to:
\begin{description}
\item[Pro.~\ref{pro:aodv}, Line~\ref{aodv:line33}:]
The set $\destsc$ is constructed in Line~\ref{aodv:line31a} as a subset of 
\plat{$\xiN[N_{\ref*{aodv:line31a}\!}]{\dval{ip}}(\dests)\mathbin=\xiN[N_{\ref*{aodv:line32}\!}]{\dval{ip}}(\dests)$}.
For each $(\ripc,\rsnc)\in\xiN[N_{\ref*{aodv:line32}}]{\dval{ip}}(\dests)$
one has $\ripc=\xiN[N_{\ref*{aodv:line30}}]{\dval{ip}}(\rip)\in\fnakD_{N_{\ref*{aodv:line30}}}^{\dval{ip}}$.
Then in Line~\ref{aodv:line32}, using the modified function {$\fninv$},
$\status{\xi(\rt)}{\ripc}$ be\-comes $\inval$ and 
$\sqn{\xi(\rt)}{\ripc}$ becomes $\max(\inc{\sqn{\xiN[N_{\ref*{aodv:line30}}]{\dval{ip}}(\rt)}{\ripc}},\rsnc)$.
Thus we obtain $\ripc\in\ikd{\dval{ip}}$ and
$\sq[\ripc]{\dval{ip}} \geq \rsnc$.
\end{description}
\item The proof of \Thm{state_quality} simplifies, because the (modified) function $\fninv$
  can never decrease the quality of routing tables.
\item The last case in the proof of \Prop{inv_nsqn} is adapted to:
\begin{description}
\item[Pro.~\ref{pro:rerr}, Line~\ref{rerr:line5}:]
        Let $N_{\ref*{rerr:line5}}$ and $N$ be the network expressions right before
        and right after executing Pro.~\ref{pro:rerr}, Line~\ref{rerr:line5}.
	The entry for destination $\dval{dip}$ can be affected 
	only if $(\dval{dip},\dval{dsn})\in\xiN[N_{\ref*{rerr:line2}}]{\dval{ip}}(\dests)$ for some $\dval{dsn}\in\tSQN$.
	In that case, by Line~\ref{rerr:line2},
        \plat{$(\dval{dip},\dval{dsn})\in\xiN[N_{\ref*{rerr:line2}}]{\dval{ip}}(\dests)$},
	\plat{$\dval{dip}\in\akd[N_{\ref*{rerr:line2}}]{\dval{ip}}$}, and
        \plat{$\fnnhop_{N_{\ref*{rerr:line2}}}^{\dval{ip}}(\dval{dip})=\xiN[N_{\ref*{rerr:line2}}]{\dval{ip}}(\sip)$}.
By the modified definition of \fninv,\\[1ex]
\mbox{}\hfill$\sq{\dval{ip}} = \max(\inc{\fnsqn_{N_{\ref*{rerr:line5}}}^{\dval{ip}}(\dval{dip})},\dval{dsn})$\hfill
and\hfill $\sta{\dval{ip}}=\inval$,\hfill so\hfill
\vspace{-1ex}
$$\nsq{\dval{ip}}\begin{array}[t]{@{~=~}l@{}} \sq{\dval{ip}}\decremented\\
\max(\inc{\fnsqn_{N_{\ref*{rerr:line5}}}^{\dval{ip}}(\dval{dip})}\decremented,\dval{dsn}\decremented)\\
\max(\fnsqn_{N_{\ref*{rerr:line5}}}^{\dval{ip}}(\dval{dip}),\dval{dsn}\decremented)\;.\vspace{-2ex}
\end{array}$$
Hence we need to show that
(i) $\fnsqn_{N_{\ref*{rerr:line5}}}^{\dval{ip}}(\dval{dip}) \leq \nsq{\dval{nhip}}$
and (ii) $\dval{dsn}\decremented \leq \nsq{\dval{nhip}}$.
\begin{enumerate}[\!\!(i)]
\item Since $\dval{dip}\in\akd[N_{\ref*{rerr:line2}}]{\dval{ip}}=\fnakD_{N_{\ref*{rerr:line5}}}^{\dval{ip}}$, we have\vspace{-1ex}
  $$\fnsqn_{N_{\ref*{rerr:line5}}}^{\dval{ip}}(\dval{dip}) =
    \fnnsqn_{N_{\ref*{rerr:line5}}}^{\dval{ip}}(\dval{dip}) \leq
    \fnnsqn_{N_{\ref*{rerr:line5}}}^{\dval{nhip}}(\dval{dip}) =
    \nsq{\dval{nhip}}$$
\noindent The inequality holds since the invariant is valid right before executing Line~\ref{rerr:line5}.
\item This case goes exactly as the corresponding case in \Sect{invariants}.
\endbox
\end{enumerate}
\end{description}
\end{itemize}\vspace{2ex}

When forbidding non-optimal self-entries---either by choosing one of the Resolutions~(5\ref{amb:5b}) or (5\ref{amb:5c}) of AODV
proposed on Page~\pageref{pro:disallow_selfentries_rrep}, or by storing the own sequence number in an optimal self-entry as
described in the \hyperlink{sss922b}{previous section}---all Variants
(8\ref{amb:8a})--(8\ref{amb:8f}) of the invalidation process described in this section behave exactly the same. Hence
all are loop free and route correct. This follows by the following invariants, which are established not for our
default specification of AODV, but for either of the resolutions without non-optimal self-entries,
still following (8\ref{amb:8f}) above.

\begin{proposition}\label{prop:no-self-invalidate}
Assume an interpretation of AODV that takes one of the Resolutions (2\ref{amb:2a},\,3c), (2\ref{amb:2c},\,3a) or
(2\ref{amb:2d},\,3a) in combination with (8\ref{amb:8f}) and any resolution of Ambiguities 5 and 6,
but not (5\hyperlink{amb:5a}{a}) and (6{a}) at the same time, and not (2\ref{amb:2d}) with (5\hyperlink{amb:5a}{a}) and (6\hyperlink{amb:6b}{b}); in all other ways it follows our default
specification of Sections~\ref{sec:types} and~\ref{sec:modelling_AODV}.
\begin{enumerate}[(1)]
\item Whenever Line~\ref{rerr:line2} of Pro.~\ref{pro:rerr} is executed by node \dval{ip} in state $N$ we have
\plat{$\sq[\dval{rip}]{\dval{ip}} < \dval{rsn}$} for all $(\dval{rip},\dval{rsn}) \in \xiN{\dval{ip}}(\dests)$ with
\plat{$\dval{rip}\in\akd{\dval{ip}}$} and \plat{$\nhp[\dval{rip}]{\dval{ip}}=\xiN{\dval{ip}}(\sip)$}.
\item Whenever node \dval{ip} makes a call $\inv{\rt}{\dests\!}$ in state $N\!$, then
  $\max(\!\inc{\sq[\dval{rip}]{\dval{ip}}},\dval{rsn})\mathbin=\mbox{}$ \plat{$\max(\sq[\dval{rip}]{\dval{ip}},\dval{rsn})=\dval{rsn}$}
  for all \plat{$(\dval{rip},\dval{rsn})\in\xiN{\dval{ip}}(\dests)$}.
\end{enumerate}
\end{proposition}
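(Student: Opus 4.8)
The plan is to prove the two invariants by induction on reachability, observing that part~(2) reduces almost entirely to part~(1), while part~(1) carries all the real weight.

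For part~(2) I would first dispose of every call of \hyperlink{invalidate}{$\fninv$} \emph{outside} \Pro{rerr}---namely \Pro{aodv}, Line~\ref{aodv:line32}; \Pro{pkt}, Line~\ref{pkt2:line10}; \Pro{rreq}, Lines~\ref{rreq:line18},~\ref{rreq:line30}; and \Pro{rrep}, Line~\ref{rrep:line18}. In each of these the set \dests is assembled immediately beforehand with $\dval{rsn}=\inc{\sq[\dval{rip}]{\dval{ip}}}$, so $\inc{\sq[\dval{rip}]{\dval{ip}}}=\dval{rsn}\geq\sq[\dval{rip}]{\dval{ip}}$ and both maxima collapse to $\dval{rsn}$. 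The only remaining call is \Pro{rerr}, Line~\ref{rerr:line5}, whose \dests is the set built in Line~\ref{rerr:line2}; since the interpretation uses Resolution~(8\ref{amb:8f}), that line keeps only pairs with $\sq[\dval{rip}]{\dval{ip}}<\dval{rsn}$, whence $\inc{\sq[\dval{rip}]{\dval{ip}}}\leq\dval{rsn}$ and again both maxima equal $\dval{rsn}$. Thus the genuine content---and the equivalence of Resolutions~(8\ref{amb:8a})--(8\ref{amb:8f}) that the proposition underpins---lies in part~(1): the filter $\sq[\dval{rip}]{\dval{ip}}<\dval{rsn}$ of Line~\ref{rerr:line2} holds \emph{automatically} for every incoming pair whose destination \dval{rip} is currently reached by a valid route through the sender \dval{sip}, so nothing relevant is gained by imposing it.

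For part~(1), fix such a pair $(\dval{rip},\dval{rsn})$ in the incoming \dests, with $\dval{rip}\in\akd{\dval{ip}}$ and $\nhp[\dval{rip}]{\dval{ip}}=\xiN{\dval{ip}}(\sip)=:\dval{nhip}$. By \Prop{preliminaries}\eqref{it:preliminariesi}, \Prop{ip=ipc} and \Cor{sipnotip} this \RERR was cast by \dval{nhip} in an earlier state $N^\dagger$, and by \Prop{starcastrerr} we have $\dval{rip}\in\ikd[N^\dagger]{\dval{nhip}}$ and $\dval{rsn}=\sqn{\xiN[N^\dagger]{\dval{nhip}}(\rt)}{\dval{rip}}$. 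The no-non-optimal-self-entry hypotheses enter first here: under Resolution~(6\hyperlink{amb:6b}{b}) every node keeps an optimal \emph{valid} self-entry at all times (\Prop{optimal}), while under (6a) combined with (5\ref{amb:5b}) or~(5\ref{amb:5c}) no node ever creates a self-entry at all (\Prop{selfentries}); in either case \dval{nhip} cannot have an \emph{invalid} self-entry, so $\dval{nhip}\neq\dval{rip}$. Consequently \Prop{rte}\eqref{it_e} forces $\sq[\dval{rip}]{\dval{ip}}\geq 1$ (a valid entry with sequence number $0$ would require $\dval{nhip}=\dval{rip}$), so that $\dval{rsn}\decremented$ is a genuine predecessor of $\dval{rsn}$.

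The crux is to bound $\sq[\dval{rip}]{\dval{ip}}$ above by $\dval{rsn}\decremented$. Because \dval{ip}'s entry is \emph{valid} with next hop \dval{nhip}, its current sequence number and next hop were installed by the most recent \hyperlink{update}{$\fnupd$} that actually changed them, and such an update necessarily processed a \RREQ\ or \RREP\ cast by \dval{nhip} (the sender becomes the next hop)---say in state $N^\ddagger$; \Prop{msgsending} then gives $\sq[\dval{rip}]{\dval{ip}}\leq\nsqn{\xiN[N^\ddagger]{\dval{nhip}}(\rt)}{\dval{rip}}$. The main obstacle is the \emph{temporal} fact that $N^\ddagger$ occurs no later than $N^\dagger$: I would argue that \dval{ip} consumes messages from a single FIFO queue (\QMSG), that the casts of \dval{nhip} reach \dval{ip} in the order sent, and that \dval{ip} is about to handle the \RERR\ of $N^\dagger$ but has already handled the message of $N^\ddagger$, forcing the latter to have been cast first. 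Granting this, monotonicity of net sequence numbers (\Thm{state_quality}, with \Prop{destinations maintained}) yields
\[
\sq[\dval{rip}]{\dval{ip}}\leq\nsqn{\xiN[N^\ddagger]{\dval{nhip}}(\rt)}{\dval{rip}}
\leq\nsqn{\xiN[N^\dagger]{\dval{nhip}}(\rt)}{\dval{rip}}=\dval{rsn}\decremented<\dval{rsn},
\]
the last equality because \dval{rip} is invalid at \dval{nhip} in $N^\dagger$ and $\dval{rsn}\geq 1$. This is exactly $\sq[\dval{rip}]{\dval{ip}}<\dval{rsn}$, finishing part~(1) and with it part~(2). I expect the FIFO/ordering step to be the delicate point; should a purely state-based argument be preferred, the same conclusion can be recast as a transition invariant established by \hyperlink{induction-on-reachability}{induction on reachability}, recording that \dval{ip}'s valid entry via \dval{nhip} always predates \dval{nhip}'s invalidation of \dval{rip}.
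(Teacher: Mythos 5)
Your proof is correct and follows essentially the same route as the paper's: trace the RERR back to a cast by the next hop via \Prop{preliminaries}, \Prop{ip=ipc} and \Prop{starcastrerr}; use the no-self-entry hypotheses to conclude $\dval{nhip}\neq\dval{rip}$; trace \dval{ip}'s valid entry back to a RREQ/RREP cast by \dval{nhip} and apply \Prop{msgsending}; and close the gap with exactly the FIFO-ordering argument and monotonicity (\Thm{state_quality}) that the paper uses. The only cosmetic deviations are that you discharge part~(2)'s RERR case directly from the Resolution~(8\ref{amb:8f}) filter rather than from part~(1), and you obtain the final strict inequality via \Prop{rte}(\ref{it_e}) applied to \dval{ip}'s own entry, where the paper instead derives $\fnsqn_{N^\dagger}^{\dval{nhip}}(\dval{rip})\geq 1$ from \Prop{dsn increase} and Invariant~\eqref{inv:starcast_sqni}.
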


\pagebreak[3]
\begin{proofNobox}~
\newcommand{\starrsn}{\dval{rsn}^{\star}}
\newcommand{\starsip}{\dval{sip}^{\star}}
\begin{enumerate}[(1)]
\item Suppose Line~\ref{rerr:line2} of Pro.~\ref{pro:rerr} is executed in state $N$, and let
  $(\dval{rip},\dval{rsn}) \in \xiN{\dval{ip}}(\dests)$ with \plat{$\dval{rip}\in\akd{\dval{ip}}$}
  and \plat{$\dval{nhip}:=\nhp[\dval{rip}]{\dval{ip}}=\xiN{\dval{ip}}(\sip)$}.
        The values \plat{$\xiN{\dval{ip}}(\dests)$} and
        \plat{$\xiN{\dval{ip}}(\sip)$} stem from a received route 
	error message (cf.\ Lines~\ref{aodv:line2} and~\ref{aodv:line16} of Pro.~\ref{pro:aodv}).
	By \Prop{preliminaries}\eqref{it:preliminariesi}, a transition labelled
	$\colonact{R}{\starcastP{\rerr{\destsc}{\ipc}}}$ with $\destsc:=\xiN[N]{\dval{ip}}(\dests)$
	and $\ipc:=\xiN{\dval{ip}}(\sip)$ must have occurred before, say in state $N^\dagger\!$.
	By \Prop{ip=ipc}, the node casting this message is \plat{$\ipc=\xiN{\dval{ip}}(\sip)=\dval{nhip}$}.
By Invariant~\eqref{inv:starcast_rerr} we have
\plat{$\dval{rip}\in\fnikD_{N^\dagger}^{\dval{nhip}}$} and 
\plat{$\dval{rsn} = \fnsqn_{N^\dagger}^{\dval{nhip}}(\dval{rip})$}. 
Since (invalid) self-entries cannot occur,\footnote{When using Resolution (6\hyperlink{amb:6b}{b}), but not in
  combination with (5\hyperlink{amb:5a}{a}) and (2\ref{amb:2d}), invalid self-entries cannot occur by Invariant~\Eq{inv_self} in combination with
  \Prop{invarianti}(\ref{prop:invarianti_itemii}); otherwise under Resolutions (5\ref{amb:5b}) or (5\ref{amb:5c}) self-entries
  cannot occur at all.}
 it follows that $\dval{nhip}\neq\dval{rip}$.

Since \plat{$\dval{rip}\in\akd{\dval{ip}}$}, the last function call prior to state $N$ that created
or updated this valid routing table entry of node $\dval{ip}$, apart from an update of the
precursors only, must have been a call $\upd{*}{(\dval{rip},\starrsn,*,*,*,\starsip,*)}$,
where one of the first five clauses in the definition of \hyperlink{update}{$\fnupd$} was applied.
Let $N^\star$ be the state in which this call was made. Then $\starsip=\dval{nhip}$.
We consider all possibilities for this call.
\begin{description}
\item[Pro.~\ref{pro:aodv}, Lines~\ref{aodv:line10}, \ref{aodv:line14}, \ref{aodv:line18}:]
	The entry $\xiN[N^\star]{\dval{ip}}(\sip\comma0\comma\unkno\comma\val\comma1\comma\sip\comma\emptyset)$ is used for the update; 
	its next hop is $\xiN[N^\star]{\dval{ip}}(\sip)=\starsip=\dval{nhip}$ and its destination $\xiN[N^\star]{\dval{ip}}(\sip)=\dval{rip}$.
	This contradicts the conclusion that $\dval{nhip}\neq\dval{rip}$, and thus these cases cannot apply.
\item[Prop.~\ref{pro:rreq}, Line~\ref{rreq:line6}:]
 The update has the form $\xi_{N^\star}^{\dval{ip}}(\upd{\rt}{(\oip,\osn,\kno,\val,\hops+1,\sip,\emptyset)})$.
 Hence one of the first four clauses in the definition of $\fnupd$ was used, with
  $\xi_{N^\star}^{\dval{ip}}(\oip)=\dval{rip}$,
   $\xi_{N^\star}^{\dval{ip}}(\osn)=\sq[\dval{rip}]{\dval{ip}}$
 and 
 \plat{$\xi_{N^\star}^{\dval{ip}}(\sip)=\starsip=\dval{nhip}$}.
        The values $\xi_{N^\star}^{\dval{ip}}(\oip)$, $\xi_{N^\star}^{\dval{ip}}(\osn)$ and $\xi_{N^\star}^{\dval{ip}}(\sip)$ stem from a received
	RREQ message (cf.\ Lines~\ref{aodv:line2} and~\ref{aodv:line8}	of Pro.~\ref{pro:aodv}).
	 By \Prop{preliminaries}\eqref{it:preliminariesi}, a transition
        $\colonact{R}{\starcastP{\rreq{*}{*}{*}{*}{*}{\xi_{N^\star}^{\dval{ip}}(\oip)}{\xi_{N^\star}^{\dval{ip}}(\osn)}{\xi_{N^\star}^{\dval{ip}}(\sip)}}}$
        must have occurred before, say in state $N^\ddagger\!$.
  By \Prop{ip=ipc} the node casting this message is \plat{$\xi_{N^\star}^{\dval{ip}}(\sip)=\dval{nhip}$}.
  By Invariant~\eqref{inv:starcast_ii}, using that $\dval{nhip}\neq\dval{rip}$, we have\vspace{1pt}
  $\xi_{N^\star}^{\dval{ip}}(\osn) < \fnsqn_{N^\ddagger}^{\dval{nhip}}(\dval{rip})$
  or $\xi_{N^\star}^{\dval{ip}}(\osn) = \fnsqn_{N^\ddagger}^{\dval{nhip}}(\dval{rip})$ and
  \plat{$\fnstatus_{N^\ddagger}^{\dval{nhip}}(\dval{rip})=\val$}.
  In either case \plat{$\xi_{N^\star}^{\dval{ip}}(\osn) \leq \fnnsqn_{N^\ddagger}^{\dval{nhip}}(\dval{rip})$}.
  Since node $\dval{ip}$ handled the incoming RREQ message prior to the above-mentioned RERR message,
  the RREQ message was entered earlier in the FIFO queue of node \dval{ip} and hence transmitted earlier
  by node $\dval{nhip}$. So $N^\ddagger$ is prior to $N^\dagger\!$. We obtain
  \[\sq[\dval{rip}]{\dval{ip}}=\xi_{N^\star}^{\dval{ip}}(\osn)\leq\fnnsqn_{N^\ddagger}^{\dval{nhip}}(\dval{rip})
  \leq \fnnsqn_{N^\dagger}^{\dval{nhip}}(\dval{rip}) < \fnsqn_{N^\dagger}^{\dval{nhip}}(\dval{rip})=\dval{rsn}\ ,\]
where the first, second and last step have been established before; the third uses \Thm{state_quality},
and the penultimate step follows from the definition of net sequence numbers and
  \[\fnsqn_{N^\dagger}^{\dval{nhip}}(\dval{rip}) \geq
  \fnsqn_{N^\ddagger}^{\dval{nhip}}(\dval{rip})\geq \xi_{N^\star}^{\dval{ip}}(\osn)\geq 1\ ,\]
    which follows from \Prop{dsn increase} and Invariant~\eqref{inv:starcast_sqni}.
\item[Prop.~\ref{pro:rrep}, Line~\ref{rrep:line5}:]
                  The proof is similar to the one of Pro.~\ref{pro:rreq}, Line~\ref{rreq:line6}, the
                  main difference being that the information stems from an incoming RREP message; instead 
                  of $\oip$ and $\osn$ we use $\dip$ and $\dsn$, and instead of
                  Invariants~\eqref{inv:starcast_ii} and~\eqref{inv:starcast_sqni}
                  we use Invariants~\eqref{inv:starcast_iv} and~\eqref{inv:starcast_sqnii}.
\end{description}

\item We check all calls of $\fninv$.
\begin{description}
\item[Pro.~\ref{pro:aodv}, Line~\ref{aodv:line32}; Pro.~\ref{pro:pkt}, Line~\ref{pkt2:line10}; Pro.~\ref{pro:rreq}, Lines~\ref{rreq:line18}, \ref{rreq:line30}; Pro.~\ref{pro:rrep}, Line~\ref{rrep:line18}:]~\\
By construction of {\dests} (right before the invalidation call) if
$(\dval{rip},\dval{rsn})\mathbin\in\xiN{\dval{ip}}(\dests)$ then
\plat{$\dval{rsn}=\inc{\sq[\dval{rip}]{\dval{ip}}}$}.
\item[Pro.~\ref{pro:rerr}, Line~\ref{rerr:line5}:]
Immediately from (1).
\endbox
\end{description}

\end{enumerate}
\end{proofNobox}
By this proposition, Resolutions (8\ref{amb:8a})--(8\ref{amb:8f}) behave the same if non-optimal self-entries are forbidden.
Hence, by using \Cor{storing own sn as selfentry}, we obtain the following result.

\begin{cor}\rm\label{cor:loop free without self-entries}
Assume an interpretation of AODV that takes one of the Resolutions (2\ref{amb:2a},\,3c), (2\ref{amb:2c},\,3a) or
(2\ref{amb:2d},\,3a) in combination with any resolution of Ambiguities 5, 6 and 8,
but not (5\hyperlink{amb:5a}{a}) and (6{a}) 
at the same time and not (2\ref{amb:2d}) with (5\hyperlink{amb:5a}{a}) and (6\hyperlink{amb:6b}{b}); in all
other ways it follows our default specification of Sections~\ref{sec:types} and~\ref{sec:modelling_AODV}.
This interpretation is loop free and route correct.\endbox
\end{cor}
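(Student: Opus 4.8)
The plan is to assemble \Cor{loop free without self-entries} by combining three results already established in the excerpt: \Cor{storing own sn as selfentry}, \Prop{no-self-invalidate}, and the analysis following \Prop{no-self-invalidate}. The key observation is that \Cor{storing own sn as selfentry} already proves loop freedom and route correctness for every interpretation matching the stated resolution profile \emph{under Resolution~(6\hyperlink{amb:6b}{b})} and \emph{under Resolution~(8\ref{amb:8f})}. The remaining work is therefore to extend that conclusion to all resolutions of Ambiguities~5, 6 and~8 permitted by the hypothesis, and the engine for doing so is the ``behave-the-same'' statement established by \Prop{no-self-invalidate}.

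First I would note which interpretations the hypothesis admits and verify that in each of them \emph{non-optimal self-entries are forbidden}. This is exactly the side condition appearing in \Prop{optimal} and in the discussion on Page~\pageref{pro:disallow_selfentries_rrep}: under Resolution~(6\hyperlink{amb:6b}{b}) the own sequence number is stored in an optimal self-entry which (by \Prop{optimal}) can never be invalidated nor overwritten by a non-optimal one; under Resolutions~(5\ref{amb:5b}) or~(5\ref{amb:5c}) self-entries cannot be created at all. The excluded combinations---``(5\hyperlink{amb:5a}{a}) and (6a) at the same time'' and ``(2\ref{amb:2d}) with (5\hyperlink{amb:5a}{a}) and (6\hyperlink{amb:6b}{b})''---are precisely those for which \Prop{optimal} does not guarantee the absence of non-optimal self-entries, so excluding them is exactly what I need to invoke the no-self-entry analysis.

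Next I would apply \Prop{no-self-invalidate}(2), which states that whenever such an interpretation makes a call $\inv{\rt}{\dests}$, one has $\max(\inc{\sq[\dval{rip}]{\dval{ip}}},\dval{rsn}) = \max(\sq[\dval{rip}]{\dval{ip}},\dval{rsn}) = \dval{rsn}$ for every $(\dval{rip},\dval{rsn})\in\xiN{\dval{ip}}(\dests)$. This equality is the crux: it shows that the three candidate update values for the destination sequence number during invalidation---namely \highlight{\dval{rsn}}, \highlight{$\max(\pi_2(r),\dval{rsn})$} and \highlight{$\max(\inc{\pi_2(r)},\dval{rsn})$}---all coincide, and that the guards \highlight{$\sqn{\rt}{\rip}<\rsn$}, \highlight{$\sqn{\rt}{\rip}\leq\rsn$} and the absence of any guard all select the same entries. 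Consequently Resolutions (8\ref{amb:8a})--(8\ref{amb:8f}) produce identical behaviour on every reachable transition, as the paragraph after \Prop{no-self-invalidate} asserts. So it suffices to establish the corollary for a single representative resolution of Ambiguity~8, for which I would take (8\ref{amb:8f}).

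Finally I would close the argument by reduction to \Cor{storing own sn as selfentry}. For interpretations using Resolution~(6\hyperlink{amb:6b}{b}) together with one of (2\ref{amb:2a},\,3c), (2\ref{amb:2c},\,3a) or (2\ref{amb:2d},\,3a) and any of (5\hyperlink{amb:5a}{a})--(5\ref{amb:5c}), \Cor{storing own sn as selfentry} already delivers loop freedom and route correctness under (8\ref{amb:8f}); by the behavioural equivalence just established, the same holds under any of (8\ref{amb:8a})--(8\ref{amb:8f}). For interpretations using (5\ref{amb:5b}) or (5\ref{amb:5c}) with Resolution~(6a), the invariants of \Sect{invariants} hold with the same proofs (as recorded when those resolutions were introduced), and again (8\ref{amb:8f}) gives the result which transfers to all of Ambiguity~8. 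I expect the main obstacle to be purely bookkeeping: checking that the excluded combinations in the hypothesis of \Cor{loop free without self-entries} line up exactly with the exclusions needed to guarantee no non-optimal self-entries (the hypothesis of \Prop{no-self-invalidate}), and confirming that the orthogonality remark following \Cor{storing own sn as selfentry}---that the proof modifications induced by the various resolutions do not interfere---covers every admissible combination. No new invariant needs to be proved; the entire corollary is a combinatorial splicing of \Cor{storing own sn as selfentry} with the uniform-behaviour conclusion of \Prop{no-self-invalidate}.
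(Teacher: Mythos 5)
Your proposal is correct and takes essentially the same route as the paper: the paper derives this corollary in one line by combining \Prop{no-self-invalidate} (showing that Resolutions (8\ref{amb:8a})--(8\ref{amb:8f}) behave identically once non-optimal self-entries are excluded) with \Cor{storing own sn as selfentry}, exactly as you do. If anything, your write-up is slightly more explicit than the paper's terse justification, since you also spell out that the (5\ref{amb:5b})/(5\ref{amb:5c})-with-(6a) cases rest on the earlier remarks accompanying Ambiguity~5 rather than on \Cor{storing own sn as selfentry} itself, a point the paper leaves implicit.
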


\subsubsection[Further Ambiguities]{\hypertarget{sss924}{Further Ambiguities}}\label{sssec:further}
\paragraph[Packet Handling for Unknown Destinations]{\hypertarget{sss924a}{\amb: Packet Handling for Unknown Destinations}}~

\noindent
In rare situations, often caused by node reboots, it may be possible that a node receives a data packet from another node for a destination for which it has no entry in its routing table at all. Such a situation cannot occur in our specification---this is a direct consequence of~\Prop{inv_nsqn}.
Nevertheless, since our specification given in \Sect{modelling_AODV} is intended to model {\em all} possible scenarios which might occur, 
we have to decide which rules AODV should follow. 
The RFC states that an error message should be generated
{\tt if [a node] gets a data packet destined to a node for which it does not have an active route}~\cite[Sect.~6.11]{rfc3561}.
It also states that the sequence number for the unreachable destination, to be listed in the error message, should be taken 
from the routing table and that
the {\tt neighboring node(s) that should receive the RERR are
  all those that belong to a precursor list of at least one of the unreachable
   destination(s)}. In this case neither the sequence number nor 
   the list of precursors are available.
There are two possible solutions:
\begin{enumerate}[(9a)]
\item\label{amb:9a} no error message is generated, since no information is available in the routing table---in
\Sect{modelling_AODV}, we follow that approach (\Pro{pkt}, Lines~\ref{pkt2:line21}--\ref{pkt2:line22});
\item\label{amb:9b} the error message is broadcast and the sequence number is set to unknown ($0$)---formalised in \Pro{pkt_unk_dest}. 
  This resolution makes sense only when using Resolutions (8\ref{amb:8c}) or (8\ref{amb:8d}) of the invalidation process:
  Resolutions (8\ref{amb:8b}), (8\ref{amb:8e}) and (8\ref{amb:8f}) would systematically ignore the broadcasted error message---so that there is no
  point in sending it---whereas with Resolution (8\ref{amb:8a}) this obviously leads to a decrease in destination
  sequence numbers and routing loops.
\end{enumerate}
  \algsetup{linenodelimiter=.,linenosize=\tiny}
  \begin{algorithm}[H]
    {\footnotesize
      \caption{Routine for packet handling (Resolution (9b))}
      \label{pro:pkt_unk_dest}
      \begin{algorithmic}[1]
\renewcommand{\algorithmicif}{$+$ \textbf{[}} %
\DEFPROCESS{\PKT}{\data\comma\dip\comma\oip\,\comma\,\ip\comma\sn\comma\rt\comma\rreqs\comma\queues}
\STATE\dots \COMMENT{Lines~\ref{pkt2:line2}--\ref{pkt2:line20} of Pro.~\ref{pro:pkt}}
\IF[route not in \rt]{$\dip\not\in\ikD{\rt}$}
	\broadcast{\rerr{\{(\dip,\,0)\}}{\ip}}\ .
	\aodv{\ip}{\sn}{\rt}{\rreqs}{\queues}
\ENDIFii
\STATE\dots \COMMENT{Lines~\ref{pkt2:line23}--\ref{pkt2:line24} of Pro.~\ref{pro:pkt}}

	\end{algorithmic}
    }
  \end{algorithm}

\noindent
In \Sect{invariants} we have shown that Resolution (9a) is loop free.
\hypertarget{amb9modifications}{Resolution (9b) is loop free as well, since all invariants of \Sect{invariants} and their proofs
  remain valid, with the following modifications:}

\begin{itemize}
\item \Prop{starcastrerr} (Invariant~\eqref{inv:starcast_rerr}) has to be weakened: it holds only for pairs $(\ripc,\rsnc)$ with \mbox{$\rsnc\mathbin>0$}.
  In the adapted proof there is an extra case to consider, but there $\rsnc=0$. This proposition is only used in the proof of \Prop{inv_nsqn}; we show below that the weaker form is sufficient.
\item In the proof of \Prop{invarianti}\eqref{prop:invarianti_dests} there is an extra case to consider, which is trivial.
\item In the last case of the proof of \Prop{inv_nsqn}, when using
  Invariant~\eqref{inv:starcast_rerr},
  there is an extra case to consider, namely that $\dval{dsn}=0$. In that case surely
  $\dval{dsn}\decremented =0\leq \nsq{\dval{nhip}}$, which we needed to establish.
\end{itemize}
The proof of \Prop{no-self-invalidate} is no longer valid when using Resolution (9b) of the packet
handling for unknown destinations, since it uses a version of Invariant~\eqref{inv:starcast_rerr} that no longer holds.
This indicates that Resolutions (8\ref{amb:8a}), (8\ref{amb:8b}) and (8\ref{amb:8c}) of the invalidation process are not necessarily
compatible with Resolution~(9b), even if non-optimal self-entries are forbidden.\footnote{It turns out
  that Resolutions (8\ref{amb:8b}) and (8\ref{amb:8c}) are compatible with (9b) after all; we skip the proof of this claim.}
On the other hand, it can be argued that in Resolution (9a) the originator node that initiated the
sending of the data packet might send more packets, which increases network traffic without delivering the data.

\paragraph[Setting the Own Sequence Number when Generating a RREP Message]{\hypertarget{sss924b}{\amb: Setting the Own Sequence Number when Generating a RREP Message}}~

\noindent
In the RFC, the way in which a destination of a route request updates its own sequence number before
initiating a route reply is described in two ways:
\begin{quote}\raggedright\small
``{\tt Immediately before a destination node originates a RREP in
      response to a RREQ, it MUST update its own sequence number to the
      maximum of its current sequence number and the destination
      sequence number in the RREQ packet.\rm''\\\hfill \cite[Sect.~6.1]{rfc3561}}
 \end{quote}
\begin{quote}\raggedright\small
``{\tt If the generating node is the destination itself, it MUST increment
   its own sequence number by one if the sequence number in the RREQ
   packet is equal to that incremented value.  Otherwise, the
   destination does not change its sequence number before generating the
   RREP message.\rm''\hfill \cite[Sect.~6.6.1]{rfc3561}}
 \end{quote}
In most cases these two descriptions yield the same result (because the destination
sequence number in the RREQ message is usually not more than 1 larger than the destination's own
sequence number). However, this is not guaranteed.

\begin{figure}[ht]
\vspace{-2ex}
\begin{exampleFig}{Failing to update the own sequence number
      before issuing a route reply}{fig:increaseSNfollowingRFC}
\FigLine[xslxsr]%
  {The initial state;\\$d$ established a route to $s$ via RREQ-RREP cycle.}%
  {fig/ex_increaseSNfollowingRFC_1}{}
  {The topology changes;\\ nodes $s$ and $d$ invalidate entries.}
  {fig/ex_increaseSNfollowingRFC_2}{}
\FigLine[xslxsr]%
  {The topology changes;\\$a$ broadcasts a new RREQ message destined to $x$.}%
  {fig/ex_increaseSNfollowingRFC_3}{}
  {The topology changes again;\\ node $s$ invalidates entries.}
  {fig/ex_increaseSNfollowingRFC_4}{}
\FigLine[xslxsr]%
  {The topology changes;\\$s$ broadcasts a new RREQ message destined to $d$.}%
  {fig/ex_increaseSNfollowingRFC_5}{}
  {$d$ unicasts a RREP back to $s$\\no update occurs at $s$.}
  {fig/ex_increaseSNfollowingRFC_6}{}
\end{exampleFig}
\vspace{-2ex}
\end{figure}

This is illustrated in \Fig{increaseSNfollowingRFC}.
In the initial state node $s$ has a route to $d$, with destination
sequence number ($2$) equal to $d$'s own sequence number; this is
default behaviour of AODV\@. Due to a link break between $s$
and $d$, node $s$ increments its destination sequence number for $d$
when invalidating the entry (\Fig{increaseSNfollowingRFC}(b)). Afterwards, in
\Fig{increaseSNfollowingRFC}(c), the link comes up again, and when
$d$ forwards a RREQ message (from another node $a$, destined to an
arbitrary node $x$ that is not in the vicinity) to its neighbour
$s$, node $s$ validates its $1$-hop route to $d$, without changing its
destination sequence number. These events (link break -- invalidation --
link coming back up) are repeated at least once (Part (d)),
resulting in a destination sequence number for $d$ at node $s$ that is at least $2$ higher than $d$'s own sequence number. Now, when $s$ searches
for a route to $d$ (\Fig{increaseSNfollowingRFC}(e)), $d$ will not update its own sequence number when
sending a route reply to $s$, so the route reply will have outdated
information (a too low sequence number) from the perspective of $s$,
and thus will be ignored by $s$. No matter how often $s$ sends a new
route request to $d$, it will never receive an answer that is good enough
to restore its routing table entry to $d$.

  \algsetup{linenodelimiter=.,linenosize=\tiny}
  \begin{algorithm}[H]
    {\footnotesize
      \caption{RREQ handling (Resolution (10b))\label{amb:10b}}
      \label{pro:rreq_inc_sqn}
      \begin{algorithmic}[1]
\DEFPROCESS{\RREQ}{\hops\comma\rreqid\comma\dip\comma\dsn\comma\dsk\comma\oip\comma\osn\comma\sip\,\comma\,\ip\comma\sn\comma\rt\comma\rreqs\comma\queues}
	\STATE\dots \COMMENT{Lines~\ref{rreq:line2}--\ref{rreq:line8} of Pro.~\ref{pro:rreq}}
	\PAR
		\IF[this node is the destination and the sequence number has to be updated]{$\dip=\ip \ans \inc{\sn}=\dsn$}
			\UPD{\sn:=\inc{\sn}}	\COMMENT{update the sqn of \ip}							
			\STATE\dots \COMMENT{Lines~\ref{rreq:line13}--\ref{rreq:line19} of Pro.~\ref{pro:rreq}}
		\ELSIF[this node is the destination and the sequence
	number need no update]{$\dip=\ip \ans \inc{\sn}\not=\dsn$}							
			\STATE\dots \COMMENT{Lines~\ref{rreq:line13}--\ref{rreq:line19} of Pro.~\ref{pro:rreq}}
		\ELSIF[this node is not the destination node]{$\dip\not=\ip$}
			\STATE\dots \COMMENT{Lines~\ref{rreq:line21}--\ref{rreq:line36} of Pro.~\ref{pro:rreq}}
	\ENDIFii
	\ENDPAR

	\end{algorithmic}
    }
  \end{algorithm}

In our specification we resolved this contradiction by following Sect.~6.1 of the RFC, in defiance
of Sect.~6.6.1. The alternative is obtained by modifying the RREQ handling process as indicated in Pro.~\ref{pro:rreq_inc_sqn}.
As the above example shows, this alternative leads to a severely
handicapped version of AODV, in which certain routes (in the example
the one from $s$ to $d$) can not be established.\footnote{\raggedright%
On the IETF MANET mailing list
(\url{http://www.ietf.org/mail-archive/web/manet/current/msg02589.html})
I.~Chakeres proposes a third resolution of this ambiguity, namely
``{\tt Immediately before a destination node issues a route reply in
  response to a RREQ, it MUST update its own sequence number to the
  maximum of its current sequence number and the destination
  sequence number in the RREQ packet plus one (1).}''\newline
As this is not a possible reading of the RFC, it ought to be construed as proposal for improvement
of AODV.}

\subsubsection{Further Assumptions}

During the creation of our specification (cf.\ \Sect{modelling_AODV}), we did not only come along 
some ambiguities, we also found some unspecified cases---we were forced to 
specify these situations on our own.

\paragraph{Recording and Invalidating the Truly Unknown Sequence Number}~\\
When creating a routing table entry to a new destination---not already present in the
routing table---for which no destination sequence number is known (i.e.\ in response to an AODV
control message from a neighbour; following Lines~\ref{aodv:line10},~\ref{aodv:line14}
and~\ref{aodv:line18} of \Pro{aodv}), the RFC does not stipulate how to fill in the
destination-sequence-number field in the new entry. It does say
\begin{quote}\raggedright\small``{\tt
   The sequence number is either determined from the
   information contained in the control packet, or else the valid
   sequence number field is set to false.}\makebox[0pt]{''}\\\hfill\hfill\cite[Sect.~6.1]{rfc3561}
 \end{quote}
Accordingly, the sequence-number-status flag in the entry is set to \unkno,
but that does not tell what to fill in for the destination sequence number itself.
Here, following the implementation AODV-UU~\cite{AODVUU}, we use the special value $0$,
\index{sequence number!truly unknown}%
indicating a \emph{truly unknown destination sequence number}.
As this value does not represent a regular sequence number, we do not increment it when
invalidating the entry.

\paragraph[Packet Handling]{Packet Handling}~\\
Even though not specified in the RFC, our model of AODV includes a
mechanism for handling data packets---this is necessary
to trigger any AODV activity.  A data packet injected at a node $s$ by
a client of the protocol (normally the application layer in the
protocol stack) for delivery at a destination $d$ towards which $s$ has
no valid route, is inserted in a queue of packets for $d$ maintained
by node~$s$. In case there is no queue for $d$ yet, such a queue is
created and a route discovery process is initiated, by means of a new
route request. As long as that process is pending, no new route
request should be issued when new packets for $d$ arrive; for it could
be that packets for $d$ are injected by the application layer at a
high rate, and sending a fresh route request for each of them would
flood the protocol with useless RREQ messages. For this reason we
await the route reply corresponding to the request, or anything else
that creates a route to $d$. Afterwards packets to $d$ can be send,
and the queue is emptied out. In case the route to $d$ is invalidated
before the queue is empty, it is appropriate to initiate a new route
discovery process, by generating a fresh route request.  To this end we
\index{request-required flag}%
created the ``request-required'' flag, one for each queue, that is set
when the route to the destination is invalidated, and unset when a
new route request has been issued. The only sensible way we see to
omit such a flag would be to use the non-existence of a queue of data
packets for $d$ as the trigger to initiate a route request when a data
packet for $d$ is posted at node $s$.  But for that to work one
would have to drop the entire queue of packets waiting for
transmission towards $d$ when the route to $d$ is invalidated, just as
packets are dropped when an intermediate node on the path towards $d$
loses its connection to the next hop.

\paragraph{Receiving a RREP Message}~\\
When an (intermediate) node receives a RREP message destined for a node $s$, it might happen that the node
has an invalid routing table entry for $s$ only.
The RFC does not consider this situation; however, this case \emph{can} occur and must be specified. For our
specification we decided that under these circumstances 
the AODV control message is lost and no error message is generated.

\subsection{Implementations}\label{ssec:implementations}

To show that
the ambiguities we found in the RFC and the associated problems are not only theoretically
driven, but {\em do} occur in practice, we analyse five different open source implementations of AODV:
\begin{itemize}
\item {\em AODV-UU}~\cite{AODVUU} is an RFC compliant implementation of AODV, developed at Uppsala University. {\url{http://aodvuu.sourceforge.net/}}
\item {\em Kernel AODV}~\cite{AODVNIST} is developed at NIST and is another RFC compliant implementation of AODV. {\url{http://w3.antd.nist.gov/wctg/aodv_kernel/}}
\item {\em AODV-UIUC}~\cite{Kawadia03} (University of Illinois at Urbana-Champaign) is an implementation that is based on an early draft (version 10) of AODV. {\url{http://sourceforge.net/projects/aslib/}}
\item {\em AODV-UCSB}~\cite{CB04} (University of California, Santa-Barbara) is another
  implementation based on an early draft (version 6).
  {\url{http://moment.cs.ucsb.edu/AODV/aodv-ucsb-0.1b.tar.gz}}
\item {\em AODV-ns2\/} is an AODV implementation in the ns2 network simulator~\cite{NS2}, originally developed
  by the CMU Monarch project and improved upon later by S. Das and E. Belding-Royer (the authors of the AODV RFC \cite{rfc3561}). It is based on an early draft (version 8) of AODV. It is frequently used by academic and industry researchers to simulate AODV. 
\url{http://ns2.sourcearchive.com/documentation/2.35~RC4-1/aodv_8cc-source.html}
\end{itemize}
Even though the latter three implementations of AODV are not RFC compliant, they {\em do} capture  the main aspects of the AODV protocol, as specified in the RFC~\cite{rfc3561}. As we have shown in the previous section, implementing the AODV protocol based on the RFC specification does not necessarily guarantee loop freedom. Therefore, we look at these five concrete AODV implementations to determine whether any of them is susceptible to routing loops.
AODV-UU, Kernel AODV and AODV-UIUC maintain an invalidation procedure that conforms to Resolution (8\ref{amb:8a}),
whereas AODV-UCSB and AODV-ns2 follow Resolution (8\ref{amb:8b}). Since both resolutions give rise to routing
loops when used in combination with non-optimal self-entries,
 we examine the code of these implementations to see if routing loops
such as the one described in \Fig{loopsfromselfentries}
occur. The results of this analysis are summarised in Table~\ref{tb:analysis}. 

\begin{table}[bht]
\vspace*{1ex}
\centering
{\begin{tabular}{@{}|l|p{0.68\columnwidth}|@{}}
\hline
{\bf Implementation} & {\bf Analysis}\\
\hline\hline
AODV-UU~\cite{AODVUU} 		&Loop free, since self-entries are explicitly excluded.\\
\hline
Kernel AODV~\cite{AODVNIST} 	&Loop free, due to optimal self-entries.\\
\hline
AODV-UIUC~\cite{Kawadia03} 	&{Yields routing loops, through sequence number reset.}\\
\hline
AODV-UCSB~\cite{CB04}		&{Yields routing loops, through sequence number reset.}\\
\hline
AODV-ns2		&Yields routing loops, since it implements Resolution (8\ref{amb:8b}) of the
                        invalidation procedure presented in \SSSect{interpretation_invalidate} and
                        does allow self-entries.\\
\hline
\end{tabular}}
\caption[Analysis of AODV implementations]
    {\em Analysis of AODV implementations}
\label{tb:analysis}
\end{table}

In AODV-UU, self-entries are never created because a check is always performed on an incoming RREP
message to make sure that the destination IP address is not the same as the node's own IP address,
just as in Resolution (5\ref{amb:5b}). By \Cor{loop free without self-entries},
this interpretation of the RFC is loop free. 

In Kernel AODV, an optimal self-entry is always maintained by every node in the network, just as
in Resolution (6\hyperlink{amb:6b}{b}).
By \Cor{loop free without self-entries}, this interpretation of the RFC is loop free.

Both AODV-UIUC and AODV-UCSB allow non-optimal self-entries to occur in nodes (Resolution (5\hyperlink{amb:5a}{a}) of
\hyperlink{sss922a}{Ambiguity 5}).
These are generated based on information contained in received RREP messages. While
self-entries are allowed, the processing of RERR messages in AODV-UIUC and AODV-UCSB does not adhere
to the RFC specification (or even the draft versions that these implementation are based upon).
Due to this non-adherence, we are unable to re-create the routing loop example of \Fig{loopsfromselfentries}.
However, if both AODV-UIUC and AODV-UCSB were to strictly follow the RFC specification
with respect to the RERR processing, loops would have been created.

Even though the routing loop example of \Fig{loopsfromselfentries} could not be recreated in AODV-UIUC or AODV-UCSB,
  both implementations allow a decrease of destination sequence numbers in \rtes to occur, by
  following Resolution (2\ref{amb:2b}).\footnote{AODV-ns2 follows Resolution
    (2\ref{amb:2a}), whereas AODV-UU follows (2\ref{amb:2d}). Kernel AODV is not compliant with the
    RFC in this matter and operates differently.} This
  gives rise to routing loops in the way described in \SSect{decreasingSQN}.

In AODV-ns2, self-entries are allowed to
occur in nodes. Unlike AODV-UIUC and AODV-UCSB, the processing of RERR messages
follows the RFC specification. However, whenever a node generates a RREQ message,
sequence numbers are incremented by two instead of by one as specified in the RFC\@.
We have modified the AODV-ns2 code such that sequence numbers are incremented by one whenever
a node generates a RREQ message, and are able to replicate the routing loop example
presented in \cite{AODVloop}\footnote{The example in \cite{AODVloop} is a simplification of the one in \Fig{loopsfromselfentries},
but is based on the interpretation of AODV without the sequence-number-status flag, following
Resolution~(2\ref{amb:2d}). The example of
\Fig{loopsfromselfentries} itself works equally well in the presence of that flag.} in the ns2 simulator, with the results
showing the existence of a routing loop between nodes $s$ and $x$. However, even if the code
remains unchanged and sequence numbers are incremented by two, AODV-ns2 can still
yield loops; the example is very similar to the one presented and only varies in subtle details.

In sum, we discovered not only that three out of five AODV implementations can produce routing loops, 
but also that there are essential differences between the various implementations in various
aspects of protocol behaviour. This is due to different interpretations of the RFC\@.

\subsection{Summary}\label{ssec:Interpretation_summary}
The following table summarises the ambiguities we discovered,
as well as their consequences.
The resolutions coloured red lead to unacceptable protocol behaviour, such as routing loops.
The white and green resolutions are all acceptable readings of the RFC\@; the green ones have been
chosen in our default specification of Sections~\ref{sec:types} and~\ref{sec:modelling_AODV}.
The section numbers refer to the RFC \cite{rfc3561}.
\begin{center}
\newcounter{tabcounti}
\newcounter{tabcountii}
\setcounter{tabcounti}{1}
\setcounter{tabcountii}{1}
\definecolor{lightred}{rgb}{1.0,0.8,0.8}
\definecolor{lightgreen}{rgb}{0.8,1.0,0.8}
\newcommand{\no}{\cellcolor{lightred}}
\newcommand{\yes}{\cellcolor{lightgreen}}
\newcommand{\nri}{\arabic{tabcounti}}
\newcommand{\nrii}{\nri\alph{tabcountii}\stepcounter{tabcountii}.}
{\small
\setlength\doublerulesep{4em}
\begin{longtable}{@{\,}l@{\,}|p{0.43\textwidth}|p{0.43\textwidth}|@{}}
\hline
\multicolumn{3}{|c|}{\cellcolor[gray]{0.8}\bf \hyperlink{sss921}{Updating Routing Table Entries}}\\
\hline
\multicolumn{3}{c}{}\\[-2.1ex]
\hline
\multicolumn{3}{|@{\,}l|}{\bf \nri. \hyperlink{sss921a}{Updating the Unknown Sequence Number in Response to a Route Reply}}\\
\hline
\nrii&\no the destination sequence number (DSN) is copied from the RREP message (Sect 6.7)&decrement of sequence numbers and loops\\
\cline{2-3}
\nrii&\yes routing table is not updated when the information that it has is ``fresher''
(Sect.\ 6.1) & does not cause loops; used in our specification\\
\cline{2-3}
\multicolumn{3}{c}{}\\[-2.1ex]
\hline
\multicolumn{3}{|@{\,}l|}{
    \stepcounter{tabcounti}\setcounter{tabcountii}{1}%
    \bf \nri. \hyperlink{sss921b}{Updating with the Unknown Sequence Number} (Sect.\ 6.5)}\\
\hline
\nrii&no update occurs & does not cause loops, but  opportunity to improve routes is missed
\\
\cline{2-3}
\nrii&\no overwrite any routing table entry by an update with an unknown DSN& decrement of sequence numbers and loops\\
\cline{2-3}
\nrii&\yes use the new entry with the old DSN & does not cause loops; used in our specification\\
\cline{2-3}
\nrii& use the new entry with the old DSN and DSN-flag & does not cause loops\\
\cline{2-3}
\multicolumn{3}{c}{}\\[-2.1ex]%
\hline
\multicolumn{3}{|@{\,}l|}{
    \stepcounter{tabcounti}\setcounter{tabcountii}{1}%
    \bf \nri. \bf \hyperlink{sss921c}{More Inconclusive Evidence on Dealing with the
    Unknown Sequence Number} (Sect.\ 6.2)}\\\hline
\nrii&\yes update when \emph{incoming} sequence number is unknown &
    supports Interpretations 2b or 2c above; used in our specification \\
\cline{2-3}
\nrii&\no update when \emph{existing} sequence number is \emph{marked
    as} unknown & decrement of sequence numbers and loops;
    implies 1a and 2a\\
\cline{2-3}
\nrii& update when no \emph{existing} sequence number is known
    & supports Interpretation 2a above \\
\cline{2-3}
\multicolumn{3}{c}{}\\[-2.1ex]
\hline
\multicolumn{3}{|@{\,}l|}{
    \stepcounter{tabcounti}\setcounter{tabcountii}{1}%
    \bf \nri. \hyperlink{sss921d}{Updating Invalid Routes}}\\
    \hline
\cline{2-3}
\nrii& \yes update an invalid route when the new route has the same
    sequence number (Sect.\ 6.1) & does not cause loops; used in our specification\\
\cline{2-3}
\nrii& \no do not update an invalid route when the new route has the same
    sequence number (Sect.\ 6.2) & results in handicapped version of AODV,
    in\newline which many broken routes will never be repaired.\\
\cline{2-3}
\multicolumn{3}{c}{}\\[-1.8ex]
\hline
\multicolumn{3}{|c|}{\cellcolor[gray]{0.8}\bf \hyperlink{sss922}{Self-Entries in Routing Tables}}\\
\hline
\multicolumn{3}{c}{}\\[-2.1ex]%
\hline
\multicolumn{3}{|@{\,}l|}{
    \stepcounter{tabcounti}\setcounter{tabcountii}{1}%
    \bf \nri. \hyperlink{sss922a}{(Dis)Allowing Self-Entries}}\\\hline
\nrii&\yes allow (arbitrary) self-entries& loop free if used with appropriate \fninv; used in our specification\\
\cline{2-3}
\nrii&disallow (non-optimal) self-entries; \newline if self-entries would be created, ignore message&does not cause loops\\
\cline{2-3}
\nrii&disallow (non-optimal) self-entries;\newline if self-entries would be created, forward message&does not cause loops\\
\cline{2-3}
\multicolumn{3}{c}{}\\[-2.1ex]
\hline
\multicolumn{3}{|@{\,}l|}{
    \stepcounter{tabcounti}\setcounter{tabcountii}{1}%
    \bf \nri. \hyperlink{sss922b}{Storing the Own Sequence Number}}\\\hline    
\nrii&\yes store sequence number as separate value&does not cause loops; used in our specification\\
\cline{2-3}
\nrii&store sequence number inside routing table&does not cause loops\\
\cline{2-3}
\multicolumn{3}{c}{}\\[-1.8ex]
\hline
\multicolumn{3}{|c|}{\cellcolor[gray]{0.8}\bf \hyperlink{sss923}{Invalidating Routing Table Entries}}\\
\hline
\multicolumn{3}{c}{}\\[-2.1ex]
\hline
\multicolumn{3}{|@{\,}l|}{
    \stepcounter{tabcounti}\setcounter{tabcountii}{1}%
    \bf \nri. \hyperlink{sss923a}{Invalidating Entries 
    in Response to a Link Break or Unroutable Data Packet} 
    \mbox{\small (Sect.\ 6.11)}}\\\hline    
\nrii&\yes``it'' refers to routing table entry&does not cause loops; used in our specification\\
\cline{2-3}
\nrii&\no ``it'' refers to DSN&loops\\
\cline{2-3}
\pagebreak[4]
\multicolumn{3}{c}{}\\[-2.1ex] \hline
\multicolumn{3}{|@{\,}l|}{
    \stepcounter{tabcounti}\setcounter{tabcountii}{1}%
    \bf \nri. \hyperlink{sss923b}{Invalidating Entries in Response to a Route Error Message}}\\\hline    
\nrii&\no copy DSN from RERR message (Sect.\ 6.11)&decrement of sequence numbers and loops\newline
(when allowing self-entries (Interpretation 5a))\\
\cline{2-3}
\nrii&\no no action if the DSN in the routing table is larger than the one in the RERR
  mess.\ (Sect.\ 6.1 \& 6.11)&loops (when allowing self-entries) \\
\cline{2-3}
\nrii&\no take the maximum of the DSN of the routing
    table and the one from the RERR message &
  loops (when allowing self-entries)\\
\cline{2-3}
\nrii& take the maximum of the increased DSN of the routing table and the one
  from the RERR mess.
 &does not cause loops\\
\cline{2-3}
\nrii& combine 8b and 8d
 &does not cause loops\\
\cline{2-3}
\nrii& \yes only invalidate if the DSN in the routing table is smaller than the one from
 the RERR message &does not cause loops; used in our specification\\
\cline{2-3}
\multicolumn{3}{c}{}\\[-2.1ex]
\hline
\multicolumn{3}{|c|}{\cellcolor[gray]{0.8}\bf \hyperlink{sss924}{Further Ambiguities}}\\
\hline
\multicolumn{3}{c}{}\\[-2.1ex]
\hline
\multicolumn{3}{|@{\,}l|}{
    \stepcounter{tabcounti}\setcounter{tabcountii}{1}%
    \bf \nri. \hyperlink{sss924a}{Packet Handling for Unknown Destinations} (Sect.\ 6.11)}\\\hline    
\nrii&\yes do nothing&the sender is not informed and keeps sending;\newline used in our specification\\
\cline{2-3}
\nrii& broadcast RERR message with unknown DSN&loop free if used with adequate \fninv\\
\cline{2-3}
\multicolumn{3}{c}{}\\[-2.1ex]
\hline
\multicolumn{3}{|@{\,}l|}{
    \stepcounter{tabcounti}\setcounter{tabcountii}{1}%
    \bf \nri. \hyperlink{sss924a}{Setting the Own Sequence Number when Generating a RREP Message}}\\\hline    
\nrii& \yes taking max (Sect.~6.1)&used in our specification\\
\cline{2-3}
\nrii& \no taking the ``conditional increment'' (Sect.~6.6.1)&loss of RREP message\\
\cline{2-3}
\multicolumn{3}{c}{}\\[-1ex]
\caption[Different interpretations and consequences of ambiguities in the RFC]
    {\em Different interpretations and consequences of ambiguities in the RFC}
\label{tb:diff_interaction_of_rfc}
\end{longtable}
}
\end{center}
\vspace{-3.8ex}
The above classification of ambiguities and their resolutions can be used to calculate the number of possible readings of the RFC\@.
The table shows that the resolution for Ambiguity 3 is uniquely determined by the choice of resolutions for Ambiguities 1 and 2; 
except for the case of taking (1a) in combination with (2a); here Resolutions (3b) and (3c) are possible.
Hence Ambiguity 3 only adds one new variant.
In sum we have 
$[(2\times 4)+1]\times 2 \times 3 \times 2 \times 2 \times 6
\times 2 \times 2 = 5184$
possible interpretations of the AODV RFC\@.
Only $\big([(1\times 3)+0]\times 1 \times [(3 \times 2 \times 1 \times 3
\times 2) + (5 \times 1 \times 5)] \times 1\big)  -  5 = 178$
are loop free and without major flaws. (Here the first ``5'' refers to all
resolutions of Ambiguities 5 and 6 except for the combination of (5a) and (6a);
the second ``5'' refers to the first 3 resolutions of Ambiguity 8 and both resolutions of Ambiguity 9,
except for the combination of (8a) and (9b); and the last ``5'' deducts the combinations of (6b) with
(2d), (5a) and one of the second ``5'').

All these ambiguities, missing details and misinterpretations of the RFC show that the specification of
a reasonably rich protocol such as AODV cannot be described by simple
(English) text; is  {\em has to be done} using formal methods in a precise way.

\newcommand{\Fi}{\mbox{F$_1$}\xspace}
\newcommand{\Fii}{\mbox{F$_2$}\xspace}
\newcommand{\Fiii}{\mbox{F$_3$}\xspace}
\section{Formalising Temporal Properties of Routing Protocols}\label{sec:properties}

Our formalism enables verification of correctness properties. While
some properties, such as loop freedom and route correctness, are invariants on routing
tables, others require reasoning about the temporal order of transitions.
Here we use Linear-time Temporal Logic (LTL)~\cite{Pnueli77}
\index{Linear-time Temporal Logic (LTL)}%
 to specify and discuss 
two of such properties, namely \emph{route discovery} and \emph{packet delivery}.

Let us briefly recapitulate the syntax and semantics of LTL\@. The logic is built from a set of
\phrase{atomic propositions}. Such propositions stand for  facts that may hold
at some point (in some state) during a protocol run. An example is ``two nodes are connected in the
(current) topology''.

LTL formulas are interpreted on  paths in a transition system, where
each state is labelled with the atomic propositions that hold in that state.  A
\phrase{path} is an alternating sequence of states and transitions, starting from a state and either
being infinite or ending in a state,
such that each transition in the sequence goes from the state before to the state after it.
An atomic proposition $p$ holds on a path $\pi$ if $p$ holds in the first state of $\pi$.

\index{G@{$\mathbf{G}$ (globally)}}%
\index{F@{$\mathbf{F}$ (eventually)}}%
LTL \cite{Pnueli77} uses the temporal operators $\mathbf{G}$ and $\mathbf{F}$. 
The formulas $\mathbf{G} \phi$ and $\mathbf{F} \phi$ mean that $\phi$ holds
\emph{globally} in all states on a path, and \emph{eventually} in some state, respectively. Here a formula $\phi$ is deemed to
\emph{hold in a state on a path $\pi$} iff it holds for the remainder of $\pi$ when starting from that
state. In later work on LTL, two more temporal operators were added---the \emph{next-state} and the
\emph{until} operator; these will not be needed here.
LTL formulas can be combined by the logical connectives \emph{conjunction} $\wedge$,
\emph{disjunction} $\vee$, \emph{implication} $\Rightarrow$ and \mbox{\emph{negation} $\neg$}.
An LTL formula holds for a transition system iff it holds for all \emph{complete} paths in the
\index{path!complete}%
system starting from an initial state. A path is complete iff it leaves no
transitions undone without a good reason; in the original work on temporal logic \cite{Pnueli77} the
complete paths are exactly the infinite ones, but in \SSect{progress} we will propose a different
concept of completeness (cf.\ Definition~\ref{df:complete path}).

Below we will apply LTL to the transition system $\mathcal{T}$ generated by the structural
operational semantics of \awn from an arbitrary \awn specification, and from our specification of
AODV in particular. Here we use two kinds of atomic propositions. 
\index{atomic propositions}%
\label{pg:typesofpredicates}%
The first kind are predicates
on the states (or network expressions) $N$ that are fully determined by the (local) values of all
variables maintained by the nodes in the network, as well as by the current topology, i.e.\ by
$\xiN{\dval{ip}}$, $\zetaN{\dval{ip}}$ and $\RN{\dval{ip}}$ for all $\dval{ip}\in\IP$.
The second kind are predicates on transitions $N\ar{\ell}N'$ that are fully determined
either by the label $\ell$ of the transition, or by transition-labels appearing in the derivation
from the structural operational semantics of \awn of a $\tau$-transition---compare the
$\colonact{R}{\starcastP{m}}$-transitions in \SSect{transition invariants}.

To incorporate the transition-based atomic propositions into the framework of temporal logic, we
  perform a translation of the transition-labelled transition system
  $\mathcal{T}$ into a state-labelled transition system~$\mathcal{S}\!\!$, and apply LTL to the latter.
  A suitable translation, proposed in \cite{DV95}, introduces new
  states halfway the existing transitions, thereby splitting a transition $\ell$ into $\ell;\tau$, and
  attaches transition labels, or predicates evaluated on transitions, to the new mid-way states.
  Since we also have state-based atomic propositions, we furthermore declare any atomic proposition
  that holds in state $N'$ to also hold for the new state midway a transition $N\ar{\ell}N'$.

Below we use LTL to formalise properties that say that whenever a precondition $\phi^{\it pre}$ holds
in a reachable state, the system will eventually reach a state satisfying the postcondition $\phi^{\it post}$.
Such a property is called an \phrase{eventuality property} in \cite{Pnueli77}; it is formalised by the LTL formula
\begin{equation}\label{eq:basic ltl property}
\mathbf{G} \big(\phi^{\it pre} \ims \mathbf{F}\phi^{\it post}\big) \;.
\end{equation}
However, sometimes we want to guarantee such a property only when a side condition $\psi$
keeps being satisfied from the state where $\phi^{\it pre}$ holds until $\phi^{\it post}$ finally holds.
There are three ways to formalise this:
\begin{equation}\label{eq:ltl property with side condition}
\mathbf{G} \big((\phi^{\it pre} \wedge \mathbf{G}\psi) \Rightarrow \mathbf{F}\phi^{\it post}) \qquad
\mathbf{G} \big((\phi^{\it pre} \wedge \psi\mathbf{W}\phi^{\it post}) \Rightarrow \mathbf{F}\phi^{\it post}) \qquad
\mathbf{G} \big(\phi^{\it pre} \Rightarrow \mathbf{F}(\phi^{\it post}\vee \neg\psi)\big)\,.
\end{equation}
The first formula is derived from \Eq{basic ltl property} by adding to the precondition $\phi^{\it pre}$
the requirement that $\psi$ is valid as well, and remains valid ever after. If that precondition is
not satisfied, nothing is required about $\phi^{\it post}$. One might argue that this precondition
is too strong: it requires the side condition to be valid forever, even after $\phi^{\it post}$ has
occurred. The second formula addresses this issue by weakening the precondition $\phi^{\it pre} \wedge \mathbf{G}\psi$.
It uses a binary temporal operator $\mathbf{W}$---the \emph{weak until} operator---that can be
expressed in terms of $\mathbf{G}$ and the (strong) until operator.
The meaning of an expression $\psi \mathbf{W}\phi$ is that either $\psi$ holds forever, or at some
point $\phi$ holds and until then $\psi$ holds. In other words, $\psi$ holds until we reach a state
where $\phi$ holds, or forever if the latter never happens.

Although the precondition of the second formula is weaker than the one of the first,
as a whole the two formulas are equivalent: they are satisfied by all runs of the system, except
those for which\vspace{-2pt}
\begin{itemize}\parskip 0pt \topsep 0pt \itemsep 0pt
\item[$-$] at some point $\phi^{\it pre}$ holds,
\item[$-$] and from that point onwards $\psi$ remains valid,
\item[$-$] yet never a state occurs satisfying $\phi^{\it post}$.
\vspace{-2pt}
\end{itemize}
\pagebreak[3]
Both formulas are also equivalent to the third formula in \Eq{ltl property with side condition}.
It can be understood to say that once $\phi^{\it pre}$ holds, we will eventually reach a state where
$\phi^{\it post}$ holds, except that we are off the hook (in the sense that nothing further is
required) when (prior to that) we reach a state where $\psi$ fails to hold.
It is this last form that we will use further on.

\subsection{Progress, Justness and Fairness}\label{ssec:progress}

In Sections~\ref{ssec:route_discovery} and \ref{ssec:packet_delivery}, we will
formalise properties that say that under certain conditions some desired activity
will eventually happen, or some desired state will eventually be reached. As a particularly simple instance
of this, consider the transition systems in Figures~\ref{fig:PJF}(a)--(c), where
the double-circled state satisfies a desired property $\phi$.  The formula $\mathbf{G} (a \Rightarrow
\mathbf{F}\phi)$ says that once the action $a$ occurs, eventually we will reach a state where $\phi$
holds.  In this section we investigate reasons why this formula might not hold, and formulate
assumptions that guarantee that it does.

\newcommand{\Figprogress}{Figure~\ref{fig:PJF}(a)\xspace}
\newcommand{\Figjustness}{Figure~\ref{fig:PJF}(b)\xspace}
\newcommand{\Figfairness}{Figure~\ref{fig:PJF}(c)\xspace}

\paragraph{Progress.}\label{par:progress}
The first thing that can go wrong is that the process in \Figprogress performs $a$, thereby reaching the state $s$,
and subsequently remains in the state $s$ without ever performing the internal action $\tau$ that leads to
the desired state $t$, satisfying $\phi$. If there is the possibility of remaining in a state
even when there are enabled internal actions, no useful temporal property about
processes will ever be guaranteed. We therefore make an assumption that rules out this type of behaviour.
\begin{equation}\tag{$P_{1}$}\label{eq:progress1}
\mbox{\textit{A process in a state that admits an internal transition $\tau$ will eventually perform a transition.}}
\end{equation}
\index{progress property}%
\eqref{eq:progress1} is called a \emph{progress} property. It guarantees that the process depicted in
\Figprogress satisfies the LTL formula $\mathbf{G} (a \Rightarrow \mathbf{F}\phi)$.
We do not always assume progress when only external transitions are possible.%
\footnote{A transition is external iff it is not internal, i.e.\  iff
its label is different from $\tau$.} For instance, the process of
\Figprogress, when in its initial state $r$, will not necessarily perform the $a$-transition,
and hence need not satisfy the formula $\mathbf{F}\phi$. The reason is that external transitions
could be synchronisations with the environment, and the environment may not be ready to synchronise.
This can happen for instance when $a$ is the action $\receive{m}$.
However, for our applications it makes sense to distinguish two kinds of external transitions: those whose
execution requires cooperation from the environment in which the process runs, and those who do not.
The latter kind could be called \phrase{output transitions}. As far as progress properties go, output
transitions can be treated just like internal transitions:\!
\begin{equation}\tag{$P_{2}$}\label{eq:progress2}
\mbox{\textit{A process in a state that admits an output transition will eventually perform a transition.}}
\end{equation}
Whether a transition is an output transition is completely determined by its label; hence we also
  speak of \phrase{output actions}.
In case $a$ is an output action, which can happen independent of the environment, the
formula $\mathbf{F}\phi$ does hold for the process of \Figprogress.

We formalise \eqref{eq:progress1} and \eqref{eq:progress2} through a suitable definition of a
complete path.  In early work on
temporal logic, formulas were interpreted on Kripke structures: transition systems with unlabelled transitions,
subject to the condition of \emph{totality}, saying that each state admits at least one outgoing
transition. In this context, the complete paths are defined 
to be all infinite paths of the transition system. When giving up totality, it is customary to deem complete also those paths that end in a
state from which no further transitions are possible \cite{DV95}. Here we go a step further, and
\index{path!complete}%
(for now) define a path to be \emph{complete} iff it is either infinite or ends in a state from which no further
\emph{internal or output} transitions are possible. This definition exactly captures the
progress properties \eqref{eq:progress1} and \eqref{eq:progress2} proposed above. (Dropping all
progress properties amounts to defining \emph{each} path to be complete.) Below we will
restrict the notion of a complete path to also capture a forthcoming justness property.
\index{justness property}%

\begin{figure}[t]
	\centering
		\begin{minipage}[b]{0.45\linewidth}
			\centering
				\includegraphics[scale=1.2]{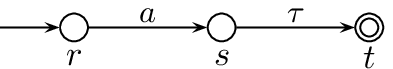}\\
				(a)\ Progress
		\end{minipage}
		\hspace{0.03\linewidth}
		\begin{minipage}[b]{0.45\linewidth}
			\centering
				\includegraphics[scale=1.2]{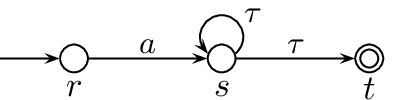}\\
				(c)\ Fairness
		\end{minipage}\\[3ex]
		\begin{minipage}[b]{0.93\linewidth}
			\centering
				\includegraphics[scale=1.2]{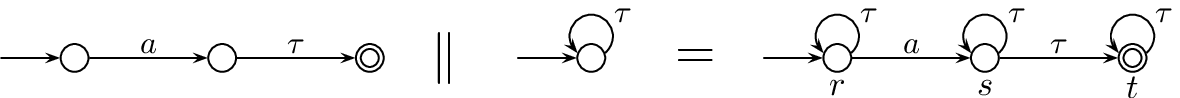}\\
				(b)\ Justness
		\end{minipage}
	\caption{Progress, Justness and Fairness}
	\label{fig:PJF}
  \end{figure}

It remains to be determined which transitions generated by the structural operational semantics of \awn
should be classified as output transitions. In the transition system for (encapsulated) network expressions
generated by the rules of~\Tab{sos network}, only five types of transition labels occur:
$\textbf{connect}(\dval{ip},\dval{ip}')$,
$\textbf{disconnect}(\dval{ip},\dval{ip}')$,
$\colonact{\dval{ip}\hspace{-.5pt}}{\hspace{-.5pt}\textbf{newpkt}(\dval{d}\hspace{-.5pt},\hspace{-.5pt}\dval{dip})}$,
$\colonact{\dval{ip}\hspace{-.5pt}}{\hspace{-.5pt}\deliver{\dval{d}}}$
and $\tau$. 
These are all actions to be considered, since we regard
(LTL-\hspace{-.16pt})properties on network expressions only.
The actions
$\textbf{connect}(\dval{ip}\hspace{-.5pt},\dval{ip}')$,
$\textbf{disconnect}(\dval{ip}\hspace{-.5pt},\dval{ip}')$ and
$\colonact{\dval{ip}}{\textbf{newpkt}(\dval{d},\dval{dip})}$
are entirely triggered by the environment of the network, and thus cannot be classified as output
actions. Transitions labelled $\tau$ are internal. For transitions labelled
$\colonact{\dval{ip}}{\deliver{\dval{d}}}$ two points of view are possible.
It could be that the action $\colonact{\dval{ip}}{\deliver{\dval{d}}}$ is seen as attempt of the
network to synchronise with its client in delivering a message; the synchronisation will then happen
only when both the network and the client are ready to engage in this activity.
A possible scenario would be that \Pro{pkt} gets stuck in Line~\ref{pkt2:line3} because the client
is not ready for such a synchronisation (the same happens in \Pro{newpkt}, Line~\ref{newpkt:line3}).
This interpretation of our formalisation of AODV would give
rise to deadlock possibilities that violate useful properties we would like the protocol to have,
such as the forthcoming \emph{route discovery} and \emph{packet delivery} properties.
We therefore take the opposite point of view by classifying
$\colonact{\dval{ip}}{\deliver{\dval{d}}}$ as an output action.
Hereby we disallow a deadlock
when attempting a \textbf{deliver}-action, since the environment of the
network cannot prevent delivery of data packets.
As a consequence, finite complete paths of AODV can end only in
states $N$ where all message queues are empty, all nodes \dval{ip} are
are either in their initial state or about to call the process \AODV,%
\footnote{More precisely these positions are at the beginning of
\Pro{aodv}, Line~\ref{aodv:line27},
\Pro{rreq}, Lines~\ref{rreq:line3}, \ref{rreq:line26a}, \ref{rreq:line35},
\Pro{rrep}, Lines~\ref{rrep:line8}, \ref{rrep:line14}, \ref{rrep:line21a}, \ref{rrep:line26},
and
in the middle of 
Lines~\ref{aodv:line33}, \ref{aodv:line40} (\Pro{aodv}),
\ref{newpkt:line3}, \ref{newpkt:line5} (\Pro{newpkt}),
\ref{pkt2:line3}, \ref{pkt2:line7}, \ref{pkt2:line14}, \ref{pkt2:line20}, \ref{pkt2:line22} (\Pro{pkt}),
\ref{rreq:line14}, \ref{rreq:line19}, \ref{rreq:line31} (\Pro{rreq}),
\ref{rrep:line20} (\Pro{rrep}),
\ref{rerr:line6} (\Pro{rerr}).
}
and
for all destinations \dval{dip} for which \dval{ip} has a (non-empty) queue of data packets we have
$\dval{dip}\notin\akd{\dval{ip}}$ and \plat{$\fD{\xiN{\dval{ip}}(\queues)}{\dval{dip}}=\pen$}.
This follows since our specification of AODV is input-enabled, is non-blocking, and avoids 
livelocks.

In the remainder of this paper we will only use LTL-formulas to check (encapsulated) network expressions. 
However, when defining output transitions also on partial networks, parallel processes and sequential processes, 
it is easy to carry over our mechanism to arbitrary expressions of AWN.
On the level of partial network expressions $\colonact{R}{\starcastP{m}}$ counts as an output action, as its occurrence cannot be
prevented by other nodes in the network. Similarly, on the level of sequential and parallel
processes
$\broadcastP{m}$,
$\groupcastP{D}{m}$,
$\unicast{\dval{dip}}{m}$,
$\neg\unicast{\dval{dip}}{\dval{m}}$ and
$\deliver{\dval{d}}$ are output actions, but
$\send{m}$ is not, for it requires synchronisation with
$\receive{m}$.
The remaining actions ($\listen{m}$, $\receive{m}$) are not considered output actions.

\paragraph{Justness.}\label{par:justness}
Now suppose we have two concurrent systems that work independently in parallel, such as two
completely disconnected nodes in our network. One of them is modelled by the transition system of
\Figprogress, and the other is doing internal transitions in perpetuity.  The parallel
composition is depicted on the left-hand side of \Figjustness. According to our structural operational
semantics, the overall transition system resulting from this parallel composition is the one
depicted on the right. In this transition system, the LTL formula $\mathbf{G} (a \Rightarrow
\mathbf{F}\phi)$ is not valid, because, after performing the action $a$, the process may do an
infinite sequence of internal transitions that stem from the other component in the parallel
composition, instead of the transition to the desired success state.  Yet the formula $\mathbf{G} (a
\Rightarrow \mathbf{F}\phi)$ does hold intuitively, because no amount of internal activity in the
remote node should prevent our own node from making progress.  That this formula does not
hold can be seen as a pitfall stemming from the use of interleaving semantics.  The intended
behaviour of the process is captured by the following \emph{justness} property:\footnote{In the
\index{fairness}%
  literature \emph{justness} is often used as a synonym for \hyperlink{weakfairness}{\emph{weak fairness}},
  defined on Page~\pageref*{fairness}---see, e.g., \cite{MP92}.
  In this paper we introduce a different concept of justness: 
  fairness is a property of schedulers that repeatedly choose between several tasks, 
  whereas
  justness is a property of parallel-composed transition systems, guaranteeing progress of all components.
}
\begin{equation}\tag{$J$}\label{eq:justness}
\parbox{0.9\textwidth}{\textit{A component in a parallel composition in a state that admits an internal or output
  transition will eventually perform a transition.}}
\end{equation}
Progress can be seen as a special case of justness, obtained by regarding a system as a parallel composition of one component only.
We will formalise the justness requirement \eqref{eq:justness} by fine-tuning our definition of a complete path.

Any path $\piN$ starting from an \awn network expression $[M]$ is derived through the
structural operational semantics of \Tab{sos network} from a path $\piPN$ starting from the
partial network expression $M$. All states occurring in $\piN$ have the form $[M']$ for some partial
network expression $M'$, and in $\piPN$ such a state is replaced by $M'$. Moreover, some transition
labels $\tau$ in $\piN$ are replaced by $\colonact{R}{\starcastP{m}}$ in $\piPN$, and transition
labels $\colonact{\dval{ip}}{\textbf{newpkt}(\dval{d},\dval{dip})}$ are replaced by
$\colonact{\{\dval{ip}\}\neg K}{\listen{\newpkt{\dval{d}}{\dval{dip}}}}$. To indicate the relationship
between $\piN$ and $\piPN$ we write $\piN=[\piPN]$. It might be that $\piPN$ is
not uniquely determined by $\piN$; if this happens, the partial network expression $M$ admits
different paths  that upon encapsulating become indistinguishable.

In the same way, any path $\piPN$ starting from a partial network expression $M$ that happens to be a
parallel composition of $n$ node expressions derives through the structural operational semantics of
\Tab{sos network} from $n$ paths $\pin_1,\ldots,\pin_n$ starting from each of these node expressions.
In this case we write $\piPN = \pin_1 \| \cdots \| \pin_n$. Here it could be that $\piPN$ is infinite, yet
some (but not all) of the $\pin_i$ are finite. As before, it might be that the $\pin_i$ are
not uniquely determined by $\piPN$.

Zooming in further, any path $\pin$ starting from a node expression $\dval{ip}:P:R$ derives through the
structural operational semantics of \Tab{sos node} from a path $\piP$ starting from the parallel
process expression $P$. As transitions labelled $\textbf{connect}(\dval{ip},\dval{ip}')$ or
$\textbf{disconnect}(\dval{ip},\dval{ip}')$ occurring in $\piN$, $\piPN$ and $\pin$ do not occur in $\piP$, it can be
that $\piP$ is finite even though $\pin$ is infinite. We write $\pin=\dval{ip}:\piP:*\;$ (without
filling in the $R$, since it may change when following $\pin$).

Finally, any path $\piP$ of a parallel process expression $P$ that is the parallel composition of $m$
sequential process expressions derives through the structural operational semantics of
\Tab{sos} from $m$ paths $\pi_1,\ldots,\pi_{m}$ starting from each of these sequential process expressions.
In this case we write $\piP = \pi_1 \parl \cdots \parl \pi_{m}$.\linebreak[2]
Again it may happen that $\piP$ is infinite, yet some (but not all) of the $\pi_i$ are finite.

\begin{definition}\label{df:complete path}\rm
\index{path!complete}%
A path starting from any \awn expression (i.e.\ a sequential or parallel process expression, a node
expression or (partial) network expression) \emph{ends prematurely} if it is finite and from its
last state an internal or output transitions is possible.
\begin{itemize}
\item 
A path $\pi_{i}$ starting from a sequential process expression is \emph{complete} if it does not end
prematurely---hence is infinite or ends in a state from which no further internal or output
transitions are possible.
\item 
A path $\piP$ starting from a parallel process expression is \emph{complete} if it does not end
prematurely and can be written as $\pi_1 \parl \cdots \parl \pi_{m}$ where each of the $\pi_i$ is complete.
\item 
A path $\pin$ starting from a node expression is \emph{complete} if it does not end
prematurely and can be written as $\dval{ip}:\piP:*$ where $\piP$ is complete.
\item 
A path $\piPN$ starting from a partial network expression is \emph{complete} if it does not end
prematurely and can be written as $\pin_{1} \| \cdots \| \pin_{n}$ where each of the $\pin_i$ is complete.
\item 
A path $\piN$ starting from a network expression is \emph{complete} if it does not end
prematurely and can be written as $[\piPN]$ where $\piPN$ is complete.
\end{itemize}
\end{definition}

Note that if $\piN=[\piPN]$ and $\piN$ ends prematurely, then also $\piPN$ ends prematurely.
This holds because any internal or output action enabled in the last state of $\piN$ must stem from an
internal or output action enabled in the last state of $\piPN$. For this reason the requirement
``it does not end prematurely'' is redundant in the above definition of complete path starting from a
network expression. For the same reason this requirement is redundant in the definition of a
complete path for node expressions or partial network expressions, but not in the definition for
parallel process expressions. The reason for including this requirement in each part of the
definition above, is to establish a general pattern that ought to lift smoothly to languages other
than \awn.

This definition of a complete path captures our (progress and) justness requirement, and
ensures that the formula $\mathbf{G} (a \Rightarrow \mathbf{F}\phi)$ holds for the process of \Figjustness.
For example, the infinite path $\pi$ starting from $r$ that after the $a$-transitions keeps looping through the
$\tau$-loop at $s$ can only be derived as $\pi_1 \| \pi_2$, where $\pi_1$ is a finite path ending
right after the $a$-transitions. Since $\pi_1$ fails to be complete (because its ends prematurely,
by its end state admitting a $\tau$-transition), $\pi$ is defined to be incomplete as well, and hence
does not count when searching for a complete path that fails to satisfy the formula.

\paragraph{Fairness.}\label{par:fairness}
With the justness requirement \Eq{justness}\footnote{Remember that \Eq{justness} implies the
progress requirements \Eq{progress1} and \Eq{progress2}.} embedded in our semantics of LTL, the processes of 
\Figprogress--(b) satisfy the formula $\mathbf{G} (a \Rightarrow \mathbf{F}\phi)$.
Yet, the process of \Figfairness does not satisfy this formula.
The reason is that in state $s$ a choice is made between two internal transitions.
One leads to the desired state satisfying $\phi$, whereas the other gives the process a chance to
make the decision again. This can go wrong in exactly one way, namely if the
$\tau$-loop is chosen each and every time.

\index{fairness}%
For some applications it is warranted to make a \emph{global fairness assumption}, saying that in
verifications we may simply assume our processes to eventually escape from a loop such as in
\Figfairness and do the right thing. A process-algebraic verification approach based on such an
assumption is described in~\cite{BBK87a}. Moreover, a global fairness assumption is
incorporated in the weak bisimulation semantics employed in~\cite{Mi89}.

An alternative approach, which we follow here, is to explicitly declare certain choices to be fair,
while leaving open the possibility that others are not. To see which choices come into question, we
search for all occurrences of the choice operator $+$ in our AODV specification in
Processes~\ref{pro:aodv}--\ref{pro:queues}. A nondeterministic choice occurs in
Lines~\ref{aodv:line22} and~\ref{aodv:line34} of \Pro{aodv} and in Lines~\ref{queues:line3}
and~\ref{queues:line8} of \Pro{queues}. All other occurrences of the $+$-operator are of the form
$[\varphi_{1}]p+\ldots+[\varphi_{n}]q$
where the guards $\varphi_{i}$ are mutually exclusive; these are deterministic choices,
where in any reachable state at most one of the alternatives is enabled.

Considering Lines~\ref{aodv:line2},~\ref{aodv:line22} and~\ref{aodv:line34} of \Pro{aodv}, the
process $\AODV$ running on a node in a network can be seen as a scheduler that needs to schedule three kinds of tasks.
Lines~\ref{aodv:line2}--\ref{aodv:line20} deal with handling an incoming message. This task is enabled when
there is a message in the message queue of that node. Lines~\ref{aodv:line22}--\ref{aodv:line33}
deal with sending a data packet towards a destination $\dval{dip}$. This task is enabled when there is a
queued data packet for destination $\dval{dip}$, i.e.\ $\dval{dip}\in\qD{\xi(\queues)}$, and moreover
a valid route to \dval{dip} exists, i.e.\ $\dval{dip}\in\akD{\xi(\rt)}$.
As data queues for multiple destinations $\dval{dip}$ may have formed, each time when sending a
data packet is scheduled a choice is made which destination to serve. Finally,
Lines~\ref{aodv:line34}--\ref{aodv:line40} deal with the initiation of a route discovery process for
destination $\dval{dip}$. It is enabled when the guard of Line~\ref{aodv:line34} evaluates to {\tt true}.
No matter which of these tasks is chosen, the chosen instance always terminates in a finite amount
of time,\footnote{Here we use that each of these tasks consists of finitely many actions,
  of which only the initial one could be blocking.
  The task of handling an incoming message could fail to terminate if the message
  received is not of the form specified in any of the guards of
  Lines~\ref{aodv:line4},~\ref{aodv:line6},~\ref{aodv:line8},~\ref{aodv:line12} or~\ref{aodv:line16};
  in this case a deadlock would occur in Line~\ref{aodv:line3}. However, using
  \Prop{preliminaries}(\ref{it:preliminariesi}), this will never happen, as all messages sent have the required form.}
after which the \AODV-scheduler needs to make another choice.

\hypertarget{weakfairness}{\label{fairness}%
\index{fairness!weak fairness property}%
For each of these tasks we postulate a \emph{weak fairness} property. It requires that if this
task, from some point onwards, is perpetually enabled, it will eventually be scheduled.
A weak fairness property is expressed in LTL as the requirement
$\mathbf{G}(\mathbf{G}\psi \ims \mathbf{F}\phi)$; here $\psi$ is the condition that states that
the task is enabled, whereas $\phi$ states that it is being
executed.\footnote{These properties were introduced and formalised in LTL in \cite{GPSS80} under the
  name ``responsiveness to insistence''. They were deemed ``the minimal fairness requirement'' for any scheduler.}
The property says that if the condition $\psi$ holds uninterruptedly from some time point onwards,
then eventually $\phi$ will hold.
This is the first formula of~\Eq{ltl property with side condition} with
$\phi^{\it pre}=\mbox{\tt true}$ and $\phi^{\it post}=\phi$.
Hence a logically equivalent formula is $\mathbf{GF}(\phi \vee \neg \psi)$.
Another equivalent formula expressing weak fairness is
$\mathbf{F}\mathbf{G}\psi \ims \mathbf{G}\mathbf{F}\phi$.
It says that if, from some point onwards, a task is perpetually enabled, it will be
scheduled infinitely often.\footnote{or is scheduled in the final state of the system. This 
    possibility needs to be added because, unlike in \cite{Pnueli77,GPSS80}, we allow complete
    paths to be finite.}}

\index{fairness!strong fairness property}%
Sometimes a \emph{strong fairness} property is needed, saying that if a task is
enabled infinitely often,\footnote{or in the final state of the system}
but allowing interruptions during which it is not enabled,
it will eventually be scheduled. Such a property is expressed in LTL as
$\mathbf{G}(\mathbf{GF}\psi \ims \mathbf{F}\phi)$,\footnote{These properties were introduced and
  formalised in LTL in \cite{GPSS80} under the  name ``responsiveness to persistence''.}
or equivalently $\mathbf{G}\mathbf{F}\psi \ims \mathbf{G}\mathbf{F}\phi$.
We do not need strong fairness properties in this paper.

Our first fairness property (\Fi) requires that if the guard of Pro.~\ref{pro:aodv},
Line~\ref{aodv:line22} evaluates to {\tt true} from some state onwards, for a particular value of
\dval{dip}, then eventually Line~\ref{aodv:line22} (or equivalently Line~\ref{aodv:line23} or \ref{aodv:line24})
will be executed, for that value of \dval{dip}.  Naturally, such a property needs to be required for
each node \dval{ip} in the network, and for each possible destination \dval{dip}.
Later, we will formulate a \emph{packet delivery} property, saying that under
certain circumstances a data packet will surely be delivered to its destination. Without the fairness
property (\Fi) there is no hope on such a property being satisfied by AODV\@. It could be
that a node \dval{ip} with a valid route to \dval{dip} has a queued data packet for \dval{dip}, but
will never send it, because it is constantly busy processing messages---that is, executing
Line~\ref{aodv:line2} instead of Line~\ref{aodv:line22}. 
Alternatively, it could be that the node has a constant supply of data packets for another
destination $\dval{dip}'$, and always chooses to send a packet to $\dval{dip}'$ instead of
to \dval{dip}.

Fairness property (\Fi) can be formalised as an instance of the template
$\mathbf{G}(\mathbf{G}\psi \ims \mathbf{F}\phi)$ by taking $\psi$ to be the formula that
says that the guard in Line~\ref{aodv:line22} is satisfied, and $\phi$ a formula that holds
after Line~\ref{aodv:line22} has been executed.
We take $\psi$ to be the atomic proposition
$\dval{dip}\in\keyw{qD}^{\dval{ip}}\cap\fnakD^{\dval{ip}}$, which we define to hold for
state $N$ iff $\dval{dip}\in \qD{\xiN{\dval{ip}}(\queues)} \cap \akd{\dval{ip}}$.
Other atomic propositions used below are defined along the same lines.
In order to formulate $\phi$ we use the atomic proposition
$\unicast{*}{\pkt{*}{\dval{dip}}{\dval{ip}}}$, which is
defined to hold when node \dval{ip} tries to unicast a data packet with destination \dval{dip}.
Thus we require, for all $\dval{ip},\dval{dip}\in\IP$, that\vspace{-1ex}
\vspace{-1ex}\begin{equation}
\mathbf{G}\big(\mathbf{G}(\dval{dip}\in\keyw{qD}^{\dval{ip}}\cap\fnakD^{\dval{ip}}) \ims
\mathbf{F}\big(\unicast{*}{\pkt{*}{\dval{dip}}{\dval{ip}}}\big)\big).
\tag{\Fi}
\vspace{-0.8ex}
\end{equation}
(\Fi) says that whenever the node \dval{ip} perpetually has queued packets for the destination
\dval{dip} as well as a valid route to \dval{dip}, it will eventually forward a data packet
originating from \dval{ip} towards \dval{dip}---i.e.\ Line~\ref{aodv:line24} will be executed.
In classifying this property as a weak fairness property, we count a task as enabled when its guard
is valid, notwithstanding that the task cannot be started during the time \AODV\ is working on a competing task.

Our second fairness property (\Fii) demands fairness for the task starting with Line~\ref{aodv:line34} of Pro.~\ref{pro:aodv}.
We require, for all $\dval{ip},\dval{dip}\in\IP$, that
\vspace{-0.8ex}
\[
\mathbf{G}\big(\mathbf{G}(\dval{dip}\in\keyw{qD}^{\dval{ip}} - \fnakD^{\dval{ip}} \ans \fnfD^{\dval{ip}}(\dval{dip})=\nonpen) \ims
\mathbf{F}\big(\broadcastP{\rreq{*}{*}{\dval{dip}}{*}{*}{\dval{ip}}{*}{\dval{ip}}}\big)\big).
\tag{\Fii}\label{eq:F2}
\vspace{-2.7ex}\]
\pagebreak[3]

(\Fii) says that whenever \dval{ip} perpetually has queued packets for \dval{dip} but no valid route to \dval{dip},
and the request-required flag at \dval{ip} for destination \dval{dip} is set to $\nonpen$, indicating
that a new route discovery process needs to be initiated, then node \dval{ip} does issue a request for
a route from {\dval{ip}} to {\dval{dip}}---so Line~\ref{aodv:line39} will be executed.

We do not formalise a fairness property saying that
Line~\ref{aodv:line2} of Pro.~\ref{pro:aodv} will be executed eventually.
Since the \textbf{receive}-action of Line~\ref{aodv:line2} of \Pro{aodv}
has to synchronise with the \textbf{send}-action in Line~\ref{queues:line5} of \Pro{queues}
it suffices to formalise a fairness property for \QMSG.


Process $\QMSG$ can be understood as scheduling two tasks:
(1) store an incoming message at the end of the message queue,
and (2) pop the top message from the queue and send it to $\AODV$ for handling.
The reason that (1) occurs twice in the specification (Lines~\ref{queues:line1}--\ref{queues:line2}
as well as~\ref{queues:line7}--\ref{queues:line8}) is that we require our node to be input
enabled, meaning that (1) must be possible in every state.

Our third and last fairness property (\Fiii) guards against starvation of task (2).  It says that if
the guard of Line~\ref{queues:line3} of Pro.~\ref{pro:queues} evaluates to {\tt true} from some
state onwards, then eventually Line~\ref{queues:line5} of Pro.~\ref{pro:queues} will be executed.
In order to formulate this property we use the atomic propositions $\keyw{msgs}^{\dval{ip}}\neq[\,]$,
which holds in state $N$ iff $\xiN{\dval{ip}}(\keyw{msgs})\neq[\,]$, and $\dval{ip}:\send{*}$, saying that the
process $\QMSG$ running on node \dval{ip} performs a \textbf{send}-action. We need to explicitly
annotate this activity with the name of node \dval{ip}, as---unlike for \textbf{unicast} and
\textbf{broadcast}---this information cannot be derived from the message being sent.
We require, for all $\dval{ip}\in\IP$, that\vspace{-1ex}
\begin{equation}\label{eq:F3}
\mathbf{G}\big(\mathbf{G}(\keyw{msgs}^{\dval{ip}}\neq[\,]) \ims
\mathbf{F}\big(\dval{ip}:\send{*}\big)\big).
\tag{\Fiii}
\end{equation}
(\Fiii) says that whenever node \dval{ip} perpetually has a non-empty queue of incoming messages, eventually one of
these messages will be handled.
Just as for the first task of the process $\AODV$, there is no need to specify a fairness property for task (1):
our justness property forbids any component from stopping when it can do a \textbf{*cast}-action,
and our structural operational semantics requires each component within transmission range of a
component doing a \textbf{*cast} to receive the transmitted message.

To say that a run of AODV is fair amounts to requiring the corresponding complete path to
satisfy properties (\Fi)--(\Fiii) for all values of \dval{ip} and \dval{dip}.  In order to require
fairness for all runs of AODV we 
augment the specification of AODV with a fairness component.
Henceforth, our specification of AODV consists of two parts: (A) the \awn specification of
\Sect{modelling_AODV}, which by the operational semantics 
of \awn generates a labelled transition
system $L$, and (B) a \phrase{fairness specification}, consisting of a collection of LTL formulas.
The latter narrows down the complete paths in $L$ to the ones that satisfy those formulas.%
\footnote{
Formally, we require the labelled transition system $L$ and the fairness specification to be consistent with each other.
By this we mean that one cannot reach a state in $L$ from where, given a sufficiently uncooperative
environment, it is impossible to satisfy the fairness specification---in other words \cite{Lamport00},
`the automaton can never ``paint itself into a corner.''\;\!' In~\cite{Lamport00} this requirement is
called \emph{machine closure}, and demands that any finite path in $L$, starting from an initial
state, can be extended to a path satisfying the fairness specification.
Since we deal with a reactive system here, we need a more complicated consistency requirement,
taking into account all possibilities of the environment to allow or block transitions that
are not fully controlled by the specified system itself. This requirement can best be explained in
terms of a two player game between a \emph{scheduler} and the \emph{environment}.

Define a \emph{run} of $L$ as a path that starts from an initial state. Thus a 
\emph{finite run} is an alternating sequence of states and transitions, starting from an initial
state and ending in a state, such that each transition in the sequence goes from the state before to
the state after it. Moreover, a \emph{complete run} is a finite or infinite path starting from an initial state.
The game begins with any finite run $\pi$ of $L$, chosen by the environment.
In each turn, first the environment selects a set $\textit{next}(p)$ of transitions starting in
the last state $N$ of $\pi$; this set has to include all internal and output transitions starting
from $N$, but can also include further transitions starting in $N$. If $\textit{next}(p)$ is empty, the game ends; otherwise the scheduler selects a
transition from this set, which is, together with its ending state,  appended to $\pi$, and a new turn starts with the prolonged
finite run. The \emph{result} of the game is the finite run in which the game ends, or---if it
does not---the infinite run that arises as the limit of all finite runs encountered during the game.
So the result of the game always is a complete run. The game is \emph{won} by the scheduler
iff the result satisfies the fairness specification. Now $L$ is \emph{consistent} with a fairness
specification iff there exists a winning strategy for the scheduler.

Our AODV specification and our fairness properties (\Fi)--(\Fiii) are constructed in such a way that they are consistent.
}
\newpage

There are many ways in which we could alter our \awn specification of AODV so as to ensure
that (\Fi)--(\Fiii) are satisfied and thus need not be required as an extra part of our
specification. For example, \Pro{aodv} could be modified in a way such that the three 
different activities (Lines~\ref{aodv:line2}--\ref{aodv:line20}, Lines \ref{aodv:line22}--\ref{aodv:line33} and Lines~\ref{aodv:line34}--\ref{aodv:line40}) 
are prioritised. The process could first initiate all route discovery processes, then handle all queued data packets (for which a valid route is known) and 
finally handle a fixed number of received messages (less if there are not enough messages in the queue). 
After the messages have been handled, the modified process would loop back and start initiating 
route discovery processes again.
 However, for the purpose of protocol specification we do not want to commit to any
particular method of ensuring fairness. Therefore we state fairness as an extra requirement without
telling how it should be implemented.

When we later claim that an LTL formula $\phi$ holds for AODV, as specified by (A) and (B) together,
this is equivalent to the claim that $\psi\Rightarrow\phi$ holds for AODV as specified by (A) alone,
where $\psi$ is the conjunction of all LTL formulas that make up the fairness specification (B).

\subsection{Route Discovery}\label{ssec:route_discovery}

\index{route discovery property}%
An important property that every routing protocol ought to satisfy is that if a route 
discovery process is initiated in a state where the source is connected to the destination
and during this process no (relevant) link breaks, then the source will eventually discover a route to the destination.
In case of AODV a route discovery process is initiated when a route request is issued.
So for any  pair of IP addresses $\dval{oip},\dval{dip}\in\IP$ the following should hold:\vspace{-.5pt}
\[
\mathbf{G} \left(\begin{array}{@{}c@{}}   
 \big(\textbf{connected}^*(\dval{oip},\dval{dip}) \ans \broadcastP{\rreq{*}{*}{\dval{dip}}{*}{*}{\dval{oip}}{*}{\dval{oip}}}\big) \\
 \ims  \mathbf{F} \big(\dval{dip}\in\fnakD^{\dval{oip}} \ors \textbf{disconnect}(*,*)\big) 
\end{array}\right).
\]  
Here, the predicate
$\textbf{connected}^*(oip,dip)$\label{pg:connected*}
holds in state $N$ iff there exist nodes
$\dval{ip}_0,\ldots,\dval{ip}_n$ such that $\dval{ip}_0\mathop=\dval{oip}$, $\dval{ip}_n\mathop=\dval{dip}$
and \plat{$\dval{ip}_{i}\mathop\in \RN{\dval{ip}_{i-1}}$} for $i\mathop=1,\ldots,n$.
The latter condition describes the fact that $\dval{ip}_{i}$ is in range of $\dval{ip}_{i-1}$.\footnote{%
Since the connectivity graph of \awn is always symmetric, this condition suffices to guarantee that both the RREQ message and the RREP message reach their destinations.}
All other predicates follow the description of Page~\pageref{pg:typesofpredicates}:
$\broadcastP{\rreq{*}{*}{\dval{dip}}{*}{*}{\dval{oip}}{*}{\dval{oip}}}$ 
models that node \dval{oip} issues a request for a route from {\dval{oip}} to {\dval{dip}};
the predicate
$\dval{dip}\in\fnakD^{\dval{oip}}$ holds in state $N$ iff
$\dval{dip}\in\akd{\dval{oip}}$, i.e.\ \dval{oip} has found a valid route to \dval{dip},
and $\textbf{disconnect}(*,*)$ is the action of disconnecting any two nodes.
By means of the last disjunct, the property does not require a route to be found
once any link in the network breaks.\footnote{Here $\neg\,\textbf{disconnect}(*,*)$ is the side condition
  $\psi$ of \Eq{ltl property with side condition}.}

The following theorem might be a surprise.

\begin{theorem}\rm
AODV does not satisfy the property route discovery.
\end{theorem}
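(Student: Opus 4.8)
The plan is to refute the property by exhibiting a single counterexample: a reachable network together with a complete path along which, at some point, a node $s$ issues a route request for a destination $d$ (so that $\phi^{\it pre}$ holds), $s$ remains connected to $d$ in the sense of $\textbf{connected}^*(s,d)$, no $\textbf{disconnect}$ occurs thereafter, and yet $s$ never acquires a valid routing table entry for $d$ (so $\phi^{\it post}$ never holds and no disconnect rescues the implication). The essential mechanism I would exploit is the interplay between the one-shot nature of route requests and the fact that a route reply is unicast \emph{along the sender's stored reverse route}, not necessarily back along the path the request actually travelled (as already illustrated by \Fig{example2}).

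The core of the argument runs as follows. When $d$ receives the flooded RREQ it is the destination, so by the specification in \Sect{modelling_AODV} it must generate a RREP and unicast it to $\nhop{\dval{rt}}{s}$. If $d$ already holds a reverse route to $s$ that is \emph{fresher} than the originator sequence number $\dval{osn}$ carried by the request, the update of \Pro{rreq}, Line~\ref{rreq:line6} leaves this stale reverse route untouched, and the reply is sent towards whatever next hop that route names. By \Thm{loop free} and \Prop{inv_nsqn}, the reverse route to $s$ in the routing graph $\RG{N}{s}$ is a finite, loop-free path that terminates either at $s$ or at a node with an \emph{invalid} entry for $s$ (it cannot terminate at a node without any entry). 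I would arrange the configuration so that it terminates at an intermediate node $c$ whose entry for $s$ is invalid. At $c$ the RREP is silently discarded: forwarding a reply requires a \emph{valid} route to the originator, and per the RFC (and our model) ``nothing happens'' otherwise, with no error message generated. Crucially, $s$ issues no further request: having initiated route discovery it set the \penFlag to $\pen$ (Line~\ref{aodv:line35}), and this flag returns to $\nonpen$ only upon invalidation of an entry for $d$---an entry $s$ never possesses. Hence fairness property (\Fii) is vacuous here, and $s$ is permanently stuck. The fairness properties (\Fi)--(\Fiii) are therefore all satisfiable by the complete path without ever establishing a route, since the RREQ is handled (via (\Fiii)) and the reply is generated and delivered to $c$, but dies there.

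Concretely I would take four nodes $s,a,d,c$ with final topology $s\!-\!a$, $a\!-\!d$, $d\!-\!c$ (with $c$ adjacent only to $d$), so that $\textbf{connected}^*(s,d)$ holds via $s\!-\!a\!-\!d$. Because $c$ is not in range of $s$ or $a$, and because $d$ as destination does \emph{not} re-broadcast the request, the RREQ flood never reaches $c$; thus $c$'s invalid entry for $s$ is never revalidated by the side effects of the flood. Node $d$ must carry a valid reverse route to $s$ with next hop $c$, whose stored sequence number equals the $\dval{osn}$ of the incoming request and whose hop count is no larger than the competing $\hops\!+\!1$, so that none of the update clauses of $\fnupd$ fire and $d$ keeps pointing to $c$. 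I would then verify that after $\phi^{\it pre}$ the only events are the RREQ reaching $d$ via $a$, $d$'s RREP being delivered to and dropped by $c$, and quiescence thereafter, with the topology held fixed throughout.

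The hard part will be constructing a \emph{reachable} state with exactly these routing-table contents, and checking its consistency against the invariants already proved. In particular, $d$'s valid, known reverse route to $s$ must obey \Prop{dsn} (its sequence number cannot exceed $s$'s own), while $c$'s invalid entry must satisfy \Prop{inv_nsqn} (its net sequence number at least that of $d$); together with the hop-count constraint this forces a delicate bookkeeping in which $c$ was adjacent to $s$ when the reverse routes were first learned, then lost that link and invalidated its entry \emph{without} propagating an RERR that would also invalidate $d$'s entry, all before $\phi^{\it pre}$ becomes true. Producing an explicit history $N_0\ar{\ell_1}\cdots\ar{\ell_k}N_k$ realising this, and confirming that it uses only legal transitions of the operational semantics with the topology reconnected to the final shape before the triggering request, is where the real work lies; the remainder is a routine trace through \Pro{aodv}--\Pro{rerr} confirming that the reply is dropped and that no enabled task required by (\Fi)--(\Fiii) ever re-establishes a route.
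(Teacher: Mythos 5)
Your construction has a fatal flaw, and it is exactly the consistency check you defer as ``delicate bookkeeping'': the state you need is not merely hard to reach, it is unreachable. You require $d$ to hold a \emph{valid} reverse-route entry for $s$ whose next hop is the third node $c\neq s$ and whose sequence number equals the $\dval{osn}$ carried by the incoming RREQ. First, by Invariants~\Eq{inv_viia} and~\Eq{inv_vii} (\Prop{rte}), any entry whose next hop differs from its destination has hop count greater than $1$ and hence a \emph{known} sequence-number-status flag, so Invariant~\Eq{dsn} (\Prop{dsn}(b)) applies to it: its sequence number never exceeds $s$'s own. Second, $\dval{osn}$ is $s$'s sequence number \emph{freshly incremented} at \Pro{aodv}, Line~\ref{aodv:line36}, so immediately before the request $d$'s stored value for $s$ is at most $\dval{osn}-1$; the only messages in the network carrying the new value $\dval{osn}$ are copies of this very flood, and in your topology ($c$ adjacent only to $d$, and $d$, being the destination, never re-broadcasts) node $d$ handles exactly one such copy, namely the one arriving from $a$. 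The entry could also change by invalidation, but that destroys the validity you demand. Hence when $d$ executes \Pro{rreq}, Line~\ref{rreq:line6}, its entry for $s$ is valid with sequence number strictly below $\dval{osn}$ (or invalid with sequence number at most $\dval{osn}$), so the second (or fourth) clause of $\fnupd$ fires, the next hop becomes $a$, and the RREP travels $d\to a\to s$: route discovery \emph{succeeds} in your scenario.

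Your mechanism can be salvaged, but only in a form you explicitly ruled out: make $d$'s stale entry for $s$ \emph{invalid} with sequence number \emph{strictly greater} than $\dval{osn}$---invalidation increments can push stored numbers above the destination's own sequence number, since \Prop{dsn}(b) constrains only valid entries. Then indeed no clause of $\fnupd$ fires and the RREP is unicast along the stale next hop; this is essentially the scenario of \Fig{no_route_reply} in \SSect{skipunknownflag}, where the reply consequently never reaches $s$. The paper's own proof of the theorem needs none of this machinery: it uses a three-node chain $s$--$a$--$d$ with a completely stable topology and no stale or invalid state. Nodes $a$ and $s$ both issue requests for $d$; $d$ answers $a$'s request first, so $a$ installs a fresh route to $d$; when $d$'s reply to $s$'s request then passes through $a$, it carries no fresher information, Line~\ref{rrep:line3} of \Pro{rrep} evaluates to false while Line~\ref{rrep:line25} evaluates to true, and $a$ silently drops the reply instead of forwarding it. The culprit there is the RFC's ``forward only if a forward route has been created or updated'' condition, not misrouting along stale reverse routes. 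So the gap is twofold: the counterexample you commit to cannot be realized, and the failure mechanism you target requires the invalid-entry configuration you tried to avoid, whereas the intended proof is far simpler.
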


We show this by an example (\Fig{route_lost}).
In particular, we show that a route reply may be dropped.
This problem has been raised before, back in Oct 2004.\footnote{\url{http://www.ietf.org/mail-archive/web/manet/current/msg05702.html} shows the same shortcoming using a $4$-node linear topology.}
We discuss modifications of AODV to solve this problem in \SSect{modifyRREP}.
\Fig{route_lost} shows a network consisting of $3$ nodes in
a linear topology. Two nodes ($a$ and $s$) are both searching for a
route to  destination $d$.\footnote{In \cite{MSWIM12} we present a version of this example in a
    non-linear 4-node topology with symmetry between the two nodes that search for a route to $d$.}
First, node $a$ broadcasts a route request, RREQ${}_1$ (\Fig{route_lost}(b)). As usual all recipients update their routing tables.
Since node $s$ still has no information about $d$, it also initiates a
route request, RREQ${}_{2}$.
After $a$ has forwarded that request (\Fig{route_lost}(c)), $d$ initiates a route reply as a consequence
of RREQ${}_{1}$. When node $a$ receives this reply, it updates its own routing table (\Fig{route_lost}(d)).
Finally, node $d$ reacts on the second route request received 
(RREQ${}_{2}$) and sends yet another route reply.
Node $a$ receives RREP${}_{2}$, but does \emph{not} forward it. This is%
\linebreak\mbox{}\vspace{-13.6pt}\pagebreak[3]
\begin{exampleFig}{Route discovery fails}{fig:route_lost}
\FigLine[lsr]%
  {The initial state.}{fig/ex_lost_rrep2_1}{}
  {$a$ broadcasts a new RREQ message destined to $d$;\\ all nodes receive the RREQ and update their RTs.}{fig/ex_lost_rrep2_2}{}
 \FigLine[slsr]%
  {$s$ broadcasts a new RREQ destined to $d$;\\$a$ forwards it.}{fig/ex_lost_rrep2_3}{}
  {$d$ handles RREQ${}_1$ and unicasts a RREP to $a$.}{fig/ex_lost_rrep2_4}{}
\FigLine[sl]%
  {$d$ handles RREQ${}_2$ and unicasts a RREP to $a$.}{fig/ex_lost_rrep2_5}{}
  {This ends the work of AODV; $s$ will never get an answer for its RREQ.}{}{}
\end{exampleFig}
\noindent
because RREP${}_{2}$ does not contain any fresher information about
destination $d$, in comparison with the information in node
$a$'s existing routing table entry for  $d$. As a result,
RREP${}_{2}$ is dropped at node $a$, and node $s$ never receives a
route reply for its route request.
Looking at our model (Process~\ref{pro:rrep}), the node does not forward a request since
Line~\ref{rrep:line3} evaluates to false whereas Line~\ref{rrep:line25} evaluates to true.
\endbox

At first glance, it seems that this behaviour can be fixed by a repeated route request.
If node $s$ would initiate and broadcast another route request, node $a$ would receive it and
generate a route reply immediately. The AODV RFC specifies that a node can broadcast another route
request if it has not received a route reply within a pre-defined time. However, a repeated route
request does not guarantee the receipt of a route reply. It is easy to construct an example similar
to \Fig{route_lost} where, instead of a linear topology with $3$ nodes, we use a linear
topology with $n+2$ nodes, where $n$ is the maximum number of repeated route requests.

But the situation is even worse. Even in a $4$-node topology an infinite stream of repeated route
requests cannot guarantee route discovery. \Fig{route_lost_rreq_resend} illustrates this fact.

In the initial state, node $a$ has established a route to $d$ via a standard RREQ-RREP cycle, initiated by~$a$.
Subsequently, in Part (b), node $b$ searches for a route to $x$ (an
arbitrary node that is not connected 
to any of the nodes we consider). After $d$ forwards the RREQ
message destined for $x$, node $a$ creates a valid route to $d$ with an unknown sequence number that
equals $d$'s own sequence number.\footnote{This examples hinges on our choice of Resolution (2c) of Ambiguity~2. Taking Resolutions
  (2a) or (2d) would avoid this problem; another solution would be following the suggestion of
  I.~Chakares in Footnote~\ref{Chakares-increment-when-issuing-RREP} on Page~\pageref{Chakares-increment-when-issuing-RREP}.
  We will propose a more thorough solution, that also tackles the problem of \Fig{route_lost},
  in \SSect{modifyRREP}.}
Now $s$ initiates\vspace{-5pt}\linebreak\mbox{}\vspace{-13.6pt}\pagebreak[3]
\begin{exampleFig}{Route discovery also fails with repeated request resending}{fig:route_lost_rreq_resend}
\FigLine[xslxsr]%
  {The initial state;\\$a$ established a route to $d$ by a RREQ-RREP-cycle.}{fig/ex_lost_rrep3_1}{}
  {$b$  broadcasts a new RREQ destined to $x$;\\the request travels through the network.}{fig/ex_lost_rrep3_2}{}
  \FigNewline
 \FigLine[xslxsr]%
  {$s$ broadcasts a new RREQ destined to $d$.}{fig/ex_lost_rrep3_3}{}
  {$d$ sends a route reply for $s$ back to $a$;\\$a$ drops the reply.}{fig/ex_lost_rrep3_4}{}
\end{exampleFig}
\noindent
a route request, searching for a route to
$d$. Since node $a$ does not have a known sequence number for $d$ it may not generate an intermediate route
reply (Pro.~\ref{pro:rreq}, Line~\ref{rreq:line22} evaluate to \keyw{false}). Hence it forwards the
route request (Part~(c)), and node $d$ answers with a RREP message (Part (d)). However, node $a$ will
not update its routing table entry for $d$, because it already has an entry with the same sequence
number and the same hop count (Line~\ref{rrep:line3} of Pro.~\ref{pro:rrep} evaluates to false
whereas Line~\ref{rrep:line25} evaluates to true).
As a consequence, $a$ does not forward the route reply to $s$, and $s$ will not create a route to $d$.
Repeating the route request by $s$ will not help, as the same events will be repeated.%

\hypertarget{ReplyIssued}{
Both counterexamples show a failure in forwarding a route reply back to the originator 
of the route discovery process. This travelling back can be seen as the second step of 
a route discovery process. The first step consists of the route request travelling from the
  originator to either the destination or to a node that has a valid route to the destination (with
  known sequence number) in its routing table.
  The following property states that this step
  always succeeds: whenever a route request is issued in a state where the source is connected to the destination
  and subsequently no link break occurs, then some node will eventually send a route reply back towards the source.\pagebreak[1]
\[
\mathbf{G} \left(\begin{array}{@{}c@{}}   
 \big(\textbf{connected}^*(\dval{oip},\dval{dip}) \ans \broadcastP{\rreq{*}{*}{\dval{dip}}{*}{*}{\dval{oip}}{*}{\dval{oip}}}\big) \\
 \ims  \mathbf{F} \big(\unicast{\rrep{*}{\dval{dip}}{*}{\dval{oip}}{*}}{*} \ors \textbf{disconnect}(*,*)\big) 
\end{array}\right).
\]  
This property does hold for AODV\@. Namely, Pro.~\ref{pro:rreq} is structured in such a way that
upon receipt of a RREQ message, either a matching RREP is sent or the RREQ is forwarded.
So if a route reply is never generated, then the route request floods the network and reaches all nodes connected to the originator of the request,
which by assumption includes the destination---this would cause a RREP to be sent.}

\subsection{Packet Delivery}\label{ssec:packet_delivery}

\index{packet delivery property}%
The property of \emph{packet delivery} says that if a client injects a
packet, it will eventually be delivered to the destination. However, in a WMN
it is not guaranteed that this property holds, since nodes can get
disconnected, e.g., due to node mobility. A useful formulation has to be weaker.
A higher-layer communication protocol should
guarantee
 packet delivery only if an end-to-end route
exists long enough. More precisely, such a protocol should
guarantee delivery of a packet injected by a client at node
$\dval{oip}$ with destination $\dval{dip}$, when $\dval{oip}$ is
connected to $\dval{dip}$ and afterwards no link in the network is
disconnected. This means that for 
all $\dval{oip},\dval{dip}\in\IP$, and any data packet $\dval{dp}\in\tDATA$, the following should
hold:
\begin{equation}\label{eq:PD1}
    \mathbf{G}\left(
\begin{array}{@{}c@{}}
    \big(\textbf{connected}^*(\dval{oip},\dval{dip}) \ans \dval{oip}:{\bf newpkt}(\dval{dp},\dval{dip})\big)\\
    \ims \mathbf{F} \big(\dval{dip}:\deliver{\dval{dp}}\ors\textbf{disconnect}(*,*)\big)
\end{array}
\right).
\tag{PD${}_{1}$}
\end{equation}

\noindent
Here $\colonact{\dval{oip}}{{\bf newpkt}(\dval{dp},\dval{dip})}$ models injection of a new data
packet \dval{dp} at $\dval{oip}$, and $\colonact{\dval{dip}}{\deliver{\dval{dp}}}$ that the destination receives it.
This formulation of packet delivery does not specify any particular
route, but merely requires that $\dval{dp}$ will eventually be delivered. The property does not
require a packet to arrive once any link in the network breaks down.

For a routing protocol like AODV, this form of packet delivery is a much too strong
requirement. The example of \Fig{packet_delivery} shows why it does not hold.

\begin{figure}
\vspace{-2ex}
\begin{exampleFig}{Packet delivery property PD$_1$ fails}{fig:packet_delivery}
\FigLine[lr]%
  {The initial state;\\ $s$ has established a route to $d$.}{fig/ex_lost_packetdelivery2_1}{}
  {The topology changes.}{fig/ex_lost_packetdelivery2_2}{}
\FigLine[lr]%
  {$s$ transfers a packet to $a$, for delivery at $d$.}{fig/ex_lost_packetdelivery2_3}{}
  {$a$ drops the packet and sends a RERR message to $s$.}{fig/ex_lost_packetdelivery2_4}{}
\end{exampleFig}
\vspace{-2ex}
\end{figure}

In the initial state node $s$ has, through a standard RREQ-RREP cycle, established a route to $d$.
Afterwards, the link between $a$ and $d$ breaks, and a new link between $s$ and $d$ is established.
Subsequently, say in state $S$, the application layer injects a data packet \dval{dp} destined for $d$ at node $s$.
Based on the information in its routing table, $s$ transfers the packet to $a$. However, the packet
is dropped by $a$ when $a$ fails to forward the packet to $d$.
To be precise, the reachable state $S$ satisfies $\textbf{connected}^*(s,d) \ans \dval{s}:{\bf newpkt}(\dval{dp},d)$
but there is a path from $S$ that does not feature any state with $d:\deliver{\dval{dp}}$
or $\textbf{disconnect}(*,*)$.

This failure of \Eq{PD1} is normal behaviour of a routing protocol. A higher layer in the 
network stack (e.g. the transport or the application layer) may
use an acknowledgement and retransmission protocol on top of its use of a routing
protocol, and this combination might guarantee \Eq{PD1}. 
For the routing protocol itself, it suffices that a packet will eventually be delivered if 
the client (higher-layer protocol)  injects the same data packet 
again and again, until the packet has reached the destination.
This gives rise to the following weaker form of packet delivery:
\begin{equation}
    \mathbf{G}\left(\begin{array}{@{}c@{}}\label{eq:PD2}
    \big(\textbf{connected}^*(\dval{oip},\dval{dip}) \ans \dval{oip}:{\bf newpkt}(\dval{dp},\dval{dip})\big)\\
    \ims\rule{0pt}{11pt} \mathbf{F} \big(\dval{dip}:\deliver{\dval{dp}} \ors\textbf{disconnect}(*,*)
    \ors \neg\mathbf{F}\big(\dval{oip}:{\bf newpkt}(\dval{dp},\dval{dip})\big)\big)
\end{array}
\right).
\tag{PD${}_{2}$}
\end{equation}
This is the property \Eq{PD1}, but under the side condition $\psi =
\mathbf{F}\big(\dval{oip}:{\bf newpkt}(\dval{dp},\dval{dip})\big)$ that is required to hold after
the initial injection of the data packet and until the packet is delivered---see
\Eq{ltl property with side condition}. This side condition says that one will keep injecting copies
of the same data packet, i.e.\ every state for which $\psi$ holds is followed by one where such a packet is injected.
In \Eq{PD2}, the clause $\dval{oip}:{\bf newpkt}(\dval{dp},\dval{dip})$ in the
precondition is redundant, as it is implied by the side condition $\psi$. Moreover, by the
equivalence of \Eq{ltl property with side condition}, \Eq{PD2} can also be formulated as
\[
    \mathbf{G}\left(\begin{array}{@{}c@{}}
 \big(\textbf{connected}^*(\dval{oip},\dval{dip}) \ans \mathbf{GF}\big(\dval{oip}:{\bf newpkt}(\dval{dp},\dval{dip})\big)\big)\\
    \ims\rule{0pt}{11pt} \mathbf{F} \big(\dval{dip}:\deliver{\dval{dp}} \ors\textbf{disconnect}(*,*)\big)
\end{array}
\right).
\]

\noindent
Here, $\mathbf{GF}\big(\dval{oip}:{\bf newpkt}(\dval{dp},\dval{dip})\big)$ states that the
injection of the data packet \dval{dp} at node \dval{oip} will be repeated infinitely
often.\footnote{Due to the existence of finite complete paths, the formula
$\mathbf{GF}\big(\dval{oip}:{\bf newpkt}(\dval{dp},\dval{dip})\big)$ also holds for complete paths
    whose final state satisfies $\dval{oip}:{\bf newpkt}(\dval{dp},\dval{dip})$.
However, in our specification of AODV such complete paths do not occur.}
If during that time no two nodes get disconnected, the packet will eventually be delivered at its
destination \dval{dip}.

Continuing the example of \Fig{packet_delivery}, in Part (d), node $a$ sends a route
error message to $s$, as a result of which $s$ invalidates its routing table entry for $d$.
If now a new data packet destined for $d$ is injected at $s$, node $s$ initiates a new route
discovery process and finds the 1-hop connection. As a result of this, the packet will be delivered
at $d$, as required by \Eq{PD2}.

\Eq{PD2} appears to be a reasonable packet delivery property for a routing protocol like AODV\@.
Yet, it is still too strong for our purposes. A failure of \Eq{PD2} can occur easily in the following
scenario: node \dval{oip} has a packet for node \dval{dip}, and initiates a route discovery process
by issuing a route request, while setting the request-required flag for the route towards \dval{dip} to $\pen$.
The route request reaches \dval{dip}, but the corresponding route reply is lost on the way back to
\dval{oip}, due to a link break. From that moment onwards the topology remains stable and a route
from \dval{oip} to \dval{dip} exists. We may even assume that it would be found if only \dval{oip}
does a second route request. However, such a second route request will never happen because the
request-required flag keeps having the value $\pen$ in perpetuity.

This failure of \Eq{PD2} is a flaw of our model rather than of AODV\@. A more realistic model would
specify that the request-required flag cannot keep the value $\pen$ forever. After a timeout,
either the flag should revert to $\nonpen$, so that a new route request will be made, or the entire
queue of data packets destined to \dval{dip} will be dropped, so that a newly injected packet will
start a fresh queue, which is initialised with a request-required flag $\nonpen$.
Such modelling requires timing primitives; however, since we abstract from timing issues, we
did not build such a feature into our packet handling routine.

To compensate for this omission, we add a precondition to the packet delivery property, namely
that if \dval{oip} perpetually has queued packets for \dval{dip} but no valid route to \dval{dip},
then eventually the request-required flag at \dval{oip} for destination \dval{dip} will be set to $\nonpen$:
\[
\mathbf{G}\big(\mathbf{G}(\dval{dip}\in\keyw{qD}^{\dval{oip}} - \fnakD^{\dval{oip}}) \ims
\mathbf{F}( \fnfD^{\dval{oip}}(\dval{dip})=\nonpen)\big)
\]
\index{packet delivery property}%
Adding this precondition to \Eq{PD2} yields \Eq{PD4}, our final \emph{packet delivery} property:
\begin{equation}
\begin{array}{@{}l@{}}
\phantom{\ims\ }\mathbf{G}\big(\mathbf{G}(\dval{dip}\in\keyw{qD}^{\dval{oip}} - \fnakD^{\dval{oip}}) \ims
\mathbf{F}\big( \fnfD^{\dval{oip}}(\dval{dip})=\nonpen\big)\\
\ims    \mathbf{G}\left(\begin{array}{@{}c@{}}
\textbf{connected}^*(\dval{oip},\dval{dip})\\
    \ims\rule{0pt}{11pt} \mathbf{F} \big(\dval{dip}:\deliver{\dval{dp}} \ors\textbf{disconnect}(*,*)
    \ors \neg\mathbf{F}\big(\dval{oip}:{\bf newpkt}(\dval{dp},\dval{dip})\big)\big)
\end{array}
\right).
\end{array}
\tag{PD$_3$}\label{eq:PD4}
\end{equation}
This property ought to be satisfied by a protocol like AODV\@.
Nevertheless,

\begin{theorem}\rm
AODV does not satisfy the property packet delivery.
\end{theorem}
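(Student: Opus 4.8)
The plan is to prove the theorem by exhibiting a concrete counterexample: a fixed finite network together with one specific infinite run (a complete path in the sense of \Def{complete path}) that satisfies the fairness specification (\Fi)--(\Fiii) and justness, and on which the \emph{precondition} of \Eq{PD4} holds while its \emph{conclusion} fails. Concretely, I would fix $\dval{oip}=s$ and $\dval{dip}=d$, produce a reachable state $S$ on the run at which $\textbf{connected}^*(s,d)$ holds, and arrange that from $S$ onwards (i) no $\textbf{disconnect}(*,*)$ ever occurs (so by monotonicity of connectivity $\textbf{connected}^*(s,d)$ persists), (ii) $\colonact{s}{\textbf{newpkt}(\dval{dp},d)}$ is performed infinitely often, and yet (iii) $\colonact{d}{\deliver{\dval{dp}}}$ never occurs. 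This exactly refutes the inner formula $\mathbf{G}(\textbf{connected}^* \ims \mathbf{F}(\dval{dip}:\deliver{\dval{dp}}\ors\textbf{disconnect}(*,*)\ors\neg\mathbf{F}(\ldots)))$, since at $S$ all three disjuncts in the scope of $\mathbf{F}$ are permanently false.

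The engine driving the failure is the route-reply-dropping phenomenon already exhibited in \Fig{route_lost} and \Fig{route_lost_rreq_resend}: an intermediate node silently discards a RREP destined for $s$ because the reply carries no information fresher than what the node already holds (Line~\ref{rrep:line3} of \Pro{rrep} evaluates to \keyw{false}, so Line~\ref{rrep:line25} applies), a situation made possible by Resolution (2\ref{amb:2c}) which lets a node hold a valid route with an \unkno-flagged sequence number equal to $d$'s own. First I would replay that mechanism so that $s$ fails to obtain a usable forward route to $d$ while remaining connected to it, then splice data-packet injection on top: a data packet for $d$ queued at $s$ triggers a route discovery, the resulting reply is dropped at the intermediate, and the packet is never forwarded towards $d$. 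The temporal obligations then reduce to routine bookkeeping: (\Fiii) and justness force the message queues to be serviced, (\Fii) forces route requests to be re-issued whenever the request-required flag is $\nonpen$, and (\Fi) forces a queued packet to be unicast whenever $s$ possesses a valid route, so that no desired transition is indefinitely postponed for a spurious reason; one checks these are discharged along the constructed run and that the run is genuinely complete in the layered sense of \Def{complete path}.

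The hard part is the \emph{precondition} $\mathbf{G}(\mathbf{G}(d\in\qD^{s}-\fnakD^{s}) \ims \mathbf{F}(\fD{*}{d}=\nonpen))$, which was deliberately inserted to neutralise the model artifact (a permanently $\pen$ flag) discussed just before \Eq{PD4}. A run on which $s$ simply gets no route and the flag freezes at $\pen$ would make this precondition \emph{false}, rendering \Eq{PD4} vacuously true and yielding no counterexample; so the construction must keep the flag from getting stuck. The delicate step is therefore to orchestrate the topology history and message interleaving so that either $d\in\fnakD^{s}$ holds infinitely often (falsifying the antecedent $\mathbf{G}(d\in\qD^{s}-\fnakD^{s})$ at the relevant states) or the flag is reset to $\nonpen$ infinitely often via invalidation, \emph{without} this ever leading to actual delivery and \emph{without} requiring a disconnect after $S$. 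This is where the invariants of \Sect{invariants} must be respected and exploited at once: \Prop{inv_nsqn} constrains how a valid route of $s$ through a next hop relates to that next hop's entry, \Thm{loop free} forbids the obvious circular fix, and the interaction of the RERR/invalidation machinery (Resolution~(8f)) with the precursor bookkeeping governs exactly when $s$ is notified and re-enters route discovery. Getting these to cooperate so that the flag oscillates (or a stale valid route persists) while the packet $\dval{dp}$ is perpetually dropped is the main obstacle, and verifying it is precisely the content of the proof; the concluding argument, that the precondition then holds while the conclusion fails, is immediate once such a run is in hand.
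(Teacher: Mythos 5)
Your overall strategy---refuting \Eq{PD4} by exhibiting one concrete fair, complete run on which the precondition holds and the conclusion fails---is the same as the paper's, and you correctly isolate the crux: the request-required-flag precondition. The gap is that you never construct such a run, and the engine you commit to provably cannot produce one. In the RREP-dropping scenario of \Fig{route_lost} and \Fig{route_lost_rreq_resend}, node $s$ never acquires a valid route to $d$. But in the model the flag for $d$ at $s$ can only return to $\nonpen$ via $\fnsetrrf$, whose argument sets are always built from destinations in $\fnakD$ (so a \emph{valid} entry for $d$ must be invalidated), or by re-creation of the queue through $\fnadd$ after $\fndrop$ emptied it---which requires a \emph{successful} unicast, hence again a valid route. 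So on every run built on RREP dropping the flag freezes at $\pen$, the precondition fails, and PD$_3$ holds vacuously: the trap you yourself describe swallows your own construction. Your fallback---flag oscillation via repeated invalidation---is also blocked: after the reference state $S$ no $\textbf{disconnect}(*,*)$ may occur, so an invalidation at $s$ can only stem from a failed unicast to an already out-of-range next hop or from an incoming RERR, each reset is followed (by (\Fii)) by a fresh RREQ that puts the flag back to $\pen$, and you give no argument that this can recur forever without a reply ever getting through to $s$. Declaring that this orchestration ``is precisely the content of the proof'' concedes that the proof is missing exactly where it is needed. (The paper's own first citation of \Fig{route_lost_rreq_resend}, with its informal ``infinite stream of repeated route requests'', is loose on the same point---which is presumably why it supplies a second counterexample.)

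The paper's decisive counterexample (\Fig{lost_due_precs}) uses a different mechanism that sidesteps the flag entirely: \emph{incomplete precursor lists}. Node $s$ obtains a route to $d$ via $a$ from a RREQ flood; the link $a$--$d$ breaks \emph{before} the reference state $S$; afterwards the topology is stable and $\textbf{connected}^*(s,d)$ holds via another path. Since $s$'s entry for $d$ remains valid forever, $d\in\fnakD^{s}$ at every state, so the antecedent $\mathbf{G}(d\in\keyw{qD}^{s}-\fnakD^{s})$ never holds and the precondition of \Eq{PD4} is vacuously true---no flag gymnastics needed. Every injected packet is successfully unicast to $a$ (discharging (\Fi)) and dropped there; the RERR that $a$ generates goes to an \emph{empty} precursor list, so $s$ is never informed, never invalidates, and keeps feeding packets into the void, while $d$ never performs $\deliver{\dval{dp}}$ and no disconnect occurs. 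You mention ``a stale valid route persists'' only parenthetically, as one of several options to be orchestrated; recognising that this, combined with the precursor-emptiness argument that keeps $s$ uninformed, \emph{is} the construction---and then checking it against (\Fi)--(\Fiii) and Definition~\ref{df:complete path}---is the missing content of the proof.
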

\Fig{route_lost_rreq_resend} presents an example where an infinite stream of repeated route
request does not result in route discovery, let alone in packet delivery.

\begin{figure}
\vspace{-2ex}
\begin{exampleFig}{Precursor maintenance limits packet delivery ratio}{fig:lost_due_precs}
\FigLine[lr]%
  {$d$ broadcasts a new RREQ message destined to $b$;\\the RREQ floods the network; $s$ creates a route to~$d$.}{fig/ex_FailureOfPrecursors_2}{}
  {$b$ handles RREQ$_{1}$ and unicasts a reply back to d.}{fig/ex_FailureOfPrecursors_3}{}
 \FigLine[lr]%
  {The topology changes;\\$s$ receives a data packet destined to $d$.}{fig/ex_FailureOfPrecursors_5}{}
  {$a$ tries to forward data packet to $d$;\\packet delivery fails.}{fig/ex_FailureOfPrecursors_6}{}
\end{exampleFig}
\vspace{-2ex}
\end{figure}

\Fig{lost_due_precs} shows yet another counterexample against packet delivery, this time when the
route discovery property is satisfied.
Initially, node $d$ requests a route to $b$ (\Fig{lost_due_precs}(a)). As a result, $a$ creates a routing table entry for $d$,
with an empty set of precursors.\footnote{In fact, in this example all lists of precursors are empty.}
In Part (b), the reply is sent from node $b$ to node $d$. Afterwards, in Part (c), the link
between $a$ and $d$ breaks. From here on the topology remains stable, and $\textbf{connected}^*(s,d)$ holds.
In Part (c) the application layer injects a packet at $s$ for delivery at $d$. Since $s$ already has a routing table entry for $d$, no new route
request needs to be initiated, and the packet can be sent right away. Unfortunately, the packet is
dropped when $a$ fails to forward it to~$d$. Node $a$ invalidates its entry, but has no precursors
for the route to $d$ to send an error message to.\footnote{The same behaviour occurs when node $a$
  detects the link break earlier, for instance by using Hello messages.} As a consequence, $s$ will not learn about the
broken link, and all subsequent packets travelling from $s$ to $d$ will be dropped at $a$
(Pro.~\ref{pro:pkt}, Lines~\ref{pkt2:line15}--\ref{pkt2:line20}).


\section{Analysing AODV---Problems and Improvements}\label{sec:analysingAODV}

In this section we point at shortcomings of the AODV protocol and discuss possible solutions.
The solutions are again modelled in our process algebra. This makes it easy to ensure that the presented
improvements are unambiguous and still satisfy the invariants discussed in the \Sect{invariants}. In particular we show that
all variants of AODV presented in the remainder of this section are loop free and satisfy the route correctness property.

More precisely we propose five changes to the AODV protocol.

In \SSect{skipbroadcastid} we show that the route request identifier (RREQ ID)
is redundant and can be dropped
from the specification of AODV without changing the behaviour of the protocol in any way.
This is a small  improvement, but reduces the size of message headers.

In Sections~\ref{ssec:modifyRREP}--\ref{ssec:groupandbroadcast} we address three deficiencies of
AODV that each cause a failure of the packet delivery property discussed in \Sect{properties}.
The first two deal with failures of the route discovery property, which is a necessary precondition to
ensure packet delivery. 

In \SSect{modifyRREP} we discuss a known problem of AODV, namely that a node
fails to forward a RREP message that does not contain new information. This leads to a failure of
route discovery because the information can be new to the nodes to which the message ought to be
forwarded.

In \SSect{skipunknownflag} we discuss failures of route discovery that depend on the convention for
routing table updates in response to an AODV control message from a neighbour (cf.~\hyperlink{sss921b}{Ambiguity 2})
and analyse conventions that are not prone to such failures.

In \SSect{groupandbroadcast} we show how error messages may fail to reach nodes that need to be
informed of a link break. This may cause a failure of packet delivery even when route discovery is
guaranteed. This problem can be solved by always broadcasting error messages.

Finally, in \SSect{forwardRREQ}, we show that AODV inadvertently establishes sub-optimal routes, i.e., even when there is a shorter route
towards a destination, AODV will use (much) longer paths to send packets. This problem can
be avoided by modifying the process $\RREQ$ for handling message requests.

\subsection{Skipping the RREQ ID}\label{ssec:skipbroadcastid}
AODV does not need the route request identifier. This number, in combination with the IP address of the originator, is used
to identify every RREQ message in a unique way. However, we have shown that
the combination of the originator's IP address and its sequence number
is just as suited to uniquely determine the route request to which the message belongs
(cf.\ \Prop{messagebroadcast}(b)).
Hence, the route request identifier field is not required. This can then reduce the size of the RREQ message.

In detail, the following changes have to be made:
\begin{itemize}
\item The set {\tRREQID} (including the variable \rreqid) and the function $\fnnrreqid$ are skipped.
\item The variable {\rreqs} is now of type $\pow(\tIP\times\tSQN)$.
\item The function $\rreqID$ to generate route requests has now the type

\centerline{
$\rreqID:\NN \times\tIP \times \tSQN \times \tSQNK\times \tIP \times \tSQN \times \tIP \rightarrow \tMSG\ .$
}
All the parameters are the same, except that the request identifier is left out.
\item The modified basic routine (\Pro{aodv}) is given by \Pro{aodv_no_rreqid}.
\begin{figure}[th]
\vspace{-3ex}
  \algsetup{linenodelimiter=.,linenosize=\tiny}
  \begin{algorithm}[H]
    {\footnotesize
      \caption{The basic routine, not using $\rreqid$}
      \label{pro:aodv_no_rreqid}
      \begin{algorithmic}[1]
        \input{processes/aodv_no_rreqid.tex}
	\end{algorithmic}
    }
  \end{algorithm}

\vspace{-3ex}
\end{figure}
\item In \Pro{rreq}, the occurrences of $\rreqid$ in Lines ``0'' and~\ref{rreq:line34} are dropped; all other occurrences (Lines~\ref{rreq:line2}, \ref{rreq:line4} and~\ref{rreq:line8}) are replaced by $\osn$.
\end{itemize}

The statements and proofs of Sections~\ref{sec:invariants} and \ref{sec:interpretation} are all valid, but need the following modifications. 
\begin{itemize}
\item Whenever the function $\rreqID$ is used, the second parameter (\rreqid) has to be dropped. 
\item Propositions~\ref{prop:invarianti_itemi} and
  \ref{prop:messagebroadcast}(a) use the variable $\rreqid$; they can be dropped.
The statement that a route request is uniquely determined by the pair
$(\oip,\osn)$, the replacement of \Prop{invarianti_itemi}, is already
stated and proven in \Prop{messagebroadcast}(b).
\item The statement of Invariant~\eqref{inv:starcast_iii} in \Prop{starcast} changes into
      \begin{equation}
      \renewcommand{\rreq}[7]{\rreqID(#1\comma#2\comma#3\comma#4\comma#5\comma#6\comma#7)}
	N\ar{R:\starcastP{\rreq{*}{*}{*}{*}{\oipc}{\osnc}{\ipc}}}_{\dval{ip}}N' \ims (\oipc,\osnc)\in\xiN{\ipc}(\rreqs)
      \end{equation}
      and likewise for Invariant~\eqref{inv:starcast_rreqid}. In the proof,
      ``content
		$
		\xi(*\comma\rreqid\comma*\comma*\comma*\comma\ip\comma*\comma\ip)
		$''
       changes into ``content
		$
		\xi(*\,\comma*\comma*\comma*\comma\ip\comma\osn\comma\ip)
		$''.
       All other occurrences of ``$\rreqid$'' change into ``$\osn$'', and ``$\rreqidc$'' into ``$\osnc$''.
\item In the proof of \Prop{selfentries} ``$\rreqid$'' changes into ``$\osn$''.
\end{itemize}

\subsection{Forwarding the Route Reply}\label{ssec:modifyRREP}

In AODV's route discovery process, a RREP message from the destination
node is unicast back along a route towards the
originator of the RREQ message. Every intermediate node on the
selected route will process the RREP message and, in most cases,
forward it towards the originator node. However, there is a
possibility that the RREP message is discarded at an intermediate
node, which results in the originator node not receiving a
reply. The discarding of the RREP message is due to the RFC
specification of AODV \cite{rfc3561} stating that an intermediate node
only forwards the RREP message if it is not the originator node
\emph{and} it has created or updated a {\rte} to the destination
node described in the RREP message:
\begin{quote}\raggedright\small 
``{\tt If the current node is not the node indicated by the Originator IP
   Address in the RREP message AND a forward route has been created or
   updated as described above, the node consults its route table entry
   for the originating node to determine the next hop for the RREP
   packet, and then forwards the RREP towards the originator using the
   information in that route table entry.}''\hfill\cite[Sect.~6.7]{rfc3561}
\end{quote}
The latter requirement means that
if a valid {\rte} to the destination node already exists, and is
not updated when processing the RREP message, then the intermediate
node will not forward the message. In \Sect{properties} we have illustrated this problem
with two examples (Figures~\ref{fig:route_lost} and~\ref{fig:route_lost_rreq_resend}),
also showing that this leads to a failure of route discovery.

A solution to this problem is to require intermediate nodes to forward 
\emph{all\/} RREP messages that they receive.
In the example presented in \Fig{route_lost}, the intermediate node $a$ will forward 
RREP${}_{2}$, after RREP${}_{2}$ was received in Part (e). As a result, node $s$ will establish a route to $d$.
Likewise, in \Fig{route_lost_rreq_resend}(d), node $a$ will forward  RREP${}_{2}$ and again $s$ will establish a route to $d$.

To implement this behaviour one can simply drop the
Lines~\ref{rrep:line3} and \ref{rrep:line25}--\ref{rrep:line26} of \Pro{rrep} (RREP handling), 
keeping Lines~\ref{rrep:line5}--\ref{rrep:line23a} only.

This solution guarantees the forwarding of the RREP message. However, it might be the case that outdated information is forwarded and, as a consequence, non-optimal information is stored in the routing tables. This is shown by the example presented in \Fig{non_optimal_forwarding}.

The example assumes a linear topology with $5$ nodes.  In Part (b), node $s$ receives a data packet destined to node $d$; it initiates a route discovery process. 
The request is forwarded by nodes $a$, $b$ and $c$ until it reaches the destination $d$. 
Node $d$ then generates a route reply and unicasts the message to $c$ (Part (c)).
After the RREP message is (successfully) sent, \Fig{non_optimal_forwarding}(d), a link between $a$ and $d$ is established and node $d$ broadcasts a new RREQ message, destined to $a$. 
This message is received by nodes $a$ and $c$. In principle node $c$ would later forward the request; however, this forwarding and the subsequent actions do not add anything to the example and therefore we drop this bit. 
\begin{exampleFig}{Always forwarding RREP messages}{fig:non_optimal_forwarding}
\FigLine[slxsr]%
  {The initial state.}{fig/ex_nonoptimal_due_to_forward_1}{}
  {$s$  broadcasts a new RREQ message destined to $d$;\\the request floods the network.}{fig/ex_nonoptimal_due_to_forward_2}{}
\FigLine[xslxsr]%
  {Node $d$ generates and send a RREP message to $c$.}{fig/ex_nonoptimal_due_to_forward_3}{}
  {The topology changes;\\$d$  broadcasts a new RREQ message destined to $a$.\footnotemark}{fig/ex_nonoptimal_due_to_forward_5}{}
\FigLine[xslxsr]%
  {$a$ unicasts RREP$_{2}$ back to $d$;\\$b$ forwards RREP${}_{1}$.}{fig/ex_nonoptimal_due_to_forward_6}{}
  {Due to the modification, $a$ forwards RREP${}_{1}$.}{fig/ex_nonoptimal_due_to_forward_7}{}
  \end{exampleFig} 
\footnotetext{The message RREQ${}_{2}$ is also sent to node $c$. Since it does not change the example, we suppress this message.}
After $a$ has initiated a route reply as a consequence of RREQ${}_{2}$, which is sent back to $d$, it receives RREP${}_{1}$ from $b$---the reply generated by node $d$ and destined to $s$. In the original version of AODV, as presented in Sections~\ref{sec:types} and \ref{sec:modelling_AODV}, the reply would be dropped, since $a$ does not update its routing table. In the modified version, $a$ creates a message by $\rrep{3}{d}{1}{s}{a}$, which is sent to node $s$ (\nhop{\rt}{\oip}). Note, that $a$ does not update its own routing table. As a consequence of this message, node $s$ updates its routing table and creates an entry to $d$ with sequence number $1$ and hop count $4$ (Part (f)).

Although this information is not incorrect, it is outdated. Any data
packet sent from $s$ to $d$ would be forwarded to $a$ and then
immediately to the destination, thanks to
node $a$ having fresher information (in its routing table the sequence number belonging to $d$ is $2$). As a general rule,
it makes sense to use the newest available information on 
the route to the destination node: if an intermediate node's  routing table contains an entry for the destination node
that is valid and fresher than that in the received RREP message, the intermediate node ought to
update the contents of the RREP message to reflect this.
To achieve this one can replace Line~\ref{rrep:line13} of \Pro{rrep} by
\[
\unicast{\nhop{\rt}{\oip}}{\rrep{\dhops{\rt}{\dip}}{\dip}{\sqn{\rt}{\dip}}{\oip}{\ip}}\ .
\]
In case the received reply contained fresher information, the routing table was already updated.
The full modified RREP handling is shown in \Pro{rrepmodifiedforward}. 
Note that Lines~\ref{rrepmod:line8} and \ref{rrepmod:line21}
(Lines~\ref{rrep:line11} and~\ref{rrep:line21} in the original
process) are also changed. The reason for this change is that information should only be forwarded when the intermediate node has a
\emph{valid}
route to the destination of the route discovery process. Assume for example the situation given in
\Fig{non_optimal_forwarding}(d). As before, node $a$ sends RREP${}_{2}$; but just before
RREP${}_{1}$ is handled by $a$, the unreliable link between $a$ and $d$ breaks and $a$ invalidates
its routing table for $d$, i.e.,\ it changes into $(d,3,\kno,\inval,1,d)$. 
Under such circumstances a route reply should not be forward, since any data packet reaching the intermediate node (in the example $a$) would be dropped.

All invariants presented in Sections~\ref{sec:invariants} and \ref{sec:interpretation} remain valid. However, a few proofs need adaptation. 
\begin{itemize}
\item In \Prop{starcastNew}(b), the case dealing with \Pro{rrep} now reads as follows:
\begin{description}
	\item[Pro.~\ref{pro:rrepmodifiedforward}, Line~\ref{rrepmod:line12}:\footnotemark] 
                \footnotetext{This line corresponds to Pro.~\ref{pro:rrep}, Line~\ref{rrep:line13} in the original specification.}
	The message has the form $\rrep{\xi(\dhops{\rt}{\dip})}{*}{*}{*}{*}$.
        By \Prop{positive hopcount} $\xi(\dhops{\rt}{\dip})>0$, so the antecedent does not hold.
\end{description}
  \algsetup{linenodelimiter=.,linenosize=\tiny}
  \begin{algorithm}[H]
    {\footnotesize
      \caption{RREP handling (Forwarding the Route Reply)}
      \label{pro:rrepmodifiedforward}
      \begin{algorithmic}[1]
\DEFPROCESS{\RREP}{\hops\comma\dip\comma\dsn\comma\oip\comma\sip\,\comma\,\ip\comma\sn\comma\rt\comma\rreqs\comma\queues}
		\UPD{\rt:=\upd{\rt}{(\dip\comma\dsn\comma\kno\comma\val\comma\hops+1\comma\sip\comma\emptyset)}}		\label{rrepmod:line1}
		\PAR																																									\label{rrepmod:line2}
		\IF[this node is the originator of the corresponding RREQ]{$\oip = \ip$}	
			\COMLINE{a packet may now be sent; this is done in the process \AODV}
			\aodvL{\ip}{\sn}{\rt}{\rreqs}{\queues}	
		\ELSIF[this node is not the originator; forward RREP]{$\oip \not= \ip$}
			\PAR
				\IF[valid route to \oip\ and to \dip]{$\oip\in\akD{\rt}\ans \dip\in\akD{\rt}$}														\label{rrepmod:line8}
					\COMLINE{add next hop towards $\oip$ as precursor and forward the route reply}						
					\UPD{\rt := \addprecrt{\rt}{\dip}{\{\nhop{\rt}{\oip}\}}}																					\label{rrepmod:line10}
					\UPD{\rt := \addprecrt{\rt}{\nhop{\rt}{\dip}}{\{\nhop{\rt}{\oip}\}}}																	\label{rrepmod:line11}
					\STARTPRIO	
						\unicast{\nhop{\rt}{\oip}}{\rrep{\dhops{\rt}{\dip}}{\dip}{\sqn{\rt}{\dip}}{\oip}{\ip}}\ .   								\label{rrepmod:line12}
						\aodvL{\ip}{\sn}{\rt}{\rreqs}{\queues}	
					\PRIO
						\COMspec{If the transmission is unsuccessful, a RERR message is generated}
						\UPD{\dests:=\{(\rip,\inc{\sqn{\rt}{\rip}})\,|\,\rip\in\akD{\rt}\ans \nhop{\rt}{\rip}=\nhop{\rt}{\oip}\}}			
						\UPD{\rt:=\inv{\rt}{\dests}}																								
												\UPD{\queues:=\setrrf{\queues}{\dests}}													
						\UPD{\pre:=\bigcup\{\precs{\rt}{\rip}\,|\,(\rip,*)\in\dests\}}											
						\UPD{\dests:=\{(\rip,\rsn)\,|\,(\rip,\rsn)\in\dests\ans \precs{\rt}{\rip}\not=\emptyset\}}				
						\groupcast{\pre}{\rerr{\dests}{\ip}}\ .\ \aodv{\ip}{\sn}{\rt}{\rreqs}{\queues} 					
					\ENDPRIO
				\ELSIF[no valid route to \oip\ or to \dip]{$\oip\not\in\akD{\rt}\ors\dip\not\in\akD{\rt}$}					\label{rrepmod:line21}									
					\aodvL{\ip}{\sn}{\rt}{\rreqs}{\queues}
				\ENDIFii																													
			\ENDPAR																								
		\ENDIFii	
	\ENDPAR
	

	\end{algorithmic}
    }
  \end{algorithm}

\item In \Prop{msgsendingii}(b), the case dealing with \Pro{rrep} now reads as follows:
\begin{description}
	\item[Pro.~\ref{pro:rrepmodifiedforward}, Line~\ref{rrepmod:line12}:] 
		Here, $\dsnc:=\xi(\sqn{\rt}{\dip})$. The last routing table update happened in Line~\ref{rrepmod:line1}.
		The update uses $\xi(\dsn)$, which stems, through Line~\ref{aodv:line12}
		of Pro.~\ref{pro:aodv}, from an incoming RREP message (Pro.~\ref{pro:aodv}, Line~\ref{aodv:line2}). 
		For this incoming RREP message the invariant holds, i.e.\ $\xi(\dsn)\geq1$. By \Prop{dsn increase}, the sequence number is increased monotonically, and hence
		$
		\dsnc:=\xi(\sqn{\rt}{\dip})\geq\xi(\dsn)\geq1
		$.
\end{description}
\item The case of \Prop{msgsending}(b) dealing with RREP handling now becomes
\begin{description}
	\item[Pro.~\ref{pro:rrepmodifiedforward}, Line~\ref{rrepmod:line12}:]
		The message has the form
		$\xi(\rrep{\dhops{\rt}{\dip}}{\dip}{\sqn{\rt}{\dip}}{\oip}{\ip}).$
		Hence $\hopsc:=\xi(\dhops{\rt}{\dip})$,
                $\dipc:=\xi(\dip)$, $\dsnc:=\xi(\sqn{\rt}{\dip})$,
                $\ipc:=\xi(\ip)=\dval{ip}$ and $\xiN{\ipc}=\xi$.
                Line~\ref{rrepmod:line1} guarantees that $\dipc=\xi(\dip)\in\kd{\ipc}$.
                Since the sequence number and the hop count are taken from the routing table, we get immediately\\
                \mbox{}\hfill$\begin{array}[b]{r@{~=~}l@{~=~}l}
				\sq[\dipc]{\ipc}& \sqn{\xi(\rt)}{\xi(\dip)} & \dsnc\\
				\dhp[\dipc]{\ipc} & \dhops{\xi(\rt)}{\xi(\dip)} & \hopsc\ .
				\end{array}$\hfill\mbox{}\\
		With exception of its precursors, which are irrelevant here, the routing table
		does not change between Lines~\ref{rrepmod:line8} and \ref{rrepmod:line12}.
		So, by Line~\ref{rrepmod:line8},  $\dipc=\xi(\dip)\in\akD{\xi(\rt)}$ and therefore\\
		\mbox{}\hfill$\begin{array}[b]{r@{~=~}l@{~=~}l}
		\sta[\dipc]{\ipc} & \status{\xi(\rt)}{\xi(\dip)} & \val\;.
		\end{array}$\hfill\mbox{}
\end{description}
\item The 7th case of the proof of \Prop{nhop_well_defined} is changed to
\begin{description}
\item[Pro.~\ref{pro:rrepmodifiedforward}, Line~\ref{rrepmod:line11}:] By Line~\ref{rrepmod:line8} $\xi(\oip)\in\akD{\xi(\rt)}$ and  $\xi(\dip)\in\akD{\xi(\rt)}$.
\end{description}
\item In \Prop{dhops_well_defined}, the following case needs to be added:
\begin{description}
\item[Pro.~\ref{pro:rrepmodifiedforward}, Line~\ref{rrepmod:line12}:]
By Line~\ref{rrepmod:line8} $\xi(\dip)\in\akD{\xi(\rt)}$.
\end{description}
\item In the proof of \Prop{upd_well_defined}, the cases for \Pro{rrep}, Lines~\ref{rrep:line3} and~\ref{rrep:line25} are skipped.
\item In \Prop{addpreRT_well_defined}, the last case changes into
\begin{description}
  \item[Pro.~\ref{pro:rrepmodifiedforward}, Line~\ref{rrepmod:line11}:]
 By Line~\ref{rrepmod:line8}, $\xi(\dip)\in\akD{\xi(\rt)}\subseteq\kD{\xi(\rt)}$, so a routing table entry for $\xi(\dip)$
 exists. Using \Prop{route to nhip}, this implies that $\nhop{\xi(\dip)}{\xi(\rt)}\in\kD{\xi(\rt)}$.
\end{description}
\item In \Thm{route correctness}(c), the case dealing with \Pro{rrep} becomes
\begin{description}
\item[Pro.~\ref{pro:rrepmodifiedforward}, Line~\ref{rrepmod:line12}:]
		The proof is the same as for Pro.~\ref{pro:rreq}, Line~\ref{rreq:line26}.
\end{description}
\item The last case of \Prop{dsn}(a) is changed to
\begin{description}
	\item[Pro.~\ref{pro:rrepmodifiedforward}, Line~\ref{rrepmod:line12}:]
		A route reply with $\dipc\mathbin{:=}\xiN{\dval{ip}}(\dip)$ and
                $\dsnc\mathbin{:=}\xiN{\dval{ip}}(\sqn{\rt}{\dip})\mathbin=\sq[\dipc]{\dval{ip}}$ is initiated. By
                Invariant~\eqref{eq:dsn} $\dsnc=\sq[\dipc]{\dval{ip}}\leq \xiN{\dipc}(\sn)$.
\end{description}
\end{itemize}
Surely, always forwarding (unicasting) replies increases the number of messages in the network. 
However, as illustrated by the examples of Figures~\ref{fig:route_lost} and~\ref{fig:route_lost_rreq_resend},
the policy to always forward the route reply
significantly increases the probability of a route discovery process being successful.
As a consequence, the probability of the originator re-issuing a route request to establish a route is much smaller. 
Such a re-sending would yield another broadcast cycle, which,  with respect to network load, is much
more expensive than the extra unicast of RREP messages.

\subsection{Updating with the Unknown Sequence Number}
\label{ssec:skipunknownflag}

In this section we evaluate the resolutions of \hyperlink{sss921b}{Ambiguity 2} 
of \Sect{interpretation}.
We have already discarded Resolution (2\ref{amb:2b}), as it leads to routing loops.
The alternatives, (2\ref{amb:2a}), (2\ref{amb:2c}) and (2\ref{amb:2d}),
have been shown to satisfy the loop freedom and route correctness property. 

\begin{wrapfigure}[9]{r}{0.35\textwidth}
 \vspace{-2ex}
\centering
\includegraphics[scale=1]{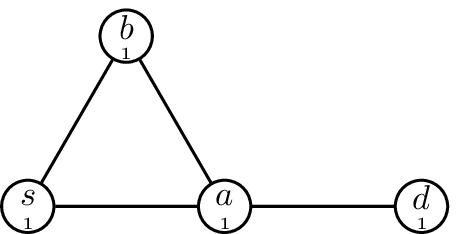}
\caption{$4$-node topology missing route optimisation opportunity}\label{fig:4nodetopAmb2}
\end{wrapfigure}
A disadvantage of Resolution (2\ref{amb:2a}) is that it misses opportunities to improve the
routes between two neighbouring nodes. It can lead to situations in which a node $s$ knows that node $d$
can be reached using $a$ as next hop, but at the same time does not know that there is a
valid 1-hop route to $a$ itself:  assume the topology given in \Fig{4nodetopAmb2}. The link between the nodes 
$s$ and $a$ is unreliable---messages sent via this link might get lost and the neighbouring nodes might detect that this link is broken. 
Let us further assume that $s$ has established a route to $a$;\pagebreak[3] the corresponding routing table entry might be 
$(a,1,\kno,\val,2,b)$ (the RREQ message from $s$ to $a$ got lost) or $(a,2,\kno,\inval,1,a)$ (a
1-hop connection was established, but the link broke down).
Next, node $d$ searches for a route to $s$. The generated RREQ message is received by $a$ and forwarded to $s$. 
Node $s$ creates a routing table entry to $d$ ($(d,2,\kno,\val,2,a)$) and tries to update its entry to $a$. 
However, by use of Resolution (2\ref{amb:2a}), neither of the above mentioned entries would be changed. 

This strongly gives the impression that information is not used in
an optimal way.

Resolutions (2\ref{amb:2c}) and (2\ref{amb:2d}) do not suffer from this drawback.
However, they have their own problems.
Resolution (2\ref{amb:2d}) gives rise to non-optimal routes, as illustrated in \Fig{sub_optimal_routes}.
In the initial state (\Fig{sub_optimal_routes}(a)), a route between
$a$ and $d$ is established through a standard RREQ-RREP cycle.
Then, in Part (b)), the connection between $a$ and $d$ breaks down. $a$ and $d$ detect the link break and
invalidate their routing table entries for each other, thereby increasing the destination sequence numbers.
Subsequently, the connection between $a$ and $d$ comes back up, and node $b$ (connected to $d$)
initiates a route request for a node $x$, which is not to be found in the vicinity (\Fig{sub_optimal_routes}(c)).
\begin{exampleFig}{Resolution (2d) gives rise to non--optimal routes}{fig:sub_optimal_routes}
\FigLine[xslxsr]%
  {The initial state;\\$a$ established a route to $d$ by a RREQ-RREP cycle.}{fig/ex_skipSQN-flag1_1}{}
  {The link between $a$ and $d$ breaks down;\\$a$ and $d$ invalidate their entries to each other.}{fig/ex_skipSQN-flag1_2}{}
 \FigLine[xslxsr]%
  {The link reappears;\\a RREQ from $b$ floods the network.}{fig/ex_skipSQN-flag1_3}{}
  {The topology changes again.}{fig/ex_skipSQN-flag1_4}{}
 \FigLine[xslxsr]%
  {$s$ broadcasts a new RREQ destined for $d$.}{fig/ex_skipSQN-flag1_5}{}
  {$s$ receives RREPs from $a$ and $d$.}{fig/ex_skipSQN-flag1_6}{}
  \end{exampleFig}
\noindent As a consequence, when $a$ receives the forwarded RREQ message from $d$, it validates its routing
table entry for $d$, the destination sequence number being higher than $d$'s own sequence number.
In Parts (d) and (e), a direct link between $s$ and $d$ appears, and $s$ searches for a route to $d$.
Its RREQ message is answered both by $a$, which knows a route to $d$, and by $d$ itself (\Fig{sub_optimal_routes}(f)).
Regardless which of the two RREP messages arrives first, $s$ establishes a route to $d$ of
length 2 via $a$, since the RREP message from $a$ carries a higher destination sequence number for
$d$ than the RREP message from $d$ itself. 
This anomaly pleads against the use of Resolution (2\ref{amb:2d}).

Although Resolution (2\ref{amb:2c}) seems to be the intention of the RFC
(cf.\ \hyperlink{sss921b}{Ambiguity 2}), it gives rise to route discovery failures as illustrated
in \Fig{route_lost_rreq_resend}. This situation is so common, and the lack of route discovery is
such a severe problem, that for the original AODV Resolution (2\ref{amb:2c}) can be judged
worse than (2\ref{amb:2a}) and (2\ref{amb:2d}), and should not be used.
The problem is a combination of the use of Resolution (2\ref{amb:2c}) and AODV's failure to
forward route replies. Once the latter problem is satisfactory addressed, for instance by following
our proposal in \SSect{modifyRREP}, the problem of \Fig{route_lost_rreq_resend} is solved, and
Resolution (2\ref{amb:2c}) is back in the race.
Nevertheless, the following example shows a remaining problem, that pertains to both
Resolutions~(2\ref{amb:2c}) and (2\ref{amb:2d}). (The sequence-number-status flags in the routing
table entries of \Fig{no_route_reply} conform to Resolution~(2\ref{amb:2c})---however, they play no role
in this example.)

\begin{exampleFig}{Failure in generating a route reply (Resolutions (2c) and (2d))}{fig:no_route_reply}
\FigLine[xslxsr]%
  {The initial state;\\$s$ established a route to $d$ by a RREQ-RREP cycle.}{fig/ex_skipSQN-flag2_1}{}
  {The link between $s$ and $d$ breaks down;\\$s$ and $d$ invalidate their entries to each other.}{fig/ex_skipSQN-flag2_2}{}
 \FigLine[xslxsr]%
  {The link reappears;\\a RREQ from $a$ floods the network.}{fig/ex_skipSQN-flag2_3}{}
  {The topology changes;\\$s$ and $d$ invalidate their entries to each other.}{fig/ex_skipSQN-flag2_4}{}
 \FigLineHalf[xsl]%
  {$s$ broadcasts a new RREQ message destined to $d$;\\$d$'s reply cannot be sent to $s$.}{fig/ex_skipSQN-flag2_5}{}
\end{exampleFig}

\noindent
In the initial state (\Fig{no_route_reply}(a)), a route between $s$ and $d$ is established through a
standard RREQ-RREP cycle.
Then, in Part (b), the connection between $s$ and $d$ breaks down. $d$ detects the link break and
invalidates its routing table entry for $s$, thereby increasing the destination sequence number.
In \Fig{no_route_reply}(c), the connection between $s$ and $d$ comes back up, and $a$ initiates a route request
for a node $x$ (which is not to be found in the vicinity).
As a consequence, when $d$ receives the forwarded route request from $s$, it validates its
routing table entry for $s$, the destination sequence number being higher than $s$'s own sequence number.
In \Fig{no_route_reply}(d), the connection breaks down and the entry becomes again invalid. The destination sequence
number of the entry is now $2$ higher than $s$'s own sequence number.
Moreover, a node $b$ appears in the network, and gets connected to $s$ and $d$.
From this point onwards the topology remains stable and the predicate
$\textbf{connected}^*(s,d)$ (cf.\ Page~\pageref{pg:connected*}) holds.
In Part (e), $s$ searches for a route to $d$. 
Even though this increases $s$'s own sequence number, it is
still smaller than the destination sequence number for $s$ at $d$. When the route request reaches
$d$ (via $b$), $d$ tries to update its own routing table entry for $s$.
However, $d$ already has an invalid entry for $s$ with a higher sequence number.
As a result, no update occurs and the route from $d$ to $s$ remains invalid.
Therefore $s$ does not get a reply.


Since each of the Resolutions (2\ref{amb:2a}-\ref{amb:2d}) turned out to have serious disadvantages,
we now propose an alternative---Resolution (2e)---that does not share these disadvantages.
The intuition is that when a node changes an invalid route into a valid one, while keeping the
sequence number from the routing table (as done in Resolutions (2\ref{amb:2c}-\ref{amb:2d})), it
needs to undo the increment of the sequence number performed upon invalidation of the route.
This involves decrementing destination sequence numbers, a
practice that goes strongly against the spirit of the RFC\@. Nevertheless, since the net
  sequence number stays the same, we are able to show that all our
invariants are maintained, which constitutes a formal proof of loop freedom and route correctness.
So in this special case decrementing destination sequence numbers turns out to be harmless.

{
\renewcommand{\rt}{\dval{rt}}
  \newcommand{\nrt}{\dval{nrt}}
  \newcommand{\nr}{\dval{nr}}
  \newcommand{\s}{\dval{s}}
  \newcommand{\ns}{\dval{ns}}
Resolution (2e) is a variant of Resolution (2\ref{amb:2d}), defined through a modification in the
definition of \hyperlink{update}{$\fnupd$}. 
The 5th clause ($\nrt\cup\{\nr'\}\ \mbox{if }
\pi_{1}(\route)\in\kD{\rt} \wedge  \pi_3(\route)=\unkno$) is split into two parts (depending on the validity of the route).
\hypertarget{mod_update}{
\[\begin{array}{@{}l@{\hspace{0.5em}}c@{\hspace{0.45em}}l@{}}
\upd{\rt}{\hspace{-1pt}\route}&:=& \left\{
\begin{array}{@{\,}ll@{}}
\rt\cup\{\route\} & \mbox{if }  \pi_{1}(\route)\not\in\kD{\rt}\\[1mm]
\nrt\cup\{\nr\}&\mbox{if }  \pi_{1}(\route)\in\kD{\rt} \wedge  \sqn{\rt}{\pi_{1}(\route)}<\pi_{2}(\route)\\[1mm]
\nrt\cup\{\nr\}&\mbox{if }  \pi_{1}(\route)\in\kD{\rt} \wedge \sqn{\rt}{\pi_{1}(\route)}=\pi_{2}(\route) \wedge \dhops{\rt}{\pi_{1}(\route)}>\pi_{5}(\route)\\
&\hspace{25.4mm}{}\wedge \pi_3(\route)=\kno\\[1mm]
\nrt\cup\{\nr\}&\mbox{if }  \pi_{1}(\route)\in\kD{\rt} \wedge \sqn{\rt}{\pi_{1}(\route)}=\pi_{2}(\route) \wedge \status{\rt}{\pi_{1}(\route)}=\inval\\
&\hspace{25.4mm}{}\wedge \pi_3(\route)=\kno\\[1mm]
\nrt\cup\{\nr''\}&\mbox{if } \pi_{1}(\route)\in\kD{\rt} \wedge  \pi_3(\route)=\unkno \wedge \status{\rt}{\pi_{1}(\route)}=\val\\
\nrt\cup\{\nr'''\}&\mbox{if } \pi_{1}(\route)\in\kD{\rt} \wedge  \pi_3(\route)=\unkno \wedge \status{\rt}{\pi_{1}(\route)}=\inval\\
\nrt\cup\{\ns\}&\mbox{otherwise\ ,}
\end{array}
\right.\vspace{-1ex}
\end{array}\]
}
where (in the terminology of \SSSect{update})
$\nr'':=(\dip_{\nr}\comma\pi_{2}(\s)\comma\pi_3(\s)\comma\flag_{\nr}\comma\hops_{\nr}\comma\nhip_{\nr}\comma\pre_{\nr})$ and
$\nr''':=(\dip_{\nr}\comma\pi_{2}(\s)\decremented\comma\pi_3(\s)\comma\flag_{\nr}\comma\hops_{\nr}\comma\nhip_{\nr}\comma\pre_{\nr})$.
We illustrate the behaviour of this modification using a similar example as in the Section about \hyperlink{sss921b}{Ambiguity 2}:
as a consequence of the incoming RREQ message $\rreq{1}{\dval{rreqid}}{x}{7}{\kno}{s}{2}{a}$ the routing table entry $(a,2,\kno,\val,2,b,\emptyset)$ of node $d$ is now updated 
to $(a,2,\kno,\val,1,a,\emptyset)$---the same behaviour as in  Resolution (2\ref{amb:2d})---but 
the entry $(a,2,\kno,\inval,2,b,\emptyset)$ is updated to $(a,1,\kno,\val,1,a,\emptyset)$.
}

Any of the interpretations and variants of AODV using Resolution (2\ref{amb:2c})---our default resolution of Ambiguity 2---that
have been shown loop free in this paper, remain loop free when using Resolution (2e) instead%
---the invariants, proofs and proof modifications of Section~\ref{sec:invariants}--\ref{ssec:modifyRREP} remain
valid, with the following modifications:
\begin{itemize}
\item \Prop{dsn increase} is reformulated as:
  \begin{quote}
  In each node's routing table, the
  {\em net} sequence number for a given destination increases monotonically.
  That is, for $\dval{ip},\dval{dip}\mathbin\in\IP$, if $N \ar{\ell} N'$
  then $\nsq{\dval{ip}}\leq\fnnsqn_{N'}^\dval{ip}(\dval{dip})$.
  \end{quote}
  For the proof, note that the modified {\fnupd} cannot decrease a net sequence number, so again
  the only function that can decrease a net sequence number is \hyperlink{invalidate}{$\fninv$}. 
  When invalidating routing table entries using the function $\inv{\rt}{\dests}$, sequence numbers
  are copied from {\dests} to the corresponding entry in \rt. 
  It is sufficient to show that for all \plat{$(\dval{rip},\dval{rsn})\in\xiN{\dval{ip}}(\dests)$}
  $\sq[\dval{rip}]{\dval{ip}}\leq\dval{rsn}\decremented$, as all other sequence numbers in routing table
  entries remain unchanged.
\begin{description}
  \item[Pro.~\ref{pro:aodv}, Line~\ref{aodv:line32}; Pro.~\ref{pro:pkt}, Line~\ref{pkt2:line10}; Pro.~\ref{pro:rreq}, Lines~\ref{rreq:line18}, \ref{rreq:line30}; Pro.~\ref{pro:rrep}, Line~\ref{rrep:line18}:]~\\
  The set {\dests} is constructed immediately before the invalidation procedure. For $(\dval{rip},\dval{rsn})\in\xiN{\dval{ip}}(\dests)$, we have
  $\sq[\dval{rip}]{\dval{ip}} = \inc{\sq[\dval{rip}]{\dval{ip}}}\decremented = \dval{rsn}\decremented.$
\item[Pro.~\ref{pro:rerr}, Line~\ref{rerr:line5}:] When constructing {\dests} in Line~\ref{rerr:line2}, the condition $\xiN[N_{\ref*{rerr:line2}}]{\dval{ip}}(\sqn{\rt}{\rip})<\xiN[N_{\ref*{rerr:line2}}]{\dval{ip}}(\rsn)$ is taken into account, which immediately yields the claim for \plat{$(\dval{rip},\dval{rsn})\in\xiN{\dval{ip}}(\dests)$}.
\end{description}
\item We also need a weakened version of the old \Prop{dsn increase}:
  \begin{equation}\label{eq:invalidate no decrease}
  \mbox{An application of \hyperlink{invalidate}{$\fninv$} never decreases a sequence number in a routing table.}
  \end{equation}
  The proof is contained in proof of the old \Prop{dsn increase} and does not rely on the (modified) function $\fnupd$.
\item The reference to \Prop{dsn increase} in the proof of \Prop{rte} is replace by a reference to \Eq{invalidate no decrease}.
\item In the beginning of the proof of \Prop{qual}(a) the inequality
  $$\nsqn{\dval{rt}}{\dval{dip}} \leq \sqn{\dval{rt}}{\dval{dip}}=\dval{dsn}_{\dval{rt}} = \nsqn{\dval{rt}'}{\dval{dip}}$$
  turns into an equality \qquad$\nsqn{\dval{rt}}{\dval{dip}} =\dval{dsn}_{\dval{rt}} = \nsqn{\dval{rt}'}{\dval{dip}},$\\
  and follows, by the new definition of $\fnupd$, without the step that references \Eq{sqn_vs_nsqn}.
\item To adapt the proof of \Thm{inv_a} we need some new auxiliary invariants. The first states
  that the sequence number of an invalid \rte can never be 1.
\begin{equation}
  \dval{dip}\in\ikd{\dval{ip}} \ims\sq{\dval{ip}}\neq 1\ .
\end{equation}
 \begin{proofNobox}
Invalid \rtes only arise by applications of \hyperlink{invalidate}{$\fninv$} on valid \rtes;
furthermore, only calls of $\fninv$ can change the sequence number of an invalid \rte while keeping the route invalid.
  Hence it suffices to check all calls of \fninv. 
An application $\xiN{\dval{ip}}(\inv{\rt}{\dests})$ invalidates the \rte to \dval{rip} and changes its sequence
  number into \dval{rsn} for any pair \plat{$(\dval{rip},\dval{rsn})\in\xiN{\dval{ip}}(\dests)$}.
\begin{description}
\item[Pro.~\ref{pro:aodv}, Line~\ref{aodv:line32}; Pro.~\ref{pro:pkt}, Line~\ref{pkt2:line10}; Pro.~\ref{pro:rreq}, Lines~\ref{rreq:line18}, \ref{rreq:line30}; Pro.~\ref{pro:rrep}, Line~\ref{rrep:line18}:]~\\
By construction of {\dests} (immediately before the invalidation call) 
$(\dval{rip},\dval{rsn})\in\xiN{\dval{ip}}(\dests)$ implies $\dval{rsn}\mathop=\inc{\sqn{\xiN{\dval{ip}}(\rt)}{\dval{rip}}}$.
By \hyperlink{inc}{definition of $\fninc$}
we have $\inc{n}\neq 1$, for $n\in\NN$.
\item[Pro.~\ref{pro:rerr}, Line~\ref{rerr:line5}:]
Let $(\dval{rip},\dval{rsn})\mathbin\in\xiN[N_3]{\dval{ip}}(dests)$; then
$(\dval{rip},\dval{rsn})\mathbin\in\xiN[N_2]{\dval{ip}}(dests)$, and $\xiN[N_2]{\dval{ip}}(dests)$
stems from a received RERR message that must have been sent beforehand, say by a node $\ipc$ in
state $N^\dagger$. By \Prop{starcastrerr}, \plat{$\dval{rip}\in\ikd[N^\dagger]{\ipc}$} and
$\dval{rsn}=\fnsqn_{N^\dagger}^\dval{ip}(\dval{rip})$. So by
\hyperlink{induction-on-reachability}{induction on reachability}, $\dval{rip}\neq 1$.
\endbox
\end{description}
  \end{proofNobox}
 As an immediate corollary of this invariant we obtain that
  \begin{equation}\label{eq:invalidate not 1}
  \dval{dip}\in\ikd{\dval{ip}} \ims \inc{\nsq{\dval{ip}}}=\sq{\dval{ip}}\ .
  \end{equation}
\item Define the \emph{upgraded sequence number} of destination \dval{dip} at node \dval{ip} by
$$
  \usq{\dval{ip}} = \left\{\begin{array}{@{}ll@{}}\inc{\sq{\dval{ip}}} &
  \mbox{if}~\dval{dip}\in\akd{\dval{ip}} \ans \dhp{\dval{ip}}=1\\
  \sq{\dval{ip}} & \mbox{otherwise.}\end{array}\right.
  $$
By this definition we immediately get the following inequation.
\begin{eqnarray}\label{eq:usqn_inequation}
\usq{\dval{dip}}\geq\sq{\dval{dip}}\geq\usq{\dval{dip}}\decremented
\end{eqnarray}  
  After \Thm{state_quality} has been established, we obtain the following invariant,
  saying that in each routing table, the upgraded
  sequence number for any given destination increases monotonically:
  for $\dval{ip},\dval{dip}\mathbin\in\IP$ and a reachable network expression $N$,
  \begin{equation}\label{eq:usq increase}
  N \ar{\ell} N' \ims \usq{\dval{ip}}\leq\fnusqn_{N'}^\dval{ip}(\dval{dip})\ .
  \end{equation}
  \begin{proofNobox}
  We distinguish four cases.
  \begin{enumerate}[(i)]
  \item  Neither $\dval{dip}\in\akd{\dval{ip}} \ans \dhp{\dval{ip}}=1$ nor
  $\dval{dip}\in\akd[N']{\dval{ip}} \ans \fndhops_{N'}^\dval{ip}(\dval{dip})=1$ holds.
  Then \plat{$\usq{\dval{ip}}=\sq{\dval{ip}}\leq\fnsqn_{N'}^\dval{ip}(\dval{dip})=\fnusqn_{N'}^\dval{ip}(\dval{dip})$},
  where the inequality follows just as in the proof of \Prop{dsn increase}, taking into account that
  the modified 5th clause of \hyperlink{update}{$\fnupd$} cannot apply, as, using
  \Prop{upd_well_defined}, it would result in a routing table entry with
  $\dval{dip}\in\akd[N']{\dval{ip}} \ans \fndhops_{N'}^\dval{ip}(\dval{dip})=1$.
  \item Both $\dval{dip}\in\akd{\dval{ip}} \ans \dhp{\dval{ip}}=1$ and
  $\dval{dip}\in\akd[N']{\dval{ip}} \ans \fndhops_{N'}^\dval{ip}(\dval{dip})=1$ hold.
  Then
  \vspace{-1ex}
\begin{eqnarray*}
&\usq{\dval{ip}}=\inc{\sq{\dval{ip}}}=\inc{\nsq{\dval{ip}}}\leq\\
&\inc{\fnnsqn_{N'}^\dval{ip}(\dval{dip})}=\inc{\fnsqn_{N'}^\dval{ip}(\dval{dip})}=\fnusqn_{N'}^\dval{ip}(\dval{dip})\ ,
\end{eqnarray*}
  where the inequality follows by \Thm{state_quality}.
  \item 
   $\dval{dip}\in\akd[N']{\dval{ip}} \ans \fndhops_{N'}^\dval{ip}(\dval{dip})=1$ holds, but $\dval{dip}\in\akd{\dval{ip}} \ans \fndhops_{N}^\dval{ip}(\dval{dip})=1$ does not.
  Then
  \vspace{-1ex}
  \begin{eqnarray*}
&\usq{\dval{ip}}=\sq{\dval{ip}}\leq\inc{\nsq{\dval{ip}}}\leq\\
& \inc{\fnnsqn_{N'}^\dval{ip}(\dval{dip})}=\inc{\fnsqn_{N'}^\dval{ip}(\dval{dip})}=\fnusqn_{N'}^\dval{ip}(\dval{dip})\ ,
\end{eqnarray*}
  where the first inequality is by \Eq{invalidate not 1} 
  in case that $\dval{dip}\in\ikd{\dval{ip}}$ and by $\fnsqn_{N}^\dval{ip}(\dval{dip})=\fnnsqn_{N}^\dval{ip}(\dval{dip})$ otherwise;
  the second inequality follows by \Thm{state_quality}.
  \item $\dval{dip}\in\akd{\dval{ip}} \ans \dhp{\dval{ip}}=1$ holds,
  but $\dval{dip}\in\akd[N']{\dval{ip}} \ans \fndhops_{N'}^\dval{ip}(\dval{dip})=1$ does not. We consider two subcases.
  \begin{itemize}
    \item $\dval{dip}\not\in\akd[N']{\dval{ip}}$, which is equivalent to $\dval{dip}\in\ikd[N']{\dval{ip}} \ors \dip\not\in\kd[N']{\dval{dip}}$. By \Prop{destinations maintained}, $\dip\not\in\kd[N']{\dval{dip}}$ is not possible, hence
    $\dval{dip}\in\ikd[N']{\dval{ip}}$. Then, again using \Thm{state_quality} and \Eq{invalidate not 1},
  \vspace{-1ex}
  \begin{eqnarray*}
&\usq{\dval{ip}}=\inc{\sq{\dval{ip}}}=\inc{\nsq{\dval{ip}}}\leq\\
&\inc{\fnnsqn_{N'}^\dval{ip}(\dval{dip})}=\fnsqn_{N'}^\dval{ip}(\dval{dip})=\fnusqn_{N'}^\dval{ip}(\dval{dip})\ .
\end{eqnarray*}
    \item $\dval{dip}\in\akd[N']{\dval{ip}}$ and $\fndhops_{N'}^\dval{ip}(\dval{dip})\neq 1$. Then, by \Prop{positive hopcount},
  $\fndhops_{N'}^\dval{ip}(\dval{dip})\geq 2  >1= \dhp{\dval{ip}}$.
 As $N$ changes into $N'$, by \Thm{state_quality} the quality of the route to \dval{dip}
  cannot decrease: $\xiN{\dval{ip}}(\rt)\rtord\xiN[N']{\dval{ip}}(\rt)$.
  Yet the hop count strictly increases, so the net sequence number must strictly increase as well:
  $\nsq{\dval{ip}}<\fnnsqn_{N'}^\dval{ip}(\dval{dip})$.
  From this we get
  \vspace{-1ex}
  \begin{eqnarray*}
  &\usq{\dval{ip}}=\inc{\sq{\dval{ip}}}=\inc{\nsq{\dval{ip}}}\leq\\
  &\fnnsqn_{N'}^\dval{ip}(\dval{dip})=\fnsqn_{N'}^\dval{ip}(\dval{dip})=\fnusqn_{N'}^\dval{ip}(\dval{dip})\ .
  \end{eqnarray*}
  
   \vspace*{-4.8ex}\endbox
  \end{itemize}
  \end{enumerate}
  \end{proofNobox}

\item In the proof of \Thm{inv_a}, the \hyperlink{729Pro3Line4}{case of Pro.~\ref*{pro:rreq}, Line~\ref*{rreq:line6}}, where we
  ``assume that the first line holds'', we may no longer appeal to \Prop{dsn increase}.
  Instead we consider two sub cases.
\begin{itemize}
\item First, let $\dhp{\dval{nhip}}=1$.\\
Since  $\dval{nhip}\not=\dval{dip}$ we have, by \Prop{rte}(\ref{it_c}), $\dhp{\dval{ip}}\neq 1$ and hence,
by  \Prop{positive hopcount}, $\dhp{\dval{ip}}\geq 2>1=\dhp{\dval{nhip}}$. Hence to conclude that 
$\xiN{\dval{ip}}(\rt)\rtsord \xiN{\dval{nhip}}(\rt)$, it suffices to show that 
$\nsq{\dval{nhip}}\geq\nsq{\dval{ip}}$. Using the Equations \eqref{eq:usqn_inequation} (twice) and 
\eqref{eq:usq increase}, we get 
\vspace{-1ex}
\begin{eqnarray*}
&\nsq{\dval{nhip}}=\sq{\dval{nhip}} \geq \usq{\dval{nhip}}\decremented \geq\\
&\fnusqn_{N^\dagger}^{\dval{nhip}}(\dval{dip})\decremented\geq \fnsqn_{N^\dagger}^{\dval{nhip}}(\dval{dip})\decremented
\geq\xi(\osn)=\nsq{\dval{ip}}\ ,
\end{eqnarray*}
where the last inequality follows from $\fnsqn_{N^\dagger}^{\dval{nhip}}(\dval{dip}) > \xi(\osn)$, which holds in the circumstances considered (cf. 
Page~\pageref{pg:729Pro3Line4}).
\item If $\dhp{\dval{nhip}}\not=1$ the net sequence number is strictly increased:
\begin{eqnarray*}
&\nsq{\dval{nhip}}=\sq{\dval{nhip}} = \usq{\dval{nhip}} \geq\\
&\fnusqn_{N^\dagger}^{\dval{nhip}}(\dval{dip})\geq \fnsqn_{N^\dagger}^{\dval{nhip}}(\dval{dip})
>\xi(\osn)=\nsq{\dval{ip}}\ .
\end{eqnarray*}
(As before for we use Equations~\eqref{eq:usq increase},  \eqref{eq:usqn_inequation} and $\fnsqn_{N^\dagger}^{\dval{nhip}}(\dval{dip}) > \xi(\osn)$ for the proof of this inequality.)
Hence $\xiN{\dval{ip}}(\rt)\rtsord \xiN{\dval{nhip}}(\rt)$. 
\end{itemize}
\item The proof of \Prop{dsn}(b) needs to be modified; to this end we strengthen the statement as in
  (b) below, and establish (c) by simultaneous induction.
\begin{enumerate}[(a)]
\item[(b)]  The net destination sequence number of a routing table entry can never be greater than the
  destination's own sequence number.
  \begin{equation}\label{eq:ndsn}
     \nsq{\dval{ip}}\leq\xiN{\dval{dip}}(\sn)
  \end{equation}
\item[(c)] The sequence number of a destination appearing in a route error message can never be
  more than 1 greater than the destination's own sequence number.
  \begin{equation}\label{eq:dsn_rerr}
  N\ar{R:\starcastP{\rerr{\destsc}{\ipc}}}_{\dval{ip}}N'\ans (\dval{rip}_{c},\dval{rsn}_{c})\in\destsc
    \ims \dval{rsn}_{c}\decremented\leq\xiN{\dval{rip}_{c}}(\sn)
  \end{equation}
\end{enumerate}
  \begin{proofNobox}~
\begin{enumerate}[(a)]
\item[(b)] The statement holds in the initial states.
  By \Prop{invarianti_itemiii}, any update of $\xiN{\dval{dip}}(\sn)$ is harmless.
  Hence we have to examine all application calls of \hyperlink{update}{$\fnupd$} and
  \hyperlink{invalidate}{$\fninv$}, restricting attention to those calls that actually
  modify the entry for \dval{dip}, beyond its precursors.
\begin{description}
\item[Pro.~\ref{pro:aodv}, Lines~\ref{aodv:line10}, \ref{aodv:line14}, \ref{aodv:line18}:]
With Resolution 2(e) these calls maintain $\nsq{\dval{dip}}$.
\item[Pro.~\ref{pro:rreq}, Line~\ref{rreq:line6}; Pro.~\ref{pro:rrep}, Line~\ref{rrep:line5}:]
These updates yield a valid \rte with a known sequence number. The proof is unchanged from the
one of \Prop{dsn}(b).
\item[Pro.~\ref{pro:aodv}, Line~\ref{aodv:line32}; Pro.~\ref{pro:pkt}, Line~\ref{pkt2:line10};
  Pro.~\ref{pro:rreq}, Lines~\ref{rreq:line18}, \ref{rreq:line30}; Pro.~\ref{pro:rrep}, Line~\ref{rrep:line18}:]
By construction of {\dests} (immediately before the invalidation call) 
\plat{$(\dval{rip},\dval{rsn})\in\xiN{\dval{ip}}(\dests)$} implies $\dval{rsn}=\inc{\sqn{\xiN{\dval{ip}}(\rt)}{\dval{rip}}}$.
Hence the call maintains $\nsq{\dval{dip}}$.
\item[Pro.~\ref{pro:rerr}, Line~\ref{rerr:line5}:]
Let $(\dval{rip},\dval{rsn})\mathbin\in\xiN[N_3]{\dval{ip}}(dests)$; then
$(\dval{rip},\dval{rsn})\mathbin\in\xiN[N_2]{\dval{ip}}(dests)$, and $\xiN[N_2]{\dval{ip}}(dests)$
stems from a received RERR message that must have been sent beforehand, say by a node $\ipc$ in
state $N^\dagger$. By Invariant~\Eq{dsn_rerr} we have
$\nsq[\dval{rip}]{\dval{ip}} = \dval{rsn}\decremented \leq \xiN[N^\dagger]{\dval{rip}}(\sn) \leq \xiN{\dval{rip}}(\sn)$.
\end{description}
\item[(c)] Immediately from \Prop{starcastrerr} and Invariant~\Eq{ndsn}.
Here the weakened form of \Prop{starcastrerr} proposed in the
\hyperlink{amb8modifications}{proof modifications} for Resolutions (8\ref{amb:8d}--\ref{amb:8e}) is sufficient.
The same holds for the weakened form of \Prop{starcastrerr} proposed in the
\hyperlink{amb9modifications}{proof modifications} for Resolution (9b).
\endbox
\end{enumerate}
  \end{proofNobox}
\end{itemize}

In case Resolution (2e) is chosen, the sequence-number-status flag becomes redundant and can be
skipped, just as for Resolutions (2\ref{amb:2a}) and (2\ref{amb:2d})---see Footnote~\ref{foot:skipped}.
Moreover, with Resolution (2e) it would make sense to record the net sequence number in \rtes rather
than the sequence number, because only the net sequence number is monotonically increasing.
This means that sequence numbers of \rtes are not incremented upon invalidation, but instead a node
that may initiate a route reply bases its actions on the incremented value of the destination
sequence number in the received RREQ message.

\subsection{From Groupcast to Broadcast}\label{ssec:groupandbroadcast}

\begin{quote}\raggedright\small
``{\tt 
   For each valid route maintained by a node as a routing table entry,
   the node also maintains a list of precursors that may be forwarding
   packets on this route.  These precursors will receive notifications
   from the node in the event of detection of the loss of the next hop
   link.}''\hfill \cite[Sect.~2]{rfc3561}
\end{quote}
\noindent This notification is modelled by means of a groupcast mechanism. 
It sends error messages pertaining to certain routes to the precursors collected for those routes only.
The idea is to reduce the number of messages received and handled. However, 
precursor lists are incomplete. They are updated only when a RREP message is sent
(Lines~\ref{rreq:line24}, \ref{rreq:line25} of Pro.~\ref{pro:rreq} and Lines~\ref{rrep:line12a},
\ref{rrep:line12b} of Pro.~\ref{pro:rrep}). The following example, showing a standard RREQ-RREP cycle, illustrates that all nodes 
not handling a route reply have no information about precursors; even those nodes
that handle the RREP message may have insufficient information. 
As a consequence, some nodes are not informed of a link break and will use a broken route; hence packets are lost. 

\setlength{\leftSpecial}{57pt}
\begin{exampleFig}[Precursor lists are incomplete]{Precursor lists are incomplete\footnotemark}{fig:prec lists}
\FigLine[speciallr]%
  {The initial state.}{fig/ex_precursorlists_1}{}
  {$s$ broadcasts a new RREQ destined to $d$;\\the request floods the network.}{fig/ex_precursorlists_2}{}
\FigLine[]%
  {The RREP message is sent from $d$ to $s$;\\only $a$ updates
    one of its precursor lists}{fig/ex_precursorlists_3}{}
  {The topology changes;\\ $a$, $d$, $s$ detect link breaks; no RERR message is sent.}{fig/ex_precursorlists_4}{}
\end{exampleFig}
\footnotetext{This is the only example where precursor lists are shown; they are the last component of an entry.}

The example is a standard RREQ-RREP cycle. Within the network given in \Fig{prec lists}(a), a data packet is inserted at node $s$, destined for $d$.
Consequently, $s$ issues a route discovery process.
In Part (b) the RREQ message floods the network. While handling RREQ messages no (non-empty) precursor list is set or changed. 
In fact, it is not possible to detect the precursors for a route to the originator of the route request when handling and forwarding a RREQ message;
the necessary information is not available.
Thus, whenever a link break is detected during a route request process, no RERR message is
sent, except when a node has information from previous control messages.
In \Fig{prec lists}(c), the reply is sent from node $d$ to~$s$.
When node $a$ forwards the RREP message, its adds $s$ to its list of precursors of the route to
$d$ (Pro.~\ref{pro:rrep}, Line~\ref{rrep:line12a}). However, it fails to add $d$ as a precursor
of the route to $s$.
In Part (d), the links between nodes $s$, $a$ and $b$ break down. 
Although nodes $a$, $b$ and $s$ detect the link break, they do not send
error messages---so nodes $c$ and $d$ will not be informed about the broken routes.
If these nodes receive packets for~$s$ they will keep sending them via $a$ or $b$, without ever
learning that none of those packets ever reaches $s$. In detail, when node $b$ (or $a$) receives
  a data packet for $s$ from $c$ (or $d$) it drops the packet (\Pro{pkt}, Line~\ref{pkt2:line16})
  and composes a error message reporting the broken link to $s$ (\Pro{pkt}, Line~\ref{pkt2:line20}).
  However, this error message is send to the list of precursors for its route to $s$, which in our example is still empty.
A variant of this example that constitutes a counterexample to the packet delivery property of
\Sect{properties} was already presented in \Fig{lost_due_precs}.

As already remarked in \SSect{rrep}, the failure of node $a$ to add $d$ as a
precursor of its route to $s$ can be remedied by the
addition of the line \mbox{$\assignment{\rt := \addprecrt{\rt}{\oip}{\{\nhop{\rt}{\dip}\}}}$}
to Pro.~\ref{pro:rrep}, right after Line~\ref{rrep:line12b}.
One can even go a step further and also add the line 
$$\mbox{$\assignment{\rt := \addprecrt{\rt}{\nhop{\rt}{\oip}}{\{\nhop{\rt}{\dip}\}}}$},$$
which is the equivalent to Pro.~\ref{pro:rrep}, Line~\ref{rrep:line12b}.
However, the problem at node $b$ cannot be fixed using precursors.
 
A possible solution is to abandon precursors and to replace every groupcast by a broadcast. 
At first glance  this strategy seems to need more bandwidth, but this is not the case. 
Sending error messages to a set of precursors is implemented at the link layer
by broadcasting the message anyway;
a node receiving such a message then checks the header to determine whether it is one of 
the intended recipients. Instead of analysing the header only, a node can just as well
read the message and decide whether the information contained in the message is 
of use. To be more precise: an error message is useful for a node 
if  the node has established a route to one of the nodes listed in the message, and 
the next hop to a listed node is the sender of the error message. 
In case a node finds useful information inside the message, it should update its routing table and
distribute another 
error message. This is exactly what happens in the route error process (\Pro{rerr}). 

In the specification given in Sections~\ref{sec:types} and~\ref{sec:modelling_AODV}, 
the last entry of a routing table entry can be dropped; yielding small adaptations in functions and function calls (for example 
Line~\ref{aodv:line10} of Pro.~\ref{pro:aodv} should be $\assignment{rt:=\upd{\rt}{(\sip,0,\unkno,\val,1,\sip)}}$). 
Next to these small adaptations, the following changes need to be implemented:

\begin{description}
\item [Pro.~\ref{pro:aodv}, Line~\ref{aodv:line33};
	Pro.~\ref{pro:pkt}, Line~\ref{pkt2:line14};
	Pro.~\ref{pro:rreq}, Lines~\ref{rreq:line19}, \ref{rreq:line31};
	Pro.~\ref{pro:rrep}, Line~\ref{rrep:line20}:]~\\
        The command $\groupcastP{\pre}{\rerr{\dests}{\ip}}$ is replaced by $\broadcastP{\rerr{\dests}{\ip}}$.
\item[\bf Pro.~\ref{pro:aodv}, Lines~\ref{aodv:line31},~\ref{aodv:line31a};
	Pro.~\ref{pro:pkt}, Lines~\ref{pkt2:line12}, \ref{pkt2:line13};
	Pro.~\ref{pro:rreq}, Lines~\ref{rreq:line17}, \ref{rreq:line17a}, \ref{rreq:line24}, \ref{rreq:line25}, \ref{rreq:line29}, \ref{rreq:line29a};]~\\
\hspace*{-27.25pt}{\bf Pro.~\ref{pro:rrep}, Lines~\ref{rrep:line12a}, \ref{rrep:line12b}, \ref{rrep:line17}, \ref{rrep:line17a}:}
	These lines are dropped without replacement.
\item [Pro.~\ref{pro:pkt}, Line~\ref{pkt2:line20}:]
        The command $\groupcastP{\precs{\rt}{\dip}}{\rerr{\{(\dip,\sqn{\rt}{\dip})\}}{\ip}}$
        is replaced by $\broadcastP{\rerr{\{(\dip,\sqn{\rt}{\dip})\}}{\ip}}$.
\item[Pro.~\ref{pro:rerr}, Lines~\ref{rerr:line3}--\ref{rerr:line6}:] The error forwarding is replaced by
\begin{algorithmic}[8]%
  {\small\PAR
		\IF[the RERR needs to be forwarded]{$\dests\not=\emptyset$}
			\broadcast{\rerr{\dests}{\ip}}\ . \aodv{\ip}{\sn}{\rt}{\rreqs}{\queues}
		\ELSIF[no valid route via broken link]{$\dests=\emptyset$}
			\aodvL{\ip}{\sn}{\rt}{\rreqs}{\queues}
		\ENDIFii
		\ENDPAR
}\end{algorithmic}
\end{description}
All invariants and statements of Sections~\ref{sec:invariants} and~\ref{sec:interpretation} remain valid; 
the necessary proof adaptations are marginal and straightforward.

\subsection{Forwarding the Route Request}\label{ssec:forwardRREQ}
\newcommand{\alreadyrep}{\keyw{handled}}
\newcommand{\true}{\keyw{true}}
\newcommand{\false}{\keyw{false}}
\newcommand{\mrreq}[9]{\rreqID(\ensuremath{\nosp{#1}\comma\nosp{#2}\comma\nosp{#3}\comma\nosp{#4}\comma\nosp{#5}\comma\nosp{#6}\comma\nosp{#7}\comma\nosp{#8}\comma\nosp{#9}})}
\newcommand{\mrreqPL}[9]{\mrreqPaux{\RREQ(\ensuremath{\nosp{#1}\comma\nosp{#2}\comma\nosp{#3}\comma\nosp{#4}\comma\nosp{#5}\comma\nosp{#6}\comma\nosp{#7}\comma\nosp{#8}\comma\nosp{#9}}}}
\newcommand{\mrreqPaux}[6]{\ensuremath{#1\,\comma\,\nosp{#2}\comma\nosp{#3}\comma\nosp{#4}\comma\nosp{#5}\comma\nosp{#6}})}

In AODV's route discovery process, a destination node (or an
intermediate node with an active route to the destination) will
generate a RREP message in response to a received RREQ message. 
The RREQ message is then dropped and not forwarded. This termination
of the route discovery process at the destination can lead to
other nodes inadvertently creating non-optimal routes to the source
node~\cite{MK10}, where route optimality is defined in terms of a
metric, for example hop count. In~\cite{MK10} it is shown
that during the route discovery process in AODV, the only nodes that
generally discover optimal routes to the source and
destination nodes are those lying on the selected route between
the source node and the destination node (or the intermediate node) generating the reply.
All other network nodes receiving the RREQ message (in particular those located
``downstream'' of the destination node) may inadvertently be
creating non-optimal routes to the source node due to the unavoidable
receipt of RREQ messages over other routes. These ``poorly selected
paths have significantly higher routing-metric costs and their
duration can extend to minute time scales''~\cite{MK10}. 

We illustrate this by the example in \Fig{non-optimal-route}.
There, node $s$ wants to find a route to node $d$. It generates and
 broadcasts a RREQ message that is received by its neighbour nodes $d$
 and~$b$ (\Fig{non-optimal-route}(a)). Since node $d$ is the destination, it responds with a
 RREP message; the received RREQ message is not
 forwarded. On the other hand, $b$ continues to forward its
 received RREQ message, which eventually arrives at $a$ (Part (b)). At node
 $a$, a routing table entry is created for the source $s$, with a hop count
 of six. This is clearly not optimal, as $a$ is only two hops away from $s$.
 Due to the discarding of the RREQ message at node $d$, node $a$ is
 prevented from discovering its optimal route to $s$, via node $d$.
 In a next step, the RREQ message would also reach $d$ via $a$, but this message is then silently ignored by $d$.

\begin{exampleFig}{Non-optimal route selection}{fig:non-optimal-route}
\FigLine[0]%
  {$s$  broadcasts a new RREQ message destined to $d$.}{fig/ex_bad_route2_2}{}
  {The request floods the network;\\it reaches $a$, which establishes a non-optimal route.}{fig/ex_bad_route2_3}{}
  \end{exampleFig}  
  
A possible modification to solve this problem is to allow the
destination node to continue to forward the RREQ message. This will
then enable node $a$ in \Fig{non-optimal-route} to discover its
optimal route to~$s$. A route request is only stopped if it has been handled before. 
The forwarded RREQ message
from the destination node  needs to be modified to include a Boolean flag {\alreadyrep}\footnote{The
AODV RFC provides a field {\tt Reserved} as part of a RREQ message~\cite[Sect. 5.1]{rfc3561}, which is more or less designed to cater for such extensions.} that indicates
a RREP message has already been generated and sent in response to the
former message. In case the flag is set to {\true}, it prevents other nodes (with valid route to
the destination) from sending a RREP message in response to their
reception of the forwarded RREQ message.

The entire specification of this variant differs only in 
eight lines from the original.
\Pro{aodv} needs only slight adaptations.
First the newly introduced flag needs to be introduced in Lines~\ref{aodv:line8} and \ref{aodv:line11};
these lines now read 
$\cond{\msg = \mrreq{\hops}{\rreqid}{\dip}{\dsn}{\dsk}{\oip}{\osn}{\sip}{\,\highlight{\alreadyrep}}}$ and
$\mrreqPL{\hops}{\rreqid}{\dip}{\dsn}{\dsk}{\oip}{\osn}{\sip}{\,\highlight{\alreadyrep}}{\ip}{\sn}{\rt}{\rreqs}{\queues}$, respectively.
The broadcast in Line~\ref{aodv:line40} needs also be equipped with the flag. Since the route request is initiated, the 
flag is set to $\false$:
\[
\broadcastP{\mrreq{\hops+1}{\rreqid}{\dip}{\dsn}{\dsk}{\oip}{\osn}{\ip}{\highlight{\false}}}
\]
All other changes happen in the process {\RREQ}.
The new process {\RREQ} is given in Process \ref{pro:rreqmodified}.
  \algsetup{linenodelimiter=.,linenosize=\tiny}
  \begin{algorithm}[H]
    {\footnotesize
      \caption{The modified RREQ handling}
      \label{pro:rreqmodified}
      \begin{algorithmic}[1]
\DEFPROCESS{\RREQ}{\hops\comma\rreqid\comma\dip\comma\dsn\comma\dsk\comma\oip\comma\osn\comma\sip\comma\highlight{\alreadyrep}\,\comma\,\ip\comma\sn\comma\rt\comma\rreqs\comma\queues}
	\IF[the RREQ has been received previously]{$(\oip\comma\rreqid)\in\rreqs$}																						\label{rreqmod:line1}
		\aodvL{\ip}{\sn}{\rt}{\rreqs}{\queues} \COM{silently ignore RREQ, i.e. do nothing}																				\label{rreqmod:line2}
	\ELSIF[the RREQ is new to this node]{$(\oip\comma\rreqid)\not\in\rreqs$}																							\label{rreqmod:line3}
		\UPD{\rt:=\upd{\rt}{(\oip\comma\osn\comma\kno\comma\val\comma\hops+1\comma\sip\comma\emptyset)}}
		\COMMENT{update the route to \oip\ in \rt}																																		\label{rreqmod:line4}
		\UPD{\rreqs:=\rreqs\cup\{(\oip\comma\rreqid)\}}		\COMMENT{update \rreqs\ by adding $(\oip\comma\rreqid)$}								\label{rreqmod:line5}
		\PAR																																																	\label{rreqmod:line6}
			\IF[the request has not yet been handled]{$\highlight{\alreadyrep=\false}$}																					\label{rreqmod:line7}
			\PAR																																																\label{rreqmod:line8}
				\IF[this node is the destination node]{$\dip=\ip$}																														\label{rreqmod:line9}
					\UPD{\sn:=\max(\sn,\dsn)}	\COMMENT{update the sqn of \ip}																								\label{rreqmod:line10}
					\COMLINE{unicast a RREP towards \oip\ of the RREQ \highlight{ and forward the request}}													\label{rreqmod:line11}
					\STARTPRIO
						\unicast{\nhop{\rt}{\oip}}{{\rrep{$0$}{\dip}{\sn}{\oip}{\ip}}}\ . 																								\label{rreqmod:line12}
						\color{purple}\broadcast{\mrreq{\hops+1}{\rreqid}{\dip}{\dsn}{\dsk}{\oip}{\osn}{\ip}{\true}}\ .\color{black}  							\label{rreqmod:line13}
						\aodvL{\ip}{\sn}{\rt}{\rreqs}{\queues}																																	\label{rreqmod:line14}
					\PRIO
						\COMspec{If the transmission is unsuccessful, a RERR message is generated}																\label{rreqmod:line15}
						\UPD{\dests:=\{(\rip,\inc{\sqn{\rt}{\rip}})\,|\,\rip\in\akD{\rt}\ans \nhop{\rt}{\rip}=\nhop{\rt}{\oip}\}}												\label{rreqmod:line16}
						\UPD{\rt:=\inv{\rt}{\dests}}																																					\label{rreqmod:line17}			
						\UPD{\queues:=\setrrf{\queues}{\dests}}																															\label{rreqmod:line18}
						\UPD{\pre:=\bigcup\{\precs{\rt}{\rip}\,|\,(\rip,*)\in\dests\}}																										\label{rreqmod:line19}
						\UPD{\dests:=\{(\rip,\rsn)\,|\,(\rip,\rsn)\in\dests\ans \precs{\rt}{\rip}\not=\emptyset\}}																\label{rreqmod:line20}
						\groupcast{\pre}{\rerr{\dests}{\ip}}\ . \aodv{\ip}{\sn}{\rt}{\rreqs}{\queues}																				\label{rreqmod:line21}
					\ENDPRIO																			
				\ELSIF[this node is not the destination node]{$\dip\not=\ip$}																										\label{rreqmod:line22}
					\PAR																																														\label{rreqmod:line23}
						\IF[fresh enough valid route to \dip]{$\!\!\dip\mathbin\in\akD{\rt} \wedge \dsn \mathbin\leq  \sqn{\rt}{\!\dip} \wedge\sqnf{\rt}{\!\dip}\mathbin=\kno\!\!$}		\label{rreqmod:line24}
						\COMLINE{update \rt\ by adding precursors}																														\label{rreqmod:line25}
						\UPD{\rt := \addprecrt{\rt}{\dip}{\{\sip\}}}																																\label{rreqmod:line26}
						\UPD{\rt := \addprecrt{\rt}{\oip}{\{\nhop{\rt}{\dip}\}}}																												\label{rreqmod:line27}
						\COMLINE{unicast a RREP towards the \oip\ of the RREQ  \highlight{and forward the request}}										\label{rreqmod:line28}
						\STARTPRIO
							\unicast{\nhop{\rt}{\oip}}{\rrep{\dhops{\rt}{\dip}}{\dip}{\sqn{\rt}{\dip}}{\oip}{\ip}}\ .\\																\label{rreqmod:line29}
							\color{purple}\broadcast{\mrreq{\hops+1}{\rreqid}{\dip}{\dsn}{\dsk}{\oip}{\osn}{\ip}{\true}}\ . \color{black}						\label{rreqmod:line30}
							\aodvL{\ip}{\sn}{\rt}{\rreqs}{\queues}																																\label{rreqmod:line31}
						\PRIO
							\COMspec{If the transmission is unsuccessful, a RERR message is generated}															\label{rreqmod:line32}
							\UPD{\dests:=\{(\rip,\inc{\sqn{\rt}{\rip}})\,|\,\rip\in\akD{\rt}\ans \nhop{\rt}{\rip}=\nhop{\rt}{\oip}\}}											\label{rreqmod:line33}
							\UPD{\rt:=\inv{\rt}{\dests}}																																				\label{rreqmod:line34}			
							\UPD{\queues:=\setrrf{\queues}{\dests}}																														\label{rreqmod:line35}		
							\UPD{\pre:=\bigcup\{\precs{\rt}{\rip}\,|\,(\rip,*)\in\dests\}}																									\label{rreqmod:line36}
							\UPD{\dests:=\{(\rip,\rsn)\,|\,(\rip,\rsn)\in\dests\ans \precs{\rt}{\rip}\not=\emptyset\}}															\label{rreqmod:line37}
							\groupcast{\pre}{\rerr{\dests}{\ip}}\ . 																																\label{rreqmod:line38}
							\aodv{\ip}{\sn}{\rt}{\rreqs}{\queues}																																\label{rreqmod:line38a}
						\ENDPRIO
						\ELSIF[no fresh enough valid route]{$\dip\mathbin{\not\in}\akD{\rt} \vee \sqn{\rt}{\!\dip} <  \dsn \vee\sqnf{\rt}{\!\dip}\mathbin=\unkno$}					\label{rreqmod:line39}
							\COMLINE{no further update of \rt}																																\label{rreqmod:line40}
							\broadcast{\mrreq{\hops+1}{\rreqid}{\dip}{\max(\sqn{\rt}{\dip}\comma\dsn)}{\dsk}{\oip}{\osn}{\ip}{{\color{purple}\false}}}\ .							\label{rreqmod:line41}
							\aodvL{\ip}{\sn}{\rt}{\rreqs}{\queues}																																\label{rreqmod:line41a}
						\ENDIFii
					\ENDPAR																																												\label{rreqmod:line42}
				\ENDIFii
			\ENDPAR																																														\label{rreqmod:line43}
			\ELSIF[the request has been answered before]{$\highlight{\alreadyrep=\true}$}																			\label{rreqmod:line44}
				\COMLINE{the request is just forwarded (the RREQ was not handled before)}																			\label{rreqmod:line45}
				\color{purple}\broadcast{\mrreq{\hops+1}{\rreqid}{\dip}{\dsn}{\dsk}{\oip}{\osn}{\ip}{\true}}\ .														\label{rreqmod:line46}
				\aodv{\ip}{\sn}{\rt}{\rreqs}{\queues}\color{black}																															\label{rreqmod:line46a}
			\ENDIFii
		\ENDPAR
	\ENDIFii

	\end{algorithmic}
    }
  \end{algorithm}

The changes introduce the new flag and a case distinction based on that (Lines~\ref{rreqmod:line7} and \ref{rreqmod:line44}), 
as well as three new broadcasts (Lines~\ref{rreqmod:line13}, \ref{rreqmod:line30} and \ref{rreqmod:line46}).
For example, after initiating a route reply at the destination (Pro.~\ref{pro:rreqmodified}, Line \ref{rreqmod:line12}),
the route request message is forwarded.
In case the \textbf{unicast} of Line \ref{rreqmod:line12} is not successful (Lines~\ref{rreqmod:line15}--\ref{rreqmod:line21}),
no forwarding is necessary, since it does not make sense to establish a link back to the originator of the RREQ message---the intermediate node just detected that this link is broken.

The proofs of Sections~\ref{sec:invariants} and \ref{sec:interpretation} are still valid, 
but need, as usual, some modifications. 
The newly introduced flag does not have any effect on the proofs---except that some line numbers change
and the additional flag is added to all calls of \keyw{rreq}.
The only real modification is that in Propositions~\ref{prop:starcastNew}(a),~\ref{prop:msgsendingii}(a),
\ref{prop:msgsending}(a) and \ref{prop:starcast}(a)--(c), as well as in \Thm{route correctness}(b),
three new \textbf{broadcast}-actions need to be examined. However, these cases are identical to
the case Pro~\ref{pro:rreq}, Line~\ref{rreq:line34} (corresponding to Line~\ref{rreqmod:line41} of Pro.~\ref{pro:rreqmodified}).

If this modification is applied to the example presented earlier, $a$ now establishes an optimal route.
The example is illustrated in detail in \Fig{good_routes}.
As before node $s$ issues a route discovery in \Fig{good_routes}(a). 
The RREQ message is received by $b$ and $d$. 
Following the standard behaviour of AODV, node $b$ forwards the RREQ and the destination $d$ unicasts a RREP message back to $s$. 
Additionally, node $d$ also broadcasts the modified request  RREQ${}_{m}$ (Part (b)).
It is received by $s$ and $a$. Since $s$ initiated the request, it silently ignores the modified RREQ message;
node $a$ establishes an optimal route. 
Subsequently, both the original and the modified request are sent through the network (\Fig{good_routes}(c)).
The flooding is terminated as soon as every node has handled one of
the RREQ messages---both have the same route request identifier.
In the example, the node in the lower
right corner receives both RREQ messages, forwards one of them (here RREQ) and silently ignores the other.
\vspace{-2pt}
\begin{exampleFig}{Forwarding route requests}{fig:good_routes}
\FigLine[0]
  {$s$ broadcasts a new RREQ destined to $d$.}{fig/ex_good_route2_2}{}
  {The request reaches $d$, which unicasts a RREP back\\ and forwards a modified RREQ.}{fig/ex_good_route2_3}{}
\FigLine[0]
  {The (modified) RREQ floods the network\\(until a node receives both messages).}{fig/ex_good_route2_4}{}
  {The RREQ terminates as soon as all nodes have\\handled RREQ or RREQ${}_{m}$.}{fig/ex_good_route2_5}{}
\end{exampleFig}
\vspace{-1pt}

An intermediate node answering the route request on behalf of the destination will also forward the RREQ message.
The destination will receive the modified message and establish a route to the originator.
By this, a bidirectional route 
between the source and the destination is established. 

This finishes our list of improvements. 
All presented improvements are ``orthogonal'', i.e., they can be  combined without problems; the properties 
of Sections~\ref{sec:invariants} and \ref{sec:interpretation} remain valid.
In case new shortcomings are found, our specification
  as well as the proofs can easily be changed, as illustrated in this section.

\section{Related Work}\label{sec:related work}
\newcommand{\denote}[1]{\llbracket#1\rrbracket} 
\newcommand{\uppaal}{\textsc{Uppaal}}
\newcommand{\aodvii}{AODV\@{v2}\xspace}

\subsection{Process Algebras for Wireless Mesh Networks}

Several process algebras modelling broadcast communication have been
proposed before:
the Calculus of Broadcasting Systems (CBS) \cite{CBS91,CBS},
the $b\pi$-calculus \cite{bpi},
CBS\# \cite{NH06},
the Calculus of Wireless Systems (CWS) \cite{CWS},
the Calculus of Mobile Ad Hoc Networks (CMAN) \cite{CMAN},
the Calculus for Mobile Ad Hoc Networks (CMN) \cite{CMN},
the $\omega$-calculus \cite{SRS10},
rooted branching process theory (RBPT) \cite{RBPT},
$bA\pi$ \cite{bApi}
and the broadcast psi-calculi \cite{BHJRVPP11}.
The latter eight of these were specifically designed to model MANETs.
However, we believe that none of these process calculi provides all  features needed to fully model
routing protocols such as AODV\@, namely data handling, (conditional) unicast and (local) broadcast. 
Moreover, all these process algebras lack the feature of guaranteed receipt of message. 
Due to this, it is not possible to analyse properties such as route discovery.
We will elaborate on this in the following.

\paragraph{Modelling Broadcast Communication}
All these languages, as well as ours, feature a form of broadcast
communication between nodes in a network, in which a single message
emitted by one node can be received by multiple other nodes. In terms
of operational semantics, this is captured by rules like
$$\frac{M \ar{\broadcastP{m}} M' \quad N \ar{\receive{m}} N'}
  {\rule[13pt]{0pt}{1pt} M \| N \ar{\broadcastP{m}} M' \| N'}
\qquad\qquad
  \frac{ M \ar{\receive{m}} M' \quad N \ar{\broadcastP{m}} N'}
  {\rule[13pt]{0pt}{1pt} M \| N \ar{\broadcastP{m}} M' \| N'}$$
that stem from \cite{CBS91} and can be found in the operational semantics of each of these
languages, except for $bA\pi$.
In such a rule the broadcast action in the conclusion is simply
inherited from the broadcasting argument of the parallel composition,
so that it remains available for the parallel composition with another receiver.
In order to guarantee associativity of the parallel composition, i.e.\
$$\big(\broadcastP{m}.P \| \receive{m}.Q\big) \| \receive{m}.R
 = \broadcastP{m}.P \| \big(\receive{m}.Q \| \receive{m}.R\big)$$
one also needs a rule like
$$\frac{ M \ar{\receive{m}} M' \quad N \ar{\receive{m}} N'}
  {\rule[13pt]{0pt}{1pt} M \| N \ar{\receive{m}} M' \| N'}\;\;.\footnotemark$$
\footnotetext{CMN lacks such a rule, resulting in a non-associative parallel composition.
This in turn leads to a failure of \cite[Lemma 3.2]{CMN}, saying that structural congruence
  ``respects transitions'', i.e.\ is a strong bisimulation. This mistake is propagated in \cite{GR13}.
The $\omega$-calculus lacks this rule as well, but there associativity
of the parallel composition is enforced by closing the transition relation under structural congruence.
Nevertheless, the same problem returns in the form of a failure of \cite[Theorem 9]{SRS10}, stating that strong
bisimilarity is a congruence. Namely, ${\bf r}(x).{\it nil}:\{g\} \mid {\bf r}(x).{\it nil}:\{g\}$, the parallel
composition of two connected nodes, each doing a receive action, is
bisimilar to ${\bf r}(x).{\bf r}(x).{\it nil}:\{g\}$, a single node doing two receive actions;
yet in the context ${\bf b}(u).{\it nil}:\{g\} \mid \_\!\_\,$, involving a
fully connected node broadcasting a value $u$, only the former can do a broadcast transition to a process
that can do no further receive actions.}

\paragraph{Lossy Broadcast versus Enforced Synchronisation without Blocking}

The languages CMAN, CMN, RBPT, the $\omega$-calculus, $bA\pi$ and the broadcast psi-calculi
\index{lossy broadcast}
model \emph{lossy} communication, which allows, as a nondeterministic
possibility, any node to miss a message of another node, even when the
two nodes are within transmission range. The corresponding
operational rules are
$$\frac{M \ar{\broadcastP{m}} M'}
  {\rule[13pt]{0pt}{1pt} M \| N \ar{\broadcastP{m}} M' \| N}
\qquad\qquad
  \frac{N \ar{\broadcastP{m}} N'}
  {\rule[13pt]{0pt}{1pt} M \| N \ar{\broadcastP{m}} M \| N'}
\;.$$
In such a language it impossible to formulate valid properties of
modelled protocols like ``if there is a path from \dval{ip} to
\dval{dip}, and the topology does not change, then a packet for
\dval{dip} submitted at \dval{ip} will eventually be delivered at
\dval{dip}'' (cf.~\SSect{packet_delivery}).  Namely, there is never
a guarantee that any message arrives.

In the operational semantics, the only alternative to the lossy rules
above appears to be enforced synchronisation of a broadcast action of
one component in a parallel composition with some (in)activity of the other.
This approach is followed in CBS, $b\pi$, CBS\# and CWS, as well as in AWN\@.
In CBS, $b\pi$, CBS\# and CWS, and in the optional augmentation of \awn
presented in \SSect{non-blocking}, any node within transmission range
\emph{must} receive a message $m$ sent to it, provided the node
is \emph{ready} to receive it, i.e., in a state that admits a
transition $\receive{m}$. This proviso makes all these calculi
\phrase{non-blocking}, meaning that no sender can be delayed in
transmitting a message simply because one of the potential recipients
is not ready to receive it.
The default version of \awn (\SSect{networks}) lacks this proviso and
hence does allow blocking. However,
in applications of our language we model nodes in such a way that any
message can be received at any time (cf.~\SSect{message_queue}). Nodes
with this property are called \phrase{input enabled}, a concept introduced in the work on
\emph{IO-automata} \cite{LT89}. For such applications our models are non-blocking.

\index{discard actions}%
In CBS \cite{CBS91}, actions $\textbf{discard}(m)$ are used to model situations
where a process cannot receive a message $m$. The definitions are such
that $N \ar{\textbf{discard}(m)} N'$ only if $N'=N$---so the discard actions
do not correspond with any state-change---and $N
\ar{\textbf{discard}(m)}N'$ if an only if $N \nar{\textbf{receive}(m)}$.
Now the rules for broadcast are augmented by
$$\frac{M \ar{\receive{m}} M' ~~ N \ar{\textbf{discard}(m)} N'}
  {\rule[13pt]{0pt}{1pt} M \| N \ar{\receive{m}} M' \| N'}
\quad~
  \frac{M \ar{\textbf{discard}(m)} M' ~~ N \ar{\textbf{discard}(m)} N'}
  {\rule[13pt]{0pt}{1pt} M \| N \ar{\textbf{discard}(m)} M' \| N'}
\quad~
  \frac{ M \ar{\textbf{discard}(m)} M' ~~ N \ar{\receive{m}} N'}
  {\rule[13pt]{0pt}{1pt} M \| N \ar{\receive{m}} M' \| N'}$$
$$\frac{M \ar{\broadcastP{m}} M' \quad N \ar{\textbf{discard}(m)} N'}
  {\rule[13pt]{0pt}{1pt} M \| N \ar{\broadcastP{m}} M' \| N'}
\qquad\qquad
  \frac{ M \ar{\textbf{discard}(m)} M' \quad N \ar{\broadcastP{m}} N'}
  {\rule[13pt]{0pt}{1pt} M \| N \ar{\broadcastP{m}} M' \| N'}\;.$$
This way, in a parallel composition $M\|N$, a broadcast action of one
component can never be blocked by the other component; it synchronises
either with a receive or a discard of the other component, depending
on whether the other component is ready to receive a message or not.
In $b\pi$ and CBS\# the same approach is followed,
except that in CBS\# messages are annotated with their sender
(i.e.\ read $m$ as a sender-message pair in all rules above). This
way, one can say that a node $N$ (in a certain state) can receive
a message from one sender, but not from another.

At the expense of the use of negative premises, it is possible
to eliminate the discard action, and replace a premise
$N \ar{\textbf{discard}(m)} N'$ by $N \nar{\receive{m}}$
(cf.~Section~\ref{ssec:non-blocking}).
Another variant of the same idea, applied in \cite{CBS} and CWS \cite{CWS},
is to simply replace the discard transitions
$N \ar{\textbf{discard}(m)} N $ by \mbox{$N \ar{\textbf{receive}(m)} N$}.
Thus, whenever a node $N$ is unable to receive a particular message from a
particular other node, its operational semantics introduces a discard
transition \mbox{$N \ar{\textbf{receive}(m)} N$} that
essentially allows the message to be received and completely ignored.

\paragraph{Local Broadcast with Arbitrary Dynamic Topologies}

\index{global broadcast}%
CBS models global broadcast communication, where all processes (or nodes) are able to receive
any broadcast message. The language $b\pi$ allows processes to join
\emph{groups}, and receive all messages sent by members of that group.
In order to join a group a process needs to have knowledge of the name
of this group. This appears less suitable for the specification of
wireless mesh networks, where nodes may receive messages from unknown
other nodes as soon as they move into transmission range.
The other languages allow arbitrary network topologies, and feature a
\index{local broadcast}%
local broadcast, which can be received only by nodes within
transmission range of the sender. CWS deals with static topologies,
whereas in CBS\#, CMAN, CMN, the $\omega$-calculus, RBPT, $bA\pi$ and
the broadcast psi-calculus, as in our approach, the topology is subject to arbitrary changes.

\paragraph{Guaranteed Receipt of Messages Broadcast within Transmission Range}

The syntax of CBS\# and CWS does not permit the construction of
meaningful nodes that are always ready to receive a message.  Hence
our model is the first that assumes that any message \emph{is} received by a
potential recipient within range.  It is this feature that allows us
to evaluate whether a protocol satisfies the \emph{packet delivery}
property of \SSect{packet_delivery}.  \emph{Any} routing protocol
formalised in any of the other formalisms would automatically fail to
satisfy such a property.

\paragraph{Modelling Connectivity}

To model connectivity of nodes in the current topology, our node expressions have the form
$\dval{ip}:P:R$, where $P$ is a process running on the node, $\dval{ip}$ is the node's address, and
$R$ is the current transmission range, given as the set of addresses of nodes that can receive
messages sent by this node. Changes in the transmission range occur through
{\bf connect} and {\bf disconnect} actions, which can occur at any time (cf.~\SSect{networks}).
This follows CMAN \cite{CMAN}, where our $l\mathord{:}p\mathord{:}\sigma$ is denoted as $\lfloor p \rfloor^\sigma_l$,
with $l$ being a \emph{location}, which plays the role of the node's address. In CWS nodes have
the form $n[P]^c_{l,r}$, where $n$ is the node address (our $\dval{ip}$), $c$ denotes the broadcast
channel (e.g.\ a frequency) to which the node is tuned, $l$ is the physical location and
$r\in \mathord{\rm I\mkern-2.5mu R}$ the \emph{radius} or transmission range of the node.
A global function $d$ is postulated that given two locations $l$ and $l'$ returns the distance $d(l,l')$ between them;
comparing this value with the radius of a node at $l$ determines whether broadcast messages from
that node reach a node at $l'$. In comparison with CWS our formalism could be said to use only one
possible channel. CMN uses the same syntax as CWS, except that the channel is replaced by a
\emph{mobility tag} $\mu$, telling whether the node is mobile or stationary.
In the latter case, the physical location $l$ of the node is subject to chance.
The $\omega$-calculus has node expressions of the form $P:G$, where $P$ is a process and $G$ the set
of \emph{groups} the node belongs to. Each group is a clique in the graph expressing the network
topology, and two nodes can communicate iff they belong to a common group.
Contrary to these approaches, in CBS\#, RBPT and the broadcast $\psi$-calculi node expressions do
not contain connectivity information. Instead, connectivity is modelled in the semantics only, by
labelling transitions with (information about) the topologies that support them.
In CBS\# node expressions have the form $n[P,S]$ where $n$ is the \emph{location} or identifier of
a node, $P$ a process and $S$ the node's memory, storing values that could have been received.
RBPT node expressions simply have the form $\denote{P}_l$, denoting a process $P$ at the location $l$.
Broadcast psi differs from the above calculi in that it makes no distinction between processes and
node (or network) expressions. Consequently, nodes are not equipped with an address and connectivity
cannot be expressed as a relation between nodes. Instead it is expressed as a relation between channel
expressions occurring in processes.\linebreak[3]
In $bA\pi$ nodes have the form $\lfloor p \rfloor_l$, as in CMAN, but without any connectivity information.
The operational semantics differs from those of the other calculi, in that a broadcast action
results in messages sitting as separate components in the parallel composition among the nodes.
Connectivities of the form $\{l\mapsto m\}$, saying that node $l$ can receive message send by node $m$,
also occur as separate entities in this parallel composition, and can react with messages to
guide them in appropriate directions.

\begin{table}
\newcommand{\lb}{lossy broadcast}
\newcommand{\es}{enforced synchr.}
\newcommand{\esg}{\begin{tabular}[t]{@{}l@{}}\es \\ with guar.\ receipt\end{tabular}}
\newcommand{\gb}{\multicolumn{4}{l|}{global broadcast}}
\newcommand{\gr}{\multicolumn{4}{l|}{subscription-based broadcast}}
\newcommand{\lbs}{local bc. & static topology}
\newcommand{\lbd}{local bc.& dynamic top.}
\newcommand{\syn}{node}
\newcommand{\netw}{network}
\newcommand{\sem}{op.\ sem.}
\newcommand{\sym}{symmetric}
\newcommand{\asym}{asymmetric}

\begin{center}
\begin{tabular}{|l@{~}r@{~~}r@{~}|l|l|l|l|l|l|}
\hline
\multicolumn{3}{|c|}{Process algebra} & Message loss & \multicolumn{2}{c}{Type of broadcast} &
\multicolumn{3}{|c|}{Connectivity model}
\\
\hline\hline
CBS   & \cite{CBS} & '91 & \es & \gb & \sym
\\
$b\pi$ & \cite{bpi} & '99 & \es & \gr & \sym
\\
\cline{5-8}
CBS\# & \cite{NH06} & '06 & \es & \lbd & $n[P,S]$ & \sem & \sym
\\
CWS   & \cite{CWS} & '06 & \es & \lbs & \plat{$n[P]^c_{l,r}$} & \syn & \sym
\\
CMAN  & \cite{CMAN} & '07 & \lb & \lbd & $\lfloor p \rfloor^\sigma_l$ & \syn & \sym
\\
CMN   & \cite{CMN} & '07 & \lb & \lbd & \plat{$n[P]^\mu_{l,r}$} & \syn & \sym
\\
$\omega$ & \cite{SRS10} & '07 & \lb & \lbd & $P:G$ & \syn & \sym
\\
RBPT & \cite{RBPT} & '08 & \lb & \lbd & $\denote{P}_l$ & \sem & \asym
\\
$bA\pi$  & \cite{bApi} & '09 & \lb & \lbd & $\lfloor p \rfloor_l$ & \netw & \asym
\\
$b\psi$ & \cite{BHJRVPP11} & '11 & \lb & \lbd & $P$ & \sem & \asym
\\
\awn & here & '11 & \esg & \lbd & $\dval{ip}\mathord:P\mathord:R$ & \syn & asym./sym.
\\
\hline
\end{tabular}
\end{center}
\caption{Process algebras modelling broadcast communication}
\label{tab:broadcastPAs}
\end{table}

The above comparison between the various formalisms is summarised in \Tab{broadcastPAs},
of which the last three columns are largely taken from \cite{RBPT}.
The sixth column tells whether connectivity information is stored in the syntax of node or network
expressions, or whether it appears in the structural operation semantics of the language only.
The last column indicates whether the formalism assumes the connectivity relation between nodes to
be symmetric. In this regard there are two versions of \awn; in \cite{ESOP12} the default version is
asymmetric, whereas here, in view of the application to AODV, we made the symmetric version the default.

\paragraph{Operational Semantics of Local Broadcast with Enforced Synchronisation}

Whereas CBS\# and CWS enrich receive actions of messages with their
senders---to indicate that a message can be received from one sender
but not from another, based on the topology---in our operational
semantics this administrative burden is shifted to the broadcast
actions---they are annotated with the range of possible receivers.
This enables us to model groupcast and unicast actions, which are not
treated in CBS\# and CWS, in the same way as broadcast actions.
However, the price to be payed for this convenience is that our
actions $\listen{m}$, which  are synchronisations of (non)receive
actions of multiple components, need to be annotated with the locations
of all these components. Moreover, this set of locations is
partitioned into the ones that are in and out of transmission range
of the message $m$. It does not appear possible to model our groupcast
in the style of CBS\# and CWS\@.

\paragraph{Conditional Unicast}

Our novel \phrase{conditional unicast}
operator chooses a continuation process dependent on whether the
message can be delivered.  This operator is essential for the correct
formalisation of AODV\@ and other network protocols.  In practice such an operator may be
implemented by means of an acknowledgement mechanism; however, this is typically
done at the link layer, from which the AODV specification
\cite{rfc3561}, and hence our formalism, abstracts.  One could
formalise a conditional unicast as a standard unicast in the
scope of a priority operator \cite{CLN01}; however, our operator
allows an operational semantics within the
de Simone format.
Of the other process algebras of \Tab{broadcastPAs}, only the $\omega$-calculus, $bA\pi$ and the
broadcast psi-calculi model unicast at all, next to broadcast; they do not have anything comparable to the
conditional unicast.

\paragraph{Data Structures}

Although our treatment of data structures follows the classical
approach of universal algebra, and is in the spirit of formalisms like
$\mu$CRL \cite{GP95}, we have not seen a process algebra that freely
mixes in imperative programming constructs like variable assignment.
Yet this helps to properly capture AODV and other routing protocols.
This mixture should make the syntax of {\awn} on the level of sequential processes
easy to read for anybody who has some experience in programming, thus making it easier to implement protocol specifications written in {\awn}.

\paragraph{Other Process Algebras for WMNs}

In \cite{GR13} CMN is extended with mechanisms for unicast and multicast/groupcast communication; the
paper focuses on power-consumption issues.
Process calculi in the same spirit as the ones above, but focusing on
security aspects and trust, appear in \cite{GHK09,MS13}.
Probabilistic and stochastic calculi for WMNs, based on similar design principles as the
process algebras discussed above, are proposed in \cite{SG10,GTMF11,GHMR11,LM11,SG12,BGHMR13,CH13,GMRHK13}.
An extended and improved version of CWS appears in \cite{LS10}.
Extensions of CWS with time appear in \cite{MBS11,MM14,LM11,CHM13,WL12}; these process algebras
focus on the MAC-layer rather than the network layer of the TCP/IP reference model.
In \cite{God09} a variant of CMAN is proposed that limits mobility. A variant of CMAN that incorporates another mobility model
appears in \cite{GN09}.
In \cite{GFM09} the process algebra RBPT is enriched with specifications of sets of topologies into
Computed Network Theory (CNT). This facilitates the equational axiomatisation of RBPT\@.
\index{encapsulation operator}%
\index{abstraction operator}%
In \cite{GFM10} RBPT and CNT are extended with encapsulation and abstraction operators;
a simple abstraction of AODV has been shown to be loop free in this
framework by means of equational reasoning \cite{GFM11}.

\subsection{Modelling, Verifying and Analysing AODV and Related Protocols}

Our complete formalisation of AODV, presented here, has grown from
elaborating a partial formalisation of AODV in \cite{SRS10}.
The features of our process algebra were
largely determined by what we needed to enable a complete and accurate
formalisation of this protocol. 
The same formalism has been used to model the Dynamic MANET On-demand (DYMO) Routing Protocol (also known as AODVv2)~\cite{EHWripe12}.
By this we did not only derive an unambiguous specification for draft-version 22 (as we did for the RFC of AODV);
we were also able to verify that some of the problems discovered for AODV have
been addressed and solved. However, we showed that other limitations still exist, e.g., the
establishment of non-optimal routes (cf.\ \SSect{forwardRREQ}).
We conjecture that \awn is also applicable to a wide range of other wireless protocols, such as 
the Dynamic Source Routing (DSR) protocol~\cite{rfc4728},
the Lightweight Underlay Network Ad-hoc Routing (LUNAR)
protocol~\cite{lunar,Tschudin04},
the Optimized Link State Routing (OSLR) protocol~\cite{rfc3626}
or the Better Approach To Mobile Adhoc Networking (B.A.T.M.A.N.)~\cite{batman}.
\index{timing}%
The specification and the correctness of  the latter three, however, rely heavily on timing aspects; hence
an {\awn}-extension with time appears necessary (see also \Sect{conclusion}).

\paragraph{Test-bed Experiments and Simulation}
While process algebra can be used to formally model and verify the correctness of network routing protocols, test-bed experiments and simulations are complementary tools that can be used to quantitatively evaluate the performance of the protocols. While test-bed experiments are able to capture the full complex characteristics of the wireless medium and its effect on the network routing protocols \cite{MaltzEtAl01,PirzadaEtAl09}, network simulators~\cite{NS2,QUALNET} offer the ease and flexibility of evaluating and comparing the performance of different routing protocols in a large-scale network of hundreds of nodes, coupled with the added advantage of being able to repeat and reproduce the experiments \cite{DasEtAl00,PerkinsBDM01,JacquetEtAl02}.

\paragraph{Loop Freedom}

Loop freedom is a crucial property of network protocols, commonly claimed to hold for AODV \cite{rfc3561}.
Merlin and Segall \cite{MS79} were amongst the first to use sequence numbers to guarantee loop freedom of a routing protocol.
We have shown that several \emph{interpretations} of AODV---consistent ways to revolve
the ambiguities in the RFC---fail to be loop free, while proving loop freedom of others.

A preliminary draft of AODV has been shown to be not loop free by Bhargavan et al.\
in~\cite{BOG02}. Their counterexamples to loop freedom have to do with timing issues:
the premature deletion of invalid routes, and a too quick restart of a node after a reboot.
Since then, AODV has changed to such a degree that these examples do not apply to
the current version \cite{rfc3561}. However, similar examples, claimed to apply to the current
version, are reported in \cite{Garcia04, Rangarajan05}. All these papers propose repairs that avoid
these loops through better timing policies.
In contrast, the routing loops documented in \cite{AODVloop} as well as in \Sect{interpretation} of
this paper are time-independent.

Previous attempts to prove loop freedom of AODV have been reported
in \cite{AODV99,BOG02,ZYZW09}, but none of these proofs are complete and valid for the current version of AODV \cite{rfc3561}:
\begin{itemize}
\item 
The proof sketch given in~\cite{AODV99} uses the fact that when a loop in a route to a destination
$Z$ is created, all nodes $X_i$ on that loop must have route entries for destination $Z$
with the same destination sequence number. ``{\sl Furthermore, because the destination sequence
numbers are all the same, the next hop information must have been derived at every node
$X_i$ from the same RREP transmitted by the destination $Z$\/}''~\cite[Page 11]{AODV99}. The latter is not true at all:
some of the information could have been derived from RREQ messages, or from a RREP message transmitted
by an intermediate node that has a route to $Z$. More importantly, the nodes on the loop
may have acquired their information on a route to $Z$ from different RREP or RREQ
messages, that all carried the same sequence number. This is illustrated by our
loop created in \Fig{loopsfromselfentries} (\SSSect{interpretation_invalidate}).

\item Based on an analysis of an early draft of AODV\footnote{Draft version 2 is analysed, dated November 1998; 
the RFC can be seen as version 14, dated July 2001.}~\cite{BOG02} suggests three improvements. The \
modified version is then proved to be loop free, using the following invariant
(written in our notation):
 \[\begin{array}{cl}
\multicolumn{2}{l}{\mbox{if $\dval{nhip} = \nhp{\dval{ip}}$, then }}\\
\mbox{(1)}& \sq{\dval{ip}} \leq \sq{\dval{nhip}}, \mbox{ and}\\ 
\mbox{(2)}& \sq{\dval{ip}} = \sq{\dval{nhip}} \ims \dhp{\dval{ip}} < \dhp{\dval{nhip}}\ . 
\end{array}
\]
This invariant does not hold for this  modified version of AODV nor for the
  current version, documented in the RFC\@. It can happen that in a state $N$ where
  $\sq{\dval{ip}} = \sq{\dval{nhip}}$, node $\dval{ip}$ notices that the link to $\dval{nhip}$ is broken.
  Consequently, \dval{ip} invalidates its route to \dval{dip}, which has \dval{nhip} as its next hop.
  According to recommendation (\textbf{A1}) of \cite[Page 561]{BOG02}), node $\dval{ip}$
  increments its sequence number for the (invalid) route to \dval{dip}, resulting in a state $N'$
  for which $\keyw{sqn}_{N'}^{\dval{ip}}(\dval{dip}) > \keyw{sqn}_{N'}^{\dval{nhip}}(\dval{dip})$,
  thereby violating the invariant.

  Note that the invariant of \cite{BOG02} does not restrict itself to the case that the routing
  table entry for \dval{dip} maintained by \dval{ip} is \emph{valid}. Adapting the invariant with such
  a requirement would give rise to a valid invariant, but one whose verification poses some
  problems, at least for the current version of AODV\@. These problems led us, in this paper, to use
  \emph{net sequence numbers\/} instead (cf.\ \SSect{quality}).

Recommendation (\textbf{A1}) is assumed to be in
  effect for the (improved) version of AODV analysed in \cite{BOG02}, although it was not in effect for
  the draft of AODV existing at the time. Since then, recommendation (\textbf{A1}) has been
  incorporated in the RFC\@. Looking at the proofs in \cite{BOG02}, it turns out that Lemma~20(1) of
  \cite{BOG02} is invalid. This failure is surprising, given that according to \cite{BOG02} Lemma~20
  is automatically verified by SPIN\@. A possible explanation might be that this lemma \emph{is} obviously
  valid for the version of AODV prior to the recommendations of \cite{BOG02}.
\end{itemize}

\paragraph{Model Checking}

Bhargavan et al.\ \cite{BOG02} not only found problems in an early draft of AODV, they were
also among the first to apply model checking techniques to AODV, thereby demonstrating the
feasibility and value of automated verification of routing protocols.
For their studies they use the model checker SPIN.

Musuvathi et al.\ \cite{MPCED02} introduced the model checker CMC
primarily to search for coding errors in implementations of protocols
written in C\@. They use AODV (draft version 10) as an example and were able to discover
$34$ distinct errors in three different implementations: mad-hoc (version 1), Kernel AODV (version
1.5) and AODV-UU (version 0.5).\footnote{For our analysis in \SSect{implementations}, we use version 0.9.5 of AODV-UU and version 2.2.2. of Kernel AODV\@. 
 We did not analyse mad-hoc since it is no longer actively supported.}
They also found a problem with the specification itself: they discovered that routing loops can be created when 
sequence numbers are just copied from an incoming RERR message, without checking the value. 
We have discovered the same problem (see \hyperlink{sss923b}{Ambiguity 8} in \SSSect{interpretation_invalidate})
and proposed the same solution as they do, namely 
introducing a check prior to invalidating routes (in our specification the check is \highlight{$\sqn{\rt}{\rip}
< \rsn$} in Line~\ref{rerr:line2} of \Pro{rerr}).
However, the routing loops found in \cite{MPCED02} crucially depend on the use of an unordered
message queue, in which messages can overtake each other after being sent. Our loop, on the other hand, manifests
itself even when using FIFO queues, as specified in the RFC\@.
Although \cite{MPCED02} testifies that both the bug and the fix where accepted by the protocol authors,
the proposed solution is not incorporated in the current standard~\cite{rfc3561}. 

Chiyangwa and Kwiatkowska~\cite{CK05} use the timing features of the model checker {\uppaal} to
study the relationship between the timing parameters and the performance of route discovery in AODV,
and find some route discovery failures.

Using the model checkers SPIN and \uppaal,
\cite{WPP04} demonstrates that the ad-hoc protocol LUNAR satisfies a strong variant of the
packet delivery property for a number of routing scenarios.

All this related work show that model checking can be used as a diagnostic tool for MANETs and WMNs.
Although model checking generally lacks the ability to verify protocols for an arbitrary and changing topology, it can be 
efficiently used to check specific scenarios (topologies) and to reveal problems in the specification in an early stage of 
protocol development; even before anybody starts to 
verify interesting properties by pen-and-paper proofs or with support of interactive theorem provers.

We believe that model checking as a diagnostic tool can complement the process-algebraic approach presented in this paper.
Having the ability of model checking  specifications written in {\awn} will allow the confirmation
and detailed diagnostics of suspected errors which arise during modelling. The availability of an
executable model will become especially useful in the evaluation of proposed improvements.
A first step to this complementation was taken in~\cite{FGHMPT11} and further elaborated in~\cite{TACAS12}.
In~\cite{FGHMPT11}, we generated a (``time-free'') {\uppaal} model of AODV from our {\awn} specification, confirmed some of the problems discovered
by Chiyangwa and Kwiatkowska~\cite{CK05}, and show their independence of time.
In~\cite{TACAS12} we continued the analysis of AODV by model-checking techniques by an exhaustive exploration of 
AODV's behaviour in all network topologies up to 5 nodes. We were able to automatically locate problematic and undesirable behaviours.
In that paper, we moreover sketched possible modifications of AODV, which also were subjected to rigorous analysis by means of model checking.
In these experiments we created an environment
in which we can test a range of different topologies in a systematic manner. This will allow us
to do a fast comparison between standard protocols (e.g.\ given by RFCs) and proposed variations in
contexts known to be problematic.

\paragraph{Statistical Model Checking}
Unfortunately, current state-of-the art (exhaustive) model checkers are unable to handle protocols of the complexity needed for WMN routing in realistic settings: network size (usually dozens, sometimes even hundreds of nodes) and topology changes yield an explosion in the state space.
Another limitation of (exhaustive) model checking is that a quantitative analysis is often not possible: finding a shortcoming in a protocol is great but does not show how often the shortcoming actually occurs. Statistical model checking (SMC)~\cite{Younes04,SenVA05} is a complementary approach that can overcome these problems. It combines ideas of model checking and simulation with the aim of supporting quantitative analysis as well as addressing the size barrier.
Among others, SMC has been used to analyse AODV and DYMO.

\cite{NFM2013} first develops timed models for AODV and DYMO. These models are based on the {\uppaal} models created from our {\awn} specifications.
The paper then carries out a systematic analysis across all small networks. In contrast to simulation and test bed studies, the analysis based on quality and quantity enables the examination of reasons for observed differences in performance between AODV and DYMO\@.
\cite{NFM2013} then examines the feasibility of SMC w.r.t.\ scalability; the results imply that networks of realistic size (up to 100 nodes) can be analysed. 

For small networks it is possible to analyse all topologies. This gives a good overall view of the performance and behaviour in any situation. 
For large networks this is not feasible, and so the selection of topologies as well as their dynamic behaviour becomes something of a `stab in the dark'. 
The Node Placement Algorithm for Realistic Topologies (NPART)~\cite{MilicMalek2009} is a tool that
allows the generation of arbitrary-sized topologies and transmission ranges; it has been shown that
the generated topologies have graph characteristics similar to realistic wireless multihop ones. 
\cite{QEST2013}  proposes a topology-based mobility model that abstracts from physical behaviour and models mobility as probabilistic changes in the topology. It is demonstrated how this model can be instantiated to cover the main aspects of the random walk and the random waypoint mobility model. The model is not a stand-alone model, but intended to be used in combination with protocol models.  As one application a brief analysis of the Ad-hoc On demand Distance Vector (AODV) routing protocol is given.

A more thorough (quantitative) analysis of AODV based on this  topology-based mobility model is performed in~\cite{FORMATS2013}.
Here, variants of AODV, such as always forwarding route replies (see \SSect{modifyRREP}), are analysed as well.
The paper makes surprising observations on the behaviour of AODV\@. For example, it is shown that some optional features (D-flag) should not be combined with others (resending). Another observation of~\cite{FORMATS2013} is that a well-known shortcoming occurs more often than expected and has a significant effect on the success of route establishment.

\paragraph{Other Approaches}
Next to process algebra and model checking other approaches have been used to analyse 
WMNs. A frequently used approach is \textit{coloured Petri nets} (CPNs)~\cite{Jensen97}.

The idea to use CPNs to model routing protocols for MANETs was first employed in \cite{Xiong2002}:
the paper proposes a \emph{topology approximation} (TA) mechanism for modelling mobility
and, based on this, presents a CPN model of AODV\@.
Using this formal model the network behaviour for a network with 5 nodes is simulated.

Mandatory parts of DYMO are modelled as a hierarchy of CPNs in \cite{EspensenEtAl08}.
The paper analyses draft-version 10  and identifies and resolves some ambiguities in specification.
Moreover, it points at problematic behaviour; six of these findings have been reported to the 
IETF MANET Working Group mailing list, and have been resolved by the DYMO developers in version 11 of the DYMO specification.
The model presented in \cite{EspensenEtAl08} has a complex net structure, comprising 4 levels of hierarchy and 14 modules.
A much smaller model of DYMO, which even covers some optional features, is presented in \cite{BY09}.\footnote{A detailed comparison between the models given in \cite{EspensenEtAl08} and \cite{BY09} is given in \cite[Sect. 4]{BY09}.} Reducing the size of the model also reduces the 
state space, so larger networks can be analysed. 
Experiments performing test runs on small topologies confirm specified behaviour.
However, similar to model checking, networks with a few nodes only can be analysed.

\textit{Graph Transformation Systems} are used in~\cite{SWJ08} to model DYMO (version 10),
but without the feature of route reply by intermediate nodes.
The paper provides a semi-algorithm, based on graph rewriting, which was used to verify
loop freedom for this version of DYMO\@.

Other formal approaches are \textit{algebraic techniques} involving a.o.\ semirings and matrices.
Sobrinho was the first 
who brought algebraic reasoning into the realm of hop-by-hop 
routing~\cite{S02}. He uses algebraic properties to argue about 
the relationship between routing algorithms and Dijkstra's shortest path 
algorithm. This approach has been further elaborated for the analysis 
of path vector protocols like the Border Gate Protocol BGP~\cite{S03,GS05}.
Similar algebraic reasoning has been performed in~\cite{HM12} to present algebraic versions of the algorithms of Dijkstra 
and Floyd-Warshall. \cite{HM11} presents first steps towards an algebraic characterisation of AODV using these algebraic techniques.

\section{Conclusion and Future Work}\label{sec:conclusion}

In this paper we have proposed \awn, a novel process algebra that can be used to 
model, verify and analyse (routing) protocols for Wireless Mesh Networks (WMNs). 
The applicability of the process algebra has been demonstrated by a 
careful analysis of the Ad hoc On-Demand Distance Vector (AODV) Routing Protocol. 
To the best of our knowledge it is by far the most detailed analysis of 
a routing protocol for WMNs. 

The introduced process algebra \awn
covers major aspects of WMN routing
protocols, for example the crucial aspect of data handling, such as maintaining
routing table information. 
Amongst others, the assignment primitive, which is used to manipulate data, 
turns {\awn} into an easy to read language---its syntax is close to the syntax of 
programming languages.
Key operators of {\awn} are \phrase{local broadcast} and \phrase{conditional unicast}.
Local broadcast allows a node to send messages to all 
its immediate neighbours as implemented by the physical and 
data link layer. 
Conditional unicast models an abstraction of an acknowledgment-of-receipt mechanism that is typical for unicast communication but absent in broadcast communication,
as typically implemented by the link layer of relevant wireless standards such as IEEE 802.11.
AWN can capture the bifurcation depending on the success of the unicast;
it allows error handling in response to failed communications while abstracting from link layer
implementations of the communication handling.

The unique set of features and primitives of {\awn} allows the creation of accurate and concise
models of relatively complex and practically relevant network protocols in a simple language.
We have demonstrated this by giving a complete and accurate model of 
the core functionality of AODV, a widely used protocol of practical relevance. 
We currently do not model optional
features such as local route repair, expanding ring search, gratuitous route reply and multicast.
We also abstract from all timing issues.
In addition to modelling the complete set of core functionalities of the
AODV protocol, our model also covers the interface to higher protocol layers via the injection
and delivery of application layer data,
as well as the forwarding of data packets at intermediate nodes.
Although this is not part of the AODV protocol specification, it is necessary
for a practical model of any reactive routing protocol, where protocol activity is triggered via the
sending and forwarding of data packets.

Process algebras are standard tools to describe interactions, communications and
synchronisations between a collection of independent agents, processes or network nodes.
They  provide algebraic laws that facilitate formal reasoning.
To demonstrate the strength of formal reasoning we performed a 
careful analysis of AODV, in particular with respect to the loop-freedom property. 
By establishing invariants that remain valid in a network running AODV, 
we have shown that our model is in fact loop free. 
In contrast to protocol evaluation using simulation, test-bed experiments or model checking, 
where only a finite number of specific network scenarios can be considered, our reasoning with {\awn} is generic and the proofs hold for any possible network scenario in terms of topology and traffic pattern. None of the experimental protocol evaluation approaches can deliver this high degree of assurance about protocol behaviour.
We have also shown that, in contrast to common belief, sequence numbers do not 
guarantee loop freedom, even if they are increased monotonically over time and 
incremented whenever a new route request is generated. 

Our analysis of AODV uncovered several ambiguities in the RFC,  the de facto standard of AODV\@.
In this paper we have analysed {\em all\/}  interpretations of the AODV RFC\@ that stem from the 
ambiguities revealed. 
It turned out that several interpretations can yield unwanted behaviour such as routing loops. 
We also found that implementations of 
AODV behave differently in crucial aspects of protocol behaviour, although they all follow the lines of the RFC\@. 
As pointed out, this is often caused by ambiguities, contradictions or unspecified behaviour in the RFC\@.
Of course a specification ``{\sf needs to be reasonably implementation independent\/}''%
\footnote{\url{http://www.ietf.org/iesg/statement/pseudocode-guidelines.html}}
and can leave some decisions to the software engineer; however it is our belief that any specification should be clear and 
unambiguous enough to guarantee the same behaviour when given to different developers. 
As demonstrated, this is not the case for AODV, and likely
not for many other RFCs provided by the IETF.

Finding ambiguities and unexpected behaviour is not
uncommon for RFCs, since the currently predominant practice is an informal 
protocol specification via English prose.
This shows that the specification of a
reasonably rich protocol such as AODV cannot be described precisely
and unambiguously by simple (English) text only; formal methods are
indispensable for this purpose.
We believe that formal specification languages and
analysis techniques---offering rigorous verification and ana\-lysis techniques---are
now able to capture the full syntax and semantics of reasonably rich
IETF protocols. These are an indispensable augmentation to natural
language, both for specifying protocols such as AODV, AODVv2 and HWMP,
and for verifying their essential properties.

Our analysis of AODV also uncovered several shortcomings of the protocol, including 
a failure in route discovery, and the creation of non-optimal routes.
In this paper, we have not only listed the shortcomings, we have proposed 
(small) modifications of AODV to overcome these deficiencies. 
All proposed variants have been carefully analysed as well, in particular with respect to 
loop freedom. By this we have shown how proofs based on {\awn} can relatively easily be 
adapted to protocol variants. 

\index{timing}%
A further analysis of AODV will require an extension of {\awn} with time and probability:
the former to cover aspects such as AODV's handling (deletion) of
stale routing table entries and the latter to model the probability
associated with lossy links. We expect that the resulting algebra will
be also applicable to a wide range of other wireless protocols.

Next to this on-going work, we also aim at a complementation of {\awn} by model checking.
Having the ability of automatically deriving a model for model checkers such as {\sc Uppaal} from an \awn specification allows the confirmation and detailed
diagnostics of suspected errors in an early phase of protocol development. 
Surely, model checking is limited to particular topologies, but finding shortcomings in some topologies is  useful to identify problematic behaviour. 
These shortcomings  can be eliminated, even before a more thorough and general analysis using {\awn}. 

\vspace{1cm}
\begin{flushright}
``Time is the nurse and breeder of all good.''\\
{\footnotesize W.~Shakespeare, The Two Gentlemen of Verona}
\end{flushright}
\newpage

\addtocontents{toc}{\protect\renewcommand\protect\baselinestretch{.5}}

\newpage
\phantomsection \label{references}
\bibliographystyle{eptcsini}
\addcontentsline{toc}{section}{References}
\bibliography{aodv}

\newpage
\phantomsection \label{listofalgo}
\addcontentsline{toc}{section}{\listalgorithmname}
\listofalgorithms

\newpage
\phantomsection \label{listoffig}
\addcontentsline{toc}{section}{\listfigurename}
\listoffigures

\newpage
\phantomsection \label{listoftab}
\addcontentsline{toc}{section}{\listtablename}
\listoftables

\newpage
\phantomsection \label{index}
\addcontentsline{toc}{section}{Index}
\printindex

\end{document}